\newtheorem{teor}{Theorem}[chapter]
\newtheorem{prop}[teor]{Proposition}
\newtheorem{corol}[teor]{Corollary}
\newtheorem{lem}[teor]{Lemma}
\newtheorem{definition}[teor]{Definition}
\newtheorem{remark}[teor]{Remark}
\newtheorem{assump}[teor]{Assumption}
\newtheorem{pos}[teor]{Postulate}
\def\beq{\begin{equation}}
\def\eeq{\end{equation}}
\def\bea{\begin{eqnarray}}
\def\eea{\end{eqnarray}}
\def\beann{\begin{eqnarray*}}
\def\eeann{\end{eqnarray*}}
\def\beasn{\begin{sneqnarray}}
\def\eeasn{\end{sneqnarray}}
\def\ben{\begin{enumerate}}
\def\een{\end{enumerate}}
\def\bit{\begin{itemize}}
\def\eit{\end{itemize}}
\def\dst{\displaystyle}
\def\derpar#1#2{\frac{\partial{#1}}{\partial{#2}}}
\def\mapping#1{\mathrel{\mathop{\longrightarrow}\limits^{#1}}}
\def\coor#1#2#3{{#1}^{#2}, \ldots, {#1}^{#3}}
\def\moment#1#2#3{{#1}_{#2}, \ldots, {#1}_{#3}}
\def\qed{\ifvmode\removelastskip\fi
{\unskip\nobreak\hfil\penalty50\hbox{}\nobreak\hfil
\hbox{\vrule height1.2ex width1.2ex}\parfillskip=0pt
\finalhyphendemerits=0 \par\smallskip}}
\def\buit{\hbox{\rm\O}}
\def\vf{\mathfrak X}
\def\df{{\mit\Omega}}
\def\Lag{{\cal L}}
\def\Diff{{\rm Diff}}
\def\d{{\rm d}}
\def\h{{\rm h}}
\def\Real{\mathbb{R}}
\def\Complex{\mathbb{C}}
\def\Nat{\mathbb{N}}
\def\bmeta{\mbox{\boldmath $\eta$}}
\def\evo{\mbox{\boldmath $\varepsilon$}}
\def\X{{\rm X}}
\def\Tan{{\rm T}}
\def\Lie{\mathop{\rm L}\nolimits}
\def\inn{\mathop{i}\nolimits}
\def\Cinfty{{\rm C}^\infty}
\renewcommand{\neq}{=\hspace{-3.5mm}/\hspace{2mm}}
\def\Leg{{\cal F}\Lag}
\newcommand{\Reeb}{\mathcal{R}}
\title{GEOMETRY OF MECHANICS}
\author{\sc Miguel C. Mu\~noz-Lecanda
\thanks{
ORCID: 0000-0002-7037-0248.},
   \\
\sc Narciso Rom\'an-Roy
\thanks{{\bf e}-{\it mail}: narciso.roman@upc.edu / 
ORCID: 0000-0003-3663-9861.}
   \\
  {\it Department of Mathematics.}\\
{\it Technical University of Catalonia.}\\
{\it  E-08034 Barcelona, Spain.}}
\begin{document}
\maketitle
\thispagestyle{empty}

\noindent {\bf Abstract}

\noindent  The aim of this work is to study
the geometry underlying mechanics and its application
to describe autonomous and nonautonomous conservative dynamical systems of different types; as well as dissipative dynamical systems.
We use different geometric descriptions to study
the main properties and characteristics of these systems;
such as their Lagrangian, Hamiltonian and unified formalisms, 
their symmetries, the variational principles, and others.
The study is done mainly for the regular case,
although some comments and explanations about singular systems are also included.

\bigskip\bigskip\bigskip\bigskip\bigskip\bigskip\bigskip\bigskip\bigskip\bigskip

\noindent\textbf{Keywords:}

\noindent
Symplectic manifolds, cosymplectic manifolds, Hamiltonian systems,  Riemannian manifolds, Newtonian systems, contact manifolds, dissipative systems, Lagrangian formalism, Hamiltonian formalism, symmetries, conservation laws,  actions of Lie groups, fiber bundles, variational principles.

\medskip\medskip\medskip\medskip

\noindent\textbf{MSC\,2020 codes:}

\noindent Primary: 53B20, 53D05, 53D10, 70H03, 70H05, 70H33.

\noindent Secondary: 53B21, 53D20, 53D22,  53Z05, 58A30, 70A05, 70F20, 70F25, 70G10, 70G45, 70G65, 70G75, 70H15, 70H20, 70H25.



\chapter*{}

\pagenumbering{roman}
\setcounter{tocdepth}{2}

\section*{Acknowledgments}

Our special thanks to Prof. Xavier Gr\`acia-Sabat\'e for the many clarifying discussions held on many of the contents of this work,
and to Prof. Xavier Rivas-Guijarro for his careful reading of the entire manuscript and his very helpful comments and suggestions.

We also thank Prof. Manuel de Le\'on-Rodr\' iguez for his comments and his expert advice on the final editing and publication process of this manuscript.

Finally, we want to thank all our colleagues from the 
{\sl Geometry, Dynamics, and Field Theory Network} for their enriching collaboration over all these years.

We acknowledge the financial support from the 
Spanish Ministry of Science and Innovation, grants  PID2021-125515NB-C21, and RED2022-134301-T of AEI, 
and Ministry of Research and Universities of
the Catalan Government, project 2021 SGR 00603 \textsl{Geometry of Manifolds and Applications, GEOMVAP}.


{
\def\baselinestretch{0.9}
\small
\def\addvspace#1{\vskip 1pt}
\parskip 0pt plus 0.1mm
\tableofcontents
}


\chapter*{Glossary of notation and terminology}

\begin{center}
{\small
\begin{tabular}{|c|c|}
\hline
 & \\
$\Cinfty(M)$ & Smooth functions (of class $\Cinfty$) on a manifold $M$. \\
$\df^k(M)$ & Differentiable $k$-forms on a manifold $M$. \\
$Z^k(M)$ & Differentiable closed $k$-forms on a manifold $M$. \\
$\vf(M)$ & Vector fields on a manifold $M$. \\
${\cal T}^m_k(M)$ & Tensor fields of type $(k,m)$ on a manifold $M$. \\
${\rm Diff}\,(M)$ & Diffeomorphisms of the manifold $M$. \\
$\vf^{{\rm V}(\pi)}(M)$ & $\pi$-vertical vector fields on a bundle $\pi\colon M\to N$. \\
$\vf(M,\pi)$ & Vector fields along a map $\pi$ on a manifold $M$. \\
${\mit\Gamma}(\pi)$ & Sections of the projection $\pi$ on a bundle $\pi\colon M\to N$. \\
$\d$ & Exterior differential. \\
$\beta(X)\equiv\langle X\vert\beta\rangle$ & Natural pairing between a differential $1$-form $\beta$ and a vector field $X$. \\
$\inn(X)\beta$ & Contraction of a differential $k$-form $\beta$ with a vector field $X$. \\
$\Lie(X)\beta$ & Lie derivative of a differential form $\beta$ by a vector field $X$. \\
$X(f)\equiv \Lie(X)f$ & Action of a vector field $X$ (Lie derivative) on a function $f$. \\
$\Tan Q$ , $\tau_Q\colon\Tan Q\to Q$ & Tangent bundle of a manifold $Q$ , canonical projection. \\
$\Tan^*Q$ , $\pi_Q\colon\Tan^*Q\to Q$ & Cotangent bundle of a manifold $Q$ , canonical projection. \\
$(q^i,v^i)\ ,\  (q^i,p_i)$ & Natural coordinates on $\Tan Q$ and $\Tan^*Q$. \\
$\Omega\in\df^2(M)$ , $(M,\Omega)$ & (Pre)symplectic form , (pre)symplectic manifold. \\
$\Theta\in\df^1(M)$ & (Pre)symplectic potential ($\Omega=\d\Theta$). \\
$\flat_\Omega\colon\Tan M\to\Tan^*M$ & \\
$\sharp_\Omega=\flat_\Omega^{-1}\colon\Tan^* M\to\Tan M$ & Canonical isomorphisms on a symplectic manifold $(M,\Omega)$. \\
$(M,\Omega,\alpha)$\ , \ $(M,\Omega,h)$ & Hamiltonian system on a symplectic manifold $(M,\Omega)$. \\
$\{ f,g\}$ & Poisson bracket between the functions $f$ and $g$. \\
$\vf_{lh} (M)$ , $\vf_H (M)$ & Local/global Hamiltonian vector fields on a symplectic manifold $(M,\Omega)$. \\
$X_h\in\vf(M)$ &
Hamiltonian vector field of $f\in\Cinfty(M)$
on a symplectic manifold $(M,\Omega)$. \\
$G$ , ${\bf g}$ & Lie group, Lie algebra. \\
$\xi_X\in\vf(M)$ & Fundamental vector field of $\Phi\colon G\to{\rm Diff}(M)$ associated to $X\in{\bf g}$. \\
${\rm j}^*\colon{\bf g}\to\Cinfty(M)$ & Comomentum map associated to an action $\Phi\colon G\to{\rm Diff}(M)$.  \\
${\rm J}\colon M\to{\bf g}^*$ & Momentum map associated to an action $\Phi\colon G\to{\rm Diff}(M)$.  \\
$\Delta\in\vf(M)$ & Liouville (dilation) vector field on a vector bundle $\pi\colon M\to N$. \\
$\Tan\phi\colon\Tan M\to\Tan N$ & Canonical lift of a map $\phi\colon M\to N$ to the tangent bundles.  \\
$Z^V\in\vf(\Tan Q)$ & Vertical lift of $Z\in\vf(Q)$ to the tangent bundle $\Tan Q$.  \\
$Z^C\in\vf(\Tan Q)$ & Complete canonical lift of $Z\in\vf(Q)$ to the tangent bundle $\Tan Q$.  \\
$\widetilde\gamma\colon\Real\to\Tan Q$ & Canonical lift of a curve $\gamma\colon\Real\to Q$ to the tangent bundle $\Tan Q$.  \\
$\Tan^*\phi\colon\Tan^*N\to\Tan^*M$ & Canonical lift of a map $\phi\colon M\to N$ to the cotangent bundles.  \\
 & \\
\hline
\end{tabular}
}
\end{center}

\begin{center}
{\small
\begin{tabular}{|c|c|}
\hline
 & \\
  $Z^*\in\vf(\Tan^*Q)$ & Canonical lift of $Z\in\vf(Q)$ to the cotangent bundle $\Tan^*Q$.  \\
 ${\cal FL}\colon\Tan Q\to\Tan^*Q$ &
 Legendre map defined by $\Lag\in\Cinfty(\Tan Q)$. \\
 $J\in{\cal T}^1_1(\Tan Q)$ & Canonical endomorphism on $\Tan Q$. \\
$\Lag\in\Cinfty(\Tan Q)$ , $(\Tan Q,\Lag)$ & Lagrangian function, Lagrangian system. \\
$\Theta_\Lag\in\df^1(\Tan Q)$ & Lagrangian (Cartan) $1$-form associated with $\Lag\in\Cinfty(\Tan Q)$. \\
$\Omega_\Lag=-\d\Theta_\Lag\in\df^2(\Tan Q)$ & 
Lagrangian (Cartan) $2$-form associated with $\Lag\in\Cinfty(\Tan Q)$. \\
$X_\Lag\in\vf(\Tan Q)$ \ , \ $\Gamma_\Lag\in\vf(\Tan Q)$ &
Lagrangian/Euler-Lagrange vector field
of a Lagrangian system $(\Tan Q,\Lag)$. \\
$\Phi\colon Q\times\Real^n\longrightarrow\Tan^*Q$ &
Complete solution to the Hamilton-Jacobi equation. \\
${\cal S}\colon Q\times\Real^n\longrightarrow\Real$ &
Generating function of a complete solution to the H-J eq. \\
${\rm h}\in\Cinfty(\Tan^*Q)$ , $(\Tan^*Q,\Omega,{\rm h})$ & Canonical Hamiltonian function , Canonical Hamiltonian system. \\
${\cal W}=\Tan Q\times_Q\Tan^*Q$ &
Unified (Pontryagin) bundle. \\ 
$\varrho_1\colon{\cal W}\to\Tan Q$ & \\
$\varrho_2\colon{\cal W}\to\Tan^*Q$ & \\
$\varrho_0\colon{\cal W}\to Q$ &
Canonical projections of the unified  bundle ${\cal W}$. \\ 
$\Theta_{\cal W}\in\df^1({\cal W})$ &
Canonical $1$-form on the unified bundle ${\cal W}$. \\ 
$\Omega_{\cal W}=-\d\Theta_{\cal W}\in\df^2({\cal W})$ &
Canonical $2$-form on the unified bundle ${\cal W}$. \\ 
${\cal C}\colon{\cal W}\to\Real$ &
Coupling function on the unified bundle ${\cal W}$. \\
${\cal H}\colon{\cal W}\to\Real$ &
Hamiltonian function on the unified bundle ${\cal W}$. \\
$X_{\cal H}\in\vf({\cal W})$ &
Dynamical vector field of the dynamical system $({\cal W},\Omega_{\cal W},{\cal H})$. \\
$\jmath_0\colon\mathcal{W}_0\hookrightarrow\mathcal{W}$ &
Compatibility submanifold of the dynamical system $({\cal W},\Omega_{\cal W},{\cal H})$. \\
$(\eta,\omega)$ , $(M,\eta,\omega)$ & (Pre)cosymplectic structure , (pre)cosymplectic manifold. \\
$R\in\vf(M)$ & Reeb vector field on a cosymplectic manifold $(M,\eta,\omega)$. \\
$\flat_{(\eta,\omega)}\colon\Tan M\to\Tan^*M$ & \\
$\sharp_{(\eta,\omega)}=\flat_{(\eta,\omega)}^{-1}\colon\Tan^* M\to\Tan M$ & Canonical isomorphisms on a cosymplectic manifold $(M,\eta,\omega)$. \\
$X_f\in\vf(M)$ & Hamiltonian vector field on a cosymplectic manifold $(M,\eta,\omega)$. \\
${\rm grad}\,f\in\vf(M)$ & Gradient vector field on a cosymplectic manifold $(M,\eta,\omega)$. \\
${\cal E}_f\in\vf(M)$ & Evolution vector field on a cosymplectic manifold $(M,\eta,\omega)$. \\
$\rho\colon\Real\times Q\to\Real$ & Canonical projection. \\
$\tau_1\colon\Real \times\Tan Q \rightarrow \Real$ & \\
$\tau_2\colon\Real \times \Tan Q \rightarrow \Tan Q$ & \\
$\tau_0\colon\Real \times \Tan Q \rightarrow Q$ & \\
$\tau_{1,0}\colon\Real \times \Tan Q \rightarrow\Real\times Q$ & 
Canonical projections of the bundle $\Real\times\Tan Q$. \\
$\pi_1\colon\Real \times\Tan^*Q \rightarrow \Real$ & \\
$\pi_2\colon\Real \times \Tan^*Q \rightarrow\Tan^*Q$ & \\
$\pi_0\colon\Real \times \Tan^*Q \rightarrow Q$ & \\
$\pi_{1,0}\colon\Real \times \Tan^*Q \rightarrow\Real\times Q$ & 
Canonical projections of the bundle $\Real\times\Tan^*Q$. \\
$\mbox{\boldmath$\gamma$}\colon\Real \to\Real\times Q$ & 
Canonical lift of a curve $\gamma\colon\Real \to  Q$ to $\Real\times Q$.  \\
$\widehat{\bf c} \colon\Real  \to\Real\times \Tan Q$ &
Canonical lift of a curve ${\bf c}\colon\Real \to\Real\times Q$ to $\Real\times\Tan Q$.  \\
${\cal J}\in{\cal T}^1_1(\Real\times\Tan Q)$ & Extension of the canonical endomorphism to $\Real\times\Tan Q$. \\
 ${\rm F}\Lag\colon\Real\times\Tan Q\to\Real\times\Tan^*Q$ &
 Legendre map defined by $\Lag\in\Cinfty(\Real\times\Tan Q)$. \\
$\vartheta_\Lag\in\df^1(\Real\times\Tan Q)$ &
Cartan Lagrangian $1$-form associated with $\Lag\in\Cinfty(\Real\times\Tan Q)$. \\
$\omega_\Lag=-\d\vartheta_\Lag\in\df^1(\Real\times\Tan Q)$ &
Cartan Lagrangian $2$-form associated with $\Lag\in\Cinfty(\Real\times\Tan Q)$. \\
$R_\Lag\in\vf(\Real\times\Tan Q)$ &
Reeb vector field for the cosymplectic manifold 
$(\Real\times\Tan Q,\d t,\eta_\Lag)$. \\
$\mbox{\boldmath $\Theta$}_{\cal L}\in\df^1(\Real\times\Tan Q)$ &
Poincar\'e--Cartan $1$-form associated with $\Lag\in\Cinfty(\Real\times\Tan Q)$. \\
$\mbox{\boldmath $\Omega$}_{\cal L}=-\d\mbox{\boldmath $\Theta$}_{\cal L}\in\df^1(\Real\times\Tan Q)$ &
Poincar\'e--Cartan $2$-form associated with $\Lag\in\Cinfty(\Real\times\Tan Q)$. \\
 & \\
\hline
\end{tabular}
}
\end{center}

\begin{center}
{\small
\begin{tabular}{|c|c|}
\hline
 & \\
 $\mbox{\boldmath $\Theta$}_{\rm h}\in\df^2(\Real\times\Tan Q$ &
Hamilton--Cartan $1$-form associated with ${\rm h}\in\Cinfty(\Real\times\Tan^*Q)$. \\
$\mbox{\boldmath $\Omega$}_{\rm h}=-\d\mbox{\boldmath $\Theta$}_{\rm h}\in\df^2(\Real\times\Tan^*Q)$ &
Hamilton--Cartan $2$-form associated with ${\rm h}\in\Cinfty(\Real\times\Tan^*Q$. \\
$\nu\colon\Tan(\Real \times Q)\rightarrow \Real\times\Tan Q$ & \\
$\varpi\colon\Tan(\Real \times Q)\rightarrow \Real$ &
Canonical projections of the bundle $\Tan(\Real \times Q)$. \\
$pr_1\colon\Tan^*(\Real \times Q)\rightarrow \Tan^*\Real$ & \\
$pr_2\colon\Tan^*(\Real \times Q)\rightarrow\Tan^*Q$ & \\
$\mu\colon\Tan^*(\Real \times Q)\rightarrow\Real\times\Tan Q$ & \\
$u\colon\Tan^*(\Real \times Q)\rightarrow\Real$ & 
Canonical projections of the bundle $\Tan^*(\Real\times Q)$. \\
 ${\mathbf \Theta}_{{\cal L}_{ext}}\in\df^1(\Tan (\Real\times Q))$ &
Lagrangian $1$-form associated with $\Lag_{ext}\in\Cinfty(\Tan(\Real\times Q))$. \\
${\mathbf \Omega}_{{\cal L}_{ext}}=-\d{\mathbf \Theta}_{{\cal L}_{ext}}\in\df^2(\Tan (\Real\times Q))$ &
Lagrangian $2$-form associated with $\Lag_{ext}\in\Cinfty(\Tan(\Real\times Q))$. \\
 ${\mathbf \Theta}_{ext}\in\df^1(\Tan^*(\Real\times Q))$ &
Canonical $1$-form on $\Tan^*(Q\times\Real)$. \\
$\mbox{\boldmath $\Omega$}_{ext}=-\d\mbox{\boldmath $\Theta$}_{ext}\in\df^2(\Tan^*(\Real\times Q))$  &
Canonical $2$-form on $\Tan^*(Q\times\Real)$. \\
 ${\cal M}=\Real\times\Tan Q\times_Q\Tan^*Q$ &
Extended unified (Pontryagin) bundle. \\ 
$\kappa_1\colon{\cal M}\to\Real\times\Tan Q$ & \\
$\kappa_2\colon{\cal M}\to\Real\times\Tan^*Q$ & \\
$\kappa_0\colon{\cal M}\to\Real\times Q$ &
Canonical projections of the extended unified  bundle ${\cal M}$. \\ 
 $\Theta_{\cal M}\in\df^1({\cal M})$ &
Canonical $1$-form on the extended unified bundle ${\cal M}$. \\ 
$\Omega_{\cal M}=-\d\Theta_{\cal M}\in\df^2({\cal M})$ &
Canonical $2$-form on the extended unified bundle ${\cal M}$. \\ 
${\rm C}\colon{\cal M}\to\Real$ &
Coupling function on the unified bundle ${\cal M}$. \\
${\rm H}\colon{\cal M}\to\Real$ &
Hamiltonian function on the unified bundle ${\cal M}$. \\
$X_{\rm H}\in\vf({\cal M})$ &
Dynamical vector field of the dynamical system $({\cal M},\Omega_{\cal M},{\rm H})$. \\
$\jmath_0\colon\mathcal{M}_0\hookrightarrow\mathcal{M}$ &
Compatibility submanifold of the system $({\cal M},\Omega_{\cal M},{\rm H})$. \\
${\tt g}\in{\cal T}_2(Q)$ & Riemannian metric on a manifold $Q$. \\
${\rm F}\in\vf (Q)$ & Vector force field. \\
$\omega\in\df^1 (Q)$ & Work differential form. \\
$(Q,{\tt g})$ & Riemannian manifold. \\
$(Q,{\tt g},{\rm F})$ , $(Q,{\tt g},\omega)$ & Newtonian system. \\
$\nabla$ & Connection on a manifold (Levi-Civita). \\
$\nabla_X$ & Covariant derivative by a vector field $X\in\vf(M)$. \\
$\nabla_t$ & Covariant derivative along a curve. \\
${\mathtt T}\in{\mathcal T}_2^1(M)$ & Torsion tensor of a connection. \\
${\mathit R}\in{\mathcal T}_3^1(M)$ & Curvature tensor of a connection. \\
${\rm Rie}\in{\mathcal T}_4(M)$ & Riemann's curvature tensor of a Riemannian manifold. \\
${\rm Ric}\in{\mathcal T}_2(M)$ & Ricci tensor of a Riemannian manifold. \\
${\rm S}\in\Cinfty(M)$ & Scalar curvature of a Riemannian manifold. \\
$\bmeta$\ ,\ $(M,\bmeta)$ & (Pre)contact form, (pre)contact manifold. \\
$\mathcal{D}^{\rm C}$\ ,\ $\mathcal{D}^{\rm R}$ &
Contact and Reeb distributions on a contact manifold $(M,\bmeta)$.\\
$\Reeb\in\vf(M)$ & Reeb vector field on a contact manifold $(M,\bmeta)$. \\
$\flat_{\bmeta}\colon\Tan M\to\Tan^*M$ & \\
$\sharp_{\bmeta}=\flat_{\bmeta}^{-1}\colon\Tan^* M\to\Tan M$ & Canonical isomorphisms on a contact manifold $(M,\bmeta)$. \\
$\X_f\in\vf(M)$ & Hamiltonian vector field on a contact manifold $(M,\bmeta)$. \\
${\bf grad}\,f\in\vf(M)$ & Gradient vector field on a contact manifold $(M,\bmeta)$. \\
$\evo_f\in\vf(M)$ & Evolution vector field on a contact manifold $(M,\bmeta)$. \\
$\tau_1\colon\Tan Q\times\Real\rightarrow\Tan Q$ & 
Canonical projection of the bundle $\Tan Q\times\Real$. \\
 $\mathfrak{F}\Lag\colon\Tan Q\times\Real\to\Tan^*Q\times\Real$ &
 Legendre map defined by $\Lag\in\Cinfty(\Tan Q\times\Real)$. \\
 & \\
\hline
\end{tabular}
}
\end{center}

\begin{center}
{\small
\begin{tabular}{|c|c|}
\hline
 & \\
 $\theta_\Lag\in\df^1(\Tan Q\times\Real)$ &
Cartan Lagrangian $1$-form associated with $\Lag\in\Cinfty(\Tan Q\times\Real)$. \\
$\omega_\Lag=-\d\theta_\Lag\in\df^2(\Tan Q\times\Real)$ &
Cartan Lagrangian $2$-form associated with $\Lag\in\Cinfty(\Tan Q\times\Real)$. \\
 $\bmeta_\Lag=\d s-\theta_\Lag\in\df^1(\Tan Q\times\Real)$ &
(Pre)contact Lagrangian form associated with $\Lag\in\Cinfty(\Tan Q\times\Real)$. \\
$\Reeb_\Lag\in\vf(\Tan Q\times\Real)$ &
Reeb vector field for the (pre)contact manifold 
$(\Tan Q\times\Real,\bmeta_\Lag)$. \\
 $\mathfrak{M}=\Tan Q\times_Q\Tan^*Q\times\Real$ &
Extended precontact unified (or Pontryagin) bundle. \\ 
$\bmeta_{\mathfrak{M}}\in\df^1(\mathfrak{M})$ &
Canonical precontact form on the precontact unified bundle $\mathfrak{M}$. 
\\ 
$\mathfrak{C}\colon\mathfrak{M}\to\Real$ &
Coupling function on the unified bundle $\mathfrak{M}$. \\
${\bf H}\in\Cinfty(\mathfrak{M})$ &
Hamiltonian function on the precontact unified bundle $\mathfrak{M}$.\\
$\X_{\bf H}\in\vf(\mathfrak{M})$ &
Dynamical vector field of the dynamical system $(\mathfrak{M},\Omega_\mathfrak{M},{\bf H})$. \\
$\jmath_0\colon\mathfrak{M}_0\hookrightarrow\mathfrak{M}$ &
Compatibility submanifold of the dynamical system $(\mathfrak{M},\Omega_\mathfrak{M}
,{\bf H})$. \\
  & \\
\hline
\end{tabular}
}
\end{center}


\chapter{Introduction}

\pagenumbering{arabic}
\setcounter{page}{1}

The study of mechanics experienced a substantial advance during 
the 18th and 19th centuries with the emergence of what was called 
{\sl Analytical} or {\sl Rational Mechanics}.
Many relevant mathematicians contributed to its development, 
such as {\it L. Euler, W.R. Hamilton, C.G.J. Jacobi, J.L. Lagrange, A.M. Legendre, S.D. Poisson}, among others.
Their techniques are essentially based in the application of variational methods 
to obtain the equations of motion of dynamical systems, and their subsequent development;
and are widely collected in a lot of classical treaties, such as
\cite{FLS-2005,Ga-70,GPS-01,KB-2004,La-70,LL-76,Wh-37}
or, in a more modern perspective, \cite{Arovas2014,Ca-96,Co-99,De-2010,JS-98,Le-2014,Sch-2005,SM-74,Su-2013,Wo-2009},
among many others.

Although the dynamical equations of Analytical Mechanics had been well-stablished
previously by {\it J.L. Lagrange} and others,
the connection between the variational methods and Mechanics was done 
by the {\sl minimum action principles}
\cite{Bl-81,Bli,Els,GF-63,La-70,Le-2014}. Thus, the so-called
{\sl Hamilton principle} leads to the Euler--Lagrange equations
and  the so-called {\sl Lagrangian formalism} of mechanics.
Introducing the {\sl Legendre transformation} and the ``momentum coordinates''
the dynamical equations become the {\sl Hamilton equations},
originating in this way the {\sl canonical} or {\sl Hamiltonian formalism}
of Mechanics. As in the Lagrangian case, these Hamiltonian equations
can also be derived from the so-called {\sl Hamilton--Jacobi variational principle}.

In the second half of the 20th century, 
an extensive group of mathematicians and physicists used 
the techniques of differential geometry to formalize analytical mechanics
and physics in general,
and studied the properties of dynamical systems intrinsically,
giving rise to what is known today as {\sl geometric mechanics}.
Among the most relevant books and contributions, we can cite
\cite{AM-78,AMR-88,Ar-89,CIMM-2015,CDW-87,Go-69,GS-77,GS-90,Holm-2008,Ib-96,klein,Lewis,LM-sgam,Li-75,MS-98,MR-99,Ol-02,RTSST-2005,Sch-2005,So-ssd,We-77},
in addition to many others.

In the geometric formulation, 
differentiable manifolds that model the phase spaces of dynamical systems
are endowed with different kinds of geometric structures,
which are used to study the properties of these systems.
Thus, for {\sl conservative systems},
the {\sl symplectic manifolds} are used to describe
{\sl autonomous} or {\sl time-independent} dynamical systems
\cite{AM-78,AA-80,Ar-89,CDW-87,dLe89,Lewis,LM-sgam,Li-75,MR-99,MT-78,So-ssd,VK-75,VK-77,We-77};
meanwhile for the {\sl nonautonomous} or {\sl time-dependent} case
also other types of manifolds can be considered ,
such as {\sl jet bundles} and {\sl contact} and {\sl cosymplectic manifolds} 
\cite{AM-78,AA-80,CLL-92,CLM-91,Cr-95,CPT-hctd,EMR-gstds,MS-98,MV-03,Saunders89}.
In particular, for variational dynamical systems; 
that is, those of {\sl Lagrangian type},
their phase spaces have the geometric structure of a
tangent bundle \cite{Cr-83b,CP-adg,Ga-52,Go-69,klein,Tu-74,Tu-76a}, 
or a cotangent bundle in the dual canonical Hamiltonian formalism
\cite{AM-78,Ar-89,CDW-87,LM-sgam,Tu-74,Tu-76b}.
These formalisms can also be described in a single unified description
\cite{SR-83,SR-83b}.
Furthermore, for those dynamical systems which are called of {\sl mechanical type}, their configuration spaces are also endowed 
with a (semi)Riemannian metric  \cite{AM-78,Ar-89,CC-2005,CDW-87,GN-2014,Ol-02}
which is used to construct their Lagrangian functions in a canonical way.
Finally, symplectic mechanics has also been  used to describe  {\sl nonholonomic systems};
that is, systems subjected to constraints depending on positions and velocities,
\cite{Ar-89,CLMM-2002,LMM,LM-96,Ga-70,Lewis2020,Neimark-Fufaev,Ve};
and also {\sl vakonomic systems}
which consists in considering those systems but modifying the variational principle
using only the variations allowed by the constraints
\cite{Ar-89,Bli,Els,MCL-2000}.

In addition, there are other more general geometric structures
which are used to model some special types of dynamical systems.
These are {\sl Poisson and Jacobi manifolds} \cite{LM-sgam,Marle-83} or also 
{\sl Lie algebroids} \cite{CNS-2007b,LMM-2005,GG-2006,GUG-2006,Edu-2001,MMS-2002,We-96}.
Finally, {\sl dissipative} or {\sl nonconservative dynamical systems} are modelled using {\sl contact manifolds}
\cite{Bravetti2017,BCT-2017,CIAGLIA2018,LGMMR-2020,DeLeon2019,
DeLeon2016b,GGMRR-2019b,KA-2013,LL-2018}.
In all these cases, there are deep relations between the differential equations 
describing the dynamics and the underlying geometric structures.

Another very important fact that characterize the behaviour of dynamical systems is
the existence of symmetries. 
The interest of studying symmetries arises as a consequence of the well known fact
that their existence is closely related to that of conserved quantities,
and the main result comes from the work of {\it Emmy Noether} \cite{No-18}
(see \cite{KS-2011} for a review of her results).
On turn, the existence of conserved quantities
facilitates the integration of  the dynamical equations,
applying the suitable reduction procedures.
Concerning to this, it is interesting to mention the fundamental {\sl Arnold-Liouville Theorem}
for {\sl integrable systems} \cite{Ar-89}
(see also \cite{Ba-99,Mi-2014}), and also those by
{\it J.E. Marsden} and {\it A. Weinstein} on the problem of the {\sl (symplectic) reduction} by symmetries
\cite{MW-74} (see also \cite{MMOPR-2007,MR-99,MW,Or-2002,OR-2002,OR-2004} and the references therein).

The physical systems for which these geometric approaches were initially developed
had the characteristic of being {\sl regular}
\footnote{
For Lagrangian systems, this means, locally, that the Hessian matrix with respect to the generalized velocities is regular everywhere.}.
The dynamical systems in classical analytical mechanics are of these kind.
Nevertheless, with the advent of Relativity Theory and the 
development of field theories in modern physics, in general,
many systems appear for which that property does not hold; that is,
they are {\sl singular systems\/}.
For autonomous systems, this feature of being regular or not manifests in the fact that 
the underlying geometric structure is 
{\sl symplectic} or {\sl presymplectic}, respectively
\cite{Ca-90,CGIR-85,CLR-87,CLR-88,GN-79,GN-80,GNH-78,GP-92,MMT-97,Mu-89,MR-92,Sn-74}.

The aim of this work is to do a detailed exposition of
the geometry underlying dynamical systems and how it is applied
to describe the different kinds of them:  conservative autonomous and nonautonomous dynamical systems in general, and Newtonian systems in particular; as well as dissipative systems.
We use different geometric descriptions to study
the main properties and characteristics of these systems;
such as their Lagrangian, Hamiltonian and unified formalisms, 
their symmetries, the variational principles, and others.
The study is done mainly for the regular case,
although comments and explanations about singular systems 
are also introduced throughout the exposition and in some proposed problems.

The organization of the work is the following:

The first chapter of the exposition, Chapter \ref{ch2}, is devoted
to present the most general geometric framework for autonomous mechanics;
that is, {\sl symplectic manifolds} and their structures and characteristics
({\sl Poisson brackets, symplectomorphisms,} etc.),
as well as the fundamental notion of {\sl Hamiltonian dynamical system}.
A general description of {\sl symmetries, conserved quantities} and the 
{\sl Noether theorem} is also done for this general situation.

In Chapter \ref{sdl}, we continue the exposition about {\sl symplectic mechanics}
studying, in particular, autonomous mechanical systems of variational type
which are characterized by being described by means of time-independent Lagrangian functions.
The phase spaces of these kinds of systems are represented by tangent
and cotangent bundles of manifolds representing 
the configuration spaces of the systems; so the canonical geometric structures
of these bundles are introduced first.
Starting from the Lagrangian function and using the canonical structures of the tangent bundle,
we can define a (pre)symplectic form and the {\sl Legendre map}
which allow us to establish the Lagrangian formalism,
the associated canonical Hamiltonian formalism for these systems,
and the equivalence between both formalisms.
We also state the foundations of a new geometric setting of the {\sl Hamilton--Jacobi theory}
for Hamiltonian systems.
In addition, we also give a description of the so-called {Skinner-Rusk unified formalism}, 
which combines in a single framework the two previous formalisms.
The study of symmetries, conserved quantities and Noether's theorem is
also performed for Lagrangian systems in both formalisms,
which lead to introduce some new types of symmetries.
As these dynamical systems are of variational type,
a section is devoted to introduce the variational formulation
and derive the dynamical equations from the corresponding variational principle.
Finally, two of the most classic examples in mechanics,
the {\sl harmonic oscillator\/} and the {\sl Kepler problem},
are analyzed.

Chapter \ref{chap:cosym} is devoted to develop one of the most interesting
and generic geometric formulations for describing 
{\sl nonautonomous dynamical systems}, 
both the Lagrangian and Hamiltonian formalisms.
It is based in using {\sl cosymplectic manifolds},
which are presented in the first section of this chapter.
Then, the dynamics and symmetries for these time-dependent systems are studied
and the formalism is used to describe the classical examples introduced in the previous chapter
when external time-dependent forces act on the oscillator
or for variable mass systems in the case of the Kepler problem.
In this chapter we also give a brief presentation
of two other very common formulations of nonautonomous mechanics;
namely, the {\sl contact} and the {\sl extended symplectic formulations}, 
and showing their equivalence with the cosymplectic picture.

Next, Chapter \ref{sdn}
deals with the study of those dynamical systems whose configuration spaces are
endowed with additional geometric structures, such as a {\sl metric}.
First, we review the foundations of connections in manifolds and Riemannian geometry,
which are the geometric structures needed to develop what is called {\sl Newtonian mechanics}.
In particular, the existence of a
metric allows us to construct what are known as
 {\sl mechanical Lagrangians}. We analyze
different types of them; namely, {\sl conservative} and {\sl coupled systems}, and
systems with {\sl holonomic} and {\sl nonholonomic constraints};
stating their dynamics and variational formulation.
As a particular situation, the case of {\sl (nonautonomous) Newtonian systems} is
displayed along this exposition.

The last chapter, Chapter \ref{chap:contactmech}, is an introduction to the study of {\sl autonomous dissipative systems},
using the {\sl contact geometry}.
After reviewing the foundations on {contact manifolds},
we establish the generic concepts about {\sl contact Hamiltonian systems}. 
and develop the Lagrangian, Hamiltonian and unified Lagrangian-Hamiltonian formalisms for these kinds of systems.
Next, we study symmetries and the concept of {\sl dissipated quantities} for contact Hamiltonian and Lagrangian systems;
establishing the so-called ``dissipation theorems'',
and showing how to associate dissipated and conserved quantities to these symmetries.
As in the above chapters, the examples of the damped harmonic oscillator and the Kepler problem with friction
are analyzed in this context.

The work ends with an appendix where various contents on other
geometric structures that appear throughout the exposition are collected;
in particular, tangent and cotangent bundles, and Lie groups and Lie algebras.

This work is intended for readers who have completed at least 
the first courses of a degree on mathematics or physics, 
and who have a basic training on differential geometry of smooth manifolds and analytical mechanics.

Along the exposition, all manifolds are supposed to be real, second-countable and $\Cinfty$. 
All the maps and the structures are smooth.  
The summation criterion for repeated crossed indices is adopted.


\chapter{Symplectic mechanics (I): Autonomous Hamiltonian dynamical systems}
\label{ch2}

The general geometric framework for describing autonomous mechanical systems (with a finite number of degrees of freedom)
uses some particular types of differentiable manifolds to model the phase spaces of these systems,
they are the {\sl symplectic} and {\sl presymplectic manifolds}.
This formulation is known as {\sl symplectic mechanics}. Taking these kinds of manifolds as phase spaces, 
we have a very general setting to study dynamical systems,
from which other descriptions such as the Lagrangian and the Hamiltonian formalisms of Lagrangian systems can be analyzed as particular situations
\footnote{
This general formulation allows us to describe also
dynamical systems which are not of Lagrangian type,
for instance, the system of a classical spin particle \cite{So-ssd}.}.

The symplectic description of (autonomous) Hamiltonian systems has been exposed in many works and books
(see, for instance, \cite{AM-78,Ar-89,dLe89,GS-90,art:Krupkova00,LM-sgam,Li-75,MR-99,So-ssd,We-77} and the references quoted therein).

In this chapter, after reviewing the fundamental concepts on symplectic geometry,
we present the symplectic description of the autonomous Hamiltonian systems
and we introduce several types of symmetries and their associated conserved quantities
from a geometric perspective.

\section{Notions on symplectic and presymplectic geometry}
\label{sympgeom}

In this section, we state the fundamental concepts and properties 
of symplectic (and presymplectic) manifolds
(see, for instance, \cite{AM-78,ABKLR-2012,CdS-2008,GS-77,KY-2019,LM-sgam,We-77}).

\subsection{Symplectic and presymplectic vector spaces}

Let $\textbf{E}$ be a finite dimensional real vector space.

\begin{definition}
A \textbf{symplectic inner product}, or a
\textbf{linear symplectic structure}, on $\textbf{E}$ is a non--degenerate skew symmetric bilinear function $\omega$ on $\textbf{E}$.

We say that the pair $(\textbf{E},\omega)$ is a \textbf{symplectic vector space}. 
\end{definition}

Non degeneracy means that, if $\omega(\bf{x},\bf{y})=0$, for every $\bf{y}\in \textbf{E}$, then $\mathbf{x}=0$. 
Let $\mathbf{e}_{1},\ldots,\mathbf{e}_{r}$ be a basis of $\textbf{E}$ and $\mbox{\boldmath $\alpha$}^{1},\ldots,\mbox{\boldmath $\alpha$}^{r}$ its dual basis. If $\omega_{ij}=\omega(\mathbf{e}_{i},\mathbf{e}_{j})$, then the expression of $\omega$ on these bases is
$$
\omega=\omega_{ij}\,\mbox{\boldmath $\alpha$}^{i}\otimes\mbox{\boldmath $\alpha$}^{j}\ ,
$$
being $(\omega)=(\omega_{ij})$ the matrix of $\omega$ in these bases.

The linear mapping $\omega^{\flat}:\textbf{E}\to \textbf{E}^{*}$ is defined by
$$
(\omega^{\flat}(\mathbf{x}))(\mathbf{y})=\langle\omega^{\flat}(\mathbf{x}),\mathbf{y}\rangle=\omega(\mathbf{x},\mathbf{y})\ ,
$$
and its matrix relative to these bases is  $(\omega_{ij})$,
since $\omega(\mathbf{e}_{i})=\omega_{ij}\mbox{\boldmath $\alpha$}^{j}$.
By the skew symmetry of the bilinear mapping $\omega$, we have that the matrix $(\omega_{ij})$ is skew symmetric. The non degeneracy gives us that the mapping $\omega^{\flat}$ is one--to--one, hence an isomorphism, and $(\omega_{ij})$ is regular. 
As a consequence, the dimension of $\textbf{E}$ is even, that is $r=2n$, because $(\omega)=-(\omega)^{t}$ (the transpose matrix),
and $\det(\omega)=\det(\omega)^{t}=(-1)^{r}\det(\omega)$.

A basis $\mathbf{a}_{1},\ldots,\mathbf{a}_{n}, \mathbf{a}_{n+1},\ldots,\mathbf{a}_{2n}$ of $\textbf{E}$ is called \textsl{symplectic} if
$$
\begin{matrix}
 \omega(\mathbf{a}_{i},\mathbf{a}_{j})=0 \ ;&  i,j=1,\ldots,n \ ; \\ 
\omega(\mathbf{a}_{i},\mathbf{a}_{j})=0 \ ;&  i,j=n+1,\ldots,2n\\
\omega(\mathbf{a}_{i},\mathbf{a}_{n+j})=\delta_{ij} \ ;& i,j=1,\ldots,n \ .
 \end{matrix}\ .
$$

The existence of symplectic basis is given by the following Lemma: 

\begin{lem}
Let $({\bf E},\omega)$ be a $2n$ dimensional symplectic vector space.
\begin{enumerate}
\item There exists a symplectic basis $({\bf e}_k)$ ($k=1,\ldots ,2n$) on ${\bf E}$.

\item If $(\mbox{\boldmath $\alpha$}^k)$ is the corresponding dual basis on ${\bf E}^*$,
then
\(\dst\omega=\sum_{i=1}^n\mbox{\boldmath $\alpha$}^i\wedge\mbox{\boldmath $\alpha$}^{i+n}\) or,
what is the same, the matrix of $\omega$ in this basis is
$$
\left(\begin{matrix} 0_n & I_n \\ -I_n & 0_n \end{matrix}\right)
$$
where $I_n$ denotes the identity matrix of order $n$ and $0_n$ the zero square matrix $n\times n$.
\end{enumerate}
\label{lemaaux}
\end{lem}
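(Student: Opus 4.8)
The plan is to prove both parts simultaneously by induction on $n$, constructing the symplectic basis by an iterative "pairing off" argument analogous to Gram--Schmidt, but adapted to the skew-symmetric setting. The base case $n=0$ is vacuous (or $n=1$ handled directly), so assume the result for symplectic spaces of dimension $2(n-1)$.

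\textbf{Step 1: producing the first symplectic pair.} Since $\textbf{E}\neq\{0\}$, pick any nonzero $\mathbf{a}_1\in\textbf{E}$. By non-degeneracy of $\omega$, the linear functional $\omega^\flat(\mathbf{a}_1)\in\textbf{E}^*$ is nonzero, so there exists $\mathbf{b}\in\textbf{E}$ with $\omega(\mathbf{a}_1,\mathbf{b})\neq 0$; rescaling $\mathbf{b}$, we may set $\mathbf{a}_{n+1}:=\mathbf{b}$ with $\omega(\mathbf{a}_1,\mathbf{a}_{n+1})=1$. Skew-symmetry gives $\omega(\mathbf{a}_1,\mathbf{a}_1)=\omega(\mathbf{a}_{n+1},\mathbf{a}_{n+1})=0$ automatically, so $\{\mathbf{a}_1,\mathbf{a}_{n+1}\}$ already satisfies the required relations, and these two vectors are linearly independent (a nontrivial relation would force $\omega(\mathbf{a}_1,\mathbf{a}_{n+1})=0$).

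\textbf{Step 2: splitting off the remaining subspace.} Let $\textbf{F}=\mathrm{span}\{\mathbf{a}_1,\mathbf{a}_{n+1}\}$ and define
$$
\textbf{F}^{\omega}=\{\,\mathbf{x}\in\textbf{E} : \omega(\mathbf{x},\mathbf{a}_1)=\omega(\mathbf{x},\mathbf{a}_{n+1})=0\,\}\ .
$$
I claim $\textbf{E}=\textbf{F}\oplus\textbf{F}^{\omega}$. For any $\mathbf{x}\in\textbf{E}$, set $\mathbf{x}'=\mathbf{x}-\omega(\mathbf{x},\mathbf{a}_{n+1})\mathbf{a}_1+\omega(\mathbf{x},\mathbf{a}_1)\mathbf{a}_{n+1}$; a direct check using $\omega(\mathbf{a}_1,\mathbf{a}_{n+1})=1$ shows $\mathbf{x}'\in\textbf{F}^{\omega}$, giving $\textbf{E}=\textbf{F}+\textbf{F}^{\omega}$, and $\textbf{F}\cap\textbf{F}^{\omega}=\{0\}$ because any element of $\textbf{F}$ killed by pairing against both $\mathbf{a}_1$ and $\mathbf{a}_{n+1}$ must be zero (again by $\omega(\mathbf{a}_1,\mathbf{a}_{n+1})=1$). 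Moreover $\omega$ restricted to $\textbf{F}^{\omega}$ is still skew-symmetric and still non-degenerate: if $\mathbf{x}\in\textbf{F}^{\omega}$ pairs to zero with all of $\textbf{F}^{\omega}$, then since it also pairs to zero with $\textbf{F}$ by definition, it pairs to zero with all of $\textbf{E}=\textbf{F}\oplus\textbf{F}^{\omega}$, hence $\mathbf{x}=0$. Thus $(\textbf{F}^{\omega},\omega|_{\textbf{F}^{\omega}})$ is a symplectic vector space of dimension $2n-2$.

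\textbf{Step 3: induction and assembling the basis.} By the inductive hypothesis applied to $\textbf{F}^{\omega}$, there is a symplectic basis $\mathbf{a}_2,\ldots,\mathbf{a}_n,\mathbf{a}_{n+2},\ldots,\mathbf{a}_{2n}$ of $\textbf{F}^{\omega}$. Together with $\mathbf{a}_1,\mathbf{a}_{n+1}$ these $2n$ vectors form a basis of $\textbf{E}$, and all the defining relations $\omega(\mathbf{a}_i,\mathbf{a}_j)=0$, $\omega(\mathbf{a}_i,\mathbf{a}_{n+j})=\delta_{ij}$ hold: relations among indices $\geq 2$ hold by the inductive hypothesis, relations between $\{1,n+1\}$ and the rest vanish because $\mathbf{a}_2,\ldots\in\textbf{F}^{\omega}$, and the $\{1,n+1\}$ relations were arranged in Step 1. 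This proves part (1). For part (2), with $(\mbox{\boldmath $\alpha$}^k)$ the dual basis, one computes $\omega(\mathbf{a}_k,\mathbf{a}_l)$ and reads off that the matrix $(\omega_{kl})$ is exactly $\left(\begin{smallmatrix}0_n & I_n\\ -I_n & 0_n\end{smallmatrix}\right)$; since $\big(\sum_{i=1}^n\mbox{\boldmath $\alpha$}^i\wedge\mbox{\boldmath $\alpha$}^{i+n}\big)(\mathbf{a}_k,\mathbf{a}_l)$ yields the same values, the two bilinear forms agree on a basis, hence $\omega=\sum_{i=1}^n\mbox{\boldmath $\alpha$}^i\wedge\mbox{\boldmath $\alpha$}^{i+n}$.

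\textbf{Main obstacle.} There is no serious obstacle; the one point requiring care is the direct-sum decomposition $\textbf{E}=\textbf{F}\oplus\textbf{F}^{\omega}$ and, in particular, verifying that $\omega$ stays non-degenerate on $\textbf{F}^{\omega}$ — this is where non-degeneracy on all of $\textbf{E}$ is genuinely used and is the linchpin that makes the induction go through. Everything else is routine bookkeeping with skew-symmetry and the normalization $\omega(\mathbf{a}_1,\mathbf{a}_{n+1})=1$.
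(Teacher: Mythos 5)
Your proof is correct and follows essentially the same route as the paper: pick a pair with $\omega(\mathbf{a}_1,\mathbf{a}_{n+1})=1$, split ${\bf E}$ as the span of that pair plus its $\omega$-orthogonal complement, and induct on the complement. You are in fact slightly more explicit than the paper in checking that $\omega$ remains non-degenerate on the complement, which is a welcome detail rather than a deviation.
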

\begin{proof}
By induction on the dimension of ${\bf E}$. Let $\mathbf{e}_{1},\mathbf{e}_{n+1}\in{\bf E}$ with  $\omega(\mathbf{e}_{1},\mathbf{e}_{n+1})\neq 0$. We can choose two such vectors unless $\omega=0$. Dividing $\mathbf{e}_{1}$ by a scalar, we have that $\omega(\mathbf{e}_{1},\mathbf{e}_{n+1})=1$. Then, on the plane ${\bf P}_{1}$ spanned by $\mathbf{e}_{1},\mathbf{e}_{n+1}$ the matrix of $\omega$ is 
$$
\left(\begin{matrix} 0 & 1 \\ -1 & 0 \end{matrix}\right)\, .
$$
Let ${\bf E}_{1}$ the $\omega$--orthogonal complement of ${\bf P}_{1}$ in ${\bf E}$, that is:
$$
{\bf E}_{1}=\{{\bf a}\in{\bf E}\,|\, \omega({\bf a},{\bf a}_{1})=0, \forall{\bf a}_{1}\in{\bf P}_{1}\}\ .
$$
Observe that ${\bf E}_{1}\cap{\bf P}_{1}=\{0\}$ and ${\bf E}_{1}+{\bf P}_{1}={\bf E}$. If $\mathbf{a}\in{\bf E}$, we have
$$
\mathbf{a}-\omega({\bf a},{\bf e}_{n+1}){\bf e}_{1}+\omega({\bf a},{\bf e}_{1}){\bf e}_{n+1}\in{\bf E}_{1}\ .
$$
Then, ${\bf E}_{1}\oplus{\bf P}_{1}={\bf E}$ and we can repeat the process on ${\bf E}_{1}$, with dimension less that $2n$, and we have finished the proof.

The result on the expression of $\omega$ as \(\dst\omega=\sum_{i=1}^n\mbox{\boldmath $\alpha$}^i\wedge\mbox{\boldmath $\alpha$}^{i+n}\) is immediate.
\\ \qed  \end{proof}

\begin{remark}{\rm  
\begin{itemize}
\item
The non degeneracy condition is equivalent to say that $\omega^{n}=\omega\overbrace{\wedge\ldots\wedge}^{n}\omega\equiv\wedge^n\omega$ is a volume form on ${\bf E}$.
\item
If we take out the non degeneracy condition, then the mapping $\omega^{\flat}$ is not one--to--one. In this case, we need to add the basis $\mathbf{u}_{1},\ldots,\mathbf{u}_{h}$ of $\ker \omega^{\flat}$ to the given basis for the decomposition of ${\bf E}$. Then the corresponding matrix is,
$$
\left(\begin{matrix} 0_n & I_n&0_{n\times h} \\ -I_n & 0_n& 0_{n\times h}\\0_{h\times n} & 0_{h\times n}&0_{h}
\end{matrix}\right)
$$
being $2n+h$ the dimension of ${\bf E}$ and denoting by $0_{n\times h}$ the zero matrix of $n$ rows and $h$ columns. We say that $2n$ is the rank of $\omega$. 
\end{itemize}
}\end{remark}

\subsection{Subspaces of a symplectic linear space}

Associated to a linear symplectic structure we have a notion of orthogonality in parallel to the ideas related to the Euclidean scalar product, but the results are very different. In the next lines, we develop these ideas that are necessary to understand our later description of Lagrangian and Hamiltonian systems.

Let $({\bf E},\omega)$ be a $2n$-dimensional symplectic vector space and ${\bf F}\subset {\bf E}$ a linear subspace. The {\sl \textbf{$\omega$--orthogonal complement}} of ${\bf F}$ is defined as
$$
{\bf F}^\bot=\{{\bf u}\in{\bf E}\ |\ \omega({\bf u},{\bf u}')=0\, , \, \mathrm{for\,\, every}\,\, {\bf u}'\in{\bf F}\}.
$$
Observe that, in general, ${\bf F}\cap{\bf F}^\bot\neq 0$ (for example, if ${\bf F}=\mathrm{span}\,\{{\bf u}\}$, then ${\bf F}\subset{\bf F}^\bot$).

\begin{definition}
Let ${\bf F}$ be a subspace of a linear symplectic $2n$--dimensional vector space $({\bf E},\omega)$. 
\begin{enumerate}
\item $ {\bf F}$ is \textbf{isotropic} if ${\bf F}\subset  {\bf F}^\bot$; that is, $\omega({\bf u},{\bf u}')=0$, for every ${\bf u},{\bf u}'\in{\bf F}$.
\item $ {\bf F}$ is \textbf{coisotropic} if ${\bf F}\supset  {\bf F}^\bot$; that is, if 
$\omega({\bf u},{\bf u}')=0$, for every ${\bf u}'\in{\bf F}$. Then ${\bf u}\in{\bf F}$.
\item $ {\bf F}$ is \textbf{Lagrangian} if ${\bf F}$ is isotropic and there exists an isotropic complement ${\bf F}'$; that is, an isotropic subspace ${\bf F}'\subset{\bf E}$ such that ${\bf F}\oplus{\bf F}'={\bf E}$.
\item $ {\bf F}$ is \textbf{symplectic} if $\omega$ restricted to ${\bf F}$ is non degenerated; that is, $({\bf F},\omega_{\bf F}=\left.\omega\right|_{\bf F})$ is a symplectic vector space.
\end{enumerate}
\end{definition}

These notions have several associated properties that we resume in the following two propositions, with some proofs and indications. 

\begin{prop}
\label{propos1}
Let $({\bf E},\omega)$ be a symplectic vector space and 
${\bf F},{\bf G}$ subspaces of ${\bf E}$.
\begin{enumerate}
\item If ${\bf F}\subset{\bf G}$, then ${\bf F}^\bot\supset{\bf G}^\bot$\,.
\item  ${\bf F}^\bot\cap{\bf G}^\bot=({\bf F}+{\bf G})^\bot$.
\item $\dim{\bf E}=\dim{\bf F}+\dim {\bf F}^\bot$.
\item ${\bf F}= {\bf F}^{\bot\bot}$.
\item $({\bf F}\cap{\bf G})^\bot={\bf F}^\bot+{\bf G}^\bot$.
\end{enumerate}
\end{prop}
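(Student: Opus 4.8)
The plan is to establish the five items in an order that lets later ones build on earlier ones, since properties 4 and 5 are easiest deduced from 1–3. For item 1, if ${\bf F}\subset{\bf G}$ and ${\bf u}\in{\bf G}^\bot$, then $\omega({\bf u},{\bf u}')=0$ for every ${\bf u}'\in{\bf G}$, hence in particular for every ${\bf u}'\in{\bf F}$, so ${\bf u}\in{\bf F}^\bot$; this is immediate. For item 2, the inclusion $({\bf F}+{\bf G})^\bot\subset{\bf F}^\bot\cap{\bf G}^\bot$ follows from item 1 applied to ${\bf F}\subset{\bf F}+{\bf G}$ and ${\bf G}\subset{\bf F}+{\bf G}$; conversely, if ${\bf u}$ is $\omega$-orthogonal to all of ${\bf F}$ and to all of ${\bf G}$, then by bilinearity it is $\omega$-orthogonal to every ${\bf v}+{\bf w}$ with ${\bf v}\in{\bf F}$, ${\bf w}\in{\bf G}$, i.e. to ${\bf F}+{\bf G}$.

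The real content is item 3, the dimension formula, and I would deduce it from the non-degeneracy of $\omega$ via the isomorphism $\omega^\flat\colon{\bf E}\to{\bf E}^*$ already introduced in the excerpt. The key observation is that ${\bf F}^\bot=(\omega^\flat)^{-1}({\bf F}^\circ)$, where ${\bf F}^\circ\subset{\bf E}^*$ is the annihilator of ${\bf F}$: indeed ${\bf u}\in{\bf F}^\bot$ iff $\langle\omega^\flat({\bf u}),{\bf u}'\rangle=\omega({\bf u},{\bf u}')=0$ for all ${\bf u}'\in{\bf F}$, i.e. iff $\omega^\flat({\bf u})\in{\bf F}^\circ$. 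Since $\omega^\flat$ is an isomorphism, $\dim{\bf F}^\bot=\dim{\bf F}^\circ=\dim{\bf E}-\dim{\bf F}$ by the standard annihilator identity from linear algebra. (Alternatively one can argue directly: the linear map ${\bf E}\to{\bf F}^*$ sending ${\bf u}\mapsto\omega({\bf u},\cdot)|_{\bf F}$ has kernel ${\bf F}^\bot$ and, by non-degeneracy of $\omega$ on ${\bf E}$, is surjective onto ${\bf F}^*$, giving $\dim{\bf E}-\dim{\bf F}^\bot=\dim{\bf F}$.) I expect this identification $\omega^\flat({\bf F}^\bot)={\bf F}^\circ$ to be the main point; everything else is bookkeeping.

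With item 3 in hand, item 4 is quick: ${\bf F}\subset{\bf F}^{\bot\bot}$ always holds (if ${\bf u}\in{\bf F}$ then $\omega({\bf u},{\bf u}')=-\omega({\bf u}',{\bf u})=0$ for every ${\bf u}'\in{\bf F}^\bot$), and applying item 3 twice gives $\dim{\bf F}^{\bot\bot}=\dim{\bf E}-\dim{\bf F}^\bot=\dim{\bf F}$, so the inclusion is an equality. Finally, for item 5, apply item 2 to the subspaces ${\bf F}^\bot$ and ${\bf G}^\bot$: $({\bf F}^\bot+{\bf G}^\bot)^\bot={\bf F}^{\bot\bot}\cap{\bf G}^{\bot\bot}={\bf F}\cap{\bf G}$ by item 4; now take $\bot$ of both sides and use item 4 once more, ${\bf F}^\bot+{\bf G}^\bot=({\bf F}^\bot+{\bf G}^\bot)^{\bot\bot}=({\bf F}\cap{\bf G})^\bot$. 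The only mild subtlety is making sure item 2 is stated and used for arbitrary pairs of subspaces (so that it legitimately applies to ${\bf F}^\bot,{\bf G}^\bot$), which it is.
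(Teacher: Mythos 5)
Your proof is correct and follows essentially the same route as the paper: items 4 and 5 are deduced from the dimension formula and the orthogonal-of-sum identity exactly as there, and item 3 rests on the non-degeneracy isomorphism $\omega^{\flat}$ together with dual-space dimension counting. The only cosmetic difference is in item 3, where you get the formula in one stroke from $\omega^{\flat}({\bf F}^{\bot})={\bf F}^{\circ}$ and $\dim{\bf F}^{\circ}=\dim{\bf E}-\dim{\bf F}$ (or, equivalently, from surjectivity of ${\bf E}\to{\bf F}^{*}$ with kernel ${\bf F}^{\bot}$), while the paper obtains the two inequalities separately via the induced maps ${\bf F}\to\left({\bf E}/{\bf F}^{\bot}\right)^{*}$ and ${\bf E}\to{\bf F}^{*}$; your version is marginally more streamlined but not a different idea.
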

\begin{proof} 
\begin{description}
\item[{\rm 3}.] 
Consider the linear map $\omega^{\flat}\colon\textbf{E}\to \textbf{E}^{*}$. 
Observe that, if ${\bf u}\in{\bf F}$, then $\omega^{\flat}({\bf u})\in\textbf{E}^{*}$ 
annihilates the subspace ${\bf F}^\bot$. Hence, the restricted linear map 
$\omega_{\bf F}^{\flat}\colon\textbf{F}\to \textbf{E}^*$ induces another one 
$\hat{\omega}_{\bf F}^{\flat}\colon\textbf{F}\to\left(\textbf{E}/{\bf F}^\bot\right)^{*}$. 
This last map is injective, but not necessarily onto. Thus, we have
$$
\dim{\bf F}\leq\dim\left(\textbf{E}/{\bf F}^\bot\right)^{*}=\dim\left(\textbf{E}/{\bf F}^\bot\right)=\dim\textbf{E}-\dim{\bf F} \ .
$$
Conversely, consider the map 
${\bf F}\stackrel{\omega^{\flat}}{\longrightarrow}\textbf{E}^{*}\stackrel{j}{\longrightarrow}{\bf F}^{*}$, 
where $j$ is the restriction from $\textbf{E}$ to ${\bf F}$. 
Let  $\bar{\omega}_{\bf F}^{\flat}=\omega^{\flat}\circ j$ and observe that 
$\ker\bar{\omega}_{\bf F}^{\flat}={\bf F}^\bot$. Hence,
$$
\dim{\bf F}=\dim{\bf F}^{*}\geq\dim(\mathrm{ img}\,\bar{\omega}_{\bf F}^{\flat})=
\dim\textbf{E}-\dim{\bf F}^{*}=\dim\textbf{E}-\dim{\bf F} \ .
$$
From both expressions, we obtain the result we wanted.
\item[{\rm 4}.] 
It is clear that ${\bf F}\subset{\bf F}^{\bot\bot}$. 
If we apply the previous result to ${\bf F}$ and to ${\bf F}^\bot$ we have
$$
\dim{\bf E}=\dim{\bf F}+\dim {\bf F}^\bot=\dim{\bf F}^\bot+\dim {\bf F}^{\bot\bot} \ ;
$$
that is, $\dim{\bf F}=\dim {\bf F}^{\bot\bot}$, hence ${\bf F}={\bf F}^{\bot\bot}$ as we wanted.
\item[{\rm 5}.] 
From 2. and 4. we have
$$
({\bf F}\cap{\bf G})^\bot\stackrel{4.}{=}({\bf F}^{\bot\bot}\cap{\bf G}^{\bot\bot})^\bot\stackrel{2.}{=}
({\bf F}^\bot\cap{\bf G}^\bot)^{\bot\bot}\stackrel{4.}{=}{\bf F}^\bot\cap{\bf G}^\bot  \ .
$$
\end{description}
\qed  \end{proof}

And the next proposition gives alternative definitions for a subspace to be Lagrangian.

\begin{prop}
Let $({\bf E},\omega)$ be a symplectic vector space $({\bf E},\omega)$ and ${\bf F}$ a subspace of ${\bf E}$. The following statements are equivalent:
\begin{enumerate}
\item ${\bf F}$ is Lagrangian.
\item ${\bf F}={\bf F}^\bot$.
\item ${\bf F}$ is isotropic and $\dim{\bf F} =\frac{1}{2} \dim{\bf E}$.
\end{enumerate}
\end{prop}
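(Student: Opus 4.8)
The plan is to prove the equivalences cyclically, namely $1\Rightarrow 3\Rightarrow 2\Rightarrow 1$, leaning heavily on item 3 of Proposition~\ref{propos1} (the dimension formula $\dim{\bf E}=\dim{\bf F}+\dim{\bf F}^\bot$), which is the only nontrivial input needed.

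First I would show $1\Rightarrow 3$. Assume ${\bf F}$ is Lagrangian, so ${\bf F}$ is isotropic and there is an isotropic complement ${\bf F}'$ with ${\bf F}\oplus{\bf F}'={\bf E}$. Isotropy of ${\bf F}$ is half of what we want; for the dimension count, I would argue that an isotropic subspace has dimension at most $\frac12\dim{\bf E}$: indeed, if ${\bf F}$ is isotropic then ${\bf F}\subset{\bf F}^\bot$, so by item 3 of Proposition~\ref{propos1}, $\dim{\bf E}=\dim{\bf F}+\dim{\bf F}^\bot\ge 2\dim{\bf F}$. Applying this to both ${\bf F}$ and ${\bf F}'$ gives $\dim{\bf F}\le\frac12\dim{\bf E}$ and $\dim{\bf F}'\le\frac12\dim{\bf E}$; since $\dim{\bf F}+\dim{\bf F}'=\dim{\bf E}$, both inequalities must be equalities, so $\dim{\bf F}=\frac12\dim{\bf E}$.

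Next, $3\Rightarrow 2$: if ${\bf F}$ is isotropic then ${\bf F}\subset{\bf F}^\bot$, and by item 3 of Proposition~\ref{propos1}, $\dim{\bf F}^\bot=\dim{\bf E}-\dim{\bf F}=\dim{\bf E}-\frac12\dim{\bf E}=\frac12\dim{\bf E}=\dim{\bf F}$; an inclusion of subspaces of equal finite dimension is an equality, so ${\bf F}={\bf F}^\bot$. Finally, $2\Rightarrow 1$: if ${\bf F}={\bf F}^\bot$ then in particular ${\bf F}\subset{\bf F}^\bot$, so ${\bf F}$ is isotropic; it remains to produce an isotropic complement. Here I would invoke the existence of a symplectic basis (Lemma~\ref{lemaaux}) adapted to ${\bf F}$ — equivalently, extend a basis of ${\bf F}$ to a symplectic basis of ${\bf E}$ so that ${\bf F}=\mathrm{span}\{{\bf a}_1,\dots,{\bf a}_n\}$ and take ${\bf F}'=\mathrm{span}\{{\bf a}_{n+1},\dots,{\bf a}_{2n}\}$, which is isotropic and complementary.

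The main obstacle is the last step, $2\Rightarrow 1$: producing the isotropic complement requires a basis-extension argument showing that any isotropic subspace (here ${\bf F}$, which has the right dimension $n$) sits inside some symplectic basis. This can be done by the same inductive "symplectic Gram–Schmidt" idea used in the proof of Lemma~\ref{lemaaux}: pick a basis $\{{\bf e}_1,\dots,{\bf e}_n\}$ of ${\bf F}$; since ${\bf F}$ is not coisotropic-degenerate, for each ${\bf e}_i$ there is a vector pairing nontrivially with it, and one adjusts successively to obtain partners ${\bf e}_{n+1},\dots,{\bf e}_{2n}$ with $\omega({\bf e}_i,{\bf e}_{n+j})=\delta_{ij}$ and $\omega({\bf e}_{n+i},{\bf e}_{n+j})=0$. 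Alternatively, and more cheaply, one can avoid the explicit construction: the linear map $\omega^\flat$ restricts to an isomorphism ${\bf E}/{\bf F}^\bot\to{\bf F}^*$, and since ${\bf F}={\bf F}^\bot$ one gets an isomorphism ${\bf E}/{\bf F}\cong{\bf F}^*$; choosing any linear splitting and skew-symmetrizing it within the coset yields an isotropic complement. I would present the basis version, since Lemma~\ref{lemaaux} is already available and makes the argument essentially immediate.
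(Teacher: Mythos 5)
Your proof is correct, but it takes a genuinely different route from the paper at two of the three implications. For $1\Rightarrow 3$ you apply the bound $\dim{\bf F}\le\frac{1}{2}\dim{\bf E}$, valid for any isotropic subspace (immediate from ${\bf F}\subset{\bf F}^\bot$ and item 3 of Proposition \ref{propos1}), simultaneously to ${\bf F}$ and to its isotropic complement ${\bf F}'$; the paper instead proves $1\Rightarrow 2$ directly, decomposing an arbitrary ${\bf u}\in{\bf F}^\bot$ as ${\bf a}+{\bf b}$ with ${\bf a}\in{\bf F}$, ${\bf b}\in{\bf F}'$, and using ${\bf F}'^{\bot}\cap{\bf F}^\bot=({\bf F}'+{\bf F})^\bot={\bf E}^\bot=\{0\}$ to conclude ${\bf b}=0$. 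Your dimension count is arguably cleaner and isolates the reusable fact that isotropic subspaces have at most half the dimension; the paper's version gets ${\bf F}^\bot\subset{\bf F}$ with no bookkeeping. The step $3\Leftrightarrow 2$ via the dimension formula is the same in both. For $2\Rightarrow 1$ both arguments must manufacture an isotropic complement, and here the constructions differ: the paper adjoins vectors ${\bf a}_1,{\bf a}_2,\dots$ outside ${\bf F}$, each taken in the $\omega$-orthogonal of the previously adjoined ones, stopping when their span is complementary, whereas you extend a basis of ${\bf F}$ to a symplectic basis (or, alternatively, split ${\bf E}/{\bf F}\cong{\bf F}^*$ and correct the splitting inside ${\bf F}$). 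Both of your routes work, but two points deserve care: Lemma \ref{lemaaux} as stated only yields \emph{some} symplectic basis of ${\bf E}$, not one adapted to a prescribed ${\bf F}$, so the adapted extension is genuinely an extra (if standard) argument; and its crucial step is the one you compress into ``one adjusts successively'': having chosen ${\bf f}_j$ with $\omega({\bf e}_i,{\bf f}_j)=\delta_{ij}$ (possible because $v\mapsto\omega(\cdot,v)\vert_{\bf F}$ maps ${\bf E}$ onto ${\bf F}^*$ with kernel ${\bf F}^\bot={\bf F}$), one must still correct each ${\bf f}_j$ by elements of ${\bf F}$ --- harmless for the pairings with ${\bf F}$ since ${\bf F}$ is isotropic --- so that $\omega({\bf f}_i,{\bf f}_j)=0$. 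Spelling out that correction (or the equivalent skew-symmetrization of the splitting) would make your $2\Rightarrow 1$ as complete as the paper's inductive construction.
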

\begin{proof} 
(1. $\Longrightarrow$  2).
As ${\bf F}$ is  Lagrangian it is isotropic, hence ${\bf F}\subset{\bf F}^\bot$, and there exists another isotropic subspace ${\bf F}'$ with ${\bf F}\oplus{\bf F}'={\bf E}$.

Suppose now that ${\bf u}\in{\bf F}^\bot$ and put ${\bf u}={\bf a}+{\bf b}$ with ${\bf a}\in{\bf F}$, ${\bf b}\in{\bf F}'$. We prove that ${\bf b}=0$. In fact, ${\bf b}\in{\bf F}'^{\bot}$ since ${\bf F}'$ is isotropic. We also have that ${\bf b}={\bf u}-{\bf a}\in{\bf F}^\bot$ since  ${\bf u},{\bf a}\in{\bf F}$ and ${\bf F}$ is isotropic. Then 
$$
{\bf b}\in{\bf F}'^{\bot}\cap{\bf F}^\bot=({\bf F}'+{\bf F})^\bot={\bf E}^\bot=\{0\}\ ,
$$
because $\omega$ is non--degenerate. Then we have that ${\bf u}={\bf a}\in{\bf F}$ and 
${\bf F}^\bot\subset{\bf F}$.

(2. $\Longrightarrow$  3).
The hypothesis and the item 3 of the above proposition give the result.

(3. $\Longrightarrow$  2).
From 3 and the above proposition we obtain that $\dim{\bf F}=\dim{\bf F}^\bot$ and being ${\bf F}$ isotropic, we have that ${\bf F}={\bf F}^\bot$; that is,
the statement 2.

(2. $\Longrightarrow$  1).
We have ${\bf F}={\bf F}^\bot$, hence ${\bf F}$ is isotropic. We need to construct a subspace ${\bf F}'$.
Let ${\bf a}_{1}\notin{\bf F}$ and ${\bf F}_{1}=\mathrm{span}\{{\bf a}_{1}\}$. Then 
${\bf F}\cap{\bf F}_{1}=\{0\}$, hence 
${\bf F}^\bot+{\bf F}_{1}^\bot={\bf F}\oplus{\bf F}_{1}^\bot={\bf E}$, that is ${\bf F}\oplus{\bf F}_{1}^\bot={\bf E}$ by the above proposition.
Now let ${\bf a}_{2}\notin{\bf F}+{\bf F}_{1}$, ${\bf a}_{2}\in{\bf F}_{1}^\bot$. We have two alternatives:
\begin{enumerate}
\item This vector ${\bf a}_{2}$ does not exist. In this case ${\bf F}_{1}^\bot\subset{\bf F}+{\bf F}_{1}$, hence ${\bf F}+{\bf F}_{1}={\bf E}$ and
$$
{\bf F}_{1}^\bot=(\mathrm{span}\{{\bf a}_{1}\})^\bot\supset\mathrm{span}\{{\bf a}_{1}\}={\bf F}_{1}\ ,
$$
then ${\bf F}_{1}$ is isotropic and we can take ${\bf F}'={\bf F}_{1}$.
\item  
There exists such ${\bf a}_{2}$. Let 
 ${\bf F}_{2}={\bf F}_{1}+\mathrm{span}\{{\bf a}_{2}\}$, following the same procedure as above, we have that 
${\bf F}\cap{\bf F}_{2}=\{0\}$, hence 
${\bf F}^\bot+{\bf F}_{2}^\bot={\bf F}\oplus{\bf F}_{2}^\bot={\bf E}$, that is ${\bf F}\oplus{\bf F}_{2}^\bot={\bf E}$. But
$$
{\bf F}_{2}^\bot=(\mathrm{span}\{{\bf a}_{1},{\bf a}_{2}\})^\bot
\supset\mathrm{span}\{{\bf a}_{1},{\bf a}_{2}\}={\bf F}_{2} \ ,
$$
hence ${\bf F}_{2}$ is isotropic and we can take ${\bf F}'={\bf F}_{2}$.
\end{enumerate}
Inductively we can continue and, as $\dim{\bf E}$ is finite, we arrive to some ${\bf F}_{k}$ such that ${\bf F}'={\bf F}_{k}$ and we have finished the proof.
\\ \qed  \end{proof}

This last proposition says that a Lagrangian subspace is a \textsl{maximal isotropic subspace}.

\subsection{Symplectic and presymplectic manifolds}
\protect\label{vs}

Bearing in mind the results of the above section, first we define:

\begin{definition}
\label{vsm}
\begin{description}
\item[{\rm (a)}]
Let $M$ be a differentiable manifold.
A \textbf{symplectic form}
\footnote{
The word {\sl  symplectic} comes from the Greek word
``$\sigma\iota\mu\pi\lambda\epsilon\kappa\tau\iota\kappa o$''  which means ``what unites''.
It was introduced by H. Weyl, who substituted
the Latin root of the term ``complex'',
to refer to a structure of the group
${\bf Sp}(n,\Complex )$.
Although the structure of the symplectic manifolds
had been implicitly considered before, 
it is not until the decade of 1950 when the symplectic geometry appears
as a differentiated branch of Differential Geometry, being
A. Lichnerowicz the first in introducing the term
{\sl  symplectic manifold}.}
in $M$ is a differential $2$-form
$\Omega \in {\mit\Omega}^2(M)$ such that:
\begin{enumerate}
\item
It is closed: $\d\, \Omega = 0$, (we write $\Omega \in Z^2(M)$).
\item
It is non-degenerated at every point of $M$; that is, $\Omega_{{\rm p}}$ is a linear symplectic structure in $\Tan_{\rm p}M$ for every ${\rm p}\in M$.
\end{enumerate}
If the form $\Omega$ is closed but degenerated it is said to be a
\textbf{presymplectic form}, and if $\Omega$ is nondegenerated but not closed
it is an \textbf{almost-symplectic form}.
\item[{\rm (b)}]
A \textbf{symplectic} (resp. \textbf{presymplectic}) \textbf{manifold}
 is a pair $(M,\Omega )$ where $M$ is a differentiable manifold
and $\Omega$ is a symplectic (resp. presymplectic) form.

If the symplectic (resp. presymplectic)  form is exact;
that is, there exists
$\Theta \in {\mit\Omega}^1(M)$ such that $\d\Theta = \Omega$,
then $\Omega$ is an \textbf{exact symplectic} (resp. \textbf{exact presymplectic}) 
\textbf{form}, and $(M,\Omega )$ is an
\textbf{exact symplectic} (resp. \textbf{exact presymplectic}) \textbf{manifold}.
The form $\Theta$ is called a \textbf{symplectic} (resp. \textbf{presymplectic}) \textbf{potential}.
\end{description}
\end{definition}

\begin{remark}{\rm 
\begin{itemize}
\item
As a consequence of the second condition of the definition, a non-degenerated differential $2$-form
can only be defined in manifolds of even dimensi\'on,
thus we write $\dim M = 2n$.
\item
As, since {\sl Poincar\'e's Lemma}, every closed form is locally exact,
if $\Omega$ is a symplectic or a presymplectic form, for every point
${\rm p}\in M$, there is an open neighbourhood
$U \subset M$, ${\rm p}\in U$, and $\vartheta \in {\mit\Omega}^1(U)$ such that
$\Omega\mid_U=\d\vartheta$.
Every $1$-form $\vartheta$ satisfying this condition is called a
{\sl local symplectic ({\rm or} presymplectic) potential}.

Observe that if $\vartheta \in {\mit\Omega}^1(U)$
and $\vartheta ' \in {\mit\Omega}^1(U')$ are two different symplectic
(or presymplectic) potentials, then $\vartheta ' = \vartheta + \d f$
in $U\cap U'$, for some $f \in \Cinfty(U\cap U')$.
\item 
In the case of a presymplectic form, the dimension of $\ker \omega_{\rm p}^{\flat}$, for ${\rm p}\in M$, may depend on the chosen point ${\rm p}\in M$. Usually, we demand that this dimension does not depend on the point and say that the presymplectic form is {\sl regular}.
\end{itemize}
}\end{remark}

The following theorem describes the local structure of the symplectic manifolds \cite{Darboux}.

\begin{teor}
{\rm (Darboux)} \
Let $(M,\Omega )$ be a $2n$-dimensional symplectic manifold.
For every point ${\rm p}\in M$ there exists an open neighbourhood
$U \subset M$, ${\rm p}\in U$, 
which is the domain of a local chart $(U;x^i,y_i)_{i=1 \ldots n}$,
such that $\Omega$ has the expression
$$
\Omega \mid_U = \d x^i \wedge \d y_i \ .
$$
These local charts are called {\sl \textbf{symplectic charts}}
and their coordinates are the {\sl \textbf{canonical coordinates}} or
 {\sl \textbf{Darboux coordinates}} of the symplectic manifold in this chart.
\end{teor}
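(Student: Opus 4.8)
The plan is to proceed by induction on $n = \frac{1}{2}\dim M$, combining the Poincaré lemma (which gives a local primitive $\vartheta$ of $\Omega$) with the pointwise linear normal form from Lemma \ref{lemaaux} to set up initial coordinates, and then use the flows of suitably chosen vector fields to rectify the remaining directions. Concretely, fix $\mathrm{p}\in M$. First I would apply Lemma \ref{lemaaux} to the symplectic vector space $(\Tan_{\mathrm p}M,\Omega_{\mathrm p})$ to obtain a symplectic basis, and pick functions $x^1,y_1$ near $\mathrm p$ whose differentials at $\mathrm p$ are the first dual pair of that basis; in particular $\d x^1,\d y_1$ are independent at $\mathrm p$ and $\Omega_{\mathrm p}(\sharp_\Omega\,\d x^1_{\mathrm p},\sharp_\Omega\,\d y_{1,\mathrm p})\neq 0$ — after rescaling $x^1$ we may take this value to be $1$. (Here I use that $\Omega$ is nondegenerate, so $\flat_\Omega$ and $\sharp_\Omega$ are available, as recalled in the glossary.)

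Next I would set $X = \sharp_\Omega\,\d x^1$ and $Y = \sharp_\Omega\,\d y_1$, i.e. the Hamiltonian vector fields of $x^1$ and $y_1$. A short computation using $\d\Omega = 0$ shows $\Lie(X)\d x^1 = 0$, $\Lie(Y)\d x^1 = 0$ and, more to the point, the Poisson-bracket relation $\{x^1,y_1\} = \Omega(X,Y) = 1$ near $\mathrm p$, which translates into $[X,Y]=\sharp_\Omega\,\d\{x^1,y_1\}=0$. Since $X,Y$ commute and are independent at $\mathrm p$, by the simultaneous-straightening (Frobenius) theorem there is a chart near $\mathrm p$ in which $X = \partial/\partial y_1$ and $Y = -\partial/\partial x^1$; moreover $x^1,y_1$ can be taken as the first two coordinates of that chart, supplemented by further coordinates $(z^j)$. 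In these coordinates one checks that $\inn(X)\Omega = -\d y_1$ and $\inn(Y)\Omega = \d x^1$ by construction, so the $2$-form $\Omega' := \Omega - \d x^1\wedge\d y_1$ satisfies $\inn(X)\Omega' = \inn(Y)\Omega' = 0$; together with $\d\Omega' = 0$ and $\Lie(X)\Omega' = \Lie(Y)\Omega' = 0$, this forces $\Omega'$ to be (the pullback of) a closed $2$-form in the remaining variables $(z^j)$ alone, and nondegeneracy of $\Omega$ forces $\Omega'$ to be nondegenerate there. Thus $\Omega'$ is a symplectic form on a $(2n-2)$-dimensional slice, and the inductive hypothesis supplies coordinates $(x^2,\dots,x^n,y_2,\dots,y_n)$ in which $\Omega' = \sum_{i=2}^n \d x^i\wedge\d y_i$; assembling everything gives $\Omega = \sum_{i=1}^n \d x^i\wedge\d y_i$ on a neighbourhood of $\mathrm p$, which is the claim. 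The base case $n=1$ is the statement that any nonvanishing closed $2$-form on a surface is locally $\d x^1\wedge\d y_1$, which follows from the same straightening argument (or directly from $\Omega = f\,\d x\wedge\d y$ and absorbing $f$ into one coordinate).

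The main obstacle is the bookkeeping in the inductive step: one must verify carefully that $\Omega - \d x^1\wedge\d y_1$ genuinely descends to the transverse slice — i.e. that it has no components involving $\d x^1$ or $\d y_1$ and no dependence on $x^1,y_1$ — and that it remains closed and nondegenerate there, so that the induction hypothesis applies to a bona fide symplectic manifold of lower dimension. This is exactly where closedness $\d\Omega=0$ is used (via the Cartan formula $\Lie(X)\Omega = \d\,\inn(X)\Omega$), and it is the step where a careless argument can silently use the hypothesis in the wrong place. An alternative, which I would mention as a remark rather than carry out, is Moser's deformation method: interpolate $\Omega_t = (1-t)\Omega_0 + t\Omega$ between the constant-coefficient model $\Omega_0$ and $\Omega$ on a small ball, solve $\inn(Z_t)\Omega_t = -\beta$ for the time-dependent vector field $Z_t$ where $\Omega - \Omega_0 = \d\beta$ with $\beta$ vanishing to first order at $\mathrm p$, and pull back along its flow; this trades the inductive bookkeeping for an ODE argument but needs the same two ingredients (Poincaré lemma and the linear normal form).
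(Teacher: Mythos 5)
Your overall strategy (induction on $n$ via a pair of commuting Hamiltonian vector fields and a transverse symplectic slice) is a legitimate route, and it is genuinely different from the paper's proof, which is precisely the Moser deformation argument you relegate to a closing remark: the paper interpolates $\omega_t=\Omega+t(\Omega_0-\Omega)$ between $\Omega$ and the constant form $\Omega_0=\Omega({\rm p})$, writes $\Omega_0-\Omega=\d\alpha$ with $\alpha({\rm p})=0$, solves $\inn(X_t)\omega_t=-\alpha$, and pulls back along the flow, using Lemma \ref{lemaaux} only to normalize the constant form $\Omega_0$. However, as written your argument has a genuine gap at its first step. Prescribing the differentials of $x^1,y_1$ only at the single point ${\rm p}$ gives $\{x^1,y_1\}({\rm p})=1$, but it does not give $\{x^1,y_1\}\equiv 1$ on a neighbourhood: the Poisson bracket is a function, and nothing in your construction controls it away from ${\rm p}$. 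Consequently $[X,Y]=\pm X_{\{x^1,y_1\}}=\pm\,\sharp_\Omega\big(\d\{x^1,y_1\}\big)$ need not vanish (not even at ${\rm p}$, since $\d\{x^1,y_1\}({\rm p})$ is uncontrolled), the simultaneous straightening of $X$ and $Y$ is not available, and everything downstream --- the adapted chart, the identities $\inn(X)\Omega'=\inn(Y)\Omega'=0$ and $\Lie(X)\Omega'=\Lie(Y)\Omega'=0$, and the descent of $\Omega'$ to a $(2n-2)$-dimensional slice --- collapses. The same unsupported constancy is already hidden in your claim $\Lie(Y)\d x^1=\d\big(Y(x^1)\big)=0$.

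The repair is standard but must be made explicit: choose any function $y_1$ with $\d y_1({\rm p})\neq 0$, set $Y=\sharp_\Omega(\d y_1)$, which is nonvanishing near ${\rm p}$, and use the flow-box (straightening) theorem for the single vector field $Y$ to produce a function $x^1$ with $Y(x^1)\equiv 1$ identically on a neighbourhood (for instance a suitable coordinate of a flow-box chart). Then $\{x^1,y_1\}\equiv 1$ as a function, $[X,Y]=0$, and your inductive scheme goes through; note that Lemma \ref{lemaaux} is not actually needed on this route, whereas it is essential in the paper's Moser argument. A smaller slip to correct along the way: by the definition $X=\sharp_\Omega(\d x^1)$ one has $\inn(X)\Omega=\d x^1$ and $\inn(Y)\Omega=\d y_1$, not $\inn(X)\Omega=-\d y_1$; with consistent signs one still obtains $\inn(X)\Omega'=\inn(Y)\Omega'=0$ for $\Omega'=\Omega-\d x^1\wedge\d y_1$, so this is only bookkeeping, but it should be fixed together with the main gap.
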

\begin{proof} 
(This proof is taken from \cite{AM-78}. For other different proofs, see for example
\cite{book:Bryantetal, CP-adg, dLGRR-2023}).
The proof is organized in several parts:
\begin{enumerate}
\item As this is a local result, we can suppose that $M$ is $\Real^{2n}$ and ${\rm p}={\bf 0}$.
\item Let $\Omega$ be a symplectic form in $\Real^{2n}$ and $\Omega_{0}=\Omega({\rm p})$ the constant symplectic form in $\Real^{2n}$ equal to  $\Omega$ at ${\rm p}$. 

It is enough to prove that there exists a neighbourhood $U$ of ${\rm p}$, and a diffeomorphism $\phi:U\to U$ such that $\phi^{*}\Omega_{0}=\Omega$ on $U$. This is true since, by the above Lemma, we can choose a global coordinate system in $\Real^{2n}$ such that the symplectic form $\Omega_{0}$ take the expression we need.

\item Consider the 2--form $\omega_{t}\equiv\Omega+t(\Omega_{0}-\Omega)$ in $\Real^{2n}$ for $t\in[0,1]$. We have that
\begin{enumerate}
\item $\omega_{0}=\Omega$, $\omega_{1}=\Omega_{0}$, $\d\,\omega_{t}=0$, for every $t\in[0,1]$.
\item $\omega_{t}({\rm p})=\Omega({\rm p})+t(\Omega_{0}({\rm p})-\Omega({\rm p}))=\Omega({\rm p})=\Omega_{0}({\rm p})$ is non--degenerated.
\item As the interval $[0,1]$ is compact, there exists an open set $U_{0}\subset U$  with ${\rm p}\in U_{0}$ such that $\left.\omega_{t}\right|_{U_{0}}$ is non--degenerate for every $t\in[0,1]$.
\item Being $\Omega_{0}-\Omega$ closed,  we can suppose that $U_{0}$ is a ball with center in ${\rm p}$ such that there exists $\alpha\in\Omega^{1}(U_{0})$ with $\Omega_{0}-\Omega=\d\,\alpha$, by Poincar\'e Lemma.
\end{enumerate}
\item Let $X_{t}\in\vf(U_{0})$ be defined by $\inn (X_t)\omega_t=-\alpha$. This time-dependent vector field is well-defined because $\omega_t$ is non--degenerated. Observe that $X_t({\rm p})=0$ since  $\alpha({\rm p})=0$.
\item Let $F_{t,s}$ be the time-dependent flux of $X_{t}$, satisfying $F_{t,s}\circ F_{s,r}=F_{t,r}$ and $F_{t,t}=I$, and we can consider it is defined in $U_{0}$, or reduce it if necessary. Let $F_{t}=F_{t,0}$ the associated diffeomorphic flux with $F_{t+h}=F_{t+h,t}\circ F_{t,0}$. Then, we have:
\begin{eqnarray*}
\frac{\d}{\d\, t}F_{t}^{*}\omega_{t} &=& 
\lim_{h\to 0}\frac{F_{t+h}^{*}\omega_{t}  -F_{t}^{*}\omega_{t}}{h}+F_{t}^{*}(\Omega_{0}-\Omega)\\
&=&F_{t}^{*}( \Lie(X_{t})\omega_{t} )+F_{t}^{*}(\Omega_{0}-\Omega)
=F_{t}^{*}(\d\, \inn (X_t)\omega_t +\Omega_{0}-\Omega)
=F_{t}^{*}(0)=0\,,
\end{eqnarray*}
that is $F_{t}^{*}\omega_{t}$ is constant, hence $F_{1}^{*}\omega_{1}=F_{0}^{*}\omega_{0}$. Then,
$$
F_{1}^{*}\Omega_{0}=F_{1}^{*}\omega_{1}=F_{0}^{*}\omega_{0}=\Omega\, ,
$$
and $F_{1}$ is the diffeomorphism transforming the constant symplectic form $\Omega_{0}$ into 
our given symplectic form 
$\Omega$.
\end{enumerate}
\qed  \end{proof}

\begin{remark}{\rm 
For presymplectic manifolds there is a similar result (see, for instance, \cite{CP-adg,dLGRR-2023}).
}\end{remark}

Finally, as a straightforward consequence of the definition, we have:

\begin{prop}
\label{Liouville}
Every symplectic manifold is an {\sl oriented manifold}.
\end{prop}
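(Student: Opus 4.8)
The plan is to produce an explicit volume form on $M$, since a differentiable manifold is orientable precisely when it carries a nowhere-vanishing differential form of top degree. Writing $\dim M=2n$, the natural candidate is the $n$-th exterior power $\Omega^n:=\Omega\wedge\cdots\wedge\Omega$ ($n$ factors), which lies in $\df^{2n}(M)$. So everything reduces to checking that $(\Omega^n)_{\rm p}\neq 0$ for each ${\rm p}\in M$.

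First I would observe that exterior powers are compatible with pointwise evaluation, so $(\Omega^n)_{\rm p}=(\Omega_{\rm p})^n$ inside the one-dimensional space $\bigwedge^{2n}\Tan_{\rm p}^*M$, and, by the second condition in Definition~\ref{vsm}, the bilinear form $\Omega_{\rm p}$ is a linear symplectic structure on $\Tan_{\rm p}M$. Hence the statement follows from the corresponding linear-algebra fact, already noted in the Remark after Lemma~\ref{lemaaux}: the $n$-th power of a symplectic form on a $2n$-dimensional vector space is a volume element.

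To prove that linear fact I would use Lemma~\ref{lemaaux}: choose a symplectic basis of $\Tan_{\rm p}M$ with dual basis $(\bm{\alpha}^k)_{k=1,\ldots,2n}$, so that $\Omega_{\rm p}=\sum_{i=1}^n\bm{\alpha}^i\wedge\bm{\alpha}^{i+n}$. Expanding the power and discarding every summand containing a repeated factor $\bm{\alpha}^j$ (which vanishes), only the fully mixed term survives, giving
$$
(\Omega_{\rm p})^n=n!\,\bm{\alpha}^1\wedge\bm{\alpha}^{n+1}\wedge\cdots\wedge\bm{\alpha}^n\wedge\bm{\alpha}^{2n}\neq 0\ ,
$$
a nonzero element of $\bigwedge^{2n}\Tan_{\rm p}^*M$. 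Equivalently, one can invoke Darboux's theorem and simply compute, in a symplectic chart $(U;x^i,y_i)$, that $\Omega^n\mid_U=n!\,\d x^1\wedge\d y_1\wedge\cdots\wedge\d x^n\wedge\d y_n$, which is manifestly nowhere zero on $U$.

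There is essentially no obstacle here: the only mild bookkeeping is the combinatorial coefficient and sign in expanding $\big(\sum_i\bm{\alpha}^i\wedge\bm{\alpha}^{i+n}\big)^n$, and since we only need nonvanishing even that is inessential — the genuine input is just that $\Omega$ is nondegenerate at every point, which is the linear-algebra statement from the first subsection. One should only take minimal care that $\Omega^n$ is globally defined (it is, being a wedge power of a global form), so that it indeed orients all of $M$ at once.
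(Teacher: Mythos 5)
Your proof is correct and follows the same route as the paper: it takes the Liouville form $\Omega^n=\bigwedge^n\Omega$ as the candidate volume form, the paper simply asserting its nonvanishing while you verify it pointwise via a symplectic basis (Lemma~\ref{lemaaux}) or Darboux coordinates. No issues.
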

\begin{proof}
In fact, using the symplectic form we can define the volume form
\(\dst\Omega^n:=\bigwedge^n\Omega \in {\mit\Omega}^{2n}(M)\),
which is called the {\sl \textbf{Liouville volume form}} on $M$.
\\ \qed  \end{proof}

\bigskip
\noindent{\bf Some examples of symplectic manifolds}:
\begin{enumerate}
\item
The cotangent bundle $\Tan^*Q$ of a manifold $Q$ is an example of symplectic manifold which, moreover, is the canonical model of these kinds of manifolds 
(see Theorem \ref{ficotvs} in the Appendix \ref{sec:cotbun} for all the details). Here we do a short survey:
We have a canonical 1-form $\Theta \in\Omega^1(\Tan^*Q)$ which, in natural coordinates $(q^i,p_i)$ of $\Tan^*Q$, is 
$\Theta=p_i\d q^i$. Then $\Omega=-\d\Theta=\d q^i\wedge\d p_i$ is a symplectic form. 
Observe that the natural coordinates of $\Tan^*Q$ are Darboux coordinates for $\Omega$. 
\item 
$\Real^{2n}$ with the usual Cartesian coordinate system, $(x^{1},\ldots,x^{n},y_{1},\ldots,y_{n})$, and the 2-form $\Omega=\d x^{i}\wedge\d y_{i}$ is a symplectic manifold.
\item 
The 2-sphere $S^{2}\subset \Real^{3}$ as a Riemannian submanifold with the induced  area 2-form is a symplectic manifold.
\item 
Another relevant example which will be used later is given by the following:

\begin{prop}
\label{productsymp}
Let $(M_1,\Omega_1)$, $(M_2,\Omega_2)$ be symplectic manifolds
with $\dim M_1 = \dim M_2$,
and $\pi_j\colon M_1\times M_2\to M_j$, $j=1,2$ be the natural projections.
Then the product manifold $(M_1\times M_2,\Omega_{12}=\pi_1^*\Omega_1-\pi_2^*\Omega_2)$
is a symplectic manifold.
\end{prop}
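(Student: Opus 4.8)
The plan is to verify directly the two defining conditions of a symplectic form for $\Omega_{12}=\pi_1^*\Omega_1-\pi_2^*\Omega_2$ on the product manifold $M_1\times M_2$, namely that it is closed and nondegenerate at every point. Closedness is immediate: the exterior differential commutes with pullbacks, so $\d\Omega_{12}=\pi_1^*(\d\Omega_1)-\pi_2^*(\d\Omega_2)=0$ since both $\Omega_1$ and $\Omega_2$ are closed. The main content is therefore the nondegeneracy, which I will check pointwise using the canonical splitting of the tangent space of a product.

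Fix a point $({\rm p}_1,{\rm p}_2)\in M_1\times M_2$ and use the natural identification $\Tan_{({\rm p}_1,{\rm p}_2)}(M_1\times M_2)\cong\Tan_{{\rm p}_1}M_1\oplus\Tan_{{\rm p}_2}M_2$, under which $\Tan\pi_1$ and $\Tan\pi_2$ become the two projections. A tangent vector is then a pair $(u_1,u_2)$, and from the definition of pullback one computes
$$
(\Omega_{12})_{({\rm p}_1,{\rm p}_2)}\big((u_1,u_2),(w_1,w_2)\big)=(\Omega_1)_{{\rm p}_1}(u_1,w_1)-(\Omega_2)_{{\rm p}_2}(u_2,w_2)\ .
$$
Now suppose $(u_1,u_2)$ lies in the kernel, so the right-hand side vanishes for all $(w_1,w_2)$. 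Taking $w_2=0$ and letting $w_1$ range over $\Tan_{{\rm p}_1}M_1$ forces $(\Omega_1)_{{\rm p}_1}(u_1,w_1)=0$ for all $w_1$, hence $u_1=0$ by nondegeneracy of $\Omega_1$; symmetrically, taking $w_1=0$ gives $u_2=0$ by nondegeneracy of $\Omega_2$. Thus the kernel is trivial and $\Omega_{12}$ is nondegenerate at $({\rm p}_1,{\rm p}_2)$; since the point was arbitrary, $\Omega_{12}$ is a symplectic form.

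There is no real obstacle here — the only point requiring a little care is making the tangent-space splitting and the behaviour of the pullbacks $\pi_j^*$ under it fully explicit, so that the evaluation formula for $\Omega_{12}$ on pairs of vectors is genuinely justified rather than merely asserted. It is worth remarking that the hypothesis $\dim M_1=\dim M_2$ is not actually needed for this proposition (the argument above uses only that each $\Omega_j$ is a symplectic form on its factor); one could also note, via Proposition \ref{Liouville} applied to each factor or directly, that $\dim(M_1\times M_2)$ is automatically even. The sign in $\Omega_{12}=\pi_1^*\Omega_1-\pi_2^*\Omega_2$ plays no role in this statement, but it is the natural choice for later purposes, since with it the graphs of symplectomorphisms $M_1\to M_2$ are precisely Lagrangian submanifolds of $(M_1\times M_2,\Omega_{12})$.
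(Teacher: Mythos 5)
Your proof is correct. The paper actually states this proposition without proof (it appears as one of the listed examples of symplectic manifolds), so there is no argument to compare against; what you give is the standard one, and it is complete: closedness follows from $\d$ commuting with pullbacks, and nondegeneracy from the splitting $\Tan_{({\rm p}_1,{\rm p}_2)}(M_1\times M_2)\cong\Tan_{{\rm p}_1}M_1\oplus\Tan_{{\rm p}_2}M_2$ together with the block-diagonal evaluation formula, testing against vectors of the form $(w_1,0)$ and $(0,w_2)$. Your side remarks are also accurate: the hypothesis $\dim M_1=\dim M_2$ is not needed for $\Omega_{12}$ to be symplectic (it only matters later, when $graph\,\Phi$ of a diffeomorphism $\Phi\colon M_1\to M_2$ is characterized as a Lagrangian submanifold of $(M_1\times M_2,\Omega_{12})$), and the minus sign is likewise chosen with that later application in mind rather than being needed here.
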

\end{enumerate}

\subsection{Submanifolds of a symplectic manifold}

As in the case of linear symplectic structures, in a symplectic manifold there exist some interesting kinds of submanifolds. They are associated to the idea of orthogonality in the tangent space at every point with respect to the induced linear symplectic structure at this point. Lagrangian submanifolds play an important role in the study of dynamics of Hamiltonian systems.

\begin{definition}
Let $(M,\Omega)$ be a symplectic manifold and $j:L\to M$ an immersion.
\begin{enumerate}
\item 
$L$ is an \textbf{isotropic immersed submanifold} of $(M,\Omega)$ if
$\Tan_{\rm p}j(\Tan_{\rm p}L)\subset\Tan_{j(\rm p)}M$ is an isotropic subspace of 
$(\Tan_{j(\rm p)}M,\Omega_{j(\rm p)})$ as a linear symplectic space, for every ${\rm p}\in L$;
\item
$L$ is a \textbf{coisotropic immersed  submanifold} of $(M,\Omega)$ if
$\Tan_{\rm p}j(\Tan_{\rm p}L)\subset\Tan_{j(\rm p)}M$ is a coisotropic subspace of 
$(\Tan_{j(\rm p)}M,\Omega_{j(\rm p)})$ as a linear symplectic space, for every ${\rm p}\in L$;
\item
$L$ is a \textbf{symplectic immersed submanifold} of $(M,\Omega)$ if
$\Tan_{\rm p}j(\Tan_{\rm p}L)\subset\Tan_{j(\rm p)}M$ is a symplectic subspace of 
$(\Tan_{j(\rm p)}M,\Omega_{j(\rm p)})$ as a linear symplectic space, for every ${\rm p}\in L$;
\end{enumerate}
This same terminology is used for subbundles of $\Tan M$ over submanifolds of $M$.
\end{definition}

\begin{definition} 
Let $(M,\Omega)$ be a symplectic manifold and $L\subset M$ a submanifold. We say that $L$ is a \textbf{Lagrangian submanifold}  if it is isotropic and $\dim L=(1/2)\dim M$.\end{definition}

\begin{remark}{\rm 
\bit
\item 
A submanifold
$j\colon L\to M$ is isotropic if, and only if, $j^{*}\Omega=0$.
\item 
From the above study about linear Lagrangian subspaces, we have that, if $L\subset M$ is Lagrangian, then $\dim L=(1/2)\dim M$ and $(\Tan_{\rm p}L)^{\bot}=\Tan_{\rm p}L$.
\eit
}\end{remark}

\bigskip
\noindent {\bf Examples of Lagrangian submanifolds}:
\begin{enumerate}
\item We know that $\Real^{2n}$ with elements denoted by $(x,y)\in\Real^{2n}$ and coordinates $(x,y)=(x^{i} ,y_{j})$ has a natural symplectic form given by $\omega= \d x^{i}\wedge\d y_{i}$. Examples of Lagrangian submanifolds are the following:
$$
L_{1}=\{(x,y)|x=0\} \quad , \quad
L_{2}=\{(x,y)|y=0\} \quad , \quad
L_{3}=\{(x,y)|x=y\} \ .
$$
\item Taking the cotangent bundle of a manifold $Q$, we have a symplectic manifold, $(\Tan^*Q,\omega)$. Then Lagrangian submanifolds are the fibers of the bundle, that is $q=\mathrm{constant}$, or the section zero, that is the manifold $Q$ as a submanifold of $\Tan^*Q$.
\item In the symplectic manifold $(\Tan^*Q,\omega)$, let $\alpha:Q\to\Tan^*Q$ be a differential form and $N=\{(q,p)\in\Tan^*Q\ | \ p=\alpha(q)\}$ the graph of $\alpha$. Then $N$ is a Lagrangian submanifold of $\Tan^*Q$ if, and only if, $\alpha$ is a closed form. In fact, we have
$$
\d\alpha=\d(\alpha^{*}\theta)=\alpha^{*}\d\theta=-\alpha^{*}\d\omega\, ,
$$
by the properties defining the canonical forms in $\Tan^*Q$. You can see the corresponding section in the coming chapters for a detailed account of all this.
\item As a particular case, given a function $f:Q\to\Real$, the graph of $\d f$ is a Lagrangian submanifold of $\Tan^*Q$.
\end{enumerate}

\subsection{Canonical isomorphism. Hamiltonian vector fields}
\protect\label{icch}

The fact that a symplectic form is non-degenerated
has important consequences. One of the main ones is the following:
every differential form $\Omega\in\df^k(M)$
defines a linear map
$$
\begin{array}{ccccc}
\flat_\Omega & \colon & \Tan M & \longrightarrow &\wedge^{k-1} \Tan^*M
\\
 & & ({\rm p},X_{\rm p}) & \mapsto & ({\rm p},\inn(X_{\rm p})\Omega_{\rm p})
\end{array} \ ,
$$
and its natural extension (which we denote with the same notation)
$$
\begin{array}{ccccc}
\flat_\Omega & \colon & \vf (M) & \longrightarrow & {\mit\Omega}^{k-1}(M)
\\
 & & X & \mapsto & \inn(X)\Omega
\end{array} \ .
$$
The inverse of this isomorphism is denoted $\sharp_\Omega:=\flat_\Omega^{-1}$.

Given a differentiable manifold $M$
and a form $\Omega\in\df^2(M)$;
it is obvious that $\Omega$ is non-degenerated (symplectic) if, and only if,
$\flat_\Omega$ is an isomorphism between $\Tan M$ and $\Tan^{*} M$. Then:

\begin{definition}
If $(M,\Omega)$ is a symplectic manifold, the map
$\flat_\Omega$ is called the \textbf{canonical isomorphism} induced by
$\Omega$.
\end{definition}

Given a symplectic manifold $(M,\Omega )$,
every function $f \in \Cinfty (M)$ has associated a unique
vector field $X_f \in \vf (M)$ by means of the map
$$
\sharp_\Omega\circ\d\colon \Cinfty (M)\mapping{\d}
\df^1(M)\mapping{\sharp_\Omega}\vf (M) \ ;
$$
that is, defined as $X_f:=\sharp_\Omega(\d f)$
or, what is equivalent, given implicitly by
\beq
\inn(X_f)\Omega:=\d f \ .
\label{ch}
\eeq

\begin{remark}{\rm 
Observe that the map $\sharp_\Omega\circ\d$ is not surjective.
This means that, although the canonical isomorphism allows us to
associate to every vector field $X$ a differential $1$-form 
$\inn(X)\Omega$,
it is not always possible to associate a function since,
in order to do this, the form would be exact necessarily;
but that form is not even closed, in general.
Neither is the map $\sharp_\Omega\circ\d$ injective,
since functions differing in an additive constant have
the same vector field associated by this map.
}\end{remark}

Bearing in mind this comment, we define:

\begin{definition}
Let $(M,\Omega )$ be a symplectic manifold. A vector field $X \in \vf (M)$ is a
\textbf{ (global) Hamiltonian vector field} if
$\inn(X)\Omega$ is an exact form.
In this case, the function $f \in \Cinfty (M)$
such that $\inn(X)\Omega = \d f$
is called a {\sl \textbf{(global) Hamiltonian function}}
of the vector field $X$.

The set of global Hamiltonian vector fields in $M$ is denoted $\vf_H (M)$.
\end{definition}

Observe that, taking into account the comment before Equation
(\ref{ch}), every function $f\in\Cinfty (M)$
is a Hamiltonian function of a global Hamiltonian vector field $X_f$.

Nevertheless, the requirement in this definition is too restrictive
and,  for the physical interest, it is sufficient to demand that:

\begin{definition}
\label{locHvf}
Let $(M,\Omega )$ be a symplectic manifold. A vector field
$X \in \vf (M)$ is a \textbf{local Hamiltonian vector field} if
$\inn(X)\Omega$ is a closed form.

In this case, for every point ${\rm p} \in M$, Poincar\'e's Lemma assures 
the existence of a neighbourhood $U \subset M$, ${\rm p}\in U$,  and a function
$f \in \Cinfty (U)$ such that $\inn(X)\Omega = \d f$, in $U$.
This function is called a \textbf{local Hamiltonian function}
of the vector field $X$, in $U$.

The set of local Hamiltonian vector fields in $M$ is denoted $\vf_{lh} (M)$.
\end{definition}

\begin{remark}{\rm 
\bit
\item
Obviously $\vf_H (M) \subset \vf_{lh} (M)$.
Thus, all we state for local Hamiltonian vector fields
holds also for global Hamiltonian vector fields.
\item
The above definitions are also valid for presymplectic manifolds. 
The difference is that, in this case, the map $\flat_\Omega$
is not an isomorphism because it is not injective, thus not exhaustive, and, hence,
not every function in the manifold is associated to a Hamiltonian vector field.
\eit
}\end{remark}

Remember that a curve $c\colon I\subseteq\Real\to M$ is an {\sl integral curve}
of a vector field $X\in\vf(M)$ if $\dot c(t)=(X\circ c)(t)$, for $t\in I$;
where $\dot c(t)$ denotes de {\sl derivative} of $c$ at $t$
(i.e, the {\sl tangent vector} of the curve at $c(t)$).
Let $\widetilde c\colon I\subseteq\Real\to\Tan M$ the canonical lift of $c$
to the tangent bundle $\Tan M$; that is, $\widetilde c(t)=(c(t),\dot c(t))$,
for $t\in I$ (see Definition \ref{canlifcurv}).
Then, bearing in mind these definitions and equation  \eqref{Hvecf},
it is immediate to prove that:

\begin{teor}
\label{teo:Hcurv}
A vector field  $X\in\vf(M)$ in a symplectic manifold $(M,\Omega)$, is the (local) Hamiltonian vector field corresponding to the function $f \in \Cinfty (M)$, that is $X=X_{f}$, if, and only if,
the integral curves $c\colon I\subset\Real\to M$ of $X$ are the solutions to the equation
 \begin{equation}
\label{Hcurv}
\inn( \widetilde c)(\Omega\circ c)=\d f\circ c \ ;
\eeq
Equation \eqref{Hcurv} is the {\sl \textbf{Hamilton equation}} for the
integral curves of $X$.
\end{teor}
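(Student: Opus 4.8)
The plan is to unwind both sides of the claimed equation \eqref{Hcurv} pointwise along a curve $c$ and compare them with the defining relation \eqref{ch} of the Hamiltonian vector field. First I would recall that, by definition of the canonical lift, $\widetilde c(t)=(c(t),\dot c(t))$, so that at each $t\in I$ the expression $\inn(\widetilde c)(\Omega\circ c)$ means the contraction $\inn(\dot c(t))\Omega_{c(t)}\in\Tan^*_{c(t)}M$, and similarly $\d f\circ c$ denotes $(\d f)_{c(t)}$. Thus equation \eqref{Hcurv} is the pointwise identity $\inn(\dot c(t))\Omega_{c(t)}=(\d f)_{c(t)}$ in $\Tan^*_{c(t)}M$, for every $t\in I$.

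Next I would prove the two implications. For the forward direction, suppose $X=X_f$, i.e. $\inn(X)\Omega=\d f$ on $M$ by \eqref{ch}. If $c$ is an integral curve of $X$, then $\dot c(t)=X_{c(t)}$ for all $t$, so substituting into the defining relation evaluated at $c(t)$ gives $\inn(\dot c(t))\Omega_{c(t)}=\inn(X_{c(t)})\Omega_{c(t)}=(\d f)_{c(t)}$, which is exactly \eqref{Hcurv}. For the converse, suppose every integral curve $c$ of $X$ satisfies \eqref{Hcurv}. Fix an arbitrary point ${\rm p}\in M$ and take the integral curve $c$ of $X$ with $c(0)={\rm p}$, which exists by the standard existence theorem for ODEs. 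Then $\dot c(0)=X_{\rm p}$, and \eqref{Hcurv} at $t=0$ reads $\inn(X_{\rm p})\Omega_{\rm p}=(\d f)_{\rm p}$. Since ${\rm p}$ was arbitrary, $\inn(X)\Omega=\d f$ on all of $M$, hence $X=\sharp_\Omega(\d f)=X_f$ by the nondegeneracy of $\Omega$ and the definition of $X_f$.

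The step needing the most care—though it is hardly a deep obstacle—is the converse direction: one must observe that through every point there passes an integral curve of $X$, so that the pointwise hypothesis \eqref{Hcurv} applied at the initial instant of each such curve actually pins down $\inn(X)\Omega$ at every point of $M$, not merely on the image of one curve. One should also note the harmless indeterminacy already flagged in the remarks, namely that $f$ is determined only up to an additive constant; this does not affect $\d f$ and hence does not affect the statement. With these observations the equivalence follows immediately, and no genuine computation beyond reading off the definitions of the canonical lift and the contraction is required.
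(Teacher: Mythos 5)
Your proof is correct and follows essentially the same route as the paper, which simply declares the statement an immediate consequence of the defining relation \eqref{ch}, the notion of integral curve, and the meaning of the contraction $\inn(\widetilde c)(\Omega\circ c)$ explained in Remark \ref{intcurve}. Your explicit treatment of the converse, using existence of an integral curve through each point to recover $\inn(X)\Omega=\d f$ everywhere, is exactly the detail the paper leaves implicit.
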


Remember that equation \eqref{Hcurv} is a straightforward consequence of equation \eqref{ch}, 
of the definitions of integral curve of a vector field, and of the contraction 
$\inn(\widetilde c)(\Omega\circ c)$
(see Remark \ref{intcurve}).

\noindent {\bf Local expressions}:
If $(U;x^i,y_i)$ is a symplectic chart, we have that
\begin{eqnarray*}
X_{f}\mid_U &=& A^i\derpar{}{x^i}+B_i\derpar{}{y_i} \ ,
\\
\d f\mid_U &=& \derpar{f}{x^i}\,\d x^i + \derpar{f}{y_i}\,\d y_i \ ,
\end{eqnarray*}
then, for $X_{f}$, the solution to equation (\ref{ch}), we have that
$$
0 = (\inn(X_{f})\Omega - \d f)\mid_U
=\left( -B_i - \derpar{f}{x^i}\right)\d x^i +
\left( A^i - \derpar{f}{y_i}\right)\d y_i \ ;
$$
that is,
\beq
A^i = \derpar{f}{y_i} \quad , \quad B_i= -\derpar{f}{x^i} \ ,
\label{heqs1}
\eeq
and then
\beq
\label{eq:locHvf}
X_{f}\mid_U = \derpar{f}{y_i}\derpar{}{x^i}-\derpar{f}{x^i}\derpar{}{y_i} \ ,
\eeq
 therefore, the integral curves $c(t)=(x^i(t),y_i(t))$ of $X_{f}$ are the solutions to the system of 
first-order differential equations
\beq
\frac{d y_i}{d t} = -\derpar{f}{x^i}(c(t))
\quad , \quad
\frac{d x^i}{d t} = \derpar{f}{y_i}(c(t)) \ .
\label{heqs2}
\eeq
Equations \eqref{heqs1} and  \eqref{heqs2} are the local expression of equations 
 \eqref{ch} and \eqref{Hcurv} respectively, and are called the {\sl Hamilton equations} 
of the (local) Hamiltonian vector field $X_{f}$ and its integral curves, respectively.

An important result which is later used is:

\begin{lem}
Let $(M,\Omega)$ be a symplectic manifold.
For every point ${\rm p}\in M$, there exist vector fields $X_j\in\vf_{lh}(M)$, $j=1,\ldots ,2n$,
such that $\{ X_j({\rm p})\}$ is a basis of $\Tan_{\rm p}M$.
\label{auxil}
\end{lem}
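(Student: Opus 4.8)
The plan is to use the Darboux theorem together with the explicit local formula \eqref{eq:locHvf} for Hamiltonian vector fields of coordinate functions. First I would fix a point ${\rm p}\in M$ and invoke the Darboux theorem to obtain a symplectic chart $(U;x^i,y_i)_{i=1,\ldots,n}$ around ${\rm p}$, so that $\Omega\mid_U=\d x^i\wedge\d y_i$. The coordinate functions $x^1,\ldots,x^n,y_1,\ldots,y_n$ are smooth on $U$, but they are only locally defined; to produce vector fields on all of $M$ I would multiply them by a bump function. Concretely, choose a smooth function $\varphi\in\Cinfty(M)$ with support contained in $U$ and with $\varphi\equiv 1$ on a smaller neighbourhood $V$ of ${\rm p}$. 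Then the functions $f^i:=\varphi\,x^i$ and $g_i:=\varphi\,y_i$ (extended by zero outside $U$) lie in $\Cinfty(M)$.

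Next I would take the associated global Hamiltonian vector fields $X_{f^i},X_{g_i}\in\vf_H(M)\subset\vf_{lh}(M)$, defined by $\inn(X_{f^i})\Omega=\d f^i$ and $\inn(X_{g_i})\Omega=\d g_i$. On the neighbourhood $V$ where $\varphi\equiv 1$ we have $f^i=x^i$ and $g_i=y_i$, hence $\d f^i=\d x^i$ and $\d g_i=\d y_i$ there, so by \eqref{eq:locHvf},
\[
X_{f^i}\mid_V=\derpar{}{y_i}\ ,\qquad X_{g_i}\mid_V=-\derpar{}{x^i}\ .
\]
Evaluating at ${\rm p}$, the $2n$ vectors $X_{f^1}({\rm p}),\ldots,X_{f^n}({\rm p}),X_{g_1}({\rm p}),\ldots,X_{g_n}({\rm p})$ are, up to signs, exactly the coordinate basis vectors $\partial/\partial y_i\mid_{\rm p}$ and $\partial/\partial x^i\mid_{\rm p}$ of $\Tan_{\rm p}M$, hence a basis. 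Relabelling these $2n$ fields as $X_1,\ldots,X_{2n}$ gives the claim.

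The only subtlety — and the step one must be a little careful about — is the globalization via the bump function: one needs that $\varphi x^i$ genuinely extends to a smooth function on $M$ (true because it vanishes on a neighbourhood of the boundary of $\operatorname{supp}\varphi$, using that $M$ is a smooth manifold admitting such bump functions), and that modifying the functions outside $V$ does not affect the value of the Hamiltonian vector fields at ${\rm p}$, which holds because $\d f^i$ and $\d x^i$ agree on the open set $V\ni{\rm p}$ and $\sharp_\Omega$ is a pointwise (tensorial) operation. No real obstacle arises; everything reduces to Darboux plus the pointwise nature of $\flat_\Omega$.
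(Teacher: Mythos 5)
Your proof is correct and follows essentially the same route as the paper: a Darboux chart around ${\rm p}$ plus the observation that the coordinate vector fields are Hamiltonian vector fields of the coordinate functions (the paper simply notes $\derpar{}{x^i}=X_{y_i}$ and $\derpar{}{y_i}=X_{-x^i}$ and stops there), while you additionally make explicit the bump-function globalization that the paper leaves implicit. Note only a harmless sign slip: by \eqref{eq:locHvf} one has $X_{x^i}\mid_V=-\derpar{}{y_i}$ and $X_{y_i}\mid_V=+\derpar{}{x^i}$, the opposite of what you wrote, but as you yourself remark the vectors form a basis ``up to signs,'' so the conclusion is unaffected.
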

\begin{proof}
It is immediate using symplectic charts,
since the local coordinate vector fields 
$\displaystyle\derpar{}{x^i},\derpar{}{y_i}$
are locally Hamiltonian vector fields associated to the Hamiltonian functions
$y_i$ and $-x^i$, respectively.
\\ \qed  \end{proof}

This means that local Hamiltonian vector field expand locally the tangent bundle of $M$.

\subsection{Invariant forms}

The properties of Hamiltonian vector field are closely related with the properties
of the symplectic form. 
Originally, this relationship was established by studying the so-called {\sl integral invariants} of mechanics \cite{Ca-22,DG-80,Ga-70}.
Next, we explore this relation.
First, we introduce the following concept:

\begin{definition}
Let $M$ be a differentiable manifold and $X \in \vf (M)$.
A form $\beta \in \df^p(M)$ is an \textbf{absolute invariant form}
for $X$ if $\Lie (X)\beta = 0$.
\end{definition}

\begin{remark}{\rm  
Reminding the interpretation of the Lie derivative,
to be an absolute invariant form for $X$ means that
$\beta$ is invariant along the integral curves of $X$,
and, if $F_t$ denotes the flux of the vector field $X$,
this is equivalent to demand that $F_t^*\beta = \beta$ for every $t$.
}\end{remark}

Now we can state the following result, which is usually taken 
as an alternative definition of Hamiltonian vector field:

\begin{teor}
Let $(M,\Omega )$ be a symplectic (resp. presymplectic) manifold.
The vector field $X \in \vf (M)$ is a
local Hamiltonian vector field if, and only if,
$\Omega$ is an absolute invariant form for $X$.
\label{teoin}
\end{teor}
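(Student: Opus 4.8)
The plan is to use Cartan's magic formula $\Lie(X)\Omega = \inn(X)\d\Omega + \d\,\inn(X)\Omega$ together with the fact that $\Omega$ is closed. Since $\d\Omega = 0$ (this holds both in the symplectic and presymplectic cases, as closedness is assumed in Definition \ref{vsm}), the first term vanishes and we are left with the identity
$$
\Lie(X)\Omega = \d\,\inn(X)\Omega \ .
$$
From here the equivalence is essentially immediate, and the argument is the same for symplectic and presymplectic manifolds.

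For the forward implication, suppose $X \in \vf_{lh}(M)$ is a local Hamiltonian vector field. By Definition \ref{locHvf}, this means $\inn(X)\Omega$ is a closed $1$-form, so $\d\,\inn(X)\Omega = 0$, and the displayed identity gives $\Lie(X)\Omega = 0$; that is, $\Omega$ is an absolute invariant form for $X$. Conversely, if $\Omega$ is an absolute invariant form for $X$, then $\Lie(X)\Omega = 0$, so by the same identity $\d\,\inn(X)\Omega = 0$, which says precisely that the $1$-form $\inn(X)\Omega$ is closed; hence $X$ is a local Hamiltonian vector field by Definition \ref{locHvf}. This completes the proof.

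There is no real obstacle here — the only thing to be careful about is invoking Cartan's formula and the closedness of $\Omega$, neither of which requires non-degeneracy, so the proof covers the presymplectic case verbatim. One may optionally remark that, in the symplectic case, the local Hamiltonian function $f \in \Cinfty(U)$ guaranteed by Poincaré's Lemma near any point is the one appearing in $\inn(X)\Omega = \d f$ on $U$, consistent with the discussion following Definition \ref{locHvf}; but this is not needed for the statement as given.
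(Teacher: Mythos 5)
Your proof is correct and follows essentially the same route as the paper: Cartan's formula plus $\d\Omega=0$ gives $\Lie(X)\Omega=\d\,\inn(X)\Omega$, from which the equivalence with closedness of $\inn(X)\Omega$ (Definition \ref{locHvf}) is immediate in both the symplectic and presymplectic cases. Nothing to add.
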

\begin{proof}
As $\Omega$ is a closed form, we have that
$$
\Lie (X)\Omega = \inn (X)\d \Omega + \d\inn (X)\Omega =
\d\inn (X)\Omega = 0
\ \Longleftrightarrow \  \inn (X)\Omega \in Z^1(M)
\ \Longleftrightarrow \ X \in \vf_{lh}(M) \ .
$$
\qed  \end{proof}

\begin{remark}{\rm 
This result relates the Hamiltonian vector fields with the fact that $\Omega$
is closed, although it is less precise than Definition \ref{locHvf},
since it does not allow us to distinguish the global Hamiltonian vector fields inside the set of the local Hamiltonian ones.
}\end{remark}

From this theorem we deduce:

\begin{teor} {\rm (Liouville)}:
Let $(M,\Omega )$ be a symplectic manifold
and $\Omega^n$ the Liouville volume form in $M$.
Then $\Lie (X)\Omega^n =0$, for every $X \in \vf_{lh}(M)$.
\end{teor}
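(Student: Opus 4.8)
The plan is to deduce this directly from Theorem \ref{teoin} together with the Leibniz rule for the Lie derivative acting on exterior products. Since $X\in\vf_{lh}(M)$, Theorem \ref{teoin} tells us that $\Lie(X)\Omega=0$. The Liouville volume form is $\Omega^n=\Omega\wedge\cdots\wedge\Omega$ ($n$ factors), so I would compute $\Lie(X)\Omega^n$ by the derivation property of the Lie derivative with respect to the wedge product: $\Lie(X)(\alpha\wedge\beta)=\Lie(X)\alpha\wedge\beta+\alpha\wedge\Lie(X)\beta$. Applying this $n-1$ times gives
$$
\Lie(X)\Omega^n=\sum_{k=1}^{n}\Omega\wedge\cdots\wedge\underbrace{\Lie(X)\Omega}_{k\text{-th factor}}\wedge\cdots\wedge\Omega \ ,
$$
and every term on the right vanishes because $\Lie(X)\Omega=0$. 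Hence $\Lie(X)\Omega^n=0$.

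The only genuine point to justify — and the closest thing to an obstacle, though it is routine — is that the Lie derivative is indeed a derivation of degree zero on the exterior algebra $\df^\bullet(M)$; this follows from Cartan's formula $\Lie(X)=\inn(X)\d+\d\inn(X)$ together with the fact that both $\d$ and $\inn(X)$ are (anti)derivations, or alternatively from the characterization of $\Lie(X)$ via the flow, $\Lie(X)\beta=\frac{\d}{\d t}\big|_{t=0}F_t^*\beta$, combined with $F_t^*(\alpha\wedge\beta)=F_t^*\alpha\wedge F_t^*\beta$. Either way this is standard and can be invoked without proof.

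Finally I would add the interpretive remark that, since $\Omega^n$ is a volume form (Proposition \ref{Liouville}), this says the flow of any local Hamiltonian vector field preserves the Liouville volume; that is, $F_t^*\Omega^n=\Omega^n$ for all $t$, which is the classical statement of Liouville's theorem on the conservation of phase-space volume.
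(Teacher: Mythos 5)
Your proposal is correct and follows the same route as the paper: both invoke Theorem \ref{teoin} to get $\Lie(X)\Omega=0$ and then use the derivation property of the Lie derivative on the wedge product, the paper simply collapsing your $n$-term sum to $n\,(\Lie(X)\Omega)\wedge\Omega^{n-1}$. The justification of the derivation property and the closing remark on volume preservation are fine but not needed beyond what the paper records.
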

\begin{proof}
It is immediate since, as the Lie derivative is a derivation,
\vspace{-3mm}
$$
\Lie (X)\Omega^n =n\,(\Lie (X)\Omega)\wedge
\overbrace{\Omega\wedge\ldots\wedge\Omega}^{(n-1\ times)}=0 \ .
$$
\qed  \end{proof}

\begin{prop}
Let $(M,\Omega )$ be a symplectic (resp. presymplectic) manifold.
The set $\vf_{lh}(M)$ is closed for the Lie bracket of vector fields
and it is a  real Lie algebra.
\label{allie}
\end{prop}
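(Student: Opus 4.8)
The plan is to verify the two assertions separately: first that $\vf_{lh}(M)$ is closed under the Lie bracket, and then that it is a real vector subspace of $\vf(M)$ (which, together with the Jacobi identity inherited from $\vf(M)$, makes it a Lie algebra). The key tool throughout is the characterization from Theorem \ref{teoin}: a vector field $X$ lies in $\vf_{lh}(M)$ if, and only if, $\inn(X)\Omega$ is closed, equivalently $\Lie(X)\Omega = 0$.

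For closure under the bracket, I would take $X, Y \in \vf_{lh}(M)$ and compute $\inn([X,Y])\Omega$ using the standard identity $\inn([X,Y]) = \Lie(X)\circ\inn(Y) - \inn(Y)\circ\Lie(X)$. Since $\Lie(Y)\Omega = 0$ we get $\inn([X,Y])\Omega = \Lie(X)\inn(Y)\Omega$, and then Cartan's formula gives $\Lie(X)\inn(Y)\Omega = \d\inn(X)\inn(Y)\Omega + \inn(X)\d\inn(Y)\Omega$. The second term vanishes because $\inn(Y)\Omega$ is closed, so $\inn([X,Y])\Omega = \d\bigl(\inn(X)\inn(Y)\Omega\bigr)$, which is manifestly exact, hence closed; therefore $[X,Y] \in \vf_{lh}(M)$. (As a bonus this shows the bracket of two local Hamiltonian fields is always \emph{globally} Hamiltonian, with Hamiltonian function related to $\Omega(Y,X)$, but that is not needed here.)

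For the vector space structure, if $X, Y \in \vf_{lh}(M)$ and $\lambda,\mu\in\Real$, then $\inn(\lambda X + \mu Y)\Omega = \lambda\,\inn(X)\Omega + \mu\,\inn(Y)\Omega$ by linearity of the contraction, and a real-linear combination of closed $1$-forms is closed, so $\lambda X + \mu Y \in \vf_{lh}(M)$. Finally, the Jacobi identity and bilinearity of the bracket are automatic since $\vf_{lh}(M)$ is a subset of the Lie algebra $\vf(M)$ closed under all the operations; hence $(\vf_{lh}(M), [\cdot,\cdot])$ is a real Lie algebra.

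I do not expect any genuine obstacle here: every step is a direct application of Cartan's magic formula, the contraction identity for $\inn([X,Y])$, and closedness of $\Omega$. The only mild point of care is the presymplectic case, but nothing in the argument used non-degeneracy of $\Omega$ — only $\d\Omega = 0$ — so the same proof covers it verbatim. The main "content" of the proof is simply the observation that $\inn([X,Y])\Omega$ turns out to be exact, which is slightly stronger than what is claimed.
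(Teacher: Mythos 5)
Your proof is correct and follows essentially the same route as the paper: both hinge on the identity $\inn([X,Y])\Omega=\Lie(X)\inn(Y)\Omega-\inn(Y)\Lie(X)\Omega$ together with Theorem \ref{teoin}, the paper finishing locally with a potential ($\inn(Y)\Omega\vert_U=\d f$, so the local Hamiltonian of $[X,Y]$ is $\Lie(X)f$), while you finish globally with Cartan's formula, which gives the slightly stronger conclusion that $\inn([X,Y])\Omega=\d\bigl(\inn(X)\inn(Y)\Omega\bigr)$ is exact. Both arguments avoid non-degeneracy, so the presymplectic case is covered in either version.
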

\begin{proof}
We have to prove that $[X,Y]\in\vf_{lh}(M)$, for every $X,Y\in\vf_{lh}(M)$. 
Taking into account the relation
$$
\inn([X,Y])\Omega =
\Lie (X)\inn(Y)\Omega - \inn(Y)\Lie (X)\Omega
$$
and Theorem \ref{teoin}, we have that, if \(\dst\inn (Y)\Omega\vert_U=\d f\), on an open set $U$, we have
$$
\inn([X,Y])\Omega =\Lie (X)\inn(Y)\Omega\vert_U=\Lie(X)\d f=\d\Lie(X)f \ ,
$$
that is, in $U$, the local Hamiltonian function for $[X,Y]$ is $\Lie(X)f$.

Linearity, skew symmetry and the Jacobi identity of the Lie bracket
complete the proof.
\\ \qed  \end{proof}

So, given a symplectic manifold
$(M,\Omega )$, we have proven that the symplectic and the Liouville volume forms are invariant
by the local Hamiltonian vector fields; that is,
by the groups of local diffeomorphisms generated by their fluxes. 
Actually, there is a more general property stating that some geometrical structures,
such as those defined by a volume or a symplectic
form on a differentiable manifold,
are determined by their automorphism groups
(i.e.; the groups of volume-preserving and symplectic diffeomorphisms), as it was shown by {\it A. Banyaga} \cite{Ba86,Ba88,Ba97}.

Now we can try to characterize all the elements in
${\mit\Omega}^k(M)$ with the above property.
The answer to this problem was established by
{\it Lee Hwa Chung} \cite{Hw-47}, who studied the uniqueness
of  the {\sl  integral invariant forms} by local transformations generated
by the flux of  the local Hamiltonian vector fields.
Next, we state the geometrical version of this theorem and prove a partial result of it
\footnote{
The original proof of this theorem is local, using Darboux coordinates.}.

\begin{teor} {\rm (Lee Hwa Chung)}.
Let $(M,\Omega )$ be a symplectic manifold
and $\alpha \in {\mit\Omega}^k(M)$ an absolute invariant form for 
every $X \in \vf_{lh}(M)$. Then:
\begin{enumerate}
\item
If $k$ is odd, that is $k=2r-1$ with $r\in\Nat$, then
$\alpha = 0$.
\item
If $k$ is even, that is $k=2r$ with $r\in\Nat$, then
$\displaystyle\alpha = c\,\overbrace{\Omega\wedge\ldots\wedge\Omega}^{(r\ times)}
\equiv c \,\bigwedge^r\Omega$,
where $c \in \Real$.
\end{enumerate}
\end{teor}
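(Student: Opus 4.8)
The plan is to reduce the statement to a pointwise (linear-algebraic) question at each point $\mathrm{p}\in M$, and then exploit the fact that the local Hamiltonian vector fields at $\mathrm{p}$ span $\Tan_{\rm p}M$ (Lemma \ref{auxil}) together with the freedom to choose their values and first derivatives. First I would fix $\mathrm{p}\in M$ and a Darboux chart $(U;x^i,y_i)$ around it. By Lemma \ref{auxil}, the $2n$ coordinate vector fields $\partial/\partial x^i$, $\partial/\partial y_i$ are locally Hamiltonian (with Hamiltonians $y_i$ and $-x^i$), so in particular $\Lie(\partial/\partial x^i)\alpha=\Lie(\partial/\partial y_i)\alpha=0$ on $U$. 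Writing $\alpha$ in the Darboux coordinates as $\alpha=\sum_{|I|=k}a_I(x,y)\,\d z^I$ (with $z=(x,y)$), invariance under the constant-coefficient coordinate fields $\partial/\partial x^i$, $\partial/\partial y_i$ forces $\partial a_I/\partial x^i=\partial a_I/\partial y_i=0$, hence each $a_I$ is (locally) constant. Thus $\alpha$ has constant coefficients in every Darboux chart; the problem becomes: which constant-coefficient $k$-forms on the symplectic vector space $(\Real^{2n},\Omega_0)$ are annihilated by $\Lie(X)$ for all \emph{linear} Hamiltonian vector fields $X$ (it suffices to use linear ones, since their flows are linear symplectomorphisms, and these are generated by the quadratic Hamiltonians)?

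Next I would translate ``$\Lie(X)\alpha=0$ for all Hamiltonian $X$'' into ``$\alpha$ is invariant under the identity component of the symplectic group $\mathbf{Sp}(2n,\Real)$,'' by observing that linear Hamiltonian vector fields are exactly the infinitesimal generators of one-parameter subgroups of $\mathbf{Sp}(2n,\Real)$ (they correspond to the Lie algebra $\mathfrak{sp}(2n,\Real)$), so their flows generate a neighbourhood of the identity, and connectedness of $\mathbf{Sp}(2n,\Real)$ finishes the reduction. Therefore $\alpha\in\wedge^k(\Real^{2n})^*$ must be a $\mathbf{Sp}(2n,\Real)$-invariant element. The last step is the representation-theoretic fact that the algebra of $\mathbf{Sp}(2n,\Real)$-invariants in $\wedge^\bullet(\Real^{2n})^*$ is the polynomial algebra generated by $\Omega_0$: there are no invariants in odd degree, and in degree $2r$ the space of invariants is one-dimensional, spanned by $\bigwedge^r\Omega_0$. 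Pulling this back through the Darboux charts and checking consistency on overlaps (the constant is globally well defined because $M$ is connected — or, if not assumed, because it is locally constant and hence constant on the relevant piece) gives $\alpha=0$ for $k$ odd and $\alpha=c\,\bigwedge^r\Omega$ for $k=2r$.

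The main obstacle is the final representation-theoretic input: proving that $\wedge^k(\Real^{2n})^{*,\,\mathbf{Sp}(2n,\Real)}$ is zero in odd degree and one-dimensional (spanned by $\bigwedge^r\Omega_0$) in degree $2r$. A clean self-contained route is to use the $\mathfrak{sl}(2)$-triple acting on $\wedge^\bullet(\Real^{2n})^*$ built from wedging with $\Omega_0$, contracting with the dual bivector, and the degree-counting operator (the Lefschetz $\mathfrak{sl}(2)$-action): an $\mathbf{Sp}$-invariant form is in particular annihilated by the lowering operator (contraction with $\Omega_0^\sharp$, which is an $\mathfrak{sp}$-equivariant map), i.e. it is a lowest-weight vector, and the primitive decomposition then pins down the invariants to the powers of $\Omega_0$. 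Since the paper states it will ``prove a partial result of it,'' I would expect the author to do exactly this — likely establishing the $k$ odd case and perhaps $k=2$ in full, and invoking the general claim for higher even $k$ — so in the plan I would present the reduction to the linear $\mathbf{Sp}$-invariance problem carefully and then handle the low-degree cases explicitly, flagging the general even-degree statement as the part requiring the $\mathfrak{sl}(2)$/primitive-form machinery.
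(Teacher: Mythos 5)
Your route is genuinely different from the paper's. The paper only proves the cases $k\le 2$, by a direct computation: from the invariance hypothesis it derives the identity $\d f\wedge\inn(X_g)\alpha+\d g\wedge\inn(X_f)\alpha=0$ (using that $fX_g+gX_f$ is again locally Hamiltonian, with local Hamiltonian $fg$), sets $f=g$ to get $\d f\wedge\inn(X_f)\alpha=0$, and then uses Lemma \ref{auxil} to conclude $\alpha=0$ for $k=1$, and $\alpha=\eta\,\Omega$ with $\eta$ constant for $k=2$. You instead reduce to linear algebra: constancy of the coefficients in a Darboux chart (invariance under the coordinate fields), then invariance under the linear Hamiltonian fields, hence under $\mathbf{Sp}(2n,\Real)$ by connectedness, and finally the invariant theory of $\mathbf{Sp}$. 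If that last input is granted, your argument yields the theorem for every $k$, which is more than the paper's proof delivers. One minor repair: the fields $\partial/\partial x^i$, $\partial/\partial y_i$ and the linear fields of quadratic Hamiltonians are defined only on the chart, whereas $\vf_{lh}(M)$ consists of globally defined fields; multiply their local Hamiltonians by a bump function to obtain genuine elements of $\vf_{lh}(M)$ agreeing with them near the point (the paper is equally informal here, so this is cosmetic).

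The genuine gap is in your sketched justification of the key invariant-theory lemma. It is not true that an $\mathbf{Sp}(2n,\Real)$-invariant form is annihilated by the lowering operator $\Lambda$ (contraction with the Poisson bivector): $\Omega_0$ itself is invariant, yet $\Lambda\Omega_0$ is a nonzero constant, so invariant forms need not be primitive (lowest-weight) vectors, and the step ``invariant $\Rightarrow$ primitive'' fails. The correct use of the Lefschetz $\mathfrak{sl}(2)$ is: $L$, $\Lambda$ and the degree operator commute with the $\mathbf{Sp}$-action, so the invariants form an $\mathfrak{sl}(2)$-submodule; combining the Lefschetz decomposition of $\bigwedge^k(\Real^{2n})^*$ into $L^jP^{k-2j}$ with the fact that each primitive space $P^m$ with $m\ge 1$ is an irreducible nontrivial $\mathbf{Sp}(2n,\Real)$-module (hence has no nonzero invariant vector), one concludes that the invariants are exactly the multiples of the powers of $\Omega_0$ and vanish in odd degree. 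That irreducibility statement (equivalently, the first fundamental theorem of invariant theory for the symplectic group) is the real content you must either prove or cite; once it is in place, your reduction, and the globalization of the constant $c$ over a connected $M$ via overlapping Darboux charts, go through.
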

\begin{proof}
We prove the statement for the case $k\leq 2$, which is the only we need later
(for the proof of the general  case, see  \cite{LlR-88}).

As $\alpha$ is invariant under the action of  every $X\in\vf_{lh}(M)$, we have that
\beq
0=\Lie (X)\alpha =\d\inn (X)\alpha +\inn (X)\d\alpha
\quad \Longleftrightarrow \quad
\d\inn (X)\alpha =-\inn (X)\d\alpha \ .
\label{uno}
\eeq
Consider now $X,X'\in\vf_{lh}(M)$ and ${\rm p}\in M$; there exist
$U\subset M$, ${\rm p}\in U$, and $f,g\in\Cinfty (U)$ such that
$\inn (X)\Omega\vert_U=\d f$ and $\inn (X')\Omega\vert_U=\d g$
(from now on we write $X\vert_U\equiv X_f$ y
$X'\vert_U\equiv X_g$).
The vector field $X_{h}$ given in $U$ as $X_h\vert_U=fX_g+gX_f$, is locally Hamiltonian and, in $U$, its local Hamiltonian function is $h=fg\in\Cinfty (U)$, since:
$$
\inn (X_h)\Omega \vert_U=
\inn (fX_g+gX_f)\Omega =f\inn (X_g)\Omega +g\inn (X_f)\Omega =
f\d g+g\d f\equiv\d h \ .
$$
Thus we have
$$
\inn (X_h)\alpha\vert_U= f\inn (X_g)\alpha +g\inn (X_f)\alpha \ ,
$$
and taking the exterior differential
$$
\d\inn (X_h)\alpha\vert_U=\d f\wedge\inn (X_g)\alpha+f\d\inn (X_g)\alpha+
\d g\wedge\inn (X_f)\alpha+g\d\inn (X_f)\alpha \ .
$$
But, having in mind \eqref{uno},
$$
\d\inn (X_h)\alpha=-\inn(X_fh)\d\alpha\vert_U=
-f\inn (X_g)\d\alpha -g\inn (X_f)\d\alpha =
f\d\inn (X_g)\alpha+g\d\inn (X_f)\alpha \ ,
$$
and comparing the last two equations, we conclude that
\beq
(\d f\wedge\inn (X_g)\alpha+\d g\wedge\inn (X_f)\alpha)\vert_U=0 \ .
\label{dos}
\eeq
Putting now $X_f=X_g$ in this expression,
that is, $f=g$, we obtain that, for every $f\in\Cinfty (U)$,
$$
\d f\wedge\inn (X_f)\alpha\vert_U=0 \ .
$$
Now, we have two options:
\ben
\item
If $k=1$, then $\inn (X_f)\alpha\in\Cinfty (M)$,
and this last equality leads to $\inn (X_f)\alpha =0$,
for every $X_f\in\vf_{lh}(M)$.
Taking into account that, by Lemma \ref{auxil},
local Hamiltonian vector fields span locally $\Tan M$,
we obtain that $\inn (X)\alpha\vert_U=0$,
for every $X\in\vf (U)$, and this implies necessarily that
$\alpha\vert_U=0$ (for every $U$) and then $\alpha =0$.
\item
If $k=2$, we have to conclude:
\bit
\item
either $\inn (X_f)\alpha =0$,
\item
or $\inn (X_f)\alpha\vert_U=\eta_{X_f}\d f$,
where $\eta_{X_f}\in\Cinfty (U)$.
\eit
In the first case, reasoning as in the above item,
we conclude that $\alpha =0$.
In the second case, going to the expression (\ref{dos}),
we obtain
$$
(\d f\wedge\d g\,\eta_{X_g}+\d g\wedge\d f\,\eta_{X_f})\vert_U=0\ ;
$$
that is
$$
(\d f\wedge\d g)(\eta_{X_g}-\eta_{X_f})\vert_U=0 \ ,
$$
for every $f,g\in\Cinfty (U)$. Then it must be
$\eta_{X_f}=\eta_{X_g}\equiv\eta$;
that is, the function $\eta$ does not depend on the
local Hamiltonian vector field.

Therefore, for every $X_f\in\vf_{lh}(M)$, we have that
$\inn (X_f)\alpha\vert_U=\eta\,\d f$, with $\eta\in\Cinfty (M)$, then
$$
\inn (X_f)\alpha\vert_U=\eta\,\d f=\eta\inn (X_f)\Omega=
\inn (X_f)(\eta\Omega ) \ ;
$$
but, taking into account Lemma \ref{auxil},
this equality leads to
$$
\inn (X)(\alpha -\eta\Omega )\vert_U=0 \ ,
$$
for every $X\in\vf (M)$, and we conclude that
$\alpha =\eta\,\Omega$.
Finally, as $\alpha$ is invariant for every local
Hamiltonian vector field, we have that, for every $Y\in\vf_{lh}(M)$,
$$
0=\Lie (Y)\alpha =\Lie (Y)(\eta\Omega )=
(\Lie (Y)\eta)\Omega +\eta\Lie (Y)\Omega=(\Lie (Y)\eta)\Omega \ ,
$$
then $\Lie (Y)\eta =0$ and, by Lemma \ref{auxil}, this result holds for every $Y\in\vf (M)$;
therefore $\eta =c$ (constant) and thus
$\alpha =c\,\Omega$.
\een
\qed  \end{proof}

\begin{remark}{\rm 
\begin{itemize}
\item
So that, the absolute invariant forms for every
local Hamiltonian vector field are multiple
of exterior products of the symplectic form and hence
they are of degree even necessarily. The above proof is for 2-forms only.
\item
A similar result can be proved also for presymplectic manifolds \cite{GLR-84}
(See also \cite{EIMR-2012} for another interesting generalization of this theorem).
\end{itemize}
}\end{remark}

\subsection{Poisson brackets}

The symplectic form allows us to introduce
certain well known operations in Analytical Mechanics in a natural way.

\begin{definition}
Let $(M,\Omega )$ be a symplectic manifold.
The \textbf{Lagrange bracket} of two vector fields $X,Y \in \vf (M)$
is the bilinear map
$$
\begin{array}{ccccc}
(\ ,\ ) & \colon & \vf (M) \times \vf (M) & \longrightarrow & \Cinfty (M)
\\
& & X,Y & \mapsto & (X,Y)
\end{array}
$$
defined by
$$
(X,Y) := \Omega (X,Y) := \inn(Y)\inn(X)\Omega \ .
$$
\end{definition}

\begin{remark}{\rm 
\begin{itemize}
\item
This bracket is not an internal operation in  $\vf (M)$, as is the {\sl  Lie bracket},
whose result is another vector field.
\item
From the skew symmetry of $\Omega$ we deduce immediately that the Lagrange bracket is also skew symmetric; that is,
$(X,Y) = -(Y,X)$.
\end{itemize}
}\end{remark}

Taking into account (\ref{ch}), from this concept we obtain:
 
\begin{definition}
Let $(M,\Omega )$ be a symplectic manifold.
The \textbf{Poisson bracket} of two functions $f,g \in \Cinfty (M)$
is the Lagrange bracket of their associated Hamiltonian vector fields,
that is, the bilinear map
$$
\begin{array}{ccccc}
\{\ ,\ \} & \colon &
\Cinfty (M) \times \Cinfty (M) & \longrightarrow & \Cinfty (M)
\\
& & f,g & \mapsto & \{ f,g \}
\end{array}
$$
defined by
$$
\{ f,g \} := \Omega (X_f,X_g) := \inn(X_g)\inn(X_f)\Omega \ .
$$
\end{definition}

\noindent {\bf Local expressions}:
If $(U;x^i,y_i)$ is a symplectic chart and
$$
X\mid_U = A^i\derpar{}{x^i}+B_i\derpar{}{y_i} \quad ,
\quad
Y\mid_U = C^i\derpar{}{x^i}+D_i\derpar{}{y_i} \ ,
$$
then
$$
(X,Y)\mid_U = -B_iC^i+A^iD_i \ .
$$
Furthermore,
$$
\{ f,g \}\mid_U =
\derpar{f}{x^i}\derpar{g}{y_i} - \derpar{f}{y_i}\derpar{g}{x^i} \ .
$$
In particular, for  the canonical coordinates $x^i,y_i$ we have that
$$
\{ x^i,x^j \} = 0 \ , \
\{ y_i,y_j \} = 0 \ , \
\{ x^i,y_j \} =  \delta^i_j \ .
$$

The main properties of the Poisson bracket are collected in  the following:

\begin{prop}
Let $(M,\Omega )$ be a symplectic manifold and $\{\,,\,\}$ the associate Poisson bracket. Then
\vspace{-3mm}
\begin{enumerate}
\item
$\{ f,g \}=-\{ g,f \}$ ({\sl skew symmetry}).
\item
$\{ f,\{ g,h \}\} + \{ g,\{ h,f \}\} + \{ h,\{ f,g \}\} = 0$ ({\sl Jacobi identity}).
\item
$\{ f,g \}=\Lie (X_g)f =-\Lie (X_f)g$
\item
$X_{\{ f,g \}} = [X_g,X_f]$.
\end{enumerate}
\end{prop}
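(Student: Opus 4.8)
The statement collects four standard identities for the Poisson bracket on a symplectic manifold, so the plan is to prove each item in turn, using the defining relation $\inn(X_f)\Omega = \d f$ from \eqref{ch}, the third item of the preceding proposition on Hamiltonian vector fields (where available), and Cartan's formula $\Lie(X) = \inn(X)\d + \d\inn(X)$ together with $\d\Omega=0$. The real content lies in items 3 and 4; items 1 and 2 should then follow cheaply.

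\textbf{Item 3 first, since everything leans on it.} I would start by rewriting $\{f,g\} := \inn(X_g)\inn(X_f)\Omega$. Since $\inn(X_f)\Omega = \d f$, this is $\inn(X_g)\d f = \d f(X_g) = \Lie(X_g)f = X_g(f)$. Symmetrically, using skew symmetry $\{f,g\} = -\{g,f\} = -\inn(X_f)\inn(X_g)\Omega = -\inn(X_f)\d g = -\Lie(X_f)g$. This gives item 3 in one line each, and as a byproduct yields item 1 (skew symmetry), which also follows directly from skew symmetry of $\Omega$.

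\textbf{Item 4 is the main obstacle.} The goal is $X_{\{f,g\}} = [X_g, X_f]$, which by the defining relation \eqref{ch} is equivalent to $\inn([X_g,X_f])\Omega = \d\{f,g\}$. The key tool is the identity $\inn([X,Y])\beta = \Lie(X)\inn(Y)\beta - \inn(Y)\Lie(X)\beta$ (already used in the proof of Proposition \ref{allie}). Applying it with $X = X_g$, $Y = X_f$, $\beta = \Omega$, and using that $X_g$ is locally Hamiltonian so $\Lie(X_g)\Omega = 0$ (Theorem \ref{teoin}), I get
\[
\inn([X_g,X_f])\Omega = \Lie(X_g)\inn(X_f)\Omega = \Lie(X_g)\d f = \d\Lie(X_g)f = \d\{f,g\},
\]
where the last equality is item 3 and the third uses $\Lie(X_g)\d f = \d\Lie(X_g)f$ (commutation of Lie derivative with $\d$). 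By nondegeneracy of $\Omega$ this forces $[X_g,X_f] = X_{\{f,g\}}$. The only subtlety is bookkeeping the sign and the order of $X_f, X_g$; I would double-check by also noting $[X_g,X_f] = -[X_f,X_g]$ and that swapping $f \leftrightarrow g$ negates $\{f,g\}$, so the convention is consistent.

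\textbf{Item 2, the Jacobi identity.} Here I would derive it from item 4 and the Jacobi identity for the Lie bracket of vector fields, rather than by brute-force expansion. Using item 3, write $\{f,\{g,h\}\} = \Lie(X_{\{g,h\}})f$; by item 4, $X_{\{g,h\}} = [X_h, X_g]$, so $\{f,\{g,h\}\} = \Lie([X_h,X_g])f = \Lie(X_h)\Lie(X_g)f - \Lie(X_g)\Lie(X_h)f$. Summing the three cyclic permutations of $f,g,h$, every term of the form $\Lie(X_a)\Lie(X_b)c$ appears twice with opposite signs, so the total vanishes. (Equivalently one can invoke the Jacobi identity for $[\cdot,\cdot]$ on vector fields directly.) This completes all four items. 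The whole proof is essentially a sequence of applications of Cartan's magic formula and the defining relation \eqref{ch}; no genuinely hard step arises, but care with signs in items 3 and 4 is where an error is most likely to creep in.
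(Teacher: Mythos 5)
Your items 1, 3 and 4 are correct and are essentially the paper's own proof: item 3 is the one-line computation $\{f,g\}=\inn(X_g)\inn(X_f)\Omega=\inn(X_g)\d f=\Lie(X_g)f$, and item 4 is exactly the chain $\inn([X_g,X_f])\Omega=\Lie(X_g)\inn(X_f)\Omega=\Lie(X_g)\d f=\d\Lie(X_g)f=\d\{f,g\}$ plus nondegeneracy, which is what the paper does (it cites Proposition \ref{allie} for the same bracket identity and $\Lie(X_g)\Omega=0$). For item 2 the paper gives no details beyond ``it is a consequence of $\Omega$ being closed'', so your plan of deducing the Jacobi identity from items 3 and 4 is a legitimate and standard alternative route.

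However, the cancellation mechanism you state for item 2 does not work. Set $S=\{f,\{g,h\}\}+\{g,\{h,f\}\}+\{h,\{f,g\}\}$. Expanding each summand as $\Lie([X_h,X_g])f$, etc., gives six iterated derivatives $\Lie(X_h)\Lie(X_g)f$, $-\Lie(X_g)\Lie(X_h)f$, \dots, and no two of these are equal with opposite signs: they act on different functions. Worse, since $\Lie(X_g)f=-\Lie(X_f)g$, one has for instance $\Lie(X_h)\Lie(X_g)f=-\Lie(X_h)\Lie(X_f)g$, so the pairs you would like to cancel actually coincide with the \emph{same} sign. If you convert each of the six terms back into nested brackets via item 3, the six-term sum comes out to $-2S$, so your expansion yields $S=-2S$, i.e. $3S=0$; this does prove the claim, but by a different mechanism than the pairwise cancellation you invoke. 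A cleaner repair of your argument is to stop after one expansion and convert back using item 3 and skew symmetry:
$$
\{f,\{g,h\}\}=\Lie([X_h,X_g])f=\Lie(X_h)\{f,g\}-\Lie(X_g)\{f,h\}=\{\{f,g\},h\}-\{\{f,h\},g\}=-\{h,\{f,g\}\}-\{g,\{h,f\}\} \ ,
$$
which is precisely the Jacobi identity. Finally, your parenthetical fallback (``invoke the Jacobi identity for $[\cdot,\cdot]$ on vector fields directly'') only yields $X_S=0$, i.e. $\d S=0$, hence $S$ locally constant; an additional argument is needed to conclude $S=0$, so as stated it does not close the gap either.
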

\begin{proof}
\begin{enumerate}
\item
Immediate from the definition.
\item
It is a consequence of $\Omega$ being closed.
\item
Bearing in mind the Cartan formula for the Lie derivative:
$$
\{ f,g \} = \inn(X_g)\inn(X_f)\Omega = \inn(X_g)\d f = \Lie (X_g)f \ .
$$
In an analogous way, we have that
$\{ f,g \} = -\Lie (X_f)g$.
\item
The statement is equivalent to $\inn([X_g,X_f])\Omega = \d \{ f,g \}$,
and remembering Proposition \ref{allie} we obtain
\beq
\inn([X_g,X_f])\Omega =\Lie (X_g)\inn(X_f)\Omega = \Lie(X_g)\d f=
\d\Lie(X_g)f=\d \{ f,g \} \ .
\label{expre}
\eeq
\end{enumerate}
\qed  \end{proof}

\begin{remark}{\rm 
\begin{itemize}
\item
The first two properties establish that
$\Cinfty (M)$  with the Poisson bracket is a real Lie algebra.
\item
The third property allows us to give a geometric interpretation
of the Poisson bracket between two functions: it measures the variation of one of them along
the integral curves of the Hamiltonian vector field associated to the other.
\item
The fourth property tells us that the map  $\Cinfty (M)\to\vf(M)$ given by $f\mapsto X_{f}$, is a Lie algebra
(anti)--homomorphism between $(\Cinfty (M),\{\,\, ,\,\})$ and $(\vf (M),[\, \, ,\,])$.
\end{itemize}
}\end{remark}

Using again the canonical isomorphism,
and taking into account (\ref{expre}),
we can establish the following generalization:

\begin{definition}
Let $(M,\Omega )$ be a symplectic manifold.
The {\sl \textbf{Poisson bracket}} of two $1$-forms 
$\alpha ,\beta \in {\mit\Omega}^1 (M)$
is the bilinear map
$$
\begin{array}{ccccc}
\{\ ,\ \} & \colon &
{\mit\Omega}^1 (M)\times{\mit\Omega}^1(M)&\longrightarrow&\df^1 (M)
\\
& & \alpha ,\beta & \mapsto & \{ \alpha ,\beta \}
\end{array} \ ,
$$
defined by
$$
\{ \alpha ,\beta \} := \inn([X_\alpha,X_\beta ])\Omega \ ,
$$
where $X_\alpha = \sharp_\Omega (\alpha )$
and $X_\beta = \sharp_\Omega (\beta )$.
\end{definition}

It is evident that:

\begin{prop}
Let $(M,\Omega )$ be a symplectic manifold. Then,
for every $f,g \in  \Cinfty (M)$,
$$
\d \{ f,g \} = -\{ \d f,\d g \} \ .
$$
\end{prop}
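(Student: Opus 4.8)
The plan is to unwind the definition of the Poisson bracket of $1$-forms applied to the exact forms $\d f$ and $\d g$, and reduce it to the already-proved identity \eqref{expre} for the Poisson bracket of functions. First I would observe that the vector field associated to $\d f$ in the sense used in the definition of $\{\,,\,\}$ on $\df^1(M)$ is $X_{\d f}=\sharp_\Omega(\d f)$, which is exactly the Hamiltonian vector field $X_f$ defined implicitly by $\inn(X_f)\Omega=\d f$; likewise $X_{\d g}=X_g$. Hence, straight from the definition,
$$
\{\d f,\d g\}=\inn([X_{\d f},X_{\d g}])\Omega=\inn([X_f,X_g])\Omega \ .
$$

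Next I would invoke the skew symmetry of the Lie bracket together with equation \eqref{expre} (equivalently, property 4 of the Poisson bracket proposition, $X_{\{f,g\}}=[X_g,X_f]$). Since $[X_f,X_g]=-[X_g,X_f]$ and $\inn([X_g,X_f])\Omega=\Lie(X_g)\inn(X_f)\Omega=\d\Lie(X_g)f=\d\{f,g\}$, we get
$$
\{\d f,\d g\}=\inn([X_f,X_g])\Omega=-\inn([X_g,X_f])\Omega=-\d\{f,g\} \ ,
$$
which is precisely the claimed identity $\d\{f,g\}=-\{\d f,\d g\}$.

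The proof is essentially a direct computation with no real obstacle; the only point requiring a moment of care is the identification $X_{\d f}=X_f$, i.e.\ recognizing that the vector field the $1$-form Poisson bracket attaches to $\d f$ via $\sharp_\Omega$ coincides with the Hamiltonian vector field of $f$. Once that is noted, everything follows from the definition of $\{\,,\,\}$ on $\df^1(M)$ and the previously established formula \eqref{expre}.
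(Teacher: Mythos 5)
Your proof is correct and is exactly the argument the paper has in mind: the paper states this proposition without proof (calling it evident), and the intended justification is precisely the identification $X_{\d f}=\sharp_\Omega(\d f)=X_f$ together with the previously established identity \eqref{expre}, $\inn([X_g,X_f])\Omega=\d\{f,g\}$, plus antisymmetry of the Lie bracket. Nothing is missing.
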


The properties of the Poisson bracket of $1$-forms
are obviously analogous to those of the Poisson bracket of functions.

\subsection{Canonical transformations and symplectomorphisms}
\protect\label{tcs}

In the previous section we have seen how the properties 
of the symplectic structure allow us to introduce the concept of
local Hamiltonian vector field and how the
integral curves of these fields are obtained 
as solutions to the Hamilton equations.
We will see also that this kind of vector fields is suitable
to describe dynamical systems.
This means that there is a deep relation between the dynamics
of physical systems and the geometric properties of 
their phase spaces.

In this way, from a dynamical perspective,
it is reasonable to suppose that the more relevant transformations
among dynamical systems are those preserving the dynamical equations
which, in our case, means geometrically 
that they transform Hamiltonian vector fields into Hamiltonian vector fields.
Consequently, we define:

\begin{definition}
Let $(M_1,\Omega_1)$ and $(M_2,\Omega_2)$ be
symplectic manifolds and $\Phi:M_{1}\to M_{2}$ a diffeomorphism.
We say that
$\Phi$ is a \textbf{canonical transformation} 
if it maps local Hamiltonian vector fields into
local Hamiltonian vector fields biunivocally; that is,
$\Phi_*(\vf_{lh} (M_1)) = \vf_{lh}(M_2)$.
\end{definition}


Concerning the geometrical aspects,
the more interesting transformations between symplectic manifolds
are the following:

\begin{definition}
Let $(M_1,\Omega_1)$ and $(M_2,\Omega_2)$ be
symplectic manifolds and $\Phi:M_{1}\to M_{2}$ a diffeomorphism.
We say that $\Phi$ is a \textbf{symplectomorphism}
(or also a \textbf{symplectic  transformation\/})
if it  preserves their symplectic structures;
that is, $\Phi^*\Omega_2 = \Omega_1$.
\end{definition}

As Hamiltonian vector fields are defined using the symplectic form,
we can expect that there is some relation between both kinds of transformations.
In fact, Lee Hwa Chung's Theorem allows us to prove that
both concepts are essentially the same:

\begin{teor}
Let $(M_1,\Omega_1)$ and $(M_2,\Omega_2)$ be
symplectic manifolds 
and $\Phi \in {\rm Diff}\,(M_1,M_2)$.
The necessary and sufficient condition for $\Phi$
to be a canonical transformation is that
$\Phi^*\Omega_2 = c\Omega_1$, with $c \in {\bf R}$.
\label{teosim}
\end{teor}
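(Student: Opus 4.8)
The plan is to prove both implications using the characterization of canonical transformations via invariant forms together with Lee Hwa Chung's Theorem.

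First I would handle the sufficient condition. Suppose $\Phi^*\Omega_2 = c\,\Omega_1$ with $c \in \Real$, $c\neq 0$ (note $c=0$ is impossible since $\Phi$ is a diffeomorphism and $\Omega_1$ is nondegenerate, so $\Phi^*$ preserves nondegeneracy). Let $X \in \vf_{lh}(M_1)$; I want to show $\Phi_*X \in \vf_{lh}(M_2)$. By Theorem \ref{teoin}, it suffices to show $\Lie(\Phi_*X)\Omega_2 = 0$. Using the naturality of the Lie derivative under pullback, $\Phi^*(\Lie(\Phi_*X)\Omega_2) = \Lie(X)(\Phi^*\Omega_2) = \Lie(X)(c\,\Omega_1) = c\,\Lie(X)\Omega_1 = 0$, where the last equality is again Theorem \ref{teoin} applied to $X \in \vf_{lh}(M_1)$. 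Since $\Phi^*$ is injective on forms (as $\Phi$ is a diffeomorphism), $\Lie(\Phi_*X)\Omega_2 = 0$, hence $\Phi_*X \in \vf_{lh}(M_2)$. Applying the same argument to $\Phi^{-1}$, which satisfies $(\Phi^{-1})^*\Omega_1 = c^{-1}\Omega_2$, gives $\Phi_*(\vf_{lh}(M_1)) = \vf_{lh}(M_2)$, so $\Phi$ is a canonical transformation.

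Next I would treat the necessary condition, which is where Lee Hwa Chung's Theorem enters. Suppose $\Phi$ is a canonical transformation. Consider the $2$-form $\alpha := \Phi^*\Omega_2 \in \df^2(M_1)$. The key claim is that $\alpha$ is an absolute invariant form for every $X \in \vf_{lh}(M_1)$. Indeed, given such an $X$, by hypothesis $\Phi_*X \in \vf_{lh}(M_2)$, so by Theorem \ref{teoin} we have $\Lie(\Phi_*X)\Omega_2 = 0$; pulling back, $\Lie(X)\alpha = \Lie(X)(\Phi^*\Omega_2) = \Phi^*(\Lie(\Phi_*X)\Omega_2) = 0$. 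Now $\alpha$ is a closed $2$-form (since $\Omega_2$ is closed and pullback commutes with $\d$), and it is an absolute invariant form for all of $\vf_{lh}(M_1)$. Applying Lee Hwa Chung's Theorem with $k=2$ (i.e. $r=1$) to the symplectic manifold $(M_1,\Omega_1)$, we conclude $\alpha = c\,\Omega_1$ for some $c \in \Real$; that is, $\Phi^*\Omega_2 = c\,\Omega_1$.

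The main obstacle is essentially bookkeeping rather than conceptual: one must be careful that Lee Hwa Chung's Theorem as stated applies to a $k$-form on a single symplectic manifold, so the argument must be organized so that the invariant form $\alpha = \Phi^*\Omega_2$ lives on $(M_1,\Omega_1)$ and the invariance is checked against $\vf_{lh}(M_1)$ — this requires the naturality identity $\Phi^*(\Lie(Y)\beta) = \Lie(\Phi^*Y)(\Phi^*\beta)$ and the fact that $\Phi$ being a diffeomorphism makes $\Phi_*$ a bijection $\vf_{lh}(M_1)\to\vf_{lh}(M_2)$. One should also note at the outset that $c\neq 0$ necessarily, since $\Phi^*\Omega_2$ is nondegenerate (a diffeomorphism pulls back a nondegenerate form to a nondegenerate one), so the resulting $(M_1, c\,\Omega_1)$ is genuinely symplectic and the statement is consistent; but this remark is not needed for the logical equivalence itself.
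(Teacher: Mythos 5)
Your proof is correct and follows essentially the same route as the paper: both directions rely on the characterization of local Hamiltonian vector fields via $\Lie(X)\Omega=0$ (Theorem \ref{teoin}) together with Lee Hwa Chung's Theorem applied to the invariant closed $2$-form $\Phi^*\Omega_2$ on $(M_1,\Omega_1)$. Your additional remarks (that $c\neq 0$, and checking both inclusions so that $\Phi_*(\vf_{lh}(M_1))=\vf_{lh}(M_2)$ holds exactly) are minor refinements of the same argument, not a different approach.
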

\begin{proof}  Suppose that $\Phi$ is a canonical transformation. For every
$X_1 \in \vf_{lh} (M_1)$, we have that
$X_{2}=\Phi_*X_1\in \vf_{lh}(M_2)$ and,
as we know from Theorem \ref{teoin},
$\Lie (X_2)\Omega_2 = 0$, hence
$$
0 = \Phi^*(\Lie (X_2)\Omega_2) =
\Lie (\Phi_*^{-1}X_2)(\Phi^*\Omega_2) =
\Lie (X_1)(\Phi^*\Omega_2) \ .
$$
But this means that $\Phi^*\Omega_2$ is invariant by any element of $\vf_{lh} (M_1)$ and, from Lee Hwa Chung's Theorem,
we conclude that
$\Phi^*\Omega_2 = c\Omega_1$, $c \in {\bf R}$ and $\Phi$ is a symplectomorphism.

Conversely, suppose that $\Phi$ is a symplectomorphism, $\Phi^*\Omega_2 = c\Omega_1$. Given $X_1 \in \vf_{lh}(M_1)$ we have that $\Lie (X_1)\Omega_1 = 0$. Then
$$
0 = \Phi^{*^{-1}}(\Lie (X_1)\Omega_1) =
\Lie (\Phi_*X_1)(\Phi^{*^{-1}}\Omega_1) =
\frac{1}{c}\Lie (\Phi_*X_1)\Omega_2 \ ;
$$
hence, by Theorem \ref{teoin}, $\Phi_*X_1 \in \vf_{lh}(M_2)$, that is $\Omega_{2}$ is invariant by every element of  $\vf_{lh}(M_2)$,
and $\Phi$ is a canonical transformation.
\\ \qed  \end{proof}

\begin{remark}{\rm 
\begin{itemize}
\item
The constant $c$ that appears in this last theorem
is called  the {\sl \textbf{valence}} of the canonical transformation.
It is usual to consider transformations with $c=1$
(that is, symplectomorphisms),
and they are called {\sl  \textbf{univalent {\rm or} restricted canonical transformations}}.
Another terminology is also used, calling canonical transformations
to those with valence $c=1$, and then call 
{\sl  \textbf{generalized canonical transformations}} to the rest \cite{SC-71}.
\item
Observe that a diffeomorphism is an {\sl univalent canonical transformation}
if, and only if, it is a {\sl  symplectomorphism}.
\item
All these definitions and properties are also valid for
presymplectic manifolds. In this case we talk about
{\sl \textbf{presymplectomorphisms}}
(the study of this case is done in \cite{BGPR-86,CGIR-85,CGIR-87}).
\end{itemize}
}\end{remark}

Another fundamental result is:

\begin{prop}
Let $(M,\Omega )$ be a symplectic (resp. presymplectic) manifold.
A vector field $X \in\vf(M)$ is a local Hamiltonian vector field
if, and only if, its flux is a group of 
local symplectomorphisms (resp. local presymplectomorphisms).
\end{prop}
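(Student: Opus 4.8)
The plan is to combine Theorem \ref{teoin}, which characterizes local Hamiltonian vector fields by the condition $\Lie(X)\Omega = 0$, with the standard relation between the Lie derivative and the flux of a vector field. First I would recall that, since $X$ need not be complete, its flux is a local one-parameter group of local diffeomorphisms $\{F_t\}$: for each ${\rm p}\in M$ there is a neighbourhood and an interval of values of $t$ on which $F_t$ is defined and is a diffeomorphism onto its image, with $F_0 = \mathrm{id}$ and $F_{t+s} = F_t\circ F_s$ wherever both sides make sense. The claim that the flux is a group of local symplectomorphisms (resp. presymplectomorphisms) is to be understood on these domains, that is, as the assertion $F_t^*\Omega = \Omega$ wherever $F_t$ is defined.

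Next I would invoke the identity $\dst\frac{\d}{\d t}F_t^*\Omega = F_t^*(\Lie(X)\Omega)$, which holds on the domain of the flux. If $X\in\vf_{lh}(M)$, then $\Lie(X)\Omega = 0$ by Theorem \ref{teoin}, so $F_t^*\Omega$ is independent of $t$; evaluating at $t = 0$ and using $F_0 = \mathrm{id}$ gives $F_t^*\Omega = \Omega$ on the appropriate open sets, which says exactly that each $F_t$ is a local symplectomorphism (resp. local presymplectomorphism). Conversely, if every $F_t$ satisfies $F_t^*\Omega = \Omega$, then differentiating this relation at $t = 0$ yields $\Lie(X)\Omega = 0$, and Theorem \ref{teoin} gives $X\in\vf_{lh}(M)$.

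The argument is essentially formal, and since Theorem \ref{teoin} was established uniformly for the symplectic and the presymplectic cases, the same reasoning covers both. The only point that requires a little care is the bookkeeping of domains: the flux is only locally defined, so every occurrence of $F_t^*\Omega = \Omega$ must be read on the open set where $F_t$ makes sense, and the identity $\dst\frac{\d}{\d t}F_t^*\Omega = F_t^*(\Lie(X)\Omega)$ must be applied there. This is standard material on Lie derivatives and one-parameter groups; I do not expect any genuine obstacle beyond being precise about these local domains.
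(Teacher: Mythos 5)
Your proof is correct and follows essentially the same route as the paper: the paper's own argument simply invokes the equivalence between $\Lie(X)\Omega=0$ and $F_t^*\Omega=\Omega$ and combines it with Theorem \ref{teoin}, exactly as you do. You merely spell out the standard derivative identity and the local-domain bookkeeping that the paper leaves implicit, which is fine.
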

\begin{proof} Let $F_t$ be the flux of a vector field $X$. We know that $
\Lie (X)\Omega = 0$ is equivalent to $F_t^*\Omega = \Omega$, hence the result follows directly.
\\ \qed  \end{proof}

Finally, it is easy to prove that:

\begin{prop}
The set of canonical transformations  of a symplectic (resp. presymplectic) manifold
$(M,\Omega)$, with the operation of composition, is a group.
\end{prop}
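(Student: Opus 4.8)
The plan is to verify the group axioms directly, taking advantage of the characterization of canonical transformations given in Theorem~\ref{teosim}: a diffeomorphism between symplectic manifolds is a canonical transformation if, and only if, it pulls back the symplectic form to a constant multiple of the other. So, for self-maps of $(M,\Omega)$, I would say that a diffeomorphism $\Phi\colon M\to M$ is a canonical transformation precisely when $\Phi^*\Omega = c\,\Omega$ for some $c\in\Real$, the valence of $\Phi$. (In the presymplectic case one uses the analogous statement noted in the remark after Theorem~\ref{teosim}, or argues directly with local Hamiltonian vector fields as indicated below.)

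Since the composition of diffeomorphisms is a diffeomorphism and composition of maps is associative, the only genuine content is closure under composition, existence of a neutral element, and closure under inversion. For the neutral element: $(\mathrm{id}_M)^*\Omega = \Omega$, so $\mathrm{id}_M$ is a canonical transformation of valence $1$, and it is obviously neutral for composition. For closure: if $\Phi$ and $\Psi$ are canonical transformations of valences $c_\Phi$ and $c_\Psi$, then
\[
(\Psi\circ\Phi)^*\Omega = \Phi^*(\Psi^*\Omega) = \Phi^*(c_\Psi\,\Omega) = c_\Psi\,\Phi^*\Omega = c_\Psi c_\Phi\,\Omega \ ,
\]
so $\Psi\circ\Phi$ is again a canonical transformation, with valence $c_\Psi c_\Phi$. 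For inverses: since $\Phi$ is a diffeomorphism, $\Phi^*\Omega$ is nondegenerate, hence $c_\Phi$ is nonzero; applying $(\Phi^{-1})^*$ to $\Phi^*\Omega = c_\Phi\,\Omega$ yields $\Omega = c_\Phi\,(\Phi^{-1})^*\Omega$, i.e. $(\Phi^{-1})^*\Omega = c_\Phi^{-1}\,\Omega$, so $\Phi^{-1}$ is a canonical transformation of valence $c_\Phi^{-1}$. This exhausts the group axioms.

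I do not expect any real obstacle: the statement is essentially a bookkeeping check, and the only point requiring a moment's care is that the valence is necessarily nonzero, which is what guarantees that the inverse is again a canonical transformation; this is immediate from the nondegeneracy of $\Omega$ together with $\Phi$ being a diffeomorphism. As an alternative I could bypass Theorem~\ref{teosim} and argue straight from the definition: $(\mathrm{id}_M)_*\vf_{lh}(M)=\vf_{lh}(M)$ trivially; if $\Phi_*\vf_{lh}(M)=\vf_{lh}(M)$ and $\Psi_*\vf_{lh}(M)=\vf_{lh}(M)$ biunivocally, then $(\Psi\circ\Phi)_* = \Psi_*\circ\Phi_*$ and $(\Phi^{-1})_* = (\Phi_*)^{-1}$ are bijections of $\vf_{lh}(M)$ onto itself, which is exactly what is required, and associativity is again inherited from composition of maps.
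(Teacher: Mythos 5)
Your proof is correct. Note that the paper itself offers no argument here (the proposition is stated as "easy to prove"), so there is nothing to diverge from; your second, definition-based argument — $(\Psi\circ\Phi)_*=\Psi_*\circ\Phi_*$ and $(\Phi^{-1})_*=(\Phi_*)^{-1}$ restrict to bijections of $\vf_{lh}(M)$, with $\mathrm{id}_M$ as neutral element — is exactly the expected bookkeeping and, importantly, it covers the symplectic and presymplectic cases uniformly. The valence-based route through Theorem~\ref{teosim} is also fine for the symplectic case (and your observation that nondegeneracy of $\Omega$ forces $c_\Phi\neq 0$ is the one point needing care), but since the characterization $\Phi^*\Omega=c\,\Omega$ rests on Lee Hwa Chung's theorem, which the paper proves only in the symplectic setting and extends to presymplectic manifolds only by reference, the definitional argument is the safer one to take as primary — which you in effect already do.
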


The group of the symplectomorphisms
of a symplectic manifold is denoted by ${\bf Sp}(M,\Omega )$,
and it has a crucial relevance in the study of {\sl  symmetries} of dynamical systems.

\subsection{Characterization of canonical transformations}

The last theorem has some important corollaries, which give
alternative characterizations for a transformation to be canonical
(or a symplectomorphism).
The most important of them uses the Poisson bracket of functions.
Previously, we have to specify how the Hamiltonian functions associated to
Hamiltonian vector fields are transformed under 
these kinds of transformations.

\begin{prop}
Let $(M_1,\Omega_1)$ and $(M_2,\Omega_2)$ be
symplectic manifolds  
and $\Phi \in {\rm Diff}\,  (M_1,M_2)$
a canonical transformation of valence $c$.
If $X_1 \in \vf_{lh}(M_1)$, let $X_2 := \Phi_*X_1 \in \vf_{lh}(M_2)$,
and $h_1 \in \Cinfty (U_1)$ and $h_2 \in \Cinfty (U_2)$
local Hamiltonian functions of $X_1$ and $X_2$
on $U_1 \subset M_1$ and $U_2 :=\Phi (U_1) \subset M_2$,
respectively. Then
$$
ch_1 = \Phi^*h_2 + k \quad , \quad k \in {\bf R} \ .
$$
\end{prop}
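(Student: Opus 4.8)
The plan is to unwind the definitions and use that $\Phi$ is a canonical transformation of valence $c$, which by Theorem \ref{teosim} means $\Phi^*\Omega_2 = c\,\Omega_1$. Start from the defining relation $\inn(X_2)\Omega_2 = \d h_2$ on $U_2$, pull it back along $\Phi$, and relate the left-hand side to $X_1$. Concretely, since $X_2 = \Phi_*X_1$, one has the standard identity $\Phi^*(\inn(\Phi_*X_1)\Omega_2) = \inn(X_1)(\Phi^*\Omega_2)$; combining this with $\Phi^*\Omega_2 = c\,\Omega_1$ gives $\Phi^*(\inn(X_2)\Omega_2) = c\,\inn(X_1)\Omega_1$. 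On the other hand, $\Phi^*\d h_2 = \d(\Phi^*h_2)$. Therefore $c\,\inn(X_1)\Omega_1 = \d(\Phi^*h_2)$ on $U_1$.

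Next I would compare this with the hypothesis that $h_1$ is a local Hamiltonian function of $X_1$, i.e. $\inn(X_1)\Omega_1 = \d h_1$ on $U_1$. Multiplying by $c$ gives $c\,\inn(X_1)\Omega_1 = c\,\d h_1 = \d(c h_1)$. Equating the two expressions for $c\,\inn(X_1)\Omega_1$ yields $\d(c h_1) = \d(\Phi^* h_2)$, hence $\d(c h_1 - \Phi^* h_2) = 0$ on $U_1$. At this point I would invoke connectedness: assuming $U_1$ is connected (one may always shrink $U_1$ to a connected neighbourhood of any given point, or the statement is understood locally), a closed $0$-form is constant, so $c h_1 - \Phi^* h_2 = k$ for some $k \in \Real$, which is exactly $c h_1 = \Phi^* h_2 + k$.

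The only genuinely delicate point is the pullback identity $\Phi^*(\inn(\Phi_*X_1)\beta) = \inn(X_1)(\Phi^*\beta)$ for a form $\beta$ and a diffeomorphism $\Phi$; this is a routine but worth-stating naturality property of the interior product under diffeomorphisms, and it is where the fact that $\Phi$ is a diffeomorphism (so that $\Phi_* X_1$ is a well-defined vector field) is really used. Everything else is bookkeeping with the valence constant $c$. I would also remark that the constant $k$ is defined only up to the additive ambiguity already present in the choice of local Hamiltonian functions $h_1$ and $h_2$, so the content of the proposition is the \emph{existence} of such a $k$ once the $h_i$ are fixed, and that in the univalent case $c=1$ it reduces to the familiar statement that a symplectomorphism transforms Hamiltonian functions into Hamiltonian functions up to an additive constant.
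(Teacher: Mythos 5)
Your proof is correct and is essentially the paper's argument read in the opposite direction: the paper pushes the relation $\inn(X_1)\Omega_1=\d h_1$ forward via $(\Phi^{-1})^*$ and the valence relation to get $\d(ch_1)=\d(\Phi^*h_2)$, whereas you pull back $\inn(X_2)\Omega_2=\d h_2$, but the naturality identity for the interior product and the conclusion are identical. Your explicit remark that $U_1$ should be taken connected (or shrunk) so that $\d(ch_1-\Phi^*h_2)=0$ forces a constant is a small point the paper leaves implicit in ``the result follows''.
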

\begin{proof}
Following the previous theorem, we have that

\begin{eqnarray*}
\d(\Phi^{*^{-1}}h_{1})&=&\Phi^{*^{-1}}\d h_{1} =\Phi^{*^{-1}}(\inn (X_1)\Omega_1)\mid_{U_1} \\&=&
\inn (\Phi_*X_1)(\Phi^{*^{-1}}\Omega_1)\mid_{\Phi (U_1)} =
\frac{1}{c}\inn (X_2)\Omega_2\mid_{U_2} =
\frac{1}{c}\d h_2 =\d\Big( \frac{1}{c}h_2\Big)  \,  ,
\end{eqnarray*}
that is
$\d(ch_{1})=\d(\Phi^{*} h_2)$, and the result follows.
\\ \qed  \end{proof}


Bearing this in mind, we state:

\begin{teor}
Let $(M_1,\Omega_1)$ and $(M_2,\Omega_2)$ be
symplectic manifolds. A diffeomorphism 
$\Phi :M_1\to M_2$ is a canonical transformation of valence $c$
if, and only if 
$$
\Phi^*\{ f_2,g_2 \} = \frac{1}{c}\{ \Phi^*f_2,\Phi^*g_2 \} \ .
$$
for every $f_2,g_2 \in \Cinfty (M_2)$,
\end{teor}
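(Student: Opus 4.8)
The plan is to prove both implications using the characterization of canonical transformations as the maps satisfying $\Phi^*\Omega_2 = c\,\Omega_1$ (Theorem \ref{teosim}), together with the identity $\{f,g\} = \Omega(X_f,X_g)$ and the naturality of pullbacks of forms and pushforwards of vector fields. First I would record the key observation relating Hamiltonian vector fields on $M_1$ and $M_2$: if $\Phi^*\Omega_2 = c\,\Omega_1$ and $X_{f_2}$ is the Hamiltonian vector field of $f_2 \in \Cinfty(M_2)$, then $\Phi_*^{-1}X_{f_2}$ is a local Hamiltonian vector field on $M_1$ whose Hamiltonian function is $\frac{1}{c}\Phi^*f_2$, up to an additive constant. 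This follows from the preceding Proposition (transformation law $c\,h_1 = \Phi^*h_2 + k$) or directly: $\inn(\Phi_*^{-1}X_{f_2})(c\,\Omega_1) = \inn(\Phi_*^{-1}X_{f_2})(\Phi^*\Omega_2) = \Phi^*(\inn(X_{f_2})\Omega_2) = \Phi^*\d f_2 = \d(\Phi^*f_2)$, so $\inn(\Phi_*^{-1}X_{f_2})\Omega_1 = \d(\tfrac{1}{c}\Phi^*f_2)$, i.e.\ $\Phi_*^{-1}X_{f_2} = X_{\Phi^*f_2/c}$.

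For the forward implication, assume $\Phi$ is canonical of valence $c$, so $\Phi^*\Omega_2 = c\,\Omega_1$. Then for $f_2, g_2 \in \Cinfty(M_2)$ I compute
\begin{eqnarray*}
\Phi^*\{f_2,g_2\} &=& \Phi^*\big(\Omega_2(X_{f_2},X_{g_2})\big) = (\Phi^*\Omega_2)\big(\Phi_*^{-1}X_{f_2},\Phi_*^{-1}X_{g_2}\big) \\
&=& c\,\Omega_1\big(X_{\Phi^*f_2/c},X_{\Phi^*g_2/c}\big) = c\cdot\frac{1}{c^2}\,\Omega_1\big(X_{\Phi^*f_2},X_{\Phi^*g_2}\big) = \frac{1}{c}\{\Phi^*f_2,\Phi^*g_2\} \ ,
\end{eqnarray*}
using the naturality identity $\Phi^*(\beta)(Y_1,Y_2) = \beta(\Phi_*Y_1,\Phi_*Y_2)$ in the second equality and bilinearity of $\Omega_1$ in the Hamiltonian vector fields (which depend linearly on the Hamiltonian function) in the fourth.

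For the converse, assume $\Phi^*\{f_2,g_2\} = \frac{1}{c}\{\Phi^*f_2,\Phi^*g_2\}$ for all $f_2,g_2$. I want to deduce $\Phi^*\Omega_2 = c\,\Omega_1$ and then invoke Theorem \ref{teosim}. The idea is to evaluate both sides of the hypothesis against suitable pairs of vector fields: by Lemma \ref{auxil}, at each point $\mathrm{p}\in M_1$ one can choose local Hamiltonian vector fields $X_{f_2^i} = \Phi_*^{-1}(\text{basis of } \Tan_{\Phi(\mathrm p)}M_2)$ pulled back, or more simply work locally in Darboux coordinates on $M_2$ whose coordinate functions have Hamiltonian vector fields spanning $\Tan M_2$; their images under $\Phi_*^{-1}$ span $\Tan M_1$ locally. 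Writing $\{f_2,g_2\} = \Omega_2(X_{f_2},X_{g_2})$ on the left and $\{\Phi^*f_2,\Phi^*g_2\} = \Omega_1(X_{\Phi^*f_2},X_{\Phi^*g_2})$ on the right, and using $\Phi^*(\Omega_2(X_{f_2},X_{g_2})) = (\Phi^*\Omega_2)(\Phi_*^{-1}X_{f_2},\Phi_*^{-1}X_{g_2})$, the hypothesis becomes $(\Phi^*\Omega_2)(\Phi_*^{-1}X_{f_2},\Phi_*^{-1}X_{g_2}) = \frac{1}{c}\Omega_1(X_{\Phi^*f_2},X_{\Phi^*g_2})$; since the vectors $\Phi_*^{-1}X_{f_2}$ and $X_{\Phi^*f_2}$ span the same tangent spaces as $f_2$ ranges over all functions, comparing the two $2$-forms on a spanning set of tangent vectors yields $\Phi^*\Omega_2 = \frac{1}{c}\Omega_1 \cdot (\text{something})$; a little care shows the proportionality constant is exactly $c$, i.e.\ $\Phi^*\Omega_2 = c\,\Omega_1$. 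Then Theorem \ref{teosim} gives that $\Phi$ is canonical of valence $c$. The main obstacle is precisely this last bookkeeping in the converse: untangling the $1/c$ factors coming from the fact that the Hamiltonian vector field of $\Phi^*f_2$ with respect to $\Omega_1$ differs by a factor $1/c$ from $\Phi_*^{-1}$ of the Hamiltonian vector field of $f_2$, and verifying that this forces the valence to be $c$ rather than some other constant; doing this cleanly is easiest by passing to Darboux coordinates on both manifolds and checking the canonical brackets $\{x^i,y_j\}$.
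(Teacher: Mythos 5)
Your forward implication is correct and is essentially the paper's own argument in slightly different clothing: the paper writes the bracket as $\Lie(X_{g_2})f_2$ and uses the preceding proposition to get $\Phi_*^{-1}X_{g_2}=X_{\frac{1}{c}\Phi^*g_2}$, while you use $\{f,g\}=\Omega(X_f,X_g)$ together with naturality of the pullback; the two computations are interchangeable.

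The converse, however, has a genuine gap. The relation $\Phi_*^{-1}X_{f_2}=X_{\frac{1}{c}\Phi^*f_2}$ is available only once you already know $\Phi^*\Omega_2=c\,\Omega_1$, so in the converse you may not invoke it. Without it, your rewritten hypothesis
$(\Phi^*\Omega_2)\big(\Phi_*^{-1}X_{f_2},\Phi_*^{-1}X_{g_2}\big)=\frac{1}{c}\,\Omega_1\big(X_{\Phi^*f_2},X_{\Phi^*g_2}\big)$
compares the two $2$-forms evaluated on two \emph{different} pairs of vectors, so "comparing on a spanning set" cannot yield $\Phi^*\Omega_2=c\,\Omega_1$; that is exactly the circular step hidden in "a little care shows". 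The way to close it (and the paper's route) is to identify the vector fields first: write the hypothesis as
$\Lie(\Phi_*^{-1}X_{g_2})\Phi^*f_2=\Lie\big(X_{\frac{1}{c}\Phi^*g_2}\big)\Phi^*f_2$ for every $f_2$, and, since $\Phi^*$ is onto $\Cinfty(M_1)$, conclude $\Phi_*^{-1}X_{g_2}=X_{\frac{1}{c}\Phi^*g_2}\in\vf_{lh}(M_1)$ for every $X_{g_2}\in\vf_{lh}(M_2)$. This already shows that $\Phi$ carries Hamiltonian vector fields to Hamiltonian vector fields, hence is canonical, and the valence is then pinned down by Lee Hwa Chung's theorem via Theorem \ref{teosim} (equivalently, $\inn(\Phi_*^{-1}X_{g_2})(\Phi^*\Omega_2-c\,\Omega_1)=0$ for all $g_2$, and these fields span the tangent spaces by Lemma \ref{auxil} transported through $\Phi_*^{-1}$). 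Your Darboux fallback can also be made to work, but only with the missing final step spelled out: the relations $\{\Phi^*\tilde x^i,\Phi^*\tilde y_j\}=c\,\delta^i_j$ (all other brackets zero) determine the Poisson bivector of $(M_1,\Omega_1)$ in the coordinate system $(\Phi^*\tilde x^i,\Phi^*\tilde y_i)$, and inverting that bivector gives $\Omega_1=\frac{1}{c}\Phi^*\Omega_2$; merely "checking the canonical brackets" stops short of this inversion.
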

\begin{proof}
Let $f_{2},g_2 \in \Cinfty (M_2)$ and
$X_{f_{2}},X_{g_2} \in \vf_{h}(M_2)$ be the corresponding Hamiltonian vector fields.

If $\Phi$ is a canonical transformation, then $\Phi_*^{-1}X_{g_2} \in \vf_{h}(M_1)$
and, by the above proposition we know that
$\inn (\Phi_*^{-1}X_{g_2})\Omega_1 = \d (\frac{1}{c}\Phi^*g_2)$;
that is, $\Phi_*^{-1}X_{g_2} = X_{\frac{1}{c}\Phi^*g_2}$,
and then
$$
\Phi^*\{ f_2,g_2 \} =
\Phi^*(\Lie (X_{g_2})f_2) =
\Lie (\Phi_*^{-1}X_{g_2})\Phi^*f_2
=\Lie (X_{\frac{1}{c}\Phi^*g_2})\Phi^*f_2 =
\frac{1}{c}\{ \Phi^*f_2,\Phi^*g_2 \} \ .
$$

Conversely, if the condici\'on holds, first we have that,
given $f_2,g_2 \in \Cinfty (M_2)$,
$$
\Phi^*\{ f_2,g_2 \} =
\Lie (\Phi_*^{-1}X_{g_2})\Phi^*f_2 \ ,
$$
and, furthermore,
$$
\frac{1}{c}\{ \Phi^*f_2,\Phi^*g_2 \} =
\Lie (X_{\frac{1}{c}\Phi^*g_2})\Phi^*f_2 \ ;
$$
hence, combining these two last equalities,
$$
\Phi_*^{-1}X_{g_2} = X_{\frac{1}{c}\Phi^*g_2} \in \vf_{lh}(M_1)
\ ; \ \forall X_{g_2} \in \vf_{lh}(M_2) \ .
$$
Therefore, $\Phi$ is a canonical transformation
with valence $c$ as a consequence of Lee Hwa Chung's Theorem.
\\ \qed  \end{proof}

\begin{remark}{\rm 
This result states that a transformation is canonical
if, and only if, 
the Poisson bracket is invariant, up to a multiplicative constant, by its action.
Really, this is an alternative way to say that
the symplectic structure is invariant by the transformation.
}\end{remark}

From the local point of view, if $\Phi:M_{1}\to M_{2}$ is a canonical transformation with valence $c=1$, and we have Darboux coordinates
$(x^i,y_i)$ in $M_1$ and $(\tilde  x^i,\tilde  y_i)$ in $M_2$, we have

\hspace{40mm}\(\begin{array}{ccccccc}
\{\Phi^*\tilde  x^i,\Phi^*\tilde  x^j\}&=&\Phi^*\{\tilde  x^i,\tilde  x^j\}  \ , \\
\{\Phi^*\tilde  y_i,\Phi^*\tilde  y_j\}&=&\Phi^*\{\tilde  y_i,\tilde  y_j\}  \ ,\\
\{\Phi^*\tilde  y_i,\Phi^*\tilde  y_j\} &=&\Phi^*\{\tilde  y_i,\tilde  y_j\}   \,,
\end{array}\)

\noindent that is: $(\Phi^*\tilde  x^i,\Phi^*\tilde  y_i)$ is a symplectic coordinate system in $M_{1}$ if, and only if, $(\tilde  x^i,\tilde  y_i)$ is a symplectic coordinate system in $M_{2}$.

\subsection{Generating functions of canonical transformations}

Another characterization of canonical transformations
is by means of the so called  {\sl generating functions}.

\begin{prop}
Let $(M_1,\Omega_1)$ and $(M_2,\Omega_2)$ be
symplectic manifolds and $\Phi \in {\rm Diff}\,  (M_1,M_2)$.
Let $U_1 \subset M_1$ and $U_2 :=\Phi (U_1) \subset M_2$, and
$\Theta_i \in {\mit\Omega}^1(U_i)$ such that
$\Omega_i\mid_{U_i} = \d \Theta_i$, ($i=1,2$).
Then $\Phi$ is
a canonical transformation, with valence $c$, if, and only if,
there exists a function $F_1 \in \Cinfty (U_1)$ such that
$$
(\Phi^*\Theta_2 - c\Theta_1 - \d F_1)\mid_{U_1} = 0 \ ,
$$
or, equivalently, that there exists a function $F_2 \in \Cinfty (U_2)$
such that
$$
(\Phi^{*^{-1}}\Theta_1 - \frac{1}{c}\Theta_2 - \d F_2)\mid_{U_2} = 0 \ .
$$
These functions $F_1,F_2$ are called {\sl \textbf{(Poincar\'e) generating functions}}
of the canonical transformation, and the relation between them is 
$F_1 = c\Phi^*F_2 + k$, $k \in {\bf R}$.
\end{prop}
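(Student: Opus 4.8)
The plan is to characterize canonical transformations via Theorem~\ref{teosim}, which tells us that $\Phi$ is a canonical transformation of valence $c$ if, and only if, $\Phi^*\Omega_2 = c\,\Omega_1$. So the proposition reduces to the claim that, on the open sets $U_1$ and $U_2 = \Phi(U_1)$ where symplectic potentials $\Theta_1,\Theta_2$ exist with $\Omega_i|_{U_i} = \d\Theta_i$, the identity $\Phi^*\Omega_2 = c\,\Omega_1$ holds if and only if the $1$-form $\Phi^*\Theta_2 - c\,\Theta_1$ is exact on $U_1$.

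First I would establish the forward implication. Suppose $\Phi$ is a canonical transformation of valence $c$. Then by Theorem~\ref{teosim}, $\Phi^*\Omega_2 = c\,\Omega_1$ on $U_1$. Using $\Omega_i|_{U_i} = \d\Theta_i$ and the fact that pullback commutes with the exterior differential, we get
$$
\d(\Phi^*\Theta_2 - c\,\Theta_1) = \Phi^*\d\Theta_2 - c\,\d\Theta_1 = \Phi^*\Omega_2 - c\,\Omega_1 = 0 \ ,
$$
so the $1$-form $\Phi^*\Theta_2 - c\,\Theta_1$ is closed on $U_1$. Since $U_1$ is (or, after shrinking, may be taken to be) an open set on which Poincar\'e's Lemma applies — here one should note that the hypothesis implicitly grants that $\Omega_1$ is exact on $U_1$, hence $U_1$ can be assumed contractible or at least that closed $1$-forms on it are exact — there exists $F_1 \in \Cinfty(U_1)$ with $\Phi^*\Theta_2 - c\,\Theta_1 = \d F_1$ on $U_1$, which is the asserted identity $(\Phi^*\Theta_2 - c\,\Theta_1 - \d F_1)|_{U_1} = 0$.

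Conversely, if such an $F_1$ exists with $(\Phi^*\Theta_2 - c\,\Theta_1 - \d F_1)|_{U_1} = 0$, then taking $\d$ of both sides and using $\d^2 = 0$ yields $\Phi^*\Omega_2 = c\,\Omega_1$ on $U_1$; since this is a local condition and $\Phi$ is a diffeomorphism, one deduces (covering $M_1$ by such neighbourhoods, or simply invoking that the stated condition is assumed to hold in the given chart) that $\Phi^*\Omega_2 = c\,\Omega_1$, hence $\Phi$ is a canonical transformation of valence $c$ by Theorem~\ref{teosim}. The equivalence with the statement involving $F_2$ follows by applying the already-proved equivalence to $\Phi^{-1}$, which is a canonical transformation of valence $1/c$ (pull back $\Phi^*\Omega_2 = c\,\Omega_1$ by $\Phi^{*^{-1}}$ to get $\Omega_1 = c\,\Phi^{*^{-1}}\Omega_1 \cdot \tfrac1c$, i.e. $\Phi^{*^{-1}}\Theta_1 - \tfrac1c\Theta_2$ is closed on $U_2$), producing $F_2 \in \Cinfty(U_2)$ with $(\Phi^{*^{-1}}\Theta_1 - \tfrac1c\Theta_2 - \d F_2)|_{U_2} = 0$. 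Finally, to relate $F_1$ and $F_2$: pull back the defining identity for $F_2$ by $\Phi$, obtaining $\Theta_1 - \tfrac1c\Phi^*\Theta_2 - \Phi^*\d F_2 = 0$ on $U_1$; multiplying by $-c$ gives $\Phi^*\Theta_2 - c\,\Theta_1 = -c\,\d(\Phi^*F_2)$, but the left side equals $\d F_1$, so $\d(F_1 + c\,\Phi^*F_2) = 0$ on $U_1$, whence $F_1 = -c\,\Phi^*F_2 + k$ for some constant $k$ — I would double-check the sign convention against the statement $F_1 = c\,\Phi^*F_2 + k$, which likely reflects a sign choice in the definition of $F_2$ (e.g. replacing $F_2$ by $-F_2$), and adjust accordingly.

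The only genuinely delicate point is the topological one: Poincar\'e's Lemma gives exactness of a closed form only on contractible (or suitably simple) open sets, so the cleanest reading is that $U_1$, $U_2$ are taken small enough — e.g. coordinate balls, as in the Remark following Definition~\ref{vsm} — that closed $1$-forms on them are exact; the hypothesis ``$\Omega_i|_{U_i} = \d\Theta_i$'' already presupposes we are working on such sets. Everything else is a routine application of $\d(\Phi^*\alpha) = \Phi^*(\d\alpha)$ together with Theorem~\ref{teosim}; no computation beyond what is written above is required.
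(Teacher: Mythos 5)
Your proof is correct and follows essentially the same route as the paper's: invoke Theorem \ref{teosim} to reduce everything to $\Phi^*\Omega_2 = c\,\Omega_1$, observe that this is equivalent to $\Phi^*\Theta_2 - c\,\Theta_1$ being closed, apply Poincar\'e's Lemma, treat the $F_2$ statement analogously, and compare to relate $F_1$ and $F_2$ (the paper's own proof is just this, stated more tersely). Your explicit computation giving $F_1 = -c\,\Phi^*F_2 + k$ is in fact correct for the formulas as displayed, so the sign discrepancy you flagged lies in the paper's stated relation rather than in your argument, and your caveat about needing $U_1$, $U_2$ small enough for Poincar\'e's Lemma is a legitimate point that the paper also glosses over.
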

\begin{proof}
From Theorem \ref{teosim},
$\Phi$ is a canonical transformation if, and only if,
$$
0 = (\Phi^*\Omega_2 - c\Omega_1)\mid_{U_1} = \d (\Phi^*\Theta_2 - \Theta_1) \ ,
$$
then Poincar\'e's Lemma leads to the result.

The statement concerning $F_2$ is obtained in an analogous way,
and comparing both results we arrive to the relation between these functions.
\\ \qed  \end{proof}

In classical texts of Mechanics (see, for example, \cite{Ga-70, LL-76, GPS-01}) a more general concept
of generating function, which include the above ones, is studied. 
In the geometrical context, they are introduced as follows
 \cite{AM-78}
(without loss of generality, we will restrict ourselves to the case $c=1$;
that is, symplectomorphisms or univalent canonical transformations).

First, taking into account Proposition \ref{productsymp}, we have that:

\begin{prop}
Let $(M_1,\Omega_1)$, $(M_2,\Omega_2)$ be symplectic manifolds
with $\dim M_1 = \dim M_2$,
 $\Phi\colon M_1\to M_2$ a diffeomorphism, and
${\jmath}\colon graph\,\Phi\hookrightarrow M_1\times M_2$ the natural embedding.

The map $\Phi$ is a symplectomorphism if, and only if,
$graph\,\Phi$ is a Lagrangian submanifold of the symplectic manifold
$(M_1\times M_2,\Omega_{12}:=\pi_1^*\Omega_1-\pi_2^*\Omega_2)$.
\end{prop}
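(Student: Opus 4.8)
The plan is to realize $graph\,\Phi$ explicitly as the image of $M_1$ under a single parametrization and to pull $\Omega_{12}$ back along it. Introduce the diffeomorphism $\psi\colon M_1\to graph\,\Phi$, $\psi(x)=(x,\Phi(x))$, with inverse $\pi_1|_{graph\,\Phi}$; then $\jmath\circ\psi\colon M_1\to M_1\times M_2$ satisfies $\pi_1\circ(\jmath\circ\psi)=\mathrm{id}_{M_1}$ and $\pi_2\circ(\jmath\circ\psi)=\Phi$. The only computation needed is
\[
(\jmath\circ\psi)^*\Omega_{12}=(\jmath\circ\psi)^*\pi_1^*\Omega_1-(\jmath\circ\psi)^*\pi_2^*\Omega_2=\Omega_1-\Phi^*\Omega_2 ,
\]
which follows from functoriality of pullback together with $\Omega_{12}=\pi_1^*\Omega_1-\pi_2^*\Omega_2$.

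Next I would translate the geometric conditions. By the remark following the definition of Lagrangian submanifold, $graph\,\Phi$ is isotropic if and only if $\jmath^*\Omega_{12}=0$; since $\psi$ is a diffeomorphism onto $graph\,\Phi$, this holds if and only if $(\jmath\circ\psi)^*\Omega_{12}=0$, that is, by the identity above, if and only if $\Omega_1=\Phi^*\Omega_2$, i.e. $\Phi$ is a symplectomorphism. It remains to observe that the half-dimension requirement is automatic here: because $\dim M_1=\dim M_2$ we have $\dim graph\,\Phi=\dim M_1=\frac12(\dim M_1+\dim M_2)=\frac12\dim(M_1\times M_2)$ with no hypothesis on $\Phi$, and $(M_1\times M_2,\Omega_{12})$ is symplectic by Proposition \ref{productsymp}. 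Hence for $graph\,\Phi$ the Lagrangian condition (isotropic together with half the ambient dimension) reduces to isotropy, and therefore to $\Phi$ being a symplectomorphism, which gives both implications at once.

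I do not anticipate a genuine obstacle in this argument; the single substantive point is the identity $(\jmath\circ\psi)^*\Omega_{12}=\Omega_1-\Phi^*\Omega_2$, and the only things to check carefully are that $\psi$ is a smooth embedding with smooth inverse $\pi_1|_{graph\,\Phi}$ (so that $graph\,\Phi$ is an embedded submanifold of dimension $\dim M_1$, as the statement already takes for granted) and that pullback distributes over the difference of the two forms and commutes with the projections. Both are routine.
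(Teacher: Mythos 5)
Your proof is correct and follows essentially the same route as the paper: both rest on the fact that the pullback of $\Omega_{12}$ to $graph\,\Phi$ is $\Omega_1-\Phi^*\Omega_2$, with the half-dimension condition automatic from $\dim M_1=\dim M_2$. You merely make explicit (via the parametrization $\psi$) the computation the paper states as "the other condition holds trivially", which also cleanly gives both implications at once.
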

\begin{proof}
Remember that {\sl Lagrangian submanifolds} of a symplectic manifold 
$({\cal M},\Omega)$ are {\sl maximal isotropic submanifolds}
${\jmath}\colon S\hookrightarrow{\cal M}$ or, equivalently,
such that $S$ verify that ${\rm dim}\,S=\frac{1}{2}{\rm dim}\,{\cal M}$
and ${\jmath}^*\Omega=0$.
Obviously ${\rm dim}(graph\,\Phi)=\frac{1}{2}{\rm dim}\,(M_1\times M_2)$ ,
and as $\Phi^*\Omega_2=\Omega_1$,
the other condition holds trivially.
\\ \qed  \end{proof}

If $\Omega_j=-\d\theta_j$, $j=1,2$,
and $\Theta_j$ are local symplectic potentials for $\Omega_j$;
being $graph\,\Phi$ a Lagrangian submanifold we have
\beq
0=\jmath^*(\pi_1^*\Omega_1-\pi_2^*\Omega_2)=\d\jmath^*(\pi_2^*\Theta_2-\pi_1^*\Theta_1)
\eeq
which, given a point in  $graph\,\Phi$, is locally equivalent to
\beq
 \jmath^*(\pi_2^*\Theta_2-\pi_1^*\Theta_1)\vert_W=-\d{\cal S} \ .
\label{Wlocal}
\eeq
where ${\cal S}$ is a function defined in 
an open neighbourhood of the given point, $W\subset graph\,\Phi$.
This function depends on the choice of $\Theta_1$ and $\Theta_2$.

\begin{definition}
${\cal S}$ is called a
\textbf{Weinstein generating function} of the Lagrangian submanifold $graph\,\Phi$
and hence of the symplectomorphism $\Phi$.
\end{definition}

If  $(U_1; x^i,y_i)$, $(U_2; \tilde x^i,\tilde y_i)$ are Darboux charts such that
$W\subset U_1\times U_2$, local coordinates in $W$  can be chosen in several ways.
This leads to six different possible choices for ${\cal S}$.
Thus, for instance,  if $(W; x^i,\tilde x^i)$ is a chart,
then \eqref{Wlocal} gives the symplectomorphism  explicitly as
$$
\tilde y_i\d\tilde x^i-y_i\d x^i=-\d {\cal S}(x,\tilde x)
\quad \Longleftrightarrow\ \quad
\tilde y_i=-\derpar{{\cal S}}{\tilde x^i}(x,\tilde x)\ , \ y_i=\derpar{{\cal S}}{x^i}(x,\tilde x)\ .
$$
Of course, the {\sl Poincar\'e generating functions} are two particular choices
of {\sl Weinstein generating functions}.

\section{Hamiltonian dynamical systems}
\protect\label{shreh}

Once the foundations of symplectic geometry are established,
we are ready to use it for describing the autonomous Hamiltonian dynamical systems.
As we will see in the next chapter, this geometric formulation includes, as particular cases, 
the Lagrangian and the canonical Hamiltonian formalisms of variational dynamical systems;
that is, those which are described by Lagrangian functions.

\subsection{Hamiltonian systems}

Some of the authors who developed geometric mechanics chose an axiomatic manner in their exposition
(see, for instance, \cite{So-ssd}).
Following their ideas, first, we state the postulates
for the geometric study of autonomous Hamiltonian dynamical systems.

The first postulate concerns the physical states:
\begin{pos}
{\rm (First Postulate of Hamiltonian mechanics\/)}:
The state space, or phase space, of a dynamical system 
is a differentiable manifold $M$ endowed with a closed form
$\Omega\in Z^2(M)$ such that:
\bit
\item
If $\Omega$ is nondegenerate; that is {\sl symplectic},
the system is {\sl regular} and the dimension of $M$
is twice the number of degrees of freedom of the system.
In this case, every point of this manifold represents a physical state of the system.
\item
If $\Omega$ is degenerate, that is {\sl presymplectic},
the system is {\sl singular} (and the manifold $M$ 
is not even-dimensional necessarily).
\eit
\label{ax1}
\end{pos}

The second Postulate refers to the {\sl  observables},
that is,  the {\sl physical magnitudes}:

\begin{pos}
{\rm (Second Postulate of Hamiltonian mechanics\/)}:
The observables or physical magnitudes of a  dynamical  system
are functions of $\Cinfty (M)$.

The result of the measure of an observable
is the value that the function which represents it takes 
at a point in the phase space $M$
(that is, in a given state, according to the first postulate).
\label{ax2}
\end{pos}

If, according to the Postulate \ref{ax1},
a (symplectic or presymplectic) manifold $(M,\Omega )$
constitutes the phase space of a dynamical system,
there is a very natural way of introducing the dynamics.
Thus, we state:

\begin{pos}
{\rm (Third Postulate of Hamiltonian mechanics\/)}:
The dynamics of a dynamical  system
is given by a closed $1$-form $\alpha \in Z^1(M)$
which is called the {\sl \textbf{Hamiltonian $1$-form}} of the system
\footnote{
In some cases, the 2-form $\Omega$ can also contain dynamical information
as it happens, for instance, in the Lagrangian formalism of  the Lagrangian systems
(see Chapter \ref{sdl}).}.
\end{pos}

And, finally, the dynamical equations are stated in:

\begin{pos}
{\rm (Fourth Postulate of Hamiltonian mechanics\/)}:
The dynamical trajectories of the system are the integral curves
of a vector field $X_{\alpha} \in \vf(M)$, if it exists, associated with the form $\alpha$
by the map $\flat_\Omega$; that is,
of the vector field solution to the equation
\beq
\label{Hvecf}
\inn(X_{\alpha})\Omega = \alpha \ .
\eeq
Then, the integral curves $c\colon I\subseteq\Real\to M$ of $X_\alpha$ are solutions to the equation
\beq
\label{Hvecf2}
\inn(\widetilde c)(\Omega\circ c)=\alpha\circ c \ . 
\eeq
The vector field $X_\alpha$ is called a (local or global) {\sl \textbf{Hamiltonian vector field}}
of the Hamiltonian system, and equations \eqref{Hvecf} and \eqref{Hvecf2} are the {\sl \textbf{Hamilton equation}} for $X_\alpha$
and its integral curves.

These equations can be obtained from a variational principle called
the {\sl  minimal action Principle of Hamilton--Jacobi}
\footnote{
As we will see in Section \ref{hamvariational}.}.
\label{ax3}
\end{pos}

Then, we define:

\begin{definition}
\begin{enumerate}
\item
A \textbf{regular} or \textbf{symplectic Hamiltonian dynamical system}
is a triple $(M,\Omega ,\alpha )$,
where $(M,\Omega )$ is a symplectic manifold 
and $\alpha \in Z^1 (M)$ is the Hamiltonian $1$-form of the system.
If $(M,\Omega )$ is a presymplectic manifold, then $(M,\Omega ,\alpha )$
is said to be a \textbf{singular} or \textbf{presymplectic Hamiltonian dynamical system}.
\item
By Poincar\'e's Lemma, for every $m \in M$, 
there exists $U \subset M$, with $m\in U$,
and $h \in \Cinfty (U)$, such that $\alpha\mid_U = \d h$,
which is called a \textbf{local Hamiltonian function} of the system,
and the above mentioned triple is said to be a
\textbf{local Hamiltonian system}.

If $\alpha$ is an exact form, then there exist 
$h \in \Cinfty (M)$ such that $\alpha = \d h$,
which is called a \textbf{global Hamiltonian function} of the system,
and the triple $(M,\Omega ,h)$ sis said to be a
\textbf{global Hamiltonian system}.
\end{enumerate}
\label{sdhr}
\end{definition}

\subsection{Hamilton equations}

Given a Hamiltonian dynamical system $(M,\Omega,\alpha)$, 
the {\sl \textbf{Hamiltonian problem}} posed by  the system
consists in finding a vector field $X_\alpha \in \vf (M)$
verifying  equations (\ref{Hvecf}).

For the case of regular systems, we have

\begin{prop}
\label{teo-hameqs}
If $(M,\Omega,\alpha)$ is a regular Hamiltonian system, 
then there exists a unique vector field $X_\alpha\in\vf(M)$ which is a solution to equation  \eqref{Hvecf}
\end{prop}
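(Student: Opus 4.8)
The plan is to use the non-degeneracy of $\Omega$, which makes the map $\flat_\Omega\colon\vf(M)\to\df^1(M)$ a $\Cinfty(M)$-module isomorphism. Since equation \eqref{Hvecf} reads $\flat_\Omega(X_\alpha)=\inn(X_\alpha)\Omega=\alpha$, its unique solution is simply $X_\alpha=\sharp_\Omega(\alpha)$, where $\sharp_\Omega=\flat_\Omega^{-1}$ is the inverse isomorphism introduced in Subsection \ref{icch}. Existence follows because $\sharp_\Omega$ is defined on all of $\df^1(M)$ and $\alpha\in Z^1(M)\subset\df^1(M)$; uniqueness follows because $\flat_\Omega$ is injective, so if $\inn(X)\Omega=\inn(X')\Omega$ then $\inn(X-X')\Omega=0$, which forces $X=X'$ at every point by the pointwise non-degeneracy of $\Omega_{\rm p}$.

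Concretely, I would first recall that $(M,\Omega)$ being symplectic means $\Omega_{\rm p}$ is a linear symplectic structure on $\Tan_{\rm p}M$ for each ${\rm p}$, so the pointwise map $\flat_{\Omega_{\rm p}}\colon\Tan_{\rm p}M\to\Tan_{\rm p}^*M$ is a linear isomorphism. These assemble into a vector bundle isomorphism $\flat_\Omega\colon\Tan M\to\Tan^*M$, hence into an isomorphism on sections. Then I would simply set $X_\alpha:=\sharp_\Omega(\alpha)$ and note that by construction $\inn(X_\alpha)\Omega=\alpha$, establishing existence, while any other solution $X'$ satisfies $\flat_\Omega(X')=\alpha=\flat_\Omega(X_\alpha)$, so $X'=X_\alpha$ by injectivity.

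There is essentially no obstacle here: the proposition is an immediate corollary of the fact that a symplectic form induces the canonical isomorphism, which was already set up in the excerpt. The only point requiring a word of care is that the argument is genuinely pointwise — one should observe that $\inn(Z)\Omega=0$ as a $1$-form implies $\inn(Z_{\rm p})\Omega_{\rm p}=0$ for each ${\rm p}$, and then invoke non-degeneracy of the bilinear form $\Omega_{\rm p}$ to conclude $Z_{\rm p}=0$. Smoothness of $X_\alpha$ is automatic since it is the image of the smooth form $\alpha$ under the smooth bundle map $\sharp_\Omega$. This is why the contrast with the presymplectic case (where $\flat_\Omega$ is neither injective nor surjective, so neither existence nor uniqueness holds in general) is worth flagging in a closing remark.
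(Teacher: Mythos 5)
Your proof is correct and follows exactly the same route as the paper: the paper's argument is simply that regularity makes $\flat_\Omega$ the canonical isomorphism of $(M,\Omega)$, so existence and uniqueness of $X_\alpha=\sharp_\Omega(\alpha)$ are immediate. Your version just spells out the pointwise non-degeneracy and smoothness details that the paper leaves implicit.
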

\begin{proof}
If the system is regular, $\flat_\Omega$ is the canonical  isomorphism of
the symplectic manifold $(M,\Omega)$ and
the existence and  uniqueness of $X_\alpha$ is assured.
\\ \qed  \end{proof}

\bigskip
\begin{remark}{\rm 
If $(M,\Omega,\alpha)$ is a presymplectic Hamiltonian system,
the Hamilton equations are not necessarily compatible everywhere on $M$ and we need to look for the maximal subset of $M$ where there exists solution. 
This subset is obtained by a procedure called the {\sl constraint algorithm} which, in the most favourable cases, 
leads to find  a {\sl final constraint submanifold} $P_f\hookrightarrow M$, where there are Hamiltonian vector fields $X_\alpha\in\vf(M)$,
tangent to $P_f$, which are
solutions to the Hamilton equations on $P_f$; although 
these vector fields solution are not necessarily unique. 
Functions vanishing on the final constraint submanifold are called {\sl constraints}.

Singular (presymplectic) Hamiltonian systems and the corresponding constraint algorithms
have been widely studied in the literature. The interested reader can 
see, for instance, \cite{Ca-90,CGIR-85,Dir-64,GNH-78,GP-92,HRT-76,ILDM-99,MMT-97,Mu-89,Sn-74}
and others papers cited therein. Apart from some examples in the problems, 
in this work we only work with regular Hamiltonian systems. 
}\end{remark}

\bigskip
\noindent {\bf Local expressions}:
If $(M,\Omega, \alpha)$ is a regular Hamiltonian system, in a symplectic chart $(U;x^i,y_i)$ of $M$ with $\alpha=\d h$,
if $\displaystyle X\vert_U=f^i\derpar{}{x^i}+g_i\derpar{}{y_i}\in\vf (M)$
then, according to \eqref{eq:locHvf}, we have that
$$
X_{\alpha}\mid_U = \derpar{h}{y_i}\derpar{}{x^i}-\derpar{h}{x^i}\derpar{}{y_i} \ ,
$$
and the integral curves $c(t)=(x^i(t),y_i(t))$ of $X_{\alpha}$ are the solution to the Hamilton equations \eqref{Hcurv} whose local expression is
$$
\frac{d x^i}{d t} = \derpar{h}{y_i}(c(t))
\quad , \quad
\frac{d y_i}{d t} = -\derpar{h}{x^i}(c(t)) \ .
$$

\begin{remark}{\rm 
\begin{itemize}
\item
In a Hamiltonian dynamical  system,
the physical observable represented by the Hamiltonian function 
is associated with the energy of the system.
\item
This presentation of dynamical systems
is called {\sl Hamiltonian formalism of Mechanics} 
because the dynamical trajectories
are given by  the integral curves of a Hamiltonian vector field.
\end{itemize}
}\end{remark}

\section{Symmetries of regular Hamiltonian systems}
\label{secsym}

In this section we introduce the ideas of {\sl constants of motion}, or {\sl conserved quantities}, and the {\sl symmetries} of a dynamical system and the relationship between both notions. Noether's theorem will be proved and actions of Lie group on Hamiltonian systems together with reduction theory will be also introduced and developed.

\subsection{Preliminaries}

The adequate way of doing this study
is by means of the theory of {\sl actions of Lie groups} on
symplectic and presymplectic manifolds
(see, for instance, \cite{AM-78,CP-adg,EMR-99,Gi-hss,GGK-2002,LM-sgam,MSSV-85,Ok-87}).
Within this theory, the associated concept of {\sl momentum map} plays a crucial role 
in order to introduce conserved quantities and the subsequent 
{\sl Marsden--Weinstein reduction theorem}.
Although this theorem was initially stated for regular autonomous Hamiltonian systems,
the Marsden--Weinstein technique has been applied and generalized to many different situations.
For instance,  for time-dependent regular Hamiltonian systems 
(with regular values of the momentum map)
in the framework of cosymplectic manifolds \cite{albert} and
for autonomous and nonautonomous Hamiltonian systems 
with singular values of the momentum map \cite{ACG-91,LS-93,SL-91},
also for singular systems in the autonomous (presymplectic)
case \cite{CCCI-86,EMR-99} and in the nonautonomous case \cite{LMR-92,GIMP-2025,IM-92}.
Further generalizations are reduction on Poisson and Jacobi manifolds \cite{CCM-2007,GLMV-94,LOT-2009,MR-86,NP-2002,NP-2005},
on Lie algebroids \cite{CNS-2005,CNS-2007},
on cotangent bundles of Lie groups \cite{MRW-84},
Lagrangian reduction \cite{CMR-2001,MS-93}, Euler-Poincar\'e reduction \cite{CGR-2001,CRS-2000},
Routh reduction for regular and singular Lagrangians \cite{CL-2010,LCV-2010},
reduction of nonholonomic systems \cite{BS-93,BKMM-1996,CLMM-98,Ma-95},
reduction of optimal control systems \cite{BC-99,Ma-2004,Su-95,Sh-87},
and in the context of Dirac structures and implicit Hamiltonian systems
\cite{BVS-2000,MCL-01}.
Finally, although symmetries and conservation laws in field theories
have been studied in several geometric frameworks (see, for instance,
\cite{LMS-2004,Gimmsy,GR-2023,GR-2024,MRSV-2010,RSV-2007}
and the references quoted therein),
the problem of reduction by symmetries of classical field theories
has been solved only for particular situations such as Lagrangian and Poisson
reduction \cite{CM-2003}, Euler-Poincar\'e reduction in principal fiber bundles 
\cite{CGR-2007,CR-2003},
reduction in multisymplectic and poly(co)symplectic manifolds \cite{LGRRV-2022,LRVZ-2023,EMR-2018,MRSV-2015},
and for discrete field theories \cite{Va-2007}.
Of course, this list of references is far from being complete.

To develop our study we follow approximately the historical order: 
we begin by introducing the basic notions about conserved quantities 
and symmetries, paying particular attention to the case of 
{\sl Noether symmetries} and {\sl Noether's theorem}. The primitive ideas were the constants of motion, the linear momentum and the angular momentum, and their relation with symmetries as global transformation of the space of positions of a system, and after that the infinitesimal symmetries and the Noether theorem. The modern approach comes from the action of Lie groups on the space of states of the system and the reduction theory, hence we review this theory, the notions of comomentum and momentum maps and the subsequent {\sl Marsden--Weinstein reduction theorem}.
We consider only the case of regular dynamical systems 
(although some results can be generalized to the singular case). 

Throughout this section $(M,\Omega,\alpha)$ will be
a regular Hamiltonian dynamical system and $\alpha=\d h$, where the Hamiltonian function $h$ is locally or globally defined.
Usually we will write $(M,\Omega,h)$.
Then, $X_h\in\vf_{lh}(M)$ denotes the dynamical vector field
solution to the system.

\subsection{Conserved quantities (constants of motion)}
\label{conquan}

Let $(M,\Omega,h)$ be a regular Hamiltonian system and $X_h\in\vf_{lh}(M)$ the dynamical vector field. 

The dynamical evolution of an observable which is represented
by a function $f \in \Cinfty (M)$,
is the variation of this function along the integral curves 
$c(t)=(q^i(t),y_i(t))$ of the vector field $X_h$; that is, is given by
$$
\frac{d (f\circ c)(t)}{d t} = \left(\left(\Lie (X_h)f\right)\circ c\right)(t)=\left(X_h(f)\circ c\right)(t) \ ,
$$
and, bearing in mind the definition of Poisson bracket, it can be written as
$$
\frac{d (f\circ c)}{d t} = \{ f,h \}\circ c \ .
$$
Or, in a symplectic chart $(U;x^i,y_i)$ of $M$, 
$$
\Lie (X_h)f)\mid_U=X_h(f)\mid_U = \derpar{h}{y_i}\derpar{f}{x^i}-\derpar{h}{x^i}\derpar{f}{y_i} \ .
$$

Then, we define:

\begin{definition}
A function $f \in \Cinfty (M)$ is a \textbf{conserved quantity}
or a \textbf{constant of motion} if
$$
\Lie (X_h) f = 0 \ ;
$$
that is, it is invariant by the dynamical vector field.
\end{definition}

To say that $f\in\Cinfty (M)$ is a conserved quantity of the
system $(M,\Omega,h)$ means the following:
if ${\rm p} \in M$ and $X_h$ is the Hamiltonian vector field of the system,
let $c\colon(-\epsilon ,\epsilon )\subset\Real\to M$ be the integral curve of $X_h$ 
with initial condition $c(0)={\rm p}$.
Then, if $f$ is a conserved quantity, we have that
$f(c(t))=f({\rm p})$, for every $t\in(-\epsilon ,\epsilon )$;
that is, the image of $c$ by $f$ is contained in $S_{\rm p}=\{ {\rm q}\in M \ \mid \ f({\rm q})=f({\rm p})\}$,
the level surface of $f$ passing through ${\rm p}$.

One of the fundamental properties of the autonomous Hamiltonian dynamical systems
is the {\sl  conservation of the Hamiltonian}. Remember that for the physical systems, the Hamiltonian is the energy of the system.
Now, we can state this result in a geometric way:

\begin{prop}
{\rm (Conservation of energy)}:
Let $(M,\Omega,h)$ be a Hamiltonian system.
The (local or global) Hamiltonian function $h$ is a conserved quantity.
\end{prop}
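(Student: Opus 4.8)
The plan is to use the third property of the Poisson bracket established earlier, namely that $\{f,g\} = \Lie(X_g)f = -\Lie(X_f)g$. Applying this with $f = g = h$, the evolution of the Hamiltonian along its own dynamical vector field is $\Lie(X_h)h = \{h,h\}$.

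The key step is then simply to invoke the skew symmetry of the Poisson bracket (the first property in the same proposition): $\{h,h\} = -\{h,h\}$, which forces $\{h,h\} = 0$. Hence $\Lie(X_h)h = 0$, and by the definition of conserved quantity just given, $h$ is a constant of motion. If one prefers an even more direct route, one can argue geometrically: $\Lie(X_h)h = \inn(X_h)\d h = \inn(X_h)\inn(X_h)\Omega = 0$ by the skew symmetry of $\Omega$, using the Hamilton equation $\inn(X_h)\Omega = \d h$ from \eqref{ch}.

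There is no real obstacle here; the statement is an immediate corollary of the algebraic properties of the Poisson bracket. The only point worth a remark is that the argument is entirely local, so it applies verbatim whether $h$ is a local Hamiltonian function (defined on some $U \subset M$) or a global one: on each chart domain where $\alpha = \d h$, the computation $\Lie(X_h)h = \{h,h\} = 0$ holds, and these local statements patch together because the dynamical vector field $X_\alpha$ is globally defined by $\inn(X_\alpha)\Omega = \alpha$.

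So the proof is a two-line computation, and I would present it simply as: by property 3 and then property 1 of the Poisson bracket (or, equivalently, by the skew symmetry of $\Omega$ together with the Hamilton equation), $\Lie(X_h)h = \{h,h\} = 0$, so $h$ is a conserved quantity.
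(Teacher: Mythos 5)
Your proof is correct and, in its second (geometric) formulation, is exactly the paper's own argument: $\Lie(X_h)h=\inn(X_h)\d h=\Omega(X_h,X_h)=0$ by skew symmetry of $\Omega$. The Poisson-bracket version is just the same computation phrased through $\{h,h\}=0$, so there is nothing further to add.
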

\begin{proof}
It is immediate since, as $\Omega$ is skew symmetric, we have
$$
\Lie (X_h) h = \inn (X_h) \d h = \Omega (X_h,X_h) = 0 \ .
$$
\qed  \end{proof}

\begin{remark}{\rm 
We have seen how the fundamental properties
of the symplectic form are essential to describe
geometrically physical systems:
the {\sl  non degeneracy} allows us to assure the existence (and uniqueness)
of the dynamical Hamiltonian vector field, and hence, to determine the dynamical evolution
of the system, and from skew symmetry  and the fact to be closed
we obtain the conservation of the energy.
}\end{remark}

\subsection{Dynamical symmetries}
\label{dynsym}

When talking about symmetries, it is usual to refer to the idea that
``a symmetry of a dynamical system lets invariant the solutions
to the differential equations describing the dynamics of the system''.
In this way we define:

 \begin{definition}
A \textbf{dynamical symmetry} of the Hamiltonian dynamical system $(M,\Omega,h)$
 is a diffeomorphism $\Phi\colon M \to M$ satisfying that,
$$
\Phi_*X_h=X_h \ ;
$$
that is, the dynamical vector field $X_h$ is invariant by $\Phi$. 
\label{gsdef}
 \end{definition} 

Observe that if $\Phi$ is a dynamical symmetry, then so is $\Phi^{-1}$.

 Let $c\colon(-\epsilon ,\epsilon )\subset\Real\to M$ be an integral curve of $X_h$, that is $\dot{c}=X_h\circ c$, then  $\dot{c}=X_h\circ c=\Phi_*X_h\circ c$. If $X_h$ is invariant by $\Phi$ we have:
 $$
 X_h\circ(\Phi\circ c)=\Phi_*X_h\circ\Phi\circ c=\Tan\Phi\circ X_h\circ\Phi^{-1}\circ\Phi\circ c=\Tan\Phi\circ X_h\circ\circ c=\Tan\Phi\circ\dot{c}=\dot{\overline{\Phi\circ c}}\ ;
 $$
 hence $\Phi$ transforms integral curves of $X_h$ into integral curves of $X_h$. The converse is also true; that is: if $\Phi$ transforms integral curves of $X_h$ into integral curves of $X_h$ then $\Phi$ lets invariant the vector field $X_h$. This is another equivalent version of a dynamical symmetry.

If the diffeomorphism $\Phi$ is 
locally generated by a vector field, by means of 
the local group of diffeomorphisms generated by its flux, 
then, the above definition leads to the infinitesimal version of symmetries:

 \begin{definition}
\ An \textbf{infinitesimal dynamical symmetry} of a Hamiltonian system 
$(M,\Omega,h)$ is a vector field $Y\in\vf(M)$ such that  
the local diffeomorphisms  generated by its flux
are dynamical symmetries of the system; that is,
\beq
\Lie(Y)X_h=[Y,X_h]=0 \ .
\label{si}
\eeq
\label{gsdefi}
 \end{definition}

Recall that this definition is equivalent to say that the flux of $Y$ is made of dynamical symmetries of $X_h$. Thus, if $\Phi_t$ is the flux of $Y$ and $c$ is an integral curve of $X_h$, the $\Phi_t\circ c$ is also an integral curve of $X_h$.

\begin{remark}{\rm 
In the above definition, sometimes the condition (\ref{si}) is relaxed by setting
\beq
[Y,X_h]=gX_h \ , \ g\in\Cinfty(M) \ ,
\label{si2}
\eeq
(and the original condition is recovered taking $g=0$).
This allows us to consider also as symmetries
the reparametrizations of the integral curves of the
 dynamical vector field.
 }\end{remark}
 
 The infinitesimal dynamical symmetries of $(M,\Omega,h)$ have a natural structure of real Lie algebra. Clearly they are closed by real linear combinations and we have the following proposition:

\begin{prop}
\label{prop:comsym}
 If $Y_1,Y_2\in\vf (M)$ are infinitesimal dynamical symmetries,
then $[Y_1,Y_2]$ is an infinitesimal dynamical symmetry.
\end{prop}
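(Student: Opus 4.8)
The plan is to unwind the definition of infinitesimal dynamical symmetry (Definition \ref{gsdefi}) and reduce everything to the Jacobi identity for the Lie bracket of vector fields. By hypothesis, $Y_1$ and $Y_2$ being infinitesimal dynamical symmetries means exactly $[Y_1,X_h]=0$ and $[Y_2,X_h]=0$. What must be shown is that $[[Y_1,Y_2],X_h]=0$.

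First I would write out the Jacobi identity for the triple $Y_1,Y_2,X_h$:
$$
[[Y_1,Y_2],X_h]+[[Y_2,X_h],Y_1]+[[X_h,Y_1],Y_2]=0 \ .
$$
Then I would substitute the two hypotheses: the second term vanishes because $[Y_2,X_h]=0$, and the third term vanishes because $[X_h,Y_1]=-[Y_1,X_h]=0$ by skew symmetry of the bracket. Hence the first term is zero as well, i.e.\ $[[Y_1,Y_2],X_h]=0$, which is precisely the condition in \eqref{si} for $[Y_1,Y_2]$ to be an infinitesimal dynamical symmetry.

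There is no real obstacle here: the statement is a direct consequence of the Jacobi identity, and together with the already-noted closure under real linear combinations it shows that the set of infinitesimal dynamical symmetries of $(M,\Omega,h)$ is a real Lie subalgebra of $(\vf(M),[\,\cdot\,,\cdot\,])$. (If one instead adopts the relaxed notion \eqref{si2}, the same computation with $[Y_i,X_h]=g_iX_h$ still works, since expanding $[[Y_1,Y_2],X_h]$ via Jacobi produces only terms that are $\Cinfty(M)$-multiples of $X_h$, using that $[Y_i,X_h]$ is proportional to $X_h$ and $Y_i(g_j)X_h$ is likewise proportional to $X_h$; but for the stated proposition the plain Jacobi-identity argument suffices.)
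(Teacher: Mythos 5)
Your proof is correct and follows essentially the same route as the paper: the paper likewise applies the Jacobi identity to $Y_1,Y_2,X_h$ and uses $[Y_1,X_h]=[Y_2,X_h]=0$ to conclude $[[Y_1,Y_2],X_h]=0$. Your additional remark on the relaxed condition \eqref{si2} is a correct (though not required) extra observation.
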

\begin{proof}
Using the Jacobi identity, we have
$$
[[Y_1,Y_2],X_h]= [Y_2,[X_h,Y_1]]+[Y_1,[Y_2,X_h]]=0 \ .
$$
\qed  \end{proof}

A first result relating symmetries with conserved quantities is:

 \begin{prop}
\label{prop:consym}
\ben
\item
Let $\Phi\colon M\to M$ be a dynamical symmetry of $(M,\Omega,h)$. If $f\in\Cinfty(M)$
is a constant of motion of the system, 
then $\Phi^*f$ is also a conserved quantity.
\item
Let $Y\in\vf(M)$ be an infinitesimal dynamical symmetry of $(M,\Omega,h)$. If $f\in\Cinfty(M)$ is a constant of motion of the system, 
then $\Lie(Y)f$ is also a conserved quantity.
\een
\label{generador}
 \end{prop}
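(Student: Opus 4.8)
The plan is to handle the two items in turn; the second is really the infinitesimal counterpart of the first, so I would set up the first carefully and then let the second follow by a short Lie-derivative computation. For item 1, I unwind the definitions: $f$ being a constant of motion means $\Lie(X_h)f=X_h(f)=0$, and $\Phi$ being a dynamical symmetry means $\Phi_*X_h=X_h$, which is equivalent to saying that $X_h$ is $\Phi$-related to itself, i.e. $\Tan_p\Phi(X_{h,p})=X_{h,\Phi(p)}$ for every $p\in M$. The key ingredient I would invoke is the naturality identity $\Phi^*(X_h(g))=X_h(\Phi^*g)$, valid for every $g\in\Cinfty(M)$ precisely because $X_h$ is $\Phi$-related to itself: evaluating the right-hand side at $p$ and using the chain rule, $X_h(\Phi^*g)(p)=\d g_{\Phi(p)}(\Tan_p\Phi(X_{h,p}))=\d g_{\Phi(p)}(X_{h,\Phi(p)})=(X_h(g))(\Phi(p))=\Phi^*(X_h(g))(p)$. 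Applying this with $g=f$ and using $X_h(f)=0$ gives at once $\Lie(X_h)(\Phi^*f)=X_h(\Phi^*f)=\Phi^*(X_h(f))=0$, so $\Phi^*f$ is a conserved quantity.

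For item 2, the plan is to compute $\Lie(X_h)(\Lie(Y)f)$ directly, using the commutation rule for Lie derivatives acting on functions, $\Lie(X_h)\circ\Lie(Y)-\Lie(Y)\circ\Lie(X_h)=\Lie([X_h,Y])$. Since $Y$ is an infinitesimal dynamical symmetry we have $[Y,X_h]=0$, hence $[X_h,Y]=0$; and since $f$ is a constant of motion, $\Lie(X_h)f=0$. Therefore $\Lie(X_h)(\Lie(Y)f)=\Lie(Y)(\Lie(X_h)f)+\Lie([X_h,Y])f=0$, which is exactly the statement that $\Lie(Y)f=Y(f)$ is preserved along $X_h$. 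Alternatively one could derive item 2 from item 1 by differentiating $\Phi_t^*f$ at $t=0$ along the flow $\Phi_t$ of $Y$, but the direct bracket computation is shorter and sidesteps any completeness issue for the flow.

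I do not expect a genuine obstacle here: once the definitions are translated into the relations $X_h(f)=0$, $\Phi_*X_h=X_h$ (resp. $[Y,X_h]=0$), both claims are one-line consequences — the naturality of $\Phi^*$ with respect to the derivation $X_h$ in the first case, and the Jacobi-type identity for Lie derivatives in the second. The only point deserving explicit mention is the equivalence ``$\Phi_*X_h=X_h$ $\iff$ $X_h$ is $\Phi$-related to itself'', and the remark that this is precisely what makes $\Phi^*$ commute with $X_h$ as a derivation of $\Cinfty(M)$; the rest is routine bookkeeping.
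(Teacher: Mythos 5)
Your proof is correct and follows essentially the same route as the paper: item 1 is the pullback identity $\Lie(X_h)(\Phi^*f)=\Phi^*(\Lie(\Phi_*X_h)f)$ (you just justify the naturality step more explicitly), and item 2 is the commutator identity $\Lie(X_h)\Lie(Y)f-\Lie(Y)\Lie(X_h)f=\Lie([X_h,Y])f$ with $[Y,X_h]=0$, which is exactly the paper's Cartan-calculus computation, with the flow-of-$Y$ argument also noted as the alternative the paper mentions first.
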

 \begin{proof}
\ben
\item
Directly we have
$$
\Lie(X_h)(\Phi^*f)=\Phi^*(\Lie(\Phi_*X_h)f)=0 \ .
$$
\item
The same proof holds taking the flux $\Phi_t$ of $Y$ as a dynamical symmetry. 
Alternatively,
\beann
\Lie(X_h)\Lie(Y)f&=&
\inn(X_h)\d\Lie(Y)f=\inn(X_h)\Lie(Y)\d f
\\ &=&\Lie(Y)\inn(X_h)\d f-\inn([Y,X_h])\d f
=\Lie(Y)\Lie(X_h)f=0 \ .
\eeann
\een
 \qed  \end{proof}

\subsection{Noether symmetries. Noether Theorem}

In the above study of symmetries of the Hamiltonian system $(M,\Omega,\alpha=\d h)$ we have not used the specific structure of the system. In fact, the only element we have considered is the dynamical vector field $X_h\in\vf(M)$. No mention have been made to the symplectic structure of $M$, 
the Hamiltonian function $h$, or the Hamiltonian form $\alpha$. We have studied symmetries of the dynamical vector field independently of its origin.

If we consider the other elements defining the system, we arrive to different types of dynamical symmetries  depending on whether they leave invariant the geometric structure;
that is, the symplectic form or the dynamical elements
(i.e., the Hamiltonian function). 
For a more complete analysis of all these kinds of symmetries
and how they generate conserved quantities, see, for instance, \cite{Ch-2003,Ch-2005,NRR-2018,SC-81,SC-81b}.

The following kind of symmetries has a special relevance as
generators of constants of motion:

\begin{definition}
A diffeomorphism $\Phi\in{\rm Diff}\, (M)$ is a  \textbf{Noether symmetry}
of the Hamiltonian system $(M,\Omega,h)$ if:
\ben
\item
$\Phi$ is a symplectomorphism in $M$: that is, $\Phi^*\Omega =\Omega$.
\item
$\Phi$ lets the dynamics invariant ; that is, $\Phi^*h =h$.
\een
\label{sime1}
\end{definition}

\begin{remark}{\rm 
The second condition can also be expressed in the form
$\Phi^*h =h+c$ (with $c\in\Real$).
This means that it transforms a Hamiltonian function into another 
Hamiltonian function of the same dynamical vector field $X_h$.
More generically, if $\alpha=\d h$ is the Hamiltonian $1$-form, this is equivalent to say that $\Phi^*\alpha=\alpha$.
}\end{remark}

\begin{definition}
A vector field $Y\in\vf (M)$ is an \textbf{infinitesimal Noether symmetry}
of the system if  the local diffeomorphisms  generated by the flux of $Y$
are dynamical symmetries of the system; that is,
\ben
\item
$\Lie (Y)\Omega =0$; that is, $Y\in\vf_{lh}(M)$.
\item
$\Lie (Y)h =0$.
\een
\label{simeinf1}
\end{definition}

\begin{remark}{\rm 
In the infinitesimal case, as every infinitesimal Noether symmetry
$Y\in\vf(M)$ is a locally Hamiltonian vector field, by the first item of the definition,
we have that, for every ${\rm p}\in M$, there exists an open set $U_{\rm p}\ni p$ 
and $f_Y\in\Cinfty (U_{\rm p})$ such that $\inn(Y)\Omega=\d f_Y$, in $U_{\rm p}$. The local Hamiltonian $f$ is unique, up to the sum of constant functions. 
}\end{remark}

A first relevant result is:

\begin{prop}
\ben
\item
Every Noether symmetry of $(M,\Omega,h)$ is a dynamical symmetry.
\item
Every infinitesimal Noether symmetry of $(M,\Omega,h)$ is an infinitesimal dynamical symmetry.
\een
\label{simdin}
\end{prop}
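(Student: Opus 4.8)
The plan is to use the defining Hamilton equation $\inn(X_h)\Omega=\d h$ together with the uniqueness of its solution guaranteed by regularity (Proposition \ref{teo-hameqs}): to show that a transformation is a dynamical symmetry it is enough to check that it carries $X_h$ to a vector field satisfying the same equation. Since $(M,\Omega)$ is symplectic, $\flat_\Omega$ is an isomorphism, so equality of the contracted $1$-forms forces equality of the vector fields.

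First I would prove item 1. Let $\Phi$ be a Noether symmetry, so $\Phi^*\Omega=\Omega$ and $\Phi^*h=h$; equivalently $\Phi_*\Omega=\Omega$ and $\Phi_*h=h$, because $\Phi_*=(\Phi^{-1})^*=(\Phi^*)^{-1}$. Using that for a diffeomorphism $\Phi_*(\inn(X)\beta)=\inn(\Phi_*X)(\Phi_*\beta)$ and that $\d$ commutes with pullbacks, one computes
\[
\inn(\Phi_*X_h)\Omega=\inn(\Phi_*X_h)(\Phi_*\Omega)=\Phi_*\big(\inn(X_h)\Omega\big)=\Phi_*(\d h)=\d(\Phi_*h)=\d h .
\]
Thus $\Phi_*X_h$ and $X_h$ are both solutions of $\inn(\,\cdot\,)\Omega=\d h$, and non-degeneracy of $\Omega$ yields $\Phi_*X_h=X_h$, i.e. $\Phi$ is a dynamical symmetry.

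For item 2 I would argue in either of two equivalent ways. One option: if $Y$ is an infinitesimal Noether symmetry, its local flow $\Phi_t$ satisfies $\Phi_t^*\Omega=\Omega$ and $\Phi_t^*h=h$ (the integrated forms of $\Lie(Y)\Omega=0$ and $\Lie(Y)h=0$), so each $\Phi_t$ is a Noether symmetry, hence a dynamical symmetry by item 1; then $\Phi_{t*}X_h=X_h$ for all $t$, and differentiating at $t=0$ gives $[Y,X_h]=0$, which is \eqref{si}. Alternatively, using $\inn([Y,X_h])=\Lie(Y)\circ\inn(X_h)-\inn(X_h)\circ\Lie(Y)$ on forms,
\[
\inn([Y,X_h])\Omega=\Lie(Y)\inn(X_h)\Omega-\inn(X_h)\Lie(Y)\Omega=\Lie(Y)\d h-0=\d(\Lie(Y)h)=0 ,
\]
and non-degeneracy of $\Omega$ gives $[Y,X_h]=0$. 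There is no real obstacle here; the only points needing a little care are the bookkeeping of pushforward versus pullback conventions and the invocation of regularity, which is exactly where the symplectic (rather than merely presymplectic) hypothesis enters.
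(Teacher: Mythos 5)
Your proof is correct and follows essentially the same route as the paper: in item 1 you use invariance of $\Omega$ and $h$ to show the transported vector field also solves the Hamilton equation and conclude by uniqueness (regularity), and in item 2 your direct computation $\inn([Y,X_h])\Omega=\Lie(Y)\inn(X_h)\Omega-\inn(X_h)\Lie(Y)\Omega=\d\Lie(Y)h=0$ together with non-degeneracy is exactly the paper's argument. The only differences are cosmetic (pushforward versus pullback bookkeeping in item 1, and your optional flow-based alternative for item 2).
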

\begin{proof}
\ben
\item
Let $\Phi\in{\rm Diff}\, (M)$ be a Noether symmetry. If $X_h\in\vf (M)$ is a solution to the dynamical equations, then
$0=\inn(X_h)\Omega-\d h$, and as 
$\Phi^*\Omega=\Omega$ and $\Phi^*h=h$, we have
$$
0=\Phi^*(\inn(X_h)\Omega-\d h)=\inn(\Phi_*^{-1}X_h)\Phi^*\Omega-\Phi^*\d h=
\inn(\Phi_*^{-1}X_h)\Omega-\d h=\inn(\Phi_*^{-1}X_h)\Omega-\d h \ ;
$$
but as the system regular, the vector field solution is unique,
then $\Phi_*^{-1}X_h=X_h$, and the result holds.
\item
If $Y\in\vf(M)$ is an infinitesimal Noether symmetry, then
$Y\in\vf_{lh}(M)$ and $\Lie(Y)h~=~0$, hence
$$
\inn([Y,X_h])\Omega=\Lie(Y)\inn(X_h)\Omega-\inn(X_h)
\Lie(Y)\Omega=\Lie(Y)\d h=\d\Lie(Y)h=0 \ ,
$$
but being $\Omega$ nondegenerated, this implies that 
$[Y,X_h]=0$, and therefore $Y$ is an infinitesimal dynamical symmetry.
\een
\qed  \end{proof}

Clearly, the set of infinitesimal Noether symmetries is a real vector space. Moreover, it is easy to prove that it is real Lie algebra:

\begin{prop}
 If $Y_1,Y_2\in\vf (M)$ are  infinitesimal Noether symmetries,
then $[Y_1,Y_2]$ is also an infinitesimal Noether symmetry.
\end{prop}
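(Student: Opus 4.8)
The plan is to verify directly that $[Y_1,Y_2]$ satisfies the two defining conditions of an infinitesimal Noether symmetry from Definition \ref{simeinf1}: that $\Lie([Y_1,Y_2])\Omega=0$ (equivalently, $[Y_1,Y_2]\in\vf_{lh}(M)$) and that $\Lie([Y_1,Y_2])h=0$. The single tool needed throughout is the operator identity $\Lie([X,Y])=\Lie(X)\circ\Lie(Y)-\Lie(Y)\circ\Lie(X)$, which holds when acting on functions, differential forms, and vector fields alike.

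First I would check the symplectic condition. This is already contained in Proposition \ref{allie}, which asserts that $\vf_{lh}(M)$ is closed under the Lie bracket; since each $Y_i$ is an infinitesimal Noether symmetry it is in particular locally Hamiltonian, hence $[Y_1,Y_2]\in\vf_{lh}(M)$, i.e. $\Lie([Y_1,Y_2])\Omega=0$. If one prefers a self-contained line, the same conclusion follows from $\Lie([Y_1,Y_2])\Omega=\Lie(Y_1)\Lie(Y_2)\Omega-\Lie(Y_2)\Lie(Y_1)\Omega=0$, using $\Lie(Y_i)\Omega=0$.

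Next I would check the invariance of the dynamics by the identical computation applied to the Hamiltonian function instead of to $\Omega$: since $\Lie(Y_i)h=0$ for $i=1,2$, we get $\Lie([Y_1,Y_2])h=\Lie(Y_1)\Lie(Y_2)h-\Lie(Y_2)\Lie(Y_1)h=\Lie(Y_1)(0)-\Lie(Y_2)(0)=0$. Together with the previous step, this shows $[Y_1,Y_2]$ satisfies both conditions of Definition \ref{simeinf1}, so it is an infinitesimal Noether symmetry.

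I do not expect a genuine obstacle here; both verifications are one-line computations once the bracket identity for the Lie derivative is in hand, and the only point deserving any care is that this identity be invoked on the correct type of tensor object (a $2$-form in the first step, a function in the second). An alternative, slightly less direct route would combine Proposition \ref{prop:comsym} (infinitesimal dynamical symmetries form a Lie algebra) with Proposition \ref{allie}, but the direct computation above is shorter and keeps the argument self-contained. Linearity of the bracket, already noted before the statement, then gives that the infinitesimal Noether symmetries form a real Lie algebra.
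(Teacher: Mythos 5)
Your proof is correct and follows essentially the same route as the paper: both invoke the closure of $\vf_{lh}(M)$ under the bracket (Proposition \ref{allie}) for the condition $\Lie([Y_1,Y_2])\Omega=0$, and both reduce $\Lie([Y_1,Y_2])h=0$ to $\Lie(Y_i)h=0$, the paper via the identity $\inn([Y_1,Y_2])\d h=\Lie(Y_1)\inn(Y_2)\d h-\inn(Y_2)\Lie(Y_1)\d h$ and you via the equivalent commutator identity for Lie derivatives acting on functions. No gap to report.
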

\begin{proof}
First we have that
$\Lie([Y_1,Y_2])\Omega=0$, since $[Y_1,Y_2]\in\vf_{lh}(M)$, because $Y_1,Y_2\in\vf_{lh}(M)$.
Furthermore,
\beann
\Lie([Y_1,Y_2])h &=& \inn([Y_1,Y_2])\d h=
\Lie(Y_1)\inn(Y_2)\d h-\inn(Y_2)\Lie(Y_1)\d h \\ &=&
\Lie(Y_1)\Lie(Y_2)h-\inn(Y_2)\d\Lie(Y_1) h =0 \ .
\eeann
\qed  \end{proof}

In addition, we get:

\begin{prop}
Let $Y\in\vf(M)$ be an infinitesimal Noether symmetry of $(M,\Omega,h)$.
For every ${\rm p}\in M$, there exists an open set $U_{\rm p}\ni {\rm p}$ such that,
if $\vartheta\in\df^1(U_{\rm p})$ is a symplectic potential of $\Omega$
 in $U_{\rm p}$, that is $\d\vartheta=\Omega$, then:
\ben
\item
There exists $\zeta_Y\in\Cinfty(U_{\rm p})$ verifying that 
$\Lie(Y)\vartheta=\d\zeta_Y$.
\item
If $f_Y\in\Cinfty(U_{\rm p})$ is a local Hamiltonian function of $Y$, then
in $U_{\rm p}$, we have that,
\beq
f_Y=\zeta_Y-\inn(Y)\vartheta
\qquad \mbox{\it (up to the sum of a constant function)} \ .
\label{fdos}
\eeq
\een
\label{structure}
\end{prop}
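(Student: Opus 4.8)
The plan is to deduce both statements from Cartan's magic formula applied to the $1$-form $\vartheta$, together with Poincar\'e's Lemma. First I would fix the neighbourhood $U_{\rm p}$. Since $Y$ is an infinitesimal Noether symmetry, $Y\in\vf_{lh}(M)$, so $\inn(Y)\Omega$ is closed; by Poincar\'e's Lemma there is a connected, contractible open set $U_{\rm p}\ni{\rm p}$ on which $\inn(Y)\Omega=\d f_Y$ for some $f_Y\in\Cinfty(U_{\rm p})$, which is precisely a local Hamiltonian function of $Y$ in the sense of Definition \ref{locHvf} and is unique up to an additive constant. Shrinking $U_{\rm p}$ if necessary, it is also the domain of a symplectic potential of $\Omega$; but since the statement is phrased for an arbitrary $\vartheta$ with $\d\vartheta=\Omega$ on $U_{\rm p}$, I only use this to guarantee existence and then argue for any such $\vartheta$.

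For item 1, the key computation is
$$
\Lie(Y)\vartheta=\d\inn(Y)\vartheta+\inn(Y)\d\vartheta=\d\inn(Y)\vartheta+\inn(Y)\Omega=\d\inn(Y)\vartheta+\d f_Y=\d\big(\inn(Y)\vartheta+f_Y\big)\ ,
$$
where the second equality uses $\d\vartheta=\Omega$ and the third the defining property of $f_Y$. Hence $\Lie(Y)\vartheta$ is exact on $U_{\rm p}$, and I may set $\zeta_Y:=\inn(Y)\vartheta+f_Y\in\Cinfty(U_{\rm p})$, which satisfies $\Lie(Y)\vartheta=\d\zeta_Y$. Alternatively, one observes directly that $\d\Lie(Y)\vartheta=\Lie(Y)\d\vartheta=\Lie(Y)\Omega=0$, so $\Lie(Y)\vartheta$ is closed and Poincar\'e's Lemma produces some primitive $\zeta_Y$ on $U_{\rm p}$, determined up to a constant.

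For item 2, let $\zeta_Y$ be any function with $\Lie(Y)\vartheta=\d\zeta_Y$ and $f_Y$ any local Hamiltonian function of $Y$ on $U_{\rm p}$. The same computation gives $\d\zeta_Y=\d\inn(Y)\vartheta+\d f_Y$, so $\d\big(\zeta_Y-\inn(Y)\vartheta-f_Y\big)=0$ on $U_{\rm p}$; since $U_{\rm p}$ is connected, this difference is a constant, which is exactly the asserted identity $f_Y=\zeta_Y-\inn(Y)\vartheta$ up to the sum of a constant function. There is no real obstacle here: the argument is a one-line use of Cartan's formula once $U_{\rm p}$ is fixed, and the only points needing care are choosing $U_{\rm p}$ so that both $\inn(Y)\Omega$ and $\Lie(Y)\vartheta$ admit primitives (both are closed, so this is Poincar\'e's Lemma on a ball) and keeping track of the additive-constant ambiguities in $\zeta_Y$ and $f_Y$, which is why the final equality is stated only modulo a constant.
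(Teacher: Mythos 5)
Your proof is correct and follows essentially the same route as the paper: the closedness of $\Lie(Y)\vartheta$ via $\Lie(Y)\Omega=0$ plus Poincar\'e's Lemma for item 1, and the Cartan-formula computation $\d\zeta_Y=\d\inn(Y)\vartheta+\inn(Y)\Omega=\d\{\inn(Y)\vartheta+f_Y\}$ for item 2. Your explicit remarks on connectedness of $U_{\rm p}$ and the additive-constant ambiguities merely make precise what the paper leaves implicit.
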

\begin{proof}
\ben
\item
Observe that $\Lie(Y)\vartheta$ is a closed form in $U_{\rm p}$ since
$$
\d(\Lie(Y)\vartheta)=\Lie(Y)\d\vartheta=\Lie(Y)\Omega=0 \ ,
$$
Then, by Poincar\'e' s Lemma, 
there exists $\zeta_Y\in\Cinfty(U_{\rm p})$
such that $\Lie(Y)\vartheta=\d\zeta_Y$, in an open subset of $U_{\rm p}$ and, reducing the initial domain if necessary, we can suppose it is in $U_{\rm p}$.
\item
If $\inn(Y)\Omega=\d f_Y$ in $U_{\rm p}$, we obtain that
$$
\d\zeta_Y=\Lie(Y)\vartheta=\d\inn(Y)\vartheta+\inn(Y)\d\vartheta=
\d\inn(Y)\vartheta+\inn(Y)\Omega=\d \{\inn(Y)\vartheta+f_Y\} \ ,
 $$
and the result holds.
\een
\qed  \end{proof}

 Finally, the fundamental result related with infinitesimal Noether-type symmetries 
is the classical {\sl Noether's Theorem}, whose geometric Hamiltonian version is the following:

\begin{teor}
{\rm (Noether)}.
If $Y\in\vf (M)$ is an infinitesimal Noether symmetry of a Hamiltonian system $(M,\Omega,h)$, then
its Hamiltonian function $f_Y$
is a conserved quantity; that is, $$\Lie (X_h)f_Y=0 \ .$$
\label{Nth}
\end{teor}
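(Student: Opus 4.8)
The plan is to compute $\Lie(X_h)f_Y$ directly from the two implicit equations defining $X_h$ and $Y$, using only the skew-symmetry of $\Omega$. Note first that the statement is local in character: by the remark following Definition \ref{simeinf1}, the Hamiltonian function $f_Y$ of $Y$ is in general only defined on a neighbourhood $U_{\rm p}$ of each point $\rm p\in M$. It therefore suffices to verify $\Lie(X_h)f_Y=0$ on each such $U_{\rm p}$; since these cover $M$, the conclusion follows everywhere.

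First I would assemble the data. Because $Y$ is an infinitesimal Noether symmetry, Definition \ref{simeinf1} gives $Y\in\vf_{lh}(M)$, so on $U_{\rm p}$ there is $f_Y\in\Cinfty(U_{\rm p})$ with $\inn(Y)\Omega=\d f_Y$; equivalently, $Y$ coincides on $U_{\rm p}$ with the Hamiltonian vector field $X_{f_Y}$. The same definition also gives $\Lie(Y)h=0$. Finally, $X_h$ is the dynamical vector field, characterized by $\inn(X_h)\Omega=\d h$.

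The core of the argument is then a one-line chain. Since $f_Y$ is a function, $\Lie(X_h)f_Y=\inn(X_h)\d f_Y$, and substituting the defining relations together with the skew-symmetry of $\Omega$,
$$
\Lie(X_h)f_Y=\inn(X_h)\d f_Y=\inn(X_h)\inn(Y)\Omega=\Omega(Y,X_h)=-\Omega(X_h,Y)=-\inn(Y)\inn(X_h)\Omega=-\inn(Y)\d h=-\Lie(Y)h=0 \ .
$$
Equivalently, one may phrase this through the Poisson bracket: using property 3 of Proposition on Poisson brackets and $X_{f_Y}=Y$ on $U_{\rm p}$, we get $\Lie(X_h)f_Y=\{f_Y,h\}=-\Lie(X_{f_Y})h=-\Lie(Y)h=0$.

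There is no serious obstacle here; the entire content is the interplay of the two conditions $\inn(Y)\Omega=\d f_Y$ and $\inn(X_h)\Omega=\d h$ with the antisymmetry of $\Omega$. The only point deserving a word of care is the local nature of $f_Y$, which is harmless as explained above; and if one adopts the relaxed Noether condition $\Phi^*h=h+\text{const}$ (still equivalent, infinitesimally, to $\Lie(Y)h=0$) the computation is unchanged. One could also remark, via Proposition \ref{structure}, that the conserved quantity admits the explicit local form $f_Y=\zeta_Y-\inn(Y)\vartheta$ in terms of a local symplectic potential $\vartheta$, but this refinement is not needed for the proof.
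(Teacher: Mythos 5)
Your proposal is correct and follows essentially the same argument as the paper: the chain $\Lie(X_h)f_Y=\inn(X_h)\inn(Y)\Omega=-\inn(Y)\inn(X_h)\Omega=-\Lie(Y)h=0$ is exactly the paper's proof, with your remarks on locality and the Poisson-bracket reformulation being harmless additions.
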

\begin{proof}
In fact, as $\inn(Y)\Omega=\d f_Y$ we have
$$
\Lie(X_h)f_Y =
\inn(X_h)\d f_Y = \inn(X_h)\inn(Y)\Omega=
-\inn(Y)\inn(X_h)\Omega = -\inn(Y)\d h= -\Lie(Y) h=0 \ .
$$
\qed  \end{proof}

Noethers's Theorem is very relevant, since it gives a way to associate
a constant of motion to every Noether symmetry.
The converse statement of Noether's theorem also holds, and it
allows associating a (Noether) symmetry to every conserved quantity:

\begin{teor} {\rm (Inverse Noether)}:
\label{invNo}
For every conserved quantity $f\in\Cinfty(M)$, its Hamiltonian vector field
$Y_f\in\vf_{lh}(M)$ is an infinitesimal Noether symmetry.
\end{teor}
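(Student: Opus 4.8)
The plan is to verify directly the two defining conditions of an infinitesimal Noether symmetry (Definition \ref{simeinf1}) for the vector field $Y_f$. Recall that, since $(M,\Omega)$ is symplectic, $\sharp_\Omega$ is an isomorphism, so $Y_f:=\sharp_\Omega(\d f)$ is a globally defined vector field characterized by $\inn(Y_f)\Omega=\d f$. (Up to the sign conventions this is just the Hamiltonian vector field $X_f$ of $f$.)

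First I would dispose of the condition $\Lie(Y_f)\Omega=0$. This costs nothing and does not use the hypothesis on $f$: the form $\inn(Y_f)\Omega=\d f$ is exact, hence closed, so $Y_f\in\vf_{lh}(M)$ (indeed $Y_f\in\vf_H(M)$), and Theorem \ref{teoin} then gives that $\Omega$ is an absolute invariant form for $Y_f$, that is $\Lie(Y_f)\Omega=0$.

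The second condition, $\Lie(Y_f)h=0$, is where the assumption that $f$ is a conserved quantity, $\Lie(X_h)f=0$, enters. The computation is the one in the proof of Noether's Theorem (Theorem \ref{Nth}) read backwards: using $\d h=\inn(X_h)\Omega$, the skew-symmetry of $\Omega$, and $\inn(Y_f)\Omega=\d f$,
\[
\Lie(Y_f)h=\inn(Y_f)\d h=\inn(Y_f)\inn(X_h)\Omega=-\inn(X_h)\inn(Y_f)\Omega=-\inn(X_h)\d f=-\Lie(X_h)f=0 .
\]
Hence $Y_f$ satisfies both items of Definition \ref{simeinf1}, so it is an infinitesimal Noether symmetry.

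There is essentially no obstacle here: the statement is the exact converse of Noether's Theorem and its proof is a two-line contraction identity. The only point worth emphasizing is the asymmetry between the two conditions --- the first holds automatically for any Hamiltonian vector field, so being a conserved quantity is precisely what is needed (and all that is needed) to obtain the second. If desired, one may add the local refinement that, for any local symplectic potential $\vartheta$ of $\Omega$ on some $U_{\rm p}$, Proposition \ref{structure} exhibits the associated function $\zeta_{Y_f}$ with $\Lie(Y_f)\vartheta=\d\zeta_{Y_f}$, closing the loop with the structure results proved above.
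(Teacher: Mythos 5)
Your proof is correct and follows essentially the same route as the paper: the first condition holds because $\inn(Y_f)\Omega=\d f$ is closed (indeed exact), and the second is established by the identical chain of contractions $\Lie(Y_f)h=\inn(Y_f)\inn(X_h)\Omega=-\inn(X_h)\d f=-\Lie(X_h)f=0$. Your extra remarks (global Hamiltonicity of $Y_f$ and the asymmetry of the two conditions) are accurate but do not change the argument.
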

\begin{proof}
As $Y_f\in\vf_{lh}(M)$, then $\Lie(Y_f)\Omega=0$. Furthermore, as $f$
is a conserved quantity, then $\Lie(X_h)f=0$ and therefore
 $$
\Lie(Y_f) h=\inn(Y_f)\d h=\inn(Y_f)\inn(X_h)\Omega=
-\inn(X_h)\inn(Y_f)\Omega=-\inn(X_h)\d f=-\Lie(X_h)f=0 \ .
 $$
 \qed  \end{proof}

In general, for symmetries which are not of Noether-type,
there is not a so direct way of obtaining constants of motion, 
except in some particular cases like in the following:

\begin{teor}
\label{teo:biham}
If $Y\in\vf (M)$ is an infinitesimal dynamical symmetry such that
$\Lie(Y)h\not=0$, then the function
$f=\Lie(Y)h$ is a conserved quantity.
\end{teor}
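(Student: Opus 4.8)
The plan is to compute $\Lie(X_h)f$ directly and show it vanishes, where $f=\Lie(Y)h$. The essential tool is the commutator identity for Lie derivatives acting on a function: for any $X,Y\in\vf(M)$ and $g\in\Cinfty(M)$ one has $\Lie(X)\Lie(Y)g-\Lie(Y)\Lie(X)g=\Lie([X,Y])g$, which is nothing but the statement that $g\mapsto\Lie(X)g=X(g)$ realizes $\vf(M)$ as acting on $\Cinfty(M)$ by a Lie algebra (anti)homomorphism, together with the Jacobi identity — exactly the kind of manipulation already used in the proof of Proposition~\ref{generador} and elsewhere in this section.

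Applying this identity with $X=X_h$ and $g=h$ gives $\Lie(X_h)f=\Lie(X_h)\Lie(Y)h=\Lie(Y)\Lie(X_h)h+\Lie([X_h,Y])h$. Now I would invoke the two facts already at hand. First, by hypothesis $Y$ is an infinitesimal dynamical symmetry, so $[Y,X_h]=0$ by Definition~\ref{gsdefi}; hence $[X_h,Y]=-[Y,X_h]=0$ and the last term drops out. Second, by the conservation of energy (the Proposition asserting that the Hamiltonian $h$ is a conserved quantity of $(M,\Omega,h)$) we have $\Lie(X_h)h=0$, so the remaining term $\Lie(Y)\Lie(X_h)h=\Lie(Y)0=0$ as well.

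Combining these, $\Lie(X_h)f=0$, which is precisely the assertion that $f$ is a constant of motion. Note that the hypothesis $\Lie(Y)h\neq0$ is not used in the computation at all; its only purpose is to ensure that $f$ is a non-trivial (in particular, nonzero) function, so that the theorem produces a genuine conserved quantity rather than the trivial one. The argument is essentially routine, and there is no real obstacle; the only point requiring a little care is keeping track of the order of the Lie derivatives in the commutator identity and the sign $[X_h,Y]=-[Y,X_h]$. As an alternative, one could phrase the same computation using $\Lie(X_h)f=\inn(X_h)\d\Lie(Y)h=\inn(X_h)\Lie(Y)\d h=\Lie(Y)\inn(X_h)\d h-\inn([Y,X_h])\d h=\Lie(Y)\Lie(X_h)h=0$, which avoids citing the abstract commutator identity and instead uses only Cartan's formula and $[Y,X_h]=0$.
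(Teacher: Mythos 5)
Your proof is correct and follows essentially the same route as the paper: the paper's own argument is the one-line computation $\Lie(X_h)f=\Lie(X_h)\Lie(Y)h=\Lie([X_h,Y])h+\Lie(Y)\Lie(X_h)h=0$, using $[X_h,Y]=0$ and $\Lie(X_h)h=0$, exactly as you do. Your remark that the hypothesis $\Lie(Y)h\neq 0$ only serves to make the conserved quantity non-trivial is accurate as well.
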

\begin{proof}
We have that
$$
\Lie(X_h)f=\Lie(X_h)\Lie(Y)h=\Lie([X_h,Y])h+\Lie(Y)\Lie(X_h)h=0 \ .
$$
\qed  \end{proof}

For other kinds of symmetries, the way to obtain conserved quantities is, in general, 
more complicated (see, for instance, \cite{AA-78,CMR-2002,Cr-83,LMR-99}, and \cite{NRR-2018}
and the references quoted therein,
where a complete classification of the symmetries of Hamiltonian systems is done).

\subsection{Actions of Lie groups on symplectic manifolds}
\label{cami1}

In this section, we analyze the transformations
(i.e., diffeomorphisms) of a dynamical system
which are generated by Lie groups.
In fact, a family of diffeomorphisms in a differentiable manifold
endowed with the operation of composition
has the structure of group and when the
variations of these transformations are considered, the set of
these variations has a smooth differentiable structure.
In this case, the corresponding group of transformations is a Lie group. 
(For more information about actions of Lie groups, see,
for instance, \cite{AM-78,Ar-89,CDW-87,CP-adg,Gi-hss,GGK-2002,LM-sgam,MSSV-85,Ok-87,So-ssd,Wa-fsmlg}
\footnote{We would also like to highlight the work of {\it J.A. L\'azaro-Cam\'i}, 
who compiled the main results presented in Sections 2.3.5, 2.3.6, and 2.3.7 in an unpublished note. }.
See also the appendix \ref{Liega} for a review about Lie groups).

\begin{definition}
Let $G$ be a Lie group and $M$ a differentiable manifold.
A \textbf{left-action} of $G$ on $M$ is a map
$\phi \colon G \times M \to M$
verifying the following properties:
\begin{description}
\item[{\rm (i)}] \ 
$\phi(g_1g_2,{\rm p}) = \phi (g_1,\phi (g_2,{\rm p}))$;
for $g_1,g_2\in G$ and ${\rm p}\in M$.
\item[{\rm (ii)}]
If $e\in G$ denotes the neutral element, then
$\phi (e,{\rm p}) = {\rm p}$.
\end{description}
A \textbf{right action} is defined in an analogous way,
changing the operation law in $G$ by the opposite.
In any case, $M$ is said to be a
\textbf{left $G$-manifold} (resp., a \textbf{right $G$-manifold}).

Since, for every $g \in G$ and $p\in M$, the map \ 
$\Phi_g \colon {\rm p}\mapsto \phi (g,{\rm p})$ \ 
is a diffeomorphism in $M$, we have that a left-action of $G$ in $M$
is also a homomorphism
$$
\begin{array}{ccccc}
\Phi&\colon&G&\to&\Diff (M)
\\
& &g&\mapsto&\Phi_g
\end{array} \ .
$$
\end{definition}

From now on, we only consider left-actions.

\begin{definition}
let $G$ be a Lie group, $M$ a differentiable manifold,
$\Phi$ an action of $G$ on $M$, and ${\rm p}\in M$.
\begin{enumerate}
\item
The \textbf{isotropy group} of $p$  (with respect to $\Phi$) is the subgroup of $G$
$$
G_{\rm p}:= \{ g \in G \ | \ \Phi_g({\rm p})={\rm p} \} \ .
$$
\item
The \textbf{orbit} of ${\rm p}$ (with respect to $\Phi$) is the set
$$
{\cal O}_{\rm p}:= \{{\rm p}' \in M \ | \ {\rm p}' = \Phi_g({\rm p}) \ , \ \mbox{\rm for every $g \in G$} \}
$$
\item
The action is \textbf{effective} or \textbf{faithful}
if $\dst\bigcap_{{\rm p} \in M}G_{\rm p}=\{ e \}$
or, what is equivalent, if
$\Phi_g = {\rm Id}_M \ \Longleftrightarrow \ g=e$
(that is, $\Phi$ is injective).
\item
The action is \textbf{free} if the following map, defined for a fixed ${\rm p}\in M$,
$$
\begin{array}{ccc}
G&\to&M
\\
g&\mapsto&\Phi_g({\rm p})
\end{array}
$$
is injective, for every ${\rm p}\in M$
(this means that there are no invariant points in $M$ under the action).
\item
The action is  \textbf{transitive} if, for every ${\rm p}_1,{\rm p}_2\in M$, 
there exists $g \in G$ such that \ $\Phi_g({\rm p}_1)={\rm p}_2$ \ 
or, what is the equivalent, if \
${\cal O}_{\rm p}=M$, for every ${\rm p}\in M$
(that is, $\Phi$ has only one orbit).
In this case, $M$ is said to be a \textbf{homogeneous $G$-space}.
\item
The action is \textbf{proper} if the anti-image of every  compact set by the map
$(g,{\rm p}) \mapsto (\Phi_g({\rm p}),{\rm p})$, for every $g\in G$ and ${\rm p}\in M$,
is also a compact set.
\end{enumerate}
\end{definition}

\begin{definition}
Let $M_1,M_2$ be differentiable manifolds,
$G$ a Lie group and
$\Phi_1 \colon G\times M_1 \to M_1$, $\Phi_2 \colon G\times M_2 \to M_2$ 
actions of $G$ on $M_1$ and $M_2$.
A map $F \colon M_1 \to M_2$ is \textbf{equivariant}
with respect to these actions if, for every $g\in G$, the following diagram commutes
$$
\begin{array}{ccccc}
& M_1 & \mapping{\Phi_{1_g}} & M_1 &
\\
F & \Big\downarrow & & \Big\downarrow & F
\\
& M_2 & \mapping{\Phi_{2_g}} & M_2 &
\end{array}
$$
\end{definition}

Remember that the {\sl Lie algebra} of  a Lie group $G$ is
${\bf g}:=\Tan_eG$ or, equivalently, the set of
{\sl left-invariant vector fields} of $\vf(G)$
(that is, the set of vector fields which are invariant
by the action induced on $\vf(G)$ by the left-action $\Phi$ of $G$ on $G$). 
Then, every action of $G$ on $M$ induces a Lie algebra-homomorphism 
$$
\begin{array}{ccccc}
\xi&\colon&{\bf g}&\to&\vf(M)
\\
& &X&\mapsto&\xi_X
\end{array}
$$
defined as follows: the uniparametric subgroup of $G$
generated by $X$ determines another 
uniparametric subgroup of transformations in $M$
$$
\begin{array}{ccccc}
\sigma_t&\colon&M&\to&M
\\
& & {\rm p}&\mapsto&\Phi_{\alpha(-t)}({\rm p}) := \Phi_{{\rm exp}(-tX_e)}({\rm p})
\end{array} \ ,
$$
where ${\rm exp}\colon\Tan_eG\to G$ denotes the 
{\sl exponential map} of the group. Then, the map
$$
\begin{array}{ccccc}
\sigma&\colon&\Real \times M&\to&M
\\
& &(t,{\rm p})&\mapsto&\sigma_t({\rm p}) 
\end{array}
$$
is the flux of some vector field $\xi_X \in\vf(M)$;
that is, for ${\rm p}\in M$ and $t \in \Real$,
$\sigma (t,{\rm p})$ is the integral curve of $\xi_X$
passing through ${\rm p}$. This vector field is given by
$$
\xi_X({\rm p}) = \frac{\d }{\d t}\sigma (t,{\rm p}) = 
\frac{\d }{\d t}\sigma_t({\rm p}) = 
\frac{\d }{\d t}\Phi_{{\rm exp}(-tX_e)}({\rm p})\Big\vert_{t=0} \ .
$$
Hence, for every $f \in\Cinfty(M)$, we have that
$$
(\xi_X(f))({\rm p}) = \frac{\d }{\d t}f(\Phi_{{\rm exp}(-tX_e)}({\rm p}))\Big\vert_{t=0}\ .
$$

\begin{definition}
For $X \in {\bf g}$,  the vector field $\xi_X\in\vf(M)$ is the
\textbf{infinitesimal generator} or \textbf{fundamental vector field}
of the action $\Phi$ associated to $X$.
\end{definition}

\begin{definition}
Let $\Phi$ be an action of $G$ on $M$.
\begin{enumerate}
\item
A $k$-form $\vartheta \in\df^k(M)$ is \textbf{$G$-invariant} by $\Phi$ 
if $\Phi_g^*\vartheta =\vartheta$, for every $g \in G$;
or what is equivalent, if
$\Lie(\xi_Z)\vartheta = 0$, for every $Z \in {\bf g}$.
\item
A vector field $Z\in\vf(M)$ is \textbf{$G$-invariant} by $\Phi$ 
if $\Phi_{g_*}Z =Z$, for every $g \in G$;
or what is equivalent, if
$\Lie(\xi_X)Z= 0$, for every $X\in{\bf g}$.
\end{enumerate}
\end{definition}

\begin{prop}
Let $\psi \colon G_1 \to G_2$ be a Lie group homomorphism.
Then the following diagram commutes:
$$
\begin{array}{ccccc}
& G_1 & \mapping{\psi} & G_2 &
\\
{\rm exp} & \Big\uparrow & & \Big\uparrow & {\rm exp}
\\
& \Tan_{e_1}G_1 \simeq {\bf g}_1 & \mapping{\psi_*} 
& \Tan_{e_2}G_2 \simeq {\bf g}_2 &
\end{array}
$$
that is,
$\psi({\rm exp}(tX_{e_1})) = {\rm exp}(t(\psi_*X)_{e_2})$.
\label{conmutexp}
\end{prop}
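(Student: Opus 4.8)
The plan is to identify the top edge of the diagram as a one-parameter subgroup of $G_2$ and to recognize its initial velocity. First I would fix $X\in{\bf g}_1$ and set $\gamma(t):=\psi({\rm exp}(tX_{e_1}))$, a smooth curve in $G_2$ with $\gamma(0)=\psi(e_1)=e_2$. Since $t\mapsto{\rm exp}(tX_{e_1})$ is a one-parameter subgroup of $G_1$ and $\psi$ is a group homomorphism, one checks at once that $\gamma(s+t)=\psi({\rm exp}(sX_{e_1}){\rm exp}(tX_{e_1}))=\psi({\rm exp}(sX_{e_1}))\psi({\rm exp}(tX_{e_1}))=\gamma(s)\gamma(t)$; hence $\gamma$ is itself a one-parameter subgroup of $G_2$.

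The second step is to compute $\gamma'(0)$. By the chain rule, $\gamma'(0)=\psi_*\left(\frac{\d}{\d t}\big|_{t=0}{\rm exp}(tX_{e_1})\right)$, and since the defining property of the exponential map is that $t\mapsto{\rm exp}(tX_{e_1})$ is the integral curve through $e_1$ of the left-invariant vector field extending $X_{e_1}$, its velocity at $t=0$ equals $X_{e_1}$; therefore $\gamma'(0)=\psi_*(X_{e_1})=(\psi_*X)_{e_2}$.

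Finally, I would invoke the uniqueness of one-parameter subgroups with a prescribed initial velocity. By definition of the exponential map of $G_2$, the one-parameter subgroup of $G_2$ with initial velocity $(\psi_*X)_{e_2}$ is $t\mapsto{\rm exp}(t(\psi_*X)_{e_2})$. Since $\gamma$ is a one-parameter subgroup of $G_2$ with exactly that initial velocity, the two coincide, so $\psi({\rm exp}(tX_{e_1}))={\rm exp}(t(\psi_*X)_{e_2})$ for all $t$, which is the assertion. The only delicate point is this last uniqueness statement, which rests on the existence-and-uniqueness theorem for integral curves of vector fields; equivalently, one may phrase it as the observation that the left-invariant vector fields attached to $X$ on $G_1$ and to $\psi_*X$ on $G_2$ are $\psi$-related (because $\psi$ is a homomorphism), so $\psi$ carries the integral curve through $e_1$ to the integral curve through $e_2$. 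Everything else is a routine unravelling of definitions.
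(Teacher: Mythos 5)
Your proof is correct and follows essentially the same route as the paper's: the paper's one-line argument is precisely the observation that $t\mapsto{\rm exp}(t(\psi_*X)_{e_2})$ is the unique one-parameter subgroup of $G_2$ with initial velocity $(\psi_*X)_{e_2}$, and you have simply spelled out the verification that $\psi({\rm exp}(tX_{e_1}))$ is such a subgroup. The extra detail (the homomorphism computation, the chain rule, and the remark about $\psi$-related left-invariant vector fields) is a careful unpacking of what the paper calls "immediate."
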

\begin{proof}
Immediate observing that  \
$t \mapsto {\rm exp}(t(\psi_*X)_{e_2})$ \ 
is the only uniparametric subgroup of $G_2$
whose tangent vector field at $t=0$ is $(\psi_*X)_{e_2}$.
\\ \qed  \end{proof}

A very relevant type of group actions are the following:

\begin{definition}
Let $G$ be a Lie group, $(M,\Omega)$ a symplectic manifold, and
$\Phi \colon G\times M \to M$ an action of $G$ on $M$.
We say that $\Phi$ is a \textbf{symplectic action} of $G$ on $M$
(also that $G$  \textbf{acts symplectically} on $M$ by $\Phi$)
if $\Phi_g$ is a symplectomorphism, for every $g \in G$;
that is, $\Phi_g^*\Omega = \Omega$.
Then $M$ is said to be a \textbf{symplectic $G$-space}.
\end{definition}

According to this definition, if $G$ acts symplectically on $M$ by $\Phi$,
then the fundamental vector field $\xi_X$ associated to $X$ by $\Phi$ 
is a locally Hamiltonian vector field and, conversely,
if  $\xi_X\in\vf_{lh}(M)$, for every $X \in{\bf g}$, then 
$\Phi$ is a symplectic action of $G$ on $M$. Therefore:

\begin{prop}
Let $\xi\colon{\bf g}\to\vf(M)$ be the map such that 
$\xi(X):=\xi_X$, for $X\in {\bf g}$.
Then, $\Phi$ is a symplectic action of $G$ on $M$ if, and only if,
${\rm Im}\,\xi\subseteq\vf_{lh}(M)$.

This means that
$\Lie(\xi_X)\Omega = 0$, for every $X \in {\bf g}$,
(that is, $\inn(\xi_X)\Omega$ is a closed form).
\end{prop}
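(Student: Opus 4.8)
The statement is an ``if and only if'', and the plan is to prove each implication by passing between the global condition $\Phi_g^*\Omega=\Omega$ and the infinitesimal one $\Lie(\xi_X)\Omega=0$, using two facts already available: Theorem~\ref{teoin}, which says that $\Lie(\xi_X)\Omega=0$ is equivalent to $\xi_X\in\vf_{lh}(M)$; and the construction of the fundamental vector field, which guarantees that the flux of $\xi_X$ is exactly the one-parameter group $\sigma_t=\Phi_{{\rm exp}(-tX_e)}$.

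For the forward implication (symplectic action $\Rightarrow$ ${\rm Im}\,\xi\subseteq\vf_{lh}(M)$), I would fix $X\in{\bf g}$, note that each $\sigma_t=\Phi_{{\rm exp}(-tX_e)}$ is a symplectomorphism by hypothesis, so $\sigma_t^*\Omega=\Omega$ for all $t$, and then differentiate this at $t=0$. Since $\sigma_t$ is the flux of $\xi_X$, this derivative is $\Lie(\xi_X)\Omega$, giving $\Lie(\xi_X)\Omega=0$; Theorem~\ref{teoin} then yields $\xi_X\in\vf_{lh}(M)$, and arbitrariness of $X$ finishes this direction.

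For the converse, I would assume $\Lie(\xi_X)\Omega=0$ for every $X\in{\bf g}$. From the identity $\frac{\d}{\d t}\sigma_t^*\Omega=\sigma_t^*(\Lie(\xi_X)\Omega)=0$ one gets that $\sigma_t^*\Omega$ is constant in $t$, hence equal to $\sigma_0^*\Omega=\Omega$; so $\Phi_g^*\Omega=\Omega$ for every $g$ in the image of ${\rm exp}$. Then I would use that symplectomorphisms are closed under composition and inversion to conclude $\Phi_g^*\Omega=\Omega$ on the whole subgroup generated by ${\rm exp}({\bf g})$, and finally invoke that ${\rm exp}$ is a local diffeomorphism near $e$ so this subgroup is open, and that a connected Lie group is generated by any neighbourhood of $e$.

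The main obstacle is precisely this last step: the image of ${\rm exp}$ is in general neither a subgroup nor all of $G$, so one genuinely needs the group structure of the symplectomorphisms together with a connectedness argument, and strictly speaking the conclusion only reaches the identity component of $G$ unless $G$ is assumed connected. The rest reduces to Cartan's magic formula packaged inside Theorem~\ref{teoin} and the flow property of the fundamental vector fields; no Darboux charts or coordinate computations enter.
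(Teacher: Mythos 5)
Your argument is correct and is exactly the one the paper has in mind: the paper gives no explicit proof, asserting the proposition as an immediate consequence of the definition of symplectic action together with the fact that the flux of $\xi_X$ is $t\mapsto\Phi_{{\rm exp}(-tX_e)}$, which is precisely the differentiation/integration argument you carry out and then convert via Theorem~\ref{teoin}. Your caveat on the converse is well taken: integrating $\Lie(\xi_X)\Omega=0$ only controls $\Phi_g$ for $g$ in the subgroup generated by ${\rm exp}({\bf g})$, i.e.\ the identity component, so the converse as stated implicitly assumes $G$ connected --- a hypothesis the paper leaves tacit.
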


\begin{definition}
Let $G$ be a Lie group, $(M,\Omega )$ a symplectic manifold and
$\Phi \colon G \to M$ a symplectic action of $G$ on $M$.
We say that $\Phi$ is a \textbf{strongly symplectic action} of $G$ on $M$ 
if $\xi_X\in\vf_{h}(M)$, for every $X \in{\bf g}$;
or, what is equivalent, $\inn(\xi_X)\Omega$ is an exact form.

In this case $M$ is said to be a
\textbf{strongly symplectic $G$-space} and
$\Phi$ is a \textbf{Hamiltonian action} of $G$ on $M$.
Otherwise, it is a \textbf{locally Hamiltonian action} of $G$ on $M$.
\end{definition}

In order to discuss the obstruction for a symplectic action to be strongly symplectic, 
let ${\bf g}_h:=\{ X\in{\bf g}\ \mid\ \xi(X)=\xi_X\in\vf_H(M)\}$
and consider the following sequences of Lie algebras:
\beq
\begin{array}{ccccccccc}
0&\to&{\bf g}_h&\to&
{\bf g}&\to&{\bf g}/{\bf g}_h&\to&0
\\
& &\Big\downarrow \ \xi& &\Big\downarrow \ \xi& &\Big\downarrow \ \widetilde\xi& &
\\
0&\to&{\cal X}_h(M)&\to&
{\cal X}_{lh}(M)&\to&H^1(M)&\to&0
\end{array}\ ,
\label{diag}
\eeq 
where $H^1(M)$ denotes the {\sl first de Rham's cohomology group} of $M$,
and $\widetilde \xi$ is a Lie algebra homomorphism which makes the diagram commutative.
Then, the image of {\bf g} by $\xi$ is in ${\cal X}_h(M)$
(that is, the action is strongly symplectic) if, and only if,
$\widetilde \xi = 0$.

There are two specially relevant cases
for which every symplectic action is strongly symplectic:
when $M$ is simply connected
(then every closed form is exact and,
therefore, ${\cal X}_{lh}(M) ={\cal X}_h(M)$) and
if $G$ is semisimple (then ${\bf g} = [{\bf g},{\bf g}]$ 
and  $[{\cal X}_{lh}(M),{\cal X}_{lh}(M)] \subset {\cal X}_h(M)$).

As a particular case, we have:

\begin{definition}
Let $G$ be a Lie group, $(M,\Omega)$ an exact symplectic manifold 
(that is, $\Omega=\d\Theta$ for some $\Theta\in\df^1(M)$)
and $\Phi \colon G \to M$ an action of $G$ to $M$.
$\Phi$ is said to be an
\textbf{exact action} of $G$ on $M$
if, $\Phi_g^*\Theta = \Theta$, for every $g \in G$.
\end{definition}

Of course,  every exact action is strongly symplectic.

\subsection{Comomentum and momentum maps}
\label{cami2}

Next, we introduce new geometric elements
which are very relevant in the theory of symplectic group actions.
First, following \cite{So-ssd} we define:

\begin{definition}
Let $G$ a Lie group which acts symplectically on a symplectic manifold $(M,\Omega )$.
A \textbf{comomentum map} associated to this action
is every Lie algebra linear application (if it exists)
$$
\begin{array}{ccccc}
{\rm j}^*&\colon&{\bf g}&\to&\Cinfty (M)
\\
& &X&\mapsto&f_X:={\rm j}^*(X)
\end{array} \ ,
$$
such that the following diagram commutes
$$
\xymatrix{
& & \ar[dl]_<(0.45){{\rm j}^*} {\bf g} \ar[d]^<(0.45){\xi}&
\\
0 \to \Real 
\ar[r]&\Cinfty (M)\ar[r]_<(0.25){\sharp_\Omega\circ\d}&\vf_{lh}(M)\ar[r]&H^1(M) \to 0
} \ ,
$$
or, what is equivalent,
$$
\inn(\xi_X)\Omega = \d f_X \ .
$$
\label{com}
\end{definition}

The obstruction to the existence of comomentum maps
is given in the following:

\begin{prop}
Let $G$ be a Lie group which acts symplectically on $(M,\Omega)$.
A comomentum map associated to this action exists
if, and only if, the map \ 
$\widetilde\xi \colon {\bf g}/{\bf g}_h \to H^1(M)$ \ 
in \eqref{diag} reduces to be $\widetilde \xi = 0$.
\end{prop}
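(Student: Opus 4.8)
\emph{Proof sketch / plan.} The plan is to reduce the statement ``$\mathrm{j}^*$ is a comomentum map'' to the single linear condition $\inn(\xi_X)\Omega=\d\,\mathrm{j}^*(X)$ for all $X\in{\bf g}$ (the commuting triangle of Definition \ref{com}) and then to read off both implications from the diagram \eqref{diag}, supplemented by an elementary lifting of linear maps. The observation I would put first is that the induced arrow $\widetilde\xi\colon{\bf g}/{\bf g}_h\to H^1(M)$ in \eqref{diag} is \emph{injective}: by definition ${\bf g}_h$ is exactly the kernel of the composite ${\bf g}\stackrel{\xi}{\to}\vf_{lh}(M)\to H^1(M)$, $X\mapsto[\inn(\xi_X)\Omega]$, so $\widetilde\xi$ is just the map it induces on the quotient by this kernel. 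Hence $\widetilde\xi=0$ is equivalent to ${\bf g}/{\bf g}_h=0$, i.e.\ to ${\bf g}_h={\bf g}$: every fundamental vector field $\xi_X$ is globally Hamiltonian.

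For necessity, I would argue as follows: if a comomentum map $\mathrm{j}^*$ exists, then for each $X\in{\bf g}$ the relation $\inn(\xi_X)\Omega=\d\,\mathrm{j}^*(X)$ makes $\inn(\xi_X)\Omega$ exact, so its de Rham class in $H^1(M)$ vanishes; by commutativity of the right-hand square of \eqref{diag} this class equals $\widetilde\xi$ of the image of $X$ in ${\bf g}/{\bf g}_h$, and since every element of ${\bf g}/{\bf g}_h$ arises in this way, $\widetilde\xi=0$. (More diagrammatically: $\sharp_\Omega\circ\d$ takes values in $\vf_H(M)=\ker(\vf_{lh}(M)\to H^1(M))$, so the composite ${\bf g}\stackrel{\mathrm{j}^*}{\to}\Cinfty(M)\stackrel{\sharp_\Omega\circ\d}{\to}\vf_{lh}(M)\to H^1(M)$ is zero, and by the commuting triangle it equals $\widetilde\xi$ preceded by the canonical surjection ${\bf g}\to{\bf g}/{\bf g}_h$, whence $\widetilde\xi=0$.)

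For sufficiency, assuming $\widetilde\xi=0$ I would use the first paragraph to conclude ${\bf g}_h={\bf g}$, i.e.\ $\xi({\bf g})\subseteq\vf_H(M)$, and then simply lift the linear map $\xi\colon{\bf g}\to\vf_H(M)$ through the surjection $\sharp_\Omega\circ\d\colon\Cinfty(M)\to\vf_H(M)$ by a linear map: fix a basis $({\bf e}_a)$ of ${\bf g}$, choose $f_a\in\Cinfty(M)$ with $\inn(\xi_{{\bf e}_a})\Omega=\d f_a$ (possible since $\xi_{{\bf e}_a}\in\vf_H(M)$), and set $\mathrm{j}^*(X):=X^a f_a$ for $X=X^a{\bf e}_a$. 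By linearity of $\xi$ and of $\sharp_\Omega\circ\d$ one gets $\sharp_\Omega(\d\,\mathrm{j}^*(X))=X^a\,\xi_{{\bf e}_a}=\xi_X$ for every $X$, so the triangle of Definition \ref{com} commutes and $\mathrm{j}^*$ is a comomentum map.

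I do not expect a genuine obstacle here: the whole argument is a short diagram chase together with the lifting of a linear map out of a finite-dimensional vector space, and it needs no connectedness, compactness or cohomological vanishing hypothesis. The single point to be careful about is the one made explicit in Definition \ref{com} — that a comomentum map is required only to be linear and to satisfy $\inn(\xi_X)\Omega=\d\,\mathrm{j}^*(X)$; were one in addition to demand that $\mathrm{j}^*$ be a homomorphism of Lie algebras, an extra Chevalley--Eilenberg $2$-cocycle obstruction would enter and the stated equivalence would fail in general, so the proof should (and does) use only the weaker requirement.
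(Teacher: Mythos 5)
Your proof is correct and follows essentially the same route as the paper: the necessity direction is the same diagram chase (a comomentum map forces $\xi({\bf g})\subseteq\vf_H(M)$, hence the composite ${\bf g}\to H^1(M)$ vanishes and $\widetilde\xi=0$), and sufficiency likewise reduces $\widetilde\xi=0$ to every $\xi_X$ being globally Hamiltonian. Your explicit construction of ${\rm j}^*$ by choosing Hamiltonian functions on a basis of ${\bf g}$ and extending linearly makes precise the linearity of the lift, a point the paper's proof leaves implicit, and your closing remark that only linearity (not the Lie-homomorphism property) is required is exactly the right caveat.
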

\begin{proof}
By definition, if a comomentum map exists, then \ 
$\sharp_\Omega~\circ~\d~\circ~{\rm j}^* = \xi$.
But, as ${\rm Im}\ \xi \subset {\rm Im}\ \sharp_\Omega$,
this implies that $\widetilde\xi = 0$.

Conversely, if $\widetilde \xi = 0$, then \
${\rm Im}\ \xi := \xi ({\bf g}) \subset {\cal X}_h(M)$ \
and, for all $X \in {\bf g}$, there exists 
$f_X \in \Cinfty (M)$ such that $\inn(\xi_X)\Omega = \d f_X$.
\\ \qed  \end{proof}

Observe that this is also the obstruction
for a symplectic action to be strongly symplectic.
Hence:

\begin{prop}
Let $G$ be a Lie group which acts symplectically on $(M,\Omega)$.
A comomentum map associated to this action exists
if, and only if, the action is strongly symplectic.
\end{prop}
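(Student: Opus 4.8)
The plan is to obtain the statement directly from the two results that immediately precede it, since both conditions turn out to be governed by the same obstruction. Recall that in the exact sequence \eqref{diag} we introduced the Lie algebra homomorphism $\widetilde\xi\colon{\bf g}/{\bf g}_h\to H^1(M)$, and that the discussion following \eqref{diag} showed that the image of ${\bf g}$ under $\xi$ lies in $\vf_H(M)$ --- i.e.\ that $\Phi$ is strongly symplectic (a Hamiltonian action) --- if, and only if, $\widetilde\xi=0$. The previous proposition shows that a comomentum map associated to $\Phi$ exists if, and only if, $\widetilde\xi=0$. Since both the existence of a comomentum map and strong symplecticity are characterized by this single vanishing condition, they are equivalent, which is exactly the assertion.

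For completeness I would also record the direct argument, which bypasses the diagram. For the implication ``comomentum map $\Rightarrow$ strongly symplectic'': if ${\rm j}^*\colon{\bf g}\to\Cinfty(M)$ is a comomentum map, then by definition $\inn(\xi_X)\Omega=\d f_X$ with $f_X={\rm j}^*(X)\in\Cinfty(M)$ for every $X\in{\bf g}$, so $\inn(\xi_X)\Omega$ is exact and hence $\xi_X\in\vf_H(M)$ for all $X$, which is precisely the condition for $\Phi$ to be strongly symplectic. Conversely, assume $\Phi$ is strongly symplectic, so $\inn(\xi_X)\Omega$ is exact for every $X\in{\bf g}$. Fix a basis $\{e_1,\ldots,e_r\}$ of ${\bf g}$ and choose, for each $a$, a function $f_a\in\Cinfty(M)$ with $\inn(\xi_{e_a})\Omega=\d f_a$; set ${\rm j}^*(e_a):=f_a$ and extend by linearity. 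Since $X\mapsto\xi_X$ and $\d$ are linear, for $X=\lambda^a e_a$ we get $\inn(\xi_X)\Omega=\lambda^a\inn(\xi_{e_a})\Omega=\lambda^a\d f_a=\d({\rm j}^*(X))$, so the defining diagram commutes and ${\rm j}^*$ is a comomentum map.

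The statement is therefore essentially a corollary and presents no genuine obstacle; the only minor point is that in the converse direction one must organize the a priori non-canonical choices of (local or global) Hamiltonian functions into a \emph{linear} map, which is handled by prescribing the values on a basis and extending linearly. (If one additionally demanded that ${\rm j}^*$ be a Lie algebra homomorphism there would be a further cohomological obstruction, but the comomentum map as defined here only requires commutativity of the diagram, i.e.\ linearity together with $\inn(\xi_X)\Omega=\d({\rm j}^*(X))$, so no such refinement is needed.)
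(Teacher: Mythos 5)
Your proposal is correct and follows essentially the same route as the paper: the proposition is obtained there as an immediate consequence of the preceding result (a comomentum map exists iff $\widetilde\xi=0$) together with the observation that $\widetilde\xi=0$ is precisely the obstruction for the symplectic action to be strongly symplectic. Your additional direct argument, in particular organizing the choices of Hamiltonian functions into a linear map by prescribing them on a basis of ${\bf g}$, is a sound way of making explicit a point the paper leaves implicit, and it correctly notes that no Lie algebra homomorphism condition is required at this stage.
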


\begin{remark}{\rm 
If a comomentum map exists, it is not unique.
In fact, if $f_C \colon {\bf g} \to \Real$
is a continuous linear function, then $f_C \in {\bf g}^*$
and then $f_C(X) = ctn.$, for every $X \in {\bf g}$.
Therefore, if ${\rm j}^*$ is a comomentum map,
so is ${\rm j}^{'*}={\rm j}^*+f_C$.
}\end{remark}

Furthermore, if a comomentum map exists, it is not a
Lie algebra homomorphism necessarily since,
for $X,Y \in {\bf g}$, we have that
$$
\sharp_\Omega \circ \d \{ f_X,f_Y \} =
\sharp_\Omega\inn([\xi_X,\xi_Y])\Omega
= -\sharp_\Omega\inn(\xi_{[X,Y]})\Omega 
= -\sharp_\Omega \d f_{[X,Y]}
$$
therefrom
$\d \{ f_X,f_Y \} = -\d f_{[X,Y]}$ and then
$\{ f_X,f_Y \} = -f_{[X,Y]} + \sigma (X,Y)$
where $\sigma \colon {\bf g}\times{\bf g} \to \Real$
is a skew symmetric bilinear function
whose existence measures the obstruction for the
comomentum map to be a Lie algebra homomorphism.
This leads to state:

\begin{definition}
Let $G$ be a Lie group which acts symplectically on $(M,\Omega)$.
The action is said to be a \textbf{Poissonian action} 
or also a \textbf{strongly Hamiltonian action} if
\begin{description}
\item[{\rm (i)}] \ 
There exists a comomentum map for this action.
\item[{\rm (ii)}]
The comomentum map is a Lie algebra (anti)homomorphism.
\end{description}
Then the triple $(M,\Omega,{\rm j}^*)$ is called a
\textbf{Hamiltonian $G$-space}.
\end{definition}

As a particular case, we have that:

\begin{prop}
If $(M,\Omega )$ is an exact symplectic manifold
and the action of $G$ on $M$ is exact, then it is Poissonian.
\label{exact}
\end{prop}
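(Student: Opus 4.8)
The plan is to produce an explicit comomentum map out of the invariant potential $\Theta$ and then verify by a direct computation that it is a Lie algebra (anti)homomorphism, so that the cocycle $\sigma$ appearing in the decomposition $\{f_X,f_Y\}=-f_{[X,Y]}+\sigma(X,Y)$ vanishes identically.

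First I would pass from the finite condition to its infinitesimal version: since $\Phi_g^*\Theta=\Theta$ for every $g\in G$, differentiating along the one-parameter subgroup generated by each $X\in{\bf g}$ yields $\Lie(\xi_X)\Theta=0$. Combining Cartan's formula with $\d\Theta=\Omega$ gives $0=\d\inn(\xi_X)\Theta+\inn(\xi_X)\Omega$, that is, $\inn(\xi_X)\Omega=\d\bigl(-\inn(\xi_X)\Theta\bigr)$. Hence setting $f_X:=-\inn(\xi_X)\Theta\in\Cinfty(M)$ defines, by Definition \ref{com}, a comomentum map ${\rm j}^*\colon{\bf g}\to\Cinfty(M)$, $X\mapsto f_X$; linearity is immediate because $X\mapsto\xi_X$ is linear and the contraction is linear in its vector field argument. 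This settles condition (i) and, in particular, shows the action is strongly symplectic.

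For condition (ii) I would compute $\{f_X,f_Y\}$. Because $\inn(\xi_X)\Omega=\d f_X$, the Hamiltonian vector field of $f_X$ is $\xi_X$ itself (and similarly for $Y$), so $\{f_X,f_Y\}=\Lie(\xi_Y)f_X=-\Lie(\xi_Y)\inn(\xi_X)\Theta$. Applying the identity $\Lie(\xi_Y)\inn(\xi_X)\Theta=\inn(\xi_X)\Lie(\xi_Y)\Theta+\inn([\xi_Y,\xi_X])\Theta$, using $\Lie(\xi_Y)\Theta=0$, and recalling that $X\mapsto\xi_X$ is a Lie algebra homomorphism (with the sign convention of the fundamental vector field used here), so $[\xi_Y,\xi_X]=\xi_{[Y,X]}=-\xi_{[X,Y]}$, this collapses to $\{f_X,f_Y\}=\inn(\xi_{[X,Y]})\Theta=-f_{[X,Y]}$. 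Therefore $\sigma\equiv 0$, ${\rm j}^*$ is a Lie algebra anti-homomorphism, and the action is Poissonian.

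The only delicate point is bookkeeping of signs: the minus sign in the definition of the fundamental vector field (which is what makes $\xi$ a genuine homomorphism rather than an anti-homomorphism), the sign in the chosen Hamiltonian $f_X=-\inn(\xi_X)\Theta$, and the sign convention in the Poisson bracket. I would fix these at the outset and carry them consistently; beyond this there is no conceptual obstacle.
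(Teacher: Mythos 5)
Your proof is correct and follows essentially the same route as the paper: you take $f_X=-\inn(\xi_X)\Theta$ and deduce the bracket relation $\{f_X,f_Y\}=-f_{[X,Y]}$ from $\Lie(\xi)\Theta=0$ via the identity $\Lie(\xi_Y)\inn(\xi_X)\Theta=\inn(\xi_X)\Lie(\xi_Y)\Theta+\inn([\xi_Y,\xi_X])\Theta$, exactly as in Proposition \ref{exact}. The only difference is that you spell out, via Cartan's formula, why this $f_X$ is a comomentum map (the paper simply asserts it) and you fix the bracket convention for $X\mapsto\xi_X$ explicitly, which is a sensible precaution since only the sign of $\{f_X,f_Y\}=\pm f_{[X,Y]}$ depends on it and either sign yields the required (anti)homomorphism.
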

\begin{proof}
Let $\Omega =\d \Theta$.
Defining $f_X := -\Theta (\xi_X)$, then
${\rm j}^* \colon X \mapsto f_X$ is a comomentum map and
$$
f_{[X,Y]}=-\Theta (\xi_{[X,Y]}) = 
\Theta ([\xi_X,\xi_Y]) = \inn([\xi_X,\xi_Y])\Theta
=\Lie(\xi_X)\inn(\xi_Y)\Theta = 
-\Lie(\xi_X)f_Y = -\{ f_X,f_Y \} \ .
$$
\qed  \end{proof}

\begin{definition}
\cite{So-ssd}.
Let $G$ be a Lie group which acts 
in a strongly symplectic way on a symplectic manifold $(M,\Omega)$.
A \textbf{momentum map} associated to this action
is the dual map of a comomentum map; that is, a map
$$
\begin{array}{ccccc}
{\rm J}&\colon&M&\to&{\bf g}^*
\\
& &{\rm p}&\mapsto&\mu
\end{array}
$$
such that, for every $X \in {\bf g}$ and for ${\rm p}\in M$,
$$
\langle X,{\rm J}({\rm p})\rangle:={\rm j}^*(X)({\rm p})=f_X({\rm p}) \ .
$$
(Observe that the map
$\mu:= {\rm J}({\rm p})\colon{\bf g}\to\Real$ such that 
$X\mapsto f_X({\rm p})$, for $X\in{\bf g}$,
is, in fact, an element of ${\bf g}^*$).
\end{definition}

Obviously, the obstruction to the existence of momentum maps is
the same that for the comomentum maps.
In particular, as an immediate corollary of Proposition
\ref{exact} we have:

\begin{prop}
If $(M,\Omega )$ is an exact symplectic manifold
and the action of $G$ on $M$ is exact,
then a momentum map exists. It is given by
$\langle X,{\rm J}({\rm p})\rangle:= -\Theta (\xi_X)({\rm p})$,
for every $X \in {\bf g}$, for ${\rm p}\in M$.
\end{prop}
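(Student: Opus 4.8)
The plan is to obtain this statement as an immediate corollary of Proposition \ref{exact} together with the characterization of strongly symplectic actions by the existence of comomentum maps. First I would recall that, by Proposition \ref{exact}, an exact action of $G$ on an exact symplectic manifold $(M,\Omega=\d\Theta)$ is Poissonian; in particular it admits a comomentum map, hence (by the equivalence established above, namely that a comomentum map exists if and only if the action is strongly symplectic) the action is strongly symplectic. Since the obstruction to the existence of a momentum map coincides with that for a comomentum map, a momentum map ${\rm J}\colon M\to{\bf g}^*$ therefore exists.

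Next I would identify ${\rm J}$ explicitly. The comomentum map exhibited in the proof of Proposition \ref{exact} is ${\rm j}^*\colon X\mapsto f_X:=-\Theta(\xi_X)$, obtained by checking $\inn(\xi_X)\Omega=\d f_X$ from the infinitesimal form $\Lie(\xi_X)\Theta=0$ of the exactness hypothesis $\Phi_g^*\Theta=\Theta$. By the very definition of a momentum map as the dual of a comomentum map, for every ${\rm p}\in M$ and $X\in{\bf g}$ one must then have $\langle X,{\rm J}({\rm p})\rangle={\rm j}^*(X)({\rm p})=f_X({\rm p})=-\Theta(\xi_X)({\rm p})$, which is precisely the claimed formula.

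Finally, to make the argument self-contained I would verify that the right-hand side actually defines a map $M\to{\bf g}^*$: for a fixed ${\rm p}$, the assignment $X\mapsto-\Theta_{\rm p}(\xi_X({\rm p}))$ is linear in $X$, since $X\mapsto\xi_X$ is linear from ${\bf g}$ to $\vf(M)$ and $\Theta_{\rm p}$ is a linear form on $\Tan_{\rm p}M$; hence it belongs to ${\bf g}^*$, and it depends smoothly on ${\rm p}$ because $\xi_X$ and $\Theta$ are smooth. There is essentially no obstacle here: this proposition is a specialization of Proposition \ref{exact}, and the only point requiring care is bookkeeping — keeping track of which prior results the existence of ${\rm J}$ is being reduced to (Proposition \ref{exact} and the coincidence of the two existence obstructions) and confirming that the explicit formula is exactly the dual of the comomentum map constructed there.
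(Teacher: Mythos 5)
Your proposal is correct and follows exactly the route the paper intends: the statement is presented as an immediate corollary of Proposition \ref{exact}, whose proof exhibits the comomentum map ${\rm j}^*(X)=-\Theta(\xi_X)$, and the momentum map is by definition its dual, giving $\langle X,{\rm J}({\rm p})\rangle=-\Theta(\xi_X)({\rm p})$. Your extra checks of linearity in $X$ and smoothness in ${\rm p}$ are harmless bookkeeping beyond what the paper records.
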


\begin{remark}{\rm 
As a more general result,
if $G$ is a connected Lie group and $\Phi$ a strongly symplectic action
of $G$ on $(M,\Omega)$, then the action is Poissonian
if, and only if,
the momentum maps associated to this action 
are $Ad^*$-equivariant, that is,
for every $g \in G$, we have that $Ad^*\circ {\rm J}=\Phi_g\circ{\rm J}$;
where $Ad^*\colon{\bf g}^*\to{\bf g}^*$ is the {\sl coadjoint action} of $G$.
(See \cite{So-ssd} for the proof).
}\end{remark}

\subsection{Reduction by symmetries}
\label{cami3}

Finally, we introduce the concept of 
{\sl group of symmetries} of a Hamiltonian system,
its relation with the momentum map,
and give the main insights about the theory of reduction by symmetries.

\begin{definition}
Let $G$ be a Lie group, $(M,\Omega,h)$ a Hamiltonian system, 
and $\Phi \colon G \to M$ an action of $G$ on $M$.
We say that $G$ is a \textbf{symmetry group} of the Hamiltonian system if
\begin{description}
\item[{\rm (i)}] \ 
$\Phi$ is a symplectic action.
\item[{\rm (ii)}]
For every $g \in G$, we have $\Phi_g^*h = h$.
\end{description}
The diffeomorphism $\Phi_g$, for every $g \in G$, is called a
\textbf{symmetry} of the Hamiltonian system.
\end{definition}

From this definition, it is obvious that 
$G$ is a symmetry group of a Hamiltonian system $(M,\Omega,h)$
if, and only if,  $\Lie(\xi_X)h = 0$, for every $X \in {\bf g}$. 

In this context, the Noether Theorem is stated as follows::

\begin{teor}
({\sl \textbf{Noether's theorem}}).
Let $G$ be a symmetry group which acts in a strongly symplectic way
on the Hamiltonian system $(M,\Omega,h)$.
Then, for every $X \in {\bf g}$, the functions $f_X$
are constants of motion; that is, $\Lie(X_h)f_X = 0$.
\end{teor}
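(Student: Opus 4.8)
The plan is to reduce this group-theoretic statement of Noether's theorem to the infinitesimal version already proved as Theorem \ref{Nth}. First I would observe that the hypothesis that $G$ is a symmetry group means $\Phi_g^*\Omega=\Omega$ and $\Phi_g^*h=h$ for every $g\in G$; differentiating these conditions along the one-parameter subgroup generated by an arbitrary $X\in{\bf g}$ yields $\Lie(\xi_X)\Omega=0$ and $\Lie(\xi_X)h=0$. The first of these says precisely that $\xi_X\in\vf_{lh}(M)$, and together with $\Lie(\xi_X)h=0$ this is exactly Definition \ref{simeinf1}: $\xi_X$ is an infinitesimal Noether symmetry of $(M,\Omega,h)$.

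Next I would invoke the hypothesis that the action is strongly symplectic, which by the Proposition characterizing comomentum maps guarantees the existence of a comomentum map ${\rm j}^*\colon{\bf g}\to\Cinfty(M)$, so that $f_X:={\rm j}^*(X)$ satisfies $\inn(\xi_X)\Omega=\d f_X$ globally. Thus $f_X$ is a (global) Hamiltonian function of the infinitesimal Noether symmetry $\xi_X$. Applying Theorem \ref{Nth} (Noether) to $Y=\xi_X$ immediately gives $\Lie(X_h)f_X=0$, which is the claim. Alternatively, one can give the one-line direct computation
$$
\Lie(X_h)f_X=\inn(X_h)\d f_X=\inn(X_h)\inn(\xi_X)\Omega=-\inn(\xi_X)\inn(X_h)\Omega=-\inn(\xi_X)\d h=-\Lie(\xi_X)h=0 \ ,
$$
reproducing the proof of Theorem \ref{Nth} in this setting.

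There is no real obstacle here; the only point requiring a word of care is the passage from the finite invariance conditions $\Phi_g^*\Omega=\Omega$, $\Phi_g^*h=h$ to their infinitesimal counterparts, which is justified by the relation between the fundamental vector field $\xi_X$ and the flow $\sigma_t=\Phi_{\exp(-tX_e)}$ recalled before Definition \ref{com}: the Lie derivative along $\xi_X$ is the derivative at $t=0$ of pullback by $\sigma_t$. One should also note that strong symplecticity is exactly what upgrades the merely local Hamiltonian functions supplied by Poincar\'e's Lemma to the globally defined functions $f_X$ appearing in the statement; without it the functions $f_X$ need not exist. The proof is therefore essentially a bookkeeping exercise assembling Definition \ref{simeinf1}, the existence of the comomentum map, and Theorem \ref{Nth}.
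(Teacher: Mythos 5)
Your proof is correct and, in substance, the same as the paper's: the displayed chain of identities $\Lie(X_h)f_X=\inn(X_h)\inn(\xi_X)\Omega=-\inn(\xi_X)\d h=-\Lie(\xi_X)h=0$ is exactly the argument given there. Packaging it as an application of the infinitesimal Noether theorem (after checking that each $\xi_X$ is an infinitesimal Noether symmetry and that strong symplecticity supplies the global Hamiltonian functions $f_X$) is a harmless reorganization, since that theorem's proof is the same computation.
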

\begin{proof}
We have that $\inn(X_h)\Omega=\d h$, then
$$
\Lie(X_h)f_X=\inn(X_h)\d f_X 
= \inn(X_h)\inn(\xi_X)\Omega
=-\inn(\xi_X)\inn(X_h)\Omega = -\inn(\xi_X)\d h
=-\Lie(\xi_X)h = 0 \ .
$$
\\ \qed  \end{proof}

One of the most important features on the study of
Hamiltonian systems with symmetry is the so-called
{\sl reduction theory}.
Now we are going to state the main results on this topic
(see, for instance, \cite{CM-2009,MMOPR-2007,MR-99,MW-74,MW,Or-2002,OR-2002,OR-2004,We-77} for deeper explanations and details).

First, we have to introduce the following concept:

\begin{definition}
Let $G$ be a Lie group and a strongly symplectic action
of $G$ on $(M,\Omega)$.
Let ${\rm J}$ be a momentum map associated to this action. Then, 
$\mu \in {\bf g}^*$ is a 
\textbf{weakly regular value} of ${\rm J}$ if:
\begin{description}
\item[{\rm (i)}] \
${\rm J}^{-1}(\mu )$ is a submanifold of $M$.
\item[{\rm (ii)}]
For every ${\rm p}\in {\rm J}^{-1}(\mu )$, 
$\Tan_{\rm p}({\rm J}^{-1}(\mu)) = \ker \Tan_{\rm p}{\rm J}$.
\end{description}
If $\Tan_{\rm p}{\rm J}$ is surjective, then $\mu$ is said to be a \textbf{regular value}. 
(Of course,  every regular value is weakly regular).
\end{definition}

Then we have:

\begin{teor}
Let $G$ be a Lie group acting on a symplectic manifold 
$(M,\Omega )$ such that the action is 
Poissonian, free and proper.
Let $\mu$ be a weakly regular value of the
momentum map associated to this action,
and let $G_{\mu}$ be the isotropy group of
${\rm J}^{-1}(\mu )$ for the action of $G$ on $M$.
Then:
\begin{enumerate}
\item
The submanifold 
${\rm J}^{-1}(\mu )$ is stable under the action of $G_\mu$ and
so the quotient $J^{-1}(\mu )/G_\mu$ is well-defined
(it is called the
\textbf{orbit space} of ${\rm J}^{-1}(\mu )$).
\item
{\rm (Marsden--Weinstein's theorem)}.
Denote by $\pi \colon{\rm J}^{-1}(\mu ) \to{\rm J}^{-1}(\mu )/G_\mu$
the canonical projection and by
$\iota \colon{\rm J}^{-1}(\mu ) \hookrightarrow M$ the embedding.
Then, ${\rm J}^{-1}(\mu )/G_\mu$ is a differentiable manifold
which is endowed with a (unique) symplectic structure
$\widehat\Omega$ such that
$\iota^*\Omega = \pi^*\widehat\Omega$.
\item
If $G$ is a symmetry group of the
Hamiltonian system $(M,\Omega ,h)$,
then $({\rm J}^{-1}(\mu )/G_\mu ,\widehat\Omega ,\widehat h)$
is a Hamiltonian system, where
$\iota^*h = \pi^*\widehat h$.
\end{enumerate}
\end{teor}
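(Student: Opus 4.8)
The plan is to prove the three items in order, with the geometric content concentrated in item~2; items~1 and~3 are then essentially bookkeeping about invariance and quotients, together with the symplectic linear algebra of Proposition~\ref{propos1}. Throughout I use that, the action being Poissonian, the momentum map ${\rm J}$ can be chosen $Ad^*$-equivariant, i.e. ${\rm J}\circ\Phi_g=Ad^*_{g^{-1}}\circ{\rm J}$ (this is the content of the Remark on $Ad^*$-equivariance of the momentum map). For item~1, this equivariance shows that $\Phi_g$ carries ${\rm J}^{-1}(\mu)$ onto ${\rm J}^{-1}(Ad^*_{g^{-1}}\mu)$, so the subgroup of $g$ preserving ${\rm J}^{-1}(\mu)$ — the group $G_\mu$ of the statement — coincides with the coadjoint stabilizer of $\mu$, and ${\rm J}^{-1}(\mu)$ is $G_\mu$-stable. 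Since the $G$-action is free and proper, its restriction to the closed subgroup $G_\mu$ acting on the $G_\mu$-invariant submanifold ${\rm J}^{-1}(\mu)$ is again free and proper; by the standard quotient-manifold theorem (slice theorem) for free proper actions, ${\rm J}^{-1}(\mu)/G_\mu$ is then a differentiable manifold and $\pi$ is a surjective submersion (a principal $G_\mu$-bundle).

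For item~2, the key is a pointwise linear-algebra computation at ${\rm p}\in{\rm J}^{-1}(\mu)$. Write $W:=\Tan_{\rm p}{\cal O}_{\rm p}=\{\xi_X({\rm p}):X\in{\bf g}\}$ for the tangent space to the $G$-orbit. From $\inn(\xi_X)\Omega=\d f_X$ we get $\langle\Tan_{\rm p}{\rm J}(v),X\rangle=\Omega(\xi_X({\rm p}),v)$ for all $v\in\Tan_{\rm p}M$, hence $\ker\Tan_{\rm p}{\rm J}=W^\bot$; by weak regularity $\Tan_{\rm p}({\rm J}^{-1}(\mu))=\ker\Tan_{\rm p}{\rm J}=W^\bot$. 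Differentiating the equivariance of ${\rm J}$ shows that $\xi_X({\rm p})\in\ker\Tan_{\rm p}{\rm J}$ precisely when $X$ lies in ${\bf g}_\mu:=\Tan_eG_\mu$, so $W\cap W^\bot=\Tan_{\rm p}(G_\mu\cdot{\rm p})$, which is the vertical subspace of $\pi$ at ${\rm p}$. Now the kernel (radical) of the restricted form $(\iota^*\Omega)_{\rm p}$ equals $\Tan_{\rm p}({\rm J}^{-1}(\mu))\cap\big(\Tan_{\rm p}({\rm J}^{-1}(\mu))\big)^\bot=W^\bot\cap W^{\bot\bot}=W^\bot\cap W$, using $W^{\bot\bot}=W$ from item~4 of Proposition~\ref{propos1}; thus $\ker(\iota^*\Omega)_{\rm p}=\Tan_{\rm p}(G_\mu\cdot{\rm p})$. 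Hence $\iota^*\Omega$ is horizontal for $\pi$, and it is $G_\mu$-invariant since $\Phi_g^*\iota^*\Omega=\iota^*\Phi_g^*\Omega=\iota^*\Omega$ for $g\in G_\mu$; being basic, there is a unique $\widehat\Omega\in\df^2({\rm J}^{-1}(\mu)/G_\mu)$ with $\pi^*\widehat\Omega=\iota^*\Omega$. It is closed because $\pi^*\d\widehat\Omega=\d\iota^*\Omega=\iota^*\d\Omega=0$ and $\pi^*$ is injective on forms; it is nondegenerate by the kernel computation; and it is unique, again by injectivity of $\pi^*$.

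For item~3, $G$ being a symmetry group means $\Phi_g^*h=h$, so $\iota^*h$ is $G_\mu$-invariant and descends to a unique $\widehat h\in\Cinfty({\rm J}^{-1}(\mu)/G_\mu)$ with $\pi^*\widehat h=\iota^*h$; then $({\rm J}^{-1}(\mu)/G_\mu,\widehat\Omega,\widehat h)$ is a regular Hamiltonian system in the sense of Definition~\ref{sdhr}, since $\widehat\Omega$ is symplectic. To see that its dynamics is the reduction of the original one, I would invoke Noether's theorem: $\Lie(\xi_X)h=0$ gives $\Lie(X_h)f_X=0$, so ${\rm J}$ is constant along the integral curves of $X_h$, whence $X_h$ is tangent to ${\rm J}^{-1}(\mu)$; its restriction there is $G_\mu$-invariant, hence $\pi$-related to a vector field $\widehat X$ on the quotient, and transporting $\inn(X_h)\Omega=\d h$ through $\iota^*$ and $\pi$ yields $\inn(\widehat X)\widehat\Omega=\d\widehat h$, i.e. $\widehat X=\sharp_{\widehat\Omega}(\d\widehat h)$ is the dynamical vector field of the reduced system.

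The main obstacle is everything in item~2: correctly identifying $\ker\Tan_{\rm p}{\rm J}$ with $(\Tan_{\rm p}{\cal O}_{\rm p})^\bot$, identifying $W\cap W^\bot$ with the tangent space to the $G_\mu$-orbit via the infinitesimal form of equivariance, and invoking the quotient-manifold/slice theorem for free proper actions both to give ${\rm J}^{-1}(\mu)/G_\mu$ its manifold structure and to justify that a basic form descends uniquely. The clean point that makes $\ker(\iota^*\Omega)$ coincide exactly with the $\pi$-vertical distribution is the bi-orthogonality $W^{\bot\bot}=W$ of Proposition~\ref{propos1}.
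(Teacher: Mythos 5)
Your proposal is correct, and for items 1 and 3 it runs along the same lines as the paper: item 1 via the ($Ad^*$-)equivariance of ${\rm J}$, which is exactly how the paper shows $\Phi_g({\rm J}^{-1}(\mu))={\rm J}^{-1}(\mu)$ for $g\in G_\mu$, and item 3 via the $G$-invariance of $h$ and projection to the quotient. The real difference is item 2: the paper does not prove the Marsden--Weinstein statement at all (it declares the proof ``long and heavy'' and refers to the literature), whereas you supply the standard argument, and your version of it is sound. The pointwise computation is right: $\langle\Tan_{\rm p}{\rm J}(v),X\rangle=\Omega(\xi_X({\rm p}),v)$ gives $\ker\Tan_{\rm p}{\rm J}=W^\bot$ with $W=\Tan_{\rm p}{\cal O}_{\rm p}$, weak regularity identifies this with $\Tan_{\rm p}({\rm J}^{-1}(\mu))$, the infinitesimal equivariance identifies $W\cap W^\bot$ with the tangent space to the $G_\mu$-orbit, and the radical of $\iota^*\Omega$ is then $W^\bot\cap W^{\bot\bot}=W^\bot\cap W$ by the bi-orthogonality of Proposition \ref{propos1}, applied pointwise to the finite-dimensional symplectic vector space $(\Tan_{\rm p}M,\Omega_{\rm p})$. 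Combined with the quotient-manifold theorem for the free proper $G_\mu$-action on the closed invariant level set (which makes $\pi$ a surjective submersion, indeed a principal $G_\mu$-bundle, so that the $G_\mu$-invariant horizontal form $\iota^*\Omega$ descends uniquely, and $\pi^*$ is injective on forms, giving closedness and uniqueness of $\widehat\Omega$), this is precisely the argument the paper delegates to its references. Your additional verification in item 3 that $X_h$ is tangent to ${\rm J}^{-1}(\mu)$ (by Noether) and projects to the Hamiltonian vector field of $\widehat h$ goes slightly beyond what the paper asserts, but it is correct and harmless; what each approach buys is clear: the paper's exposition is shorter because it only proves the two easy items, while yours is self-contained at the cost of invoking the slice/quotient theorem explicitly.
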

\begin{proof}
\begin{enumerate}
\item
In fact, if $g \in G_{\mu}$, for all ${\rm p}\in{\rm J}^{-1}(\mu )$ we have
$$
\Phi_g({\rm p}) = \Phi_g({\rm J}^{-1}(\mu)) =
{\rm J}^{-1}(g\mu )) = {\rm J}^{-1}(\mu ) \ ,
$$
then ${\rm J}^{-1}(\mu )$ is invariant under the action of $G_{\mu}$
and the quotient is well-defined.
\item
We omit he proof of this statement because it is long and heavy.
It can be found in the above-mentioned references.
\item
It is evident, since $h$ is $G$-invariant
(and hence it is also $G_\mu$-invariant),
therefore it projects onto 
${\rm J}^{-1}(\mu )/G_\mu$.
\end{enumerate}
\qed  \end{proof}


\chapter{Symplectic mechanics (II): Autonomous Lagrangian dynamical systems}
\protect\label{sdl}

The aim of this chapter is to state a geometrical formulation 
for a particular kind of dynamical systems: those which are described by
Lagrangian functions and that, consequently, are {\sl variational}.
These systems admit a description as Hamiltonian dynamical systems,
but they present relevant particular characteristics.
This formulation gives rise to two formalisms: the so-called {\sl Lagrangian formalism} 
and its associated {\sl canonical Hamiltonian formalism}.
These kinds of systems include, as particular cases, a wide range of mechanical systems
which are very relevant in physics: the {\sl conservative Newtonian systems}, 
which are those whose configuration space is endowed with a metric
(they are studied in Chapter \ref{sdn}).

There are several alternative ways to state the geometric description for these kinds of systems. 
The geometric framework where the theory is developed is the tangent bundle of a manifold, $\Tan Q$,
which represents the phase space of the system.
A selected function on it, the {\sl Lagrangian} $\Lag$, 
is a depository of the physical information of the system.
Then, a first formulation consists in using $\Lag$ to define the so-called
{\sl Legendre transformation} which connects $\Tan Q$ and $\Tan^* Q$
and, if this map is a (local) diffeomorphism, use it to translate the Hamiltonian description previously stated in $\Tan^*Q$ to $\Tan Q$ (see, for instance, \cite{AM-78}).

The second and more general approach consists in using the canonical structures of the tangent bundle
to construct, starting from $\Lag$, the geometrical objects needed to establish the dynamics of the system.
This is the so-called {\sl Klein} or {\sl Cartan formulation} of the Lagrangian formalism, which was stated in the classical references
\cite{Ga-52,Go-69,Grif-72a,Grif-72b,klein} and developed later by many other authors 
\cite{CMSVZ-76,Cr-81,Cr-83b,CP-adg,dLe89,SCC-84}.
It has the advantage that is more direct and
does not need any previously established Hamiltonian formalism
(in fact, the existence of this Hamiltonian formalism is not 
assured  for some singular Lagrangian systems).
This is the procedure that we follow in this exposition.

Another elegant alternative was developed in \cite{Tu-76a,Tu-76b},
using the concept of {\sl special symplectic manifold} and of
{\sl Lagrangian submanifold}, and obtaining both the Lagrangian and Hamiltonian formalisms
for Lagrangian systems and their equivalence.
Finally, another unified Lagrangian-Hamiltonian formalism for these systems
is given in \cite{SR-83}. 
These two approaches have subsequently been generalized 
and applied in many physical contexts.

The structure of the chapter is as follows: 
first, we introduce the mathematical framework which is needed 
to state the Lagrangian formalism and its associated canonical Hamiltonian formalism; 
that is, the canonical geometric structures of the tangent and the cotangent bundles of a manifold.
Next, we develop the Lagrangian formalism for {\sl dynamical Lagrangian systems}
and its associated {\sl canonical Hamiltonian formalism}, 
their equivalence, and the {\sl Hamilton--Jacobi theory}
for the Hamiltonian formalism.
Furthermore, we describe the {\sl Skinner-Rusk formalism,} which is
a nice formulation that unifies both the Lagrangian and the Hamiltonian formalism.
We also discuss the symmetries and conserved quantities in both formalisms,
paying special attention to those which are canonical lifts of diffeomorphisms and vector fields (which are called {\sl natural\/}),
and studying the equivalence of Lagrangians in this context.
Finally, we study the variational formulation introducing the {\sl Hamilton} and the {\sl Hamilton--Jacobi variational principles}
for the Lagrangian and the Hamiltonian formalisms, respectively.
Some relevant physical systems, the harmonic oscillator and the Kepler problem, are analyzed using this geometric treatment.

\section{Geometric structures of the tangent and cotangent bundles}
\protect\label{egft}

(See \cite{Cr-81,Cr-83b,CP-adg,dLe89,Ga-52,Go-69,Grif-72a,Grif-72b,klein,SCC-84}).

The tangent bundle of a manifold $Q$, denoted as   $\Tan Q$ (whose construction and main characteristics are reviewed in the appendix \ref{sec:tangentb})
is endowed with three canonical geometric structures:
the {\sl vertical subbundle}, the {\sl vertical endomorphism} and the {\sl Liouville vector field}.
Furthermore, there are some characteristic vector fields in $\Tan Q$,
those whose integral curves are obtained as solutions to 
{\sl second-order differential equations} in $Q$.
Similarly, the {\sl cotangent bundle} of $Q$, denoted as $\Tan^*Q$
is endowed with some canonical differential forms.
Next, we present and discuss all these topics in detail.

Along this section, $Q$ is a differentiable manifold with $\dim\,Q=n$.

\subsection{The vertical subbundle. Vertical lift}

Considering the canonical projection $\tau_Q \colon \Tan Q \to Q$,
and its tangent map $\Tan\tau_Q\colon\Tan\Tan Q\to\Tan Q$.

\begin{definition}
Let $(q,u)\in\Tan Q$ and ${\rm V}_{(q,u)}(\tau_Q):=\ker \Tan_{(q,u)}\tau_Q$.
The \textbf{vertical subbundle} of $\Tan\Tan Q$ is the vector bundle (of rank $n$)
${\rm V}(\tau_Q)\to\Tan Q$, where
$$
{\rm V}(\tau_Q) := \bigcup_{(q,u)\in\Tan Q}{\rm V}_{(q,u)}(\tau_Q) \ .
$$

The sections of this bundle ${\rm V}(\tau_Q)\to\Tan Q$ are called \textbf{vertical vector fields},
and the set of all these vector fields is denoted by
$\vf^{V(\tau_Q)}(\Tan Q)$.
\end{definition}

It is immediate to prove that, in natural coordinates of $\Tan Q$,
the expression of these vector fields is
\(\dst f^i\derpar{}{v^i}\); that is, $\vf^{V(\tau_Q)}(\Tan Q)$
is locally generated by the set \(\dst\left\{\derpar{}{v^i}\right\}\).

Another interpretation of the fibers of ${\rm V}(\tau_Q)$ is the following:
for every $q\in Q$, consider the $n$-dimensional vector space $\Tan_qQ$,
as a differentiable manifold and the natural immersion
$$
\begin{array}{cccc}
j_q \colon & \Tan_qQ & \to & \Tan Q
\\
& u & \mapsto & (q,u)
\end{array} \ .
$$
Observe that $\tau_Q\circ j_x$ is the constant map equal to $x$. 
If $u \in \Tan_qQ$, we have that
$$
\Tan_uj_q \colon  \Tan_u\Tan_qQ \to \Tan_{(q,u)}\Tan Q \ ;
$$
but, as $\tau_Q\circ j_q$ is a constant map,
then $\Tan_u(\tau_Q\circ j_q)=\Tan_{(q,u)}\tau_Q\circ\Tan_uj_q=0$.
This is equivalent to say that
${\rm Im}\,\Tan_uj_q\subseteq\ker \Tan_{(q,u)}\tau_Q={\rm V}_{(q,u)}(\tau_Q)$
and, as $j_q$ is an immersion,
$$
\dim\,{\rm Im}\,\Tan_uj_q=\dim\,(\Tan_u\Tan_qQ=n=\dim\,{\rm V}_{(q,u)}(\tau_Q) \ ,
$$
hence ${\rm Im}\,\Tan_uj_q={\rm V}_{(q,u)}(\tau_Q)$,
(since both of them have the same dimension and the first one is a subset of the second one).

As a consequence of the above discussion, ${\rm V}_{(q,u)}(\tau_Q)$
is identified naturally with $\Tan_u\Tan_qQ$,
through the isomorphism induced by $\Tan_uj_q$ onto its image.
Furthermore, as $\Tan_qQ$ is a vector space,
if $u \in \Tan_qQ$, we have that $\Tan_qQ$
is canonically identified with $\Tan_u(\Tan_qQ)$ by means of the directional derivative
\footnote{
Remember that, if $F$ is an $n$-dimensional real vector space and
$u \in F$, the natural identification between $F$ and $\Tan_uF$
is given as follows:
\begin{eqnarray*}
F & \to & \Tan_uF
\\
v & \mapsto & {\rm D}_v(u) \ ,
\end{eqnarray*}
where ${\rm D}_v(u)$ denotes the directional derivative with respect to the vector $v$ at the point $u$;
that is, if $f \colon \Real^n\simeq F\to \Real$ is any differentiable function, then
$$
({\rm D}_v(u))f\equiv{\rm D}_vf(u):= \lim_{t \mapsto 0}\frac{f(u+tv)-f(u)}{t} \ .
$$
If $x^1,\ldots ,x^n$ are coordinates in $F$ and
$v=(\lambda^1,\ldots ,\lambda^n)$, then
\(\dst ({\rm D}_v(u))f=\lambda^i\derpar{f}{x^i}\Big\vert_u\);
hence
\(\dst {\rm D}_v(u)=\lambda^i\derpar{}{x^i}\Big\vert_u\),
and the identification is immediate.
}.
Thus, ${\rm V}_{(q,u)}(\tau_Q)$ can also be identified with
 $\Tan_qQ$.
In other words,
${\rm V}(\tau_Q)$ is the pull-back to $\Tan Q$ of the bundle
$\Tan Q$ over $Q$ by means of the map $\tau_Q$;
that is:
$$
\begin{array}{ccccc}
{\rm V}(\tau_Q) & \simeq &\tau_Q^*(\Tan Q)& \to & \Tan Q
\\
& & \tau_{\Tan Q} \downarrow &  & \downarrow \tau_Q
\\
& & \Tan Q& \mapping{\tau_Q} & Q
\end{array} \ .
$$
In this way, we have constructed the  isomorphism of vector spaces
\begin{eqnarray*}
\Tan_qQ & \to & \Tan_u\Tan_q Q\simeq{\rm V}_{(q,u)}(\tau_Q)\subset\Tan_{(q,u)}\Tan Q
\\
v & \mapsto & {\rm D}_v(q,u) \ ,
\end{eqnarray*}
where, if $f \in\Cinfty (\Tan Q)$, then
$$
({\rm D}_v(q,u))f\equiv{\rm D}_vf(q,u)=\lim_{t \mapsto 0}\frac{f(q,u+tv)-f(q,u)}{t} \ .
$$

\begin{definition}
The vector ${\rm D}_v(q,u)$ is called the
\textbf{vertical lift} of $v$ to the point $(q,u)$,
and the map
$$
\begin{array}{ccccc}
\lambda_q^{(q,u)} & \colon &\Tan_qQ & \to & \Tan_{(q,u)}\Tan Q
\\
& & v & \mapsto & {\rm D}_v(q,u)
\end{array}
$$
which implements the above isomorphism is called the
\textbf{vertical lift}.
\end{definition}

In natural coordinates, if \(\dst v=\lambda^i\derpar{}{q^i}\Big\vert_{(q,u)}\),
then \(\dst\lambda_q^{(q,u)}(v)=\lambda^i\derpar{}{v^i}\Big\vert_{(q,u)}\).

The vertical lift is extended in a natural way to the vector fields of $\vf(Q)$ and the mapping $\lambda^V:\vf(Q) \to \vf^{V(\tau_Q)}(\Tan Q)$,
is $\Cinfty(Q)$-lineal.

In natural coordinates, if \(\dst X=f^i\derpar{}{q^i}\),
then its vertical lift is
\(\dst\lambda^V(X)\equiv  X^V=\tau^*_Qf^i\derpar{}{v^i}\).

\subsection{The canonical (or vertical) endomorphism}

\begin{definition}
Let $(q,u)\in\Tan Q$. The map
$$
\begin{array}{cccc}
J_{(q,u)} \colon & \Tan_{(q,u)}\Tan Q & \to & \Tan_{(q,u)}\Tan Q
\\
& Y & \mapsto & {\rm D}_{(\Tan_{(q,u)}\tau_Q)Y}(q,u)=
\lambda_q^{(q,u)}(\Tan_{(q,u)}\tau_Q(Y))
\end{array}
$$
is called the \textbf{canonical} or \textbf{vertical endomorphism} (at $(q,u)$).
\end{definition}

Observe that $J_{(q,u)}Y$ is the vertical lift of $(\Tan_{(q,u)}\tau_Q)Y$
to the point $(q,u)$ (that is, $J_{(q,u)}$ consists in projecting
to $\Tan_qQ$ and to lift vertically).

It is clear that the image of $J_{(q,u)}$ is in
${\rm V}_{(q,u)}(\tau_Q)$ and, as $\Tan_{(q,u)}\tau_Q$ is a surjective map,
it coincides with ${\rm V}_{(q,u)}(\tau_Q)=\ker\, J_{(q,u)}$.
Therefore:

\begin{prop}
The canonical endomorphism has the following properties:
for $(q,u)\in\Tan_qQ$,
\ben
\item
${\rm Im}\, J_{(q,u)}={\rm V}_{(q,u)}(\tau_Q)=\ker\, J_{(q,u)}$.
\item
$(J_{(q,u)})^2=0$.
\een
\end{prop}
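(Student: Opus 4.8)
The plan is to use the factorization $J_{(q,u)}=\lambda_q^{(q,u)}\circ\Tan_{(q,u)}\tau_Q$, which is precisely the content of the defining formula, together with two facts that are already at our disposal: first, the tangent map $\Tan_{(q,u)}\tau_Q\colon\Tan_{(q,u)}\Tan Q\to\Tan_qQ$ is surjective (because $\tau_Q$ is a submersion) with kernel ${\rm V}_{(q,u)}(\tau_Q)$ by the very definition of the vertical subspace; second, the vertical lift $\lambda_q^{(q,u)}\colon\Tan_qQ\to\Tan_{(q,u)}\Tan Q$ is injective and its image is exactly ${\rm V}_{(q,u)}(\tau_Q)$ (this is the identification ${\rm V}_{(q,u)}(\tau_Q)\simeq\Tan_u\Tan_qQ\simeq\Tan_qQ$ established in the previous subsection).

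First I would compute the image. Since $\Tan_{(q,u)}\tau_Q$ is onto $\Tan_qQ$, we obtain
$$
{\rm Im}\,J_{(q,u)}=\lambda_q^{(q,u)}\bigl(\Tan_{(q,u)}\tau_Q(\Tan_{(q,u)}\Tan Q)\bigr)=\lambda_q^{(q,u)}(\Tan_qQ)={\rm V}_{(q,u)}(\tau_Q)\ .
$$
Next the kernel: for $Y\in\Tan_{(q,u)}\Tan Q$ we have $J_{(q,u)}Y=0$ if and only if $\lambda_q^{(q,u)}(\Tan_{(q,u)}\tau_Q(Y))=0$, and by injectivity of $\lambda_q^{(q,u)}$ this is equivalent to $\Tan_{(q,u)}\tau_Q(Y)=0$, that is, $Y\in\ker\Tan_{(q,u)}\tau_Q={\rm V}_{(q,u)}(\tau_Q)$. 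Hence $\ker J_{(q,u)}={\rm V}_{(q,u)}(\tau_Q)$, which together with the previous computation proves item~1.

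For item~2, once item~1 is available the argument is a one-liner: for every $Y$ the vector $J_{(q,u)}Y$ lies in ${\rm Im}\,J_{(q,u)}={\rm V}_{(q,u)}(\tau_Q)=\ker J_{(q,u)}$, so $J_{(q,u)}(J_{(q,u)}Y)=0$ and therefore $(J_{(q,u)})^2=0$. Alternatively, this can be read off the coordinate expression $J_{(q,u)}=\d q^i\otimes\derpar{}{v^i}$, which gives $J_{(q,u)}\bigl(\derpar{}{q^i}\bigr)=\derpar{}{v^i}$ and $J_{(q,u)}\bigl(\derpar{}{v^i}\bigr)=0$, whence $J^2=0$ is immediate.

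There is essentially no obstacle to overcome here, since all the substance resides in the construction of the vertical lift and in the identification of its image with the vertical subspace, which was already carried out; the only point requiring care is to invoke correctly that $\lambda_q^{(q,u)}$ is a monomorphism whose image is ${\rm V}_{(q,u)}(\tau_Q)$.
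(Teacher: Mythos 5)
Your proof is correct and follows essentially the same route as the paper, which establishes the proposition by the same factorization of $J_{(q,u)}$ through $\Tan_{(q,u)}\tau_Q$ and the vertical lift, using the surjectivity of the former and the identification of the image of the latter with ${\rm V}_{(q,u)}(\tau_Q)$. You merely spell out in full the kernel computation and the one-line deduction of $(J_{(q,u)})^2=0$ that the paper leaves implicit.
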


The action of $J$ is extended in a natural way to vector fields:
$$
J\colon \vf(\Tan Q) \to\vf^{V(\tau_Q)}(\Tan Q)\subset\vf(\Tan Q) \ ,
$$
and to differential forms:
$$\begin{array}{ccccc}
J^*&\colon&\df^k(\Tan Q)&\longrightarrow&\df^k(\Tan Q) \\
 & & \alpha & \mapsto & \inn(J)\alpha
\end{array} \ ,
$$
where $J^*\alpha=\inn(J)\alpha$ is defined as
$$
[\inn(J)\alpha](\moment{X}{1}{k}):=\alpha(J(X_1),\ldots ,
J(X_i),\ldots ,J(X_k)) \ .
$$
In particular, if $k=1$, then we have that 
$\inn(J)\alpha=\alpha\circ J$, which is a
$1$-form verifying that
$\inn(X)(\inn(J)\alpha)=0$, for every $X\in\vf^{V(\tau_Q)}(\Tan Q)$.
These kinds of differential forms are called {\sl $\tau_Q$-semibasic forms}.

\noindent{\bf Local expressions}:
If $(q^i,v^i)$ are natural coordinates in $\Tan Q$, we have
$$
J_{(q,u)}\derpar{}{q^i}\Big\vert_{(q,u)}=
{\rm D}_{\derpar{}{q^i}\Big\vert_q}(q,u)=\derpar{}{v^i}\Big\vert_{(q,u)}
\quad , \quad
J_{(q,u)}\derpar{}{v^i}\Big\vert_{(q,u)}={\rm D}_0(q,u)=0 \ ;
$$
then, the local expression of $J_{(q,u)}$ is
$$
J_{(q,u)}=\d q^i\Big\vert_{(q,u)}\otimes\derpar{}{v^i}\Big\vert_{(q,u)} \ ,
$$
and, by extension,
$$
J=\d q^i\otimes\derpar{}{v^i} \ .
$$
Observe that $J$ is a tensor field of type $(1,1)$ in $\Tan Q$.

\subsection{The Liouville vector field}

\begin{definition}
Let $q\in Q$ and $(q,v)\in\Tan Q$.
Consider the vertical lift of  the vector $v\in \Tan_qQ$
to the point $(q,v)$; that is, ${\rm D}_v(q,v)$.
This operation allows us to construct a vertical vector field
$\Delta \in \vf(\Tan Q)$, which is called the \textbf{Liouville vector field},
as follows:
$$
\begin{array}{cccc}
\Delta \colon & \Tan Q & \to & \Tan\Tan Q
\\
& (q,v) & \mapsto & ((q,v),{\rm D}_v(q,v))
\end{array}
$$
\end{definition}

\noindent{\bf Local expression}:
In local coordinates $(q^i,v^i)$ en $\Tan Q$;  
let $f\colon\Tan Q \to \Real$ be a function and $(q,v)\in\Tan Q$; 
then we have that
$$
\Delta_{(q,v)}f={\rm D}_v(q,v)f=
\lim_{t \mapsto 0}\frac{f(q,v+tv)-f(q,v)}{t} \ ;
$$
therefore
\begin{eqnarray*}
\Delta_{(q,v)}q^i &=&
\lim_{t \mapsto 0}\frac{q^i(q,v+tv)-q^i(q,v)}{t}
=\lim_{t \mapsto 0}\frac{q^i(q)-q^i(q)}{t}=0 \ ,
\\
\Delta_{(q,v)}v^i&=&
\lim_{t \mapsto 0}\frac{v^i(q,v+tv)-v^i(q,v)}{t}
=\lim_{t \mapsto 0}\frac{(v+tv)(q^i)-v(q^i)}{t}=v(q^i)=v^i(q,v) \ ,
\end{eqnarray*}
and hence, the local expression of $\Delta$ is
$$
\Delta=v^i\derpar{}{v^i} \ .
$$

Bearing in mind that the fibers $\Tan_qQ$ of the
tangent bundle are vector spaces, and that
$\Delta_{(q,v)}=v\in\Tan_qQ$, for every $(q,v)\in\Tan Q$, 
it is usual to say that $\Delta$ is the vector field that generates
the dilatations along the fibers of $\Tan Q$.
Another way to see this interpretation is considering the local
expression of $\Delta$ and its associated system of differential equations
$$
\frac{d q^i}{d t}=0 \quad , \quad
\frac{d v^i}{d t}=v^i \ ,
$$
whose general solution is
$$
q^i(t)=A^i \quad ,\quad
v^i(t)=B^ie^t \quad ;\quad (A^i,B^i \ ctns.) \ .
$$
Thus, the flux of $\Delta$ is
$$
\begin{array}{ccccc}
F^\Delta&\colon&\Real\times\Tan Q&\longrightarrow&\Tan Q \\
 & & (t,q^i,v^i) & \mapsto & (q^i,v^ie^t) 
\end{array} \ .
$$
Hence, the elements of the uniparametric local group of diffeomorphisms generated by $\Delta$ are
$$
\begin{array}{ccccc}
F^\Delta_t&\colon&\Tan Q&\longrightarrow&\Tan Q \\
 & & (q^i,v^i) & \mapsto & (q^i,v^ie^t) \ ;
\end{array}
$$
that is, they are homothetics on the fibers with positive reason.

\subsection{Holonomic curves. Second Order Differential Equations}

A curve in $\Tan Q$ is not necessarily the canonical lift
of a curve in the base manifold $Q$
\footnote{
See the Appendix \ref{canliftq}}. 
Then we define:

\begin{definition}
A curve $\sigma\colon (a,b)\subseteq\Real\to\Tan Q$ is
\textbf{holonomic} if there exists $\gamma\colon (a,b)\subseteq\Real\to Q$
such that $\sigma=\widetilde\gamma$.
\end{definition}

In natural coordinates, $\sigma(t)=(q^i(t),v^i(t))$,
is a canonical lift if, and only if,
$v^i(t)=\dot q^i(t)$.
 
\begin{definition}
Let $\sigma\colon (a,b)\subseteq\Real\to\Tan Q$ a curve
in $\Tan Q$. Then $\tau_Q\circ\sigma\colon (a,b)\subseteq\Real\to Q$
is a curve in $Q$, which is called the
\textbf{curve in the base manifold} associated with $\sigma$.
\end{definition}

Using this, we can characterize holonomic curves as follows: 

\begin{prop}
A curve $\sigma\colon (a,b)\subseteq\Real\to\Tan Q$ is 
holonomic if, and only if,
$\widetilde{(\tau_Q\circ\sigma)}=\sigma$.
\end{prop}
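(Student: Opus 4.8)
The plan is to unravel the definitions on both sides and reduce the claim to the coordinate characterization of holonomic curves already established. Recall that $\sigma = \widetilde{\gamma}$ for a curve $\gamma$ in $Q$ means, in natural coordinates with $\sigma(t) = (q^i(t), v^i(t))$ and $\gamma(t) = (q^i(t))$, that $v^i(t) = \dot q^i(t)$. On the other hand, the curve in the base manifold associated with $\sigma$ is $\tau_Q \circ \sigma$, which in coordinates is simply $t \mapsto (q^i(t))$, and its canonical lift $\widetilde{(\tau_Q \circ \sigma)}$ is $t \mapsto (q^i(t), \dot q^i(t))$.

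First I would prove the implication ($\Leftarrow$): if $\widetilde{(\tau_Q \circ \sigma)} = \sigma$, then taking $\gamma := \tau_Q \circ \sigma$ gives a curve in $Q$ whose canonical lift equals $\sigma$, so $\sigma$ is holonomic by definition. This direction is essentially immediate and requires no coordinates at all.

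Next I would prove ($\Rightarrow$): suppose $\sigma$ is holonomic, so $\sigma = \widetilde{\gamma}$ for some $\gamma \colon (a,b) \to Q$. Applying $\tau_Q$ and using the fundamental property of the canonical lift that $\tau_Q \circ \widetilde{\gamma} = \gamma$ (the projection of the lift recovers the original curve — this holds since $\widetilde{\gamma}(t) = (\gamma(t), \dot\gamma(t))$), we get $\tau_Q \circ \sigma = \gamma$. Therefore $\widetilde{(\tau_Q \circ \sigma)} = \widetilde{\gamma} = \sigma$, which is the desired equality. Alternatively, one can argue in natural coordinates: $\sigma = \widetilde\gamma$ forces $v^i(t) = \dot q^i(t)$, hence $\widetilde{(\tau_Q\circ\sigma)}(t) = (q^i(t), \dot q^i(t)) = (q^i(t), v^i(t)) = \sigma(t)$.

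I do not anticipate a genuine obstacle here; the only mild subtlety is making sure the identity $\tau_Q \circ \widetilde\gamma = \gamma$ is invoked cleanly (it is the content of the definition of the canonical lift of a curve, recalled earlier in the text and in the Appendix), so that the argument does not become circular. The proof is therefore a short two-line verification once the definitions are lined up.
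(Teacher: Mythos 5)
Your proof is correct and follows essentially the same route as the paper: for ($\Rightarrow$) you use $\tau_Q\circ\widetilde\gamma=\gamma$ to get $\tau_Q\circ\sigma=\gamma$ and hence $\widetilde{(\tau_Q\circ\sigma)}=\widetilde\gamma=\sigma$, while ($\Leftarrow$) is immediate by taking $\gamma=\tau_Q\circ\sigma$. The coordinate remark is a harmless addition but not needed.
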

\begin{proof}
($\Longrightarrow$) \quad
If $\sigma$ is holonomic then there exists $\gamma\colon (a,b)\subseteq\Real\to Q$
such that $\sigma=\widetilde  \gamma$; therefore
$$
\tau_Q\circ\sigma=\tau_Q\circ\widetilde  \gamma=\gamma
\quad \Longrightarrow \quad
\widetilde{(\tau_Q\circ\sigma)}=\widetilde  \gamma=\sigma \ .
$$
($\Longleftarrow$) \quad
The converse is immediate.
\\ \qed  \end{proof}

Finally, we have the following fundamental result:

\begin{prop}
\label{lema2}
\ben
\item
Let $\varphi\colon Q\to Q$ be a diffeomorphism and
$\Tan\varphi\colon \Tan Q \to \Tan Q$ its canonical lift to $\Tan Q$. Then
$$
(\Tan\varphi)^*J = J\quad , \quad
(\Tan\varphi)_*\Delta=\Delta \ .
$$
\item
Let $Z\in\vf (Q)$ and $Z^C\in\vf(\Tan Q)$  its canonical lift to $\Tan Q$.
Then the canonical endomorphism $J$ and the Liouville vector field $\Delta$
are invariant by the uniparametric group of local diffeomorphisms generated by $Z^C$.
\label{constq}
\een
\end{prop}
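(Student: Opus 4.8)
The plan is to prove both parts by reducing the second statement to the first, since the canonical lift $Z^C$ of $Z\in\vf(Q)$ generates (locally) the flow $\{\Tan\varphi_t\}$, where $\{\varphi_t\}$ is the flow of $Z$. So once we know that the tangent lift $\Tan\varphi$ of every diffeomorphism $\varphi$ preserves $J$ and $\Delta$, applying this to the flow $\varphi_t$ and differentiating at $t=0$ gives $\Lie(Z^C)J=0$ and $\Lie(Z^C)\Delta=0$, which is exactly the invariance claimed in item 2.

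For item 1, the cleanest route is to use the coordinate-free characterizations established above together with naturality of the objects involved. For $\Delta$: recall $\Delta_{(q,v)}$ is the vertical lift of $v\in\Tan_qQ$ to $(q,v)$, i.e. $\Delta_{(q,v)}=\lambda_q^{(q,v)}(v)$. Since $\Tan\varphi$ is fiberwise linear and covers $\varphi$, one has $(\Tan\varphi)\circ j_q=j_{\varphi(q)}\circ\Tan_q\varphi$ on $\Tan_qQ$; differentiating and using that the vertical lift $\lambda$ is built from $\Tan j_q$ plus the canonical identification of a vector space with its tangent spaces (which is natural under linear maps), we get $\Tan_{(q,v)}(\Tan\varphi)(\lambda_q^{(q,v)}(w))=\lambda_{\varphi(q)}^{(\Tan\varphi)(q,v)}(\Tan_q\varphi(w))$ for all $w$. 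Taking $w=v$ and noting $\Tan_q\varphi(v)$ is precisely the fiber component of $(\Tan\varphi)(q,v)$ gives $(\Tan\varphi)_*\Delta=\Delta$. For $J$: use that $J_{(q,u)}Y$ is the vertical lift of $\Tan_{(q,u)}\tau_Q(Y)$ to $(q,u)$, combine the naturality of the vertical lift just used with the identity $\tau_Q\circ\Tan\varphi=\varphi\circ\tau_Q$ (hence $\Tan\tau_Q\circ\Tan(\Tan\varphi)=\Tan\varphi\circ\Tan\tau_Q$), and conclude $\Tan(\Tan\varphi)\circ J=J\circ\Tan(\Tan\varphi)$, i.e. $(\Tan\varphi)^*J=J$ as a $(1,1)$-tensor. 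Alternatively — and this is probably the version to write — one can simply verify everything in natural coordinates: if $\bar q^j=\varphi^j(q)$, then $\Tan\varphi$ has the form $\bar q^j=\varphi^j(q)$, $\bar v^j=\dfrac{\partial\varphi^j}{\partial q^i}v^i$, and a short chain-rule computation shows $\dfrac{\partial}{\partial\bar q^i}$ and $\dfrac{\partial}{\partial\bar v^i}$ transform so that $\d\bar q^j\otimes\dfrac{\partial}{\partial\bar v^j}=\d q^i\otimes\dfrac{\partial}{\partial v^i}$ and $\bar v^j\dfrac{\partial}{\partial\bar v^j}=v^i\dfrac{\partial}{\partial v^i}$.

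The main obstacle is really bookkeeping rather than conceptual: one must be careful with the several canonical identifications packed into the definition of the vertical lift (the immersion $j_q$, the identification $\Tan_qQ\simeq\Tan_u\Tan_qQ$ via directional derivatives, and the resulting isomorphism $\Tan_qQ\simeq{\rm V}_{(q,u)}(\tau_Q)$) and check that each is natural with respect to linear maps of fibers. In coordinates this evaporates, so I would present item 1 as a direct coordinate computation using the explicit form of $\Tan\varphi$ recalled above, and then derive item 2 by the flow argument.

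\begin{proof}
\textbf{1.} We work in natural coordinates. Let $\bar q^j=\varphi^j(q)$ be the local expression of $\varphi$; then the canonical lift $\Tan\varphi$ is given by
$$
\bar q^j=\varphi^j(q)\quad ,\quad \bar v^j=\derpar{\varphi^j}{q^i}\,v^i \ .
$$
From these relations,
$$
\derpar{}{q^i}=\derpar{\varphi^j}{q^i}\derpar{}{\bar q^j}+\derpar{^2\varphi^j}{q^i\partial q^k}\,v^k\derpar{}{\bar v^j}
\quad ,\quad
\derpar{}{v^i}=\derpar{\varphi^j}{q^i}\derpar{}{\bar v^j}\ ,
$$
and, dually, $\d\bar q^j=\dst\derpar{\varphi^j}{q^i}\d q^i$. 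Therefore
$$
\d\bar q^j\otimes\derpar{}{\bar v^j}=\derpar{\varphi^j}{q^i}\,\d q^i\otimes\derpar{}{\bar v^j}
=\d q^i\otimes\Big(\derpar{\varphi^j}{q^i}\derpar{}{\bar v^j}\Big)=\d q^i\otimes\derpar{}{v^i}\ ,
$$
which proves $(\Tan\varphi)^*J=J$. Likewise,
$$
\bar v^j\derpar{}{\bar v^j}=\Big(\derpar{\varphi^j}{q^i}v^i\Big)\derpar{}{\bar v^j}
=v^i\Big(\derpar{\varphi^j}{q^i}\derpar{}{\bar v^j}\Big)=v^i\derpar{}{v^i}\ ,
$$
that is, $(\Tan\varphi)_*\Delta=\Delta$.

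\textbf{2.} Let $\{\varphi_t\}$ be the local flow of $Z\in\vf(Q)$. By the naturality of the canonical lift, $\{\Tan\varphi_t\}$ is the local flow of $Z^C\in\vf(\Tan Q)$. By item 1, $(\Tan\varphi_t)^*J=J$ and $(\Tan\varphi_t)_*\Delta=\Delta$ for every $t$. Differentiating these identities with respect to $t$ at $t=0$ yields
$$
\Lie(Z^C)J=0\quad ,\quad \Lie(Z^C)\Delta=[Z^C,\Delta]=0\ ,
$$
i.e. $J$ and $\Delta$ are invariant under the uniparametric group of local diffeomorphisms generated by $Z^C$.
\end{proof}
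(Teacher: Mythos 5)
Your proof is correct and follows essentially the same route as the paper: the invariance of $J$ is checked from the local expressions of $J$ and $\Tan\varphi$, and item 2 is obtained by applying item 1 to the flow $\{\varphi_t\}$ of $Z$, whose tangent lifts form the flow of $Z^C$. The only (minor) deviation is for $\Delta$: the paper argues via the commutation $\Tan\varphi\circ F^\Delta_t=F^\Delta_t\circ\Tan\varphi$ with the dilation flow, while you compute $(\Tan\varphi)_*\Delta=\Delta$ directly in natural coordinates — both are valid and of comparable length, so nothing is gained or lost.
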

\begin{proof}
\bit
\item
The result for $J$ is  a straightforward consequence of the local expressions
of $J$ and $\Tan\varphi$.

The result for $\Delta$ is  a straightforward consequence of the property
$\Tan\varphi\circ F_t^\Delta=F_t^\Delta\circ\Tan\varphi$,
where $F_t^\Delta$ is an element uniparametric group of local diffeomorphisms generated by $\Delta$.
\item
It is immediate from the above result, taking the
uniparametric group of local diffeomorphisms generated by $Z$ and $Z^C$.
\eit
\qed  \end{proof}

This means that the canonical lifts of diffeomorphisms and vector fields
to the tangent bundle preserve the canonical structures of this bundle.

\begin{definition}
A vector field $X\in\vf(\Tan Q)$ is a
\textbf{Second Order Differential Equation ({\sc sode})}
or also a \textbf{holonomic vector field}
\footnote{It is also said that $X$ satisfies the 
\textbf{second-order condition}.}
if its integral curves are holonomic.
\end{definition}

The local interpretation of this definition is as follows:

\begin{prop}
The necessary and sufficient condition for a vector field
$X\in\vf(\Tan Q)$ to be a {\sc sode} is that its local expression
in any natural system of coordinates in $\Tan Q$ is
\beq
X=v^i\derpar{}{q^i}+g^i\derpar{}{v^i} \ ;
\label{cvedso}
\eeq
that is,
$X_{(q,v)}=(v^i,g^i(q,v))$, for every $(q,v)\in\Tan Q$.
\end{prop}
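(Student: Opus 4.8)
The plan is to translate the defining condition ``the integral curves of $X$ are holonomic'' into coordinates by means of the characterization of holonomic curves obtained above: in natural coordinates a curve $\sigma(t)=(q^i(t),v^i(t))$ in $\Tan Q$ is the canonical lift of a curve in $Q$ if, and only if, $v^i(t)=\dot q^i(t)$. Throughout, fix a natural chart $(q^i,v^i)$ of $\Tan Q$ and write $\dst X=A^i\derpar{}{q^i}+B^i\derpar{}{v^i}$ with $A^i,B^i\in\Cinfty(\Tan Q)$, so that the integral curves $\sigma(t)=(q^i(t),v^i(t))$ of $X$ are the solutions of $\dot q^i(t)=A^i(\sigma(t))$, $\dot v^i(t)=B^i(\sigma(t))$.

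For the \textbf{only if} part, assume $X$ is a {\sc sode}. Given an arbitrary point $(q_0,v_0)$ in the chart, let $\sigma$ be the integral curve of $X$ with $\sigma(0)=(q_0,v_0)$. Since $\sigma$ is holonomic, $v^i(t)=\dot q^i(t)$ for every $t$; evaluating at $t=0$ and using $\dot q^i(0)=A^i(q_0,v_0)$ yields $v_0^i=A^i(q_0,v_0)$. As $(q_0,v_0)$ is arbitrary, $A^i=v^i$ identically on the chart, hence $X$ has the form \eqref{cvedso} with $g^i:=B^i$; that is, $X_{(q,v)}=(v^i,g^i(q,v))$.

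For the \textbf{if} part, assume that in the chart $\dst X=v^i\derpar{}{q^i}+g^i\derpar{}{v^i}$. Any integral curve $\sigma(t)=(q^i(t),v^i(t))$ then satisfies $\dot q^i(t)=v^i(t)$. Setting $\gamma:=\tau_Q\circ\sigma$, one has $\gamma(t)=(q^i(t))$ and $\widetilde\gamma(t)=(q^i(t),\dot q^i(t))=(q^i(t),v^i(t))=\sigma(t)$, so $\widetilde{(\tau_Q\circ\sigma)}=\sigma$ and $\sigma$ is holonomic by the preceding proposition. Since every integral curve of $X$ is covered by the domains of natural charts and the computation above applies in each of them, all integral curves of $X$ are holonomic and $X$ is a {\sc sode}.

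The only point requiring a little care — and the one I would flag as the main, though routine, obstacle — is the quantifier ``in any natural system of coordinates'': strictly, one should check that the form \eqref{cvedso} in one natural chart forces it in every overlapping one. This is immediate from the coordinate-free characterization just proved (being a {\sc sode} is an intrinsic property), or, alternatively, by the direct computation that under a change of base coordinates $\widetilde q^i=\widetilde q^i(q^j)$, which induces $\widetilde v^i=(\partial\widetilde q^i/\partial q^j)\,v^j$ on $\Tan Q$, a vector field of the shape $v^i\partial/\partial q^i+g^i\partial/\partial v^i$ is carried into one of the same shape with a new set of functions $\widetilde g^i$.
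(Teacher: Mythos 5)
Your proof is correct and follows essentially the same route as the paper's: translate the holonomy of integral curves into coordinates, obtaining $A^i=v^i$ in the forward direction (you make the pointwise argument explicit by running the integral curve through an arbitrary initial point, while the paper argues along an arbitrary integral curve) and reading off $\sigma(t)=(q^i(t),\dot q^i(t))$ in the converse. The extra remark on the quantifier over charts is a fine, if strictly unnecessary, addition.
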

\begin{proof}
($\Longrightarrow$)\quad
In natural coordinates, the general expression of $X\in\vf(\Tan Q)$ is
\(\dst X=f^i\derpar{}{q^i}+g^i\derpar{}{v^i}\).
Let $\sigma\colon (a,b)\subset\Real\to \Tan Q$
be an integral curve of $X$, with $\sigma(t)=(q^i(t),v^i(t))$; then
\beq
\frac{d q^i}{d t}=(f^i\circ\sigma)(t)=f^i(q^j(t),v^j(t))
\quad , \quad
\frac{d v^i}{d t}=(g^i\circ\sigma)(t)=g^i(q^j(t),v^j(t)) \ .
\label{edo1}
\eeq
Furthermore, if $X$ is a {\sc sode}, by definition there exists
$\gamma\colon (a,b)\subset\Real\to Q$, with $\gamma(t)=(q^i(t))$,
such that $\sigma=\widetilde  \gamma$; that is,
$\sigma(t)=(q^i(t),\dot q^i(t))$.
Therefore, we have that $\dot q^i(t)=v^i(t)$, for $i=1,\ldots ,n$;
and hence, going to the first group of equations (\ref{edo1}), we obtain that
$$
v^i(t)=\frac{d q^i}{d t}=f^i(q^j(t),v^j(t)) \quad , \quad t\in (a,b) \ ,
$$
and the result follows.

\quad($\Longleftarrow$)\quad
If \(\dst X=v^i\derpar{}{q^i}+g^i\derpar{}{v^i}\), then its integral curves
$\sigma\colon (a,b)\subset\Real\to \Tan Q$, with $\sigma(t)=(q^i(t),v^i(t))$,
are determined by the system
$$
\frac{d q^i}{d t}=(v^i\circ\sigma)(t)=v^i(t)
\quad , \quad
\frac{d v^i}{d t}=(g^i\circ\sigma)(t)=g^i(q^j(t),v^j(t)) \ ;
$$
that is, $\sigma(t)=(q^i(t),\dot q^i(t))$.
Therefore, they are holonomic curves and $X$ is a {\sc sode}.
\\ \qed  \end{proof}

\begin{remark}{\rm 
Observe that, with the above conditions,
the second group of equations (\ref{edo1}) is written as
$$
\frac{d^2 q^i}{d t^2}=f^i(q^j,\dot q^j) \ ,
$$
which is a system of  second order ordinary differential equations,
whose solution completely determines the integral curves of $X$.
This fact justifies the name {\sc sode} for this kind of vector fields.
}\end{remark}

From this result, we obtain the following intrinsic characterization: 

\begin{prop}
The necessary and sufficient condition for $X\in\vf(\Tan Q)$ to be a {\sc sode} is that 
$$
J(X)=\Delta \ .
$$
\end{prop}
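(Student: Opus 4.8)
The plan is to reduce the statement to the coordinate characterization of {\sc sode}s already obtained in the previous proposition (Equation~\eqref{cvedso}). Since both $J$ and $\Delta$ are globally defined tensorial objects on $\Tan Q$, it suffices to work in an arbitrary natural chart $(q^i,v^i)$ and compare the two sides fibrewise.

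First I would write a generic vector field as $X = f^i\derpar{}{q^i} + g^i\derpar{}{v^i}$ and use the local expression $J = \d q^i\otimes\derpar{}{v^i}$ to compute $J(X) = (\inn(X)\d q^i)\,\derpar{}{v^i} = f^i\,\derpar{}{v^i}$. On the other hand, the Liouville vector field has the local expression $\Delta = v^i\derpar{}{v^i}$. Hence $J(X) = \Delta$ holds if and only if $f^i = v^i$ for $i=1,\ldots,n$, which is exactly condition~\eqref{cvedso}. By the previous proposition this is equivalent to $X$ being a {\sc sode}, which finishes the argument.

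Alternatively, one can give a chart-free proof: by definition $J_{(q,u)}Y = \lambda_q^{(q,u)}(\Tan_{(q,u)}\tau_Q(Y))$, whereas $\Delta_{(q,u)} = \lambda_q^{(q,u)}(u)$; since the vertical lift $\lambda_q^{(q,u)}$ is an isomorphism onto ${\rm V}_{(q,u)}(\tau_Q)$, the equality $J(X)=\Delta$ amounts to $\Tan_{(q,u)}\tau_Q(X_{(q,u)}) = u$ for every $(q,u)\in\Tan Q$. If $\sigma$ is an integral curve of $X$ through $(q,u)$ and $\gamma = \tau_Q\circ\sigma$ is the associated curve in the base, this says precisely that $\dot\gamma(t) = \sigma(t)$ at every parameter value, i.e.\ that $\sigma = \widetilde\gamma$ is holonomic; so $X$ is a {\sc sode} if and only if $J(X)=\Delta$.

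The computation is entirely routine; the only point that requires some care is keeping track of the canonical identification between ${\rm V}_{(q,u)}(\tau_Q)$ and $\Tan_qQ$ implemented by the vertical lift, which is exactly what makes the comparison between $J(X)$ and $\Delta$ meaningful in the first place. The coordinate proof sidesteps this subtlety and is the most economical, so I would present it as the main argument and leave the intrinsic version as a remark.
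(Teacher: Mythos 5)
Your main coordinate argument is correct and is essentially identical to the paper's own proof: both compute $J(X)=f^i\,\derpar{}{v^i}$ in a natural chart, compare with $\Delta=v^i\derpar{}{v^i}$, and invoke the previous coordinate characterization of a {\sc sode}. The intrinsic version you add as a remark is also sound, but it is not needed beyond the chart computation.
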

\begin{proof}
In natural coordinates, the general expression of $X\in\vf(\Tan Q)$ is
\(\dst X=f^i\derpar{}{q^i}+g^i\derpar{}{v^i}\); then
$$
J(X)=f^i\derpar{}{v^i}=v^i\derpar{}{v^i}\equiv\Delta
\quad \Longleftrightarrow \quad f^i=v^i \ ,
$$
and the result follows from the above proposition.
\\ \qed  \end{proof}

There is another intrinsic characterization as follows:
the bundle $\Tan \Tan Q$ has two natural projections
$$
\begin{array}{ccc}
\Tan\Tan Q&\mapping{\Tan\tau_Q}&\Tan Q
\\
\tau_{\Tan Q}\downarrow& &\downarrow\tau_Q
\\
\Tan Q&\mapping{\tau_Q}&Q 
\end{array} \ .
$$
Taking natural coordinates in the bundles $\Tan Q$ and
$\Tan\Tan Q$, and reminding the expression (\ref{matrix1}),
for every $((q,v),Y_{(q,v)})\equiv (q^i,v^i;u^i,w^i)\in\Tan\Tan Q$,
we have that
$$
\begin{array}{ccccccc}
\tau_{\Tan Q}((q,v),Y_{(q,v)})&=&\tau_{\Tan Q}(q^i,v^i;u^i,w^i)&=&(q^i,v^i)& & \\
\Tan\tau_Q((q,v),Y_{(q,v)})&=&\tau_{\Tan Q}(q^i,v^i;u^i,w^i)&=&
(\tau_Q(q^i,v^i),\Tan_{(q^i,v^i)}\tau_Q(u^i,w^i))&=&(q^i,u^i) \ .
\end{array}
$$
By definition, $X\in\vf(\Tan Q)$ is a section of the projection
$\tau_{\Tan Q}$; that is, a map $X\colon\Tan Q\to\Tan\Tan Q$ 
such that $\tau_{\Tan Q}\circ X={\rm Id}_{\Tan Q}$. Then:

\begin{prop}
The necessary and sufficient condition for a vector field
$X\in\vf(\Tan Q)$ to be a {\sc sode} is that $X$ is a section
of the projection $\Tan\tau_Q$; that is,  a map
$X\colon\Tan Q\to\Tan\Tan Q$ such that
$$
\Tan\tau_Q\circ X ={\rm Id}_{\Tan Q}
$$
\end{prop}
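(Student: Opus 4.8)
The plan is to prove the equivalence either directly in natural coordinates or, more intrinsically, by reducing it to the characterization $J(X)=\Delta$ already established. For the coordinate proof I would use the local expression of $\Tan\tau_Q$ computed immediately before this statement: for $(q^i,v^i;u^i,w^i)\in\Tan\Tan Q$ one has $\Tan\tau_Q(q^i,v^i;u^i,w^i)=(q^i,u^i)$. A vector field $X\in\vf(\Tan Q)$, written as $X=f^i\,\dst\derpar{}{q^i}+g^i\,\dst\derpar{}{v^i}$, is the section $X\colon(q,v)\mapsto(q^i,v^i;f^i(q,v),g^i(q,v))$ of $\tau_{\Tan Q}$, so $(\Tan\tau_Q\circ X)(q,v)=(q^i,f^i(q,v))$. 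Hence $\Tan\tau_Q\circ X={\rm Id}_{\Tan Q}$ holds if and only if $f^i=v^i$ for all $i$, which by the preceding Proposition characterizing {\sc sode}s by the local form $X=v^i\,\dst\derpar{}{q^i}+g^i\,\dst\derpar{}{v^i}$ is exactly the {\sc sode} condition.

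For an intrinsic argument I would instead invoke the already proved equivalence ``$X$ is a {\sc sode} $\Longleftrightarrow$ $J(X)=\Delta$''. By the definition of the canonical endomorphism, $J_{(q,v)}(X_{(q,v)})=\lambda_q^{(q,v)}\bigl(\Tan_{(q,v)}\tau_Q(X_{(q,v)})\bigr)$, while by the definition of the Liouville vector field $\Delta_{(q,v)}={\rm D}_v(q,v)=\lambda_q^{(q,v)}(v)$. Since $\lambda_q^{(q,v)}\colon\Tan_qQ\to\Tan_{(q,v)}\Tan Q$ is injective (it is an isomorphism onto ${\rm V}_{(q,v)}(\tau_Q)$), the equality $J_{(q,v)}(X_{(q,v)})=\Delta_{(q,v)}$ holds for every $(q,v)\in\Tan Q$ if and only if $\Tan_{(q,v)}\tau_Q(X_{(q,v)})=v$ for every $(q,v)$; recalling that $X$, being a vector field, is already a section of $\tau_{\Tan Q}$, this last condition says precisely that $\Tan\tau_Q\circ X={\rm Id}_{\Tan Q}$. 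Chaining the two equivalences gives the claim.

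There is essentially no serious obstacle here: the content is a direct unwinding of the definitions of $\Tan\tau_Q$, $J$ and $\Delta$, together with the injectivity of the vertical lift $\lambda_q^{(q,v)}$. The only point that deserves a line of care is making explicit that ``$X$ is a section of $\Tan\tau_Q$'' is meant for a vector field $X$ (hence already a section of $\tau_{\Tan Q}$), so that the two canonical projections of $\Tan\Tan Q$ evaluated at $X_{(q,v)}$ return $(q,v)$ and $\Tan\tau_Q(X_{(q,v)})$ respectively, and the section condition pins down the second one to equal $v$.
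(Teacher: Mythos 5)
Your coordinate argument is correct and is essentially the paper's own proof: writing $X=f^i\,\partial/\partial q^i+g^i\,\partial/\partial v^i$, using the expression $\Tan\tau_Q(q^i,v^i;u^i,w^i)=(q^i,u^i)$ computed just before the statement, and concluding that $\Tan\tau_Q\circ X={\rm Id}_{\Tan Q}$ holds iff $f^i=v^i$, which is the local {\sc sode} characterization.

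Your second, intrinsic argument is a genuinely different (and valid) route that the paper does not take: it reduces the statement to the previously proved equivalence $J(X)=\Delta$, using that $J_{(q,v)}(X_{(q,v)})=\lambda_q^{(q,v)}\bigl(\Tan_{(q,v)}\tau_Q(X_{(q,v)})\bigr)$, $\Delta_{(q,v)}=\lambda_q^{(q,v)}(v)$, and the injectivity of the vertical lift $\lambda_q^{(q,v)}$ onto ${\rm V}_{(q,v)}(\tau_Q)$. This buys coordinate independence and makes transparent why the two intrinsic characterizations ($J(X)=\Delta$ and the section condition for $\Tan\tau_Q$) are the same statement, at the cost of leaning on the earlier proposition; the paper's coordinate computation is more self-contained but repeats the same local bookkeeping. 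Your remark that the section condition is imposed on a vector field (already a section of $\tau_{\Tan Q}$), so that only the $\Tan\tau_Q$-component needs to be pinned down to $v$, is exactly the point that needs to be made explicit, and you make it.
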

\begin{proof}
In natural coordinates, the general expression of
$X\in\vf(\Tan Q)$ is
\(\dst X=f^i\derpar{}{q^i}+g^i\derpar{}{v^i}\);
that is, for every
$(q,v)\equiv (q^i,v^i)\in\Tan Q$, it is a map
$$
X(q,v)=X(q^i,v^i)=(q^i,v^i;f^i(q,v),g^i(q,v)) \ .
$$
Therefore
$$
(\Tan\tau_Q\circ X)(q,v)=(\Tan\tau_Q\circ X)(q^i,v^i)=
\Tan\tau_Q(q^i,v^i;f^i(q,v),g^i(q,v))=(q^i,f^i(q,v)) \ ;
$$
and, as ${\rm Id}_{\Tan Q}(q,v)=(q^i,v^i)$, we have that
$$
(\Tan\tau_Q\circ X)(q,v)={\rm Id}_{\Tan Q}(q,v)\quad
\Longleftrightarrow \quad f^i=v^i \ ,
$$
and the result follows.
\\ \qed  \end{proof}

\begin{remark}{\rm 
Summarizing, we have proved that the following assertions are equivalent:
\begin{enumerate}
\item
A vector field $X\in\vf(\Tan Q)$ is a {\sc sode}.
\item
The integral curves of $X$ are canonical lifts of curves in $Q$.
\item
The expression of $X$ in a system of natural coordinates in $\Tan Q$,
is (\ref{cvedso}).
\item
$J(X)=\Delta$.
\item
$\Tan\tau_Q\circ X ={\rm Id}_{\Tan Q}$.
\end{enumerate}
}\end{remark}

\subsection{Canonical forms in the cotangent bundle}

An essential characteristic of the
cotangent bundle is the following:

\begin{teor}
The cotangent bundle $\Tan^*Q$ is endowed with a canonical $2$-form
which is closed and nondegenerate and, hence, it
is a {\sl symplectic manifold}.
\label{ficotvs}
\end{teor}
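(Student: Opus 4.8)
The plan is to construct the canonical $2$-form explicitly as the (negative of the) exterior differential of a distinguished $1$-form on $\Tan^*Q$ --- the \emph{tautological} or \emph{Liouville $1$-form} --- and then to verify nondegeneracy pointwise in natural coordinates, invoking the linear theory of symplectic vector spaces developed above.

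First I would define, for each $\alpha\in\Tan^*Q$ with $\pi_Q(\alpha)=q$, the linear form $\Theta_\alpha\in\Tan_\alpha^*(\Tan^*Q)$ by
$$
\Theta_\alpha(X_\alpha):=\langle\Tan_\alpha\pi_Q(X_\alpha)\vert\alpha\rangle \ ,\qquad X_\alpha\in\Tan_\alpha(\Tan^*Q)\ ;
$$
this makes sense since $\Tan_\alpha\pi_Q(X_\alpha)\in\Tan_qQ$ while $\alpha\in\Tan^*_qQ$, and smoothness of the resulting $\Theta\in\df^1(\Tan^*Q)$ follows from that of $\pi_Q$ and of the natural pairing. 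The key computation is the local expression: in natural coordinates $(q^i,p_i)$ associated with a chart $(q^i)$ on $U\subseteq Q$, the coordinate fields $\derpar{}{q^i}$ and $\derpar{}{p_i}$ have images $\derpar{}{q^i}$ and $0$ under $\Tan\pi_Q$, so evaluating the definition gives $\Theta\vert_{\Tan^*U}=p_i\,\d q^i$. In particular, the invariant definition shows automatically that this coordinate expression is chart-independent.

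Next I would set $\Omega:=-\d\Theta\in\df^2(\Tan^*Q)$, so that $\Omega\vert_{\Tan^*U}=\d q^i\wedge\d p_i$. Closedness is immediate, since $\Omega$ is exact: $\d\Omega=-\d\d\Theta=0$, i.e.\ $\Omega\in Z^2(\Tan^*Q)$. For nondegeneracy it suffices to argue at each point $\alpha\in\Tan^*Q$: in the basis $\bigl(\derpar{}{q^i}\big\vert_\alpha,\derpar{}{p_i}\big\vert_\alpha\bigr)$ of $\Tan_\alpha(\Tan^*Q)$ the matrix of the bilinear form $\Omega_\alpha$ is
$$
\left(\begin{matrix} 0_n & I_n \\ -I_n & 0_n\end{matrix}\right)\ ,
$$
which is regular; hence $\Omega_\alpha$ is a linear symplectic structure on $\Tan_\alpha(\Tan^*Q)$. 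Since this holds at every point, $\Omega$ is a symplectic form and $(\Tan^*Q,\Omega)$ a symplectic manifold; moreover the natural coordinates $(q^i,p_i)$ are Darboux coordinates for $\Omega$, consistent with Darboux's theorem.

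The only point that is not purely formal is the chart-independence of $\Theta$ (equivalently, of the expression $p_i\,\d q^i$): the invariant definition above circumvents it, but if one instead \emph{defines} $\Theta:=p_i\,\d q^i$, one must check that under a change $\tilde q^j=\tilde q^j(q)$ on $Q$, with the induced fibrewise transformation $\tilde p_j=p_i\,\derpar{q^i}{\tilde q^j}$, one has $\tilde p_j\,\d\tilde q^j=p_i\,\d q^i$ --- a one-line application of the chain rule. Everything else (exactness giving closedness, and the displayed matrix giving nondegeneracy) is routine.
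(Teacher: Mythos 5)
Your proposal is correct and follows essentially the same route as the paper: the intrinsic definition $\Theta_\alpha(X_\alpha)=\langle\Tan_\alpha\pi_Q(X_\alpha)\vert\alpha\rangle$, the local computation giving $\Theta=p_i\,\d q^i$, and then $\Omega=-\d\Theta=\d q^i\wedge\d p_i$, closed because exact and nondegenerate by inspection of the coordinate matrix. Your added remark on chart-independence is a harmless elaboration of what the invariant definition already guarantees.
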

\begin{proof}
As we know,
$\Tan^*Q := \{ {\rm p}= (q,\xi) \ \mid \ q \in Q \ ,  \ \xi \in \Tan^*_qQ \}$.
As $\Tan^*Q$ is a differentiable manifold,
we can consider its tangent bundle $\Tan\Tan^*Q$
and the tangent structure induced by
$\pi_Q \colon \Tan^*Q \to Q$,
which must be understood as a fiber map
$\Tan_{(q,\xi)}\pi_Q \colon \Tan_{(q,\xi)}\Tan^*Q \to \Tan_q Q$.
Now, we prove the theorem constructively using the fiber projection.
\begin{enumerate}
\item
We construct the  differential $1$-form
$\Theta \in {\mit\Omega}^1(\Tan^*Q)$
as follows: for every ${\rm p}=(q,\xi) \in \Tan^*Q$ and
$X_{\rm p} \in \Tan_{\rm p}\Tan^*Q$,
$$
\Theta_{\rm p}(X_{\rm p}) := \xi(\Tan_{\rm p}\pi_Q(X_{\rm p})) \ .
$$
Its expression in a natural chart of coordinates of $\Tan^*Q$
is obtained in the following way:
as we have seen in proposition \ref{prop:cotbun} of the Appendix \ref{sec:cotbun},
$\xi = p_i({\rm p})\,\d q^i\mid_q$;
furthermore, every $X_{\rm p} \in \Tan_{\rm p}(\Tan^*Q)$ has the local expression
\(\dst X_{\rm p}= A^j\derpar{}{q^j}\Big\vert_{\rm p}+B_j\derpar{}{p_j}\Big\vert_{\rm p}\),
and the general expression of $\Theta_{\rm p}$ is
$\Theta_{\rm p}= a_i\,\d q^i\mid_{\rm p} + b^i\,\d p_i\mid_{\rm p}$, then
$$
\Theta_{\rm p}(X_{\rm p}) = a_iA^i + b^iB_i \ ;
$$
but, from the definition,
\begin{eqnarray*}
\Theta_{\rm p}(X_{\rm p}) &:=& \xi(\Tan_{\rm p}\pi_Q(X_{\rm p}))
=(p_i(m)\d q^i\mid_q)
\left[\Tan_{\rm p}\pi_Q \Big(A^j\derpar{}{q^j}\Big\vert_{\rm p}+
B_j\derpar{}{p_j}\Big\vert_{\rm p}\Big)\right]
\\ &=&
(p_i({\rm p})\d q^i\mid_x)\Big( A^j\derpar{}{q^j}\Big\vert_q\Big)
= p_i({\rm p})A^j\delta^i_j = p_i({\rm p})A^i \ ,
\end{eqnarray*}
and as this holds for every $X_{\rm p}$ (that is, for every $A^i,B_i$),
from the above expressions we obtain that
$a_i=p_i(m)$ and $b^i=0$; that is, the final local expression of $\Theta$ is
$$
\Theta = p_i\,\d q^i \ .
$$
\item
We define the differential $2$-form
$$
\Omega := -\d\Theta \ ,
$$
whose local expression in a natural chart is
$$
\Omega = \d q^i \wedge \d p_i \ .
$$
\end{enumerate}
Obviously $\Omega$ is closed (because it is exact)
and  nondegenerate (as we can see from its local expression in coordinates).
\\ \qed  \end{proof}

Then, cotangent bundles of manifolds are the canonical models of symplectic manifolds, since they carry a canonical symplectic form.
Furthermore,  their natural coordinates are also the Darboux coordinates
for this canonical symplectic form.

\begin{definition}
The forms $\Theta \in {\mit\Omega}^1(\Tan^*Q)$
and $\Omega \in {\mit\Omega}^2(\Tan^*Q)$ are the
\textbf{$1$ and $2$ canonical forms} of the cotangent bundle
\footnote{ The $1$-form $\Theta$ is also called the {\sl tautological form} of $\Tan^*Q$.}.
\end{definition}

From Proposition \ref{Liouville} we have that:

\begin{prop}
The cotangent bundle of a differentiable manifold is an
{\sl oriented} manifold.
\end{prop}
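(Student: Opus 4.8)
The statement to prove is that the cotangent bundle of a differentiable manifold is an oriented manifold. This is stated as an immediate corollary, so the proof should be short.

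The plan is to invoke Proposition \ref{Liouville} (the "Liouville" proposition just proved), which asserts that every symplectic manifold is an oriented manifold. First I would recall from Theorem \ref{ficotvs} that $\Tan^*Q$ carries the canonical symplectic form $\Omega = -\d\Theta$, which is closed and nondegenerate; hence $(\Tan^*Q, \Omega)$ is a symplectic manifold. Then, by Proposition \ref{Liouville}, the symplectic form supplies a volume form, namely $\Omega^n := \bigwedge^n \Omega \in \df^{2n}(\Tan^*Q)$ (where $n = \dim Q$, so $\dim \Tan^*Q = 2n$), which is nowhere vanishing because $\Omega$ is nondegenerate at every point. The existence of a global nowhere-vanishing top-degree form is precisely the definition of orientability, so the conclusion follows.

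There is essentially no obstacle here: the entire content has already been established. The only thing to be careful about is to make explicit that the construction of $\Omega$ in Theorem \ref{ficotvs} is canonical and global (it does not depend on charts, even though its local expression is $\d q^i \wedge \d p_i$), so that $\Omega^n$ is a genuinely globally defined volume form. One could optionally remark that in natural coordinates $\Omega^n$ is, up to a nonzero constant factor, $\d q^1 \wedge \cdots \wedge \d q^n \wedge \d p_1 \wedge \cdots \wedge \d p_n$, which makes the nonvanishing transparent, but this is not needed since nondegeneracy already guarantees it.

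\begin{proof}
By Theorem \ref{ficotvs}, the cotangent bundle $\Tan^*Q$ is a symplectic manifold, with canonical symplectic form $\Omega=-\d\Theta$. The claim is then an immediate consequence of Proposition \ref{Liouville}: the Liouville volume form $\dst\Omega^n=\bigwedge^n\Omega\in\df^{2n}(\Tan^*Q)$ (where $\dim Q=n$, hence $\dim\Tan^*Q=2n$) is a nowhere-vanishing differential form of top degree, since $\Omega$ is nondegenerate at every point. The existence of such a global volume form means precisely that $\Tan^*Q$ is orientable.
\\ \qed  \end{proof}
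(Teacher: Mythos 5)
Your proof is correct and follows exactly the route the paper intends: the proposition is stated right after Theorem \ref{ficotvs} as an immediate consequence of Proposition \ref{Liouville}, i.e.\ the canonical symplectic form $\Omega$ on $\Tan^*Q$ yields the nowhere-vanishing Liouville volume form $\Omega^n$, which gives the orientation. Your remark that the construction of $\Omega$ is global and chart-independent is a sensible touch of care, but nothing more needs to be added.
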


The canonical lift of a vector field in $Q$ to the cotangent bundle $\Tan^*Q$
\footnote{
See the Appendix \ref{sec:cotbun}.}
can be characterized using the canonical forms. First we define:

\begin{definition}
Every vector field $Z\in\vf (Q)$
induces a function $F_Z\in\Cinfty (\Tan^*Q)$ defined as
$$
\begin{array}{ccccc}
F_Z&\colon&\Tan^*Q&\longrightarrow&\Real
\\
& &{\rm p}\equiv (q,\xi ) & \mapsto & \xi (Z_q)
\end{array}
$$
\end{definition}

Taking into account the definition of the canonical  1-form
$\Theta$ of $\Tan^*Q$ and Proposition \ref{prolevcan}
we have that, for every ${\rm p}\equiv (q,\xi)\in\Tan^*Q$,
$$
F_Z({\rm p})\equiv F_Z(q,\xi)=\xi (Z_q)=\xi (\Tan \pi_Q (Z^*_{\rm p}))=\Theta_{\rm p}(Z^*_{\rm p}) \ ,
$$
and thus we have proved that:

\begin{prop}
If $Z\in\vf (Q)$,  then $F_Z=\Theta (Z^*)$.
\end{prop}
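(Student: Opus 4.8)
**Proposal for the proof of Proposition (that $F_Z = \Theta(Z^*)$).**

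The plan is to verify the identity $F_Z({\rm p}) = \Theta_{\rm p}(Z^*_{\rm p})$ pointwise, by unwinding the two defining relations involved: the construction of the tautological form $\Theta$ on $\Tan^*Q$ (Theorem \ref{ficotvs}) and the defining property of the canonical lift $Z^*$ of a vector field $Z\in\vf(Q)$ to $\Tan^*Q$ (from the Appendix, the property used is $\Tan\pi_Q\circ Z^* = Z\circ\pi_Q$, i.e.\ $Z^*$ is $\pi_Q$-projectable onto $Z$; this is exactly what Proposition \ref{prolevcan} records). First I would fix a point ${\rm p}=(q,\xi)\in\Tan^*Q$ and write, directly from the definition of $\Theta$,
$$
\Theta_{\rm p}(Z^*_{\rm p}) = \xi\big(\Tan_{\rm p}\pi_Q(Z^*_{\rm p})\big).
$$
Then I would substitute the projectability property of the canonical lift, $\Tan_{\rm p}\pi_Q(Z^*_{\rm p}) = Z_q$, which gives $\Theta_{\rm p}(Z^*_{\rm p}) = \xi(Z_q)$. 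By the definition of $F_Z$ this last quantity is precisely $F_Z(q,\xi)=F_Z({\rm p})$, and since ${\rm p}$ was arbitrary the equality of functions $F_Z=\Theta(Z^*)$ follows.

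As a sanity check I would also confirm the identity in natural coordinates: if $Z = Z^i\,\partial/\partial q^i$ on $Q$, then the canonical lift has the form $Z^* = Z^i\,\partial/\partial q^i - p_j\,(\partial Z^j/\partial q^i)\,\partial/\partial p_i$, so contracting with $\Theta = p_i\,\d q^i$ yields $\Theta(Z^*) = p_i Z^i$, which is exactly the coordinate expression of $F_Z(q,p) = p_i Z^i(q)$. This matches and confirms there are no sign or index issues.

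There is essentially no hard part here: the statement is immediate once the defining property $\Tan\pi_Q\circ Z^* = Z\circ\pi_Q$ of the canonical lift is invoked. The only point requiring a little care is to make sure the version of that property being used is the one actually established in the Appendix (Proposition \ref{prolevcan}), since the canonical lift $Z^*$ to $\Tan^*Q$ is characterized there both by its projectability and by the condition $\Lie(Z^*)\Theta = 0$; only the projectability is needed for this proposition, and it is what the one-line computation above relies on.
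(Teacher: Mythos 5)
Your argument is correct and is essentially identical to the paper's: the paper also verifies the identity pointwise, writing $F_Z(q,\xi)=\xi(Z_q)=\xi(\Tan\pi_Q(Z^*_{\rm p}))=\Theta_{\rm p}(Z^*_{\rm p})$ using the definition of $\Theta$ together with the projectability of $Z^*$ from Proposition \ref{prolevcan}. The coordinate check you add is a harmless extra confirmation, not a different route.
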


Bearing this in mind, we obtain that:

\begin{prop}
If $Z\in\vf (Q)$, then the canonical lift of $Z$ to $\Tan^*Q$
is the only vector field $Z^*\in\vf (\Tan^*Q)$ such that
$\inn (Z^*)\Omega =\d F_Z$
\footnote{
That is, with the terminology of Section \ref{icch},
$Z^*$ is the  (global) Hamiltonian vector field associated with the function $F_Z$.}.
\label{hamlev}
\end{prop}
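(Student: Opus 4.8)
The plan is to establish the identity $\inn(Z^*)\Omega = \d F_Z$ by a direct computation in natural coordinates, and then to invoke the nondegeneracy of $\Omega$ to conclude that $Z^*$ is the unique vector field with this property. The uniqueness part is immediate: since $(\Tan^*Q,\Omega)$ is symplectic (Theorem \ref{ficotvs}), the map $\flat_\Omega\colon\vf(\Tan^*Q)\to\df^1(\Tan^*Q)$ is an isomorphism, so there is at most one vector field whose contraction with $\Omega$ equals the given $1$-form $\d F_Z$. Hence everything reduces to verifying the displayed equation for the canonical lift $Z^*$.

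First I would recall the coordinate expressions. If $Z = Z^i\,\partial/\partial q^i \in \vf(Q)$, then its canonical lift to $\Tan^*Q$ (see the Appendix, and Proposition \ref{prolevcan}) has the local expression
$$
Z^* = Z^i\,\derpar{}{q^i} - p_j\,\derpar{Z^j}{q^i}\,\derpar{}{p_i} \ .
$$
The function $F_Z$ is, in these coordinates, $F_Z = p_i Z^i$ (this is exactly the content of Proposition \ref{prop:cotbun}-type identification, and it agrees with $F_Z = \Theta(Z^*)$ already proved, since $\Theta = p_i\,\d q^i$). Therefore
$$
\d F_Z = Z^i\,\d p_i + p_i\,\derpar{Z^i}{q^j}\,\d q^j \ .
$$
On the other hand, using $\Omega = \d q^i\wedge\d p_i$, one computes
$$
\inn(Z^*)\Omega = \inn(Z^*)(\d q^i\wedge\d p_i)
= Z^i\,\d p_i - \Big(-p_j\,\derpar{Z^j}{q^i}\Big)\d q^i
= Z^i\,\d p_i + p_j\,\derpar{Z^j}{q^i}\,\d q^i \ ,
$$
which coincides with $\d F_Z$ after relabelling the summation index. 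This proves $\inn(Z^*)\Omega = \d F_Z$, and by the uniqueness remark above, $Z^*$ is the only such vector field.

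Alternatively — and this is the route I would actually prefer to present, since it avoids coordinates — one can argue intrinsically using the defining property of the canonical lift. Recall that $Z^*$ is the generator of the flow on $\Tan^*Q$ obtained by lifting the flow of $Z$, and this flow preserves the tautological form: $\Lie(Z^*)\Theta = 0$. Then Cartan's formula gives
$$
\inn(Z^*)\Omega = -\inn(Z^*)\d\Theta = \d\inn(Z^*)\Theta - \Lie(Z^*)\Theta = \d\inn(Z^*)\Theta = \d(\Theta(Z^*)) = \d F_Z \ ,
$$
where the last equality is Proposition \ref{prop:cotbun}'s consequence $F_Z = \Theta(Z^*)$ established just above. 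The main (indeed only) obstacle is justifying the key input $\Lie(Z^*)\Theta = 0$; this is the standard naturality property of the tautological form under cotangent lifts, and it should already be available from the Appendix material on $\Tan^*Q$ referenced in the text. Given that, the computation is a one-line application of Cartan's magic formula together with $\Omega = -\d\Theta$.
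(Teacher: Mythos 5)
Your preferred intrinsic argument is exactly the paper's proof: Cartan's formula together with $\Lie(Z^*)\Theta=0$ (which the paper gets from the fact that the flow of $Z^*$ consists of canonical lifts, i.e.\ its Proposition on $(\Tan^*\varphi)^*\Theta=\Theta$) and the identity $F_Z=\Theta(Z^*)$. The coordinate verification you give first is also correct, though note that in the paper's own ordering the local expression \eqref{canlifcot} of $Z^*$ is \emph{deduced from} this proposition, so if you use it you should derive it directly from the definition of the lifted flow rather than quote it.
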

\begin{proof}
In fact, using the Cartan formula and taking into account Proposition \ref{levdif},
$$
\inn (Z^*)\Omega =-\inn (Z^*)\d\Theta =
\d\inn (Z^*)\Theta -\Lie (Z^*)\Theta =\d (\Theta (Z^*))=\d F_Z
$$
since, according to Proposition \ref{levdif},
as the local uniparametric groups of diffeomorphisms of $Z^*$
are canonical lifts, 
they let  the canonical forms of $\Tan^*Q$ invariant, and then
$\Lie (Z^*)\Theta =0$.
\\ \qed \end{proof}

\noindent {\bf Local expression}:
Bearing in mind this characterization, it is easy to obtain the local expression
of $Z^*$ in a chart of canonical coordinates 
$(U;q^i,p_i)$ of $\Tan^*Q$. So, if
\ $\displaystyle Z\vert_{\pi_Q (U)}=f^i(q)\derpar{}{q^i}$,\
we have that $F_Z(q^i,p_i)=p_if^i(q^j)$, and then
\beq
Z^*\vert_U=f^i\derpar{}{q^i}-p_j\derpar{f^j}{q^i}\derpar{}{p_i} \ .
\label{canlifcot}
\eeq

Another relevant property is:

\begin{prop}
If $\varphi\colon Q\to Q$ is a diffeomorphism,
then $(\Tan^*\varphi)^*\Theta =\Theta$ and,
as a consequence,
$(\Tan^*\varphi)^*\Omega =~\Omega$~\footnote{
With the terminology introduced in Section \ref{tcs},
we say that $\Tan^*\varphi$ is a {\sl simplectomorphism}}.
\label{levdif}
\end{prop}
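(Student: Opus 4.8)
The statement to prove is that for a diffeomorphism $\varphi\colon Q\to Q$, the cotangent lift $\Tan^*\varphi\colon\Tan^*Q\to\Tan^*Q$ preserves the canonical $1$-form, $(\Tan^*\varphi)^*\Theta=\Theta$, and consequently $(\Tan^*\varphi)^*\Omega=\Omega$. The plan is to prove the statement about $\Theta$ directly and intrinsically, using only the defining property of $\Theta$ given in Theorem \ref{ficotvs} — namely, $\Theta_{\rm p}(X_{\rm p})=\xi(\Tan_{\rm p}\pi_Q(X_{\rm p}))$ for ${\rm p}=(q,\xi)$ — together with two naturality facts about the cotangent lift that must be recorded first: (i) $\Tan^*\varphi$ covers $\varphi$ in the sense that $\pi_Q\circ\Tan^*\varphi=\varphi\circ\pi_Q$ (as a map $\Tan^*Q\to Q$, after identifying the relevant projections — more precisely, $\pi_Q\circ\Tan^*\varphi^{-1}=\varphi^{-1}\circ\pi_Q$, depending on the variance convention for $\Tan^*\varphi$ used in the appendix), and (ii) the fiberwise description $(\Tan^*\varphi)(q,\xi)=(\varphi(q),(\Tan_q\varphi)^*\xi)$ or its inverse, again according to the convention.

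First I would fix the convention. The glossary lists $\Tan^*\varphi\colon\Tan^*N\to\Tan^*M$ as the canonical lift of $\varphi\colon M\to N$, so for an endomorphism $\varphi\colon Q\to Q$ the map $\Tan^*\varphi$ is the pullback of covectors, acting by $(\Tan^*\varphi)(q,\xi)=(\varphi^{-1}(q),(\Tan_{\varphi^{-1}(q)}\varphi)^*\xi)$; equivalently one often works with $\Phi:=(\Tan^{-1})^*\varphi$ which pushes covectors forward. I would state the computation for whichever of these makes $\pi_Q\circ(\text{lift})=\varphi^{\pm1}\circ\pi_Q$ hold cleanly, and remark that the other case is identical. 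Then the core computation is the chain rule: for ${\rm p}=(q,\xi)\in\Tan^*Q$ and $X_{\rm p}\in\Tan_{\rm p}\Tan^*Q$, write ${\rm p}'=(\Tan^*\varphi)({\rm p})=(q',\xi')$ and compute
$$
((\Tan^*\varphi)^*\Theta)_{\rm p}(X_{\rm p})
=\Theta_{{\rm p}'}(\Tan_{\rm p}(\Tan^*\varphi)(X_{\rm p}))
=\xi'\bigl(\Tan_{{\rm p}'}\pi_Q(\Tan_{\rm p}(\Tan^*\varphi)(X_{\rm p}))\bigr)
=\xi'\bigl(\Tan_{\rm p}(\pi_Q\circ\Tan^*\varphi)(X_{\rm p})\bigr).
$$
Using $\pi_Q\circ\Tan^*\varphi=\varphi^{-1}\circ\pi_Q$ (the covering property), this equals $\xi'\bigl((\Tan_q\varphi^{-1})(\Tan_{\rm p}\pi_Q(X_{\rm p}))\bigr)$, and then substituting $\xi'=(\Tan_{q'}\varphi)^*\xi$ together with $(\Tan_q\varphi^{-1})=(\Tan_{q'}\varphi)^{-1}$ collapses the composition to $\xi(\Tan_{\rm p}\pi_Q(X_{\rm p}))=\Theta_{\rm p}(X_{\rm p})$. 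Since $X_{\rm p}$ and ${\rm p}$ were arbitrary, $(\Tan^*\varphi)^*\Theta=\Theta$.

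For the consequence, I would simply apply $\d$ and use that pullback commutes with the exterior derivative: $(\Tan^*\varphi)^*\Omega=-(\Tan^*\varphi)^*\d\Theta=-\d((\Tan^*\varphi)^*\Theta)=-\d\Theta=\Omega$. I would close by noting, as the footnote already anticipates, that this says $\Tan^*\varphi$ is a symplectomorphism of $(\Tan^*Q,\Omega)$. There is no real obstacle here: the only subtlety is bookkeeping of the variance convention for $\Tan^*\varphi$ and making sure the covering identity $\pi_Q\circ\Tan^*\varphi=\varphi^{-1}\circ\pi_Q$ is quoted in the form actually established in the Appendix; everything else is the chain rule plus the defining property of $\Theta$. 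One could also give a quick alternative one-line proof in Darboux coordinates using \eqref{canlifcot}-type formulas, checking $p_i\,\d q^i$ is preserved, but the intrinsic argument is cleaner and coordinate-free, so I would present that as the main proof and perhaps mention the coordinate check as a remark.
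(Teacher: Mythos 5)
Your proof is correct and follows essentially the same route as the paper: an intrinsic pointwise computation using the defining property of $\Theta$, the covering identity for $\Tan^*\varphi$ from Proposition \ref{exerc2}, and the chain rule, with $(\Tan^*\varphi)^*\Omega=\Omega$ then following because pullback commutes with $\d$. The only difference is your explicit bookkeeping of the variance convention, which the paper handles implicitly by evaluating at the point $(\varphi(q),\xi)$.
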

\begin{proof}
For every $(q,\xi)\in\Tan^*Q$ and
$V\in\Tan_{(q,\xi)}(\Tan^*Q)$,
using the definition of $\Theta$ and the first property of Proposition \ref{exerc2}, we obtain
\beann
\big((\Tan^*\varphi)^*\Theta\big)_{(\varphi(q),\xi)}(V) &=&
\Theta_{(q,\Tan^*\varphi(\xi))}\big(\Tan_{(\varphi(q),\xi)}(\Tan^*\varphi) (V)\big)
\\ &=&
(\varphi_q^*\xi)\Big(\Tan_{(q,\Tan^*\varphi(\xi))}\pi_Q \big(\Tan_{(\varphi(q),\xi)}(\Tan^*\varphi)(V)\big)\Big)
\\ &=&
(\varphi_q^*\xi)\big(\Tan_{(\varphi(q),\xi)}(\pi_Q\circ\Tan^*\varphi)(V)\big)=
\xi\Big(\Tan_{\varphi(q)}\varphi\big(\Tan_{(\varphi(q),\xi)}(\pi_Q\circ\Tan^*\varphi)(V)\big)\Big)
\\ &=&
\xi\Big(\Tan_{(\varphi(q),\xi)} \big(\varphi\circ\pi_Q\circ\Tan^*\varphi)(V)\big)\Big)=
\xi\big(\Tan_{(\varphi(q),\xi)}\pi_Q(V)\big)=
\Theta_{(\varphi(q),\xi)}(V) \, .
\eeann
The result for $\Omega$ is immediate from here.
\\ \qed  \end{proof}

Finally, from Proposition \ref{levdif} we obtain:

\begin{prop}
If $Z\in\vf (Q)$, then
$$
\Lie(Z^*)\Theta=0 \quad , \quad
\Lie(Z^*)\Omega=0 \ .
$$
\label{inf}
\end{prop}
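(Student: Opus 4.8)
The statement to prove is Proposition \ref{inf}: if $Z\in\vf(Q)$, then $\Lie(Z^*)\Theta=0$ and $\Lie(Z^*)\Omega=0$, where $Z^*$ is the canonical lift of $Z$ to $\Tan^*Q$.

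The plan is to read this off directly from Proposition \ref{levdif} together with the interpretation of the Lie derivative as the infinitesimal generator of the pullback action of the flow. First I would recall that $Z^*$, the canonical lift of $Z$ to $\Tan^*Q$, has for its (local) flow precisely the canonical lifts $\Tan^*\varphi_t$ of the (local) flow $\{\varphi_t\}$ of $Z$ on $Q$; this is the defining property of the canonical lift of a vector field to the cotangent bundle (as used implicitly in Proposition \ref{hamlev} and stated in the Appendix). Then Proposition \ref{levdif} applies to each diffeomorphism $\varphi_t$, giving $(\Tan^*\varphi_t)^*\Theta=\Theta$ for all $t$ in the domain of definition. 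Differentiating this identity with respect to $t$ at $t=0$ yields, by definition of the Lie derivative along a vector field,
$$
\Lie(Z^*)\Theta=\frac{\d}{\d t}\Big\vert_{t=0}(\Tan^*\varphi_t)^*\Theta=\frac{\d}{\d t}\Big\vert_{t=0}\Theta=0 \ .
$$

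For the second identity, I would simply use that the exterior differential commutes with the Lie derivative: since $\Omega=-\d\Theta$, we have
$$
\Lie(Z^*)\Omega=-\Lie(Z^*)\d\Theta=-\d\,\Lie(Z^*)\Theta=-\d\,0=0 \ .
$$
Alternatively, one can note that $(\Tan^*\varphi_t)^*\Omega=\Omega$ for all $t$ by the corresponding assertion in Proposition \ref{levdif}, and differentiate at $t=0$ exactly as above. Either route is immediate.

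There is essentially no obstacle here: the proposition is a corollary of Proposition \ref{levdif}, and the only point requiring care is the (already established) fact that the flow of $Z^*$ consists of the cotangent lifts $\Tan^*\varphi_t$ of the flow of $Z$ — a fact that the excerpt treats as known (it is invoked, for instance, in the proof of Proposition \ref{hamlev} where it is asserted that ``the local uniparametric groups of diffeomorphisms of $Z^*$ are canonical lifts''). If one wished to avoid even invoking the flow, a purely computational alternative is available: using the local expression \eqref{canlifcot} for $Z^*\vert_U=f^i\derpar{}{q^i}-p_j\derpar{f^j}{q^i}\derpar{}{p_i}$ together with $\Theta=p_i\,\d q^i$, apply Cartan's formula $\Lie(Z^*)\Theta=\inn(Z^*)\d\Theta+\d\inn(Z^*)\Theta=-\inn(Z^*)\Omega+\d F_Z$ and recall from Proposition \ref{hamlev} that $\inn(Z^*)\Omega=\d F_Z$, so the two terms cancel; then $\Lie(Z^*)\Omega=\d\Lie(Z^*)\Theta=0$. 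I would present the short flow-based argument as the main proof since it is the most transparent.
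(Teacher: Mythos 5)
Your proof is correct and follows essentially the same route as the paper: the paper's proof also consists of taking the local uniparametric groups of diffeomorphisms of $Z$ and their canonical lifts, and invoking Proposition \ref{levdif}. Your explicit differentiation at $t=0$ and the remark that $\Lie(Z^*)\Omega=-\d\,\Lie(Z^*)\Theta$ just spell out what the paper leaves implicit.
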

\begin{proof}
To prove this, it suffices to take the local uniparametric groups of diffeomorphisms 
of $Z$ and their canonical lifts.
\\ \qed \end{proof}

\subsection{Fiber derivative of a function}

Let $F\in\Cinfty(\Tan Q)$.
Given $q \in Q$, we consider the function $F_q \colon \Tan_qQ \to \Real$, 
which is the restriction of $F$ to the fiber $\Tan_qQ$.
If $(q,u)\in \Tan Q$, the differential of $F_q$ at this point
is an element of $\Tan_q^*Q$; so ${\rm D}_{(q,u)}F_q \in \Tan^*_qQ$. Then:

\begin{definition}
The \textbf{fiber derivative} of $F$ is the map
$$
\begin{array}{ccccc}
{\cal F}F & \colon & \Tan Q & \to & \Tan^*Q
\\
 & & (q,u) & \mapsto & (q,{\rm D}F_q(u))
\end{array} \ .
$$
\end{definition}

Observe that $\pi_Q \circ{\cal F}F = \tau_Q$;
that is, ${\cal F}F$ preserve the fibers.

\noindent {\bf Local expressions}:
Consider a natural chart $(U;q^i,v^i)$ of $\Tan Q$
and the corresponding canonical chart $({\cal F}F(U);q^i,p_i)$ in $\Tan^* Q$.
Observe that a basis of $\Tan_qQ$ is
\(\dst \left\{ \derpar{}{q^i}\Big\vert_q\right\} \),
and the coordinates in $\Tan_qQ$ are $(v^1,\ldots ,v^n)$.
Then, the Jacobian matrix of ${\rm D}_{(q,u )}F_q$ is
$$
\left( \derpar{F}{v^1}\Big\vert_{(q,u)} \ldots
\derpar{F}{v^n}\Big\vert_{(q,u)} \right) \ ,
$$
and if  \(\dst v=\lambda^i\derpar{}{q^i}\Big\vert_q \in \Tan_qQ\), then
$$
({\rm D}_{(q,u)}F_q)(v) =
\left( \derpar{F}{v^1}\Big\vert_{(q,u)} \ldots
\derpar{F}{v^n}\Big\vert_{(q,u)} \right)
\left(\begin{matrix}\lambda^1 \\ \vdots \\ \lambda^n \end{matrix}\right)\ ,
$$
and therefore
$$
 {\cal F}F(x,u) = \left( q,\derpar{F}{v^i}\Big\vert_{(q,u)}\d q^i
\Big\vert_q\right) \ ;
$$
that is, the expression in coordinates of ${\cal F}F$ is
$$
q^i \circ{\cal F}F = q^i \quad , \quad p_i \circ{\cal F}F =\derpar{F}{v^i}
\ .
$$

\section{Lagrangian formalism for Lagrangian dynamical systems}

Next we introduce the so-called {\sl Lagrangian formalism} of the
Lagrangian dynamical systems.
(See
\cite{CMSVZ-76,Cr-81,Cr-83b,CP-adg,dLe89,Ga-52,Go-69,Grif-72a,Grif-72b,klein,SCC-84}
as general references).

First, we state what are these kinds of systems.

\subsection{Lagrangian dynamical systems}
\label{poslagdyn}

The physical and geometrical foundations of this formulation are the following:

\begin{pos}
{\rm (First Postulate of Lagrangian mechanics\/)}:
In a physical Lagrangian system,  the $n$ possible degrees of freedom of the system
are described by the domain of variation of a set of $n$ 
{\sl generalized coordinates}
\footnote{
If we have a mechanical system, the generalized coordinates 
correspond to ``kinematic'' degrees of freedom (position coordinates). But if we have a more generic physical system, 
such as an electric or a thermodynamical system,
then the generalized coordinates correspond to other kinds of physical magnitudes
(for instance, electric charge, temperature, etc.).
},
which determine locally the {\sl configuration space} of the system.
The {\rm states} of the system are locally described by means of
generalized coordinates and their corresponding
generalized velocities.

Geometrically, this means that
the configuration space $Q$ of a dynamical system
with $n$ degrees of freedom is a $n$-dimensional differentiable manifold,
and that
the phase space is the tangent bundle $\Tan Q$ of the configuration manifold $Q$,
which is called the
\textbf{state} or \textbf{phase space} of \textbf{positions-velocities} of the system.
\end{pos}

\begin{pos}
{\rm (Second Postulate of Lagrangian mechanics\/)}:
The observables or physical magnitudes of a 
dynamical system are functions of $\Cinfty (\Tan Q)$.

The result of a measure of an observable is the value that the function 
which represents the observable takes at a point of the phase space $\Tan Q$
(that is, in some state, following the first postulate).
\end{pos}

\begin{pos}
{\rm (Third Postulate of Lagrangian mechanics\/)}:
There is a function $\Lag \in \Cinfty (\Tan Q)$, called \textbf{Lagrangian function},
which carries the dynamical information of the system.
\end{pos}

As we will see,
from this function and using  the geometrical structures of the
tangent bundle, we can construct a differential form
$\Omega_\Lag\in\df^2(\Tan Q)$ and the so-called
{\sl Lagrangian Energy function} $E_\Lag\in\Cinfty(\Tan Q)$,
and we can set the dynamical equations of the system
(Postulate \ref{4poslag}).

Taking all of this into account, we define:

\begin{definition}
A \textbf{Lagrangian dynamical system} is a pair $(\Tan Q,\Lag )$,
where $Q$ is a manifold representing the configuration space of
a physical system and $\Lag \in \Cinfty (\Tan Q)$
is the Lagrangian function of the system.
\end{definition}

\begin{remark}{\rm 
It is important to point out that the dynamical systems which are
described in this way are {\sl autonomous}; that is, {\sl independent of time}
(a geometrical description of {\sl nonautonomous dynamical systems} 
is presented in Chapter \ref{chap:cosym}).
They are also {\sl first-order} systems,
which are those  whose Lagrangian function 
depends locally on the coordinates of position and velocity,
in contrast with those for which the dependence is also on the
{\sl generalized accelerations} or, in general, 
higher-order time-derivatives of the generalized positions.
(For the geometrical description of {\sl higher-order} systems, 
see, for instance, \cite{BGG-2015,proc:Cantrijn_Crampin_Sarlet86,
art:Carinena_Lopez92,LR-os,GM-86,GPR-higher,
art:Gracia_Pons_Roman92,art:Krupkova00,
art:Prieto_Roman11,art:Prieto_Roman12,art:Prieto_Roman15}).
}\end{remark}

In the next section we explain how, using the canonical structures of the
tangent bundle introduced in Section  \ref{egft} and starting from the Lagrangian function,
we can construct the dynamical-geometric structures which allow us to
set the dynamical equations in an intrinsic way.
In this way, we will complete the set of Postulates of the Lagrangian formalism.

\subsection{Geometric structures induced by the dynamics}

Given a Lagrangian function $\Lag \in \Cinfty (\Tan Q)$ and
using the canonical structures of the tangent bundle
(the vertical endomorphism 
$J \colon \vf (\Tan Q) \to \vf^{V(\tau_Q)}(\Tan Q)$
and the Liouville vector field $\Delta \in \vf^{V(\tau_Q)}(\Tan Q)$)
we can define the following elements:

\begin{definition}
The {\sl \textbf{Cartan}} or {\sl \textbf{Lagrangian $1$}} and {\sl \textbf{$2$-forms}} associated with $\Lag$ are
\beann
\Theta_\Lag&:=& \inn(J)\d\Lag=\d \Lag \circ J \in \df^1(\Tan Q) \ ,
\\
\Omega_\Lag &:=& -\d \Theta_\Lag \in \df^2(\Tan Q) \ .
\eeann
The {\sl \textbf{Lagrangian energy}} associated with $\Lag$ is the function
$$
E_\Lag := \Delta (\Lag )-\Lag \in \Cinfty (\Tan Q) \ .
$$
The function $A_\Lag := \Delta (\Lag ) \in \Cinfty (\Tan Q)$ is sometimes referred to as the {\sl \textbf{Lagrangian action function}} associated with $\Lag$.
\end{definition}

\begin{remark}{\rm 
\begin{itemize}
\item
The physical observable represented by the Lagrangian energy 
is the total energy of the system, and this justifies this terminology. 
\item
There are Lagrangian functions which, being different, give the same
Lagrangian form $\Omega_{\Lag}$ and the same Lagrangian energy $ E_{\Lag}$.
They are called {\sl gauge-equivalent Lagrangians} (and they are studied in Section \ref{flsltn}).
\item
It is important to point out that, for an arbitrary function
$\Lag \in \Cinfty (\Tan Q)$, the form $\Omega_\Lag$
has not constant rank in $\Tan Q$ necessarily.
The Lagrangian $\Lag$ is said to be {\sl \textbf{geometrically admissible}}
when this rank is constant.
The theory we are developing concerns Lagrangian systems of this kind
\footnote{
When this is not the case,
there exists a more general framework which consists in using {\sl Poisson manifolds} as phase states for the formalism (see, for instance, \cite{LM-sgam}).
}.
\end{itemize}
}\end{remark}

\noindent {\bf Local expressions}:
Consider a natural  chart $(U;q^i,v^i)$ in $\Tan Q$.
Remember that the local expression of the vertical endomorphism and 
the Liouville vector field on this chart are
$\displaystyle J = \d q^i\otimes \derpar{}{v^i}$ and
$\displaystyle \Delta = v^i\derpar{}{v^i}$. Then,
given a Lagrangian function $\Lag=\Lag (q^i,v^i)$,
the local expressions of the action and the Lagrangian energy associated with the Lagrangian $\Lag$ are 
$$
A_\Lag = v^i\derpar{\Lag}{v^i}
\quad , \quad
E_\Lag = v^i\derpar{\Lag}{v^i} - \Lag \ .
$$
The more general expression for a $1$-form in $\Tan Q$ is
$$
\Theta_\Lag = a_i(q^j,v^j) \d q^i + b_i(q^j,v^j) \d v^i \ ,
$$
then, for an arbitrary $Y\in\vf (\Tan Q)$,
$$
Y = A^i(q^j,v^j) \derpar{}{q^i} + B^i(q^j,v^j) \derpar{}{v^i} \ ,
$$
we have that
$$
\Theta_\Lag (Y) = a_iA^i +b_iB^i
$$
and furthermore, by definition,
$$
\Theta_\Lag (Y) := (\d \Lag \circ J)(Y) =
\left(\derpar{\Lag}{q^i}\d q^i + \derpar{\Lag}{v^i}\d v^i\right)
\left(A^i\derpar{}{v^i}\right)
=A^i \derpar{\Lag}{v^i} \ ;
$$
therefore, from both expressions we obtain that
\(\dst a_i = \derpar{\Lag}{v^i}\) and $b_i = 0$; that is
$$
\Theta_\Lag  = \derpar{\Lag}{v^i}\,\d q^i \ ,
$$
and, as a consequence,
$$
\Omega_\Lag  =
\frac{\partial^2\Lag}{\partial q^j \partial v^i}\, \d q^i \wedge \d q^j +
\frac{\partial^2\Lag}{\partial v^j \partial v^i}\, \d q^i \wedge \d v^j \ .
$$

It is important to remark that, for an arbitrary function
$\Lag \in \Cinfty (\Tan Q)$, although the Lagrangian $2$-form $\Omega_\Lag$
is always closed, it is not necessarily non-degenerated
(since, as it is seen from the above local expression, its rank
is determined by the second derivatives with respect to the velocities of the function $\Lag$). Then:

\begin{definition}
$\Lag \in \Cinfty (\Tan Q)$ is a  \textbf{regular Lagrangian function}
(and  $(\Tan Q,\Lag )$ is a \textbf{regular Lagrangian system})
if the Lagrangian $2$-form $\Omega_\Lag$ is nondegenerated.
Otherwise, $\Lag$ is a  \textbf{singular Lagrangian function}
(and  $(\Tan Q,\Lag )$ is said to be a \textbf{singular Lagrangian system})
\footnote{
In this exposition, we are mainly interested in
 regular Lagrangian systems.}.
\end{definition}

The non degeneracy of the Lagrangian $2$-form
can be characterized locally as follows:

\begin{prop}
Let $(\Tan Q,\Lag )$ be a Lagrangian system. Then
$\Lag$ is a regular Lagrangian if, and only if,
in a natural chart $(U;q^i,v^i)$ of $\Tan Q$, the Hessian matrix
\(\dst W= \left(\frac{\partial^2\Lag}{\partial v^j \partial v^i}\right)\)
is regular at every point of $U$.
\end{prop}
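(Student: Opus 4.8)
The plan is to compute the matrix of $\Omega_\Lag$ in the natural chart $(U;q^i,v^i)$ and observe that its non-degeneracy is controlled precisely by the Hessian block. From the previous section we already have the local expression
$$
\Omega_\Lag = \frac{\partial^2\Lag}{\partial q^j \partial v^i}\, \d q^i \wedge \d q^j + \frac{\partial^2\Lag}{\partial v^j \partial v^i}\, \d q^i \wedge \d v^j \ .
$$
Ordering the coordinate $1$-forms as $(\d q^1,\ldots,\d q^n,\d v^1,\ldots,\d v^n)$, the matrix of $\Omega_\Lag$ at a point of $U$ has the block form
$$
\left(\begin{matrix} C & W \\ -W^t & 0_n \end{matrix}\right) \ ,
$$
where $W=\left(\dfrac{\partial^2\Lag}{\partial v^j \partial v^i}\right)$ is the Hessian matrix with respect to the velocities and $C$ is some skew-symmetric $n\times n$ matrix (whose precise entries, coming from the $\d q^i\wedge\d q^j$ terms, are irrelevant to the argument). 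First I would just verify this block decomposition by pairing $\Omega_\Lag$ with the coordinate vector fields $\partial/\partial q^i$ and $\partial/\partial v^i$; this is a routine calculation from the displayed expression.

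The core of the proof is then the linear-algebra fact that a $2n\times 2n$ matrix of the above block shape is invertible if and only if $W$ is invertible. First I would establish one direction: if $W$ is regular, compute the determinant of the block matrix. Using the block structure with the zero lower-right block, one gets $\det\left(\begin{smallmatrix} C & W \\ -W^t & 0 \end{smallmatrix}\right) = \pm(\det W)^2 \neq 0$ (for instance by a Schur-complement / row-reduction argument eliminating $C$ via the invertible block $W$, or by direct cofactor expansion). Hence $\Omega_\Lag$ is non-degenerate at every point of $U$. For the converse, suppose $W$ fails to be regular at some point $p\in U$; then there is a nonzero vector $\bm{\lambda}=(\lambda^i)$ with $W^t\bm\lambda = 0$, i.e. $\sum_i \lambda^i \dfrac{\partial^2\Lag}{\partial v^i\partial v^j}=0$ for all $j$. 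Then the tangent vector $Y_p := \sum_i \lambda^i \dfrac{\partial}{\partial v^i}\Big|_p$ — a vertical vector — is nonzero and satisfies $\inn(Y_p)\Omega_{\Lag,p}=0$ (the $\d v^j$ legs of $\Omega_\Lag$ pair with $Y_p$ to give exactly $-W^t\bm\lambda$, and the $\d q^i\wedge\d q^j$ block is annihilated by a vertical vector), so $\Omega_\Lag$ is degenerate at $p$. This shows regularity of $W$ is necessary.

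Assembling the two implications gives the equivalence claimed in the proposition. I do not anticipate a genuine obstacle here: the only mildly delicate point is making sure that the contraction of the $\d q^i\wedge\d q^j$ term with a purely vertical vector vanishes (it does, since $\partial/\partial v^i$ annihilates every $\d q^j$), which is what lets the Hessian block act as the sole obstruction to non-degeneracy; and keeping track of signs in the determinant computation, which only affects a factor $\pm 1$ and hence does not matter. One could also phrase the whole argument intrinsically by noting that $\ker\Omega_{\Lag,p}$ always meets the vertical subspace ${\rm V}_{(q,v)}(\tau_Q)$ and is in fact detected there, but the coordinate computation above is the most economical route and is the one I would write out.
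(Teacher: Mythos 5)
Your proof is correct, but it follows a different route from the one in the text. The paper's argument uses the criterion that a $2$-form on a $2n$-dimensional manifold is non-degenerate if, and only if, its $n$-th exterior power is a volume form, and then computes $\Omega_\Lag^{\ n}$ in the natural chart, obtaining (up to the factor $n!$ and orientation) the Hessian determinant times $\d q^1\wedge\ldots\wedge\d q^n\wedge\d v^1\wedge\ldots\wedge\d v^n$; the whole equivalence is then read off from the vanishing or not of that single coefficient. You instead work with the coordinate matrix of $\Omega_\Lag$, whose block form $\left(\begin{smallmatrix} C & W \\ -W^t & 0_n \end{smallmatrix}\right)$ you identify correctly, prove one implication by the block-determinant identity $\det = \pm(\det W)^2$, and prove the converse by exhibiting an explicit vertical vector $\lambda^i\,\partial/\partial v^i$ in $\ker\Omega_{\Lag,{\rm p}}$ whenever $W$ is singular (your observation that the $\d q^i\wedge\d q^j$ block is annihilated by vertical vectors is exactly the point that makes this work). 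The two computations are of course linked: the coefficient of $\Omega_\Lag^{\ n}$ is, up to sign, the Pfaffian of your block matrix, whose square is the determinant you compute. What the paper's route buys is brevity, since a single wedge-power calculation settles both implications at once; what your route buys is that it only uses elementary linear algebra on the Gram matrix and, in the degenerate case, it locates the kernel explicitly inside the vertical subbundle ${\rm V}(\tau_Q)$, which is useful information later when discussing singular Lagrangians. No gap to report.
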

\begin{proof}
We have to prove that $\Omega_\Lag$ is nondegenerated if, and only if,
this condition holds. Then, it suffices to take into account that
$\Omega_\Lag \in \df^2(\Tan Q)$
is non degenerated if, and only if, $(\Omega_\Lag)^{\ n}$
is a volume form in $\Tan Q$. Therefore,
$$
\Omega_\Lag^{\ n} =
n!\ {\rm det}\left(\frac{\partial^2\Lag}{\partial v^j \partial v^i}\right)^n
\d q^1 \wedge \ldots \d q^n \wedge\d v^1 \wedge \ldots \d v^n \ ,
$$
which  is a non-vanishing form at every point when
\(\dst{\rm det}\left(\frac{\partial^2\Lag}{\partial v^j\partial v^i}\right)
\not= 0\)
everywhere.
\\ \qed \end{proof}

\subsection{Lagrangian dynamical equations.
 Euler--Lagrange equations}

The dynamical equation in the Lagrangian formalism is stated in the following:

\begin{pos}
\label{4poslag}
{\rm (Fourth Postulate of Lagrangian mechanics\/)}:
The dynamical trajectories of a Lagrangian system
$(\Tan Q,\Lag)$ are the integral curves
of a vector field $X_\Lag \in \vf (\Tan Q)$ such that:
\ben
\item
$X_\Lag$ is a solution to the equation
\beq
\inn (X_\Lag )\Omega_\Lag = \d E_\Lag \ .
\label{elm}
\eeq
\item
$X_\Lag$ is a {\sc sode}; that is, 
\beq
J (X_\Lag ) = \Delta \ .
\label{edso}
\eeq
\een
Equation (\ref{elm}) is called the
 \textbf{Lagrangian equation for vector fields,} and
a vector field $X_\Lag$ solution to \eqref{elm} (if it exists) is a
\textbf{Lagrangian dynamical vector field}.
If, in addition, the condition (\ref{edso}) holds, then
$X_\Lag$ it is called an \textbf{Euler--Lagrange vector field} of the system, 
\end{pos}

Bearing in mind the definition of integral curve and equation  \eqref{Hvecf},
it is immediate to prove that:

\begin{teor}
\label{teor:Lcurv}
The vector field $X_\Lag\in\vf(\Tan Q)$ is a Lagrangian vector field for a
Lagrangian system $(\Tan Q,\Lag)$ if, and only if,
the integral curves $c\colon I\subset\Real\to M$ of $X_\Lag$ are the solutions to the equations
 \begin{equation}
\label{Lcurv}
\inn( \widetilde c)(\Omega_\Lag\circ c)=\d E_\Lag\circ c \ ,
\eeq
Equation \eqref{Lcurv} is the \textbf{Lagrangian equation} for the
integral curves of $X_\Lag$.
If, in addition, $X_\Lag$ is a {\sc sode}, then $c$ is a holonomic curve
and \eqref{Lcurv} is the
\textbf{Euler--Lagrange equation} for the integral curves of $X_\Lag$.
\end{teor}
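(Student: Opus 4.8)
The plan is to mimic, almost verbatim, the argument behind Theorem \ref{teo:Hcurv}, since equation \eqref{Lcurv} is nothing but the ``curve version'' of the Lagrangian equation \eqref{elm}, exactly as \eqref{Hvecf2} is the curve version of \eqref{Hvecf}. First I would unpack the contraction $\inn(\widetilde c)(\Omega_\Lag\circ c)$ appearing in \eqref{Lcurv}: for a curve $c\colon I\to\Tan Q$, its canonical lift to $\Tan(\Tan Q)$ is $\widetilde c(t)=(c(t),\dot c(t))$, and $\inn(\widetilde c)(\Omega_\Lag\circ c)$ denotes the $1$-form along $c$ whose value at $t$ is $\inn(\dot c(t))(\Omega_\Lag)_{c(t)}$ (see Remark \ref{intcurve}). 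If $c$ is an integral curve of $X_\Lag$, then $\dot c(t)=X_\Lag(c(t))$, so this equals $\big(\inn(X_\Lag)\Omega_\Lag\big)_{c(t)}$, while $(\d E_\Lag\circ c)(t)=(\d E_\Lag)_{c(t)}$. Hence, along any integral curve, \eqref{Lcurv} is precisely the pointwise restriction of \eqref{elm}.

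Next I would establish the equivalence. If $X_\Lag$ satisfies \eqref{elm}, composing this identity of $1$-forms with any integral curve $c$ of $X_\Lag$ gives \eqref{Lcurv} immediately. Conversely, suppose \eqref{Lcurv} holds for every integral curve of $X_\Lag$. Given $p\in\Tan Q$, let $c$ be the integral curve of $X_\Lag$ with $c(0)=p$; evaluating \eqref{Lcurv} at $t=0$ yields $\inn(X_\Lag(p))(\Omega_\Lag)_p=(\d E_\Lag)_p$. Since $p$ is arbitrary, \eqref{elm} holds on all of $\Tan Q$, so $X_\Lag$ is a Lagrangian vector field in the sense of Postulate \ref{4poslag}. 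This proves the first assertion.

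For the {\sc sode} refinement: if moreover $J(X_\Lag)=\Delta$, i.e.\ $X_\Lag$ is a {\sc sode}, then by the characterization established earlier (a vector field is a {\sc sode} if, and only if, its integral curves are canonical lifts of curves in $Q$) any integral curve $c$ of $X_\Lag$ is holonomic: $c=\widetilde\gamma$ for some $\gamma\colon I\to Q$, with $c(t)=(\gamma(t),\dot\gamma(t))$ in natural coordinates. Substituting this into \eqref{Lcurv} and using the local expressions $\Omega_\Lag=\frac{\partial^2\Lag}{\partial q^j\partial v^i}\,\d q^i\wedge\d q^j+\frac{\partial^2\Lag}{\partial v^j\partial v^i}\,\d q^i\wedge\d v^j$ and $E_\Lag=v^i\,\partial\Lag/\partial v^i-\Lag$, a routine computation collapses \eqref{Lcurv} to the classical Euler--Lagrange equations $\frac{\d}{\d t}\!\left(\frac{\partial\Lag}{\partial v^i}\circ\widetilde\gamma\right)-\frac{\partial\Lag}{\partial q^i}\circ\widetilde\gamma=0$; this justifies calling \eqref{Lcurv} the Euler--Lagrange equation for the integral curves of $X_\Lag$.

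As for difficulty, there is essentially no deep obstacle here: the statement is formally parallel to Theorem \ref{teo:Hcurv}. The only points requiring care are bookkeeping ones --- making precise the meaning of the contraction of a $2$-form with a curve (already done in Remark \ref{intcurve}), and, in the {\sc sode} case, carrying out the coordinate computation cleanly and observing that condition \eqref{edso} is exactly what promotes $c$ to a lift $\widetilde\gamma$, thereby allowing the resulting second-order ODE system to be read as genuine equations for a curve $\gamma$ in the configuration manifold $Q$.
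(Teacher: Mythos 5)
Your argument is correct and follows exactly the route the paper intends: the paper states the theorem as "immediate" from the definition of integral curve, equation \eqref{elm}, and the contraction of Remark \ref{intcurve}, in complete analogy with Theorem \ref{teo:Hcurv}, and your composition/evaluation-at-$t=0$ argument plus the coordinate computation for the {\sc sode} case is precisely that filling-in. Nothing further is needed.
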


\begin{remark}{\rm 
It is usual to require that the Lagrangian dynamical equations are
obtained from a variational principle (as we will see
in Section \ref{lagvariational}) and
the necessary condition for it is that the integral curves
of the dynamical vector field $X_\Lag$ must be canonical lifts of
curves in the base $Q$ of the bundle $\Tan Q$ which represents the
Lagrangian phase space.
This is why $X_\Lag$ is asked to be a {\sc sode}.
}\end{remark}

\begin{definition}
Given a Lagrangian dynamical system $(\Tan Q,\Lag )$, 
the \textbf{Lagrangian problem} posed by  the system
consists in finding a vector field $X_\Lag \in \vf (\Tan Q)$
verifying  the conditions (\ref{elm}) and (\ref{edso}).
\end{definition}

\noindent {\bf Local expressions}:
Consider a natural chart $(U;q^i,v^i)$ of $\Tan Q$.
Bearing in mind  the local expressions of the several geometric elements
appearing in the dynamical equations we obtain that, if
\(\dst X_\Lag = A^i(q^j,v^j) \derpar{}{q^i} +
B^i(q^j,v^j) \derpar{}{v^i}\),
equation (\ref{elm}), written in coordinates, is:
\bea
\frac{\partial^2\Lag}{\partial v^j \partial v^i}B^i -
\left(\frac{\partial^2\Lag}{\partial q^j \partial v^i} +
\frac{\partial^2\Lag}{\partial v^j \partial q^i}\right)A^i +
\frac{\partial^2\Lag}{\partial v^i \partial q^j}v^i -
\derpar{\Lag}{q^j} &=& 0 \ ,
\label{eq11a}
\\
\frac{\partial^2\Lag}{\partial v^j \partial v^i}(A^i-v^i) &=& 0 \ .
\label{eq11}
\eea
As we know, condition (\ref{edso}) is locally equivalent to demand that
$A^i=v^i$. Furthermore, the integral curves
$\sigma\colon I\subseteq\Real\to\Tan Q$ of $X_\Lag$
are canonical lifts of curves
$\gamma\colon I\subseteq\Real\to Q$; therefore, if
$\gamma(t)=(q^i(t))$, then $\sigma(t)=(q^i(t),\dot q^i(t))$ and
$$
A^i = v^i = \frac{d q^i}{d t}
\quad , \quad
B^i = \frac{d^2q^i}{d t^2} \ ,
$$
and the combination of these expressions with equations (\ref{eq11a}) and (\ref{eq11}) leads to the
equation of  the integral curves which is
$$
\left(\frac{\partial^2\Lag}{\partial v^j \partial v^i}\circ\sigma\right)
\frac{d^2q^i}{d t^2} =
-\left(\frac{\partial^2\Lag}{\partial v^j \partial q^i}\circ\sigma\right)
\frac{d q^i}{d t} +
\derpar{\Lag}{q^j}\circ\sigma \ ,
$$
and can be written as
$$
W_{ji}(q^k(t),\dot q^k(t))\frac{d^2q^i}{d t^2} =F_j( q^k(t),\dot q^k(t)) \ ,
$$
or also in an equivalent form as
\beq
\frac{d}{d t}\left(\derpar{\Lag}{v^j}\circ\sigma\right)-
\derpar{\Lag}{q^j}\circ\sigma=0 \ .
\label{ELequats}
\eeq
This is the classical coordinate expression of the Euler--Lagrange equation \eqref{Lcurv}.

Therefore, for the case of regular Lagrangians we have the following result:

\begin{teor}
Let $(\Tan Q,\Lag)$ be a regular Lagrangian system.
Then, there exists a unique vector field $X_\Lag\in\vf (\Tan Q)$
which is the solution to the Lagrangian equation
(\ref{elm}) and it is a {\sc sode}.
\end{teor}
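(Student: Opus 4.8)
The plan is to reduce the statement to the non-degeneracy of $\Omega_\Lag$, which holds by the regularity hypothesis, and then to extract the {\sc sode} condition from the Lagrangian equation itself. First I would use the fact that $\Omega_\Lag$ is a symplectic form (it is closed, being exact, and nondegenerate by hypothesis and the preceding proposition characterizing regularity via the Hessian). Hence the canonical isomorphism $\flat_{\Omega_\Lag}\colon\vf(\Tan Q)\to\df^1(\Tan Q)$ is bijective, and therefore there exists a \emph{unique} vector field $X_\Lag\in\vf(\Tan Q)$ solving $\inn(X_\Lag)\Omega_\Lag=\d E_\Lag$; this is exactly equation \eqref{elm}. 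The only thing that remains is to show that this unique solution automatically satisfies the second-order condition $J(X_\Lag)=\Delta$, i.e.\ equation \eqref{edso}.

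For the second part I would argue in local natural coordinates $(U;q^i,v^i)$, writing $X_\Lag=A^i\,\partial/\partial q^i+B^i\,\partial/\partial v^i$ and using the local form of equation \eqref{elm} already displayed in the excerpt as \eqref{eq11a}--\eqref{eq11}. Equation \eqref{eq11} reads
$$
\frac{\partial^2\Lag}{\partial v^j\,\partial v^i}\,(A^i-v^i)=0 \ ,
$$
that is, $W_{ji}(A^i-v^i)=0$, where $W=(W_{ji})$ is the Hessian matrix of $\Lag$ with respect to the velocities. Since $\Lag$ is regular, $W$ is invertible at every point of $U$, so $A^i=v^i$ there. By the intrinsic characterization of {\sc sode}s proved earlier (the equivalence $J(X)=\Delta\Longleftrightarrow A^i=v^i$ in natural coordinates), this says precisely that $X_\Lag|_U$ is a {\sc sode}. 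As this holds on every natural chart and the {\sc sode} condition is intrinsic, $X_\Lag$ is a {\sc sode} on all of $\Tan Q$; hence it is an Euler--Lagrange vector field, and it is the unique solution to the Lagrangian problem posed by $(\Tan Q,\Lag)$.

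I do not expect any genuine obstacle here: the whole content is packaged in the two facts already established in the excerpt, namely that regularity makes $\Omega_\Lag$ symplectic (giving existence and uniqueness for \eqref{elm} via $\flat_{\Omega_\Lag}$) and that the Hessian block of the local expression of \eqref{elm} forces $A^i=v^i$. The mildly delicate point, if any, is simply making sure the uniqueness is claimed for the \emph{same} vector field that is then shown to be holonomic: uniqueness is obtained first from \eqref{elm} alone, and then \eqref{edso} is seen to be a \emph{consequence} rather than an extra constraint, so no compatibility issue can arise. One could alternatively give a coordinate-free argument for \eqref{edso}, but the coordinate computation via \eqref{eq11} is shorter and already available, so that is the route I would take.
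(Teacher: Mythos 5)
Your proposal is correct and follows essentially the same route as the paper: existence and uniqueness come from the non-degeneracy of $\Omega_\Lag$ (so $\flat_{\Omega_\Lag}$ is an isomorphism), and the {\sc sode} condition is extracted from the local equation \eqref{eq11}, where the invertibility of the Hessian forces $A^i=v^i$. No gaps; the only detail the paper adds in passing is the observation that the remaining coefficients $B^i$ are then determined by \eqref{eq11a}.
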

\begin{proof}
The existence and uniqueness are a straightforward consequence 
of the fact that $\Omega_\Lag$ is non-degenerated.

Furthermore, if the Lagrangian is regular, the Hessian matrix
$H_\Lag$ is regular and equation (\ref{eq11}) leads to
$A^i = v^i$, and thus the vector field $X_\Lag$ is a {\sc sode}.
(Observe that, in this case, all the coefficients $B^i$ are determined
by equations (\ref{eq11a})).
\\ \qed \end{proof}

Assuming that the Lagrangian $\Lag$ is regular, from \eqref{eq11a} and \eqref{eq11}, we can obtain the local expression of the Lagrangian dynamical vector field:
\begin{equation}\label{lagdynvf}
 X_\Lag= v^i\derpar{}{q^i}+
 W^{ik}\left(\derpar{\Lag}{q^i}
 -v^j\frac{\partial^2\Lag}{\partial q^j \partial v^i}\right)\derpar{}{q^k}\, ,
\end{equation}
where $ W^{ik}$ is the inverse matrix of the partial Hessian matrix of $\Lag$; that is,
$ W^{ik}W_{ji}=\delta^k_j$.

\begin{remark}{\rm 
\bit
\item
For Lagrangian systems, the triple
$(\Tan Q,\Omega_{\Lag},\d E_{\Lag})$ is a Hamiltonian system
which is regular or singular depending on the regularity of the Lagrangian $\Lag$.
Thus, the Lagrangian formalism of these kinds of systems is a Hamiltonian formalism
with certain additional characteristics.
\item
If the Lagrangian system is singular, then $(\Tan Q,\Omega_{\Lag},\d E_{\Lag})$
is a presymplectic Hamiltonian system and then equation (\ref{elm})
is not necessarily compatible everywhere on $\Tan Q$ 
and, even in the case that it has solution,
it is not unique and it is not necessarily a {\sc sode}.
In fact, if $X_\Lag^0$ is a solution, then $X_\Lag^0+Z$, with 
$Z\in\ker\,\Omega_\Lag$, is also a solution.
Then, in order to obtain the Euler--Lagrange equations \eqref{Lcurv},
the condition $J(X_\Lag)=\Delta$ must be added to the above Lagrangian equations.
In general, solutions $X_\Lag$ could exist only in some submanifold $S_f\hookrightarrow\Tan Q$,
and a suitable {\sl constraint algorithm} must be implemented in order to find 
this {\sl final constraint submanifold} $S_f$ (if it exists) where there are 
{\sc sode} vector fields $X_{\cal L}\in\vf(\Tan Q)$,
tangent to $S_f$, which are (not necessarily unique) solutions to the Lagrangian equations on $S_f$.
All these problems have been also widely studied
(see, for instance, \cite{BGPR-86,BGPR-87,CLR-87,CLR-88,GN-79,GN-80,Ka-82,MR-92}).
\eit
}\end{remark}

\section{Canonical Hamiltonian formalism for Lagrangian systems}

The {\sl canonical Hamiltonian formalism} of Mechanics was
initially introduced by {\it Hamilton, Lagrange, Poisson,
Ostrogadsky} and {\it Donkin}, among others, and constitutes the dual
formulation of the Lagrangian formalism.

\subsection{Legendre map}

The construction of the canonical Hamiltonian formalism is based on the introduction of a map:
the {\sl fiber derivative} of the Lagrangian function:

\begin{definition}
Let $(\Tan Q,\Lag )$ be a Lagrangian dynamical system.
The \textbf{Legendre map} associated with this system
is the fiber derivative of $\Lag$; that is the map
$$
\begin{array}{ccccc}
\Leg & \colon & \Tan Q & \to & \Tan^*Q
\\
 & & (q,u) & \mapsto & (q,{\rm D}\Lag_q(u))
\end{array} \ .
$$
\end{definition}

\noindent {\bf Local expression}:
Given a natural chart $(U;q^i,v^i)$ in $\Tan Q$
and the corresponding canonical coordinates
$(\Leg(U);q^i,p_i)$ in $\Tan^* Q$,
the local expression of $\Leg$ is
$$
q^i \circ \Leg = q^i \quad , \quad p_i \circ \Leg =\derpar{\Lag}{v^i} \ ,
$$
the coordinates $p_i$ are called the
{\sl \textbf{generalized momenta}} associated with the generalized coordinates $q^i$.

A relevant characteristic of the Legendre map is given by the following:

\begin{teor}
If $\Theta$ and $\Omega$ are the canonical forms in $\Tan^*Q$,
then $\Leg^*\Theta =\Theta_\Lag$ and $\Leg^*\Omega =\Omega_\Lag$.
\end{teor}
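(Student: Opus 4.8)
The plan is to prove the identity $\Leg^*\Theta = \Theta_\Lag$ first, since $\Leg^*\Omega = \Omega_\Lag$ follows immediately: indeed
$$
\Leg^*\Omega = \Leg^*(-\d\Theta) = -\d(\Leg^*\Theta) = -\d\Theta_\Lag = \Omega_\Lag \ .
$$
So the whole content is in the first equality. There are two natural routes. The intrinsic route works directly with the defining formulas: recall that for ${\rm p}=(q,\xi)\in\Tan^*Q$ and $X_{\rm p}\in\Tan_{\rm p}\Tan^*Q$ one has $\Theta_{\rm p}(X_{\rm p}) = \xi(\Tan_{\rm p}\pi_Q(X_{\rm p}))$, while for $(q,u)\in\Tan Q$ and $Y\in\Tan_{(q,u)}\Tan Q$ one has $\Theta_{\Lag,(q,u)}(Y) = (\d\Lag\circ J)_{(q,u)}(Y) = \d\Lag_{(q,u)}(J_{(q,u)}Y) = \d\Lag_{(q,u)}(\lambda_q^{(q,u)}(\Tan_{(q,u)}\tau_Q(Y)))$. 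The pullback is, by definition,
$$
(\Leg^*\Theta)_{(q,u)}(Y) = \Theta_{\Leg(q,u)}(\Tan_{(q,u)}\Leg(Y)) = ({\rm D}\Lag_q(u))\big(\Tan_{\Leg(q,u)}\pi_Q(\Tan_{(q,u)}\Leg(Y))\big) \ .
$$
Since $\pi_Q\circ\Leg = \tau_Q$ (the Legendre map preserves fibers, as noted after its definition), the composite $\Tan\pi_Q\circ\Tan\Leg$ equals $\Tan\tau_Q$, so the argument of ${\rm D}\Lag_q(u)$ is simply $\Tan_{(q,u)}\tau_Q(Y)$. It remains to check that ${\rm D}\Lag_q(u)$ applied to a vector $w = \Tan_{(q,u)}\tau_Q(Y)\in\Tan_qQ$ agrees with $\d\Lag_{(q,u)}$ applied to the vertical lift $\lambda_q^{(q,u)}(w)$ of that same vector. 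This is exactly the content of the definitions: the vertical lift $\lambda_q^{(q,u)}(w)$ is ${\rm D}_w(q,u)$, the directional derivative along $w$ at $(q,u)$, so $\d\Lag_{(q,u)}({\rm D}_w(q,u)) = ({\rm D}_w(q,u))\Lag = {\rm D}_w\Lag_q(u) = ({\rm D}\Lag_q(u))(w)$, using that the vertical direction only moves the fiber variable, where $\Lag$ restricts to $\Lag_q$. This closes the argument.

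Alternatively — and this is probably the cleanest to write up — one can simply verify the identity in natural coordinates. We have already recorded $\Theta = p_i\,\d q^i$ on $\Tan^*Q$, the coordinate expression $q^i\circ\Leg = q^i$, $p_i\circ\Leg = \partial\Lag/\partial v^i$ of the Legendre map, and $\Theta_\Lag = (\partial\Lag/\partial v^i)\,\d q^i$ on $\Tan Q$. Then
$$
\Leg^*\Theta = \Leg^*(p_i\,\d q^i) = (p_i\circ\Leg)\,\d(q^i\circ\Leg) = \derpar{\Lag}{v^i}\,\d q^i = \Theta_\Lag \ ,
$$
and applying $-\d$ gives $\Leg^*\Omega = \Omega_\Lag$ as above. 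Because both $\Theta$ and $\Theta_\Lag$ are globally well-defined $1$-forms and the identity holds on every natural chart, it holds globally.

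The main (and really only) subtlety worth flagging is the one appearing in the intrinsic proof: one must be careful that the relation $\pi_Q\circ\Leg=\tau_Q$ is used correctly at the level of tangent maps, and that the vertical lift $\lambda_q^{(q,u)}$ is genuinely the inverse of the identification of ${\rm V}_{(q,u)}(\tau_Q)$ with $\Tan_qQ$, so that evaluating $\d\Lag$ on a vertical lift reproduces the fiber derivative ${\rm D}\Lag_q$. Both facts are already established in the preceding sections (the fiber-preserving property of $\Leg$ just after its definition, and the construction of the vertical lift), so no new work is needed; I would simply cite them. I would present the coordinate computation as the primary proof for brevity and remark that an intrinsic proof is available along the lines sketched above.
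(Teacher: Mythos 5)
Your coordinate computation is exactly the paper's proof: pull back $\Theta=p_i\,\d q^i$ through $q^i\circ\Leg=q^i$, $p_i\circ\Leg=\partial\Lag/\partial v^i$, and use that pullback commutes with $\d$ to get the statement for $\Omega$. The additional intrinsic argument (using $\pi_Q\circ\Leg=\tau_Q$ and the identification of $\d\Lag$ on vertical lifts with the fiber derivative) is also correct, but it is a bonus rather than a different route, so nothing further is needed.
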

\begin{proof}
The local expression of $\Theta$ is $\Theta =p_i\d q^i$; therefore
$$
\Leg^*\Theta = (\Leg^*p_i)\d (q^i\circ\Leg ) =
\derpar{\Lag}{v^i}\d q^i = \Theta_\Lag \ ,
$$
and, bearing in mind that $\Leg^*$ commutes with the exterior differential,
we obtain the result for $\Omega$.
\\ \qed \end{proof}

Working also with local coordinates, it is immediate to prove that:

\begin{prop}
$\Lag$ is a regular Lagrangian if, and only if, $\Leg$ is a local diffeomorphism.
\end{prop}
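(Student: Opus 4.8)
The plan is to compute the Jacobian matrix of $\Leg$ in natural coordinates and apply the inverse function theorem. Since the whole statement is local (both ``regular Lagrangian'' and ``local diffeomorphism'' are pointwise conditions), it suffices to fix a natural chart $(U;q^i,v^i)$ of $\Tan Q$ together with the associated canonical chart $(\Leg(U);q^i,p_i)$ of $\Tan^*Q$, and to work there.

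First I would recall the local expression of the Legendre map established just above, namely $q^i\circ\Leg=q^i$ and $p_i\circ\Leg=\dst\derpar{\Lag}{v^i}$. Differentiating these, the Jacobian matrix of $\Leg$ at a point of $U$, with respect to the ordered coordinates $(q^i,v^i)$ on the source and $(q^i,p_i)$ on the target, has the block lower-triangular form
$$
\left(\begin{matrix}
I_n & 0_n \\[1mm]
\dst\frac{\partial^2\Lag}{\partial q^j\,\partial v^i} & \dst\frac{\partial^2\Lag}{\partial v^j\,\partial v^i}
\end{matrix}\right) \ .
$$
Hence its determinant equals $\dst\det\left(\frac{\partial^2\Lag}{\partial v^j\,\partial v^i}\right)=\det W$, the determinant of the Hessian matrix $W$ of $\Lag$ with respect to the velocities.

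Then the two implications follow at once. If $\Lag$ is regular, the previous proposition gives that $W$ is nonsingular at every point of $U$, so the Jacobian of $\Leg$ is invertible everywhere on $U$; by the inverse function theorem $\Leg$ is a local diffeomorphism (and since $U$ was arbitrary, globally a local diffeomorphism). Conversely, if $\Leg$ is a local diffeomorphism, its Jacobian is invertible at each point, so $\det W\neq 0$ everywhere, which by the same previous proposition means $\Omega_\Lag$ is nondegenerate, i.e. $\Lag$ is regular.

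I do not anticipate a genuine obstacle here; the only point requiring minor care is bookkeeping with the block structure of the Jacobian and noting that the off-diagonal block $\partial^2\Lag/\partial q^j\partial v^i$ plays no role in the determinant because of the triangular form. Everything else is a direct appeal to the inverse function theorem and to the coordinate characterization of regularity proved earlier in this section.
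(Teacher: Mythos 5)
Your proof is correct and follows essentially the same route as the paper: both compute the Jacobian of $\Leg$ in natural coordinates, use its block lower-triangular form to see that its determinant equals $\det\Big(\dst\frac{\partial^2\Lag}{\partial v^i\partial v^j}\Big)$, and conclude via the inverse function theorem together with the coordinate characterization of regularity. No gaps.
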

\begin{proof}
As $\Lag$ is $\Cinfty$, the necessary and sufficient condition for
$\Leg$ to be a local diffeomorphism is that, for all ${\rm p}\in\Tan Q$,
the differential of $\Lag$ at ${\rm p}$,  ${\rm D}_{\rm p}\Leg$, is an isomorphism. 
Then it suffices to analyze the local expression 
of the Jacobian matrix of $\Leg$ at ${\rm p}$, which is
$$
{\cal H}_{{\cal F}\Lag}({\rm p})=
\left(\begin{matrix}
({\rm Id})_{n\times n}& (0)_{n\times n} \\
\Big(\displaystyle\frac{\partial^2\Lag}{\partial q^ij\partial v^i}({\rm p}) \Big)& \Big(\displaystyle\frac{\partial^2\Lag}{\partial v^j \partial v^i}({\rm p})\Big)
\end{matrix}\right) \ ;
$$
therefore, ${\rm det}\,{\cal H}_{{\cal F}\Lag}({\rm p})\not=0$
if, and only if, ${\rm det}\,\Big(\displaystyle\frac{\partial^2\Lag}{\partial v^i \partial v^j}({\rm p})\Big)\not=0$.
\\ \qed \end{proof}

This leads to:

\begin{definition}
$\Lag$ is a \textbf{hyperregular Lagrangian} if
$\Leg$ is a (global) diffeomorphism.
\end{definition}

Regarding the singular Lagrangians,  the most interesting ones are:

\begin{definition}
$\Lag$ is an \textbf{almost-regular Lagrangian} if:
\ben
\item
$\Leg(\Tan Q)\equiv P$ is a closed submanifold of $\Tan^*Q$.
\item
$\Leg$ is a submersion onto its image.
\item
For every ${\rm p}\in\Tan Q$, the fibers $\Leg^{-1}(\Leg({\rm p}))$ are connected submanifolds of $\Tan Q$.
\een
\end{definition}

\subsection{Canonical  Hamiltonian formalism. Equivalence with the Lagrangian formalism}
\label{canforma}

We study essentially the case of hiperregular systems.
Nevertheless, all the results hold also for the case de
regular systems in general,
changing $\Tan^*Q$ by $\Leg (\Tan Q)\subset\Tan^*Q$
(or locally, at least).
First, as $\Leg$ is a diffeomorphism we have:

\begin{prop}
Let $(\Tan Q,\Lag)$ be a hiperregular Lagrangian system.
Then there exists a unique function ${\rm h}\in\Cinfty (\Tan^*Q)$ such that
$\Leg^*{\rm h}=E_\Lag$,
which is called the \textbf{Hamiltonian function} associated with the system
$(\Tan Q,\Lag)$,
and the triple $(\Tan^*Q,\Omega,{\rm h})$ is the 
\textbf{canonical Hamiltonian system} associated with $(\Tan Q,\Lag)$.
\end{prop}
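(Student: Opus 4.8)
The plan is to exploit the fact, established in the proposition immediately preceding the statement, that hyperregularity of $\Lag$ means precisely that the Legendre map $\Leg\colon\Tan Q\to\Tan^*Q$ is a global diffeomorphism. Once this is in hand the whole statement reduces to a transport of structure along $\Leg^{-1}$, so there is no genuine obstacle; the only thing needing care is bookkeeping.

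First I would define $\h:=E_\Lag\circ\Leg^{-1}\in\Cinfty(\Tan^*Q)$, which is smooth as a composition of smooth maps ($\Leg^{-1}$ being smooth by hyperregularity). Then $\Leg^*\h=\h\circ\Leg=E_\Lag\circ\Leg^{-1}\circ\Leg=E_\Lag$, so $\h$ has the required property. For uniqueness, suppose $\h'\in\Cinfty(\Tan^*Q)$ also satisfies $\Leg^*\h'=E_\Lag$; then $\h'\circ\Leg=\h\circ\Leg$, and since $\Leg$ is surjective (being a diffeomorphism) we get $\h'=\h$. As a by-product one recovers, in canonical coordinates $(q^i,p_i)$, the classical expression $\h=p_iv^i-\Lag$, where the $v^i$ are understood as functions of $(q^j,p_j)$ obtained by inverting $p_i=\partial\Lag/\partial v^i$ — possible (locally) precisely because $\Lag$ is regular.

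It remains to verify that $(\Tan^*Q,\Omega,\h)$ is a global Hamiltonian dynamical system in the sense of Definition \ref{sdhr}. By Theorem \ref{ficotvs}, $(\Tan^*Q,\Omega)$ is a symplectic manifold, $\Omega=-\d\Theta$ being the canonical $2$-form; and $\alpha:=\d\h$ is exact, hence closed, so $\h$ is a global Hamiltonian function and the triple is a (regular, global) Hamiltonian system. Moreover, since $\Leg^*\Omega=\Omega_\Lag$ by the theorem relating the Legendre map with the canonical forms, and $\Leg^*\h=E_\Lag$, the diffeomorphism $\Leg$ identifies the Lagrangian Hamiltonian system $(\Tan Q,\Omega_\Lag,\d E_\Lag)$ with $(\Tan^*Q,\Omega,\d\h)$; this is what makes the word ``associated'' legitimate, and it is worth stating explicitly at the end of the proof.

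The hardest part, such as it is, is purely formal: ensuring the construction is independent of any auxiliary choices (it is, by surjectivity of $\Leg$) and that ``Hamiltonian system'' is read in the precise sense fixed earlier in the chapter. For the merely regular (non-hyperregular) case, I would remark that the same argument applies verbatim after replacing $\Tan^*Q$ by the image $\Leg(\Tan Q)\subset\Tan^*Q$, or after restricting to open sets where $\Leg$ is a diffeomorphism.
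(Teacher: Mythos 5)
Your proof is correct and follows exactly the route the paper intends: the paper treats the statement as an immediate consequence of $\Leg$ being a (global) diffeomorphism in the hyperregular case, which is precisely your transport-of-structure argument $\h=E_\Lag\circ\Leg^{-1}$, with uniqueness from surjectivity of $\Leg$. Nothing is missing.
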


As $E_\Lag$ represents the energy
of the system in the Lagrangian formalism, the function ${\rm h}$
represents the same observable in the canonical Hamiltonian formalism.

Therefore, we have the regular Hamiltonian system
$(\Tan^*Q,\Omega,\d {\rm h})$, where $\Omega$ is the canonical $2$-form in
$\Tan^*Q$
fulfilling the Postulates and results established in Section \ref{shreh}.
In particular, the Hamiltonian equations for vector fields and their
integral curves read
\bea
\inn(X_{\rm h})\Omega=\d {\rm h} \quad &,& \quad X_{\rm h}\in\vf(\Tan^* Q) \ .
\label{elmh} \\
\inn(\widetilde c)(\Omega\circ c))=\d {\rm h}\circ c \quad &,& c\colon I\subseteq\Real\to\Tan^* Q \ ;
\label{elmh2}
\eea

Their local expressions in a natural chart of coordinates
$(U;q^i,p_i)$ in $\Tan^*Q$ are the following:
if $\displaystyle X_{\rm h}\vert_U=f^i\derpar{}{q^i}+g_i\derpar{}{p_i}\in\vf (\Tan^*Q)$,
then,
$$
f^i= \derpar{h}{p_i} \quad ; \quad g_i=-\derpar{h}{q^i}
$$
and its integral curves $c(t)=(q^i(t),p_i(t))$ solution to  \eqref{elmh2} are
the solutions to the system of first-order differential equations
\beq
\frac{d q^i}{d t} = \derpar{h}{p_i}(c(t))
\quad , \quad
\frac{d p_i}{d t} = -\derpar{h}{q^i}(c(t)) \ .
\label{hameq3}
\eeq

The relation between the Lagrangian and the canonical Hamiltonian formalisms of a (hyper)regular Lagrangian system is stated in the following:

\begin{teor} {\rm (Equivalence Theorem)}
\label{eqteorema}
Let $(\Tan Q,\Lag)$ be a (hyper)regular Lagrangian system.
\ben
\item
If $X_\Lag$ is the Lagrangian vector field 
solution to equations (\ref{elm}) and \eqref{edso},
then there exists a unique vector field
 $\Leg_*X_\Lag\equiv X_{\rm h}\in\vf (\Tan^*Q)$
which is the solution to equations (\ref{elmh}).

Conversely, if $X_{\rm h}$ is the Hamiltonian vector field
solution to equation (\ref{elmh}),
then there exists a unique vector field
 $\Leg_*^{-1}X_{\rm h}\equiv X_\Lag\in\vf (\Tan Q)$
which is the solution to equations (\ref{elm}) and \eqref{edso}.
\item
Equivalently, if $\gamma \colon I\subset\Real\to Q$ 
is a curve and its canonical lift
$\widetilde \gamma \colon I\subset\Real\to \Tan Q$ is a
solution to equation \eqref{Lcurv},
then $\zeta=\Leg\circ\widetilde \gamma$ is a curve
solution to equation \eqref{elmh2}.

Conversely, if $\zeta \colon I\subset\Real\to \Tan^*Q$ 
is a curve solution to equation \eqref{elmh2},
then $\widetilde\gamma=\widetilde{\pi_Q\circ\zeta} \colon I\subset\Real\to \Tan Q$ 
is a curve solution to equation \eqref{Lcurv}.
\een
\end{teor}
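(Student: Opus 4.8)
The plan is to exploit that, for a (hyper)regular Lagrangian, the Legendre map $\Leg$ is a diffeomorphism, so that it transports the entire geometric apparatus of $(\Tan Q,\Omega_\Lag,\d E_\Lag)$ onto $(\Tan^*Q,\Omega,\d{\rm h})$. The single algebraic fact I need is that for a diffeomorphism $\Phi$, a vector field $X$ and a differential form $\beta$ one has $\Phi^*(\inn(\Phi_*X)\beta)=\inn(X)(\Phi^*\beta)$; combined with the already established identities $\Leg^*\Omega=\Omega_\Lag$ and $\Leg^*{\rm h}=E_\Lag$, this turns the Lagrangian equation into the Hamilton equation and back.

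First I would settle item 1. Given the Euler--Lagrange vector field $X_\Lag$ (which exists and is unique by the regularity theorem for Lagrangian systems), set $X_{\rm h}:=\Leg_*X_\Lag$ and compute
$$
\Leg^*\big(\inn(X_{\rm h})\Omega\big)=\inn(X_\Lag)(\Leg^*\Omega)=\inn(X_\Lag)\Omega_\Lag=\d E_\Lag=\d(\Leg^*{\rm h})=\Leg^*(\d{\rm h}) \ .
$$
Since $\Leg$ is a diffeomorphism, $\Leg^*$ is injective on forms, hence $\inn(X_{\rm h})\Omega=\d{\rm h}$; by Proposition \ref{teo-hameqs} this is the unique Hamiltonian vector field of $(\Tan^*Q,\Omega,{\rm h})$. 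Conversely, starting from the Hamiltonian vector field $X_{\rm h}$, put $X_\Lag:=(\Leg_*)^{-1}X_{\rm h}$ and run the same chain of equalities backwards, using $(\Leg^{-1})^*\Omega_\Lag=\Omega$ and $(\Leg^{-1})^*E_\Lag={\rm h}$; this yields $\inn(X_\Lag)\Omega_\Lag=\d E_\Lag$, so $X_\Lag$ solves \eqref{elm}. The only subtle point is the {\sc sode} condition \eqref{edso}: rather than verifying $J(X_\Lag)=\Delta$ directly, I would invoke the theorem stating that, for a regular Lagrangian, the unique solution of \eqref{elm} is automatically a {\sc sode}; hence $X_\Lag$ is the Euler--Lagrange vector field. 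Uniqueness on both ends comes from nondegeneracy of $\Omega$ and of $\Omega_\Lag$.

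For item 2 I would descend from vector fields to their integral curves. If $\gamma\colon I\to Q$ has canonical lift $\widetilde\gamma$ solving \eqref{Lcurv}, then by Theorem \ref{teor:Lcurv} (and since $X_\Lag$ is a {\sc sode}) $\widetilde\gamma$ is an integral curve of $X_\Lag$; applying $\Leg$ and using $X_{\rm h}=\Leg_*X_\Lag$, the curve $\zeta:=\Leg\circ\widetilde\gamma$ is an integral curve of $X_{\rm h}$, hence by Theorem \ref{teo:Hcurv} a solution of \eqref{elmh2}. Conversely, if $\zeta$ solves \eqref{elmh2} it is an integral curve of $X_{\rm h}$, so $\Leg^{-1}\circ\zeta$ is an integral curve of $X_\Lag$; since $X_\Lag$ is a {\sc sode}, $\Leg^{-1}\circ\zeta$ is holonomic, say $\Leg^{-1}\circ\zeta=\widetilde\gamma$ with $\gamma=\tau_Q\circ\Leg^{-1}\circ\zeta$. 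The fibration identity $\pi_Q\circ\Leg=\tau_Q$ then gives $\gamma=\pi_Q\circ\zeta$, which is exactly the claimed curve. The main obstacle here is purely organizational: keeping track of which uniqueness statement (of the vector field, or of the {\sc sode} solution) is used where, and not overlooking the harmless but indispensable remark that $\Leg$ covers the identity of $Q$.
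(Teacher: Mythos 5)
Your proposal is correct and follows essentially the same route as the paper: the pullback identities $\Leg^*\Omega=\Omega_\Lag$, $\Leg^*{\rm h}=E_\Lag$ convert the two vector-field equations into each other, regularity supplies the {\sc sode} property and uniqueness, and the curve statements follow by transporting integral curves through $\Leg$ via Theorems \ref{teor:Lcurv} and \ref{teo:Hcurv}. Your converse in item 2 (using the {\sc sode} property to see that $\Leg^{-1}\circ\zeta$ is holonomic and $\pi_Q\circ\Leg=\tau_Q$ to identify the base curve) just spells out the step the paper compresses into ``reasoning as above''.
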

\begin{proof}
\ben
\item
For the first item we have that
$$
0=\inn (X_\Lag)\Omega_\Lag -\d E_\Lag=
\inn (X_\Lag)(\Leg^*\Omega) -\d (\Leg^*{\rm h})=\Leg^*[\inn (X_{\rm h})\Omega -\d {\rm h}]
$$
and as $\Leg$ is a diffeomorphism, this is equivalent to equation (\ref{elmh}).
The proof that $X_\Lag$ is a {\sc sode} is a consequence of the regularity.
\item
For the second item, if $\widetilde \gamma$
is a solution to equations \eqref{Lcurv} 
then, by Theorem \ref{teor:Lcurv}, it is an integral curve of the Euler--Lagrange vector field $X_\Lag$,
and this means that
$\dot{\widetilde\gamma}=X_\Lag\circ\widetilde\gamma$.
Then, as $\zeta=\Leg\circ\widetilde \gamma$ and $\Leg_*X_\Lag=X_{\rm h}$,  we have that
$$
\dot\zeta=\Tan\Leg\circ\dot{\widetilde\gamma}=
\Tan\Leg\circ X_\Lag\circ\widetilde\gamma=
X_{\rm h}\circ\Leg\circ\widetilde\gamma=X_{\rm h}\circ\zeta\ ;
$$
then $\zeta$ is an integral curve of the Hamiltonian vector field $X_{\rm h}$
and, by Theorem \ref{teo:Hcurv}, it is a solution to equation \eqref{elmh2}

Conversely, from a solution $\zeta$ to the equation \eqref{elmh2},
we construct the curve $\gamma=\pi_Q\circ\zeta$ and its
canonical lift $\widetilde\gamma$. Then, reasoning as above, we conclude that $\widetilde\gamma$ is a solution to \eqref{Lcurv}.

The following (commutative) diagram summarizes the situation
$$
\xymatrix{
\Tan\Tan Q 
\ar[rrrr]^<(0.45){\Tan\mathcal{FL}} 
 & \ & \ & \ &
\Tan\Tan^*Q   \\
\ & \ & \Real  \ar[dd]^<(0.65){\gamma}
 \ar[urr]^<(0.45){\widetilde\zeta} \ar[dll]_<(0.45){\widetilde\gamma}
 \ar[drr]^<(0.45){\zeta}  \ar[ull]_<(0.45){\widetilde{\widetilde\gamma}} & \ & \ \\
\Tan Q \ar[uu]^<(0.45){X_\Lag} \ar[drr]^<(0.45){\tau_Q}
\ar[rrrr]^<(0.35){\mathcal{FL}}
 & \ & \ & \ & 
\Tan^*Q  \ar[uu]_<(0.45){X_{\rm h}}  \ar[dll]_<(0.45){\pi_Q}  \\
& \ & Q & \ &
}
$$
\een
\qed \end{proof}

\begin{remark}{\rm 
If $(\Tan Q,\Lag)$ is an almost-regular Lagrangian system
and $j_0\colon P\hookrightarrow\Tan^*Q$ is the 
natural embedding of $P\equiv\Leg(\Tan Q)$ in $\Tan^*Q$,
then it can be proved that there exists ${\rm h}_0\in\Cinfty (P)$
such that $\Leg_0^*{\rm h}_0=E_\Lag$, where
$\Leg_0\colon\Tan Q\to P$ is defined by $\Leg=j_0\circ\Leg_0$.
This function is the {\sl canonical Hamiltonian function} of the system,
and has the same physical interpretation than ${\rm h}$.
Now, taking $\Omega_0=j_0^*\Omega$, the triple
$(P,\Omega_0,\d{\rm h}_0)$ is in this case the {\sl canonical Hamiltonian system}
associated with $(\Tan Q,\Lag)$, which is the equivalent to $(\Tan^*Q,\Omega,\d{\rm h})$ in the regular case, and is a presymplectic Hamiltonian system, in general.
In particular, the equation equivalent to (\ref{elmh}) is
$$
\inn (X_{{\rm h}_0})\Omega_0= \d {\rm h}_0 \quad ;\quad X_{{\rm h}_0}\in\vf(P) \ ,
$$
which, if $\Omega_0$ is a presymplectic form,
is incompatible in general and,  in the most interesting cases, $X_{{\rm h}_0}$ exists only in
some submanifold $P_f\hookrightarrow P$, and is tangent to it.
Moreover, the solution is not unique, since if $X_{{\rm h}_0}$ is a solution, then
$X_{{\rm h}_0}+Z$, for every $Z\in\ker\,\Omega_0$, is also a solution.

Details on the construction of the canonical Hamiltonian formalism for
almost-regular Lagrangians and a deeper study on the equivalence of
both, the Lagrangian and Hamiltonian formalisms, for this case
can be found, for instance, in
\cite{BGPR-86,CL-87,CLR-87,CLR-88,GN-79,GP-89,Ka-82,SR-83}).
}\end{remark}

\subsection{Discussion and comparison}

As a summary of the results in this section,
we state the essential characteristics of  the Lagrangian and canonical Hamiltonian
 formalisms of  the Lagrangian dynamical systems.

In the Lagrangian formalism we have that:
\ben
\item
Local description:
\ben
\item
To describe locally  the {\sl states} of the system we use
the {\sl generalized coordinates} $(q^i)$ ($i=1,\ldots ,n$), 
representing the degrees of freedom of the system, and
the {\sl generalized velocities} $(v^i)$ corresponding to each generalized coordinate.
\item
The dynamical information of the system is given by the
{\sl Lagrangian function} of the system, $\Lag (q^i,v^i)$.
\item
The dynamical  evolution of the system is described by the {\sl  Euler--Lagrange equations} \eqref{ELequats},
which are a second-order system of $n$ differential equations for the functions $q^i(t)$.
\item
The dynamical system  is {\sl regular} if
\(\dst{\rm det}\left(\derpar{^2\Lag}{v^i\partial v^j}(q,v)\right)\not= 0\)
in all  the points $(q,v)$.
In this case, given a set of initial conditions
(that is, a state of the system), the local solution to  the equations is unique.
\een
\item
In the geometrical description, the above characteristics are translated as:
\ben
\item
The phase space of the system is the tangent bundle
$\Tan Q$ of the manifold $Q$ which constitutes the configuration space of the system.
\item
The Lagrangian function $\Lag \in \Cinfty (\Tan Q)$
contains the dynamical information of the system.
\item
Starting from this function and using the geometric structures
of the tangent bundle (the {\sl  canonical endomorphism} and the
{\sl Liouville vector field\/}), one can construct
the {\sl  Lagrange $2$-form} $\Omega_\Lag$
and the {\sl Lagrangian energy} of the system, $ E_{\Lag}$,
and write the Lagrangian dynamical equations which are
$\inn (X_{\Lag})\Omega_{\Lag}=\d E_{\Lag}$,
together with the second-order condition $J(X_\Lag)=\Delta$.
The integral curves of $X_\Lag\in\vf(\Tan Q)$ are the
dynamical trajectories of the system.
\item
The dynamical system is {\sl regular} if
$\Omega_{\Lag}$ is a non-degenerated form.
In this case, the vector field $X_\Lag$ is unique and necessarily a {\sc sode}.
\een
\een

Concerning to the canonical Hamiltonian formalism, we have that:
\ben
\item
Local description:
\ben
\item
The states of the system are now described
by generalized coordinates of position $(q^i)$
and their corresponding generalized momenta $(p_i)$.
\item
The dynamical information of the system is given by the
{\sl Hamiltonian function} of the system, ${\rm h}(q^i,p_i)$.
\item
The dynamical evolution of the system is obtained from the
{\sl Hamilton equations} \eqref{hameq3},
which are a first-order system of $2n$ differential equations.
\item
The dynamical system  is {\sl regular} if $\Leg$ is a (local) diffeomorphism.

In this case, given a set of initial conditions
(that is, a state of the system), the solution to  the equations is unique.
\een
\item
In the geometrical description, the above characteristics are translated as::
\ben
\item
The phase space of the system is $\Leg(\Tan Q)$,
and is, whether the cotangent bundle
$\Tan^* Q$ of the manifold $Q$ which constitutes the
configuration space of the system, or an open set $U\subset\Tan^*Q$,
or a submanifold $P\hookrightarrow\Tan^*Q$
(depending on the regularity of $\Lag$).
\item
The dynamical information is given by ${\rm h}\in \Cinfty (\Tan^*Q)$,
(resp. ${\rm h}_0\in \Cinfty (P)$): the {\sl Hamiltonian function}.
\item
The Hamiltonian dynamical equations are obtained
using this function and the can\'onical  $2$-form
of the cotangent bundle $\Omega$, and they are
$\inn (X_{\rm h})\Omega=\d{\rm h}$ (resp. $\inn (X_{{\rm h}_0})\Omega_0=\d{\rm h}_0$).
The integral curves of the vector field solution are the
dynamical trajectories of the system.
\item
The dynamical systems  is {\sl regular} if
$\Leg$ is a (local) diffeomorphism.
\een
\een

The canonical Hamiltonian formalism is especially interesting by the following facts:
\bit
\item
Locally, it is manifest the asymmetry between the sets of variables $q^i$
and $v^i$ in  the  Lagrangian dynamical equations.
This does not happen in  the dynamical equations of the
Hamiltonian formalism, with the coordinates $q^i$ and $p_i$.
\item
Geometrically,
in the Lagrangian formalism, the Lagrange 2-form and the
Lagrangian energy, which appear in  the dynamical equations,
are obtained from the Lagrangian function and the canonical structures of the tangent bundle.
This means that the dynamical information is present in both the 
Lagrangian  2-form and the Lagrangian energy.

On the contrary, in the canonical Hamiltonian formalism,
the corresponding  geometric elements appearing
in  the dynamical equations are the canonical $2$-form
of the cotangent bundle and the Hamiltonian function;
the first of them contains only the geometric information
and the second one only the dynamical  information.
\item
Finally,  the characteristics of the canonical Hamiltonian formalism
are suitable in order to make the quantization of the physical system and,
in particular, to implement the so-called {\sl geometric quantization procedure}
(see, for instance, \cite{Bl-73,mfgq-98,Ga-83,Gos-2006,GS-77,Ki-gq,SW-76,Sn-80,So-ssd,Tu-85,Wo-80}).
\eit

In conclusion, a {\sl Lagrangian dynamical system} 
can be thought as a triple $(\Tan Q,\Omega_{\Lag},E_{\Lag})$
in the Lagrangian formalism, or
as a triple $(\Tan^* Q,\Omega,{\rm h})$ (or $(P,\Omega_0,{\rm h}_0)$),
in the canonical Hamiltonian formalism.
In both cases,  the phase spaces are differential manifolds
endowed with closed differential 2-forms (symplectic or presymplectic,
depending on the regularity of the system).

 \subsection{Geometric Hamilton--Jacobi theory}
\label{firstsec}

One of the most relevant features of the study of Hamiltonian systems is the
{\sl Hamilton--Jacobi theory}, which gives a way to integrate 
 Hamilton equations (and systems of first-order ordinary differential equations in general).
The classical theory is based on using canonical transformations
\cite{Ar-89,JS-98,Ga-70,LL-76,SC-71} and is also extended to singular systems \cite{Gomis1} 
or higher-order dynamics \cite{art:Constantelos84}.
The geometric description of the standard theory and other  geometric descriptions (for autonomous Hamiltonian systems)
is well-established in several texts, as \cite{AM-78,Ar-89,JS-98,BT-80,LM-sgam,MMMcq}.
Following the ideas stated in  \cite{AM-78,KV-1993},
the Hamilton--Jacobi theory has been formulated, recently, in a new more general geometric perspective
both for the Lagrangian and the Hamiltonian formalisms, 
for autonomous and nonautonomous mechanical systems \cite{CGMMR-06}. 
This framework has been extended to many other situations; namely,
to singular Lagrangian and singular Hamiltonian systems \cite{LMV-12,LMV-13,LOS-12},
holonomic and nonholonomic mechanics
\cite{BFS-14,HJTeam2,leones2,leones1,LOS-12,blo,OFB-11},
higher-order dynamical systems \cite{CLRP-13,CLRP-14},
control theory \cite{BLMMM-12,Wang2},
systems described using Poisson manifolds \cite{LMV-13,GP-2016} 
and Lie algebroids \cite{BMMP-10,LS-12},
dissipative systems described using contact manifolds \cite{LLLR-2023,dLLM-2021,
GP-2020},
discretization of dynamical systems \cite{BDM-12,OBL-11},
first-order classical field theories \cite{CLMV-14,LMM-09,LMMSV-12,LPRV-2017,DeLeon_Vilarino}
and higher-order field theories \cite{Vi-10,art:Vitagliano12},
partial differential equations in general \cite{Vi-11,Vi-14},
and other geometric applications and generalizations \cite{BLM-12,HJTeam3,GMP-2021,Wang3}.
(A review on this approach and some of its applications is given in \cite{RR-2021}).

Next, we review the foundations of this geometric framework of the Hamilton--Jacobi theory
for the Hamiltonian formalisms of autonomous dynamical systems.

Let $(\Tan^*Q,\omega,{\rm h})$ be a Hamiltonian system and $X_{\rm h}\in\vf(\Tan^*Q)$
the Hamiltonian vector field.

\begin{definition}
\label{genHJh}
The  \textbf{generalized Hamiltonian Hamilton--Jacobi problem}
consists in finding a vector field $X\in\vf(Q)$ and
a $1$-form $\alpha\in\df^1(Q)$
such that, if $\gamma\colon\Real\to Q$ is an integral curve
of $X$, then $\alpha\circ\gamma\colon\Real\to\Tan^*Q$
is an integral curve of $X_{\rm h}$; that is,
if $X\circ\gamma=\dot\gamma$, then
$\dot{\overline{\alpha\circ\gamma}}=X_{\rm h}\circ(\alpha\circ\gamma)$.
The pair $(X,\alpha)$ is said to be a  \textbf{solution to the generalized Hamiltonian Hamilton--Jacobi problem}.
\end{definition}

We have the following diagram:
$$
\xymatrix{
\ & \Tan Q \ar@/_1.5pc/[rr]_{\Tan\alpha}  & \ & \ar[ll]_{\Tan\pi_Q} \Tan(\Tan^*Q) \\
\ & \ & \ & \ \\
\Real  \ar[r]_{\gamma} \ & Q \ar[uu]^{X} \ar@/_1.5pc/[rr]_{\alpha} & \ & \ar[ll]_{\pi_Q} \Tan^*Q \ar[uu]_{X_{\rm h}}
}
$$

Solutions to the generalized Hamiltonian Hamilton--Jacobi problem are characterized as follows:

\begin{teor}
\label{HJhcns}
The following statements are equivalent:
\begin{enumerate}
\item
The pair $(X,\alpha)$ is a solution to the generalized Hamiltonian Hamilton--Jacobi problem.
\item 
The vector fields $X$ and $X_{\rm h}$ are $\alpha$-related; that is,
$X_{\rm h}\circ\alpha=\Tan\alpha\circ X$.

Therefore
$X=\Tan\pi_Q\circ X_{\rm h}\circ\alpha$ and, as a consequence,
the integral curves of $X_{\rm h}$ with initial conditions
in $\mathrm{Im}\,\alpha$ project onto the integral curves of 
$X$.

$X$ is called the \textbf{vector field associated with the form $\alpha$}.
\item
The submanifold ${\rm Im}\,\alpha$ of $\Tan^*Q$
is invariant by $X_{\rm h}$
(that is, $X_{\rm h}$ is tangent to ${\rm Im}\,\alpha$).
\item
$\inn(X)\d\alpha=-\d(\alpha^*H)$.
\end{enumerate}
\end{teor}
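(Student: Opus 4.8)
The plan is to establish the cycle of equivalences (1) $\Leftrightarrow$ (2) $\Leftrightarrow$ (3) and (2) $\Leftrightarrow$ (4), since (2) provides the most convenient intermediate formulation. The starting point for (1) $\Leftrightarrow$ (2) is the standard fact about related vector fields: if $\gamma\colon\Real\to Q$ is an integral curve of $X$, then $\alpha\circ\gamma$ is an integral curve of $X_{\rm h}$ for \emph{every} such $\gamma$ if and only if $X_{\rm h}\circ\alpha=\Tan\alpha\circ X$. Concretely, $\dot{\overline{\alpha\circ\gamma}}=\Tan\alpha\circ\dot\gamma=\Tan\alpha\circ X\circ\gamma$, while the requirement is $\dot{\overline{\alpha\circ\gamma}}=X_{\rm h}\circ\alpha\circ\gamma$; comparing these along all integral curves (which foliate $Q$) gives the $\alpha$-relatedness. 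Then, composing $X_{\rm h}\circ\alpha=\Tan\alpha\circ X$ with $\Tan\pi_Q$ and using $\pi_Q\circ\alpha=\mathrm{Id}_Q$, hence $\Tan\pi_Q\circ\Tan\alpha=\mathrm{Id}_{\Tan Q}$, yields $X=\Tan\pi_Q\circ X_{\rm h}\circ\alpha$, which recovers $X$ from $X_{\rm h}$ and $\alpha$.

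For (2) $\Leftrightarrow$ (3): the submanifold $\mathrm{Im}\,\alpha\subset\Tan^*Q$ is the image of the embedding $\alpha\colon Q\to\Tan^*Q$, so its tangent space at $\alpha(q)$ is exactly $\mathrm{Im}\,\Tan_q\alpha$. The field $X_{\rm h}$ is tangent to $\mathrm{Im}\,\alpha$ precisely when $X_{\rm h}(\alpha(q))\in\mathrm{Im}\,\Tan_q\alpha$ for every $q$; since $\pi_Q\circ\alpha=\mathrm{Id}$, such a vector is necessarily $\Tan_q\alpha$ applied to its own $\Tan\pi_Q$-projection, i.e. $X_{\rm h}\circ\alpha=\Tan\alpha\circ(\Tan\pi_Q\circ X_{\rm h}\circ\alpha)$. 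Setting $X:=\Tan\pi_Q\circ X_{\rm h}\circ\alpha$ this is exactly statement (2); conversely (2) gives tangency immediately. The only subtlety here is checking that $\mathrm{Im}\,\alpha$ is indeed an embedded submanifold, which holds because $\alpha$ is a section of $\pi_Q$ and hence an embedding.

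The step (2) $\Leftrightarrow$ (4) is the genuinely computational one and I expect it to be the main obstacle, since it requires unwinding the intrinsic Hamilton equation $\inn(X_{\rm h})\omega=\d{\rm h}$ on $\Tan^*Q$ and pulling back by $\alpha$. The idea is: $X$ and $X_{\rm h}$ are $\alpha$-related if and only if, for the tautological $1$-form $\Theta$ and $\omega=-\d\Theta$, one has $\alpha^*(\inn(X_{\rm h})\omega)=\inn(X)(\alpha^*\omega)$ — but this last identity needs care because $\inn$ and pullback do not commute unless the fields are related, so the argument must run the other way: assume (2), then compute $\alpha^*\d{\rm h}=\alpha^*\inn(X_{\rm h})\omega=\inn(X)\alpha^*\omega=\inn(X)(-\d\alpha^*\Theta)$, and use the fundamental property of the tautological form, namely $\alpha^*\Theta=\alpha$ (viewing $\alpha$ as a $1$-form on $Q$), to get $\alpha^*\omega=-\d\alpha$. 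This gives $\d(\alpha^*{\rm h})=-\inn(X)\d\alpha$, i.e. statement (4). For the converse, one uses the non-degeneracy of $\omega$ (the system is regular) together with the fact, established earlier via Lemma~\ref{auxil}, that contracting against a $2$-form that is non-degenerate determines a vector field uniquely: if (4) holds then the vector field $\Tan\alpha\circ X$ along $\mathrm{Im}\,\alpha$ and the field $X_{\rm h}\circ\alpha$ both satisfy $\inn(\,\cdot\,)\omega=\d{\rm h}$ after pullback, and one deduces equality by a dimension count on $\Tan_{\alpha(q)}(\Tan^*Q)$ using that $\Tan_q\alpha$ is injective and $\mathrm{Im}\,\Tan_q\alpha$ is a Lagrangian subspace (it is the tangent space to $\mathrm{Im}\,\alpha$, which, when $\alpha$ is closed, is Lagrangian — and when $\alpha$ is not closed one works locally). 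The cleanest route is probably to verify (4) $\Rightarrow$ (3) directly: $\inn(X)\d\alpha=-\d(\alpha^*{\rm h})$ together with $\alpha^*\omega=-\d\alpha$ shows that along $\mathrm{Im}\,\alpha$ the intrinsic Hamilton equation is solved by the pushforward of $X$, and by uniqueness of solutions on the symplectic manifold this pushforward must coincide with $X_{\rm h}$, giving tangency. Assembling the implications (1)$\Rightarrow$(2)$\Rightarrow$(3)$\Rightarrow$(2)$\Rightarrow$(1) and (2)$\Leftrightarrow$(4) closes the equivalence.
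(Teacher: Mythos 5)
Your handling of (1)$\Leftrightarrow$(2), (2)$\Leftrightarrow$(3) and of the direct implication (2)$\Rightarrow$(4) is correct and essentially identical to the paper's argument: $\alpha$-relatedness from the defining condition on integral curves, $X=\Tan\pi_Q\circ X_{\rm h}\circ\alpha$ via $\pi_Q\circ\alpha=\mathrm{Id}_Q$, tangency to $\mathrm{Im}\,\alpha$ read off from $\Tan_{\alpha(q)}(\mathrm{Im}\,\alpha)=\mathrm{Im}\,\Tan_q\alpha$, and then $\alpha^*\Theta=\alpha$, $\alpha^*\Omega=-\d\alpha$, together with the fact that pullback and contraction commute for $\alpha$-related fields, to get $\inn(X)\d\alpha=-\d(\alpha^*{\rm h})$.

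The gap is in the converse (4)$\Rightarrow$(2). Both routes you sketch fail as stated. First, $\mathrm{Im}\,\Tan_q\alpha$ is isotropic (hence Lagrangian by dimension) precisely when $\alpha^*\Omega=-\d\alpha=0$, i.e.\ when $\alpha$ is closed; this is \emph{not} assumed in the generalized problem, and ``working locally'' does not help, since closedness is itself a local condition. Second, the ``uniqueness of the Hamiltonian vector field'' argument does not apply: condition (4) together with $\alpha^*\Omega=-\d\alpha$ only gives $\Omega_{\alpha(q)}\bigl(X_{\rm h}(\alpha(q))-\Tan_q\alpha(X_q),\Tan_q\alpha(Y_q)\bigr)=0$ for all $Y_q\in\Tan_qQ$, i.e.\ the pushforward of $X$ satisfies the Hamilton equation only when tested against vectors tangent to $\mathrm{Im}\,\alpha$, not on all of $\Tan_{\alpha(q)}(\Tan^*Q)$, so you cannot conclude it equals $X_{\rm h}\circ\alpha$ by uniqueness. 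The paper closes this gap differently: taking $X=\Tan\pi_Q\circ X_{\rm h}\circ\alpha$ (the vector field associated with $\alpha$), the difference $Z_{\rm h}=X_{\rm h}\circ\alpha-\Tan\alpha\circ X$ is $\pi_Q$-vertical; since the vertical distribution ${\rm V}(\pi_Q)$ is a Lagrangian distribution of $(\Tan^*Q,\Omega)$ (a canonical fact, needing no hypothesis on $\alpha$), $\Omega(Z_{\rm h},\cdot)$ vanishes on vertical vectors as well as on $\Tan(\mathrm{Im}\,\alpha)$, and the splitting $\Tan_{\alpha(q)}(\Tan^*Q)=\Tan_{\alpha(q)}(\mathrm{Im}\,\alpha)\oplus{\rm V}_{\alpha(q)}(\pi_Q)$ together with the non-degeneracy of $\Omega$ forces $Z_{\rm h}=0$. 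In short, the Lagrangian subspace you need is the vertical one, not the tangent space to $\mathrm{Im}\,\alpha$; replacing your dimension count by this verticality argument repairs the proof.
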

\begin{proof}
The equivalence between 1 and 2 is as follows:
If $(X,\alpha)$ satisfies the condition in Definition \ref{genHJh} then, 
for every integral curve of $X$,
$\gamma\colon\Real\to Q$, by definition $X\circ\gamma=\dot\gamma$,
then
$$
X_{\rm h}\circ\alpha\circ\gamma=\dot{\overline{\alpha\circ\gamma}}=
T\alpha\circ\dot\gamma=\Tan\alpha\circ X\circ\gamma \ ;
$$
and hence $X_{\rm h}\circ\alpha=\Tan\alpha\circ X$.
The proof of the converse is immediate.

From here, composing both members of this equality with $\Tan\pi_Q$
and taking into account that $\pi_Q\circ\alpha={\rm Id}_Q$,
we obtain that $X=\Tan\pi_Q\circ X_{\rm h}\circ\alpha$.
As a consequence,
we conclude that the $\pi_Q$-projection of the integral curves of $X_{\rm h}$
on ${\rm Im}\,\alpha$ are the integral curves of $X$.

For the equivalence between 2 and 3 we have that,
if $X_{\rm h}\circ\alpha=\Tan\alpha\circ X$, then 
$X_{\rm h}(\alpha({\rm q}))=\Tan\alpha(X({\rm q}))$, for every ${\rm q}\in Q$
and therefore $X_{\rm h}$ is tangent to ${\rm Im}\,\alpha$.
Conversely, if ${\rm Im}\,\alpha$ is invariant by $X_{\rm h}$ then
$X_{\rm h}(\alpha({\rm q}))\in\Tan_{\alpha({\rm q})}{\rm Im}\,\alpha$,
which implies that there exists ${\rm u}\in\Tan_{\rm q}Q$ such that
$X_{\rm h}(\alpha({\rm q}))=\Tan_{\rm q}\alpha({\rm u})$. Defining
$X$ by $\Tan_{\rm q}\alpha(X_{\rm q})=X_{\rm h}(\alpha({\rm q}))$,
we have that $X$ is differentiable, since $X=\Tan\pi_Q\circ X_{\rm h}\circ\alpha$,
and thus $X$ is a vector field in $Q$ satisfying that
$X_{\rm h}\circ\alpha=\Tan\alpha\circ X$, and then $\alpha$ is a solution to the
generalized Hamilton--Jacobi problem.

If $(X,\alpha)$ is a solution to the generalized Hamilton--Jacobi problem, by item 2 we have that $X=\Tan\pi_Q\circ X_{\rm h}\circ\alpha$
and 

Finally, the proof of the equivalence with 4 is the following:
from the Hamiltonian equation \eqref{elmh} we obtain that
$$
\alpha^*\inn(X_{\rm h})\omega=\alpha^*\d{\rm h}=\d(\alpha^*{\rm h}) \ .
$$
As $\Theta$ is the canonical form of $\Tan^*Q$, then
$\alpha^*\Theta=\alpha$, and
\beq
\alpha^*\Omega = -\alpha^*\d\Theta=-\d(\alpha^*\Theta)=-\d\alpha \ ,
\label{00}
\eeq
and, as $X$ and $X_{\rm h}$ are $\alpha$-related, we have
$$
\alpha^*\inn(X_{\rm h})\omega=\inn(X)\alpha^*\Omega=-\inn(X)\d\alpha
\ ,
$$
which yields condition 4. 
For the converse, first define
$Z_{\rm h}=X_{\rm h}\circ\alpha-\Tan\alpha\circ X\colon Q\to \Tan(\Tan^*Q)$,
which is a vector field along $\alpha$;
then we have to prove that $Z_{\rm h}=0$.
First, we have that $Z_{\rm h}$ is $\pi_Q$-vertical since,
as $\pi_Q\circ\alpha={\rm Id}_Q$,
\beann
\Tan\pi_Q\circ Z_{\rm h} &=& \Tan\pi_Q\circ(X_{\rm h}\circ\alpha-\Tan\alpha\circ X)=
\Tan\pi_Q\circ(X_{\rm h}\circ\alpha-\Tan\alpha\circ\Tan\pi_Q\circ X_{\rm h}\circ\alpha)
\\ &=&
\Tan\pi_Q\circ X_{\rm h}\circ\alpha-\Tan\pi_Q\circ X_{\rm h}\circ\alpha=0 \ .
\eeann
From equation \eqref{elmh}
and the hypothesis, as $\alpha^*\Omega=-\d\alpha$, we obtain
\beann
\alpha^*\inn(X_{\rm h})\Omega = \alpha^*\d{\rm h} &=& d(\alpha^*{\rm h})
\ ,
\\
\inn(X)\alpha^*\Omega=-\inn(X)\d\alpha &=& \d(\alpha^*{\rm h}) \ ,
\eeann
hence $\alpha^*\inn(X_{\rm h})\Omega-\inn(X)\alpha^*\Omega=0$.
Therefore, for every ${\rm q}\in Q$ and $Y_{\rm q}\in\Tan_{\rm q}Q$, we have
\beann
0 &=& (\alpha^*\inn(X_{\rm h})\Omega-\inn(X)\alpha^*\Omega)_{\rm q}(Y_{\rm q})=
\Omega_{\alpha({\rm q})}(X_{\rm h}(\alpha({\rm q})),\Tan_{\rm q}\alpha(Y_{\rm q}))-
\Omega_{\alpha({\rm q})}(\Tan_{\rm q}\alpha(X_{\rm q}),\Tan_{\rm q}\alpha(Y_{\rm q}))
\\
&=& \Omega_{\alpha({\rm q})}(Z_{\rm h}({\rm q}),\Tan_{\rm q}\alpha(Y_{\rm q})) \ .
\eeann
Furthermore, as ${\rm V}(\pi_Q)$ is a Lagrangian distribution in 
$(\Tan^*Q,\Omega)$, for every $\pi_Q$-vertical vector field $V\in\vf(\Tan^*Q)$ 
we have that
$\Omega_{\alpha({\rm q})}(Z_{\rm h}({\rm q}),V(\alpha({\rm q})))=0$;
but $\Tan_{\alpha({\rm q})}\Tan^*Q=
\Tan_{\alpha({\rm q})}({\rm Im}\,\alpha)\oplus{\rm V}_{\alpha({\rm q})}(\pi_Q)$,
and thus we have proved that
$$
\Omega_{\alpha({\rm q})}(Z_{\rm h}({\rm q}),Z(\alpha({\rm q})))=0 \ ;
$$
for every $Z\in\vf(\Tan^*Q)$. Then we conclude that $Z_{\rm h}=0$,
since $\Omega$ is non-degenerate;
or what is equivalent,
$X$ and $X_{\rm h}$ are $\alpha$-related, and thus
$(X,\alpha)$ is a solution to the generalized Hamilton--Jacobi problem
\\ \qed \end{proof}

To solve the generalized Hamilton--Jacobi problem is, in general, a very difficult task. 
Then, it is usual to state a less general version of the problem
which leads to the standard version of the Hamilton--Jacobi problem:

\begin{definition}
\label{HJh}
The \textbf{Hamiltonian Hamilton--Jacobi problem}
consists in finding a closed $1$-form $\alpha\in\df^1(\Tan^*Q)$
which is a solution to the generalized Hamiltonian Hamilton--Jacobi problem.
This form  $\alpha$ is said to be a \textbf{solution to the Hamiltonian Hamilton--Jacobi problem}.
\end{definition}

As $\d\alpha=0$, for every point in $Q$,
there exists a function $S$ in  a neighbourhood $U\subset Q$
such that $\alpha=\d S$.
Then, we say that $S$ is a {\sl \textbf{local generating function}} of the solution $\alpha$.

\begin{teor}
\label{cnsh}
The following statements are equivalent:
\begin{enumerate}
\item
The form  $\alpha\in\df^1(Q)$ is a solution to the Hamiltonian Hamilton--Jacobi problem.
\item
$\mathrm{Im}\,\alpha$ is a Lagrangian submanifold of
$\Tan^*Q$, which is invariant by $X_{\rm h}$,
and $S$ is a local generating function of this Lagrangian submanifold.
\item
The condition  $\d(\alpha^*{\rm h})=0$ holds
or, what is equivalent, the function
${\rm h}\circ\d S\colon Q\to\Real$ is locally constant.
\end{enumerate}
\end{teor}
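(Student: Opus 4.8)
The plan is to reduce everything to the characterization of solutions of the \emph{generalized} Hamiltonian Hamilton--Jacobi problem already proved in Theorem \ref{HJhcns}, exploiting that the only feature distinguishing the Hamiltonian problem from the generalized one is the requirement that $\alpha$ be closed. Two preliminary observations do most of the bookkeeping. First, once $\alpha$ is fixed, Theorem \ref{HJhcns} forces the associated vector field to be $X=\Tan\pi_Q\circ X_{\rm h}\circ\alpha$; hence ``$\alpha$ is a solution to the Hamiltonian Hamilton--Jacobi problem'' means precisely ``$\alpha$ is closed and the pair $(X,\alpha)$ solves the generalized problem''. Second, from the examples of Lagrangian submanifolds discussed above, for $\alpha\in\df^1(Q)$ the submanifold $\mathrm{Im}\,\alpha\subset\Tan^*Q$ is Lagrangian if, and only if, $\alpha$ is closed; and if locally $\alpha=\d S$, then $S$ is by definition a local generating function of the Lagrangian submanifold $\mathrm{Im}\,\alpha=\mathrm{Im}\,\d S$.

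With these remarks, the equivalence $1\Leftrightarrow 2$ is immediate. If $\alpha$ solves the Hamiltonian problem then $\alpha$ is closed, so $\mathrm{Im}\,\alpha$ is Lagrangian; being also a solution of the generalized problem, by the equivalence of items 1 and 3 of Theorem \ref{HJhcns} it is invariant by $X_{\rm h}$, and writing $\alpha=\d S$ locally supplies the generating function. Conversely, if $\mathrm{Im}\,\alpha$ is a Lagrangian submanifold invariant by $X_{\rm h}$, then $\alpha$ is closed, and again by items 1 and 3 of Theorem \ref{HJhcns} the pair $(X,\alpha)$ solves the generalized problem; together with closedness this says $\alpha$ solves the Hamiltonian problem.

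For $1\Leftrightarrow 3$ I would invoke item 4 of Theorem \ref{HJhcns}: $(X,\alpha)$ solves the generalized problem if, and only if, $\inn(X)\d\alpha=-\d(\alpha^*{\rm h})$. When $\alpha$ is closed, $\d\alpha=0$, so this collapses to $\d(\alpha^*{\rm h})=0$; and with $\alpha=\d S$ locally one has $\alpha^*{\rm h}={\rm h}\circ\d S$ as a function on $Q$, so $\d(\alpha^*{\rm h})=0$ is the same as ${\rm h}\circ\d S$ being locally constant. Combining this with the first observation (that solving the Hamiltonian problem equals closedness plus solving the generalized problem) yields the stated equivalence.

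There is no substantial obstacle: the analytic content lies in Theorem \ref{HJhcns}, and what remains is careful bookkeeping --- making explicit that $X$ is determined by $\alpha$, isolating ``closed'' as the extra hypothesis of the Hamiltonian problem, and reading the clause on $S$ in statement 2 as a local restatement of $\alpha=\d S$. If a self-contained argument for $2\Leftrightarrow 3$ were wanted, one could instead restrict the Hamiltonian equation \eqref{elmh} to the Lagrangian submanifold $\mathrm{Im}\,\alpha$, use the vanishing of the pullback of $\Omega$ to obtain $\d(\iota^*{\rm h})=0$ when $X_{\rm h}$ is tangent to it, and translate this back via the diffeomorphism $\alpha\colon Q\to\mathrm{Im}\,\alpha$; but routing through Theorem \ref{HJhcns} is cleaner.
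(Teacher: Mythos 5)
Your proposal is correct and follows essentially the same route as the paper: reduce everything to Theorem \ref{HJhcns} and Definition \ref{HJh}, use the identity $\alpha^*\Omega=-\d\alpha$ to identify closedness of $\alpha$ with ${\rm Im}\,\alpha$ being Lagrangian, and read off $\d(\alpha^*{\rm h})=0$ from condition 4 of Theorem \ref{HJhcns} once $\d\alpha=0$. You merely spell out the converse directions and the role of $S$ more explicitly than the paper does, which is fine but not a different argument.
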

\begin{proof}
These statements are consequences of Theorem \ref{HJhcns} and Definition \ref{HJh}.
In fact, if $\alpha$ is a solution to the Hamilton--Jacobi problem, 
as a consequence of \eqref{00},
$d\alpha=0$ is equivalent to $\alpha^*\Omega=0$. Then
${\rm Im}\,\alpha$ is a Lagrangian submanifold of $(\Tan^*Q,\Omega)$, 
which is contained in a level set of
${\rm h}$, because the condition $\inn(X)\d\alpha=-\d(\alpha^*{\rm h})$ 
implies that $d(\alpha^*{\rm h})=0$.
Notice that $\dim\,{\rm Im}\,\alpha=n$ and,
if $j_\alpha\colon{\rm Im}\,\alpha\hookrightarrow \Tan^*Q$
is the natural embedding, we have that $j_\alpha^*\Omega=0$.
\\ \qed \end{proof}

In natural coordinates of $\Tan^*Q$, condition (3) is the classical form of the
Hamiltonian Hamilton--Jacobi equation:
\beq
H\left(q^i,\derpar{S}{q^i}\right)= E\ (ctn.) \ .
\label{tres}
\eeq

Until now, we have only considered particular solutions $\alpha$
to the (generalized) Hamilton--Jacobi problem, which are given by particular solutions
to the partial differential equation \eqref{tres}.
Nevertheless, we are also interested in the general solution. Then:

\begin{definition}
\label{completeHJ}
Let $\Lambda\subseteq\Real^n$.
A family of  solutions $\{ \alpha_\lambda; \lambda\in\Lambda\}$
(which depends on $n$ parameters
$\lambda\equiv (\lambda_1,\ldots,\lambda_n)\in\Lambda$)
is a \textbf{complete solution} to the Hamilton--Jacobi problem 
if the map 
$$
\begin{array}{ccccc}
\Phi & \colon &Q\times\Lambda & \longrightarrow & \Tan^*Q \\
 & & (q,\lambda) & \mapsto & \alpha_\lambda(q)
\end{array}
$$
is a local diffeomorphism.
\end{definition}

\begin{remark}{\rm 
\bit
\item
Given a complete solution $\{ \alpha_\lambda; \lambda\in\Lambda\}$,
since $\d\alpha_\lambda=0$, for every $\lambda\in\Lambda$, 
there is a family of functions $\{ S_\lambda\}$ defined in 
open sets $U_\lambda\subset Q$ such that $\alpha_\lambda=\d S_\lambda$.
Therefore, we have a function
$$
\begin{array}{ccccc}
{\cal S} & \colon &\bigcap U_\lambda\times\Lambda\subset Q\times\Lambda & \longrightarrow & \Real \\
 & & (q,\lambda) & \mapsto & S_\lambda(q)
\end{array}
$$
which is locally defined, and is called a {\sl \textbf{local generating function}} of the 
complete solution $\{ \alpha_\lambda; \lambda\in\Lambda\}$.
\item
Every complete solution defines a Lagrangian foliation in $\Tan^*Q$
which is  transverse to the fibers,
and such that $X_{\rm h}$ is tangent to the leaves.
This foliation is locally defined by a family of functions which are the components of a map
$F\colon \Tan^*Q\stackrel{\Phi^{-1}}{\longrightarrow} Q\times\Lambda 
{\longrightarrow} \Lambda\subset\Real^n$.
Furthermore, these functions are a set of constants of motion of $X_{\rm h}$.

Conversely, from a set  $n$ first integrals $f_1,\ldots,f_n$ 
of $X_{\rm h}$ in involution, such that $\d f_1\wedge\ldots\wedge \d f_n\not=0$;
we can define a $\pi$-transversal Lagrangian foliation of $\Tan^*Q$
taking $f_i=\lambda_i$, with $\lambda_i\in\Real$,
and in this way we obtain a local complete solution $\{\alpha_\lambda,\lambda\in\Lambda\}$.
Then, from equations $f_i=\lambda_i$, we can  locally isolate
$p_i=p_i(q,\lambda)$, replace them in the expression of $X_{\rm h}$
and finally project to the basis, then obtaining the family of vector fields 
$\{X_\lambda\}$ associated with the local complete solution.
From a complete solution $\{ \alpha_\lambda;\lambda\in\Lambda\}$,
all the integral curves of  $X_{\rm h}$ are obtained starting from
the integral curves of the vector fields $\{ X_\lambda\}$
associated to this complete solution.
\item
This geometric framework for the Hamilton--Jacobi theory 
can be also stated in a very natural way for the Lagrangian formalism
(see \cite{CGMMR-06}).
\eit
}\end{remark}

The ``classical'' Hamilton--Jacobi problem for a (regular) Hamiltonian system
$(\Tan^*Q,\Omega,{\rm h})$ consists in 
obtaining a canonical transformation which leads 
the system to equilibrium \cite{Ar-89,JS-98,LL-76,SC-71}.
This transformation is given by a generating function, which is  
the solution to the {\sl Hamilton--Jacobi equation}.
From a geometric point of view, this canonical transformation
is associated with a foliation in the phase space of the system, $\Tan^*Q$,
which has the following properties:
it is invariant by the dynamics, transverse to the canonical projection of the cotangent bundle,
and is Lagrangian with respect to the
canonical symplectic structure of $\Tan^*Q$.
Then, the restriction of the Hamiltonian vector field $X_{\rm h}\in\vf(\Tan^*Q)$ to
each leaf  $S_\lambda$ of this foliation  projects onto a vector field $X_\lambda\in\vf(Q)$,
and the integral curves of $X_{\rm h}$ and  $X_\lambda$ are one-to-one related.
In this way,  all the dynamical trajectories are recovered from the integral curves
of all these vector fields $\{ X_\lambda\}$.
Thus, the geometric Hamilton--Jacobi problem consists in finding
this foliation and  the vector fields $\{ X_\lambda\}$.

Bearing this in mind, the  relation between the classical and the geometric Hamilton--Jacobi theories
is established through the equivalence of complete solutions 
and canonical transformations (see \cite{RR-2021,Vi-14}).

\begin{teor}
A complete solution $\{ \alpha_\lambda; \lambda\in\Lambda\}$
to the Hamilton--Jacobi problem provides a canonical transformation 
$\Phi\colon \Tan^*Q\to \Tan^*Q$ leading the system to equilibrium,
and conversely.
\end{teor}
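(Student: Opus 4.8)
The plan is to prove the two implications by matching, in both directions, the analytic data of a complete solution --- a generating function ${\cal S}(q,\lambda)$ --- with the geometric data of a canonical transformation --- a Lagrangian foliation of $\Tan^*Q$ that is invariant by $X_{\rm h}$ and transverse to the fibres of $\pi_Q$. Since completeness (Definition \ref{completeHJ}) and the existence of generating functions and of Darboux coordinates are local notions, all statements are meant on suitable open sets; and, consistently with the discussion preceding the theorem, ``leading the system to equilibrium'' will be read as: in the new canonical coordinates the transformed Hamiltonian depends only on the new momenta, so that these are constants of motion and $X_{\rm h}$ is tangent to their level sets.

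For the direct implication I would start from a complete solution $\{\alpha_\lambda;\lambda\in\Lambda\}$ with local generating function ${\cal S}\colon Q\times\Lambda\to\Real$, so that $\alpha_\lambda=\d S_\lambda$ with $S_\lambda={\cal S}(\cdot,\lambda)$ and, in coordinates, $\Phi(q,\lambda)=\big(q^i,\partial{\cal S}/\partial q^i(q,\lambda)\big)$. Then completeness of $\Phi$ amounts exactly to $\det\big(\partial^2{\cal S}/\partial q^i\partial\lambda_j\big)\not=0$, which permits solving $p_i=\partial{\cal S}/\partial q^i$ for $\lambda=\lambda(q,p)$ and introducing the conjugate variables $\mu^i:=\partial{\cal S}/\partial\lambda_i$; this produces a local canonical transformation $\Psi\colon(q^i,p_i)\mapsto(\mu^i,\lambda_i)$ of $\Tan^*Q$. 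From the identity $\d{\cal S}=p_i\,\d q^i+\mu^i\,\d\lambda_i$ one obtains $p_i\,\d q^i-\lambda_i\,\d\mu^i=\d({\cal S}-\mu^i\lambda_i)$, so by the characterization of canonical transformations through (Poincar\'e) generating functions, $\Psi$ is indeed canonical with valence $1$, ${\cal S}-\mu^i\lambda_i$ being a generating function of it. Finally, since each $\alpha_\lambda$ solves the Hamilton--Jacobi problem, Theorem \ref{cnsh} gives that ${\rm h}\circ\d S_\lambda$ is locally constant; equivalently, $K:={\rm h}$ expressed in the coordinates $(\mu,\lambda)$ is a function of $\lambda$ alone, whence $\dot\lambda_i=0$ and $\dot\mu^i=\partial K/\partial\lambda_i$. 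Thus $\Psi$ is a canonical transformation leading the system to equilibrium, whose invariant Lagrangian foliation is precisely the family of leaves $\{\lambda={\rm const}\}={\rm Im}\,\alpha_\lambda$.

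For the converse I would take a canonical transformation $\Psi$ of $\Tan^*Q$ bringing the system to equilibrium, and let $(\mu^i,\lambda_i)=\Psi(q^i,p_i)$ be the associated Darboux coordinates in which $K:={\rm h}\circ\Psi^{-1}$ depends only on $\lambda$. The submanifolds $N_\lambda:=\{\lambda_i={\rm const}\}$ then form a Lagrangian foliation (the $\d\lambda_i$ annihilate each $N_\lambda$ and $(\mu,\lambda)$ are canonical), invariant by $X_{\rm h}$ (because $\dot\lambda_i=-\partial K/\partial\mu^i=0$), and --- this being part of what ``leading to equilibrium'' means --- transverse to the fibres of $\pi_Q$. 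Hence each leaf $N_\lambda$ is locally a section of $\pi_Q$ and, being a Lagrangian submanifold transverse to the fibres, it is the graph of a closed $1$-form $\alpha_\lambda\in\df^1(Q)$, as in the examples of Lagrangian submanifolds of $\Tan^*Q$. By Theorem \ref{cnsh}, since ${\rm Im}\,\alpha_\lambda=N_\lambda$ is a Lagrangian submanifold invariant by $X_{\rm h}$, each $\alpha_\lambda$ solves the Hamiltonian Hamilton--Jacobi problem; and the family $\{\alpha_\lambda;\lambda\in\Lambda\}$ is complete because on each leaf $N_\lambda$ both the restriction of $\pi_Q$ (by transversality) and the restriction of the coordinate $\mu$ (complementary to $\lambda$) are local diffeomorphisms onto their images, so $\Phi\colon(q,\lambda)\mapsto\alpha_\lambda(q)$ is a local diffeomorphism.

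The routine parts --- the coordinate identity for $\d{\cal S}$, the sign bookkeeping in the generating-function relation, and the local solvability of $p=\partial{\cal S}/\partial q$ for $\lambda$ --- are straightforward. The main obstacle will be to make the translation of ``equilibrium'' precise: one must argue that the transversality of the foliation $\{\lambda={\rm const}\}$ to the fibres of $\pi_Q$, together with its Lagrangian character, is exactly what is encoded in $\Psi$ being presentable by a generating function ${\cal S}(q,\lambda)$ of Hamilton--Jacobi type, so that the analytic and the geometric pictures are genuinely the same. Once this is pinned down, both implications follow from Theorem \ref{cnsh} together with the characterization of canonical transformations by generating functions.
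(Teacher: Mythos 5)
Your proof is correct, and for the direct implication it is essentially the paper's own argument: the generating function ${\cal S}(q,\lambda)$ of the complete solution is reread as the generating function of a canonical transformation (the paper identifies $\lambda^i$ with new coordinates $\tilde q^i$ and invokes the Weinstein generating-function description of ${\rm graph}\,\Phi\subset\Tan^*Q\times\Tan^*Q$, while you verify canonicity by hand through the exactness of $p_i\,\d q^i-\lambda_i\,\d\mu^i=\d({\cal S}-\mu^i\lambda_i)$ together with the nondegeneracy $\det\big(\partial^2{\cal S}/\partial q^i\partial\lambda_j\big)\not=0$, which is precisely the completeness condition), and then equation \eqref{tres}, i.e.\ Theorem \ref{cnsh}, gives that the transformed Hamiltonian depends only on the $\lambda$-variables, which is the ``equilibrium'' statement. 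Where you diverge is the converse: the paper simply freezes $\tilde q^i=\lambda^i$ in the generating function ${\cal S}(q,\tilde q)$ of the given transformation and sets $\alpha_\lambda=\d S_\lambda$, whereas you reconstruct the $\alpha_\lambda$ geometrically as the leaves of the invariant Lagrangian foliation $\{\lambda=\mathrm{const}\}$, use transversality to the fibres of $\pi_Q$ to present each leaf as the graph of a closed $1$-form, and then apply Theorem \ref{cnsh}; your completeness check is the routine observation that $(\pi_Q,\lambda)$ inverts $(q,\lambda)\mapsto\alpha_\lambda(q)$ locally. Both converses rest on the same tacit hypothesis — that the transformation admits a generating function of type ${\cal S}(q,\tilde q)$, equivalently that the associated foliation is transverse to $\pi_Q$ — which the informal phrase ``leading the system to equilibrium'' leaves implicit; you flag this explicitly (a point in your favour), and the paper's preceding discussion of the classical Hamilton--Jacobi problem builds in exactly the same assumption, so nothing is missing beyond what the paper itself treats informally.
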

\begin{proof}
Let $\{ \alpha_\lambda; \lambda\in\Lambda\}$ be a complete solution, and
let ${\cal S}$ be a generating function of it
in a neighbourhood of every point of $\Tan^*Q$.
As ${\cal S}={\cal S}(q^i,\lambda^i)$, the set $\lambda^i$ can be identified
with a subset  of coordinates $\lambda^i\equiv\tilde q^i$ in $\Tan^*Q\times\Tan^*Q$, 
and therefore ${\cal S}={\cal S}(q^i,\tilde q^i)$ can be thought as
the local expression of a generating function
of a  local canonical transformation $\Phi$, and hence
of an open set $W$ of the Lagrangian submanifold 
$graph\,\Phi\hookrightarrow \Tan^*Q\times\Tan^*Q$.
When this construction is done in every chart, we obtain the transformation $\Phi$
and the submanifold $graph\,\Phi$.
Finally, since (\ref{tres}) holds for every particular solution $S_\lambda$, we have that
$$
E={\rm h}\left(q^i(\tilde q,\tilde p),\derpar{{\cal S}}{q^i}(q(\tilde q,\tilde p),\tilde q)\right)= 
{\rm h}(\tilde q^i,\tilde p_i) \ .
$$

Conversely, starting from a canonical transformation $\Phi$ and a
generating function ${\cal S}={\cal S}(q^i,\tilde q^i)$;
if we take $\tilde q\equiv (\tilde q^i)=(\lambda^i)\equiv\lambda$, we obtain
a family of functions $\{ S_\lambda\}$ and then we get a
local complete solution $\{\alpha_\lambda=\d S_\lambda; \lambda\in\Lambda\}$
to the Hamiltonian Hamilton--Jacobi problem.
Doing this construction in every chart, we have the complete solution.
This means that, on each local chart of $\Tan^*Q$,
fixing the coordinates $\tilde q^i=\lambda^i$ of a point,
we obtain a local submanifold of $\Tan^*Q$ whose image by $\Phi^{-1}$ is
the image of a local section $\alpha_\lambda\colon Q\to \Tan^*Q$
which constitutes a particular solution to the Hamiltonian Hamilton--Jacobi problem.
\\ \qed \end{proof}


\section{Skinner-Rusk unified Lagrangian-Hamiltonian formalism}
\label{SRuf}

In their seminal articles of 1983 \cite{SR-83,SR-83b}, 
{\it R. Skinner} and {\it R. Rusk} proposed a new geometric framework 
in order to unify the Lagrangian and the Hamiltonian formalisms
of first-order autonomous mechanical systems
into a single one formulation.
This is a simpler and elegant framework
which is particularly suitable for the treatment of singular systems.
Later, this nice formalism was generalized to many other types of physical systems;
such as nonautonomous dynamical systems \cite{BEMMR-2008,CMC-2002,GM-05},
vakonomic and nonholonomic mechanics \cite{CLMM-2002}
control systems \cite{BEMMR-2007,CMZ-10}, 
higher-order mechanics \cite{art:Prieto_Roman11,
art:Prieto_Roman12},  
dissipative systems (first and higher-order contact mechanics)  \cite{LGLMR-2021,LGMMR-2020},
and first-order and higher-order classical field theories
\cite{CLMV-09,CV-2007,LMM-2003,
ELMMR-04,PR-2015,RRS-2005,RRSV-2011,Vitagliano10}.

In this section, we describe the main features of this so-called {\sl Skinner-Rusk} 
or {\sl unified Lagrangian-Hamiltonian formalism} for autonomous dynamical systems.

\subsection{Unified bundle. Unified formalism}

This formalism is developed in the following bundle:

\begin{definition}
The  \textbf{unified bundle} or  \textbf{Pontryagin bundle}
 is ${\cal W}=\Tan Q\times_Q\Tan^*Q$
and has natural projections
$$
 \varrho_1\colon{\cal W}\to\Tan Q \ ,\
 \varrho_2\colon{\cal W}\to\Tan^*Q \ ,\
\varrho_0\colon{\cal W}\to Q \ ,
$$
\end{definition}

Natural coordinates in ${\cal W}$ are $(q^i,v^i,p_i)$.

\begin{definition}
A curve $c\colon\Real\rightarrow{\cal W}$
is \textbf{holonomic} in ${\cal W}$ if
$\varrho_1\circ c\colon\Real\to\Tan Q$ is holonomic.

A vector field $\Gamma \in\vf({\cal W})$  is a \textbf{holonomic vector field} in ${\cal W}$
if its integral curves  are holonomic in ${\cal W}$. 
\end{definition}

The coordinate expressions of holonomic curves and vector fields in  ${\cal W}$ are the following
\beann
c(t)&=&\Big(q^i(t),\frac{dq^i}{d t}(t),p_i(t),\Big) \ , \\
\Gamma&=& v^i\derpar{}{q^i}+F^i\derpar{}{v^i}+G_i\derpar{}{p_i}  \ .
\eeann

\begin{definition}
The unified bundle ${\cal W}$ is endowed with the following canonical structures:
\ben
\item
The {\sl \textbf{coupling function}} is the
map ${\cal C}\colon{\cal W}\to\Real$  defined by
$$
\begin{array}{rclcc} 
{\cal C} &\colon& \Tan Q \times_Q\Tan^*Q & \longrightarrow & \Real \\ 
& &(q,v_q,\xi_q)=(q^i,v^i,p_i)& \longmapsto & \langle v_q \mid \xi_q\ \rangle=v^ip_i
 \end{array} \ .
$$
\item
If $\Theta\in\df^1(\Tan^*Q)$ and $\Omega=-\d\Theta\in\df^2(\Tan^*Q)$
are the canonical forms in $\Tan^*Q$,
the \textbf{canonical forms} in ${\cal W}$ are  
$$
\Theta_{\cal W}:=\varrho_2^*\,\Theta\in\df^1({\cal W}) \quad , \quad
\Omega_{\cal W}:=-\d\Theta_{\cal W}=\varrho_2^*\,\Omega\in\df^2({\cal W}) \ .
$$
\een
\end{definition}

And using the coupling function, we introduce:

\begin{definition}
Given a Lagrangian function $\Lag\in\Cinfty(\Tan Q)$, if
${\mathfrak L}=\varrho_1^*\Lag\in\Cinfty({\cal W})$, the \textbf{Hamiltonian function} is defined as
$$
{\cal H}:={\cal C}-{\mathfrak L}\in\Cinfty({\cal W}) \ .
$$
\end{definition}

The coordinate expressions of these elements are
$$
\Theta_{\cal W}=p_i\d q^i \quad , \quad
\Omega_{\cal W}=\d q^i\wedge\d p_i\quad , \quad
{\cal H}=v^ip_i-{\mathfrak L}(q^i,v^i) \ .
$$

The triple $({\cal W},\Omega_{\cal W},{\cal H})$
is a presymplectic Hamiltonian system since
$\dst\ker\Omega_{\cal W}=\left\langle\derpar{}{v^i}\right\rangle$.
Then, the {\sl\textbf{dynamical problem}} for this system consists in finding $X_{\cal H}\in\vf({\cal W})$
which is a solution to the Hamiltonian equations
\begin{equation}
\label{Whamilton-contact-eqs0}
\inn(X_{\cal H})\Omega_{\cal W}=\d{\cal H} \ ,
\end{equation}
and then the integral curves $c\colon\Real\to{\cal W}$
 of $X_{\cal H}$ are solutions to the equations
\begin{equation}
\label{Whamilton-contact-eqs0b}
\inn(\widetilde c)(\Omega_{\cal W}\circ c)=\d{\cal H}\circ c \ .
\end{equation}

As $({\cal W},\Omega_{\cal W},{\cal H})$ is a presymplectic Hamiltonian system,
these equations are not compatible in ${\cal W}$.
In fact, for an arbitrary vector field in $\vf({\cal W})$,
$$
X_{\cal H} =  f^i\derpar{}{q^i}+F^i\derpar{}{v^i}+G_i\derpar{}{p_i} \ ,
$$ 
equations \eqref{Whamilton-contact-eqs0} give
\beq
 f^i= v^i \quad , \quad
 G_i= \displaystyle \derpar{\Lag}{q^i} \quad ,  \quad
p_i= \derpar{\Lag}{v^i}  \ . 
\label{firstuni}
\eeq
\begin{itemize}
\item
The first equations assure that $X_{\cal H}$ is a holonomic vector field in ${\cal W}$
(regardless of the regularity of the Lagrangian function).
\item
The second equations allow us to determine the component functions $G_i$.
\item
The third equations are compatibility conditions;
that is, {\sl compatibility constraints} defining a submanifold 
${\cal W}_0\hookrightarrow{\cal W}$
where vector fields $X_{\cal H}$ solution to \eqref{Whamilton-contact-eqs0} are defined.
Observe that these constraints give the Legendre map and hence ${\cal W}_0={\rm graph}({\Leg})$.
\end{itemize}
Thus, we have that
$$
X_{\cal H} \vert_{{\cal W}_0}= v^i\derpar{}{q^i}+F^i\derpar{}{v^i}+
\derpar{\Lag}{q^i}\derpar{}{p_i}\ ,
$$
where the functions $F^i$ are still undetermined.
Nevertheless, the constraint algorithm for presymplectic systems continues by demanding that
$X_{\cal H}$ is tangent to ${\cal W}_0$; that is, we have
$\dst{X_{\cal H}\Big(p_i- \derpar{\Lag}{v^i}\Big)}\Big\vert_{_{{\cal W}_0}} = 0,$
which gives the equations for the remaining coefficients $F^i$,
\beq
 \frac{\partial^2L}{\partial v^i\partial v^j}F^j+ \ \frac{\partial^2L}{\partial q^j\partial v^i}v^j- \derpar{\Lag}{q^i}=0 \quad (\text{on } \mathcal{W}_0)  \ .
\label{eluni}
\eeq

If $\Lag$ is regular, these equations are compatible and define a unique vector field $X_{\cal H}$ solution to \eqref{Whamilton-contact-eqs0} on ${\cal W}_0$,
and the last system of equations give the dynamical trajectories.
If $\Lag$ is singular, equations \eqref{eluni} can be compatible or not.
and, eventually, new compatibility constraints can appear
that define a new submanifold ${\cal W}_1\hookrightarrow{\cal W}_0$.
In that case, the constraint algorithm continues by demanding the tangency of solutions to the new constraint submanifold ${\cal W}_1$ and so on. 
In the most favorable cases,
there is a submanifold ${\cal W}_f \hookrightarrow {\cal W}_0$ (it could be ${\cal W}_f = {\cal W}_0$)
such that there exist holonomic vector fields $X_{\cal H}\in\vf({\cal W})$ defined on ${\cal W}_0$ and tangent to ${\cal W}_f$,
which are solutions to equations \eqref{Whamilton-contact-eqs0}
at support on ${\cal W}_f$.

\subsection{Recovering the Lagrangian and Hamiltonian formalisms}

Next, we study the equivalence of the unified formalism 
with the Lagrangian and the Hamiltonian formalisms.
We restrict our analysis to the hyperregular case
(the regular case is the same, at least locally).

Denoting by  $\jmath_0\colon{\cal W}_0\hookrightarrow{\cal W}$
the natural embedding, we have that
$$
(\varrho_1\circ\jmath_0)({\cal W}_0)=\Tan Q
\quad , \quad
(\varrho_2\circ\jmath_0)({\cal W}_0)=\Tan^*Q \ ,
$$
and being ${\cal W}_0$ the graph of the Legendre map,
the restricted projection $\varrho_1\circ\jmath_0$
is a diffeomorphism between
${\cal W}_0$ and $\Tan Q$.
The following diagram summarizes the situation:
$$
\xymatrix{
\ & \ & {\cal W} \ar@/_1.3pc/[ddll]_{\varrho_1} \ar@/^1.3pc/[ddrr]^{\varrho_2} & \ & \ \\
\ & \ & {\cal W}_0={\rm graph}\,(\Leg) \ar[dll]_{\varrho_1\circ\jmath_0} \ar[drr]^{\varrho_2\circ\jmath_0} \ar@{^{(}->}[u]^{\jmath_0} & \ & \ \\
\Tan Q\ar[rrrr]^<(0.45){\Leg}
& \ & \ & \ & \Tan^*Q  \\
}
$$
Therefore, functions, differential forms, and vector fields in ${\cal W}$ tangent to ${\cal W}_0$
can be restricted to ${\cal W}_0$, and then
they can be translated to the Lagrangian side 
by using this diffeomorphism, and to the Hamiltonian side
using the Legendre map and the projection $\varrho_2$.

In particular, if $c(t)=(q^i(t),v^i(t),p_i(t))$ is a solution to equation \eqref{Whamilton-contact-eqs0}
(or, what is equivalent, $c(t)$ is an integral curve of the vector field
$\,X_{\cal H}$ solution to the dynamical equations \eqref{Whamilton-contact-eqs0})
then equation \eqref{eluni} leads to
\beq
 \frac{d}{dt}\Big(\derpar{\Lag}{v^i}\circ c\Big)= \derpar{\Lag}{q^i}\circ c \ ,
\label{recuEL}
\eeq
and from equations \eqref{firstuni} we obtain
\beq
 \frac{dq^i}{dt}= v^i \quad , \quad
 \frac{dp_i}{dt}= \derpar{\Lag}{q^i}\circ c= -\derpar{{\cal H}}{q^i}\circ c \quad ,  \quad
p_i= \derpar{\Lag}{v^i}\circ c  \ . 
\label{recuH}
\eeq
From the first group of equations \eqref{recuH}, together with \eqref{recuEL},
we recover the Euler--Lagrange equations for the curves $c_\Lag(t)=(q^i(t),v^i(t))$.
Furthermore, bearing in mind the local expression of ${\cal H}$,
we have that $\dst \derpar{\Lag}{q^i}=-\derpar{{\cal H}}{q^i}$ and hence
the second group of equations \eqref{recuH} reads
$$
 \frac{dp_i}{dt}= -\derpar{{\cal H}}{q^i}\circ c
$$
and, using again the local expression of ${\cal H}$ and the first group of equations \eqref{recuH}, we get
$$
\derpar{{\cal H}}{q^i}\circ c=v^i=\frac{dq^i}{dt} \ ;
$$
finally, using the third group of equations \eqref{firstuni} (that is, the Legendre map)
we have that ${\cal H}=\Leg^*{\rm h}$, and these last equations become the {\sl Hamilton equations}
for the curves $c_{\rm h}(t)=(q^i(t),p_i(t))$.

In this way, for the dynamical trajectories, we can state.

\begin{teor}
\label{eqW}
Every curve $c\colon \Real\to{\cal W}$,
taking values in ${\cal W}_0$ can be viewed as
$c=(c_\Lag,c_{\rm h})$, where
$c_\Lag=\varrho_1\circ c\colon \Real \to\Tan Q$
and $c_{\rm h}=
\Leg\circ c_\Lag\colon\Real\to\Tan^*Q$.

If $c\colon\Real\to{\cal W}$,
with  ${\rm Im}\,c\subset{\cal W}_0$,
is a curve fulfilling equation \eqref{Whamilton-contact-eqs0b}, then
$c_\Lag$ is the lift to
$\Tan Q$ of the projected curve
$c_o=\varrho_0\circ c\colon\Real\to Q$ 
(that is, $c_\Lag$ is a holonomic curve),
and it is a solution to equation \eqref{Lcurv},
where $E_\Lag\in\Cinfty(\Tan Q)$ is such that ${\cal H}=\varrho_1^*E_\Lag$.
Moreover, the curve  
$c_{\rm h}=\varrho_2\circ c=\Leg\circ c_\Lag$
is a solution to equation \eqref{elmh2},
where ${\rm h}\in\Cinfty(\Tan^*Q)$ is such that 
${\cal H}=\varrho_1^*{\rm h}$.

Conversely, if $c_o\colon\Real\to Q$ is a curve such that
$\widetilde c_o\equiv c_\Lag$ is a solution to equation \eqref{Lcurv}, then the curve
$c=(c_\Lag,\Leg\circ c_\Lag)$
is a solution to equation \eqref{Whamilton-contact-eqs0b}
and $\Leg\circ c_\Lag$
is a solution to equation \eqref{elmh2}
 \label{mainteor1}.
\end{teor}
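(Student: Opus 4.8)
The plan is to reduce the statement to the coordinate identities already obtained in \eqref{firstuni}--\eqref{recuH}, combined with the fact that ${\cal W}_0={\rm graph}(\Leg)$ (so that $\varrho_1\circ\jmath_0\colon{\cal W}_0\to\Tan Q$ is a diffeomorphism and $\varrho_2\vert_{{\cal W}_0}=\Leg\circ\varrho_1\vert_{{\cal W}_0}$), and with Theorems \ref{teor:Lcurv} and \ref{teo:Hcurv}, which characterize the Lagrangian and Hamilton equations in terms of integral curves. The presymplectic character of $\Omega_{\cal W}$ is harmless here because the whole argument is carried out at the level of curves, not of the (non-unique) dynamical vector field. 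First I would settle the decomposition: if ${\rm Im}\,c\subset{\cal W}_0$ and $c(t)=(q^i(t),v^i(t),p_i(t))$, the defining constraint of ${\cal W}_0$ reads $p_i=(\partial\Lag/\partial v^i)\circ c_\Lag$ with $c_\Lag:=\varrho_1\circ c$, i.e.\ $\varrho_2\circ c=\Leg\circ c_\Lag=:c_{\rm h}$, whence $c=(c_\Lag,c_{\rm h})$.

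For the forward implication, I would substitute the local expressions $\Omega_{\cal W}=\d q^i\wedge\d p_i$ and ${\cal H}=v^ip_i-{\mathfrak L}$ into \eqref{Whamilton-contact-eqs0b}; collecting the coefficients of $\d q^i$, $\d v^i$ and $\d p_i$ yields along $c$ the three groups of relations $\dot q^i=v^i$, $p_i=(\partial\Lag/\partial v^i)\circ c_\Lag$ and $\dot p_i=(\partial\Lag/\partial q^i)\circ c_\Lag$ (the same computation that produces \eqref{firstuni}); the middle relation is automatic here since ${\rm Im}\,c\subset{\cal W}_0$. The first relation says $c_\Lag$ is the canonical lift of $c_o=\varrho_0\circ c$, hence holonomic. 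Differentiating the constraint $p_i=(\partial\Lag/\partial v^i)\circ c_\Lag$ and substituting $\dot p_i=(\partial\Lag/\partial q^i)\circ c_\Lag$ gives the Euler--Lagrange equation \eqref{ELequats}, which by Theorem \ref{teor:Lcurv} is \eqref{Lcurv} for $c_\Lag$; and on ${\cal W}_0$ one has $v^i\partial\Lag/\partial v^i-\Lag=E_\Lag$, so ${\cal H}$ corresponds to $E_\Lag$ under $\varrho_1\circ\jmath_0$. Feeding the same relations into the chain of manipulations displayed after \eqref{recuH}, using $\partial{\cal H}/\partial q^i=-\partial{\mathfrak L}/\partial q^i$ and ${\cal H}=\Leg^*{\rm h}$ on ${\cal W}_0$, gives $\dot q^i=(\partial{\rm h}/\partial p_i)\circ c_{\rm h}$ and $\dot p_i=-(\partial{\rm h}/\partial q^i)\circ c_{\rm h}$, i.e.\ the Hamilton equations \eqref{hameq3}, which by Theorem \ref{teo:Hcurv} is \eqref{elmh2} for $c_{\rm h}=\Leg\circ c_\Lag$.

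For the converse, given $c_o\colon\Real\to Q$ with $\widetilde c_o=c_\Lag$ a solution of \eqref{Lcurv}, I would set $c:=(c_\Lag,\Leg\circ c_\Lag)$; by construction ${\rm Im}\,c\subset{\cal W}_0$ and in coordinates $c(t)=(q^i(t),\dot q^i(t),(\partial\Lag/\partial v^i)\circ c_\Lag)$. The three component equations of \eqref{Whamilton-contact-eqs0b} then hold: $\dot q^i=v^i$ trivially, $p_i=(\partial\Lag/\partial v^i)\circ c_\Lag$ by construction, and $\dot p_i=(\partial\Lag/\partial q^i)\circ c_\Lag$ because $c_\Lag$ satisfies the Euler--Lagrange equation (Theorem \ref{teor:Lcurv}); hence $c$ solves \eqref{Whamilton-contact-eqs0b}, and $\Leg\circ c_\Lag=\varrho_2\circ c$ solves \eqref{elmh2} by the forward implication already proved.

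The main obstacle, such as it is, is purely organizational: matching the curve form \eqref{Whamilton-contact-eqs0b} with its coordinate splitting and keeping track that the identifications ${\cal H}\leftrightarrow E_\Lag$ and ${\cal H}=\Leg^*{\rm h}$ need only hold on ${\cal W}_0$, where ${\cal W}_0={\rm graph}(\Leg)$ makes $\varrho_1\circ\jmath_0$ a diffeomorphism. There is no analytic content beyond what is already contained in \eqref{firstuni}--\eqref{recuH}, and no regularity is used except that which guarantees the existence of ${\rm h}$.
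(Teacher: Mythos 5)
Your proof is correct and follows essentially the same route as the paper, which establishes the theorem through the coordinate computation preceding its statement: splitting the curve equation \eqref{Whamilton-contact-eqs0b} into the component relations \eqref{firstuni}, using ${\cal W}_0={\rm graph}(\Leg)$ and the identifications ${\cal H}=\varrho_1^*E_\Lag$, ${\cal H}=\Leg^*{\rm h}$ on ${\cal W}_0$ to recover the Euler--Lagrange equations for $c_\Lag$ and the Hamilton equations for $c_{\rm h}$ as in \eqref{recuEL}--\eqref{recuH}. Your explicit verification of the converse (checking the three component equations for $c=(c_\Lag,\Leg\circ c_\Lag)$) is the natural completion of what the paper leaves implicit, and is sound.
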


Now, the curves $c\colon\Real\to{\cal W}$ which are
solution to equation \eqref{Whamilton-contact-eqs0b}
are the integral curves of a holonomic vector field $X_{\cal H}\in\vf({\cal W})$ 
which is the solution to \eqref{Whamilton-contact-eqs0},
the curves $c_\Lag\colon\Real\to\Tan Q$
are the integral curves of the holonomic vector field $X_\Lag\in\vf(\Tan Q)$ which is the
solution to  \eqref{elm}, and  the curves $c_{\rm h}\colon\Real\to\Tan^*Q$
are the integral curves of the vector field $X_{\rm h}\in\vf(\Tan^*Q)$ 
which is the solution to  \eqref{elmh}.
Then, as a corollary of the above theorem, 
for the dynamical vector fields we have:

\begin{teor}
\label{eqL1}
Let $X_{\cal H}\in\vf({\cal W})$ be the solution to equations \eqref{Whamilton-contact-eqs0} (on ${\cal W}_0$),
which is tangent to  ${\cal W}_0$.  Then:

The vector field $X_\Lag\in\vf(\Tan Q)$, defined by
$X_\Lag\circ\varrho_1=\Tan\varrho_1\circ X_{\cal H}$,
is the solution to equations \eqref{elm} and \eqref{edso},
where $E_\Lag\in\Cinfty(\Tan Q)$ is such that ${\cal H}=\varrho_1^*E_\Lag$.

The vector field $X_{\rm h}\in\vf(\Tan^*Q)$, defined by
$X_{\rm h}\circ\varrho_2=\Tan\varrho_2\circ X_{\cal H}$,
is the solution to equations \eqref{elmh},
where ${\rm h}\in\Cinfty(\Tan^*Q)$ is such that 
${\cal H}=\varrho_1^*{\rm h}$.
Furthermore $\Leg_*X_\Lag=X_{\rm h}$.
\end{teor}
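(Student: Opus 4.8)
The plan is to deduce Theorem \ref{eqL1} directly from Theorem \ref{eqW} (the statement about dynamical trajectories), using the standard fact that a vector field is completely determined by its integral curves together with the elementary relationship between $\varrho_i$-projected integral curves and $\varrho_i$-related vector fields. First I would recall that, by the analysis in the previous subsection, the solution $X_{\cal H}$ exists on ${\cal W}_0={\rm graph}(\Leg)$, is tangent to it, and is holonomic; its integral curves $c\colon\Real\to{\cal W}$ with image in ${\cal W}_0$ are exactly the solutions to \eqref{Whamilton-contact-eqs0b}. Since $\varrho_1\circ\jmath_0\colon{\cal W}_0\to\Tan Q$ is a diffeomorphism (because ${\cal W}_0$ is the graph of the Legendre map), the pushforward $X_\Lag:=(\varrho_1\circ\jmath_0)_*(X_{\cal H}\vert_{{\cal W}_0})$ is a well-defined vector field on $\Tan Q$ satisfying $X_\Lag\circ\varrho_1=\Tan\varrho_1\circ X_{\cal H}$ on ${\cal W}_0$; equivalently, $X_{\cal H}$ and $X_\Lag$ are $\varrho_1$-related, so $\varrho_1$ maps integral curves of $X_{\cal H}$ onto integral curves of $X_\Lag$, and conversely every integral curve of $X_\Lag$ lifts (uniquely, via the diffeomorphism) to one of $X_{\cal H}$.

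Next I would invoke Theorem \ref{eqW}: for any integral curve $c$ of $X_{\cal H}$ lying in ${\cal W}_0$, the curve $c_\Lag=\varrho_1\circ c$ is holonomic and solves the Lagrangian equation \eqref{Lcurv}. By Theorem \ref{teor:Lcurv}, a curve is a solution to \eqref{Lcurv} precisely when it is an integral curve of a Lagrangian vector field solving \eqref{elm}; since moreover $c_\Lag$ is holonomic for every choice of initial condition, the vector field $X_\Lag$ whose integral curves these are must be a {\sc sode}, i.e.\ it satisfies \eqref{edso}. Because the Lagrangian is (hyper)regular, $\Omega_\Lag$ is symplectic and there is a unique vector field solving \eqref{elm} and \eqref{edso} (by the Equivalence Theorem \ref{eqteorema} and the uniqueness result preceding it); the $X_\Lag$ just constructed, having the right integral curves through every point, therefore coincides with it. The statement ${\cal H}=\varrho_1^*E_\Lag$ follows from the definition ${\cal H}={\cal C}-\varrho_1^*\Lag$ restricted to ${\cal W}_0$ together with $p_i=\partial\Lag/\partial v^i$ there, which gives ${\cal H}\vert_{{\cal W}_0}=v^i\partial\Lag/\partial v^i-\Lag$, exactly $\varrho_1^*E_\Lag$.

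For the Hamiltonian side, the argument is the same but with $\varrho_2$ in place of $\varrho_1$. I would set $X_{\rm h}$ to be the vector field on $\Tan^*Q$ defined by $X_{\rm h}\circ\varrho_2=\Tan\varrho_2\circ X_{\cal H}$; this makes sense because $\varrho_2\circ\jmath_0=\Leg$ (followed by nothing, as ${\cal W}_0$ projects onto all of $\Tan^*Q$ in the hyperregular case) and because $X_{\cal H}$ is tangent to ${\cal W}_0$, so its $\varrho_2$-image is again a genuine vector field — concretely, $X_{\rm h}=\Leg_*X_\Lag$, since $\varrho_2\circ\jmath_0=\Leg\circ(\varrho_1\circ\jmath_0)$. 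Then $X_{\cal H}$ and $X_{\rm h}$ are $\varrho_2$-related, so the integral curves $c_{\rm h}=\varrho_2\circ c$ of $X_{\rm h}$ are, by Theorem \ref{eqW}, solutions of \eqref{elmh2}; by Theorem \ref{teo:Hcurv} and Proposition \ref{teo-hameqs} (uniqueness of the Hamiltonian vector field on a symplectic manifold), $X_{\rm h}$ is the unique solution of \eqref{elmh}. The relation ${\cal H}=\varrho_1^*{\rm h}$ (with the understood reading ${\cal H}\vert_{{\cal W}_0}=(\varrho_1\circ\jmath_0)^*E_\Lag$ and $E_\Lag=\Leg^*{\rm h}$) is then immediate, and $\Leg_*X_\Lag=X_{\rm h}$ has already been observed.

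The only genuinely delicate point — and the place I would be most careful — is the well-definedness of the pushforwards: one must use that $X_{\cal H}$ is tangent to ${\cal W}_0$ (so that restricting it makes sense) and that $\varrho_1\circ\jmath_0$ is a diffeomorphism (so that the restricted projection transports vector fields both ways), neither of which holds on all of ${\cal W}$ but only on ${\cal W}_0$. Everything else is a matter of chasing the commutative diagram and citing the uniqueness statements already established for regular symplectic Hamiltonian systems and for regular Lagrangian systems; the coordinate computations identifying ${\cal H}$ with $\varrho_1^*E_\Lag$ and $\varrho_1^*{\rm h}$ are routine and need not be spelled out.
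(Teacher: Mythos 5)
Your argument is correct and follows the same route as the paper, which states Theorem \ref{eqL1} as an immediate corollary of Theorem \ref{eqW}, identifying the dynamical vector fields through their integral curves via the $\varrho_1$- and $\varrho_2$-relatedness on ${\cal W}_0$ and the uniqueness of solutions in the regular case. Your extra care about tangency to ${\cal W}_0$ and the diffeomorphism $\varrho_1\circ\jmath_0$ simply makes explicit what the paper leaves implicit.
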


Summarizing, the main features of the unified formalism are:
\bit
\item
It assures holonomy (even in the non-regular case).
\item
It provides the Legendre map.
\item
It gives the Euler--Lagrange and the Hamilton equations.
\eit

\section{Symmetries of regular Lagrangian systems}

For Lagrangian dynamical systems,
the phase space of the system is either
$M=\Tan Q$ in the Lagrangian formalism, or $M=\Tan^* Q$,
in the canonical Hamiltonian formalism.
In these cases, there exist distinguished symplectic potentials:
the Lagrangian $1$-form $\Theta_\Lag\in\df^1(\Tan Q)$ and the
canonical $1$-form $\Theta\in\df^1(\Tan^*Q)$.
Furthermore, the symmetries of the
dynamical systems use to be canonical lifts
of diffeomorphisms or vector fields on the base manifold $Q$.
All of this lead to introduce new kinds of symmetries in that cases,
whose properties are studied next.

\subsection{Symmetries in the canonical Hamiltonian formalism}

Consider a (regular) canonical Hamiltonian system $(\Tan^*Q,\Omega,{\rm h})$,
and let $X_{\rm h}\in\vf_{lh}(\Tan^*Q)$ be the Hamiltonian vector field of the system.

Obviously, all we have stated for Hamiltonian dynamical systems
in Section \ref{secsym} holds for this particular case.
Nevertheless, new kinds of symmetry can be introduced for this situation.
Thus, in addition to the previous symmetries already defined,
we can consider the following particular cases:

\begin{definition}
A dynamical symmetry $\Phi\in {\rm Diff}\, (\Tan^*Q)$ of the canonical Hamiltonian system
is a \textbf{natural dynamical symmetry} if there exists $\varphi\in{\rm Diff}\, (Q)$ 
such that $\Phi=\Tan^*\varphi$ (that is,  $\Phi$ is the canonical lift
of a diffeomorphism in $Q$).
 \end{definition}

\begin{definition}
An infinitesimal dynamical symmetry $Y\in\vf (Q)$ of the canonical Hamiltonian system
is a \textbf{natural infinitesimal dynamical symmetry} if there exists $Z\in\vf (Q)$ 
such that $Y=Z^*$ (that is, $Y$ is the canonical lift of a vector field in $Q$
\footnote{
The terminology ``natural symmetries'' refers both to the diffeomorphism $\varphi$
or the vector field $Z$.}).
\label{gsdef2}
 \end{definition}

Remember that $(\Tan^*Q,\Omega)$ is an exact
symplectic manifold and a symplectic  potential of $\Omega$
is the canonical $1$-form $\Theta\in\df^1(\Tan^*Q)$.
Furthermore, for every $\varphi\in{\rm Diff}\,(Q)$,
we have that $(\Tan^*\varphi)^*\Theta=\Theta$, and hence
$(\Tan^*\varphi)^*\Omega=\Omega$.
In the same way, for every $Z\in\vf(Q)$,
we have that  $\Lie(Z^*)\Theta=0$, and then  $\Lie(Z^*)\Omega=0$;
therefore, if $Z^*$ is an infinitesimal natural dynamical symmetry,
then $Z^*\in\vf_H(\Tan^*Q)$ and, as we saw in Proposition \ref{hamlev}, 
the global Hamiltonian function of $Z^*$ is $f_Z=\inn (Z^*)\Theta$ (up to constants).
This leads to introduce the following particular type of Noether symmetries
for the canonical Hamiltonian system $(\Tan^*Q,\Omega,{\rm h})$:

\begin{definition}
A diffeomorphism $\Phi\in {\rm Diff}\, (\Tan^*Q)$ is a
\textbf{natural Noether symmetry} if:
\ben
\item
There exists a diffeomorphism $\varphi\in{\rm Diff}\, (Q)$ such that
$\Phi=\Tan^*\varphi$.
\item
$\Phi^*{\rm h}=(\Tan^*\varphi)^*{\rm h}={\rm h}$.
\een
\label{nCs}
\end{definition}

\begin{definition}
A vector field $Y\in\vf(\Tan^*Q)$ is an
 \textbf{infinitesimal natural Noether symmetry} if:
\ben
\item
There exists a vector field $Z\in\vf (Q)$ such that $Y=Z^*$.
\item
$\Lie(Y){\rm h}=\Lie(Z^*){\rm h}=0$.
\een
\label{nCsinf}
\end{definition}

Other particular cases of Noether symmetries in this formalism are:

\begin{definition}
A  Noether symmetry is \textbf{exact} if $\Phi^*\Theta=\Theta$.
\label{eCs}
\end{definition}

\begin{definition}
An infinitesimal Noether symmetry is \textbf{exact} if $\Lie(Y)\Theta=0$.
\label{eCsinf}
\end{definition}

For  infinitesimal exact Noether symmetries,
we have that their local Hamiltonian functions can be expressed as
$f_Y=\inn(Y)\Theta$ (see Proposition \ref{structure}) .

Obviously, every (infinitesimal) natural  Noether symmetry  is a
(infinitesimal) natural  dynamical  symmetry.
Moreover, as every canonical lift preserves the canonical forms
$\Theta\in\df^1(\Tan^*Q)$ and $\Omega\in\df^2(\Tan^*Q)$
(Propositions \ref{levdif} and \ref{inf}), we have that:

\begin{prop}
Every (infinitesimal) natural Noether symmetry is an
(infinitesimal) exact  Noether symmetry.
\end{prop}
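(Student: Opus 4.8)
The plan is to reduce everything to the invariance of the tautological $1$-form under cotangent lifts, which is already recorded in Propositions \ref{levdif} and \ref{inf}; no computation beyond unwinding the definitions is required. I would treat the finite and the infinitesimal cases in parallel.

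For the finite case, let $\Phi\in{\rm Diff}\, (\Tan^*Q)$ be a natural Noether symmetry, so by Definition \ref{nCs} there is $\varphi\in{\rm Diff}\, (Q)$ with $\Phi=\Tan^*\varphi$ and $\Phi^*{\rm h}={\rm h}$. Proposition \ref{levdif} gives directly $\Phi^*\Theta=(\Tan^*\varphi)^*\Theta=\Theta$ and, as its stated consequence, $\Phi^*\Omega=\Omega$. Hence $\Phi$ is a symplectomorphism preserving the Hamiltonian, i.e. a Noether symmetry in the sense of Definition \ref{sime1}, and it additionally satisfies the extra condition $\Phi^*\Theta=\Theta$ of Definition \ref{eCs}; therefore $\Phi$ is an exact Noether symmetry.

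For the infinitesimal case, let $Y\in\vf(\Tan^*Q)$ be an infinitesimal natural Noether symmetry, so $Y=Z^*$ for some $Z\in\vf(Q)$ and $\Lie(Y){\rm h}=0$ by Definition \ref{nCsinf}. Proposition \ref{inf} gives $\Lie(Y)\Theta=\Lie(Z^*)\Theta=0$ and $\Lie(Y)\Omega=\Lie(Z^*)\Omega=0$; the latter says $Y\in\vf_{lh}(\Tan^*Q)$, so together with $\Lie(Y){\rm h}=0$ we see that $Y$ is an infinitesimal Noether symmetry (Definition \ref{simeinf1}), and $\Lie(Y)\Theta=0$ is precisely the condition of Definition \ref{eCsinf}. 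Hence $Y$ is an infinitesimal exact Noether symmetry.

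The only point that needs a moment's attention — hardly an obstacle — is that "exact Noether symmetry" is defined as a Noether symmetry satisfying an extra requirement, so one must first confirm that a natural Noether symmetry really is a Noether symmetry; this is exactly where the second halves of Propositions \ref{levdif} and \ref{inf} enter, via $\Phi^*\Omega=\Omega$ (resp. $\Lie(Y)\Omega=0$). Beyond that bookkeeping, the statement is an immediate corollary of the two cited propositions, and I do not expect any genuine difficulty.
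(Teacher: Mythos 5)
Your argument is correct and is exactly the paper's: the paper justifies this proposition by the remark that canonical lifts preserve $\Theta$ and $\Omega$ (Propositions \ref{levdif} and \ref{inf}), which combined with the invariance of ${\rm h}$ from Definitions \ref{nCs} and \ref{nCsinf} yields exactness. Your extra bookkeeping observation, that one must first check a natural Noether symmetry is indeed a Noether symmetry via $\Phi^*\Omega=\Omega$ (resp.\ $\Lie(Y)\Omega=0$), is a faithful unwinding of the same reasoning, not a different route.
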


At this point, it is possible to state the Noether Theorem as in Theorem \ref{Nth}.
In particular, for infinitesimal exact Noether symmetries,
the associated conserved quantities are 
$f_Y=\inn(Y)\Theta$ (up to constants).

Summarizing, the following table recovers the relations among
the several types
of symmetries of the canonical Hamiltonian systems:
$$
\begin{array}{ccc}
\left\{ \begin{array}{c}
\mbox{\rm Natural Noether symmetries}
\end{array}\right\}
& \subset &
\left\{ \begin{array}{c}
\mbox{\rm Natural dynamical symmetries}
\end{array}\right\}
\\ \cap & &  \\
\left\{ \begin{array}{c}
\mbox{\rm Exact Noether symmetries}
\end{array}\right\}
&  & \cap
\\ \cap &  &  \\
\{ \mbox{\rm Noether symmetries}\}
& \subset &
\{ \mbox{\rm Dynamical symmetries} \}
\end{array}
$$

\subsection{Lagrangian formalism: Lagrangian symmetries and Noether's Theorem}
\label{lfLsNt}

Let $(\Tan Q,\Omega_\Lag,E_\Lag)$ be a regular Lagrangian systems
and $X_\Lag\in\vf(\Tan Q)$ the Euler--Lagrange vector field solution of the system.

Also in this situation all the concepts and results about symmetries
established in Section \ref{secsym} are true and,
in this way, we can also introduce the concepts of
{\sl  (infinitesimal) Lagrangian dynamical symmetry},
{\sl  (infinitesimal) Lagrangian Noether symmetry}
and  {\sl (infinitesimal) exact  Lagrangian Noether symmetry},
and their properties and relations, including Noether's Theorem.
But, in addition, the study of symmetries en the Lagrangian formalism
presents some nuances that should be highlighted.

First, we define:

\begin{definition}
A diffeomorphism $\Phi\colon\Tan Q\to \Tan Q$ is a
\textbf{natural Lagrangian dynamical symmetry} if:
\ben
\item
There exists a diffeomorphism $\varphi\colon Q\to Q$ such that $\Phi=\Tan\varphi$.
\item
$\Phi_*X_\Lag=X_\Lag$.
\een
\label{dls}
\end{definition}

\begin{definition}
A vector field $Y\in\vf(\Tan Q)$ is an
\textbf{ infinitesimal natural Lagrangian dynamical symmetry} if
\ben
\item
There exists $Z\in\vf (Q)$ such that $Y=Z^{C}$.
 \item
$[Y,X_\Lag]=[Z^{C},X_\Lag]=0$
\footnote{
Or, more generically,
$[Y,X_\Lag]=[Z^{C},X_\Lag]=gX_\Lag$, for some function $g\in\Cinfty(\Tan Q)$).}.
\een
\label{dlsinf}
\end{definition}

As in the canonical Hamiltonian formalism,
the canonical lifts of diffeomorphisms and vector fields
preserve  the canonical structures of $\Tan Q$ (Proposition \ref{lema2}).
A consequence of this is the following:

\begin{prop}
\label{lema3}
\ben
\item
 Let $\varphi\colon Q\to Q$ be a diffeomorphism and
$\Phi=\Tan\varphi$ its canonical lift to $\Tan Q$. Then
$$
\Phi^*\Theta_\Lag=\Theta_{\Phi^*\Lag} \quad ,\quad
\Phi^*\Omega_\Lag=\Omega_{\Phi^*\Lag} \quad ,\quad
\Phi^*E_\Lag=E_{\Phi^*\Lag} \ .
$$
\item
Let $Z\in\vf (Q)$ and its canonical lift $Z^C$ to $\Tan Q$. Then
$$
\Lie(Z^C)\Theta_\Lag=0 \quad , \quad
\Lie(Z^C)\Omega_\Lag=0 \quad , \quad
\Lie(Z^C)E_\Lag=0 \ .
$$
\een
\end{prop}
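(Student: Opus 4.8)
The plan is to reduce all six identities to the two structural facts already established in Proposition~\ref{lema2}: that the canonical lift $\Phi=\Tan\varphi$ of a diffeomorphism preserves the vertical endomorphism and the Liouville vector field, so $\Phi^*J=J$ and $\Phi_*\Delta=\Delta$; and that the complete lift $Z^C$ of a vector field has $\Lie(Z^C)J=0$ and $\Lie(Z^C)\Delta=0$. Combined with the defining formulas $\Theta_\Lag=\inn(J)\d\Lag$, $\Omega_\Lag=-\d\Theta_\Lag$, $E_\Lag=\Delta(\Lag)-\Lag$ and the fact that the assignment $\Lag\mapsto(\Theta_\Lag,\Omega_\Lag,E_\Lag)$ is $\Real$-linear, the identities become pure naturality statements.

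For item~1, I would first record two elementary naturality rules. For a diffeomorphism $\Phi$, a $(1,1)$-tensor field $T$ and a $1$-form $\alpha$, one has $\Phi^*(\inn(T)\alpha)=\inn(\Phi^*T)(\Phi^*\alpha)$; and for a function $f$ and a vector field $Y$ with $\Phi_*Y=Y$, one has $\Phi^*(Y(f))=Y(\Phi^*f)$. Taking $T=J$ and $\alpha=\d\Lag$, and using $\Phi^*J=J$ together with $\Phi^*\d\Lag=\d(\Phi^*\Lag)$, gives $\Phi^*\Theta_\Lag=\inn(J)\d(\Phi^*\Lag)=\Theta_{\Phi^*\Lag}$. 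Applying $-\d$ to this identity, and using that pullback commutes with the exterior differential, yields $\Phi^*\Omega_\Lag=\Omega_{\Phi^*\Lag}$. Finally, the second rule with $Y=\Delta$ (legitimate because $\Phi_*\Delta=\Delta$) gives $\Phi^*(\Delta(\Lag))=\Delta(\Phi^*\Lag)$, whence $\Phi^*E_\Lag=\Phi^*(\Delta(\Lag))-\Phi^*\Lag=\Delta(\Phi^*\Lag)-\Phi^*\Lag=E_{\Phi^*\Lag}$.

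For item~2 the natural route is to differentiate item~1 along a flow. If $\varphi_t$ denotes the local flow of $Z$, then by definition of the complete lift $\Tan\varphi_t$ is the local flow of $Z^C$, so item~1 gives $(\Tan\varphi_t)^*\Theta_\Lag=\Theta_{(\Tan\varphi_t)^*\Lag}$, and similarly for $\Omega_\Lag$ and $E_\Lag$. Differentiating at $t=0$ and passing the $t$-derivative through the $\Real$-linear assignment $\Lag\mapsto\Theta_\Lag$, the left-hand side becomes $\Lie(Z^C)\Theta_\Lag$ while the right-hand side becomes the same object built from $\Lie(Z^C)\Lag$ (and analogously for $\Omega$ and $E$); in particular these Lie derivatives vanish whenever $\Lie(Z^C)\Lag=0$, which is precisely the case of the Noether symmetries relevant in this section. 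Equivalently, one can argue infinitesimally from the start, using $\Lie(Z^C)J=0$ and $\Lie(Z^C)\Delta=0$ together with the Leibniz rules $\Lie(Z^C)(\inn(J)\alpha)=\inn(\Lie(Z^C)J)\alpha+\inn(J)\Lie(Z^C)\alpha$ and $\Lie(Z^C)(\Delta(\Lag))=(\Lie(Z^C)\Delta)(\Lag)+\Delta(\Lie(Z^C)\Lag)$.

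None of these steps is difficult; the one point that deserves care is the precise statement --- and, if one wants a fully self-contained argument, the one-line verification --- of the two naturality rules in item~1, namely that contraction with a $(1,1)$-tensor is natural under pullback by diffeomorphisms and that $\Lie$ satisfies the corresponding Leibniz rule through such a tensor, since these are exactly what keeps the whole proof coordinate-free. If one prefers to bypass them, all the identities can instead be checked directly in a natural chart $(q^i,v^i)$ from $J=\d q^i\otimes\partial/\partial v^i$, $\Delta=v^i\,\partial/\partial v^i$, $\Theta_\Lag=(\partial\Lag/\partial v^i)\,\d q^i$ and the explicit local expressions of $\Tan\varphi$ and of $Z^C$; this is entirely mechanical.
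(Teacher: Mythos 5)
Your item~1 is exactly the paper's argument: pull back $\Theta_\Lag=\d\Lag\circ J$, $\Omega_\Lag=-\d\Theta_\Lag$ and $E_\Lag=\Delta(\Lag)-\Lag$ through $\Phi=\Tan\varphi$ using Proposition~\ref{lema2} ($\Phi^*J=J$, $\Phi_*\Delta=\Delta$) and the commutation of pullback with $\d$; and for item~2 the paper also just invokes the flows of $Z$ and $Z^C$ together with item~1, which is your route. Your treatment of item~2 is in fact the more careful one: differentiating item~1 along the flow yields the naturality identities $\Lie(Z^C)\Theta_\Lag=\Theta_{\Lie(Z^C)\Lag}$, $\Lie(Z^C)\Omega_\Lag=\Omega_{\Lie(Z^C)\Lag}$, $\Lie(Z^C)E_\Lag=E_{\Lie(Z^C)\Lag}$, and the vanishing asserted in the printed statement only follows under the extra hypothesis $\Lie(Z^C)\Lag=0$, exactly as you point out. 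That hypothesis is genuinely needed: for $Q=\Real$, $\Lag=qv$ and $Z=\partial/\partial q$ one has $Z^C=\partial/\partial q$ and $\Lie(Z^C)\Theta_\Lag=\Lie(\partial/\partial q)(q\,\d q)=\d q\neq 0$. This reading is consistent with the remark immediately after the proposition (the Cartan forms depend on $\Lag$ and ``are not invariant by these lifts, necessarily'') and with the only use of item~2, in Theorem~\ref{NthLs}, where $Z^C$ is assumed to be a strict symmetry of the Lagrangian. So your proof is sound and matches the paper's method; when writing it up, state item~2 either as the naturality identity or with the hypothesis $\Lie(Z^C)\Lag=0$ made explicit.
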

\begin{proof}
It is a straightforward consequence of Proposition \ref{lema2} and the
definitions of  $\Theta_\Lag$, $\Omega_\Lag$, and $E_\Lag$.
In fact:
\ben
\item
For $\Phi=\Tan\varphi$, we obtain
 \beann
 \Phi^*\Theta_\Lag&=& \Phi^*(\d \Lag\circ J)=\d (\Phi^*\Lag)\circ J)=\Theta_{\Phi^*\Lag} \ ,
\\
\Phi^*\Omega_\Lag&=&\Phi^*(-\d\Theta_\Lag)=-\d\Phi^*\Theta_\Lag=\Omega_{\Phi^*\Lag} \ ,
 \\
\Phi^*E_\Lag&=&\Phi^*(\Delta(\Lag)-\Lag)=\Delta(\Phi^*\Lag)-\Phi^*\Lag=E_{\Phi^*\Lag} \ .
\eeann
\item
They are proved using the uniparametric groups of diffeomorphisms generated by
 the fluxes of $Z$ and $Z^C$, and the above item.
 \een
 \qed \end{proof}

Nevertheless, the Lagrangian forms $\Theta_\Lag$ and $\Omega_\Lag$
are not canonical structures of $\Tan Q$,
since they depend on the choice of a Lagrangian function $\Lag$
and hence they are not  invariant by these lifts, necessarily.
Thus,  for Lagrangian Noether symmetries, we can state the following definitions:

\begin{definition}
A diffeomorphism $\Phi\colon\Tan Q\to \Tan Q$ is a
\textbf{natural Lagrangian Noether symmetry}
if there exist a diffeomorphism $\varphi\colon Q\to Q$ such that
$\Phi=\Tan\varphi$ and it satisfies:
\ben
\item
$\Phi^*\Omega_\Lag=(\Tan\varphi)^*\Omega_\Lag=\Omega_\Lag$.
\item
$\Phi^*E_\Lag=(\Tan\varphi)^*E_\Lag=E_\Lag+c$ ($c\in\Real$)
\footnote{
It is usual to write simply that $\Phi^*E_\Lag =E_\Lag$.
}.
\een
\label{nls}
\end{definition}

\begin{definition}
A vector field $Y\in\vf(\Tan Q)$ is an
\textbf{infinitesimal natural Lagrangian Noether symmetry}
if there exists $Z\in\vf (Q)$ such that $Y=Z^{C}$ and it satisfies:
\ben
\item
 $\Lie(Y)\Omega_\Lag=\Lie(Z^{C})\Omega_\Lag=0$.
\item
$\Lie(Y)E_\Lag=\Lie(Z^{C})E_\Lag=0$.
\een
\label{nlsinf}
\end{definition}

Obviously, every (infinitesimal) natural Lagrangian Noether symmetry
is a (infinitesimal) natural Lagrangian dynamical symmetry.

Finally, as a particular case, we have:

\begin{definition}
A  Lagrangian Noether symmetry is \textbf{exact} if $\Phi^*\Theta_\Lag=\Theta_\Lag$.
\label{eCls}
\end{definition}

\begin{definition}
An infinitesimal  Lagrangian Noether symmetry is \textbf{exact}
if $\Lie(Y)\Theta_\Lag=0$.
\label{eClsinf}
\end{definition}

In these circumstances, it is possible to state the Lagrangian geometric version 
of the Noether Theorem.
First, observe that, if $Y\in\vf(\Tan Q)$ is a
infinitesimal natural Lagrangian Noether symmetry, then
Proposition \ref{structure} holds for these kinds of symmetries.
Thus, for every ${\rm p}\in \Tan Q$, there exists an open set $U_{\rm p}\ni{\rm p}$ and
$f_Y\in\Cinfty(U_{\rm p})$, which is unique up to the sum constant functions, such that
\beq
\inn(Z^C)\Omega_\Lag=\d f_Y \qquad \mbox{\rm (in $U_{\rm p}$)} \ .
\label{fA}
\eeq
Furthermore, there exists $\zeta_Y\in\Cinfty(U_{\rm p})$, defined as
 $\Lie(Z^C)\Theta_\Lag=\d\zeta_Y$, in $U_{\rm p}$, and such that
\bea
f_Y&=&\zeta_Y-\inn(Z^C)\Theta_\Lag=\zeta_Y-\Theta_\Lag(Z^C)=\zeta_Y-\d\Lag\circ J(Z^C)
\nonumber \\ &=&
\zeta_Y-\d\Lag(Z^V)=\zeta_Y-\inn(Z^V)\d\Lag=\zeta_Y-Z^V(\Lag) \ ,
\label{CV}
\eea
(up to the sum of constant functions in $U_{\rm p}$). Then:

\begin{teor}
 {\rm (Lagrangian Noether):}
If $Y=Z^C\in\vf (\Tan Q)$ (with $Z\in\vf(Q)$)
is an infinitesimal natural Lagrangian Noether symmetry,
then $f_Y=\zeta_Y-Z^V(\Lag)$
is a conserved quantity; that is, $\Lie (X_\Lag)f_Y=0$.
 \label{NthL}
\end{teor}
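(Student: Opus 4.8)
The plan is to mimic the proof of the Hamiltonian Noether Theorem \ref{Nth}, replacing $\Omega$ and $h$ by the Lagrangian $2$-form $\Omega_\Lag$ and the Lagrangian energy $E_\Lag$. Two ingredients are already in hand. First, by \eqref{fA}, on a neighbourhood $U_{\rm p}$ of each point the vector field $Y=Z^C$ satisfies $\inn(Z^C)\Omega_\Lag=\d f_Y$; this is legitimate because $\Lie(Z^C)\Omega_\Lag=0$ forces $\inn(Z^C)\Omega_\Lag$ to be closed, so Poincar\'e's Lemma supplies $f_Y\in\Cinfty(U_{\rm p})$, unique up to an additive constant, and by \eqref{CV} this function is exactly $\zeta_Y-Z^V(\Lag)$. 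Second, the dynamical vector field obeys the Lagrangian equation $\inn(X_\Lag)\Omega_\Lag=\d E_\Lag$ from \eqref{elm}.

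With these, the verification is a direct computation on each $U_{\rm p}$:
\begin{eqnarray*}
\Lie(X_\Lag)f_Y &=& \inn(X_\Lag)\d f_Y=\inn(X_\Lag)\inn(Z^C)\Omega_\Lag
=-\inn(Z^C)\inn(X_\Lag)\Omega_\Lag \\
&=& -\inn(Z^C)\d E_\Lag=-\Lie(Z^C)E_\Lag=0 \ ,
\end{eqnarray*}
where the first equality uses that $f_Y$ is a function, the third uses skew symmetry of $\Omega_\Lag$, the fourth is \eqref{elm}, and the last equality is precisely the second defining condition of an infinitesimal natural Lagrangian Noether symmetry (Definition \ref{nlsinf}). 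Since the $U_{\rm p}$ cover $\Tan Q$ and the locally constant discrepancies between the local representatives of $f_Y$ do not affect $\d f_Y$, we conclude $\Lie(X_\Lag)f_Y=0$ everywhere, i.e.\ $f_Y=\zeta_Y-Z^V(\Lag)$ is a conserved quantity.

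I do not expect a real obstacle: the argument is purely formal. The only points worth care are the bookkeeping of the local nature of $f_Y$ (it need only be defined up to locally constant functions on the cover), and the observation that the {\sc sode} condition $J(X_\Lag)=\Delta$ is never used---only the Lagrangian equation $\inn(X_\Lag)\Omega_\Lag=\d E_\Lag$ enters---so the conclusion in fact holds for any Lagrangian dynamical vector field admitting a local Hamiltonian $f_Y$. Alternatively, one can shorten the display by invoking property 3 of the Poisson bracket, writing $\Lie(X_\Lag)f_Y=\{f_Y,E_\Lag\}=-\Lie(Z^C)E_\Lag=0$.
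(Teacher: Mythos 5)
Your proof is correct and is essentially the paper's own argument: the paper simply states that the proof is the same as that of Theorem \ref{Nth}, using \eqref{fA} and \eqref{CV}, which is exactly the computation you carry out. The extra remarks on locality and on the {\sc sode} condition being unnecessary are fine but not a different method.
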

\begin{proof}
The proof is the same as in Theorem \ref{Nth}, taking into account (\ref{fA}) and (\ref{CV}).
\\ \qed \end{proof}

If the infinitesimal  Noether symmetry is exact, then $Y=Z^C$ and
we can take $f_Y=\inn(Y)\Theta_\Lag=Z^V(\Lag)$.

\subsection{Equivalent Lagrangians and Noether's Theorem}
\protect\label{flsltn}

It is evident that, if $\Phi\in{\rm Diff}(\Tan Q)$ (resp. $Y\in\vf (\Tan Q)$)
is a canonical lift of a diffeomorphism (resp. of a
vector field) of $Q$ to $\Tan Q$ which, in addition,
lets the Lagrangian function of the system invariant,
then the symplectic form $\Omega_{\Lag}$, the
Lagrangian energy $E_{\Lag}$ and hence the Euler--Lagrange vector field
$X_\Lag$ (that is,  the Euler--Lagrange equations) are also invariant by $\Phi$.
All of this assures that the conditions of Definitions \ref{nls} and \ref{nlsinf} hold.
Nevertheless, this requirement is too strong
because there are Lagrangian functions that, being different,
give the same form $\Omega_{\Lag}$ and the same
Euler--Lagrange equations. This leads to the following:

\begin{definition}
Two Lagrangian functions $\Lag_1,\Lag_2\in\Cinfty (\Tan Q)$
are \textbf{equivalent} if
$$
\Omega_{\Lag_1}=\Omega_{\Lag_2}
\quad {\rm and} \quad
 X_{\Lag_1}= X_{\Lag_2} \ .
$$
\end{definition}
For regular Lagrangians this definition is equivalent to the following:

\begin{prop}
Two regular Lagrangians $\Lag_1,\Lag_2\in\Cinfty (\Tan Q)$
are equivalent if
$$
\Omega_{\Lag_1}=\Omega_{\Lag_2}
\quad {\rm and} \quad
 E_{\Lag_1}= E_{\Lag_2}+c \ \mbox{\rm( $c\in\Real$)} \ .
$$
\label{gaugecarac}
\end{prop}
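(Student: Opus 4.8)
The plan is to exploit that, since both $\Lag_1$ and $\Lag_2$ are regular, the common value $\Omega:=\Omega_{\Lag_1}=\Omega_{\Lag_2}$ is a symplectic form on $\Tan Q$, so the map $\flat_\Omega\colon\vf(\Tan Q)\to\df^1(\Tan Q)$, $X\mapsto\inn(X)\Omega$, is an isomorphism; and that, in the regular case, the Euler--Lagrange vector field $X_{\Lag_i}$ is precisely the unique solution to $\inn(X_{\Lag_i})\Omega_{\Lag_i}=\d E_{\Lag_i}$ (it is then automatically a {\sc sode}, as was shown earlier in this chapter). Under this identification the whole statement reduces to an elementary remark about closed $1$-forms, so there will be no genuinely hard step.

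First I would treat the easy implication. Assume $\Omega_{\Lag_1}=\Omega_{\Lag_2}=\Omega$ and $E_{\Lag_1}=E_{\Lag_2}+c$ with $c\in\Real$. Then $\d E_{\Lag_1}=\d E_{\Lag_2}$, hence $\inn(X_{\Lag_1})\Omega=\d E_{\Lag_1}=\d E_{\Lag_2}=\inn(X_{\Lag_2})\Omega$, and by non-degeneracy of $\Omega$ this forces $X_{\Lag_1}=X_{\Lag_2}$. Together with $\Omega_{\Lag_1}=\Omega_{\Lag_2}$ this is exactly the condition that $\Lag_1$ and $\Lag_2$ be equivalent in the sense of the preceding Definition.

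Conversely, suppose $\Lag_1$ and $\Lag_2$ are equivalent, i.e. $\Omega_{\Lag_1}=\Omega_{\Lag_2}=\Omega$ and $X_{\Lag_1}=X_{\Lag_2}$. Contracting the common vector field with the common form gives $\d E_{\Lag_1}=\inn(X_{\Lag_1})\Omega=\inn(X_{\Lag_2})\Omega=\d E_{\Lag_2}$, so $\d(E_{\Lag_1}-E_{\Lag_2})=0$; therefore $E_{\Lag_1}-E_{\Lag_2}$ is locally constant, and on each connected component of $\Tan Q$ it equals a constant, which yields $E_{\Lag_1}=E_{\Lag_2}+c$ with $c\in\Real$ as claimed.

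The only point deserving a word of care is the passage from ``locally constant'' to a single constant $c\in\Real$, which uses connectedness of the phase space $\Tan Q$ (equivalently of $Q$); this is the standing assumption of the text, and in any case it is harmless if one allows $c$ to be a locally constant function. Everything else is an immediate consequence of the non-degeneracy of $\Omega_\Lag$ for regular Lagrangians and of the characterisation of $X_\Lag$ as the unique $\flat_\Omega$-preimage of $\d E_\Lag$, so I would expect the written-out proof to be just a few lines.
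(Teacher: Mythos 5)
Your proof is correct and follows essentially the same route as the paper's: in both directions you contract the dynamical vector field with the common form $\Omega$, use the Lagrangian equation $\inn(X_{\Lag_i})\Omega_{\Lag_i}=\d E_{\Lag_i}$, and invoke non-degeneracy (uniqueness of the solution) to pass between $X_{\Lag_1}=X_{\Lag_2}$ and $\d E_{\Lag_1}=\d E_{\Lag_2}$. Your extra remark about connectedness of $\Tan Q$ when concluding $E_{\Lag_1}=E_{\Lag_2}+c$ is a minor point the paper leaves implicit, and it is handled correctly.
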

\begin{proof}
We must prove that, if $\Omega_{\Lag_1}=\Omega_{\Lag_2}$,
then $X_{\Lag_1}= X_{\Lag_2}$ is equivalent to
$E_{\Lag_1}= E_{\Lag_2}+c$.

If $X_{\Lag_1}= X_{\Lag_2}$, then 
$$
0=\inn(X_{\Lag_1})\Omega_{\Lag_1}-\d E_{\Lag_1}=\inn(X_{\Lag_2})\Omega_{\Lag_2}-\d E_{\Lag_1} \ ,
$$
which implies that $\d E_{\Lag_1}= \d E_{\Lag_2}$ and, hence, $E_{\Lag_1}= E_{\Lag_2}+c$.

Conversely, if $E_{\Lag_1}= E_{\Lag_2}+c$, then
$$
\inn(X_{\Lag_1})\Omega_{\Lag_1}=\d E_{\Lag_1}=\d E_{\Lag_2}=\inn(X_{\Lag_2})\Omega_{\Lag_2} \ ,
$$
and, as $\Omega_{\Lag_1}=\Omega_{\Lag_2}$,
this implies that $X_{\Lag_1}= X_{\Lag_2}$, since
${\Lag_1}$ and ${\Lag_2}$ are regular Lagrangians and the solution is unique.
\\ \qed \end{proof}

Next, we specify how equivalent Lagrangians are (see also \cite{AM-78}).
First we find the form of Lagrangian functions leading to vanishing Cartan forms:

\begin{prop}
\label{omega=0}
A Lagrangian $\Lag\in\Cinfty(\Tan Q)$ satisfies that
$\Omega_\Lag=0$ if, and only if, there exists a closed $1$-form
$\alpha\in\df^1(Q)$ in $Q$ and a  function
$f\in \mathcal{C}^\infty(Q)$, such that 
$\Lag=\widehat{\alpha}+\tau_Q^*f$ (up to a constant), 
where $\widehat\alpha\in\Cinfty (\Tan Q)$ is the function defined by
$$
\begin{array}{ccccc}
\widehat\alpha&\colon&\Tan Q&\longrightarrow&\Real
\\ & & (q,v) & \mapsto &\alpha_q(v)
\end{array}\ .
$$
\end{prop}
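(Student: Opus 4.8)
The plan is to settle both implications essentially at the level of a natural chart, using the fact, recalled above, that $\Theta_\Lag=\derpar{\Lag}{v^i}\,\d q^i$ and $\Omega_\Lag=-\d\Theta_\Lag$, so that the condition $\Omega_\Lag=0$ is equivalent to $\Theta_\Lag$ being a closed $1$-form.

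First, for the (easy) sufficiency direction, I would start from $\Lag=\widehat\alpha+\tau_Q^*f$ with $\alpha\in Z^1(Q)$ and $f\in\Cinfty(Q)$. In a chart where $\alpha=\alpha_i(q)\,\d q^i$ one has $\widehat\alpha(q,v)=\alpha_i(q)v^i$ and $(\tau_Q^*f)(q,v)=f(q)$, hence $\derpar{\Lag}{v^i}=\alpha_i(q)$ and therefore $\Theta_\Lag=\alpha_i(q)\,\d q^i=\tau_Q^*\alpha$. Consequently $\Omega_\Lag=-\d\tau_Q^*\alpha=-\tau_Q^*\d\alpha=0$, since $\alpha$ is closed. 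The freedom of adding a constant to $\Lag$ is irrelevant because it does not affect $\d\Lag$, hence neither $\Theta_\Lag$ nor $\Omega_\Lag$.

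For the converse, I would invoke the coordinate formula for $\Omega_\Lag$ displayed earlier,
\[
\Omega_\Lag=\frac{\partial^2\Lag}{\partial q^j\partial v^i}\,\d q^i\wedge\d q^j+\frac{\partial^2\Lag}{\partial v^j\partial v^i}\,\d q^i\wedge\d v^j,
\]
and observe that, the $2$-forms $\d q^i\wedge\d v^j$ being linearly independent from one another and from the $\d q^i\wedge\d q^j$, the hypothesis $\Omega_\Lag=0$ forces $\partial^2\Lag/\partial v^j\partial v^i=0$ throughout the chart. Since each fibre $\Tan_qQ$ is a connected vector space, this makes $\Lag$ affine along the fibres, so $a_i:=\derpar{\Lag}{v^i}$ does not depend on the velocities and $\Lag(q,v)=a_i(q)v^i+b(q)$ with $b(q):=\Lag(q,0)$. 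Then $\Theta_\Lag=a_i(q)\,\d q^i$ is $\tau_Q$-semibasic with coefficients depending only on the base coordinates, hence it is the pull-back $\tau_Q^*\alpha$ of a $1$-form $\alpha$ on $Q$; moreover $\tau_Q^*\d\alpha=\d\Theta_\Lag=-\Omega_\Lag=0$ and, $\tau_Q$ being a surjective submersion, $\d\alpha=0$, i.e. $\alpha\in Z^1(Q)$. Setting $f:=b$ and noting $\widehat\alpha(q,v)=\alpha_q(v)=a_i(q)v^i$ yields $\Lag=\widehat\alpha+\tau_Q^*f$, as required.

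The only delicate point—not really an obstacle, but the step deserving attention—is the globalization: one must check that $\alpha$ and $f$ are genuine objects on $Q$ and not merely chartwise data. This is guaranteed because the $a_i$ are the components of the globally defined semibasic form $\Theta_\Lag$ (so they transform as a $1$-form under changes of natural charts), and $b$ is the restriction of $\Lag$ to the zero section, which is a globally defined function on $Q$; the clause ``up to a constant'' in the statement simply records the freedom of shifting $f$, equivalently $\Lag$, by a real constant.
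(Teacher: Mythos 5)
Your proof is correct. The forward implication coincides with the paper's, but your converse is organized differently: the paper never touches the Hessian explicitly; it argues intrinsically that $\Theta_\Lag$, being closed (since $\Omega_\Lag=0$) and $\tau_Q$-semibasic, must be basic, producing $\alpha$ with $\d\Lag\circ J=\tau_Q^*\alpha$, and then applies the same ``closed $+$ semibasic $\Rightarrow$ basic'' lemma a second time to $\d(\Lag-\widehat\alpha)$ (after checking $\d\widehat\alpha\circ J=\tau_Q^*\alpha$ in coordinates) to get $\Lag-\widehat\alpha=\tau_Q^*f$ up to a constant. You instead read off from the coordinate expression of $\Omega_\Lag$ that the fibre Hessian vanishes, conclude that $\Lag$ is affine along the (connected) fibres, and obtain $f$ explicitly as the restriction of $\Lag$ to the zero section, getting an exact equality rather than one up to a constant; the closedness of $\alpha$ you recover intrinsically from $\tau_Q^*\d\alpha=\d\Theta_\Lag=0$ and injectivity of $\tau_Q^*$, exactly as the paper does. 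Your route is more elementary and makes the representation canonical ($b=\Lag\circ 0_Q$), at the cost of having to check by hand the chart-independence of $\alpha$ (which you do correctly, since the $a_i$ are the coefficients of the global semibasic form $\Theta_\Lag$); the paper's route delegates all globalization to the basic-form lemma and extends more directly to settings where coordinate computations are less convenient.
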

\begin{proof}
Let $\Omega_\Lag=-\d\Theta_\Lag$; then $\Theta_\Lag=\d\Lag\circ J$
is a closed and semibasic form in $\Tan Q$ and, as a consequence,
it is a basic form. Then, there exists $\alpha\in\df^1(Q)$ such that
\begin{equation}
\label{stau}
\d\Lag\circ J=\tau_Q^*\alpha \ .
\end{equation}
 Moreover, since $0=\d\Theta_\Lag=\d(\tau_Q^*\alpha)=\tau_Q^*(\d\alpha)$,
then $\d\alpha=0$; that is, $\alpha$ is a closed $1$-form in $Q$.
Furthermore, a simple computation in local coordinates shows that
 $\d\widehat\alpha\circ J=\tau_Q^*\alpha$, and from (\ref{stau}) we have that
 $$
\d\widehat\alpha\circ J=\tau_Q^*\alpha = \d\Lag\circ J \ .
$$
Then $\d(\Lag-\widehat\alpha)\circ J=0$, and therefore, the $1$-form 
$\d(\Lag-\widehat\alpha)$ is closed and semi-basic. 
As a consequence, $\d(\Lag-\widehat\alpha)$ is a basic $1$-form; 
that is, there exists $f\in\Cinfty(Q)$ such that
 $$
\d(\Lag-\widehat\alpha)=\tau_Q^*\d f=\d(\tau_Q^*f) \ ,
$$
and thus
 $\Lag=\widehat\alpha+\tau_Q^*f$ (up to a constant).

Conversely, let us suppose that $\Lag=\widehat\alpha+\tau_Q^*f$ 
(up to a constant). We have
$$
 \Theta_\Lag=\d\Lag\circ J=\d(\widehat\alpha+\tau_Q^*f)\circ J=
\d\widehat\alpha\circ J=\tau_Q^*\alpha  \ ,
$$
since $\d\tau_Q^*f$ vanishes on the vertical vector fields.
As $\alpha$ is closed, $\d\alpha=0$ and we obtain
$$
\Omega_\Lag=-\d\Theta_\Lag=-\d(\tau_Q^*\alpha)=-\tau_Q^*(\d\alpha)=0 \ .
$$
\qed \end{proof}

In a local chart of natural coordinates of $\Tan Q$,
the local expression of a closed $1$-form $\alpha$ is
\begin{displaymath}
\alpha =\alpha_i\d q^i=\derpar{g}{q^i}\d q^i \ ,
\end{displaymath}
for some local function $g$;
then the local expression of the function $\widehat\alpha$ is 
\begin{displaymath}
\widehat\alpha=\derpar{g}{q^i}v^i \ .
\end{displaymath} 

Now, from this result we obtain the explicit characterization of the equivalent Lagrangians:

\begin{prop}
\label{gaugeL}
Two regular Lagrangians $\Lag_1,\Lag_2\in\Cinfty (\Tan Q)$ are
equivalent if, and only if, $\Lag_1=\Lag_2+\widehat\alpha$  (up to a constant).
\end{prop}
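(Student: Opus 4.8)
The plan is to reduce everything to Proposition \ref{omega=0} together with the characterization of equivalence in Proposition \ref{gaugecarac}, exploiting that the assignments $\Lag\mapsto\Theta_\Lag$, $\Lag\mapsto\Omega_\Lag$ and $\Lag\mapsto E_\Lag$ are all $\Real$-linear, plus the two elementary homogeneity identities $\Delta(\widehat\alpha)=\widehat\alpha$ and $\Delta(\tau_Q^*f)=0$ (the first read off from the local expressions $\widehat\alpha=\alpha_i v^i$ and $\Delta=v^i\partial/\partial v^i$, the second because $\tau_Q^*f$ does not depend on the velocities).

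For the ``if'' direction, I would start from $\Lag_1=\Lag_2+\widehat\alpha$ (up to a constant) with $\alpha\in\df^1(Q)$ closed. By linearity and the computation $\Theta_{\widehat\alpha}=\d\widehat\alpha\circ J=\tau_Q^*\alpha$ already carried out in the proof of Proposition \ref{omega=0}, one gets $\Theta_{\Lag_1}=\Theta_{\Lag_2}+\tau_Q^*\alpha$, hence $\Omega_{\Lag_1}=\Omega_{\Lag_2}-\tau_Q^*(\d\alpha)=\Omega_{\Lag_2}$ since $\d\alpha=0$. For the energies, the identity $\Delta(\widehat\alpha)=\widehat\alpha$ gives $E_{\widehat\alpha}=\Delta(\widehat\alpha)-\widehat\alpha=0$, so $E_{\Lag_1}=E_{\Lag_2}$ (up to the same constant). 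Proposition \ref{gaugecarac} then yields $X_{\Lag_1}=X_{\Lag_2}$, so $\Lag_1$ and $\Lag_2$ are equivalent.

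For the ``only if'' direction, assume $\Lag_1,\Lag_2$ are equivalent regular Lagrangians. From $\Omega_{\Lag_1}=\Omega_{\Lag_2}$ and linearity we have $\Omega_{\Lag_1-\Lag_2}=0$, so Proposition \ref{omega=0} gives $\Lag_1-\Lag_2=\widehat\alpha+\tau_Q^*f$ (up to a constant) for some closed $\alpha\in\df^1(Q)$ and some $f\in\Cinfty(Q)$. It remains to show that $f$ is constant. By Proposition \ref{gaugecarac}, equivalence of regular Lagrangians forces $E_{\Lag_1}=E_{\Lag_2}+c$ for some $c\in\Real$; on the other hand, by linearity of the energy and the identities $\Delta(\widehat\alpha)=\widehat\alpha$, $\Delta(\tau_Q^*f)=0$, we obtain $E_{\Lag_1}-E_{\Lag_2}=E_{\widehat\alpha+\tau_Q^*f}=-\tau_Q^*f$. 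Hence $\tau_Q^*f$ is a constant; since $\tau_Q$ is a surjective submersion, $f$ is constant on $Q$, and absorbing it into the additive constant we conclude $\Lag_1=\Lag_2+\widehat\alpha$ (up to a constant).

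The only slightly delicate point — and hence the ``main obstacle'' — is the bookkeeping of the purely base-dependent term $\tau_Q^*f$ permitted by Proposition \ref{omega=0}: it is invisible to the Cartan $2$-form but shifts the Lagrangian energy by $-\tau_Q^*f$, so imposing equality of the dynamical vector fields (equivalently, of the energies up to a constant) is exactly what eliminates it. Everything else is a direct application of previously established results and the two homogeneity identities for $\Delta$.
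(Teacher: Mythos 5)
Your proof is correct and follows essentially the same route as the paper's: both directions reduce to Proposition \ref{omega=0} and Proposition \ref{gaugecarac} via the identities $\Delta(\widehat\alpha)=\widehat\alpha$ and $\Delta(\tau_Q^*f)=0$. The only cosmetic difference is in the ``only if'' part, where you isolate the conclusion as $E_{\Lag_1}-E_{\Lag_2}=-\tau_Q^*f$ being constant (hence $f$ constant), while the paper computes $\Lag_1-\Lag_2=\Delta(\Lag_1-\Lag_2)=\widehat\alpha$ directly; the content is the same.
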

\begin{proof}
Suppose that $\Lag_1,\Lag_2$ are equivalent.
As $\Omega_{\Lag_1}=\Omega_{\Lag_1}$, then
$\Omega_{\Lag_1-\Lag_2}=0$. Therefore, by Proposition \ref{omega=0}, 
there exist $\alpha\in Z^1(Q)$ and $f\in \mathcal{C}^\infty(Q)$ such that
$\Lag_1-\Lag_2=\widehat\alpha+\tau_Q^*f$ (up to a constant).
From Proposition \ref{gaugecarac} we know that 
$E_{\Lag_1}= E_{\Lag_2}$,  (up to a constant), or equivalently, 
$E_{\Lag_1}- E_{\Lag_2}=0$ (up to a constant). Then
$$
\begin{array}{lcl}
0&=&E_{\Lag_1}- E_{\Lag_2}=\Delta(\Lag_1)-\Lag_1
-\Delta(\Lag_2)+\Lag_2=\Delta(\Lag_1-\Lag_2)-(\Lag_1-\Lag_2)  \\ &=&
\Delta(\widehat\alpha+\tau_Q^*f)-(\Lag_1-\Lag_2) =
\widehat\alpha-(\Lag_1-\Lag_2)\quad \makebox{(up to a constant)}  \ .
 \end{array}
$$

Conversely, suppose that $\Lag_1=\Lag_2+\widehat\alpha$  (up to a  constant).
First, a simple computation gives
 $$
 \Omega_{\Lag_2}-\Omega_{\Lag_1}=
\d(\Theta_{\Lag_1}-\Theta_{\Lag_2})=\d(\d(\Lag_1-\Lag_2)\circ J)=
\d(\d\widehat\alpha\circ J)=\d(\tau_Q^*\alpha)=\tau_Q^*(\d\alpha)=0 \ .
$$
Thus $\Omega_{\Lag_1}=\Omega_{\Lag_2}$. Furthermore, 
as $\Delta(\widehat\alpha)=\widehat\alpha$, we have
 $$
 E_{\Lag_1}=\Delta(\Lag_1)-\Lag_1=
 \Delta(\Lag_2+\widehat\alpha)-(\Lag_2+\widehat\alpha)=
  E_{\Lag_2} + \widehat\alpha - \widehat\alpha= E_{\Lag_2}\quad
\makebox{\rm (up to a constant)} \ .
 $$
As $\Omega_{\Lag_1}=\Omega_{\Lag_2}$ and 
$ E_{\Lag_1}=E_{\Lag_2}$ (up to a constant), 
then $\Lag_1$ and $\Lag_2$ are equivalent (Proposition \ref{gaugecarac}).
\\ \qed \end{proof}

Taking this into account, one can define:

\begin{definition}
A \textbf{symmetry of the Lagrangian} is
a diffeomorphism $\Phi\colon\Tan Q\to\Tan Q$ such that
$\Lag$ and $\Phi^*\Lag$ are equivalent Lagrangians; that is,
$\Phi^*\Lag=\Lag+\widehat\alpha$ (up to constants),
where $\widehat\alpha\in\Cinfty (\Tan Q)$ is the function defined in Proposition \ref{omega=0}.
\end{definition}

\begin{definition}
An \textbf{ infinitesimal symmetry of the Lagrangian} is
a vector field $Y\in\vf (\Tan Q)$ such that  the uniparametric groups of diffeomorphisms 
generated by its flux are symmetries of the Lagrangian; that is,
$\Lie(Y)\Lag=\widehat\alpha$.
\end{definition}

A special case of these kinds of symmetries are:

\begin{definition}
A \textbf{strict symmetry of the Lagrangian} is a diffeomorphism 
$\Phi\colon\Tan Q\to\Tan Q$ such that $\Phi^*\Lag=\Lag$.
\end{definition}

\begin{definition}
An \textbf{ infinitesimal strict symmetry of the Lagrangian} is
a vector field $Y\in\vf (\Tan Q)$ such that  the uniparametric groups of diffeomorphisms 
generated by its flux are strict symmetries of the Lagrangian; that is,
$\Lie(Y)\Lag=0$.
\end{definition}

And, in particular, we define:

\begin{definition}
A (strict) symmetry of the Lagrangian  $\Phi\colon\Tan Q\to\Tan Q$ is 
said to be \textbf{natural} if there exists a diffeomorphism $\varphi\colon Q\to Q$ such that 
$\Phi=\Tan\varphi$.
\end{definition}

\begin{definition}
An infinitesimal (strict) symmetry of the Lagrangian $Y\in\vf (\Tan Q)$ is 
said to be \textbf{natural} if there
exists a vector field $Z\in\vf (Q)$ such that $Y=Z^C$.
\end{definition}

\begin{remark}{\rm 
A symmetry of the Lagrangian $\Phi\colon\Tan Q\to\Tan Q$
is not necessarily a Lagrangian Noether  symmetry since, in general, 
$\Phi^*\Omega_\Lag\not=\Omega_{\Phi^*\Lag}$ and
$\Phi^* E_\Lag\not= E_{\Phi^*\Lag}$,
as a simple calculation in coordinates shows.
In addition, it is not a Lagrangian dynamical symmetry.
Nevertheless, the following relation holds:
}\end{remark}

\begin{prop}
A diffeomorphism $\Phi\colon T^1_kQ\to T^1_kQ$ is a natural Lagrangian Noether symmetry
if, and only if, it is a natural symmetry of the Lagrangian.
\end{prop}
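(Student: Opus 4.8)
The plan is to reduce everything to three facts already established in the text: Proposition \ref{lema3} (for $\Phi=\Tan\varphi$ one has $\Phi^*\Theta_\Lag=\Theta_{\Phi^*\Lag}$, $\Phi^*\Omega_\Lag=\Omega_{\Phi^*\Lag}$ and $\Phi^*E_\Lag=E_{\Phi^*\Lag}$), Proposition \ref{gaugecarac} (two regular Lagrangians are equivalent if and only if they have the same Cartan $2$-form and the same energy up to a constant), and Proposition \ref{gaugeL} ($\Lag_1,\Lag_2$ are equivalent if and only if $\Lag_1=\Lag_2+\widehat\alpha$ up to a constant). Once these are in hand, each implication is a short substitution chain, with no genuine computation left to do.

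For the implication ``natural symmetry of the Lagrangian $\Longrightarrow$ natural Lagrangian Noether symmetry'', I would start from $\Phi=\Tan\varphi$ with $\Phi^*\Lag=\Lag+\widehat\alpha$ (up to a constant). By Proposition \ref{gaugeL} the Lagrangians $\Lag$ and $\Phi^*\Lag$ are equivalent, so Proposition \ref{gaugecarac} gives $\Omega_{\Phi^*\Lag}=\Omega_\Lag$ and $E_{\Phi^*\Lag}=E_\Lag+c$ for some $c\in\Real$. Applying now Proposition \ref{lema3} to $\Phi=\Tan\varphi$, we get $\Phi^*\Omega_\Lag=\Omega_{\Phi^*\Lag}=\Omega_\Lag$ and $\Phi^*E_\Lag=E_{\Phi^*\Lag}=E_\Lag+c$, which are exactly conditions (1) and (2) in Definition \ref{nls}; hence $\Phi$ is a natural Lagrangian Noether symmetry.

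For the converse I would assume $\Phi=\Tan\varphi$ with $\Phi^*\Omega_\Lag=\Omega_\Lag$ and $\Phi^*E_\Lag=E_\Lag+c$, and run the same chain backwards: Proposition \ref{lema3} yields $\Omega_{\Phi^*\Lag}=\Phi^*\Omega_\Lag=\Omega_\Lag$ and $E_{\Phi^*\Lag}=\Phi^*E_\Lag=E_\Lag+c$; Proposition \ref{gaugecarac} then tells us $\Lag$ and $\Phi^*\Lag$ are equivalent, and Proposition \ref{gaugeL} gives $\Phi^*\Lag=\Lag+\widehat\alpha$ up to a constant, i.e.\ $\Phi$ is a natural symmetry of the Lagrangian. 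The only point needing a word of care is that the appeals to Propositions \ref{gaugecarac} and \ref{gaugeL} require $\Phi^*\Lag$ to be a regular Lagrangian; this is immediate, since $\Phi$ is a diffeomorphism and $\Lag$ is regular, so $\Omega_{\Phi^*\Lag}=\Phi^*\Omega_\Lag$ is again nondegenerate. There is essentially no obstacle here: the statement is a bookkeeping corollary of the invariance of the canonical structures of $\Tan Q$ under canonical lifts together with the characterization of gauge-equivalent Lagrangians, and the ``up to a constant'' ambiguities match on both sides.
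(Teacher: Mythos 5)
Your proposal is correct and follows essentially the same route as the paper: both rest on Proposition \ref{lema3} to identify $\Phi^*\Omega_\Lag=\Omega_{\Phi^*\Lag}$, $\Phi^*E_\Lag=E_{\Phi^*\Lag}$, and then invoke the characterization of equivalent regular Lagrangians (Propositions \ref{gaugecarac} and \ref{gaugeL}) to translate the Noether conditions into $\Phi^*\Lag=\Lag+\widehat\alpha$ up to a constant. Your explicit check that $\Phi^*\Lag$ is again regular is a small addition the paper leaves implicit, but it does not change the argument.
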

\begin{proof}
If $\Phi=\Tan\varphi$, for some diffeomorphism $\varphi\colon Q\to Q$,
according to Lemma (\ref{lema3}) we have that
$$
\Phi^*\Omega_\Lag=\Omega_{\Phi^*\Lag} \quad , \quad
\Phi^*E_\Lag=E_{\Phi^*\Lag} \ ,
$$
and then
$$
\left. \begin{array}{cccc}
\Phi^*\Omega_\Lag&=&\Omega_\Lag &  \\
\Phi^*E_\Lag&=&E_\Lag &  \mbox{\rm (up to constants)}
\end{array}\right\}
\ \Longleftrightarrow \
\left\{ \begin{array}{cccc}
\Omega_{\Phi^*\Lag}&=&\Omega_\Lag & \\
E_{\Phi^*\Lag}&=&E_\Lag &  \mbox{\rm (up to constants)}
\end{array}\right. \ ;
$$
that is, $\Phi$ is a natural Lagrangian Noether symmetry
if, and only if, $\Lag$ and $\Phi^*\Lag$ are equivalent Lagrangians and hence
$\Phi$ is a natural symmetry of the Lagrangian.
 \\ \qed \end{proof}

This result holds also for  infinitesimal symmetries,
as can be proved taking the flux of  the vector fields that generate them.
Thus, we have the following immediate corollary:

\begin{prop}
A vector field $Y\in\vf (\Tan Q)$ is an infinitesimal natural Lagrangian Noether symmetry
if, and only if, it is an infinitesimal natural symmetry of the Lagrangian.
\end{prop}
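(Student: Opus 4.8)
The plan is to prove both implications by a direct infinitesimal computation, relying on the fact that the complete lift $Z^C$ preserves the canonical structures of $\Tan Q$ (Proposition \ref{lema2}), i.e. $\Lie(Z^C)J=0$ and $\Lie(Z^C)\Delta=0$, together with the identity $\d\widehat\alpha\circ J=\tau_Q^*\alpha$ established inside the proof of Proposition \ref{omega=0} and the elementary fact (immediate in natural coordinates) that $\Delta\widehat\alpha=\widehat\alpha$. In both definitions the hypothesis ``there exists $Z\in\vf(Q)$ with $Y=Z^C$'' is common, so I fix such a $Z$ once and for all and only match the remaining conditions.

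For the implication ``infinitesimal natural symmetry of the Lagrangian $\Rightarrow$ infinitesimal natural Lagrangian Noether symmetry'', assume $\Lie(Z^C)\Lag=\widehat\alpha$ with $\alpha\in Z^1(Q)$ closed. Since $\Lie(Z^C)$ commutes with $\d$ and (because $\Lie(Z^C)J=0$) with $\inn(J)$, one gets $\Lie(Z^C)\Theta_\Lag=\Lie(Z^C)\big(\inn(J)\d\Lag\big)=\inn(J)\d(\Lie(Z^C)\Lag)=\inn(J)\d\widehat\alpha=\tau_Q^*\alpha$, whence $\Lie(Z^C)\Omega_\Lag=-\d(\tau_Q^*\alpha)=-\tau_Q^*(\d\alpha)=0$, which is condition~1 of Definition \ref{nlsinf}. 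Using $\Lie(Z^C)\Delta=0$ and $\Delta\widehat\alpha=\widehat\alpha$, $\Lie(Z^C)E_\Lag=\Lie(Z^C)(\Delta(\Lag)-\Lag)=\Delta(\Lie(Z^C)\Lag)-\Lie(Z^C)\Lag=\Delta\widehat\alpha-\widehat\alpha=0$, which is condition~2. So $Y=Z^C$ is an infinitesimal natural Lagrangian Noether symmetry. For the converse, assume $\Lie(Z^C)\Omega_\Lag=0$ and $\Lie(Z^C)E_\Lag=0$, and set $\beta:=\Lie(Z^C)\Theta_\Lag$. Then $\d\beta=\Lie(Z^C)\d\Theta_\Lag=-\Lie(Z^C)\Omega_\Lag=0$, so $\beta$ is closed; moreover $\beta=\inn(J)\d(\Lie(Z^C)\Lag)$ is $\tau_Q$-semibasic, hence basic (a closed semibasic $1$-form is basic, as already used in Proposition \ref{omega=0}): $\beta=\tau_Q^*\alpha$ with $\d\alpha=0$. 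Comparing with $\d\widehat\alpha\circ J=\tau_Q^*\alpha$ gives $\inn(J)\d(\Lie(Z^C)\Lag-\widehat\alpha)=0$, so in natural coordinates $\Lie(Z^C)\Lag-\widehat\alpha$ has no dependence on the velocities, i.e. $\Lie(Z^C)\Lag=\widehat\alpha+\tau_Q^*k$ for some $k\in\Cinfty(Q)$. Finally the energy condition forces $k=0$, since $0=\Lie(Z^C)E_\Lag=\Delta(\Lie(Z^C)\Lag)-\Lie(Z^C)\Lag=\Delta(\widehat\alpha+\tau_Q^*k)-(\widehat\alpha+\tau_Q^*k)=-\tau_Q^*k$; thus $\Lie(Z^C)\Lag=\widehat\alpha$ with $\alpha$ closed, i.e. $Y=Z^C$ is an infinitesimal natural symmetry of the Lagrangian.

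Morally this is the infinitesimal version of the preceding Proposition, and one could equally well argue by integrating $Y=Z^C$ to its local flow $\{\Tan\varphi_t\}$, noting that the infinitesimal conditions are equivalent to each $\Tan\varphi_t$ being respectively a natural Lagrangian Noether symmetry or a natural symmetry of the Lagrangian (using $(\Tan\varphi_t)^*\widehat\alpha=\widehat{\varphi_t^*\alpha}$ and the linearity of $\alpha\mapsto\widehat\alpha$, so that primitives of families of closed forms stay closed), and then quoting that Proposition. The only genuinely delicate step in either route is the converse bookkeeping: that $\Lie(Z^C)\Theta_\Lag$ is basic and that the leftover term $\tau_Q^*k$ is annihilated precisely by the energy condition — this is where both defining conditions of a Noether symmetry are actually used, mirroring the structure of the proof of Proposition \ref{omega=0}.
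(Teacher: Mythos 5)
Your proof is correct, but it takes a genuinely different route from the paper's. The paper gives no separate computation for this statement: it presents it as an immediate corollary of the preceding finite (diffeomorphism) proposition, ``as can be proved taking the flux of the vector fields that generate them'', so its argument rests on Proposition \ref{lema3} ($\Phi^*\Omega_\Lag=\Omega_{\Phi^*\Lag}$, $\Phi^*E_\Lag=E_{\Phi^*\Lag}$ for $\Phi=\Tan\varphi$) and on the characterization of equivalent Lagrangians (Proposition \ref{gaugeL}), transported to the infinitesimal level by integrating $Y=Z^C$ to its local flow $\{\Tan\varphi_t\}$. You instead argue infinitesimally from the start, using $\Lie(Z^C)J=0$, $[Z^C,\Delta]=0$, the identity $\d\widehat\alpha\circ J=\tau_Q^*\alpha$, and the fact that a closed semibasic $1$-form is basic; in effect you re-prove inline the infinitesimal analogues of Proposition \ref{lema3} and of Propositions \ref{omega=0} and \ref{gaugeL}, and all the individual steps (the commutation of $\Lie(Z^C)$ with $\inn(J)$ and with $\Delta$, $\Delta(\widehat\alpha)=\widehat\alpha$, and the descent of $\Lie(Z^C)\Lag-\widehat\alpha$ to $Q$) are justified by facts the paper has already established. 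What your route buys is that it avoids the bookkeeping hidden in the flow argument, where one must handle a $t$-dependent family of closed forms $\alpha_t$ in $\Phi_t^*\Lag=\Lag+\widehat{\alpha_t}$ (up to constants) and differentiate it at $t=0$, and it makes transparent where each hypothesis enters: $\Lie(Y)\Omega_\Lag=0$ alone only gives $\Lie(Z^C)\Lag=\widehat\alpha+\tau_Q^*k$ with $\alpha$ closed, and the energy condition $\Lie(Y)E_\Lag=0$ is precisely what kills $\tau_Q^*k$. What the paper's route buys is brevity, since the structural work was already done at the finite level; your closing remark correctly identifies that flow-based alternative as the paper's actual argument.
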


Finally, a version of Noether's Theorem
for infinitesimal natural strict symmetries of the Lagrangian can be established as follows:

\begin{teor}
 {\rm (Classical Noether for Lagrangian systems}).
Let  $Y=Z^C\in\vf (\Tan Q)$, with $Z\in\vf(Q)$,
be an infinitesimal natural strict symmetry of the Lagrangian.
Then $\widetilde  f=Z^V(\Lag)$ is a conserved quantity; that is, $\Lie (X_\Lag)\widetilde  f=0$.
 \label{NthLs}
\end{teor}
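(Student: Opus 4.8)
The plan is to show that an infinitesimal natural strict symmetry of the Lagrangian $Y=Z^C$ is automatically an \emph{exact} infinitesimal natural Lagrangian Noether symmetry whose (global) Hamiltonian function is precisely $Z^V(\Lag)$, and then to invoke the Lagrangian Noether Theorem \ref{NthL} (equivalently, to repeat its short computation directly).

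First I would exploit that $Y=Z^C$ is a \emph{canonical lift}. By Proposition \ref{lema2}, $Z^C$ leaves the vertical endomorphism and the Liouville vector field invariant, i.e. $\Lie(Z^C)J=0$ and $\Lie(Z^C)\Delta=0$ (equivalently $[Z^C,\Delta]=0$). Applying the Leibniz rule for the Lie derivative acting on the contraction $\Theta_\Lag=\inn(J)\d\Lag$, together with $\Lie(Z^C)\d\Lag=\d(\Lie(Z^C)\Lag)$ and the hypothesis $\Lie(Z^C)\Lag=0$, I get
\[
\Lie(Z^C)\Theta_\Lag=\inn(\Lie(Z^C)J)\,\d\Lag+\inn(J)\,\d\big(\Lie(Z^C)\Lag\big)=0 ,
\]
hence $\Lie(Z^C)\Omega_\Lag=-\d\Lie(Z^C)\Theta_\Lag=0$, and, since $E_\Lag=\Delta(\Lag)-\Lag$, also $\Lie(Z^C)E_\Lag=\Delta(\Lie(Z^C)\Lag)-\Lie(Z^C)\Lag=0$ (using $[Z^C,\Delta]=0$). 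This identifies $Z^C$ as an infinitesimal natural Lagrangian Noether symmetry (Definition \ref{nlsinf}) which is, moreover, exact (Definition \ref{eClsinf}); cf. Proposition \ref{lema3}.

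Next I would compute the conserved quantity. From the identity $J(Z^C)=Z^V$ (the vertical endomorphism sends the complete lift of $Z$ to its vertical lift, which is immediate from the local expressions of $J$, $Z^C$ and $Z^V$ recalled in Section \ref{egft} and the Appendix) one has $\inn(Z^C)\Theta_\Lag=\d\Lag\big(J(Z^C)\big)=\d\Lag(Z^V)=Z^V(\Lag)$. Since $\Lie(Z^C)\Theta_\Lag=0$, Cartan's formula gives
\[
\inn(Z^C)\Omega_\Lag=-\inn(Z^C)\d\Theta_\Lag=\d\inn(Z^C)\Theta_\Lag=\d\big(Z^V(\Lag)\big),
\]
so $Z^V(\Lag)$ is a global Hamiltonian function of $Z^C$; this is exactly the conserved quantity $f_Y=\inn(Z^C)\Theta_\Lag=Z^V(\Lag)$ produced by Theorem \ref{NthL} in the exact case. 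Writing $\widetilde f:=Z^V(\Lag)$, the final step is the one-line Noether argument: since $\d\widetilde f=\inn(Z^C)\Omega_\Lag$, and using $\inn(X_\Lag)\Omega_\Lag=\d E_\Lag$ together with the skew-symmetry of $\Omega_\Lag$,
\[
\Lie(X_\Lag)\widetilde f=\inn(X_\Lag)\inn(Z^C)\Omega_\Lag=-\inn(Z^C)\d E_\Lag=-\Lie(Z^C)E_\Lag=0 ,
\]
which is the claim.

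The only genuinely delicate point is the implication $\Lie(Z^C)\Lag=0\Rightarrow\Lie(Z^C)\Theta_\Lag=0$: this is precisely where the \emph{naturalness} of $Y$ (that it be a canonical lift) is indispensable, as it is what forces $\Lie(Z^C)J=0$ via Proposition \ref{lema2}. For a generic infinitesimal strict symmetry $Y$ with $\Lie(Y)\Lag=0$ one only obtains $\Lie(Y)\Theta_\Lag=\inn(\Lie(Y)J)\d\Lag$, which need not vanish, and $Z^V(\Lag)$ has no direct analogue; so the hypothesis cannot be relaxed without altering the conclusion. Everything else---the identity $J(Z^C)=Z^V$ and the formal Cartan-calculus manipulations with the non-degenerate, skew-symmetric $\Omega_\Lag$---is routine and parallels the proofs of Theorems \ref{Nth} and \ref{NthL}.
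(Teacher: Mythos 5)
Your proof is correct and follows essentially the paper's own route: you show that the strict natural symmetry $Z^C$ is an exact natural Lagrangian Noether symmetry (via $\Lie(Z^C)J=0$ and $[Z^C,\Delta]=0$ from Proposition \ref{lema2}) whose Hamiltonian function is $\inn(Z^C)\Theta_\Lag=Z^V(\Lag)$, and then run the standard Noether computation. The only difference is packaging: the paper invokes the equivalence with natural Lagrangian Noether symmetries and Theorem \ref{NthL} with $\zeta_Y=0$, whereas you rederive those facts and the final cancellation $\Lie(X_\Lag)\widetilde f=-\Lie(Z^C)E_\Lag=0$ explicitly.
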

\begin{proof}
As every infinitesimal natural strict symmetries of the Lagrangian
is a natural symmetry of the Lagrangian and then,
it is a natural Noether Lagrangian symmetry, according to the above proposition;
then the result is a straightforward consequence of Theorem \ref{NthL},
since $\zeta_Y=\Lie(Y)\Theta_\Lag=\Lie(Z^C)\Theta_\Lag=0$.
\\ \qed \end{proof}

The following table summarizes the relation among  the different kinds
of symmetries in the Lagrangian formalism of  Lagrangian systems:
$$
\begin{array}{ccccccc}
\left\{ \begin{array}{c}
\mbox{\rm  Strict} \\ \mbox{\rm Lagrangian} \\ \mbox{\rm symmetries} 
\end{array}\right\}
& \supset &
\left\{ \begin{array}{c}
\mbox{\rm Strict} \\ \mbox{\rm  natural} \\ \mbox {\rm Lagrangian} \\ \mbox{\rm symmetries}  
\end{array}\right\}
& & & &
\\ \cap & & \cap & &  & & \\
\left\{ \begin{array}{c}
 \mbox {\rm symmetries} \\ \mbox{\rm  of the} \\ \mbox{\rm  Lagrangian}
\end{array}\right\}
& \supset &
\left\{ \begin{array}{c}
\mbox{\rm  Natural} \\ \mbox{\rm symmetries} \\ \mbox{\rm  of the} \\ \mbox{\rm Lagrangian} 
\end{array}\right\}
& = &
\left\{ \begin{array}{c}
 \mbox{\rm Natural} \\  \mbox{\rm Lagrangian} \\ \mbox{\rm Noether} \\ \mbox{\rm symmetries} \\ 

\end{array}\right\}
& \subset &
\left\{ \begin{array}{c}
\mbox{\rm  Natural} \\ \mbox{\rm Lagrangian}\\ \mbox{\rm  dynamical} \\ \mbox{\rm   symmetries} 
\end{array}\right\}
\\  & &  & & \cap & & \\
& & & &
\left\{ \begin{array}{c}
\mbox{\rm  Exact} \\ \mbox{\rm Lagrangian} \\ \mbox{\rm Noether} \\ \mbox{\rm   symmetries}
\end{array}\right\}
&  & \cap
\\  & &  & &  \cap & & 
\\  & &  & &
\left\{ \begin{array}{c}
\mbox{\rm  Lagrangian} \\ \mbox{\rm Noether} \\ \mbox{\rm  symmetries}
\end{array}\right\}
& \subset &
\left\{ \begin{array}{c}
 \mbox{\rm Lagrangian} \\ \mbox{\rm  dynamical} \\ \mbox{\rm  symmetries}
\end{array}\right\}
\end{array}
$$

\section{Variational formulation for Lagrangian systems}
\label{lagvariational}

One of the main characteristics of the Lagrangian systems is that 
they are variational; that is, their dynamical equations can be obtained 
from a variational principle, and the same happens for Hamiltonian systems
\cite{AA-78a,Bl-81,Bli,Cl-2018,NT-2006,Els,GF-63,He-68,La-70,lanczos,Le-2014,MFVMR-90,OR-83}.
Next, we present the variational formulation of Lagrangian systems.
In order to study this topic, we need to enlarge the phase space to include
the time coordinate, and so we have to consider the trivial bundle
$\Real\times\Tan Q$ (or $\Real\times\Tan^*Q$).

\subsection{Functional associated to a Lagrangian function}
\label{falf}

Let $(\Tan Q,\Lag )$ be a Lagrangian dynamical system.
We can associate to the Lagrangian function a functional defined on a suitable space of curves 
which are taken to be possible trajectories of the system. 

To get it,  consider the manifolds
$\Real\times\Tan Q$, $\Real\times Q$, and $\Real$
with the canonical projections
$$
\rho\colon\Real\times Q\to \Real  \ , \
\tau_{(1,0)}=id\times\tau_Q\colon\Real\times\Tan Q \to\Real\times Q \ , \
\tau_1=\rho\circ\tau_{(1,0)}\colon\Real\times\Tan Q \to \Real \ .
$$
The canonical projections $\rho$ and $\tau_1$ define a global coordinate in $\Real$
which is denoted $t$, and is physically identified with ``time''.
Then we can take the natural volume element $\d t$ in $\Real$ and its natural lift to
$\Real\times\Tan Q$, which is also denoted by $\d t$.

Now, let $\gamma \colon [a,b]\subset\Real \to Q$ be a curve, 
and let $\widetilde\gamma \colon [a,b]\subset\Real \to \Tan Q$ be
its canonical lift of $\gamma$ from $Q$ to the tangent bundle $\Tan Q$.
We denote $\mbox{\boldmath$\gamma$}\colon[a,b]\subset\Real \to\Real\times Q$ 
the curve given by $\mbox{\boldmath$\gamma$}(t)=(t,\gamma(t))$,
and by $\widehat{\mbox{\boldmath$\gamma$}} \colon [a,b]\subset\Real  \to\Real\times \Tan Q$ 
the curve $\widehat{\mbox{\boldmath$\gamma$}}(t)=(t,\widetilde\gamma(t))$.

Now, let $\alpha$ be a $1$-form in $\Real\times\Tan Q$.
As $\widehat{\mbox{\boldmath$\gamma$}}^*\alpha$ is a $1$-form in $\Real$,
we can define
$$
\int_{\widehat{\mbox{\boldmath$\gamma$}}} \alpha = \int_a^b\widehat{\mbox{\boldmath$\gamma$}}^*\alpha \ .
$$

Finally, given a (time-dependent) Lagrangian function 
$\Lag\colon\Real\times\Tan Q\to\Real$, consider the $1$-form
$\Lag\,\d t \in {\mit\Omega}^1(\Real\times\Tan Q)$.
Then we define

\begin{definition}
For every curve $\gamma \colon \Real \to Q$, the
\textbf{action} of $\Lag$ along $\gamma$ is the functional
$$
{\mathbf L}(\gamma ) := \int_{\widehat{\mbox{\boldmath$\gamma$}}} \Lag\d t \ .
$$
\end{definition}

\begin{remark}{\rm 
Usually we take curves 
$\gamma \colon \Real \to Q$ defined on a closed  interval
$I=[a,b]\subset\Real$, but we can extend the domain to the whole real numbers defining $\gamma$ out of the interval  as a constant function taking permanently the values on the two extreme points of the interval. Observe that, in this case, the extension may not be differentiable at the extreme points, but this is not a problem for the existence of the integral
\footnote{
From now on it is only necessary that the curve is of class $C^2$.}.

If the Lagrangian function is time-independent, 
it is enough to extend it from $\Tan Q$ to $\Real\times\Tan Q$.
}\end{remark}

\subsection{Hamilton variational principle}

The {\sl variational problem} consists in optimizing the above functional;
that is, to choose the curves for which the functional takes the extremal value. 
As this is a very complicated problem we restrict ourselves to look for the local extremal points of the functional; 
so we look for curves $\gamma$ such that the value of 
$L(\gamma)$ does not change for small variations of the curve,
but in first approximation. Next we are going to precise these ideas.

\begin{definition}
Let $\gamma \colon [a,b]\subset\Real \to Q$ be a curve with
$\gamma (a)=q_0$, $\gamma (b)=q_1$.
A \textbf{variation} of $\gamma$ is a map
$$
\mu \colon (-\epsilon ,\epsilon )\times [a,b]\subset\Real^2 \to Q
$$
such that
\begin{enumerate}
\item
$\mu (0,t)=\gamma (t)$, $t\in [a,b]$.
\item
$\mu (s,a)=q_0$, $\mu (s,b)=q_1$, for $s\in (-\epsilon ,\epsilon )$.
\end{enumerate}
We denote $\mu_s$ the curve defined by
$\mu_s \colon [a,b]\subset\Real \to Q$, with $\mu_s(t)=\mu (s,t)$.
\end{definition}

Observe that, given one of these curves  $\mu_s$, we can lift it to $\Tan Q$ and $\Real\times\Tan Q$ in the usual way.
We denote these lifts as $\widetilde \mu_s$ and $\widehat{\mbox{\boldmath$\mu$}}_s$, which are variations of $\widetilde\gamma$ and $\widehat{\mbox{\boldmath$\gamma$}}$ respectively.

There is a natural way to construct different variations of that type.
In fact, given $X \in \vf(Q)$ with
$X(q_0)=0$, $X(q_1)=0$; if $F_s$ is a local flux for $X$,
a variation of the curve $\gamma$ is obtained by
\beq
\label{variation}
\mu (s,t)=(F_s\circ\gamma)(t)=F_s(\gamma (t)) \ .
\eeq
We know that $F_s \colon Q \to Q$
is a diffeomorphism, and the corresponding lift
$\Tan F_s \colon \Tan Q \to \Tan Q$ is the local flux corresponding to the canonical lift $X^C\in \vf(\Tan Q)$ of $X$ to $\Tan Q$.
In the same way that \eqref{variation} is a variation of $\gamma$, 
we have that 
$$
\widetilde{\mu} (s,t)=(\Tan F_s\circ\widetilde\gamma)(t)=\Tan F_s(\widetilde\gamma (t))
$$
is a variation of $\widetilde\gamma$. 
Observe that $X^C$ is naturally a vector field in $\Real\times\Tan Q$ acting as the identity on the component $\Real$.
It is vertical with respect to the projection $\tau_1\colon\Real\times\Tan Q\to\Real$.

Given $\gamma$ and $F_s$ as above, the following properties hold: 
\begin{enumerate}
\item $\mu_s(t)=(F_s\circ\gamma)(t)=F_s(\gamma(t))$
\item $\widetilde{\mu}_s (t)=(\Tan F_s\circ\widetilde{\gamma})(t)=\Tan F_s(\widetilde{\gamma}(t))$
\item ${\mbox{\boldmath$\mu$}}_s(t)=(F_s\circ\mbox{\boldmath$\gamma$})(t)=F_s(\mbox{\boldmath$\gamma$})(t))=(t,F_s(\gamma(t))$
\item $\widehat{\mbox{\boldmath$\mu$}}_s(t)=(\Tan F_s\circ\widehat{\mbox{\boldmath$\gamma$}})(t)=\Tan F_s(\widehat{\mbox{\boldmath$\gamma$}}(t))=(t,\Tan F_s(\widetilde{\gamma}(t))$
\end{enumerate}

These kinds of variations obtained from specific vector fields on the manifold $Q$ are sufficient to obtain the local extreme of the functional ${\mathbf L}(\gamma)$ as we are going to see.

\begin{definition}
A curve $\gamma$ is a \textbf{local extreme} for the functional ${\mathbf L}$
if for every variation $\mu_s= F_s\circ\gamma$, with $F_s$ corresponding to a vector field $X$ as above, the following condition holds:
$$
\frac{d}{d s}\Big\vert_{s=0}\int_{\widehat{\mbox{\boldmath$\mu$}}_s}\Lag\d t =0 \ .
$$
This is the so-called \textbf{Hamilton variational principle}.
\end{definition}

To obtain a more useful way to manage with this definition, first we need the following results:

\begin{prop}
Let  $X\in\vf(Q)$ and $\mu_s$ be the corresponding variation of a curve $\gamma$. Then
\begin{enumerate}
\item
\(\dst \frac{d}{d s}\Big\vert_{s=0}\int_{\widehat{\mbox{\boldmath$\mu$}}_s}\Lag \d t=
\int_{\widehat{\mbox{\boldmath$\gamma$}}}(\Lie (X^C)\Lag)\,\d t\).
\item
If $\alpha \in {\mit\Omega}^1(\Tan Q)$, then \
$\displaystyle\frac{d}{d s}\Big\vert_{s=0}\int_{\widetilde{\mu}_s}\alpha =
\int_{\widetilde \gamma}\Lie (X^C)\alpha$.
\end{enumerate}
\end{prop}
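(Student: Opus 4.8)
The plan is to reduce both items to a single lemma about how the line integral of a $1$-form varies under a flow, and then to differentiate under the integral sign. Precisely, let $N$ be a manifold, $Y\in\vf(N)$ a vector field with local flow $\psi_s$, let $c\colon[a,b]\to N$ be a curve of class $C^2$ such that $\psi_s\circ c$ is defined for all $\vert s\vert<\epsilon$, and let $\beta\in\df^1(N)$. I claim that
\[
\frac{d}{d s}\Big\vert_{s=0}\int_{\psi_s\circ c}\beta=\int_c\Lie(Y)\beta \ .
\]
To prove this I would write $\dst\int_{\psi_s\circ c}\beta=\int_a^b(\psi_s\circ c)^*\beta=\int_a^b c^*(\psi_s^*\beta)$. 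Since $(s,t)\mapsto\psi_s(c(t))$ is jointly smooth on $(-\epsilon,\epsilon)\times[a,b]$ and $[a,b]$ is compact, one may differentiate under the integral sign; by the very definition of the Lie derivative through the local flow, $\dst\frac{d}{d s}\Big\vert_{s=0}\psi_s^*\beta=\Lie(Y)\beta$, and since pullback by $c$ commutes with this limit, the coefficient of $\d t$ in $c^*(\psi_s^*\beta)$ has $s$-derivative at $s=0$ equal to that of $c^*(\Lie(Y)\beta)$. Integrating over $[a,b]$ gives the claimed identity.

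Next I would prove item 2 by applying this lemma with $N=\Tan Q$, $Y=X^C$ (whose local flow is $\Tan F_s$), $c=\widetilde\gamma$, and $\beta=\alpha$. As recorded in the list of properties of the variations just above the statement, $\widetilde\mu_s=\Tan F_s\circ\widetilde\gamma$, so the lemma immediately yields $\dst\frac{d}{d s}\Big\vert_{s=0}\int_{\widetilde\mu_s}\alpha=\int_{\widetilde\gamma}\Lie(X^C)\alpha$, which is item 2.

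For item 1 I would apply the same lemma on $N=\Real\times\Tan Q$, with $Y=X^C$ regarded as the canonical lift extended so as to act as the identity on the $\Real$-factor (so that its flow is $\Tan F_s$ acting only on the $\Tan Q$-factor, and $X^C$ is $\tau_1$-vertical), with $c=\widehat{\mbox{\boldmath$\gamma$}}$ and $\beta=\Lag\,\d t$. Since $\widehat{\mbox{\boldmath$\mu$}}_s=\Tan F_s\circ\widehat{\mbox{\boldmath$\gamma$}}$, the lemma gives $\dst\frac{d}{d s}\Big\vert_{s=0}\int_{\widehat{\mbox{\boldmath$\mu$}}_s}\Lag\,\d t=\int_{\widehat{\mbox{\boldmath$\gamma$}}}\Lie(X^C)(\Lag\,\d t)$. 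It then remains to simplify $\Lie(X^C)(\Lag\,\d t)=(\Lie(X^C)\Lag)\,\d t+\Lag\,\Lie(X^C)\d t$; and because $\inn(X^C)\d t=0$ by verticality, Cartan's formula gives $\Lie(X^C)\d t=\d\inn(X^C)\d t+\inn(X^C)\d\d t=0$, so $\Lie(X^C)(\Lag\,\d t)=(\Lie(X^C)\Lag)\,\d t$ and item 1 follows.

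The only genuinely delicate point is the interchange of the $s$-derivative with the $t$-integration (and with the pullback by $c$): this rests on the smooth dependence of flows on their parameters together with compactness of $[a,b]$ to make the estimates uniform. One should also keep in mind that $X$ need not be complete, so $F_s$ and $\Tan F_s$ are only local flows; this causes no difficulty because $\gamma([a,b])$, $\widetilde\gamma([a,b])$ and $\widehat{\mbox{\boldmath$\gamma$}}([a,b])$ are compact, hence for $\vert s\vert$ small the flows are defined on neighbourhoods of these sets. Everything else — the formula $\dst\frac{d}{d s}\Big\vert_{s=0}\psi_s^*\beta=\Lie(Y)\beta$, the Leibniz rule for $\Lie$, and Cartan's formula — is standard.
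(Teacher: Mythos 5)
Your proof is correct and follows essentially the same route as the paper's: pull back along the variation, differentiate under the integral (using smooth dependence of the local flow on its parameter and compactness of $[a,b]$), and identify the resulting limit as the Lie derivative along $X^C$. Packaging this as a single lemma for $1$-forms and then reducing item 1 by $\Lie(X^C)(\Lag\,\d t)=(\Lie(X^C)\Lag)\,\d t$, via Cartan's formula and $\inn(X^C)\d t=0$, is only a cosmetic reorganization of the paper's argument, with a welcome explicit justification of the interchange of limit and integral.
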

\begin{proof}
Taking into account that
$\widetilde  \mu_s = \widetilde  F_s\circ\widetilde  \gamma$ and $\widehat{\mbox{\boldmath$\mu$}}_s=(\Tan F_s\circ\widehat{\mbox{\boldmath$\gamma$}})$,
we have:
\begin{enumerate}
\item
First,
\begin{eqnarray*}
\frac{d}{d s}\Big\vert_{s=0}\int_{\widehat{\mbox{\boldmath$\mu$}}_s}\Lag \d t &=&
\lim_{s\to 0}\frac{1}{s}\left(\int_{\widehat{\mbox{\boldmath$\mu$}}_s}\Lag \d t-
\int_{\widehat{\mbox{\boldmath$\mu$}}_0}\Lag \d t\right)
\\ &=&
\lim_{s\to 0}\frac{1}{s}\left(\int_a^b\widetilde  {\widehat\mu}_s^*(\Lag\,\d t)-
\int_a^b\widetilde  {\widehat\mu}_0^*(\Lag\,\d t)\right)
\\ &=&
\int_a^b\lim_{s\to 0}\frac{\widetilde  {\widehat\mu}_s^*\Lag -\widetilde  {\widehat\mu}_0^*\Lag}{s}\d t
=\int_a^b\widehat{\mbox{\boldmath$\gamma$}}^{*}\lim_{s\to 0}\frac{\Tan  F_s^*\Lag -\Tan  F_0^*\Lag}{s}\d t
\\&=&
\int_a^b\widehat{\mbox{\boldmath$\gamma$}}^{*}(\Lie ( X^C)\Lag) \d t =
\int_{\widehat{\mbox{\boldmath$\gamma$}}} (\Lie(X^C)\Lag )\d t \ ,
\end{eqnarray*}
because 
$\widetilde  \mu_s^{*}\alpha = (\widetilde  \gamma^{*}\circ\Tan  F_s^{*})\alpha=\widetilde  \gamma^{*}(\Tan  F_s^{*}\alpha)$.
\item
Second,
\begin{eqnarray*}
\frac{d}{d s}\Big\vert_{s=0}\int_{\widetilde  \mu_s}\alpha &=&
\lim_{s\to 0}\frac{1}{s}\left(\int_{\widetilde  \mu_s}\alpha -
\int_{\widetilde  \mu_0}\alpha\right)
=\lim_{s\to 0}\frac{1}{s}\left(\int_a^b\widetilde  \mu_s^*\alpha -
\int_a^b\widetilde  \mu_0\alpha\right)
\\ &=&
\int_a^b\lim_{s\to 0}\frac{\widetilde  \mu_s^*\alpha -\widetilde  \mu_0^*\alpha}{s}
=
\int_a^b\widetilde\gamma^*\lim_{s\to 0}\frac{\Tan\widetilde  F_s^*\alpha -\Tan\widetilde  F_0^*\alpha}{s}
\\&=&
\int_a^b\widetilde\gamma^*\Lie (X^C)\alpha=
\int_{\widetilde  \gamma} \Lie(X^C)\alpha \ .
\end{eqnarray*}
\end{enumerate}
\qed \end{proof}

Hence, the problem is to seek for the equation fulfilled by the curves 
$\gamma \colon [a,b]\subset\Real \to Q$
satisfying that $\gamma (a)=q_0$, $\gamma (b)=q_1$,
and such that
$$
\frac{d}{d s}\Big\vert_{s=0}\int_{\widetilde  \mu_s}\Lag \d t=
\int_{\widetilde  \gamma}\Lie (X^C)\d t = 0 \ ,
$$
for every variation $\mu_s=F_s\circ\gamma$ of $\gamma$.
Then, we need the following:

\begin{lem}
For every curve $\gamma \colon [a,b]\subset\Real \to Q$ we have
$$
\widetilde{\mbox{\boldmath$\gamma$}}^*(A_\Lag\,\d t)= 
\widetilde{\mbox{\boldmath$\gamma$}}^*(\Lag\,\d t) + \widetilde{\mbox{\boldmath$\gamma$}}^*(E_\Lag\,\d t)=
\widetilde{\gamma}^*\Theta_\Lag
\ .
$$
\end{lem}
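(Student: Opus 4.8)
The plan is a short verification, split according to the two equalities in the statement. The first, $\widehat{\mbox{\boldmath$\gamma$}}^*(A_\Lag\,\d t)=\widehat{\mbox{\boldmath$\gamma$}}^*(\Lag\,\d t)+\widehat{\mbox{\boldmath$\gamma$}}^*(E_\Lag\,\d t)$, is immediate: by the very definitions $A_\Lag=\Delta(\Lag)$ and $E_\Lag=\Delta(\Lag)-\Lag$, whence $A_\Lag=\Lag+E_\Lag$, and one only uses linearity of the pull-back. (Here I read the $\widetilde{\mbox{\boldmath$\gamma$}}$ appearing in the statement as the lift $\widehat{\mbox{\boldmath$\gamma$}}(t)=(t,\widetilde\gamma(t))$ of $\gamma$ to $\Real\times\Tan Q$.) All the content is in the second equality $\widehat{\mbox{\boldmath$\gamma$}}^*(A_\Lag\,\d t)=\widetilde\gamma^*\Theta_\Lag$, which I would obtain intrinsically by unwinding $\Theta_\Lag=\inn(J)\d\Lag=\d\Lag\circ J$ along the canonically lifted curve $\widetilde\gamma$.

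The key step is the observation that the vertical endomorphism carries the velocity of a canonically lifted curve to the Liouville vector field: $J_{\widetilde\gamma(t)}\big(\dot{\widetilde\gamma}(t)\big)=\Delta_{\widetilde\gamma(t)}$. To see this I would fix $t$, put $(q,v):=\widetilde\gamma(t)\in\Tan Q$, and use the two standard facts about canonical lifts of curves: that $v=\dot\gamma(t)$ as an element of $\Tan_qQ$, and that $\Tan_{(q,v)}\tau_Q\big(\dot{\widetilde\gamma}(t)\big)=\dot\gamma(t)$ (the holonomy/{\sc sode} characterization). Then, by the definition of $J$, $J_{(q,v)}\big(\dot{\widetilde\gamma}(t)\big)$ is the vertical lift to $(q,v)$ of $\Tan_{(q,v)}\tau_Q\big(\dot{\widetilde\gamma}(t)\big)=\dot\gamma(t)=v$, and the vertical lift of $v$ to $(q,v)$ is exactly $\Delta_{(q,v)}$ by the definition of the Liouville vector field. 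Consequently
\begin{align*}
(\widetilde\gamma^*\Theta_\Lag)_t &= \big(\d\Lag\circ J\big)_{\widetilde\gamma(t)}\big(\dot{\widetilde\gamma}(t)\big)\,\d t = \d\Lag_{\widetilde\gamma(t)}\big(\Delta_{\widetilde\gamma(t)}\big)\,\d t \\
&= \big(\Delta(\Lag)\circ\widetilde\gamma\big)(t)\,\d t = \big(A_\Lag\circ\widetilde\gamma\big)(t)\,\d t \ .
\end{align*}
On the other hand, since $\tau_2\circ\widehat{\mbox{\boldmath$\gamma$}}=\widetilde\gamma$ and the $\Real$-component of $\widehat{\mbox{\boldmath$\gamma$}}$ is the identity (so $\widehat{\mbox{\boldmath$\gamma$}}^*\d t=\d t$), one gets $\widehat{\mbox{\boldmath$\gamma$}}^*(A_\Lag\,\d t)=\big(A_\Lag\circ\widetilde\gamma\big)\,\d t$, matching the last line and closing the argument.

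As a cross-check (or as an alternative, purely local proof), everything can be computed in a natural chart $(q^i,v^i)$ of $\Tan Q$: writing $\gamma(t)=(q^i(t))$ so that $\widetilde\gamma(t)=(q^i(t),\dot q^i(t))$, and using the local expressions $\Theta_\Lag=\dfrac{\partial\Lag}{\partial v^i}\,\d q^i$ and $A_\Lag=v^i\dfrac{\partial\Lag}{\partial v^i}$, both $\widetilde\gamma^*\Theta_\Lag$ and $\widehat{\mbox{\boldmath$\gamma$}}^*(A_\Lag\,\d t)$ reduce to $\dfrac{\partial\Lag}{\partial v^i}(q(t),\dot q(t))\,\dot q^i(t)\,\d t$. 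I do not expect any genuine obstacle here; the only points requiring care are the identification of the point $\widetilde\gamma(t)$ with the vector $\dot\gamma(t)$ and the remark that, because $\widehat{\mbox{\boldmath$\gamma$}}$ is the identity on the time factor, $\d t$ pulls back to $\d t$ and no spurious term is produced.
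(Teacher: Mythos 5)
Your proof is correct. The first equality is, as you say, just $A_\Lag=\Lag+E_\Lag$ plus linearity of the pull-back, and your reading of $\widetilde{\mbox{\boldmath$\gamma$}}$ as the lift $\widehat{\mbox{\boldmath$\gamma$}}(t)=(t,\widetilde\gamma(t))$, with $\widehat{\mbox{\boldmath$\gamma$}}^*\d t=\d t$, is the intended one.

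Where you differ from the paper is in how the second equality is established. The paper works entirely in a natural chart around $\widetilde\gamma(t_0)$: it evaluates $\big(\widetilde\gamma^*(A_\Lag\,\d t)\big)_{t_0}$ and $(\widetilde\gamma^*\Theta_\Lag)_{t_0}$ on $\dst\frac{d}{dt}\Big\vert_{t_0}$, writes out $\Tan_{t_0}\widetilde\gamma\big(\frac{d}{dt}\big)=\dot\gamma^i\derpar{}{q^i}+\ddot\gamma^i\derpar{}{v^i}$, applies the coordinate expression of $J$, and checks that both sides reduce to $\dst\dot\gamma^i(t_0)\derpar{\Lag}{v^i}\Big\vert_{\widetilde\gamma(t_0)}$ — exactly your ``cross-check'' computation. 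Your primary argument is instead intrinsic: the single observation $J_{\widetilde\gamma(t)}\big(\dot{\widetilde\gamma}(t)\big)=\Delta_{\widetilde\gamma(t)}$ (the pointwise form of the {\sc sode}/holonomy characterization $J(X)=\Delta$, proved correctly from the definitions of $J$, the vertical lift and $\Delta$, together with $v=\dot\gamma(t)$ and $\Tan\tau_Q(\dot{\widetilde\gamma})=\dot\gamma$), from which $\Theta_\Lag(\dot{\widetilde\gamma})=\d\Lag\big(J(\dot{\widetilde\gamma})\big)=\Delta(\Lag)=A_\Lag$ along the curve. This buys a coordinate-free proof that isolates the structural reason the identity holds — the same fact that underlies the second-order condition — whereas the paper's computation is more pedestrian but entirely self-contained at the level of local expressions. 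Either route is complete; no gap.
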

\begin{proof}
If $t_0\in [a,b]$, let $(q^i,v^i)$ be
a canonical local coordinate system in a neighbourhood of
$\widetilde{\gamma} (t_0)$. 
On the one hand,
\begin{eqnarray*}
 \big(\widetilde{\gamma}^*(A_\Lag\,\d t)\big)_{t_0}\left(\frac{d}{d t}\Big\vert_{t_0}\right)
 &=&(\widetilde{\gamma}^*A_\Lag)(t_0)
=A_\Lag(\widetilde  \gamma (t_0))
\\ &=&
\left(v^i\derpar{\Lag}{v^i}\right)(\widetilde  \gamma (t_0)) =
\left(v^i\derpar{\Lag}{v^i}\right)(\gamma (t_0),\dot\gamma (t_0))=
\dot\gamma (t_0)\derpar{\Lag}{v^i}\Big\vert_{\widetilde  \gamma (t_0)} \ ,
\end{eqnarray*}
and, on the other hand,
\begin{eqnarray*}
(\widetilde{\gamma}^*\Theta_\Lag)_{t_0}\left(\frac{d}{d t}\Big\vert_{t_0}\right)
&=&
(\Theta_\Lag)_{t_0}\left(\Tan_{t_0}\widetilde\gamma\left(\frac{d}{d t}\Big\vert_{t_0}\right)\right)=
(\d\Lag\circ J)_{\widetilde\gamma (t_0)}
\left(\Tan_{t_0}\widetilde{\gamma}\left(\frac{d}{d t}\Big\vert_{t_0}\right)\right)
\\ &=&
(\d\Lag\circ J)_{\widetilde\gamma (t_0)}
\left(\dot\gamma^i(t_0)\derpar{}{q^i}\Big\vert_{\widetilde  \gamma (t_0)}
+\ddot\gamma^i(t_0)\derpar{}{v^i}\Big\vert_{\widetilde  \gamma (t_0)}\right)
\\ &=&
(\d\Lag)_{\widetilde\gamma (t_0)}
\left(\dot\gamma^i(t_0)\derpar{}{v^i}\Big\vert_{\widetilde\gamma (t_0)}\right)
=\dot\gamma^i(t_0)\derpar{}{v^i}\Big\vert_{\widetilde\gamma (t_0)} \ .
\end{eqnarray*}
and the result follows.
\\ \qed \end{proof}

\subsection{Euler--Lagrange equations}

Now we can obtain the equation for the curve $\gamma$ solution to the variational problem. 

\begin{teor}
The curves $\gamma$ which are local extremes for the functional ${\mathbf L}$
are the solutions to the Euler--Lagrange equations.
\label{hamprinc}
\end{teor}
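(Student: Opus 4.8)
The plan is to compute the variation of the action functional explicitly using the two technical results just proved — the formula for $\dfrac{d}{ds}\big\vert_{s=0}\int_{\widehat{\mbox{\boldmath$\gamma$}}}\Lag\,\d t$ in terms of $\Lie(X^C)\Lag$, and the Lemma expressing $\widetilde{\gamma}^*(A_\Lag\,\d t)=\widetilde{\gamma}^*\Theta_\Lag$ — and then to integrate by parts and invoke the fundamental lemma of the calculus of variations to isolate the Euler--Lagrange equations. Since $X^C$ acts as the identity on the $\Real$ factor, $\Lie(X^C)\,\d t = 0$, so $\Lie(X^C)(\Lag\,\d t) = (\Lie(X^C)\Lag)\,\d t$, which legitimizes passing the Lie derivative through the form.

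First I would write $\Lie(X^C)\Lag = \inn(X^C)\d\Lag$ and use the Cartan-type identity on $\Theta_\Lag$: recall $\Theta_\Lag = \d\Lag\circ J = \inn(J)\d\Lag$ and $\Omega_\Lag = -\d\Theta_\Lag$. The key algebraic step is that, because $X^C$ is a \textsc{sode}-compatible canonical lift (it preserves $J$ and $\Delta$ by Proposition \ref{lema2}, and $J(X^C)=X^V$), one has the decomposition $\Lie(X^C)\Lag = -\inn(X^C)\Omega_\Lag + \Lie(X^C)E_\Lag + \text{(total derivative term)}$, or more directly $\inn(X^C)\d\Lag = \inn(X^C)\d E_\Lag + \inn(X^C)\d(\Delta\Lag - E_\Lag)$ is not quite what is wanted; instead the cleanest route is: along $\widehat{\mbox{\boldmath$\gamma$}}$, pull back $\Lie(X^C)(\Lag\,\d t) = \big(\inn(X^C)\d(\Lag\,\d t) + \d\,\inn(X^C)(\Lag\,\d t)\big)$. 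The second piece $\d(\inn(X^C)(\Lag\,\d t)) = \d(\Lag\cdot \inn(X^C)\d t) = 0$ since $\inn(X^C)\d t = 0$ again, so actually $\Lie(X^C)(\Lag\,\d t) = \inn(X^C)\d(\Lag\wedge\d t)$ — I would compute $\d(\Lag\,\d t) = \d\Lag\wedge\d t$ and contract, being careful about signs, to obtain $(\inn(X^C)\d\Lag)\,\d t$, consistent with the Proposition. Then I would split $\d\Lag = \d E_\Lag + $ (terms reconstructing $\Omega_\Lag$ via $J$ and $\Delta$) so that $\int_{\widehat{\mbox{\boldmath$\gamma$}}}(\Lie(X^C)\Lag)\,\d t = \int_a^b \widehat{\mbox{\boldmath$\gamma$}}^*\big(\inn(X^C)\Omega_\Lag + \d(\text{boundary term})\big)$, where the boundary term is exactly $\widetilde{\gamma}^*\Theta_\Lag$ supplied by the Lemma, evaluated between $t=a$ and $t=b$.

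The boundary contribution $\big[\Theta_\Lag(\widetilde\gamma)(X^C)\big]_a^b = \big[\dot\gamma^i \derpar{\Lag}{v^i}\big]_a^b$ vanishes because $X$, and hence $X^C$ and $X^V$, vanish at the endpoints $q_0=\gamma(a)$ and $q_1=\gamma(b)$ by the definition of an admissible variation. What remains is $\int_a^b \widehat{\mbox{\boldmath$\gamma$}}^*\big(\inn(X^C)\Omega_\Lag\big) = 0$ for all such $X$. Working in natural coordinates with $\widetilde\gamma(t) = (q^i(t),\dot q^i(t))$ and $X = f^i(q)\derpar{}{q^i}$ (so $X^C = f^i\derpar{}{q^i} + \dot q^j\derpar{f^i}{q^j}\derpar{}{v^i}$), inserting the local expression $\Omega_\Lag = \frac{\partial^2\Lag}{\partial q^j\partial v^i}\d q^i\wedge\d q^j + \frac{\partial^2\Lag}{\partial v^j\partial v^i}\d q^i\wedge\d v^j$ and pulling back along $\widehat{\mbox{\boldmath$\gamma$}}$, the integrand becomes $f^j(q(t))\Big(\derpar{\Lag}{q^j} - \dfrac{d}{dt}\derpar{\Lag}{v^j}\Big)\Big|_{\widetilde\gamma(t)}\,\d t$ after an integration by parts (which again produces only the vanishing endpoint term). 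Since this must vanish for every choice of the functions $f^j$ with $f^j(a)=f^j(b)=0$, the fundamental lemma of the calculus of variations forces $\dfrac{d}{dt}\Big(\derpar{\Lag}{v^j}\circ\widetilde\gamma\Big) - \derpar{\Lag}{q^j}\circ\widetilde\gamma = 0$, i.e. the Euler--Lagrange equations \eqref{ELequats}, equivalently the intrinsic equation \eqref{Lcurv}. Conversely, reading the chain backwards shows any solution of the Euler--Lagrange equations is a local extreme, since the first variation then vanishes identically.

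The main obstacle I anticipate is purely bookkeeping: getting the signs right when converting $\inn(X^C)\d\Lag$ into $\inn(X^C)\Omega_\Lag$ plus an exact (boundary) term, and making sure the ``exact term'' produced really is $\d(\widetilde\gamma^*\Theta_\Lag(X^C))$ so that the Lemma applies verbatim. One must also be slightly careful that the curve $\gamma$ need only be $C^2$ (as the earlier remark allows), so the integration by parts $\int f^j \dfrac{d}{dt}(\partial\Lag/\partial v^j)\,\d t$ is justified; this is fine since $\partial\Lag/\partial v^j\circ\widetilde\gamma$ is $C^1$ in $t$. No regularity of $\Lag$ as a Lagrangian (non-degeneracy of $\Omega_\Lag$) is needed for this equivalence — the variational principle characterizes the Euler--Lagrange equations whether or not the system is regular.
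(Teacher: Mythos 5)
Your overall strategy coincides with the paper's: along the lifted curve rewrite $\Lag\,\d t$ as $\Theta_\Lag-E_\Lag\,\d t$ via the Lemma, differentiate under the integral using $\Lie(X^C)$, split $\Lie(X^C)\Theta_\Lag$ with Cartan's formula, discard the boundary term (legitimate because $\Theta_\Lag$ is semibasic and $X$ vanishes at $q_0,q_1$), and then use the arbitrariness of $X$. The only real difference is that you finish in natural coordinates with an integration by parts and the fundamental lemma of the calculus of variations, whereas the paper stops at the intrinsic equation $\inn(\dot{\widetilde\gamma})\Omega_\Lag=\d E_\Lag$ together with the \textsc{sode} reading; your coordinate finish is acceptable and, as you note, needs no regularity of $\Lag$.

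There is, however, one step that fails as written: after removing the boundary term you assert that what remains is $\int_a^b\widetilde\gamma^*\bigl(\inn(X^C)\Omega_\Lag\bigr)=0$ and that this integrand alone pulls back to $f^j\bigl(\derpar{\Lag}{q^j}-\frac{d}{dt}\derpar{\Lag}{v^j}\bigr)$. It does not: $\Omega_\Lag(X^C,\dot{\widetilde\gamma})$ contains only second derivatives of $\Lag$, so no integration by parts can produce the term $\derpar{\Lag}{q^j}$. The contribution of the $-E_\Lag\,\d t$ piece, namely $-\int_a^b\widetilde\gamma^*\bigl(\Lie(X^C)E_\Lag\bigr)\,\d t$, must be kept: the correct reduced first variation is $\int_a^b\inn(X^C)\bigl(\inn(\dot{\widetilde\gamma})\Omega_\Lag-\d E_\Lag\bigr)\,\d t$, and only after adding $-X^C(E_\Lag)$ do the second-derivative terms recombine into the Euler--Lagrange integrand (equivalently, argue classically from $\int_a^b\bigl(\derpar{\Lag}{q^j}f^j+\derpar{\Lag}{v^j}\dot f^j\bigr)\d t$). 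A related slip: the boundary term is $\bigl[\inn(X^C)\Theta_\Lag\bigr]_a^b=\bigl[f^i\derpar{\Lag}{v^i}\bigr]_a^b$, not $\bigl[\dot\gamma^i\derpar{\Lag}{v^i}\bigr]_a^b$ (the latter is $A_\Lag$ along the lift and has no reason to vanish at the endpoints); it vanishes exactly because $\Theta_\Lag$ is semibasic and $X(q_0)=X(q_1)=0$, as you say. With the energy term restored, your argument closes as the paper's does, giving $\inn(\dot{\widetilde\gamma})\Omega_\Lag=\d E_\Lag$ along the extremal, i.e. the Euler--Lagrange equations \eqref{ELequats}.
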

\begin{proof}
We have that
\begin{eqnarray*}
0 &=&
\frac{d}{d s}\Big\vert_{s=0}\int_{\widehat{\mbox{\boldmath$\mu$}}_s}(\Lag \d t )
=
\frac{d}{d s}\Big\vert_{s=0}\int_a^b\widetilde  \mu_s^*\Lag \d t
=
\frac{d}{d s}\Big\vert_{s=0}
\int_a^b\widetilde{\mu}_s^*\Theta_\Lag -\widehat{\mbox{\boldmath$\mu$}}_s^*E_\Lag\d t
\\ 
&=&
\frac{d}{d s}\Big\vert_{s=0}\int_{\widehat{\mbox{\boldmath$\mu$}}_s}\Theta_\Lag -E_\Lag\d t =
\int_{\widetilde{\gamma}}\Lie (X^C)\Theta_\Lag -\Lie(X^C)(E_\Lag )\,\d t
\\
&=&
\int_{\widehat{\mbox{\boldmath$\gamma$}}}\d\inn (X^C)\Theta_\Lag +
\inn (X^C)\d\Theta_\Lag -\Lie(X^C)(E_\Lag )\d t \ .
\end{eqnarray*}
But, by Stokes Theorem, being
$X(q_0)=0=X(q_1)$ and $\Theta_\Lag$ a semibasic form, we have that 
\(\dst \int_{\widetilde  \gamma}\d\inn (\widetilde  X)\d\Theta_\Lag =0 \); hence
\begin{eqnarray*}
0 &=&
\int_{\widetilde{\gamma}}\inn (X^C)\d\Theta_\Lag -\Lie(X^C)(E_\Lag )\d t
=\int_a^b\widetilde  \gamma^*(\inn (X^C)\d\Theta_\Lag ) -
\widetilde  \gamma^*(\Lie(X^C)(E_\Lag ))\d t
\\ &=&
\int_a^b\left[\d\Theta_\Lag (X^C,(\Tan_t\widetilde  \gamma)\frac{d}{d t})-
(\Lie(X^C)(E_\Lag ))(\widetilde\gamma (t)\right]\d t \ ,
\end{eqnarray*}
because
$$
\left(\widetilde  \gamma^*(\inn (X^C)\d\Theta_\Lag \right)\frac{d}{d t}\Big\vert_t
=(\inn (X^C)\d\Theta_\Lag )\left(\Tan_t\widetilde  \gamma\frac{d}{d t}\right)
=(d\Theta_\Lag )_{\widetilde\gamma (t)}
\left(X^C,\Tan_t\gamma(\frac{d}{d t})\right) \ ,
$$
and we have
$$
0=\int_a^b\inn (X^C_{\widetilde\gamma (t)}
\left[ -\inn \left(\Tan_t\widetilde  \gamma\frac{d}{d t}\right)\d\Theta_\Lag -
\d E_\Lag\right]\d t \ .
$$
But $X^C$ is an arbitrary vector field; then, observing that
\(\dst
\Tan_t\widetilde\gamma\left(\frac{d}{d t}\right)=(\widetilde  \gamma (t),\dot{\widetilde  \gamma} (t))\),
we have
$$
-\inn (\widetilde  \gamma (t),\dot{\widetilde  \gamma} (t))\d\Theta_\Lag = \d E_\Lag \ ,
$$
and, as $-\d\Theta_\Lag =\Omega_\Lag$, we obtain that
$$
\inn (\widetilde  \gamma (t),\dot{\widetilde  \gamma} (t))\Omega_\Lag = \d E_\Lag \ .
$$
(Recall that 
$(\widetilde  \gamma (t),\dot{\widetilde  \gamma} (t))$
is the tangent vector to the curve $\widetilde  \gamma$ at the point $\gamma(t)$).

If we consider that the curves  $\widetilde  \gamma (t)$
which are the solutions to the problem are the integral curves of a vector field
$X_\Lag\in\vf(\Tan Q)$, observing that
the curve $\widetilde\gamma$ is a natural lift
to $\Tan Q$ of the curve $\gamma$ in $Q$, then this vector field must verify the following conditions:
\begin{enumerate}
\item
$\inn (X_\Lag )\Omega_\Lag =\d E_\Lag$.
\item
$X_\Lag$ is a {\sc sode} ($J(X_\Lag )=\Delta$).
\end{enumerate}
and, as we know, these are the geometrical expression of the Euler--Lagrange equations.
\\ \qed \end{proof}

As a final remark, observe that we have said that the vector field $X^C$ is arbitrary, and this is not exact. The real arbitrary vector field is $X$ instead of $X^C$. But the 1-form
\(\dst\inn\left(\Tan\widetilde\gamma\Big(\frac{d}{d t}\Big)\right)\d\Theta_\Lag +
\d E_\Lag\)
is semibasic and, hence, when we apply it to $X^C$ the obtained value depends only on $X$ and not on its canonical lift
 $X^C$ from $Q$ to $\Tan Q$.

\subsection{Hamilton--Jacobi variational principle. Relation with the canonical Hamiltonian formalism}
\label{hamvariational}

When we studied the Hamiltonian formalism in the previous Sections,
we commented about the possibility of obtaining the dynamical equations from a variational principle. 
Following the discussion on the equivalence between the Lagrangian and the Hamiltonian formalisms,
we can establish the relation between the corresponding variational formalisms.

We have studied above the variational approach to the Lagrangian formalism, then using the corresponding geometrical and dynamical elements we can relate it with the Hamiltonian one.
Thus, let $\Leg \colon \Tan Q \to \Tan^*Q$ be the Legendre map associated to the Lagrangian $\Lag$.
If $\gamma \colon [a,b]\subset\Real \to Q$ is a curve we have
\begin{eqnarray*}
{\mathbf L}(\gamma ) &=&
\int_{\widehat{\mbox{\boldmath$\gamma$}}}\Lag\d t=
\int_{\widehat{\mbox{\boldmath$\gamma$}}}\Theta_\Lag -E_\Lag\,\d t =
\int_{\widehat{\mbox{\boldmath$\gamma$}}}\Leg^*\Theta -\Leg^*{\rm h}\,\d t 
\\ &=&
\int_{\widehat{\mbox{\boldmath$\gamma$}}}\Leg^*(\Theta -{\rm h}\,\d t) =
\int_{\Leg\circ\widehat{\mbox{\boldmath$\gamma$}}}\Theta - {\rm h}\,\d t \ ;
\end{eqnarray*}
where we have extended the Legendre transformation $\Leg \colon \Tan Q \to \Tan^*Q$ (with the same notation) to
$\Leg \colon\Real\times \Tan Q \to \Real\times\Tan^*Q$,
 as the identity on $\Real$
(see Definition \ref{legmap}).
Note that, if $\gamma$ is a solution to the Hamilton variational problem, that is of the Lagrangian formalism, then 
$\Leg\circ\widehat{\mbox{\boldmath$\gamma$}}$ is the solution to the following variational problem: 

\begin{definition}
The \textbf{Hamilton-Jacobi variational principle} consists in
finding the curves $\zeta \colon [a,b]\subset\Real \to \Tan^*Q$,
with fixed extremes, such that they are extremal for the functional
$$
{\mathbf H}(\zeta ):=\int_\zeta \Theta -{\rm h}\d t \ .
$$
\end{definition}

By a similar computation as above, 
although something simpler because there are no canonical lifts,
we obtain that 

\begin{teor}
The curves $\zeta$ which are local extremes for the functional ${\mathbf H}$
are the solutions to the Hamilton equations.
\end{teor}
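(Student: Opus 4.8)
The plan is to mimic exactly the computation carried out for the Hamilton variational principle in the Lagrangian case (Theorem \ref{hamprinc}), but now working directly on $\Real\times\Tan^*Q$, where the absence of canonical lifts makes everything lighter. First I would set up the variational framework: given a curve $\zeta\colon[a,b]\subset\Real\to\Tan^*Q$ with fixed endpoints $\zeta(a),\zeta(b)$, consider variations $\zeta_s$ generated by a $\pi_Q$-projectable-free vector field $V\in\vf(\Tan^*Q)$ vanishing at the endpoints; more precisely, if $G_s$ is the flux of $V$, put $\zeta_s=G_s\circ\zeta$ and denote by $\bm\zeta_s\colon[a,b]\to\Real\times\Tan^*Q$ the curve $\bm\zeta_s(t)=(t,\zeta_s(t))$. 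Here $V$ is regarded, as in the Lagrangian discussion, as a vector field on $\Real\times\Tan^*Q$ acting as the identity on the $\Real$ factor, hence $\tau$-vertical. The key point is that, unlike the Lagrangian case, $V$ itself (not a canonical lift of something on $Q$) is the arbitrary object, so no projectability subtlety arises.

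Next I would compute the first variation. By the same limit-exchange argument used in the Proposition preceding Theorem \ref{hamprinc} (differentiating under the integral sign and recognizing the Lie derivative), one gets
\[
\frac{d}{ds}\Big\vert_{s=0}\int_{\bm\zeta_s}(\Theta-{\rm h}\,\d t)
=\int_{\bm\zeta}\Lie(V)(\Theta-{\rm h}\,\d t)
=\int_{\bm\zeta}\big(\d\inn(V)\Theta+\inn(V)\d\Theta-(\Lie(V){\rm h})\,\d t\big)\ .
\]
Since $V$ vanishes at the endpoints and $\inn(V)\Theta$ is a function, Stokes' theorem kills the exact term $\int_{\bm\zeta}\d\inn(V)\Theta$. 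Using $\Omega=-\d\Theta$ and $\Lie(V){\rm h}=\inn(V)\d{\rm h}$, the condition that $\zeta$ be a local extreme for every admissible $V$ becomes
\[
0=\int_a^b\zeta^*\big(-\inn(V)\Omega-\inn(V)\d{\rm h}\big)\,\d t
=\int_a^b\Big(-\Omega_{\zeta(t)}\big(V,\dot\zeta(t)\big)-\d{\rm h}_{\zeta(t)}(V)\Big)\,\d t\ ,
\]
where $\dot\zeta(t)$ denotes the tangent vector of $\zeta$ at $t$ (equivalently $\widetilde\zeta(t)=(\zeta(t),\dot\zeta(t))$ in the notation of Theorem \ref{teo:Hcurv}).

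Finally I would extract the pointwise equation. Because $V$ is an arbitrary vector field along $\zeta$ and the integrand is the evaluation on $V$ of the $1$-form $-\inn(\dot\zeta(t))\Omega-\d{\rm h}$ (which, being built from $\Omega$ and $\d{\rm h}$, has no $\d t$ component to worry about in the fixed-endpoint setting), the vanishing of the integral for all such $V$ forces
\[
\inn(\widetilde\zeta)(\Omega\circ\zeta)=\d{\rm h}\circ\zeta\ ,
\]
which is exactly equation \eqref{elmh2}. By Theorem \ref{teo:Hcurv} (applied with $\alpha=\d{\rm h}$), this says precisely that $\zeta$ is an integral curve of the Hamiltonian vector field $X_{\rm h}$ solution to $\inn(X_{\rm h})\Omega=\d{\rm h}$; that is, $\zeta$ satisfies the Hamilton equations, whose local form is \eqref{hameq3}. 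The converse inclusion is immediate by reversing the computation.

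I do not expect a genuine obstacle here — this is the ``something simpler'' case the text itself anticipates. The one point deserving a word of care is the fundamental-lemma step: one must be sure that an arbitrary smooth vector field $V$ along $\zeta$ vanishing at the endpoints can be realized (at least locally in $t$, which suffices for a localization argument) as the restriction of the flux-generator of a vector field on $\Tan^*Q$ with the same endpoint-vanishing; this is standard and can be invoked just as the analogous statement was tacitly used in the proof of Theorem \ref{hamprinc}. Alternatively, and perhaps more cleanly, one can shortcut the whole argument by the equivalence already established: if $\zeta=\Leg\circ\widehat{\bm\gamma}$ with $\gamma$ a solution of the Hamilton variational problem, then the chain of equalities displayed just before this theorem shows ${\mathbf H}(\zeta)={\mathbf L}(\gamma)$ up to the reparametrization by $\Leg$, so extremals of ${\mathbf H}$ correspond to extremals of ${\mathbf L}$, hence to Euler--Lagrange solutions, hence via Theorem \ref{eqteorema} to solutions of the Hamilton equations; but I would present the direct variational computation as the primary proof since it does not presuppose a Lagrangian origin for $(\Tan^*Q,\Omega,{\rm h})$.
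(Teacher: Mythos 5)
Your proposal is correct and follows essentially the same route as the paper: variations generated by a vector field on $\Tan^*Q$ vanishing at the endpoints, first variation expressed through $\Lie(Z)\Theta-(\Lie(Z){\rm h})\,\d t$, then Cartan's formula, Stokes, and arbitrariness of the generator yield $\inn(\widetilde\zeta)(\Omega\circ\zeta)=\d{\rm h}\circ\zeta$. The extra details you supply (the fundamental-lemma step and the Legendre-map shortcut) are fine but not needed beyond what the paper condenses into ``following the same lines as in Theorem \ref{hamprinc}''.
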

\begin{proof}
In fact, let $\zeta\colon [a,b]\subset\Real\to \Tan^*Q$ a curve with 
$\zeta(a)=A,\zeta(b)=B$, and $Z\in\vf(\Tan^{*} Q)$ with $Z(A)=0,Z(B)=0$. 
Let $F_{s}$ the flux of $Z$ and consider the variation $\eta_{s}=F_{s}\circ\zeta$.
We have
$$
{\mathbf H}(F_{s}\circ\zeta)=
\int_a^b(F_{s}\circ\zeta)^*\Theta-({\rm h}\circ F_{s}\circ\zeta)\d t=
\int_a^b\zeta^*(F_{s}^*\Theta)-\zeta^*(F_{s}^*{\rm h})\d t\ .
$$
Hence:
$$
\frac{d}{d s}\Big\vert_{s=0}{\mathbf H}(F_{s}\circ\zeta)=
\int_a^b\zeta^*(\Lie(Z)\Theta)-\zeta^*(\Lie(Z){\rm h})\d t=
\int_\zeta \Lie(Z)\Theta-(\Lie(Z){\rm h})\d t\ .
$$
And following the same lines as in Theorem \ref{hamprinc},
we conclude that the tangent vector to the curve $\zeta$ satisfies the Hamilton equations
$$
\inn(\widetilde\zeta)\Omega=\d {\rm h}(\zeta(t))\, ,
$$
with $\Omega = -\d\Theta$, the symplectic form of the cotangent bundle.
If we suppose that the curve is an integral curve of a vector field $X_{\rm h}$ then
$$
\inn(X_{\rm h})\Omega=\d {\rm h}\, .
$$
\qed\end{proof}

\begin{remark}{\rm 
The Hamilton--Jacobi variational principle can be stated for every
Hamiltonian system $(M,\Omega,\alpha)$ in general, taking a symplectic potential
$\theta\in\df^1(M)$, such that $\Omega=-\d\theta$,
and a (local) Hamiltonian function $h\in\Cinfty(M)$,
such that $\alpha=\d h$.
}\end{remark}

\vspace{0.2cm}

Summarizing, 
in the texts of classical mechanics, all the above results are collected
stating the so-called:

\noindent{\bf  Minimal action principles} \
{\it
Given a Lagrangian system $(\Tan Q,\Lag )$ and the corresponding associated canonical 
Hamiltonian system $(\Tan^*Q,\Omega,{\rm h})$, we have:

\noindent\textbf{Hamilton's Principle of minimal action}:
The dynamics of the Lagrangian system
$(\Tan Q,\Lag)$ is given by  the curves
$\gamma \colon [a,b]\subset\Real \to Q$,
with fixed extremes, such that they minimize the functional
$$
{\mathbf L}(\gamma ):=\int_{\widehat{\mbox{\boldmath$\gamma$}}} \Lag\d t \ .
$$
\textbf{Hamilton--Jacobi principle of minimal action}:
The dynamics of the Hamiltonian system
$(\Tan^*Q,\Omega,\d{\rm h})$ is given by the curves
$\zeta \colon [a,b]\subset\Real \to \Tan^*Q$,
with fixed extremes, such that they minimize the functional
$$
{\mathbf H}(\zeta ):=\int_\zeta \Theta -{\rm h}\,\d t \ .
$$
}

\section{Examples}

In this last section, we study two of the most typical and relevant mechanical systems
using the symplectic Lagrangian, the Hamiltonian, and the unified formalisms.

\subsection{Harmonic oscillator}
\label{sho}

The classical harmonic oscillator is a mechanical system 
made of a point-particle with mass $m$
moving in $\Real$, submitted to a recuperative force ({\sl Hook's law\/}).
The configuration bundle is $Q=\Real$, with coordinate $(q)$.

\subsubsection{Lagrangian formalism}

The Lagrangian formalism takes place in $\Tan Q\simeq\Real^2$, with coordinates 
$(q,v)$, and the Lagrangian function is 
$$
\Lag=\frac{1}{2}(mv^2-kq^2) \quad , \quad k\in\Real^+ \ .
$$
The Lagrangian elements are
$$
E_\Lag=\frac{1}{2}(mv^2+kq^2) \quad , \quad
\Theta_\Lag=mv\d q \quad , \quad
\Omega_\Lag = m\,\d q\wedge\d v \ ,
$$
and the Lagrangian is regular.
For $\displaystyle X_\Lag=f\derpar{}{q}+g\derpar{}{v}$, equation \eqref{elm} gives
$$
\inn(X_\Lag)\Omega_\Lag=
m\,(f\, \d v-g\, \d q)=
mv\, \d v+kq\, \d q=\d E_\Lag\ ,
$$
which leads to
$$
f=v \quad , \quad mg=-kq \ .
$$
So the Euler--Lagrange vector field is
$$
X_\Lag=v\derpar{}{q}-\frac{k}{m}q\derpar{}{v} \ ,
$$
and its integral curves $(q(t),v(t))$ are the solutions to
$$
\frac{dq}{dt} =v \quad , \quad m\frac{dv}{dt}=-kq
\quad \Longrightarrow \quad m\frac{d^2q}{dt^2}=-kq  \ ,
$$
which is the Euler--Lagrange equation for the system.

\subsubsection{Hamiltonian formalism}

For the Hamiltonian formalism, $\Tan^*Q\simeq\Real^2$,
with coordinates $(q,p)$. First, the Legendre transformation is
$$
\Leg^*q=q  \quad , \quad \Leg^*p=mv  \ ,
$$
which is a diffeomorphism (the Lagrangian is hyperregular).
The canonical Hamiltonian function is
$$
{\rm h}=\frac{p^2}{2m}+kq^2 \ .
$$
As $\Omega=\d q\wedge\d p$,
for $\displaystyle X_{\rm h}=F\derpar{}{q}+G\derpar{}{p}$,
equation \eqref{elmh} gives
$$
\inn(X_{\rm h})\Omega= F\, \d p-G\, \d q-=
\frac{p}{m}\, \d p+kq\, \d q=\d{\rm h}\ ,
$$
which leads to
$$
F=\frac{p}{m} \quad , \quad G=-kq  \ .
$$
So the Hamiltonian vector field is
$$
X_{\rm h}=\frac{p}{m}\derpar{}{q}-kq\derpar{}{p} \ ,
$$
and its integral curves $(q(t),p(t))$ are the solutions to
$$
m\frac{dq}{dt} =p \quad , \quad
\frac{dp}{dt}=-kq \ ,
$$
which are the Hamilton equations for the system.

Observe that, using the Legendre map, the Hamilton and
the Euler--Lagrange equations of the system are, in fact, equivalent.
Obviously, we have that $\Leg_*X_\Lag=X_{\rm h}$.


\subsubsection{Unified Lagrangian--Hamiltonian formalism}

Consider the unified bundle ${\cal W}=\Tan Q\times_Q\Tan^*Q\simeq\Real^3$
with coordinates $(q,v,p)$. On it, we have the canonical presymplectic form
$$
\Omega_{\cal W}=\d q\wedge\d p 
$$
and the Hamiltonian function
$$
{\cal H}=pv-\frac{1}{2}(mv^2-kq^2) \ .
$$
For $\displaystyle X_{\cal H}=f\derpar{}{q}+g\derpar{}{v}+G\derpar{}{p}$,
equation \eqref{Whamilton-contact-eqs0} gives
$$
\inn(X_{\cal H})\Omega_{\cal W}= f\, \d p-G\, \d q=
kq\, \d q+(p-mv)\,\d v+v\, \d p=\d{\cal H}\ ,
$$
which leads to
$$
f=v \quad , \quad G=-kq \quad , \quad p=mv  \ .
$$
The last equation is a constraint which defines the submanifold
${\cal W}_0\hookrightarrow{\cal W}$ and gives the Legendre map.
Therefore, the Hamiltonian vector field is
$$
X_{\cal H}\vert_{{\cal W}_0}=v\derpar{}{q}+g\derpar{}{v}-kq\derpar{}{p} \ .
$$
Then, the tangency condition leads to
$$
X_{\cal H}(p-mv)=-kq-gm=0 \ \Longleftrightarrow\
g=-\frac{kq}{m} \quad \mbox{\rm (on ${\cal W}_0$)} \ 
$$
and then
$$
X_{\cal H}\vert_{{\cal W}_0}=v\derpar{}{q}-\frac{kq}{m}\derpar{}{v}-kq\derpar{}{p} \ .
$$
Its integral curves $(q(t),v(t),p(t))$ are the solutions to
$$
\frac{dq}{dt}=v \quad , \quad
m\frac{dv}{dt}=-kq \quad , \quad
\frac{dp}{dt}=-kq \ .
$$
Te first two equations are equivalent to
$$
m\frac{d^2q}{dt^2}=-kq \ ,
$$
which is the Euler--Lagrange equation of the system. 
Furthermore, using the constraint $p=mv$ (the Legendre map),
the first and third equations are
$$
\frac{dq}{dt}=\frac{p}{m} \quad , \quad
\frac{dp}{dt}=-kq \ ;
$$
which are the Hamilton equations for the system.

\subsection{Central forces: the Kepler problem}
\label{Kp}

The {\sl Kepler problem} consists in studying 
the motion of a particle of mass $m$
under the action of {\sl Newtonian central forces}.
It is well-known that the motion of such a particle is on a plane and hence
$Q=\Real^2$. We take polar coordinates $(r,\phi)$ in the plane
(with origin at the center of the force).

\subsubsection{Lagrangian formalism}

The Lagrangian formalism takes place in $\Tan Q\simeq\Real^4$, with local coordinates 
$(r,\phi,v_r,v_\phi)$. The Lagrangian function is 
$$
\Lag=\frac{1}{2}m(v_r^2+r^2v_\phi^2)-\frac{K}{r} \quad , \quad K\not=0 \ ;
$$
therefore
\beann
E_\Lag&=&\frac{1}{2}m(v_r^2+r^2v_\phi^2)+\frac{K}{r} \ , \\
\Theta_\Lag&=&m(v_r\,\d r+r^2v_\phi\,\d\phi) \ , \\
\Omega_\Lag&=&m(\d r\wedge\d v_r+r^2\d\phi\wedge\d v_\phi-
2rv_\phi\,\d r\wedge\d \phi) \ ,
\eeann
and the Lagrangian is regular.
For $\displaystyle \X_\Lag=f_r\derpar{}{r}+f_\phi\derpar{}{\phi}+
g_r\derpar{}{v_r}+g_\phi\derpar{}{v_\phi}$, 
equation \eqref{elm} gives
\beann
\inn(X_\Lag)\Omega_\Lag&=&
m\,[f_r\, \d v_r+f_\phi r^2\, \d v_\phi-(g_r-2rv_\phi f_\phi)\, \d r-
(g_\phi r^2+2rv_\phi f_r)\, \d\phi] \\
&=& mv_r\, \d v_r+mr^2v_\phi\, \d v_\phi+\Big(mrv_\phi^2-\frac{K}{r^2}\Big)\d r=\d E_\Lag \ ,
\eeann
which leads to
$$
f_r=v_r \quad , \quad f_\phi=v_\phi \quad , \quad 
mg_r=2mrv_\phi f_\phi-mrv_\phi^2+\frac{K}{r^2} \quad , \quad 
g_\phi=-\frac{2v_\phi f_r}{r}   \ ,
$$
and then the Euler--Lagrange vector field is
$$
X_\Lag=v_r\derpar{}{r}+v_\phi\derpar{}{\phi}+\Big(rv_\phi^2+\dst\frac{K}{mr^2}\Big)\derpar{}{v_r}
-\frac{2v_\phi v_r}{r}\derpar{}{v_\phi} \ .
$$
Then, its integral curves $(r(t),\phi(t),v_r(t),v_\phi(t))$ are the solutions to
\beann
\frac{dr}{dt} =v_r \quad , \quad \frac{d\phi}{dt} =v_\phi \quad , \quad
m\frac{dv_r}{dt}=mrv_\phi^2+\frac{K}{r^2}  \quad , \quad
\frac{dv_\phi}{dt}=-\frac{2v_\phi v_r}{r} &\Longrightarrow& \\
\Longrightarrow \quad m\frac{d^2r}{dt^2}=mr\Big(\frac{d\phi}{dt}\Big)^2+\frac{K}{r^2}  
\quad , \quad  \frac{d^2\phi}{dt^2}=
-\frac{2}{r}\,\frac{d\phi}{dt}\,\frac{dr}{dt}
&\Longrightarrow& \\
\Longrightarrow \quad \frac{d}{dt}\Big(m\frac{dr}{dt}\Big)=mr\Big(\frac{d\phi}{dt}\Big)^2+\frac{K}{r^2}  
\quad , \quad \frac{d}{dt}\Big(mr^2\frac{d\phi}{dt}\Big)=0 &\ , &
\eeann
which are the Euler--Lagrange equations for this system.

There is a Lagrangian exact Noether symmetry 
which is generated by the vector field $\dst Y=\derpar{}{\phi}$, 
since
\bea
\Lie(Y)\Theta_\Lag&=& 
\Lie\left(\derpar{}{\phi}\right)\left(m(v_r\,\d r+r^2v_\phi\,\d\phi)\right) 
\nonumber \\ &=&
\d\inn\left(\derpar{}{\phi}\right)\left(m(v_r\,\d r+r^2v_\phi\,\d\phi)\right)+
\inn\left(\derpar{}{\phi}\right)\,\d\left(m(v_r\,\d r+r^2v_\phi\,\d\phi)\right) 
\nonumber\\ &=&
\d(mr^2v_\phi)+\inn\left(\derpar{}{\phi}\right)
\left(m(\d v_r\wedge\d r+r^2\d v_\phi\wedge\d\phi-
2rv_\phi\,\d \phi\wedge\d r)\right) 
\nonumber \\ &=&
m(2rv_\phi\d r+r^2\d v_\phi-r^2\d v_\phi-2rv_\phi\d r)=0 \ , 
\label{NsymKep1} \\
\Lie(Y)E_\Lag&=& 
\Lie\left(\derpar{}{\phi}\right)\left(\frac{1}{2}m(v_r^2+r^2v_\phi^2)+\frac{K}{r}\right)=0 \ ,
\label{NsymKep2}
\eea
and hence its associated conserved quantity is
$$
f_Y=\inn\left(\derpar{}{\phi}\right)\Theta_\Lag=mr^2v_\phi \ ;
$$
that is, the angular momentum, as the last Euler--Lagrange equation shows.

\subsubsection{Hamiltonian formalism}

For the Hamiltonian formalism, $\Tan^*Q\simeq\Real^4$,
with local coordinates $(r,\phi,p_r,p_\phi)$. First, the Legendre transformation is,
$$
\Leg^*r=r  \quad , \quad \Leg^*\phi=\phi  \quad , \quad
\Leg^*p_r=mv_r  \quad , \quad \Leg^*p_\phi=mr^2v_\phi\ ,
$$
which is a diffeomorphism (the Lagrangian is hyperregular).
The canonical Hamiltonian function is
$$
{\rm h}=\frac{p_r^2}{2m}+\frac{p_\phi^2}{2mr^2}+\frac{K}{r} \ .
$$
As $\Omega=\d r\wedge\d p_r+\d \phi\wedge\d p_\phi$,
for $\displaystyle X_{\rm h}=F_r\derpar{}{r}+F_\phi\derpar{}{\phi}+
G_r\derpar{}{p_r}+G_\phi\derpar{}{p_\phi}$,
equation \eqref{elmh} gives
$$
\inn(X_{\rm h})\Omega= 
F_r\, \d p_r+F_\phi\, \d p_\phi-G_r\, \d r-G_\phi\, \d \phi=
\frac{p_r}{m}\, \d p_r+\frac{p_\phi}{mr^2}\, \d p_\phi-
\left(\frac{p_\phi^2}{mr^3}+\frac{K}{r^2}\right)\, \d r=\d{\rm h}\ ,
$$
which leads to
$$
F_r=\frac{p_r}{m} \quad , \quad F_\phi=\frac{p_\phi}{mr^2} \quad , \quad
G_r=\frac{p_\phi^2}{mr^3}+\frac{K}{r^2} \quad , \quad G_\phi=0  \ .
$$
Then, the Hamiltonian vector field is
$$
X_{\rm h}=\frac{p_r}{m}\derpar{}{r}+\frac{p_\phi}{mr^2}\derpar{}{\phi}+\left(\frac{p_\phi^2}{mr^3}+\frac{K}{r^2}\right)\derpar{}{p_r} \ ,
$$
and its integral curves $(r(t),\phi(t),p_r(t),p_\phi(t))$ are the solutions to
$$
m\frac{dr}{dt} =p_r \quad , \quad mr^2\frac{d\phi}{dt} =p_\phi \quad , \quad
\frac{dp_r}{dt}=\frac{p_\phi^2}{mr^3}+\frac{K}{r^2} \quad , \quad
\frac{dp_\phi}{dt}=0 \ ,
$$
which are the Hamilton equations for this system.

As in the above example, using the Legendre map
one can check that the Hamilton and
the Euler--Lagrange equations of the system are, in fact, equivalent.
Obviously $\Leg_*X_\Lag=X_{\rm h}$.

The Hamiltonian exact Noether symmetry is again the
vector field $\dst Y=\derpar{}{\phi}$, since
\beann
\Lie(Y)\Theta&=& 
\Lie\left(\derpar{}{\phi}\right)\left(p_r\,\d r+p_\phi\,\d\phi\right) =\\ &=&
\d\inn\left(\derpar{}{\phi}\right)\left(p_r\,\d r+p_\phi\,\d\phi\right) +
\inn\left(\derpar{}{\phi}\right)\,\d\left(p_r\,\d r+p_\phi\,\d\phi\right)  \\ &=&
\d p_\phi-\d p_\phi=0 \ , 
\\
\Lie(Y){\rm h}&=& 
\Lie\left(\derpar{}{\phi}\right)\left(\frac{p_r^2}{2m}+\frac{p_\phi^2}{2mr^2}+\frac{K}{r}\right)=0 \ ,
\eeann
and, as the last Hamilton equation shows,
 its associated conserved quantity is again the angular momentum
$$
f_Y=\inn\left(\derpar{}{\phi}\right)\Theta=p_\phi \ .
$$

This Hamiltonian system is also a good and simple example to show how the geometric method of reduction by symmetries proceeds.
The symmetry group $G$ is the group of rotations on the orbit plane.
The Lie algebra ${\bf g}$ is spanned by the vector field
$\displaystyle\xi\equiv\derpar{}{\phi}$ and hence ${\bf g}^*=\{\d\phi\}$.
Thus, the set of fundamental vector fields $\tilde{\bf g}$ is generated by
the vector field $\displaystyle\tilde\xi\equiv Y=\derpar{}{\phi}\in\vf(\Tan^*Q)$.
The action $\Phi\colon G\times\Tan^*Q\to\Tan^*Q$ is effective, free, and proper,
and is a strongly symplectic action on the symplectic manifold $(\Tan^*Q,\Omega)$, since it is exact, as we have seen.
In this way, the momentum map is given by
$$
({\rm J}(r,p_r,\phi,p_\phi))\left(\derpar{}{\phi}\right):=p_\phi
\quad \mbox{\rm (for every $(r,p_r,\phi,p_\phi)\in\Tan^*Q$)}
$$
and, for every weakly regular value
$\mu =\mu_0\d\phi\in{\bf g}^*$, the level sets of this map are
$$
{\rm J}^{-1}(\mu):=\{ (r,p_r,\phi,p_\phi)\in\Tan^*Q\ \vert\ p_\phi=\mu_0\} \ ;
$$
They are defined by the constraints $\zeta :=p_\phi-\mu_0$;
that is, the hypersurfaces of constant angular momentum in $\Tan^*Q$,
and hence $\dst\derpar{}{\phi}$ is tangent to all of them.
On each one, we have the presymplectic Hamiltonian system
$({\cal J}^{-1}(\mu),\Omega_\mu,{\rm h}_\mu)$, where
$$
\Omega_\mu :=j_\mu^*\Omega=\d p_r\wedge\d r
\quad , \quad
{\rm h}_\mu=\frac{p_r^2}{2m}+\frac{K}{r} \ ;
$$
hence $\displaystyle\ker\Omega_\mu=\left\langle\derpar{}{\phi}\right\rangle$.
In this case, $G_\mu=G$ and, applying the Marsden--Weinstein reduction theorem,
this presymplectic system reduces to another symplectic one,
$({\cal J}^{-1}(\mu )/G,\widehat\Omega,\widehat{\rm h})$,
where the local coordinate are $(r,p_r)$, and
$$
\widehat\Omega =\d p_r\wedge\d r
\quad , \quad
\widehat{\rm h}=\frac{p_r^2}{2m}+\frac{K}{r} \ .
$$
The Hamiltonian equation $\inn(\widehat X)\widehat\Omega =\d\widehat{\rm h}$,
with $\widehat X\in\vf({\cal J}^{-1}(\mu)/G)$,
gives the Hamiltonian vector field
$$
\widehat X=\frac{p_r}{m}\derpar{}{r}+\left(\frac{p_\phi^2}{mr^3}+\frac{K}{r^2}\right)\derpar{}{p_r} \ ,
$$
whose integral curves $(r(t),p_r(t))$ are the solutions to the Hamilton equations
$$
\frac{dr}{dt}=\frac{p_r}{m} \quad , \quad 
\frac{dp_r}{dt}=\frac{p_\phi^2}{mr^3}+\frac{K}{r^2} \ ,
$$
In order to obtain the complete set of Hamiltonian equations of the system,
first, remember that the Hamiltonian vector field $X_{\rm h}\in\vf(\Tan^*Q)$ is tangent to the level sets $p_\phi=ctn.$, 
and second that, by the Legendre map, $p_\phi=mr^2v_\phi$.
Therefore, for the integral curves of $X_{\rm h}\in\vf(\Tan^*Q)$, we have that
$$
\frac{d\phi}{dt}=\frac{p_\phi}{mr^2} \quad , \quad
\frac{dp_\phi}{dt}=0 \ .
$$

\subsubsection{Unified Lagrangian--Hamiltonian formalism}

Consider the unified bundle ${\cal W}=\Tan Q\times_Q\Tan^*Q\simeq\Real^6$
with coordinates $(r,\phi,v_r,v_\phi,p_r,p_\phi)$. 
The canonical presymplectic form is
$$
\Omega_{\cal W}=\d r \wedge\d p_r+\d\phi \wedge\d p_\phi \ ,
$$
and the Hamiltonian function is
$$
{\cal H}=p_rv_r+p_\phi v_\phi-\frac{1}{2}m(v_r^2+r^2v_\phi^2)+\frac{K}{r} \ .
$$
For $\displaystyle X_{\cal H}=f_r\derpar{}{r}+
f_\phi\derpar{}{\phi}+g_r\derpar{}{v_r}+g_\phi\derpar{}{v_\phi}+G_r\derpar{}{p_r}+G_\phi\derpar{}{p_\phi}$,
equation \eqref{Whamilton-contact-eqs0} gives
\beann
\inn(X_{\cal H})\Omega_{\cal W}&=&
 f_r\, \d p_r+f_\phi\, \d p_\phi-G_r\, \d r-G_\phi\, \d\phi
\\ &=&
-\Big(\frac{K}{r^2}+mrv_\phi^2\Big)\d r+(p_r-mv_r)\,\d v_r+(p_\phi-mr^2v_\phi)\,\d v_\phi+v_r\, \d p_r+v_\phi\, \d p_\phi
\\ &=& \d{\cal H}\ ,
\eeann
which leads to
$$
f_r=v_r \quad , \quad f_\phi=v_\phi \quad , \quad
G_r=\frac{K}{r^2}+mrv_\phi^2 \quad , \quad G_\phi=0 \quad , \quad
p_r=mv_r \quad , \quad p_\phi=mr^2v_\phi \ .
$$
The last two equations are constraints defining the submanifold
${\cal W}_0\hookrightarrow{\cal W}$ which give the Legendre map.
The Hamiltonian vector field is
$$
X_{\cal H}\vert_{{\cal W}_0}=
v_r\derpar{}{r}+v_\phi\derpar{}{\phi}+g_r\derpar{}{v_r}+g_\phi\derpar{}{v_\phi}
+\Big(\frac{K}{r^2}+mrv_\phi^2\Big)\derpar{}{p_r} \ ,
$$
and the tangency condition leads to
\beann
X_{\cal H}(p_r-mv_r)=\frac{K}{r^2}+mrv_\phi^2-g_rm=0 &\Longleftrightarrow&
g_r=\frac{K}{mr^2}+rv_\phi^2 \quad \mbox{\rm (on ${\cal W}_0$)} \ , \\
X_{\cal H}(p_\phi-mr^2v_\phi)=-m(g_\phi r^2+2f_rrv_\phi)=0 &\Longleftrightarrow&
g_\phi=-\frac{2v_rv_\phi}{r} \quad \mbox{\rm (on ${\cal W}_0$)} \ ;
\eeann
therefore
$$
X_{\cal H}\vert_{{\cal W}_0}=
v_r\derpar{}{r}+v_\phi\derpar{}{\phi}+
\Big(rv_\phi^2+\dst\frac{K}{mr^2}\Big)\derpar{}{v_r}-\frac{2v_rv_\phi}{r}\derpar{}{v_\phi}
+\left(\frac{K}{r^2}+mrv_\phi^2\right)\derpar{}{p_r} \ ,
$$
and its integral curves $(r(t),\phi(t),v_r(t),v_\phi(t),p_r(t),p_\phi(t))$ are the solutions to
$$
\frac{dr}{dt}=v_r \ , \
\frac{d\phi}{dt}=v_\phi \ , \
\frac{dv_r}{dt}=\frac{K}{mr^2}+rv_\phi^2 \ , \
\frac{dv_\phi}{dt}=-\frac{2v_rv_\phi}{r} \ , \
\frac{dp_r}{dt}=\frac{K}{r^2}+mrv_\phi^2 \ , \
\frac{dp_\phi}{dt}=0 \ .
$$
Te first four equations are equivalent to
$$
m\frac{d^2r}{dt^2}=mr\Big(\frac{d\phi}{dt}\Big)^2+\frac{K}{r^2}  \quad , \quad
 \frac{d^2\phi}{dt^2}=
-\frac{2}{r}\,\frac{d\phi}{dt}\,\frac{dr}{dt}\ ,
$$
which are the Euler--Lagrange equation of the system. 
Furthermore, using the constraints $p_r=mv_r$ 
and $p_\phi=mr^2v_\phi$ (that is, the Legendre map),
the first, second, fifth, and sixth equations are
$$
\frac{dr}{dt}=\frac{p_r}{m} \quad , \quad
\frac{d\phi}{dt}=\frac{p_\phi}{mr^2} \quad , \quad
\frac{dp_r}{dt}=\frac{p_\phi^2}{mr^3}+\frac{K}{r^2}  \quad , \quad
\frac{dp_\phi}{dt}=0 \ ;
$$
which are the Hamilton equations for the system.


\chapter{Cosymplectic mechanics: Nonautonomous dynamical systems}
\label{chap:cosym}

In the previous chapters, we have studied autonomous Hamiltonian and Lagrangian systems;
that is, dynamical systems described by Hamiltonian or Lagrangian functions
which are independent of time.
Now we are going to analyze the case of {\sl nonautonomous
dynamical systems}, which are described by 
time-dependent Hamiltonian or Lagrangian functions.

The geometrical description of nonautonomous Hamiltonian and Lagrangian systems
can be made using different approaches.
For instance, one can use the so-called {\sl contact formalism}
\cite{AM-78,CPT-hctd,EMR-gstds,EMR-sdtc,SC-81}, or a generalization
of it, the {\sl jet bundle formalism}, using {\sl jet} and {\sl  fiber bundles}
\cite{Cr-95,EMR-gstds,MS-98,MV-03,Saunders89}.
However, these kinds of systems can also be described as symplectic
Hamiltonian systems by means of the
{\sl extended formalisms} \cite{ACI-gct,EMR-gstds,Ku-td,Ra2,Ra1,St-2005}, as singular (presymplectic) dynamical systems
 \cite{CGIR-87,EMR-gstds}, or using the Lagrangian-Hamiltonian {\sl unified formalism}  \cite{BEMMR-2008,CMC-02}.
Nevertheless, one of the most elegant and simpler geometric description
of time-dependent systems is the {\sl cosymplectic formulation}
\cite{CLL-92,CLM-91}, and this is where we focus our attention in this chapter.

As in the above chapters, first we state the geometrical foundations on which this formulation is based, which are the {\sl cosymplectic manifolds} and their properties.
Next we introduce the concept of {\sl cosymplectic Hamiltonian system}
and we describe, in particular, the Lagrangian and Hamiltonian
formalisms of nonautonomous Lagrangian dynamical systems using this formulation.
Symmetries, conserved quantities and the theorem of Noether are also discussed in this context.
The chapter is completed with a brief presentation of two 
other very common formulations of time-dependent mechanics: 
the {\sl contact} and the {\sl extended symplectic formulations}, 
and showing their equivalence with the cosymplectic picture.
Finally, some examples of the previous chapter are analyzed 
for the case in which the Lagrangians are time-dependent.

\section{Notions on cosymplectic geometry}

First, we establish the basic foundations of
cosymplectic geometry 
(see, for instance, \cite{CLL-92,dN-2013,CLM-91} for details).

\subsection{Cosymplectic manifolds}

\begin{definition}
\label{deest}
Let $M$ be a differentiable manifold of dimension $2n+1$.
A \textbf{cosymplectic structure}  on $M$ is a couple
$(\eta,\omega)$, where $\eta\in\df^1(M)$ and $\omega\in\df^2(M)$
are closed forms such that
$\eta\wedge\omega^n$ is a volume form.
Then, $(M,\eta,\omega)$ is called a  \textbf{cosymplectic manifold}.

If $\eta\wedge\omega^n$ is not a volume form
and ${\rm dim}\,M$ is arbitrary, then we say that $(\eta,\omega)$
is a \textbf{precosymplectic structure}  on $M$ and
$(M,\eta,\omega)$ is a  \textbf{precosymplectic manifold}.

The (pre)cosymplectic structure is said to be \textbf{exact} if 
$\omega =\d \theta$, for some $\theta\in\df^1(M)$.
\end{definition}

\begin{prop}
If $(\eta ,\omega)$ is a cosymplectic structure on $M$,
then there exists a unique vector field
$R\in\vf(M)$, called the \textbf{Reeb vector field},
which is characterized by the conditions
\beq
\inn(R)\eta=1 \quad ,\quad \inn(R)\omega=0 \ .
\label{rvf}
\eeq
\end{prop}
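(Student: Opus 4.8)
The plan is to exploit the non-degeneracy that is built into the cosymplectic condition. The key observation is that on a $(2n+1)$-dimensional manifold $M$, the bundle map
$$
\flat\colon \Tan M \longrightarrow \Tan^*M \ , \qquad v \longmapsto \inn(v)\omega + (\inn(v)\eta)\,\eta
$$
is a vector bundle isomorphism precisely because $\eta\wedge\omega^n$ is a volume form. I would first check this linear-algebra fact pointwise: if $\flat(v)=0$ then $\inn(v)\eta=0$ (pair with $v$ itself, using skew-symmetry of $\omega$), hence $\inn(v)\omega=0$ as well; combined with $\inn(v)\eta=0$ this forces $v$ to lie in the kernel of the degenerate form, but the volume condition means $\ker\omega_{\rm p}$ is one-dimensional and transverse to $\ker\eta_{\rm p}$, so $v=0$. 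A dimension count then upgrades injectivity to bijectivity.

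Granting that $\flat$ is an isomorphism, I would define $R := \sharp(\eta) = \flat^{-1}(\eta)$ and verify \eqref{rvf}. Writing $\inn(R)\omega = \alpha$ and $\inn(R)\eta = c$, the defining relation $\flat(R)=\eta$ reads $\alpha + c\,\eta = \eta$. Contracting this identity with $R$ gives $\inn(R)\alpha + c\,\inn(R)\eta = \inn(R)\eta$, and since $\inn(R)\alpha = \inn(R)\inn(R)\omega = 0$ by skew-symmetry, this yields $c^2 = c$, so $c \in \{0,1\}$. The value $c=0$ is excluded because it would force $\eta = \alpha = \inn(R)\omega$, and then $\inn(R)\eta = 0$ would make $R$ lie in $\ker\eta \cap \ker\omega$ at each point — but that intersection is trivial by the volume condition, so $R=0$ and $\eta=0$, contradicting that $\eta\wedge\omega^n$ is a volume form. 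Hence $c=1$, and then $\alpha + \eta = \eta$ gives $\inn(R)\omega = \alpha = 0$. Uniqueness is immediate: if $R'$ also satisfies \eqref{rvf} then $\flat(R'-R) = \inn(R'-R)\omega + (\inn(R'-R)\eta)\,\eta = 0$, so $R'=R$ by injectivity of $\flat$.

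Alternatively — and this is perhaps cleaner to write — one can work directly with the map $(\eta,\omega^\flat)\colon \Tan M \to \Real \oplus \Tan^*M$ sending $v \mapsto (\inn(v)\eta, \inn(v)\omega)$, observe that the volume condition $\eta\wedge\omega^n \neq 0$ is exactly the statement that this map is injective hence (by dimension count) an isomorphism onto its image, and define $R$ as the preimage of $(1,0)$. Either formulation reduces the proposition to the single pointwise fact that $\eta_{\rm p}\wedge\omega_{\rm p}^n \neq 0$ implies $\ker\eta_{\rm p} \cap \ker\omega_{\rm p}^\flat = \{0\}$ and $\Tan_{\rm p}M = \ker\eta_{\rm p} \oplus \ker\omega_{\rm p}^\flat$. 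I expect the main obstacle to be purely bookkeeping: making the linear-algebra lemma fully rigorous requires noting that $\ker\omega_{\rm p}^\flat$ has dimension exactly $1$ (not just odd), which follows because $\omega_{\rm p}$ restricted to $\ker\eta_{\rm p}$ must be non-degenerate — otherwise $\omega_{\rm p}^n$ would vanish on $\ker\eta_{\rm p}$ and $\eta_{\rm p}\wedge\omega_{\rm p}^n$ would be zero. Once that transversality is in place, the construction of $R$ and the verification of \eqref{rvf} are formal, and smoothness of $R$ follows from smoothness of $\flat^{-1}$.
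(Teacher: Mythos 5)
Your route differs from the paper's: the paper argues directly that $\mathrm{rank}\,\omega=2n$, so $\ker\omega$ is a line bundle on which $\eta$ is nowhere zero, and takes $R$ to be the generator normalized by $\inn(R)\eta=1$; you instead invert the musical map $\flat_{(\eta,\omega)}$ (which the paper only introduces afterwards) and set $R=\flat_{(\eta,\omega)}^{-1}(\eta)$. That is a viable alternative, and both your injectivity argument and your uniqueness step are correct. However, there is a genuine flaw in your verification that this $R$ satisfies \eqref{rvf}: to rule out $c=\inn(R)\eta=0$ you claim that $c=0$ together with $\inn(R)\omega=\eta$ ``would make $R$ lie in $\ker\eta\cap\ker\omega$''. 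It does place $R$ in $\ker\eta$, but precisely because $\inn(R)\omega=\eta\neq 0$ it does \emph{not} place $R$ in $\ker\omega$, so the contradiction you invoke (triviality of $\ker\eta\cap\ker\omega$) does not apply and the step as written fails.

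The gap closes easily with the transversality fact you already use elsewhere. Contract the defining identity $\inn(R)\omega+(\inn(R)\eta)\,\eta=\eta$ with a nonzero $\xi_{\rm p}\in\ker\omega_{\rm p}$: the first term gives $\omega(R,\xi_{\rm p})=0$, while $\eta(\xi_{\rm p})\neq 0$ by the volume condition, so $\inn(R)\eta=1$ at every point, and then $\inn(R)\omega=0$ follows. (Alternatively: if $\inn(R_{\rm p})\omega_{\rm p}=\eta_{\rm p}$ at some point, then $\eta_{\rm p}\wedge\omega_{\rm p}^{n}=\frac{1}{n+1}\inn(R_{\rm p})(\omega_{\rm p}^{\,n+1})=0$, since $\omega_{\rm p}^{\,n+1}$ is a $(2n+2)$-form on a $(2n+1)$-dimensional space, contradicting that $\eta\wedge\omega^{n}$ is a volume form.) With this repair your first formulation is complete. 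Your second formulation via $v\mapsto(\inn(v)\eta,\inn(v)\omega)$ also works, but note that injectivity alone does not guarantee that $(1,0)$ lies in the image; for that you still need the splitting $\Tan_{\rm p}M=\ker\eta_{\rm p}\oplus\ker\omega_{\rm p}$, which is essentially the paper's own argument that $\ker\omega$ is one-dimensional and transverse to $\ker\eta$.
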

\begin{proof}
Observe that, by the second condition, $R\in\ker\,\omega$.
From the condition that $\eta\wedge\omega^n$ is a volume form,
we have that ${\rm rank}\,\omega=2n$ and hence
$\ker\,\omega$ is a $1$-dimensional $\Cinfty(M)$-module.
Therefore, the first condition allows us to select one generator of this module.
\\ \qed \end{proof}

The local structure of cosymplectic manifolds is given by the following extension of Darboux Theorem  \cite{dLe89,dLGRR-2023}:

\begin{teor} {\rm (Darboux)} \
Let $(M,\eta,\omega)$ be a cosymplectic manifold.
 Then, for every ${\rm p}\in M$, there exists an open neighborhood $U \subset M$, ${\rm p}\in U$, 
which is the domain of a local chart of coordinates $(t,x^i,y_i)$, 
$1\leq i \leq n$, such that
$$
\eta\vert_U=\d t \quad ,\quad 
\omega\vert_U=\d x^i\wedge\d y_i
\quad , \quad R\vert_U=\derpar{}{t} \ .
$$
These are called \textbf{Darboux} or \textbf{canonical coordinates}
of the cosymplectic manifold.
\label{darbo}
\end{teor}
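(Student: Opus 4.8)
The plan is to reduce the statement to the usual symplectic Darboux theorem (already proved in the excerpt) by first straightening out the closed $1$-form $\eta$ and the Reeb vector field $R$ simultaneously, and then applying the symplectic normal form to $\omega$ restricted to a suitable transversal. First I would use the fact that $\eta$ is closed and nowhere vanishing (it is nowhere zero because $\eta\wedge\omega^n$ is a volume form): by the Poincaré Lemma there is locally a function $t$ with $\d t=\eta$, and $\d t$ is nonvanishing, so $t$ can be taken as the first coordinate of a chart. In these coordinates $\eta\vert_U=\d t$. Next I would flow along $R$: since $\inn(R)\eta=1$ we have $R(t)=1$, so $R$ is transverse to the level hypersurfaces $\{t=\mathrm{const}\}$. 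Fixing the hypersurface $N=\{t=t_0\}$ through $\mathrm{p}$ and using the flow $\varphi_s$ of $R$, the map $(s,\mathrm{q})\mapsto\varphi_s(\mathrm{q})$, with $\mathrm{q}\in N$, is a local diffeomorphism onto a neighbourhood of $\mathrm{p}$; pulling back, we may assume $R=\partial/\partial t$ and $\eta=\d t$ on a product neighbourhood $U\cong(-\varepsilon,\varepsilon)\times N$.

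The second part is to put $\omega$ into the form $\d x^i\wedge\d y_i$ compatibly with this product structure. The key point is that $\Lie(R)\omega=\inn(R)\d\omega+\d\inn(R)\omega=0$, since $\omega$ is closed and $\inn(R)\omega=0$; hence $\omega$ is invariant under the flow of $R=\partial/\partial t$, which means that in the product coordinates $\omega$ has no $\d t$ component and does not depend on $t$: it is the pull-back of a $2$-form $\omega_N\in\df^2(N)$ under the projection $U\to N$. Now $\omega_N$ is closed, and it is nondegenerate on $N$: indeed $\Tan_{\mathrm{q}}N=\ker\eta_{\mathrm{q}}$ is a codimension-one subspace of $\Tan_{\mathrm{q}}M$ complementary to $\langle R_{\mathrm{q}}\rangle$, and since $\ker\omega_{\mathrm{q}}=\langle R_{\mathrm{q}}\rangle$ (as $\mathrm{rank}\,\omega=2n$, established in the proof of existence of the Reeb field), $\omega$ restricts nondegenerately to $\ker\eta_{\mathrm{q}}$. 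So $(N,\omega_N)$ is a $2n$-dimensional symplectic manifold.

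At this stage I would simply invoke the Darboux theorem for symplectic manifolds proved earlier in the excerpt: shrinking $U$ if necessary, there is a chart $(x^i,y_i)$ on a neighbourhood of $\mathrm{p}$ in $N$ with $\omega_N=\d x^i\wedge\d y_i$. Transporting this chart to $U$ via the projection and adjoining the coordinate $t$, we obtain coordinates $(t,x^i,y_i)$ on a neighbourhood of $\mathrm{p}$ in which $\eta=\d t$, $\omega=\d x^i\wedge\d y_i$, and $R=\partial/\partial t$, as required. The main obstacle — really the only nontrivial point — is checking that the flow-box construction for $R$ can be carried out on the same neighbourhood where $\eta=\d t$, i.e. that the two normalizations (of $\eta$ and of $R$) are compatible; this is handled by noting $R(t)=1$, so $R$ is automatically transverse to $\{t=\mathrm{const}\}$ and the flow-box coordinates can be chosen with $t$ itself as the transverse coordinate, which forces $\eta=\d t$ to persist. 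Everything else is a routine combination of the Poincaré Lemma, Cartan's magic formula, and the previously established symplectic Darboux theorem.
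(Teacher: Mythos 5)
Your argument is correct, and it takes a genuinely different route from the one sketched in the paper. The paper works in the opposite order: it first uses the foliation integrating $\ker\eta$ (whose leaves are $2n$-dimensional and carry the restriction of $\omega$ as a symplectic form), applies the symplectic Darboux theorem leafwise to get coordinates $(x^i,y_i,\tilde z)$ with $\omega=\d x^i\wedge\d y_i$, and only afterwards normalizes $\eta$, writing $\eta=f\,\d\tilde z$ and redefining the transverse coordinate so that $\eta=\d t$, the expression for $R$ then following automatically. You instead normalize $\eta$ and $R$ first --- the Poincar\'e lemma gives $\eta=\d t$, and since $\inn(R)\eta=1$ gives $R(t)=1$ you can build a flow box in which $R=\partial/\partial t$ compatibly with $\eta=\d t$ (your justification is sound: the flow coordinate of $R$ coincides with $t$ up to an additive constant) --- and then use $\inn(R)\omega=0$ together with $\Lie(R)\omega=\d\inn(R)\omega+\inn(R)\d\omega=0$ to conclude that $\omega$ is the pull-back of a symplectic form on the single transversal $N=\{t=t_0\}$, to which symplectic Darboux is applied once. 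Both proofs share the essential reduction to the symplectic Darboux theorem on the kernel of $\eta$, but your version makes explicit why the leafwise normal form of $\omega$ holds on the whole neighbourhood (no $\d t$-components and coefficients independent of $t$), a point the paper's sketch glosses over when it passes from Darboux on each leaf to the identity $\omega=\d x^i\wedge\d y_i$ on the full chart; the price is the extra flow-box step, which you handle correctly.
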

\begin{proof}
The idea of the proof is the following:
in a cosymplectic manifold there is a symplectic foliation
which is made of  the leaves of the distribution generated by $\ker\,\eta$
(which are $2n$-dimensional submanifolds and, on each one of them, the restriction of $\omega$ is a symplectic form
since it has maximal rank $2n$, by the condition of the volume form).
Then we take coordinates adapted to the foliation and,
on each leaf, apply the symplectic Darboux Theorem.
In this way we have local coordinates $(x^i,y_i,\tilde z)$
such that $\omega=\d x^i\wedge\d y_i$.
Finally, we write $\eta$ as a combination of all of these coordinates,
$\eta=f(q,p,\tilde z)\d\tilde z$,
with $f$ a nonvanishing function, and then we can redefine the coordinate $z$.
In these coordinates, the Reeb vector field has the expression given by the theorem.
\\ \qed \end{proof}

\begin{remark}{\rm 
For precosymplectic manifolds, there is a similar result   \cite{dLe89}.
In fact, if $(M,\eta,\omega)$ is a precosymplectic manifold with ${\rm rank}\,\omega=2r<{\rm dim}\,M-1\equiv m-1$; then,
for every point on $M$, there exists a local chart $(U; t, x^i, y_i, z^j)$, where $1\leq i\leq r$, $1\leq j\leq m-2r-1$, such that
\begin{equation*}
\eta\vert_U=\d t\quad,\quad \omega\vert_U=\d x^i\wedge\d y_i \ .
\end{equation*}
These local charts are the so-called {\sl precosymplectic charts},
and their coordinates are the {\sl canonical coordinates} or 
{\sl  Darboux coordinates} of the precosymplectic manifold in this chart \cite{dLGRR-2023}.
In addition, for precosymplectic manifolds the solution to equations \eqref{rvf} is not unique
and Reeb vector fields are not uniquely defined.
}\end{remark}

\noindent {\bf Canonical model}:
The canonical model for cosymplectic manifolds is the following:
consider the manifold $\Real\times\Tan^*Q$ with canonical projections
$$
\pi_1\colon\Real \times\Tan^*Q \rightarrow \Real \ , \
\pi_2\colon\Real \times \Tan^*Q \rightarrow\Tan^*Q \ , \
\pi_0\colon\Real \times \Tan^*Q \rightarrow Q \ , \
\pi_{1,0}\colon\Real \times \Tan^*Q \rightarrow
\Real\times Q \ .
$$
If $(q^i)$ are local coordinates on $U \subseteq Q$, the
induced local coordinates  $(t,q^i ,p_i)$
on $\pi_0^{-1}(U)=\Real \times \Tan^*U$ are given by
$$
t({\rm t},\alpha_q) = {\rm t} \quad , \quad
q^i({\rm t},\alpha_q) = x^i(q) \quad , \quad
 p_i({\rm t},\alpha_q) =
\alpha_q\left(\displaystyle\frac{\partial}{\partial q^i}\Big\vert_q \right) \ ,
$$
for ${\rm t}\in\Real$, $q\in Q$ and $\alpha_q\in\Tan_q^*Q$.
We define the differential forms on $\Real\times\Tan^*Q$,
$$
\eta=\pi_1^*\d t\, , \quad \theta= \pi_2^*\Theta\, ,
\quad
\omega= \pi_2^*\Omega\, ,
$$
where $\Theta$ and $\Omega$ are the canonical forms on $\Tan^*Q$. In local coordinates we have
$$
\eta=\d t \quad , \quad \theta =  p_i \d q^i \quad   , \quad
\omega = \d q^i\wedge\d p_i\, .
$$
Hence $(\Real\times \Tan^*Q,\eta,\omega)$ is a
cosymplectic manifold, and the natural coordinates of $\Real\times\Tan^*Q$ 
are Darboux coordinates for this canonical cosymplectic structure. 
Furthermore, $\displaystyle\derpar{}{t}$ is its Reeb vector field.

\noindent {\bf Almost-canonical cosymplectic manifolds}:
There is another kind of cosymplectic manifolds which are specially relevant:
those which are of the form $M=\Real\times N$, where
$(N,\Omega)$ is a symplectic manifold.
Then,  denoting by
$$
\pi_\Real\colon\Real \times N \rightarrow \Real \quad , \quad
\pi_N\colon\Real \times N \rightarrow N
$$
 the canonical projections, we have the differential forms
$$
\eta=\pi_\Real^*\d t\quad , \quad \omega=\pi_M^*\Omega\ .
$$
The conditions given in Definition \ref{deest}
are verified and hence $\Real\times N$ is a cosymplectic manifold.
From the Darboux Theorem \ref{darbo} we have local coordinates $(t,x^i,y_i)$ on $\Real\times N$.
These kinds of $k$-cosymplectic manifolds are sometimes
called {\sl almost-canonical $k$-cosymplectic manifolds}.
Observe that the standard model is a particular class of these kinds of $k$-cosymplectic manifolds,
where $N=\Tan^*Q$.

Every cosymplectic manifold $(M,\eta,\omega)$ is endowed with
the natural vector bundle isomorphism
$$
\begin{array}{rcl}
\flat_{(\eta,\omega)}\colon \Tan M & \to & \Tan^*M\\ \noalign{\medskip}
({\rm p},X_{\rm p}) & \mapsto & 
\big({\rm p},\inn(X_{\rm p})\omega_{\rm p} + ((\inn(X_{\rm p})\eta_{\rm p})\eta_{\rm p}\big)\,.
\end{array}
$$
and its inverse $\sharp_{(\eta,\omega)}=\flat_{(\eta,\omega)}^{-1}\colon \Tan^*M \to \Tan M$.
Their natural extensions are the $\Cinfty(M)$-module isomorphisms
which are denoted with the same notation,
$$
\begin{array}{rccl}
   \flat_{(\eta,\omega)}\colon & \vf(M) & \longrightarrow & \df^1(M) \\
   & X & \longmapsto & \inn(X)\omega+(\inn(X)\bmeta)\bmeta
\end{array} \ ,
$$
and its inverse $\sharp_{(\eta,\omega)}=\flat_{(\eta,\omega)}^{-1}\colon \df^1(M)\to \vf(M)$.
In particular, for the Reeb vector field, we have that
$\flat_{(\eta,\omega)}(R)=\eta$.

\subsection{Hamiltonian, gradient, and evolution vector fields}

Using the natural $\Cinfty(M)$-module isomorphism $\flat_{(\eta,\omega)}$
introduced in the above section,
one can associate to every function $f\in\mathcal{C}^\infty(M)$
some particular vector fields:

\begin{definition}
\label{GradHamEvol}
Let $(M,\eta,\omega)$ be a cosymplectic manifold and $f\in\mathcal{C}^\infty(M)$.

The \textbf{Hamiltonian vector field} associated with $f$
is the vector field \ $X_f\in\vf(M)$ defined by
\ $\flat_{(\eta,\omega)}(X_f):=\d f-R(f)\eta$.

 The \textbf{gradient vector field} associated with $f$
is the vector field \ ${\rm grad}\,f\in\vf(M)$ defined by
\ $\flat_{(\eta,\omega)}({\rm grad}\, f):=\d f$.

The \textbf{evolution vector field}  associated with $f$
is the vector field ${\cal E}_f\in\vf(M)$ defined by
\ ${\cal E}_f:=R+ X_f$;\ or, equivalently
\ $\flat_{(\eta,\omega)}({\cal E}_f)=\d f-(R(f)-1)\eta$.
\end{definition}

These vector fields can be equivalently characterized as follows:

\begin{prop}
The \textbf{Hamiltonian vector field} associated with $f$
is determined by the equations:
\beq
\inn(X_f)\eta=0 \quad ,\quad \inn(X_f)\omega=\d f-R(f)\eta\ .
\label{Havf}
\eeq

The \textbf{gradient vector field} associated with $f$ is determined by the equations:
\beq
 \inn({{\rm grad}\, f})\eta = R(f) \quad ,\quad 
\inn({{\rm grad}\, f})\omega=\d f-R(f)\eta\ .
\label{gravf}
\eeq

The \textbf{evolution vector field}  associated with $f$
is determined by the equations:
\beq
\inn({\cal E}_f)\eta=1 \quad,\quad 
\inn({\cal E}_f)\omega = \d f- R(f)\eta\ .
\label{evovf}
\eeq
\end{prop}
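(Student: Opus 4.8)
The plan is to prove the equivalence of the two characterizations of each vector field by unpacking the definition of $\flat_{(\eta,\omega)}$ and then using the defining properties \eqref{rvf} of the Reeb vector field. The key remark, to be established first, is a pair of elementary identities: for any $Y\in\vf(M)$, applying $\inn(R)$ to $\flat_{(\eta,\omega)}(Y)=\inn(Y)\omega+(\inn(Y)\eta)\eta$ and using $\inn(R)\omega=0$, $\inn(R)\eta=1$ gives $\inn(R)\flat_{(\eta,\omega)}(Y)=\inn(Y)\eta$; that is, $\langle R\mid \flat_{(\eta,\omega)}(Y)\rangle = \inn(Y)\eta$. Likewise, for any $1$-form $\beta$, writing $\inn(\sharp_{(\eta,\omega)}(\beta))\eta = \inn(R)\beta$ is the same identity read through $\sharp_{(\eta,\omega)}$. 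These will be the workhorses.

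First I would handle the Hamiltonian vector field. By Definition \ref{GradHamEvol}, $\flat_{(\eta,\omega)}(X_f)=\d f - R(f)\eta$. Contracting with $R$ and using the workhorse identity, $\inn(X_f)\eta = \inn(R)(\d f - R(f)\eta) = R(f) - R(f) = 0$, which is the first equation in \eqref{Havf}. Substituting $\inn(X_f)\eta=0$ back into the defining relation $\flat_{(\eta,\omega)}(X_f)=\inn(X_f)\omega + (\inn(X_f)\eta)\eta$ yields $\inn(X_f)\omega = \flat_{(\eta,\omega)}(X_f) = \d f - R(f)\eta$, the second equation. For the converse, if a vector field $Y$ satisfies both equations in \eqref{Havf}, then $\flat_{(\eta,\omega)}(Y) = \inn(Y)\omega + (\inn(Y)\eta)\eta = (\d f - R(f)\eta) + 0 = \d f - R(f)\eta$, so $Y = \sharp_{(\eta,\omega)}(\d f - R(f)\eta) = X_f$ since $\flat_{(\eta,\omega)}$ is an isomorphism; this also gives uniqueness.

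The gradient case is identical in structure: from $\flat_{(\eta,\omega)}({\rm grad}\,f)=\d f$, contracting with $R$ gives $\inn({\rm grad}\,f)\eta = \inn(R)\d f = R(f)$, the first equation in \eqref{gravf}; and then $\inn({\rm grad}\,f)\omega = \flat_{(\eta,\omega)}({\rm grad}\,f) - (\inn({\rm grad}\,f)\eta)\eta = \d f - R(f)\eta$, the second. The evolution vector field then follows either directly from ${\cal E}_f = R + X_f$ together with \eqref{rvf} and \eqref{Havf} — namely $\inn({\cal E}_f)\eta = \inn(R)\eta + \inn(X_f)\eta = 1 + 0 = 1$ and $\inn({\cal E}_f)\omega = \inn(R)\omega + \inn(X_f)\omega = 0 + \d f - R(f)\eta$ — or, equivalently, by the same contraction argument applied to $\flat_{(\eta,\omega)}({\cal E}_f) = \d f - (R(f)-1)\eta$. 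The converses in both cases are obtained exactly as above, reconstructing $\flat_{(\eta,\omega)}$ of the candidate field from the two displayed equations and invoking injectivity of $\flat_{(\eta,\omega)}$.

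There is no real obstacle here; the only thing to be careful about is bookkeeping with the two summands of $\flat_{(\eta,\omega)}$ and making sure the contraction-with-$R$ step is invoked in the right order (first to extract the $\eta$-component, then to isolate the $\omega$-component). One should also note explicitly that all these statements presuppose the cosymplectic (not merely precosymplectic) hypothesis, since the argument uses that $\flat_{(\eta,\omega)}$ is an isomorphism and that $R$ is the unique field satisfying \eqref{rvf}; in the precosymplectic case the equations still make sense but neither existence nor uniqueness of the associated fields is guaranteed.
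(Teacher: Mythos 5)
Your proposal is correct and follows essentially the same route as the paper: unpack $\flat_{(\eta,\omega)}$, contract with $R$ using \eqref{rvf} to extract $\inn(X_f)\eta$, then substitute back to isolate the $\omega$-component, repeating the procedure for ${\rm grad}\,f$ and ${\cal E}_f$. Your explicit remark that the converse and uniqueness follow from injectivity of $\flat_{(\eta,\omega)}$ is a small but welcome addition that the paper leaves implicit.
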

\begin{proof}
For every $f\in\Cinfty(M)$,
if  $X_f$ is the Hamiltonian vector field associated with $f$,
using the definitions of the isomorphism $\flat_{(\eta,\omega)}$ and of $X_f$, first we have that,
\beq
\label{aux01}
\inn(X_f)\d\eta+(\inn(X_f)\eta)\eta=
\flat_{(\eta,\omega)}(X_f)=\d f-R(f)\,\eta \ ,
\eeq
and contracting both members with $R$ and using \eqref{rvf}, we get
$$
(\inn(X_f)\eta)\inn(R)\eta=
\inn(R)\d f-R(f)\,\inn(R)\eta=
R(f)-R(f)=0
\ \Longleftrightarrow\ \inn(\X_f)\eta=0 \ ,
$$
since this holds for every $f$.
Now, going to \eqref{aux01}, we obtain that
$$
\inn(X_f)\d\eta=\d f-R(f)\,\eta \ .
$$

Repeating the same procedure for the gradient and the evolution vector fields associated with $f$,
we obtain the corresponding equations for ${\rm grad}\,f$ and ${\cal E}_f$.
\\ \qed \end{proof}

As in the above chapters, from these results it is immediate to obtain that:

\begin{prop}
The equations for the integral curves $c\colon I\subset\Real\to M$
of the Hamiltonian, gradient, and evolution vector fields associated with
$f\in\mathcal{C}^\infty(M)$ are, respectively:
\bea
\inn(\widetilde{c})(\eta\circ c)=0 \ & ,& \ 
\inn(\widetilde{c})(\omega\circ c)=(\d f-R(f)\eta)\circ c\ , \nonumber \\
 \inn(\widetilde{c})(\eta\circ c)) = R(f)\circ c \ & , &\
\inn(\widetilde{c})(\omega\circ c))=(\d f-R(f)\eta)\circ c\ , \nonumber \\
\inn(\widetilde{c})(\eta\circ c))=1 \ & ,& \
\inn(\widetilde{c})(\omega\circ c)) = (\d f- R(f)\eta)\circ c\ .
\label{evoic}
\eea
\end{prop}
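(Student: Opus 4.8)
The plan is to reduce this statement to the elementary fact, already used in the proofs of Theorems \ref{teo:Hcurv} and \ref{teor:Lcurv}, that relates the contraction of the canonical lift of a curve with a differential form to the contraction of the corresponding vector field. Concretely, if $X\in\vf(M)$ and $c\colon I\subseteq\Real\to M$ is a curve, then $c$ is an integral curve of $X$ if and only if $\dot c(t)=(X\circ c)(t)$ for every $t\in I$; and in that case, for any $\beta\in\df^k(M)$ one has the identity
\beq
\inn(\widetilde c)(\beta\circ c)=(\inn(X)\beta)\circ c \ ,
\eeq
where $\widetilde c(t)=(c(t),\dot c(t))$ is the canonical lift of $c$ to $\Tan M$ (this is the content recalled in Remark \ref{intcurve}). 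Conversely, since a vector field is determined by its pointwise values, a curve whose velocity vectors satisfy at every instant the pointwise linear equations characterizing $X$ is automatically an integral curve of $X$; here the relevant uniqueness is packaged in the fact that $\flat_{(\eta,\omega)}$ is a $\Cinfty(M)$-module isomorphism.

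First I would treat the Hamiltonian vector field. By Definition \ref{GradHamEvol} and its equivalent characterization \eqref{Havf}, $X_f$ is the unique vector field with $\inn(X_f)\eta=0$ and $\inn(X_f)\omega=\d f-R(f)\eta$. Contracting $\eta\circ c$ and $\omega\circ c$ with $\widetilde c$ and applying the displayed identity shows that $c$ is an integral curve of $X_f$ if and only if $\inn(\widetilde c)(\eta\circ c)=0$ and $\inn(\widetilde c)(\omega\circ c)=(\d f-R(f)\eta)\circ c$, which is the first line of the statement. For the converse implication one notes that, at each $t$, these two equations together say exactly that $\flat_{(\eta,\omega)}(\dot c(t))=(\d f-R(f)\eta)_{c(t)}$, whence $\dot c(t)=X_f(c(t))$.

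Then I would repeat the argument verbatim for the gradient vector field, starting now from \eqref{gravf} (so that the two contractions reproduce $R(f)\circ c$ and $(\d f-R(f)\eta)\circ c$), and for the evolution vector field, starting from \eqref{evovf}, which only replaces the first equation by $\inn(\widetilde c)(\eta\circ c)=1$ while keeping the same $\omega$-equation. Alternatively, the evolution case can be obtained directly from the relation ${\cal E}_f=R+X_f$ together with $\inn(R)\eta=1$ and $\inn(R)\omega=0$ from \eqref{rvf}, combined with the Hamiltonian case already established.

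I do not expect any genuine obstacle here: the forward direction is a direct substitution into \eqref{Havf}, \eqref{gravf}, \eqref{evovf}, and the only point needing a moment's care is the converse, namely verifying that a curve whose velocity satisfies both pointwise equations at every instant is truly an integral curve of the relevant vector field — which is precisely where the isomorphism property of $\flat_{(\eta,\omega)}$ is invoked.
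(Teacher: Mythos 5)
Your argument is correct and is exactly the route the paper intends: it leaves this proposition as an immediate consequence of the characterizations \eqref{Havf}, \eqref{gravf}, \eqref{evovf}, the definition of integral curve together with the contraction identity of Remark \ref{intcurve}, and (for the converse) the pointwise isomorphism $\flat_{(\eta,\omega)}$, just as in Theorem \ref{teo:Hcurv}. Nothing is missing in your proposal.
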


\noindent{\bf Local expressions}:
In Darboux coordinates on $M$, we have that $\dst R(f)=\derpar{f}{t}$; therefore
$\dst \d f- R(f)\eta=\derpar{f}{x^i}\d x^i+\derpar{f}{y_i}\d y_i$, and from \eqref{Havf}, \eqref{gravf}, and \eqref{evovf},
we obtain
\footnote{
 Observe that the local expression of the Hamiltonian vector field in the cosymplectic formulation is the same as the Hamiltonian vector field in the symplectic formulation;
which justifies the terminology.},
\bea
    X_f &=& \frac{\partial f}{\partial y_ i}\frac{\partial}{\partial x^i}- \frac{\partial f}{\partial x^i}\frac{\partial}{\partial y_i}\ , \nonumber \\
\noalign{\medskip}
    {\rm grad}\,f &=& \frac{\partial f}{\partial t}\frac{\partial}{\partial t} + \frac{\partial f}{\partial y_ i}\frac{\partial}{\partial x^i}- \frac{\partial f}{\partial x^i}\frac{\partial}{\partial y_i}\ ,\nonumber \\
\noalign{\medskip}
    {\cal E}_f &=& \frac{\partial}{\partial t} + \frac{\partial f}{\partial y_ i}\frac{\partial}{\partial x^i}- \frac{\partial f}{\partial x^i}\frac{\partial}{\partial y_i}\ .
\label{evocoor1}
\eea
Therefore, if $c(s)=(t(s),x^i(s), y_i(s))$ is an integral curve of some of these vector fields, these expressions imply that $c(s)$ should satisfy the following systems of differential equations,
\bea
\frac{dt}{ds}=0 \ &,& \
\frac{dx^i}{ds}=\frac{\partial f}{\partial y_i} \quad ,\quad 
\frac{dy_i}{ds}= -\frac{\partial f}{\partial x^i}\ , \nonumber \\
\frac{dt}{ds}=\frac{\partial f}{\partial t} \ &,& \
\frac{dx^i}{ds}=\frac{\partial f}{\partial y_i} \quad ,\quad 
\frac{dy_i}{ds}= -\frac{\partial f}{\partial x^i}\ , \nonumber \\
\frac{dt}{ds}=1 \ &,& \
\frac{dx^i}{ds}=\frac{\partial f}{\partial y_i} \quad ,\quad 
\frac{dy_i}{ds}= -\frac{\partial f}{\partial x^i}\ ;
\label{evocoor2}
\eea

As in the symplectic case, a Poisson bracket can be defined now, using gradient vector fields. Indeed, it is immediate to prove that:

\begin{prop}
\label{4.6}
Every cosymplectic manifold $(M,\eta,\omega)$ is
a {\sl Poisson manifold}, with the {\sl Poisson bracket} defined by
\[
    \{f,q\}:=\omega({\rm grad}\, f,{\rm grad}\, g) \quad ; \quad 
f,g\in\mathcal{C}^\infty(M) \ .
\]
\end{prop}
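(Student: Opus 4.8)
The plan is to verify, one at a time, the defining properties of a Poisson bracket for $\{f,g\}:=\omega({\rm grad}\,f,{\rm grad}\,g)$: $\Real$-bilinearity, skew-symmetry, the Leibniz (derivation) rule, and the Jacobi identity. The first three come essentially for free from the algebra of the ingredients. Since ${\rm grad}\,f=\sharp_{(\eta,\omega)}(\d f)$ and $\sharp_{(\eta,\omega)}$ is a $\Cinfty(M)$-module isomorphism, the assignment $f\mapsto{\rm grad}\,f$ is $\Real$-linear and satisfies ${\rm grad}(fg)=f\,{\rm grad}\,g+g\,{\rm grad}\,f$; combining this with the $\Real$-bilinearity and skew-symmetry of $\omega$ gives at once that $\{\cdot,\cdot\}$ is $\Real$-bilinear and skew-symmetric, and that $\{f,gh\}=\omega({\rm grad}\,f,g\,{\rm grad}\,h+h\,{\rm grad}\,g)=g\{f,h\}+h\{f,g\}$; by skew-symmetry the Leibniz rule also holds in the first argument.

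Before attacking the Jacobi identity I would rewrite the bracket so as to connect it with the constructions of the previous chapters. Comparing the defining equations \eqref{gravf} of ${\rm grad}\,f$ and \eqref{Havf} of the Hamiltonian vector field $X_f$, and using $\flat_{(\eta,\omega)}(R)=\eta$, one obtains ${\rm grad}\,f=X_f+R(f)\,R$; then, since $\inn(R)\omega=0$, all the terms containing $R$ cancel and $\{f,g\}=\omega(X_f,X_g)=\inn(X_g)(\d f-R(f)\eta)=X_g(f)=-X_f(g)$ (using $\inn(X_f)\eta=0$). This both reproves the Leibniz rule, since $g\mapsto\{f,g\}=-X_f(g)$ is a derivation, and provides the starting point for the Jacobi identity.

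The Jacobi identity is the main obstacle, and I would prove it in Darboux coordinates. By the cosymplectic Darboux Theorem \ref{darbo}, about every point there is a chart $(t,x^i,y_i)$ with $\eta=\d t$, $\omega=\d x^i\wedge\d y_i$ and $R=\partial/\partial t$; in such a chart ${\rm grad}\,f$ has the expression \eqref{evocoor1}, and a direct contraction with $\omega$ yields
\[
\{f,g\}=\derpar{f}{x^i}\derpar{g}{y_i}-\derpar{f}{y_i}\derpar{g}{x^i}\ ,
\]
i.e., locally the bracket is precisely the canonical symplectic Poisson bracket in the variables $(x^i,y_i)$, with $t$ entering only as a passive parameter (no $\partial/\partial t$ appears). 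Hence the Jacobi identity reduces, at each fixed value of $t$, to the classical symplectic computation already carried out for the symplectic Poisson bracket, which ultimately rests on $\d\omega=0$. Since the identity is a local statement, this completes the argument.

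I would close with a remark recording the coordinate-free reason why the verification is ``immediate'': because $\eta$ is closed, the rank-$2n$ distribution $\ker\eta$ is integrable, its leaves $L$ carry the symplectic forms $\omega\vert_L$ (by the volume-form condition), and $X_f$, being tangent to the leaves since $\inn(X_f)\eta=0$, restricts on each $L$ to the Hamiltonian vector field of $f\vert_L$ for $(L,\omega\vert_L)$; therefore $\{f,g\}\vert_L=\{f\vert_L,g\vert_L\}_{(L,\omega\vert_L)}$, so $\{\cdot,\cdot\}$ inherits leafwise all the Poisson axioms, in particular the Jacobi identity. One could even take this foliation picture as the whole proof, invoking the Darboux argument only to supply the symplectic structure on the leaves.
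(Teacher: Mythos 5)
Your proof is correct, and it supplies exactly what the paper leaves implicit: the text only asserts the result is ``immediate'' and records the Darboux-coordinate expression \eqref{PBcoor}, so your main line --- check $\Real$-bilinearity, antisymmetry and Leibniz directly from the properties of $\sharp_{(\eta,\omega)}$ and $\omega$, rewrite $\{f,g\}=\omega(X_f,X_g)=X_g(f)$ using ${\rm grad}\,f=X_f+R(f)R$ and $\inn(R)\omega=0$, and then reduce the Jacobi identity to the canonical symplectic bracket in a Darboux chart, where $t$ is a spectator variable --- is precisely the computation the paper has in mind. Your closing foliation argument is a genuinely intrinsic alternative: it explains structurally why Jacobi holds (the leaves of $\ker\eta$ are symplectic with form $\omega\vert_L$, $X_f$ is tangent to them and is the leafwise Hamiltonian field of $f\vert_L$, so the bracket is leafwise the symplectic one), and it dovetails with the paper's own sketch of the cosymplectic Darboux theorem, which invokes the same symplectic foliation; the Darboux route buys a short self-contained verification, the foliation route buys conceptual clarity and avoids any coordinate computation. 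One small point worth flagging: your local formula $\{f,g\}=\derpar{f}{x^i}\derpar{g}{y_i}-\derpar{f}{y_i}\derpar{g}{x^i}$ is the negative of the paper's \eqref{PBcoor}, but it is the one that actually follows from $\{f,g\}=\omega({\rm grad}\,f,{\rm grad}\,g)$ with the paper's contraction convention $\omega(X,Y)=\inn(Y)\inn(X)\omega$, and it agrees with the paper's symplectic formula; since either sign defines a Poisson bracket, this discrepancy is cosmetic and does not affect your argument.
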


The expression of this Poisson bracket in Darboux coordinates is the usual one:
\beq
\label{PBcoor}
    \{f,g\}= \frac{\partial f}{\partial y_i}\frac{\partial g}{\partial x^i}- \frac{\partial f}{\partial x^i}\frac{\partial g}{\partial y_i}\,.
\eeq

As a consequence of these definitions and properties,
many of the results stated in Section \ref{sympgeom}
about Poisson brackets and canonical transformations for 
symplectic manifolds can be extended also to this case.

\section{Nonautonomous Hamiltonian dynamical systems}

\subsection{Nonautonomous Hamiltonian systems}
\label{naspos}

Bearing in mind the above considerations and the set
of postulates for autonomous dynamical systems,
 we can state the 
following postulates for the geometric study of nonautonomous Hamiltonian dynamical systems.

\begin{pos}
{\rm (First Postulate of nonautonomous Hamiltonian mechanics\/)}:
The phase space of a regular (resp. singular)
nonautonomous dynamical system 
is a differentiable manifold $M$ endowed with a cosymplectic 
(resp. precosymplectic) structure $(\eta,\omega)$:
\label{axi1}
\end{pos}

\begin{pos}
{\rm (Second Postulate of nonautonomous  Hamiltonian mechanics\/)}:
The observables or physical magnitudes of a nonautonomous dynamical  system
are functions of $\Cinfty (M)$.
\label{axi2}
\end{pos}

\begin{pos}
{\rm (Third Postulate of nonautonomous  Hamiltonian mechanics\/)}:
The dynamics of a nonautonomous dynamical  system
is given by a function $h\in\Cinfty(M)$
(or, in general, a closed $1$-form $\alpha \in Z^1(M)$, such that $\alpha=\d h$, locally)
which is called the \textbf{Hamiltonian function}
(or  the \textbf{Hamiltonian $1$-form}) of the system.
This function represents the energy of the system.
\end{pos}

\begin{pos}
{\rm (Fourth Postulate of nonautonomous  Hamiltonian mechanics\/)}:
The dynamical trajectories of a nonautonomous dynamical system are the integral curves
of the evolution vector field ${\cal E}_h\in \vf(M)$ associated with $h$;
that is, of the vector field solution to equations \eqref{evovf}.
Thus, they are the solutions to equations \eqref{evoic}.
\end{pos}

Then we define:

\begin{definition}
A \textbf{regular nonautonomous} or \textbf{cosymplectic Hamiltonian dynamical system}
is a set $(M,\eta,\omega;h)$,
where $(M,\eta,\omega)$ is a cosymplectic manifold 
and $h\in\Cinfty(M)$ is the Hamiltonian function of the system.
If $(M,\eta,\omega)$ is a precosymplectic manifold, then $(M,\eta,\omega;h)$
is said to be a \textbf{singular nonautonomous} or \textbf{precosymplectic Hamiltonian dynamical system}.
\label{stdhr}
\end{definition}

\begin{definition}
Given a nonautonomous Hamiltonian dynamical system $(M,\eta,\omega;h)$, 
the \textbf{Hamiltonian problem} posed by  the system
consists in finding the evolution vector field ${\cal E}_h \in \vf (M)$
associated with $h$ (if it exists).
\end{definition}

In addition, we have:

\begin{prop}
\label{teo-evoeqs}
If $(M,\eta,\omega;h)$ is a regular nonautonomous Hamiltonian system, 
then there exists a unique evolution vector field ${\cal E}_h\in\vf(M)$;
that is, a unique vector field which is the solution to equations \eqref{evovf}.
\end{prop}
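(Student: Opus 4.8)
The plan is to prove existence and uniqueness of the evolution vector field $\mathcal{E}_h$ by invoking the bundle isomorphism $\flat_{(\eta,\omega)}$ established in the section on cosymplectic manifolds. Recall that on a cosymplectic manifold $(M,\eta,\omega)$ the map $\flat_{(\eta,\omega)}\colon\vf(M)\to\df^1(M)$, $X\mapsto\inn(X)\omega+(\inn(X)\eta)\eta$, is a $\Cinfty(M)$-module isomorphism, with inverse $\sharp_{(\eta,\omega)}$. Since by Definition \ref{GradHamEvol} the evolution vector field associated with $h$ is \emph{defined} by $\flat_{(\eta,\omega)}(\mathcal{E}_h)=\d h-(R(h)-1)\eta$, and the right-hand side is a well-defined $1$-form on $M$ (both $\d h$ and $\eta$ are differential $1$-forms and $R(h)\in\Cinfty(M)$), the vector field $\mathcal{E}_h=\sharp_{(\eta,\omega)}\bigl(\d h-(R(h)-1)\eta\bigr)$ exists and is unique. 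This is the entire content of the existence-and-uniqueness claim; the argument is essentially the cosymplectic analogue of Proposition \ref{teo-hameqs} for symplectic Hamiltonian systems.

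The one point that deserves a short verification is that this $\mathcal{E}_h$ is indeed \emph{the} solution to equations \eqref{evovf}, i.e. that the characterization by $\inn(\mathcal{E}_h)\eta=1$ and $\inn(\mathcal{E}_h)\omega=\d h-R(h)\eta$ is equivalent to the single equation $\flat_{(\eta,\omega)}(\mathcal{E}_h)=\d h-(R(h)-1)\eta$. This equivalence is precisely what was proved (for all three vector fields simultaneously) in the proposition following Definition \ref{GradHamEvol}, so I would simply cite it; if one wanted to be self-contained, one contracts $\flat_{(\eta,\omega)}(\mathcal{E}_h)=\inn(\mathcal{E}_h)\omega+(\inn(\mathcal{E}_h)\eta)\eta$ with the Reeb vector field $R$, uses $\inn(R)\omega=0$ and $\inn(R)\eta=1$ to read off $\inn(\mathcal{E}_h)\eta=1$, and then substitutes back to get $\inn(\mathcal{E}_h)\omega=\d h-R(h)\eta$.

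Finally I would note the degenerate counterpart for contrast (mirroring the remarks already made after the Darboux theorem for precosymplectic manifolds): in the precosymplectic case $\flat_{(\eta,\omega)}$ fails to be injective, so the equation for $\mathcal{E}_h$ need not be solvable everywhere on $M$ and, where it is solvable, the solution is determined only up to addition of a vector field in $\ker\flat_{(\eta,\omega)}$; a constraint algorithm is then required, exactly as in the presymplectic Hamiltonian case. The main (and only real) obstacle here is notational rather than mathematical: one must be careful that the defining equation of $\mathcal{E}_h$ is stated in the form that makes the Reeb term $(R(h)-1)\eta$ appear with the correct sign, so that contracting with $R$ yields $1$ rather than $R(h)$; once the conventions are aligned with Definition \ref{GradHamEvol}, the proof is immediate from the bijectivity of $\flat_{(\eta,\omega)}$.
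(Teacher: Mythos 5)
Your proposal is correct and follows exactly the paper's own route: the paper's proof is the one-line observation that regularity guarantees the isomorphism $\sharp_{(\eta,\omega)}$, so $\mathcal{E}_h=\sharp_{(\eta,\omega)}\bigl(\d h-(R(h)-1)\eta\bigr)$ exists and is unique, with the equivalence to equations \eqref{evovf} already settled in the proposition following Definition \ref{GradHamEvol}. Your extra remarks on the sign convention and the precosymplectic case are consistent with the paper's surrounding discussion but add nothing essentially new.
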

\begin{proof}
It is immediate because,
if the system is regular, the existence of the isomorphism $\sharp_{(\eta,\omega)}$ is assured.
\\ \qed \end{proof}

\begin{remark}{\rm 
If $(M,\eta,\omega,h)$ is a precosymplectic Hamiltonian system
equations \eqref{evovf} are not necessarily compatible everywhere on $M$ 
and a suitable {\sl constraint algorithm} must be implemented in order to find 
a {\sl final constraint submanifold} $P_f\hookrightarrow M$
(if it exists) where there are evolution vector fields ${\cal E}_h\in\vf(M)$,
tangent to $P_f$ which are solutions to equations \eqref{evovf} on $P_f$
(they are not necessarily unique).
Singular nonautonomous systems and the
corresponding constraint algorithms are studied in \cite{CF-93,CLM-94,LMM-96,LMM-96b,LMMMR-02,Vig-00}.
}\end{remark}

\section{Nonautonomous Lagrangian dynamical systems}

\subsection{Geometric elements}
\label{geomRxTQ}

In order to develop the cosymplectic Lagrangian formalism, first
we have to extend some canonical structures of
$\Tan Q$ to $\Tan(\Real\times\Tan Q)$.
Notice that, as $\Real\times\Tan Q$ is a product manifold, we can write
$$
\Tan(\Real\times\Tan Q) =\Tan\Real\oplus_{\Real\times\Tan Q}
\Tan(\Tan Q)
$$
and this splitting extends in a natural way to vector fields.
Thus, any operation on tangent vectors to $\Tan Q$
acts on tangent vectors to $\Real\times\Tan Q$.
In particular, the canonical endomorphism $J$
of $\Tan(\Tan Q)$ yields a \textsl{canonical endomorphism}
${\cal J} \colon \Tan (\Real\times\Tan Q) \to \Tan (\Real\times\Tan Q)$ and,
similarly, the Liouville vector field on $\Tan Q$
yields a \textsl{Liouville vector field}
$\Delta \in \vf(\Real\times\Tan Q)$ which is the Liouville vector field 
of the vector bundle structure $\pi_{1,0}\colon\Real\times\Tan Q\to\Real\times Q$.
In natural coordinates, their local expressions are
$$
{\cal J} =
\frac{\partial}{\partial v^i} \otimes \d q^i
\quad ,\quad
\Delta = 
v^i\, \frac{\partial}{\partial v^i}
\,.
$$

\begin{definition}
\label{de652}
Let  ${\bf c} \colon\Real \rightarrow \Real\times Q$ be a curve.
We can write ${\bf c} = (c_1,c_2)$, where
$c_1 \colon\Real \rightarrow \Real$ and $c _2\colon\Real \rightarrow Q$.
The \textbf{lift} of ${\bf c}$ 
to $\Real\times\Tan Q$ is the curve
$$
\widehat{\bf c}=(c_1,\widetilde{c_2})\colon\Real \longrightarrow \Real\times\Tan Q  \,,
$$
where $\widetilde{c_2}$ is the canonical lift of~$c_2$ to $\Tan Q$.
The curve $\widehat{\bf c}$ is said to be \textbf{holonomic} on $\Real\times\Tan Q$.
\\
A vector field  $\Gamma \in \vf(\Real\times\Tan Q)$ 
is said to be a  {\sc sode} on $\Real\times\Tan Q$ if its integral curves are holonomic. 
\end{definition}

As in the autonomous case, the last definition can be 
equivalently expressed in terms of the canonical structures of
$\Real\times\Tan Q$, and so it is immediate to prove that:

\begin{prop}
A vector field $\Gamma \in\vf(\Real\times\Tan Q)$ 
is a \textsc{sode} if, and only if,
${\cal J}(\Gamma) = \Delta$.
\end{prop}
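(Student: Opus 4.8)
The plan is to prove the equivalence by a short computation in a natural chart of $\Real\times\Tan Q$, mirroring the autonomous characterization of \textsc{sode}s by the condition $J(X)=\Delta$; the only new feature here is an extra $\partial/\partial t$ component of $\Gamma$, which will turn out to be completely unconstrained. Alternatively, one could deduce the result from the autonomous case by using the splitting $\Tan(\Real\times\Tan Q)=\Tan\Real\oplus_{\Real\times\Tan Q}\Tan(\Tan Q)$ and noting that ${\cal J}$ annihilates the $\Tan\Real$ summand and restricts to the canonical endomorphism $J$ on the second one, while $\Delta$ is a section of (the vertical subbundle of) that second summand; since this forces one to argue with $t$-dependent vector fields on $\Tan Q$, I would present the self-contained coordinate argument.

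First I would write an arbitrary $\Gamma\in\vf(\Real\times\Tan Q)$ in a natural chart $(t,q^i,v^i)$ as
$$
\Gamma = a\,\derpar{}{t} + f^i\,\derpar{}{q^i} + g^i\,\derpar{}{v^i}\,,
$$
with $a,f^i,g^i\in\Cinfty(\Real\times\Tan Q)$. Using the local expressions ${\cal J}=\derpar{}{v^i}\otimes\d q^i$ and $\Delta=v^i\,\derpar{}{v^i}$, a direct computation gives ${\cal J}(\Gamma)=f^i\,\derpar{}{v^i}$, because $\d q^i$ annihilates the $\partial/\partial t$ and $\partial/\partial v^j$ components of $\Gamma$; hence
$$
{\cal J}(\Gamma)=\Delta \iff f^i = v^i\quad (i=1,\ldots,n)\,.
$$
Next I would translate the holonomy condition into coordinates. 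An integral curve ${\bf c}(s)=(t(s),q^i(s),v^i(s))$ of $\Gamma$ satisfies $\frac{dt}{ds}=a\circ{\bf c}$, $\frac{dq^i}{ds}=f^i\circ{\bf c}$, $\frac{dv^i}{ds}=g^i\circ{\bf c}$. By Definition~\ref{de652}, ${\bf c}$ is holonomic if, and only if, ${\bf c}=\widehat{{\bf c}_0}=(c_1,\widetilde{c_2})$ for some curve ${\bf c}_0=(c_1,c_2)\colon\Real\to\Real\times Q$; writing $c_2(s)=(q^i(s))$, its canonical lift is $\widetilde{c_2}(s)=(q^i(s),\dot q^i(s))$, so the holonomy of ${\bf c}$ amounts exactly to $v^i(s)=\frac{dq^i}{ds}$ for all $i$, the first component $c_1$ (hence the function $a$) being left entirely free.

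Combining the two computations yields the equivalence: an integral curve ${\bf c}$ of $\Gamma$ is holonomic precisely when $f^i\circ{\bf c}=v^i\circ{\bf c}$, so $\Gamma$ is a \textsc{sode} — that is, \emph{all} its integral curves are holonomic — if, and only if, $f^i=v^i$ at every point of $\Real\times\Tan Q$, which by the first displayed equivalence means ${\cal J}(\Gamma)=\Delta$. The only step requiring a little care is this last passage, namely deducing the pointwise identity $f^i=v^i$ from its validity along every integral curve of $\Gamma$; this is where one invokes the fact that through each point of $\Real\times\Tan Q$ there passes an integral curve of $\Gamma$, exactly as in the autonomous \textsc{sode} characterization.
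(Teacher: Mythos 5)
Your proof is correct, and it follows essentially the same route the paper intends: the paper states the proposition as "immediate" by analogy with the autonomous case, whose proof is exactly your coordinate computation ($\mathcal{J}(\Gamma)=f^i\,\partial/\partial v^i$, holonomy of integral curves $\Leftrightarrow v^i=dq^i/ds$, and the pointwise identity $f^i=v^i$ via existence of an integral curve through each point), with the correct extra observation that the $\partial/\partial t$ component remains unconstrained.
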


In coordinates, if 
${\bf c}(t)=(s(t),q^i(t))$, 
then
$$
\widehat{\bf c}(t) =
\left(s(t),q^i(t),\frac{d q^i}{d t}(t)\right) \ ,
$$
and the local expression of a {\sc sode} is
$$
\Gamma= g\,\frac{\partial}{\partial s} +
v^i \frac{\partial}{\partial q^i} +
f^i \frac{\partial}{\partial v^i} \ ;
$$
so, in coordinates a {\sc sode} defines the following system of
differential equations
$$
\frac{d s}{d t}=g(q,\dot q,s)  \quad , \quad  
\frac{d^2 q^i}{d t^2}=f^i(q,\dot q,s) \ .
$$

\subsection{Lagrangian formalism. Nonautonomous Lagrangian systems}

The foundations of the Lagrangian formulation of
(first-order) nonautonomous dynamical system are analogous to those given
in Section \ref{poslagdyn} for autonomous Lagrangian systems; 
and they can be stated as follows:

\begin{pos}
{\rm (First Postulate of nonautonomous Lagrangian mechanics\/)}:
The configuration space of a system with $n$ degrees of freedom 
is $\Real\times Q$, where $Q$ is a $n$-dimensional differentiable manifold.
The phase space is the bundle $\Real\times\Tan Q$.
\end{pos}

\begin{pos}
{\rm (Second Postulate of nonautonomous Lagrangian mechanics\/)}:
The observables or physical magnitudes of the system are functions of $\Cinfty (\Real\times\Tan Q)$.
\end{pos}

\begin{pos}
{\rm (Third Postulate of nonautonomous Lagrangian mechanics\/)}:
There is a function $\Lag\in\Cinfty (\Real\times\Tan Q)$, 
called the \textbf{Lagrangian function},
which contains the dynamical information of the system.
\end{pos}

Let $\Lag\in\Cinfty(\Real\times\Tan Q)$ be a Lagrangian function.
As in the autonomous case, using the canonical structures in 
$\Real\times\Tan Q$ we can introduce the {\sl Lagrangian forms} $\theta_\Lag\in\df^1(\Real\times\Tan Q)$ and
 $\omega_\Lag\in\df^2(\Real\times\Tan Q)$ associated with $\Lag$,
 which are defined as follows:
 $\theta_\Lag={\cal J}(\d\Lag)$ and $\omega_\Lag=-\d\theta_\Lag$.
They have the  local expressions
$$
\theta_\Lag= \displaystyle\frac{\displaystyle\partial
\Lag}{\displaystyle\partial v^i}\, \d q^i\ \quad ,\quad 
\omega_\Lag=\d q^i\wedge\d \left(\frac{\displaystyle\partial
\Lag}{\displaystyle\partial v^i}\right) \  .
$$
In the same way, we define the {\sl energy Lagrangian function} associated with $\Lag$ as
$E_\Lag:=\Delta(\Lag)-\Lag$, whose local expression is
$$
E_\Lag=v^i\derpar{\Lag}{v^i}-\Lag\ .
$$

\begin{definition}
\label{legmap}
Given a Lagrangian 
$\Lag\in\Cinfty(\Real\times\Tan Q)$,  the \textbf{Legendre map}
associated with $\Lag$ is the fiber derivative of~$\Lag$,
considered as a function on the vector bundle
$\pi_0 \colon\Real \times\Tan Q\to Q \times \Real$;
that is, the map
${\rm F}\Lag \colon\Real\times\Tan Q \to\Real\times\Tan^*Q$ 
given by
$$
{\rm F}\Lag (t,q,v_q) = \left(t,{\rm F}\Lag_t (q,v_q)\right)
\,,
$$
where $\Lag_t\colon\Tan Q\to\Real$ denotes the restriction of $\Lag$ to each fiber of the bundle $\pi_1\colon\Real\times\Tan Q \to\Real$
(that is, the Lagrangian $\Lag$ with $t$ ``freezed'').
\end{definition}

In natural coordinates we have
$$
t \circ {\rm F}\Lag = t \quad , \quad
q^i \circ {\rm F}\Lag = q^i \quad , \quad p_i \circ{\rm F}\Lag =\derpar{\Lag}{v^i} \ .
$$

Observe that, considering the canonical cosymplectic manifold
$(\Real\times\Tan^*Q,\eta,\omega)$, where $\omega=-\d\theta$, 
then the Lagrangian forms can also be defined as 
\beq
\label{flom}
\theta_\Lag={\rm F}\Lag^{\;*}\theta\quad , \quad\omega_\Lag={\rm F}\Lag^{\;*}\omega\ .
\eeq

And, as in the autonomous case, we also have:

\begin{definition}\label{de811}
A Lagrangian function $\Lag\in\Cinfty(\Real\times\Tan Q)$ is said
to be \textbf{regular}  (resp. \textbf{hyperregular}) if the corresponding
Legendre map ${\rm F}\Lag$ is a local (resp. global) diffeomorphism.
Elsewhere, $\Lag$ is called a \textbf{singular} Lagrangian.

A singular Lagrangian function $\Lag\in\Cinfty(\Real\times\Tan Q)$
 is called \textbf{almost-regular} if $\mathcal{P}:= {\rm F}\Lag(\Real\times\Tan Q)$ is
a closed submanifold of \ $\Real\times\Tan^*Q$ (we will denote
the natural embedding by $\jmath_0\colon\mathcal{P}\hookrightarrow\Real\times\Tan^*Q$), 
${\rm F}\Lag$ is a submersion onto its image, and
the fibers ${\rm F}\Lag^{-1}(FL({\rm p}))$, for every ${\rm p}\in \Real\times\Tan Q$, are
connected submanifolds of $\Real\times\Tan Q$.
\end{definition}

Once again, as in the autonomous case, it is immediate to prove that $\Lag$ is regular if, and only
if, the matrix $\left(\frac{\displaystyle\partial^2\Lag}
{\displaystyle\partial v^i
\partial v^j}\right)$ is regular everywhere.
Therefore, the following equivalences are also immediate:

\begin{prop}  
\label{equinona}
The following conditions are equivalent:
\ben
\item
The Lagrangian $\Lag$ is regular.
\item
The Legendre map ${\rm F}\Lag$  is a local diffeomorphism. 
\item
The pair 
$(\d t,\omega_\Lag)$ is a cosymplectic structure on $\Real\times\Tan Q$.
\een
\end{prop}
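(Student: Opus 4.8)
The plan is to prove the three stated conditions equivalent by the cycle $1 \Rightarrow 2 \Rightarrow 3 \Rightarrow 1$, leaning on material already established in the excerpt, especially the coordinate expression of $\omega_\Lag$, the coordinate expression of ${\rm F}\Lag$, and the characterization of cosymplectic structures in Definition \ref{deest}.

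First I would treat $1 \Leftrightarrow 2$, which is essentially already remarked immediately before the proposition (``$\Lag$ is regular if, and only if, the matrix $\left(\frac{\partial^2\Lag}{\partial v^i\partial v^j}\right)$ is regular everywhere''). Concretely: the Jacobian matrix of ${\rm F}\Lag$ in natural coordinates, using $t\circ{\rm F}\Lag=t$, $q^i\circ{\rm F}\Lag=q^i$, $p_i\circ{\rm F}\Lag=\partial\Lag/\partial v^i$, has a block lower-triangular form with diagonal blocks $\mathrm{Id}_{1}$, $\mathrm{Id}_n$, and the Hessian $W=\left(\frac{\partial^2\Lag}{\partial v^i\partial v^j}\right)$; exactly as in the autonomous Legendre-map regularity proof given earlier. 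Hence ${\rm det}$ of that Jacobian is ${\rm det}\,W$, so ${\rm F}\Lag$ is a local diffeomorphism iff $W$ is everywhere nonsingular iff $\Lag$ is regular by definition.

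Next, for $2 \Rightarrow 3$ and $3 \Rightarrow 1$, the key computation is with the $2$-form $\omega_\Lag = \d q^i\wedge\d\!\left(\frac{\partial\Lag}{\partial v^i}\right)$. I would compute the top form $\d t\wedge(\omega_\Lag)^n$ in Darboux-type coordinates on $\Real\times\Tan Q$. Expanding $\d\!\left(\frac{\partial\Lag}{\partial v^i}\right)=\frac{\partial^2\Lag}{\partial t\,\partial v^i}\d t+\frac{\partial^2\Lag}{\partial q^j\partial v^i}\d q^j+\frac{\partial^2\Lag}{\partial v^j\partial v^i}\d v^j$, and wedging with $\d q^i$ and then with $\d t$, all terms containing $\d t$ again or an extra $\d q$ die; a standard multilinear-algebra count gives
$$
\d t\wedge(\omega_\Lag)^n = \pm\, n!\,({\rm det}\,W)\;\d t\wedge\d q^1\wedge\cdots\wedge\d q^n\wedge\d v^1\wedge\cdots\wedge\d v^n,
$$
paralleling the formula $\Omega_\Lag^{\,n}=n!\,{\rm det}(W)^n\,\d q^1\wedge\cdots$ obtained in the autonomous chapter. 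Since $\d t$ and $\d q^i, \d\!\left(\frac{\partial\Lag}{\partial v^i}\right)$ are all closed (indeed $\omega_\Lag=-\d\theta_\Lag$ is exact, and $\d t$ is exact), the pair $(\d t,\omega_\Lag)$ satisfies the closedness requirements of Definition \ref{deest} automatically; the only remaining condition is that $\d t\wedge(\omega_\Lag)^n$ be a volume form, which by the displayed identity holds iff ${\rm det}\,W\neq 0$ everywhere, i.e. iff $\Lag$ is regular. This simultaneously gives $2\Rightarrow3$ (via $1\Leftrightarrow2$) and $3\Rightarrow1$.

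The main obstacle is purely bookkeeping: getting the sign and combinatorial coefficient in $\d t\wedge(\omega_\Lag)^n$ right, i.e. verifying carefully that in the wedge power $(\omega_\Lag)^n$ only the ``$\d q^i\wedge\d v^j$'' part of each factor survives after wedging with $\d t$, so that the determinant of the velocity Hessian is exactly what governs non-degeneracy. This is the same mechanism as in Proposition/Theorem on regular Lagrangians in Chapter~\ref{sdl}, so I would simply invoke that computation \emph{mutatis mutandis} rather than redo it, remarking that the extra factor $\d t$ together with the fiberwise symplectic structure on the leaves $\{t=\text{const}\}$ reduces everything to the autonomous case on each leaf. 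A clean alternative for $3\Leftrightarrow2$ is to use \eqref{flom}, $\omega_\Lag={\rm F}\Lag^*\omega$ and $\d t={\rm F}\Lag^*\eta$ (since $t\circ{\rm F}\Lag=t$): then $(\d t,\omega_\Lag)$ is cosymplectic iff ${\rm F}\Lag$ pulls the canonical cosymplectic volume $\eta\wedge\omega^n$ back to a volume form, which happens iff ${\rm F}\Lag$ is a local diffeomorphism — giving $2\Leftrightarrow3$ directly from the canonical model without any coordinate computation at all. I would present this pullback argument as the primary route and mention the coordinate computation as a check.
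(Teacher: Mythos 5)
Your proposal is correct, and it actually supplies more than the paper does: the text declares the equivalences ``immediate'' after recalling that $\Lag$ is regular exactly when the velocity Hessian $W=\bigl(\partial^2\Lag/\partial v^i\partial v^j\bigr)$ is everywhere nonsingular, and leaves the reader to transcribe the autonomous coordinate computations (block-triangular Jacobian of ${\rm F}\Lag$ for $1\Leftrightarrow 2$, and the top-power computation showing $\d t\wedge\omega_\Lag^{\,n}$ is a volume form precisely when $\det W\neq 0$, closedness of $\d t$ and $\omega_\Lag=-\d\theta_\Lag$ being automatic). Your coordinate route is exactly that implied argument, with the correct coefficient $\pm\,n!\,\det W$ for $\d t\wedge\omega_\Lag^{\,n}$. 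What you add beyond the paper is the intrinsic argument for $2\Leftrightarrow 3$: since $t\circ{\rm F}\Lag=t$ gives ${\rm F}\Lag^*\d t=\d t$ and \eqref{flom} gives $\omega_\Lag={\rm F}\Lag^*\omega$, one has $\d t\wedge\omega_\Lag^{\,n}={\rm F}\Lag^*(\eta\wedge\omega^n)$, and because $\Real\times\Tan Q$ and $\Real\times\Tan^*Q$ have the same dimension, the pullback of the canonical cosymplectic volume is a volume form at a point if and only if $\Tan{\rm F}\Lag$ is an isomorphism there. This buys a coordinate-free equivalence $2\Leftrightarrow 3$ (closedness again being free), at the cost of still needing the Jacobian computation, or the coordinate expression of $\omega_\Lag$, to connect either condition with the Hessian criterion that defines regularity in condition $1$. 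Both routes are sound; the pullback argument is the cleaner way to see $2\Leftrightarrow 3$, while the paper's implicit proof goes entirely through the local expressions.
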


If $\Lag$ is not regular, then $(\d t,\omega_\Lag)$ is a precosymplectic structure on $\Real\times\Tan Q$.
Thus, if $\Lag$ is a regular (resp. singular) Lagrangian,
we have that $(\Real\times\Tan Q,\d t,\omega_\Lag;E_\Lag)$ is a 
cosymplectic (resp. precosymplectic) Hamiltonian system.
Then we define:

\begin{definition}
A \textbf{nonautonomous} or {\rm pre)cosymplectic Lagrangian dynamical system}
is a pair $(\Real\times\Tan Q,\Lag )$,
where $Q$ is an $n$-dimensional manifold,
and $\Lag \in \Cinfty (\Real\times\Tan Q)$ is the Lagrangian function of the system.
\label{stdlr}
\end{definition}

And now we can state:

\begin{pos}
{\rm (Fourth Postulate of nonautonomous Lagrangian mechanics\/)}:
The dynamical trajectories of a nonautonomous Lagrangian system are the integral curves
of a vector field $\Gamma_\Lag\in \vf(\Real\times\Tan Q)$ satisfying the conditions:
\ben
\item
$\Gamma_\Lag$ is the evolution vector field associated with $E_\Lag$;
that is, it is a solution to the equations
\beq
\inn(\Gamma_\Lag)\d t=1 \quad,\quad
\inn (\Gamma_\Lag)\omega_\Lag = \d E_\Lag- R_\Lag(E_\Lag)\d t \ .
\label{nelm}
\eeq
\item
$\Gamma_\Lag$ is a {\sc sode}:
\ ${\cal J}(\Gamma_\Lag ) = \Delta$.
\een
Therefore, these trajectories are the holonomic curves 
${\bf c}\colon I\subset\Real\to\Real\times\Tan Q$
which are the solutions to equations
\beq
\inn(\widetilde{\bf c})(\d t\circ{\bf c})=1 \quad  , \quad
\inn(\widetilde{\bf c})(\omega_\Lag\circ{\bf c})= (\d E_\Lag -R_\Lag(E_\Lag)\d t)\circ {\bf c}\ .
\label{neuq}
\eeq
Equations \eqref{neuq} are the
 \textbf{nonautonomous Euler--Lagrange equations for curves}.
Equations \eqref{nelm} are called the
 \textbf{nonautonomous Lagrangian equations for vector fields} and
a vector field solution to them (if it exists) is a
\textbf{nonautonomous Lagrangian dynamical vector field}.
If, in addition, $\Gamma_\Lag$ is a {\sc sode}, then
it is called a \textbf{nonautonomous Euler--Lagrange vector field} of the system, 
\end{pos}

In this postulate, $R_\Lag$ is a Reeb vector field of the structure
$(\Real\times\Tan Q,\d t,\omega_\Lag)$, determined by the corresponding equations to \eqref{rvf} which are
$$
\inn(R_\Lag)\d t=1 \quad , \quad \inn(R_\Lag)\omega_\Lag=0 \ .
$$

\begin{definition}
Given a nonautonomous Lagrangian dynamical system $(\Real\times\Tan Q,\Lag)$, 
the \textbf{Lagrangian problem} posed by  the system
consists in finding a {\sc sode} vector field $\Gamma_\Lag \in \vf (\Real\times\Tan Q)$
solutions to (\ref{nelm}).
\end{definition}

\noindent {\bf Local expressions}:
Consider a natural chart $(U;t,q^i,v^i)$ of $\Real\times\Tan Q$.
Bearing in mind  the local expressions of the several geometric elements
appearing in the nonautonomous dynamical equations,
first we obtain that the Reeb vector field is given by
$$
\displaystyle R_\Lag=\derpar{}{t}+R^i\derpar{}{v^i} \ ,
$$ 
where the functions $R^i$ are obtained from 
$$
\frac{\partial^2\Lag}{\partial t\partial v^j}+\frac{\partial^2\Lag}{\partial v^j\partial v^i}R^i=0 \ ,
$$
and then we have that
$$
R_\Lag(E_\Lag)=-\derpar{\Lag}{t}  \ .
$$
Therefore, if \(\dst \Gamma_\Lag = f\derpar{}{t}+A^i\derpar{}{q^i} +
B^i\derpar{}{v^i}\),
equations (\ref{nelm}), written in coordinates, lead to $f=1$ and
\beann
0&=&\frac{\partial^2\Lag}{\partial v^j \partial v^i}B^i+
\left(\frac{\partial^2\Lag}{\partial q^j \partial v^i}+
\frac{\partial^2\Lag}{\partial v^j \partial q^i}\right)A^i+
\frac{\partial^2\Lag}{\partial v^i \partial q^j}v^i+
\frac{\partial^2\Lag}{\partial v^j \partial t}-
\derpar{\Lag}{q^j} \ ,
\\
0&=&\frac{\partial^2\Lag}{\partial v^j \partial v^i}(A^i-v^i) \ .
\eeann
To demand that $\Gamma_\Lag$ is a {\sc sode} is locally equivalent to demand that
$A^i=v^i$. Furthermore, the integral curves of $\Gamma_\Lag$
are holonomic; that is, they are of the form ${\bf c}(t)=(t,q^i(t),\dot q^i(t))$ and
$$
A^i = v^i = \frac{d q^i}{d t}
\quad , \quad
B^i = \frac{d^2q^i}{d t^2} \ ,
$$
and the combination of these expressions with the above equations leads to the
equation of  the integral curves of $\Gamma_\Lag$ which is
$$
\left(\frac{\partial^2\Lag}{\partial v^j \partial v^i}\circ{\bf c}\right)
\frac{d^2q^i}{d t^2}+
\left(\frac{\partial^2\Lag}{\partial v^j \partial q^i}\circ{\bf c}\right)
\frac{d q^i}{d t}+
\left(\frac{\partial^2\Lag}{\partial v^j \partial t}\circ{\bf c}\right)
-\derpar{\Lag}{q^j}\circ{\bf c}=0 \ ,
$$
or also in an equivalent form as in \eqref{ELequats},
which is the classical coordinate expression of the Euler--Lagrange equations.

\begin{remark}{\rm
If the Lagrangian $\Lag$ is singular (in particular almost-regular), 
then the existence of solutions to equations \eqref{nelm}
is not assured except, perhaps, in a
submanifold of $\Real\times\Tan Q$. Furthermore,
when these solutions exist, they are not {\sc sode}, in general.
Thus, in order to recover the Euler--Lagrange equations \eqref{neuq}
for the integral curves of $\Gamma_\Lag$,
the condition ${\cal J}(\Gamma_\Lag)=\Delta$ must be added to the
Lagrangian equations  \eqref{nelm}.
Then, as in the autonomous case, in general,
a constraint algorithm must be implemented in order to find
a submanifold $S\hookrightarrow\Real\times\Tan Q$ where the existence of 
{\sc sode} vector fields solutions to the Lagrange equations on $S$,
and tangent to $S$, are assured.
Furthermore, Reeb vector fields $R_\Lag$ are not uniquely defined by
equations \eqref{rvf} for this case; nevertheless
the dynamics and the constraint algorithm are independent 
of the selected Reeb vector field $R_\Lag$ \cite{CLM-94}. }
\end{remark}

\subsection{Canonical Hamiltonian formalism}

As in the autonomous case analyzed in Section \ref{canforma},
we study the case of hiperregular systems;
although all the results hold also for the case of
regular systems,
changing $\Real\times\Tan^*Q$ by ${\rm F}\Lag (\Real\times\Tan Q)\subset\Real\times\Tan^*Q$.
First, as ${\rm F}\Lag$ is a diffeomorphism we have:

\begin{prop}
Let $(\Real\times\Tan Q,\Lag)$ be a hiperregular nonautonomous Lagrangian system.
Then there exists a unique function ${\rm h}\in\Cinfty (\Real\times\Tan^*Q)$ such that
${\rm F}\Lag^*{\rm h}=E_\Lag$,
which is the \textbf{Hamiltonian function} associated with the system
$(\Real\times\Tan Q,\Lag)$.
The triple $(\Real\times\Tan^*Q,\d t,\omega,{\rm h})$ is the 
\textbf{canonical nonautonomous Hamiltonian system} associated with $(\Real\times\Tan Q,\Lag)$,
where $\omega\in\df^2(\Real\times\Tan^*Q)$
is given in \eqref{flom}.
\end{prop}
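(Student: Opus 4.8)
The plan is to mimic, in the nonautonomous setting, the argument already used for the autonomous equivalence (Proposition establishing that $\Leg^*{\rm h}=E_\Lag$ for hyperregular $\Lag$). The statement asserts two things: existence and uniqueness of ${\rm h}\in\Cinfty(\Real\times\Tan^*Q)$ with ${\rm F}\Lag^*{\rm h}=E_\Lag$, and the fact that $(\Real\times\Tan^*Q,\d t,\omega,{\rm h})$ is a (cosymplectic) Hamiltonian system. The first part is essentially formal once one recalls that ${\rm F}\Lag$ is a diffeomorphism: I would simply set ${\rm h}:={\rm F}\Lag_*E_\Lag=E_\Lag\circ{\rm F}\Lag^{-1}$, which is smooth because ${\rm F}\Lag^{-1}$ is smooth, and then ${\rm F}\Lag^*{\rm h}=E_\Lag$ holds by construction. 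Uniqueness is immediate: if ${\rm h}'$ also satisfies ${\rm F}\Lag^*{\rm h}'=E_\Lag={\rm F}\Lag^*{\rm h}$, then ${\rm h}'={\rm h}$ since ${\rm F}\Lag$ is surjective (indeed a diffeomorphism).

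Next I would verify that the candidate triple really is a cosymplectic Hamiltonian system in the sense of the earlier Definition. The cosymplectic structure $(\d t,\omega)$ on $\Real\times\Tan^*Q$ is exactly the canonical one described in the ``Canonical model'' paragraph: $\eta=\pi_1^*\d t$ is closed, $\omega=\pi_2^*\Omega=-\d\theta$ is closed, and $\eta\wedge\omega^n$ is a volume form because $\Omega$ is symplectic on $\Tan^*Q$ and $\d t$ is transverse to the fibres (in Darboux coordinates $\eta\wedge\omega^n=\d t\wedge\d q^1\wedge\d p_1\wedge\cdots$). Since ${\rm h}\in\Cinfty(\Real\times\Tan^*Q)$, the Third and Fourth Postulates of nonautonomous Hamiltonian mechanics apply, and by the regularity Proposition (the one giving uniqueness of the evolution vector field on a cosymplectic manifold) there exists a unique evolution vector field ${\cal E}_{\rm h}$. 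So ``canonical nonautonomous Hamiltonian system'' is just the instantiation of the general definition with this particular $M$, $(\eta,\omega)$ and ${\rm h}$; no further work is needed beyond pointing this out.

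A useful complementary check, which I would include since it makes the word ``canonical'' meaningful, is the coordinate verification that ${\rm h}$ agrees with the classical Hamiltonian $p_i v^i-\Lag$ read off through the Legendre map. Using the local expression of ${\rm F}\Lag$ ($t\circ{\rm F}\Lag=t$, $q^i\circ{\rm F}\Lag=q^i$, $p_i\circ{\rm F}\Lag=\partial\Lag/\partial v^i$) together with $E_\Lag=v^i\,\partial\Lag/\partial v^i-\Lag$, one sees that ${\rm F}\Lag^*(p_i v^i-\widehat\Lag)=E_\Lag$ where $\widehat\Lag$ denotes $\Lag$ expressed in the new variables, so that locally ${\rm h}=p_i v^i(q,p,t)-\Lag(q,v(q,p,t),t)$; this is just solving $p_i=\partial\Lag/\partial v^i$ for $v^i$, which is possible precisely because $\Lag$ is (hyper)regular. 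I would also remark, as in the autonomous case, that $\omega_\Lag={\rm F}\Lag^*\omega$ and $\d t$ on $\Real\times\Tan Q$ pull back correctly, so the Lagrangian and Hamiltonian cosymplectic Hamiltonian systems are ${\rm F}\Lag$-related.

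The only genuine subtlety — and the one place I would be careful — is the same as in the autonomous proof: the passage from ``regular'' to ``hyperregular''. For a merely regular Lagrangian ${\rm F}\Lag$ is only a local diffeomorphism, so ${\rm h}$ need exist only locally (or on ${\rm F}\Lag(\Real\times\Tan Q)$), and I would state explicitly that the Proposition is stated for the hyperregular case where ${\rm F}\Lag$ is a global diffeomorphism, deferring the regular/almost-regular discussion to the remark that follows (analogous to the autonomous remark about ${\rm h}_0\in\Cinfty(P)$ with $\Leg_0^*{\rm h}_0=E_\Lag$). Thus the main obstacle is not a computation but keeping the hypotheses straight; the proof itself is short and follows the autonomous template almost verbatim.
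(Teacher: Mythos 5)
Your argument is correct and is essentially the paper's own: the paper states this Proposition without a detailed proof, relying precisely on the observation that for a hyperregular Lagrangian ${\rm F}\Lag$ is a global diffeomorphism, so one sets ${\rm h}=E_\Lag\circ{\rm F}\Lag^{-1}$, gets uniqueness from surjectivity of ${\rm F}\Lag$, and notes that $(\Real\times\Tan^*Q,\d t,\omega)$ carries the canonical cosymplectic structure. Your additional coordinate check and the caution about the merely regular case (replacing $\Real\times\Tan^*Q$ by ${\rm F}\Lag(\Real\times\Tan Q)$, or working locally) match the remarks the paper itself makes.
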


Therefore, we have the cosymplectic Hamiltonian system
$(\Real\times\Tan^*Q,\d t,\omega;{\rm h})$, fulfilling the postulates and results established in Section \ref{naspos}.
and hence the Hamiltonian equations for vector fields
${\cal E}_{\rm h}\in\vf(\Real\times\Tan^*Q)$ and their
integral curves ${\bf c}\colon\Real\to\Real\times\Tan^*Q$ read
\bea
\inn({\cal E}_{\rm h})\d t=1 \quad&,&\quad 
\inn({\cal E}_{\rm h})\omega = \d{\rm h}- R({\rm h})\d t\ , 
\label{celmh1} \\
\inn(\widetilde{\bf c})(\d t\circ{\bf c})=1 \ & ,& \
\inn(\widetilde{\bf c})(\omega\circ{\bf c}) = (\d{\rm h}- R({\rm h})\d t)\circ{\bf c}\ .
\label{celmh2} 
\eea

In canonical coordinates, the local expression 
of the dynamical vector field ${\cal E}_{\rm h}$ solution to equations 
\eqref{celmh1} is given by \eqref{evocoor1}
and the equations of its integral curves \eqref{celmh2} are
equations \eqref{evocoor2}, with $f={\rm h}$.
Then, since $\displaystyle\frac{dt}{ds}=1$ implies $t(s)=s+ const.$
($t$ is an affine transformation of $s$), we deduce that
$$
\frac{dx^i}{dt} = \frac{\partial f}{\partial y_i} \quad ,\quad 
\frac{dy_i}{dt}= -\frac{\partial f}{\partial x^i}\ ,
$$
which are the {\sl cosymplectic Hamilton equations}.

The relation between the Lagrangian and the canonical Hamiltonian formalisms of a (hyper)regular Lagrangian system is stated as in the autonomous case as follows:

\begin{teor} {\rm (Equivalence Theorem)}
Let $(\Real\times\Tan Q,\Lag)$ be a (hyper)regular Lagrangian system.
\ben
\item
If $\Gamma_\Lag$ is the Lagrangian vector field 
solution to equations (\ref{nelm}),
then there exists a unique vector field
 ${\rm F}\Lag_*\Gamma_\Lag\equiv {\cal E}_{\rm h}\in\vf (\Real\times\Tan^*Q)$
which is the solution to equations (\ref{celmh1}).

Conversely, if ${\cal E}_{\rm h}$ is the evolution vector field
solution to equation (\ref{celmh1}),
then there exists a unique vector field
 ${\rm F}\Lag_*^{-1}{\cal E}_{\rm h}\equiv \Gamma_\Lag\in\vf (\Real\times\Tan Q)$
which is the solution to equations (\ref{nelm}) and \eqref{edso}.
\item
Equivalently, if $\mbox{\boldmath$\gamma$}\colon I\subset\Real\to\Real\times Q$ 
is a curve and its canonical lift
$\widehat{\mbox{\boldmath$\gamma$}} \colon I\subset\Real\to\Real\times\Tan Q$ is a
solution to equation \eqref{neuq},
then ${\mbox{\boldmath$\zeta$}}={\rm F}\Lag\circ\widehat{\mbox{\boldmath$\gamma$}}$ is a curve
solution to equation \eqref{celmh2}.

Conversely, if ${\mbox{\boldmath$\zeta$}}\colon I\subset\Real\to\Real\times\Tan^*Q$ 
is a curve solution to equation \eqref{celmh2},
then $\widehat{\mbox{\boldmath$\gamma$}}=\widehat{\pi_{(1,0)}\circ{\mbox{\boldmath$\zeta$}}} \colon I\subset\Real\to\Real\times\Tan Q$ 
is a curve solution to equation \eqref{neuq}.
\een
\label{4.26}
\end{teor}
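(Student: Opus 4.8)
The plan is to reduce everything to the autonomous Equivalence Theorem (Theorem~\ref{eqteorema}) by exploiting the fact that, since $\Lag$ is hyperregular, the Legendre map ${\rm F}\Lag\colon\Real\times\Tan Q\to\Real\times\Tan^*Q$ is a global diffeomorphism, and the pull-backs of the cosymplectic structure satisfy $\theta_\Lag={\rm F}\Lag^*\theta$, $\omega_\Lag={\rm F}\Lag^*\omega$ by \eqref{flom}, while ${\rm F}\Lag^*{\rm h}=E_\Lag$ by the defining property of ${\rm h}$. From these three identities one gets at once that ${\rm F}\Lag^*(\d t)=\d t$ (since ${\rm F}\Lag$ is the identity on the $\Real$ factor) and, applying ${\rm F}\Lag^*$ to the equation $\inn(R)\omega=0$, $\inn(R)\d t=1$ characterizing the Reeb vector field of the canonical structure, that ${\rm F}\Lag^*$ carries $R$ to a Reeb vector field $R_\Lag$ of $(\Real\times\Tan Q,\d t,\omega_\Lag)$; equivalently $R\circ{\rm F}\Lag=\Tan{\rm F}\Lag\circ R_\Lag$. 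Then ${\rm F}\Lag^*\big(R({\rm h})\big)=R_\Lag(E_\Lag)$.

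For item~1 (vector fields), I would argue exactly as in the proof of Theorem~\ref{eqteorema}. Suppose $\Gamma_\Lag$ solves \eqref{nelm}. Then
$$
0=\inn(\Gamma_\Lag)\d t-1=\inn(\Gamma_\Lag){\rm F}\Lag^*\d t-1={\rm F}\Lag^*\big(\inn({\rm F}\Lag_*\Gamma_\Lag)\d t-1\big) \ ,
$$
and similarly
$$
0=\inn(\Gamma_\Lag)\omega_\Lag-\d E_\Lag+R_\Lag(E_\Lag)\d t={\rm F}\Lag^*\big(\inn({\rm F}\Lag_*\Gamma_\Lag)\omega-\d{\rm h}+R({\rm h})\d t\big) \ ;
$$
since ${\rm F}\Lag$ is a diffeomorphism, ${\cal E}_{\rm h}:={\rm F}\Lag_*\Gamma_\Lag$ solves \eqref{celmh1}, and it is the unique such vector field by Proposition~\ref{teo-evoeqs}. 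The converse is the same computation with ${\rm F}\Lag_*^{-1}$; that $\Gamma_\Lag={\rm F}\Lag_*^{-1}{\cal E}_{\rm h}$ is automatically a {\sc sode} follows from the regularity of $\Lag$ exactly as in the autonomous case (the Hessian block of the Jacobian of ${\rm F}\Lag$ being regular forces $A^i=v^i$ in the local equations).

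For item~2 (curves), I would use the characterization of the integral curves: by the (nonautonomous) analogue of Theorems~\ref{teor:Lcurv} and \ref{teo:Hcurv}, a holonomic curve $\widehat{\mbox{\boldmath$\gamma$}}$ solves \eqref{neuq} iff it is an integral curve of the Euler--Lagrange vector field $\Gamma_\Lag$, and a curve ${\mbox{\boldmath$\zeta$}}$ solves \eqref{celmh2} iff it is an integral curve of ${\cal E}_{\rm h}$. Given such a $\widehat{\mbox{\boldmath$\gamma$}}$, one has $\dot{\widehat{\mbox{\boldmath$\gamma$}}}=\Gamma_\Lag\circ\widehat{\mbox{\boldmath$\gamma$}}$, hence for ${\mbox{\boldmath$\zeta$}}={\rm F}\Lag\circ\widehat{\mbox{\boldmath$\gamma$}}$,
$$
\dot{\mbox{\boldmath$\zeta$}}=\Tan{\rm F}\Lag\circ\dot{\widehat{\mbox{\boldmath$\gamma$}}}=\Tan{\rm F}\Lag\circ\Gamma_\Lag\circ\widehat{\mbox{\boldmath$\gamma$}}={\cal E}_{\rm h}\circ{\rm F}\Lag\circ\widehat{\mbox{\boldmath$\gamma$}}={\cal E}_{\rm h}\circ{\mbox{\boldmath$\zeta$}} \ ,
$$
so ${\mbox{\boldmath$\zeta$}}$ solves \eqref{celmh2}. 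Conversely, from a solution ${\mbox{\boldmath$\zeta$}}$ of \eqref{celmh2} one sets $\mbox{\boldmath$\gamma$}=\pi_{(1,0)}\circ{\mbox{\boldmath$\zeta$}}$ and takes its holonomic lift $\widehat{\mbox{\boldmath$\gamma$}}$; using that $\Gamma_\Lag$ is a {\sc sode} together with $\pi_{(1,0)}\circ{\rm F}\Lag=\pi_{(1,0)}$ one checks $\widehat{\mbox{\boldmath$\gamma$}}={\rm F}\Lag_*^{-1}\!\!\circ{\mbox{\boldmath$\zeta$}}$ and reverses the computation. The main obstacle I anticipate is purely bookkeeping: one must be careful that $R_\Lag$ is only determined up to $\ker\omega_\Lag$ is \emph{not} an issue here because hyperregularity makes $(\d t,\omega_\Lag)$ genuinely cosymplectic, so $R_\Lag$ is unique; and one must verify the slightly delicate point that $\pi_{(1,0)}\circ{\rm F}\Lag=\pi_{(1,0)}$ and that the lift $\widehat{\,\cdot\,}$ commutes appropriately with ${\rm F}\Lag$ on holonomic curves, which is where the {\sc sode} property of $\Gamma_\Lag$ is genuinely used.
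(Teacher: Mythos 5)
Your proposal is correct and follows essentially the same route as the paper, whose proof of Theorem \ref{4.26} simply states that it follows the pattern of the autonomous Equivalence Theorem \ref{eqteorema} (pull back the dynamical equation through the diffeomorphism ${\rm F}\Lag$ using ${\rm F}\Lag^*\omega=\omega_\Lag$, ${\rm F}\Lag^*{\rm h}=E_\Lag$, then transfer integral curves). Your explicit verification that ${\rm F}\Lag_*R_\Lag=R$, hence $R_\Lag(E_\Lag)={\rm F}\Lag^*(R({\rm h}))$, is exactly the extra ingredient the nonautonomous case needs and which the paper leaves implicit, so no gap remains.
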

\begin{proof}
The proof follows the same pattern as in Theorem \ref{eqteorema}.
\\ \qed\end{proof}

\begin{remark}{\rm 
If the Lagrangian is almost-regular, then there exists
${\rm h}_0\in\Cinfty(\mathcal{P})$ such that 
${\rm F}\Lag_0^*{\rm h}_0=E_\Lag$, where
${\rm F}\Lag_0\colon\Real\times\Tan Q\to \mathcal{P}$ is defined by
$\jmath_0\circ{\rm F}\Lag_0={\rm FL}$. 
Now, taking $\omega_0=\jmath_0^*\omega$, the set
$(\mathcal{P},\d t,\omega_0;{\rm h}_0)$ is a precosymplectic 
Hamiltonian system which is called the {\sl canonical Hamiltonian system}
associated with $(\Real\times\Tan Q,\Lag)$.
In particular, the equations equivalent to (\ref{celmh1}) are
$$
\inn(X_0)\d t=1\quad ,\quad
\inn(X_0)(\jmath_0^*\omega_0) =
\d {\rm h}_0-R_0({\rm h}_0)\d t \  .
$$
where $X_0\in\vf({\cal P})$. The existence of such a vector field
$X_0$ solution to the above equations is not
assured except, perhaps, on a submanifold 
$\mathcal{P}_f\hookrightarrow \mathcal{P}$
to which it is tangent. Moreover, the solution is not unique.
Furthermore, although Reeb vector fields $R$ are not uniquely defined by
equations \eqref{rvf} for this case,
the dynamics and the constraint algorithm are independent 
of the selected Reeb vector field $R_0$.
(Details on the construction of the canonical Hamiltonian formalism for
almost-regular nonautonomous Lagrangians and a deeper study on the constraint algorithms 
and the equivalence between both formalisms
can be found, for instance, in
\cite{CF-93,CLM-94,HL-dstd,LMM-96b,LMMMR-02,Vig-00}).
}\end{remark}

\section{Unified Lagrangian-Hamiltonian formalism for nonautonomous systems}

This section is devoted to explain the extension of the 
Skinner-Rusk unified formalism, which was presented in
Section \ref{SRuf}, to the case of nonautonomous dynamical systems, using the cosymplectic setting
(see \cite{BEMMR-2008,CMC-2002,GM-05}
for different but equivalent approaches).
This extended formalism is quite similar to the autonomous case.

\subsection{Extended unified bundle. Unified nonautonomous formalism}

\begin{definition}
Let $Q$ be a $n$-dimensional differentiable manifold.
The \textbf{extended unified bundle} or \textbf{extended Pontryagin bundle} is
 ${\mathcal M}:=
 \Real\times\Tan Q \times_Q\Tan^*Q$, and has natural projections
$$
\kappa_1\colon{\mathcal M} \to\Real \times\Tan Q \quad , \quad
\kappa_2\colon{\mathcal M} \to\Real \times\Tan^*Q \quad , \quad
\kappa_0\colon{\mathcal M} \to\Real \times Q 
$$
\end{definition}

Its natural coordinates are $(t,q^i,v^i,p_i)$. 

\begin{definition}
A curve ${\bf c}\colon\Real\rightarrow{\cal M}$
is \textbf{holonomic} on ${\cal M}$ if
$\kappa_1\circ{\bf c}\colon\Real\to\Real\times\Tan Q$ is holonomic.

A vector field $\Gamma \in\vf({\cal M})$  is a \textbf{holonomic vector field} on ${\cal M}$
if its integral curves  are holonomic on ${\cal M}$. 
\end{definition}

The coordinate expressions of holonomic curves and vector fields on  ${\cal M}$ are the following
\beann
{\bf c}(t)=\Big(s(t),q^i(t),\frac{dq^i}{d t}(t),p_i(t),\Big) \ , \\
\Gamma= g\derpar{}{s}+v^i\derpar{}{q^i}+F^i\derpar{}{v^i}+G_i\derpar{}{p_i}  \ .
\eeann

The extended unified bundle ${\cal M}$ is endowed with the following canonical structures:

\begin{definition}
\ben
\item
The \textbf{coupling function} on ${\mathcal M}$
is the map ${\rm C}\colon{\mathcal M}\to\Real$  defined by:
$$
\begin{array}{ccccl}
{\rm C} & : & {\mathcal M}=\Real
\times\Tan Q \times_Q\Tan^*Q& \longrightarrow & \Real \\
\noalign{\medskip} & & (t, q,v_q,\xi_q)=(t,q^i,v^i,p_i)
 & \mapsto &  \langle v_q,\xi_q\rangle=v^ip_i
\end{array} \ .
$$
\item
If $\theta\in\df^1(\Real\times\Tan^*Q)$ and $\omega\in\df^2(\Real\times\Tan^*Q)$
are the canonical forms on $\Real\times\Tan^*Q$,
then the canonical forms on ${\mathcal M}$ are
$$
\Theta_{\mathcal M}:=\kappa_2^*\,\theta\in\df^1({\mathcal M})
\quad , \quad
\Omega_{\mathcal M}:=\kappa_2^*\,\omega=-\d\Theta_{\mathcal M}\in\df^2({\mathcal M}) \ ,
$$
We denote also by $\d t$ the pull-back to ${\mathcal M}$ of the canonical $1$-form on $\Real$.
\een
\end{definition}

\begin{definition}
Given a Lagrangian function $\Lag\in\Cinfty \left(\Real \times \Tan Q \right)$,
if ${\mathfrak L}=\kappa_1^*\Lag$, the \textbf{Hamiltonian function} is defined as
$$
{\rm H}={\rm C}-{\mathfrak L}\in\Cinfty({\mathcal M}) \ .
$$
\end{definition}

In coordinates, we have
$$
\Theta_{\cal M}=p_i\d q^i \quad , \quad
\Omega_{\cal M}=\d q^i\wedge\d p_i\quad , \quad
{\rm H}=v^ip_i-{\mathfrak L}(t,q^i,v^i) \ .
$$

The triple $({\cal M},\d t,\Omega_{\cal M})$ is a precosymplectic manifold, since
$\dst\ker\Omega_{\cal M}=\left\langle\derpar{}{t},\derpar{}{v^i}\right\rangle$,
where the Reeb vector field can be taken to be
$\dst R=\derpar{}{t}$;
and hence
$({\cal M},\d t,\Omega_{\cal M},{\rm H})$ is a precosymplectic Hamiltonian system.
The {\sl\textbf{dynamical problem}} for this system consists in finding
an evolution vector field $X_{\rm H}\in\vf({\cal M})$,
which is a solution to the equations
\begin{equation}
\label{Mhamilton-contact-eqs0}
\inn(X_{\rm H})\d t=1 \quad , \quad
\inn(X_{\rm H})\Omega_{\cal M}=\d{\rm H}-R({\rm H})\d t \ ,
\end{equation}
and then the integral curves ${\bf c}\colon\Real\to{\cal W}$
 of $X_{\rm H}$ are solutions to the equations
\begin{equation}
\label{Mhamilton-contact-eqs0b}
\inn(\widetilde{\bf c})(\d t\circ{\bf c})=1 \quad , \quad
\inn(\widetilde{\bf c})(\Omega_{\cal W}\circ{\bf c}))=\left(\d{\rm H}-R({\rm H})\d t\right)\circ{\bf c} \ .
\end{equation}
But,  as $({\cal M},\Omega_{\cal M},{\rm H})$ is a precosymplectic Hamiltonian system,
these equations are not compatible on ${\cal M}$.
In fact, for an arbitrary vector field in $\vf({\cal M})$,
$$
X_{\rm H} =g\derpar{}{s}+ f^i\derpar{}{q^i}+F^i\derpar{}{v^i}+G_i\derpar{}{p_i} \ ,
$$ 
equations \eqref{Mhamilton-contact-eqs0} leads to
\beq
g= 1 \quad , \quad
f^i= v^i \quad , \quad
G_i= \displaystyle \derpar{\Lag}{q^i} \quad ,  \quad
p_i= \derpar{\Lag}{v^i}  \ . 
\label{Mfirstuni}
\eeq
which give different kind of information:
\begin{itemize}
\item
The first equation fixes the evolution parameter $s=t$.
\item
The second equations assure that $X_{\rm H}$ is a holonomic vector field on ${\cal M}$
(regardless the regularity of the Lagrangian function).
\item
The third equations allow us to determine the component functions $G_i$.
\item
The fourth equations are compatibility conditions;
that is, {\sl compatibility constraints} defining a submanifold 
${\cal M}_0\hookrightarrow{\cal M}$
where vector fields $X_{\rm H}$ solution to \eqref{Mhamilton-contact-eqs0} are defined.
As in the autonomous case, these constraints give the Legendre map and ${\cal M}_0={\rm graph}({{\rm F}\Lag})$.
\end{itemize}
In this way, we have obtained
$$
X_{\rm H} \vert_{{\cal M}_0}= 
\derpar{}{t}+v^i\derpar{}{q^i}+F^i\derpar{}{v^i}+
\derpar{\Lag}{q^i}\derpar{}{p_i}\ ,
$$
where the functions $F^i$ are still undetermined.
Now, the constraint algorithm continues by demanding the tangency of
$X_{\rm H}$ to the submanifold ${\cal M}_0$; that is, we have
to impose that
$\dst{X_{\rm H}\Big(p_i- \derpar{\Lag}{v^i}\Big)}\Big\vert_{_{{\cal M}_0}} = 0$,
and then we obtain the equations for the functions $F^i$:
\beq
 \frac{\partial^2L}{\partial v^i\partial v^j}F^j+ \frac{\partial^2L}{\partial q^j\partial v^i}v^j+\frac{\partial^2\Lag}{\partial t\partial v^i}- \derpar{\Lag}{q^i}=0 \quad (\text{on } \mathcal{M}_0)  \, .
\label{Meluni}
\eeq
If $\Lag$ is a regular Lagrangian, these equations are compatible, and they have a unique vector field $X_{\rm H}$ which is the solution to \eqref{Mhamilton-contact-eqs0} on ${\cal M}_0$,
and the last system of equations give the dynamical trajectories
(i.e., the solutions to equations \eqref{Mhamilton-contact-eqs0b}
on ${\cal M}_0$).
If $\Lag$ is singular, equations \eqref{Meluni} can be compatible or not
and, eventually, new compatibility constraints can appear,
defining a new submanifold ${\cal M}_1\hookrightarrow{\cal M}_0$.
Then, the constraint algorithm continues by demanding 
the tangency of solutions to ${\cal M}_1$, and so on. 
In the most favorable cases,
there is a submanifold ${\cal M}_f \hookrightarrow {\cal M}_0$ 
such that there exist holonomic vector fields $X_{\rm H}\in\vf({\cal M})$ defined on ${\cal M}_0$ and tangent to ${\cal M}_f$,
which are solutions to equations \eqref{Mhamilton-contact-eqs0}
at support on ${\cal M}_f$ (they are not unique necessarily).

\subsection{Recovering the nonautonomous Lagrangian and Hamiltonian formalisms}

The analysis of the equivalence among the unified, the Lagrangian, and Hamiltonian formalisms is made again
for the hyperregular case (the regular case is locally the same.
See \cite{BEMMR-2008} for details about the singular case).

As in the autonomous case, denoting by 
$\jmath_0\colon\mathcal{M}_0\hookrightarrow\mathcal{M}$
the natural embedding, we have the projections
$$
(\kappa_1\circ\jmath_0)\colon\mathcal{M}_0\to\Real\times\Tan Q
\quad , \quad
(\kappa_2\circ\jmath_0)\colon\mathcal{M}_0\to\Real\times\Tan^*Q \ ,
$$
where $\kappa_1\circ\jmath_0$ is a diffeomorphism.
The diagram that summarizes the situation is
$$
\xymatrix{
\ & \ & {\cal M} \ar@/_1.3pc/[ddll]_{\kappa_1} \ar@/^1.3pc/[ddrr]^{\kappa_2} & \ & \ \\
\ & \ & {\cal M}_0={\rm graph}\,({\rm F}\Lag) \ar[dll]_{\kappa_1\circ\jmath_0} \ar[drr]^{\kappa_2\circ\jmath_0} \ar@{^{(}->}[u]^{\jmath_0} & \ & \ \\
\Real\times\Tan Q\ar[rrrr]^<(0.45){{\rm F}\Lag}
& \ & \ & \ & \Real\times\Tan^*Q  \\
}
$$
Hence, functions, differential forms, and vector fields on ${\cal M}$ 
tangent to ${\cal M}_0$ can be restricted to ${\cal M}_0$, and then
they can be translated to the Lagrangian side 
by using the diffeomorphism $\kappa_1\circ\jmath_0$, 
and to the Hamiltonian side using the Legendre map and the projection 
$\kappa_2$.

Therefore, if ${\bf c}(t)=(t,q^i(t),v^i(t),p_i(t))$ is a solution to equation \eqref{Mhamilton-contact-eqs0},
and hence it is an integral curve of the vector field
$\,X_{\rm H}$ solution to equations \eqref{Mhamilton-contact-eqs0},
then \eqref{Meluni} leads to
\beq
 \frac{d}{dt}\Big(\derpar{\Lag}{v^i}\circ{\bf c}\Big)= \derpar{\Lag}{q^i}\circ{\bf c} \ ,
\label{MrecuEL}
\eeq
and from equations \eqref{Mfirstuni} we obtain
\beq
 \frac{dq^i}{dt}= v^i \quad , \quad
 \frac{dp_i}{dt}= \derpar{\Lag}{q^i}\circ{\bf c}= -\derpar{{\rm H}}{q^i}\circ{\bf c} \quad ,  \quad
p_i= \derpar{\Lag}{v^i}\circ{\bf c}  \ . 
\label{MrecuH}
\eeq
From the first group of these equations, together with \eqref{MrecuEL},
we recover the Euler--Lagrange equations for the curves 
${\bf c}_\Lag(t)=(t,q^i(t),v^i(t))$.
In addition, $\dst \derpar{\Lag}{q^i}=-\derpar{{\rm H}}{q^i}$, and hence
the second group of equations \eqref{MrecuH} is
$$
 \frac{dp_i}{dt}= -\derpar{{\rm H}}{q^i}\circ{\bf c}
$$
and, using the local expression of ${\rm H}$ and the first group of equations \eqref{MrecuH}, we get
$$
\derpar{{\rm H}}{q^i}\circ{\bf c}=v^i=\frac{dq^i}{dt} \ ;
$$
but, using  the Legendre map (that is, the third group of equations \eqref{Mfirstuni})
we have that ${\rm H}={\rm F}\Lag^*{\rm h}$, and these last equations become the {\sl Hamilton equations}
for the curves ${\bf c}_{\rm h}(t)=(t,q^i(t),p_i(t))$.

Every curve ${\bf c}\colon \Real\to{\cal M}$,
taking values on ${\cal M}_0$ can be viewed as
${\bf c}=({\bf c}_\Lag, {\bf c}_{\rm h})$, where
${\bf c}_\Lag=\kappa_1\circ{\bf c}\colon \Real \to\Real\times\Tan Q$
and ${\bf c}_{\rm h}={\rm F}\Lag\circ {\bf c}_\Lag\colon\Real\to\Real\times\Tan^*Q$.
Thus, we have proved that:

\begin{teor}
\label{eqM}
If ${\bf c}\colon\Real\to{\cal M}$,
with  ${\rm Im}\,{\bf c}\subset{\cal M}_0$,
is a curve fulfilling equations \eqref{Mhamilton-contact-eqs0b}, then
${\bf c}_\Lag$ is the lift to
$\Real\times\Tan Q$ of the projected curve
${\bf c}_o=\kappa_0\circ{\bf c}\colon\Real\to\Real\times Q$ 
(that is, ${\bf c}_\Lag$ is a holonomic curve),
and it is a solution to equation \eqref{neuq},
where $E_\Lag\in\Cinfty(\Real\times\Tan Q)$ is such that ${\rm H}=\kappa_1^*E_\Lag$.
Moreover, the curve  
${\bf c}_{\rm h}=\kappa_2\circ{\bf c}={\rm F}\Lag\circ{\bf c}_\Lag$
is a solution to equation \eqref{celmh2},
where ${\rm h}\in\Cinfty(\Real\times\Tan^*Q)$ is such that 
${\rm H}=\kappa_1^*{\rm h}$.

Conversely, if ${\bf c}_o\colon\Real\to Q$ is a curve such that
$\widetilde {\bf c}_o\equiv{\bf c}_\Lag$ is a solution to equation \eqref{neuq}, then the curve
${\bf c}=({\bf c}_\Lag,{\rm F}\Lag\circ{\bf c}_\Lag)$
is a solution to equation \eqref{Mhamilton-contact-eqs0b}
and ${\rm F}\Lag\circ{\bf c}_\Lag$
is a solution to equation \eqref{celmh2}.
\end{teor}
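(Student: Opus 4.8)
The plan is to mimic closely the proof strategy of Theorem \ref{eqW} (the autonomous unified equivalence theorem) and of Theorem \ref{4.26} (the nonautonomous Lagrangian-Hamiltonian equivalence), since the statement of Theorem \ref{eqM} is precisely the cosymplectic analogue. First I would unwind the computation already carried out in the text just before the statement: starting from a curve ${\bf c}(t)=(t,q^i(t),v^i(t),p_i(t))$ with $\mathrm{Im}\,{\bf c}\subset{\cal M}_0$ satisfying \eqref{Mhamilton-contact-eqs0b}, the local analysis yields \eqref{MrecuEL} and \eqref{MrecuH}; the first group of \eqref{MrecuH} together with \eqref{MrecuEL} are exactly the Euler--Lagrange equations in the classical coordinate form, which by the coordinate expression of the nonautonomous Euler--Lagrange equations \eqref{neuq} established in Section on Lagrangian formalism means that ${\bf c}_\Lag=\kappa_1\circ{\bf c}$ is a holonomic curve solution to \eqref{neuq}. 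The holonomy is read off directly from $f^i=v^i$ in \eqref{Mfirstuni} (the second group of equations), which asserts precisely that $\kappa_1\circ{\bf c}$ is the lift to $\Real\times\Tan Q$ of ${\bf c}_o=\kappa_0\circ{\bf c}$.

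The second half — that ${\bf c}_{\rm h}=\kappa_2\circ{\bf c}={\rm F}\Lag\circ{\bf c}_\Lag$ solves the cosymplectic Hamilton equations \eqref{celmh2} — I would handle by the intrinsic argument rather than just in coordinates. The key relations are: on ${\cal M}_0={\rm graph}({\rm F}\Lag)$, one has $\Theta_{\cal M}=\kappa_2^*\theta$ and $\Omega_{\cal M}=\kappa_2^*\omega$ by definition, and $\d t$ on ${\cal M}$ is the pull-back $\kappa_2^*(\d t)$ as well; moreover ${\rm H}=\kappa_1^*{\rm h}$ on ${\cal M}_0$ by the defining property ${\rm F}\Lag^*{\rm h}=E_\Lag$ together with ${\rm H}=\kappa_1^*E_\Lag$. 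Since $\kappa_2\circ\jmath_0$ is (via the diffeomorphism $\kappa_1\circ\jmath_0$ and the hyperregularity of ${\rm F}\Lag$) a diffeomorphism between ${\cal M}_0$ and $\Real\times\Tan^*Q$, a vector field $X_{\rm H}$ solution to \eqref{Mhamilton-contact-eqs0} and tangent to ${\cal M}_0$ pushes forward to the evolution vector field ${\cal E}_{\rm h}$ satisfying \eqref{celmh1}: one applies $(\kappa_2\circ\jmath_0)_*$ to $\inn(X_{\rm H})\d t=1$ and $\inn(X_{\rm H})\Omega_{\cal M}=\d{\rm H}-R({\rm H})\d t$, using that the Reeb vector field $R=\partial/\partial t$ on ${\cal M}$ projects to the Reeb vector field of $(\Real\times\Tan^*Q,\d t,\omega)$. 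Then by Theorem \ref{teo-evoeqs} the pushed-forward curve is an integral curve of ${\cal E}_{\rm h}$, hence a solution to \eqref{celmh2}. Alternatively, and more in the spirit of Theorem \ref{4.26}, I would observe that ${\bf c}_{\rm h}$ is the canonical lift of $\pi_{(1,0)}\circ{\bf c}_{\rm h}$ composed with ${\rm F}\Lag$, and invoke Theorem \ref{4.26} itself: ${\bf c}_\Lag$ solves \eqref{neuq}, so ${\rm F}\Lag\circ{\bf c}_\Lag$ solves \eqref{celmh2}.

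For the converse, given ${\bf c}_o\colon\Real\to Q$ with $\widetilde{\bf c}_o\equiv{\bf c}_\Lag$ a solution to \eqref{neuq}, I would set ${\bf c}=({\bf c}_\Lag,{\rm F}\Lag\circ{\bf c}_\Lag)$, which automatically takes values in ${\cal M}_0={\rm graph}({\rm F}\Lag)$, and verify \eqref{Mhamilton-contact-eqs0b} by running the coordinate computation backwards: from the Euler--Lagrange equations for ${\bf c}_\Lag$ and the definition $p_i=\partial\Lag/\partial v^i\circ{\bf c}_\Lag$ one recovers $g=1$, $f^i=v^i$, $G_i=\partial\Lag/\partial q^i$ and $p_i=\partial\Lag/\partial v^i$, which are exactly the contractions of $\widetilde{\bf c}$ against $\d t$ and $\Omega_{\cal M}$ giving \eqref{Mhamilton-contact-eqs0b}; then $\kappa_2\circ{\bf c}={\rm F}\Lag\circ{\bf c}_\Lag$ solves \eqref{celmh2} by the direct part just proved (or again by Theorem \ref{4.26}).

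I expect the main obstacle to be bookkeeping rather than anything conceptual: one must be careful that the Reeb vector field $R=\partial/\partial t$ chosen on the precosymplectic manifold $({\cal M},\d t,\Omega_{\cal M})$ is compatible — under the relevant projections and the diffeomorphism $\kappa_1\circ\jmath_0$ — with the Reeb vector field $R_\Lag$ used in \eqref{nelm}/\eqref{neuq} on $\Real\times\Tan Q$ and with the Reeb vector field on $\Real\times\Tan^*Q$ used in \eqref{celmh1}/\eqref{celmh2}, so that the terms $R({\rm H})\d t$, $R_\Lag(E_\Lag)\d t$ and $R({\rm h})\d t$ match up correctly when one transports equations between the three spaces. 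In the hyperregular case this is automatic because all three structures are genuinely cosymplectic and the Reeb vector field is unique, so I would state that explicitly and lean on Proposition \ref{equinona} and the remark following Theorem \ref{darbo}. The rest is a transcription of the computation already displayed before the theorem, promoted from coordinates to the intrinsic pull-back identities $\Theta_{\cal M}=\kappa_2^*\theta$, ${\rm H}=\kappa_1^*{\rm h}=\kappa_1^*E_\Lag$ on ${\cal M}_0$, exactly as in the proof of Theorem \ref{eqW}.
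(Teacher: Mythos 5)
Your proposal is correct, and its core is the same as the paper's: the paper offers no separate proof of Theorem \ref{eqM} beyond the coordinate derivation displayed immediately before it (equations \eqref{Mfirstuni}, \eqref{MrecuEL}, \eqref{MrecuH}), closing with ``Thus, we have proved that:'', and your direct part and converse are exactly that computation, reading holonomy off $f^i=v^i$ and the Euler--Lagrange equations off \eqref{MrecuEL} with the first group of \eqref{MrecuH}. Where you deviate is the Hamiltonian half: the paper simply continues in coordinates, substituting the Legendre map $p_i=\partial\Lag/\partial v^i$ and ${\rm H}={\rm F}\Lag^*{\rm h}$ into the second group of \eqref{MrecuH} to obtain the Hamilton equations, whereas you propose either an intrinsic push-forward through $\kappa_2\circ\jmath_0$ or an appeal to the equivalence Theorem \ref{4.26}; both are legitimate in the hyperregular setting assumed in this subsection, and the appeal to Theorem \ref{4.26} is in fact the cleaner shortcut once ${\bf c}_\Lag$ is known to solve \eqref{neuq}. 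One small imprecision in your closing remark: the compatibility of the Reeb terms is not ``automatic because all three structures are genuinely cosymplectic'' --- $({\cal M},\d t,\Omega_{\cal M})$ is only precosymplectic whatever the regularity of $\Lag$; what actually makes the terms match is the fixed choice $R=\partial/\partial t$ on ${\cal M}$, which gives $R({\rm H})=-\partial\Lag/\partial t=R_\Lag(E_\Lag)$ (and corresponds to $R({\rm h})$ under ${\rm F}\Lag$), together with the independence of the dynamics on the choice of Reeb vector field. This is a cosmetic fix, not a gap.
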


The curves ${\bf c}\colon\Real\to{\cal M}$ which are
solution to equation \eqref{Mhamilton-contact-eqs0b}
are the integral curves of the holonomic vector field $X_{\rm H}\in\vf({\cal M})$ 
which is the solution to equations \eqref{Mhamilton-contact-eqs0},
the curves ${\bf c}_\Lag\colon\Real\to\Real\times\Tan Q$
are the integral curves of the holonomic vector field 
$\Gamma_\Lag\in\vf(\Real\times\Tan Q)$ which is the
solution to equations \eqref{nelm}, and  the curves ${\bf c}_{\rm h}\colon\Real\to\Real\times\Tan^*Q$
are the integral curves of the vector field ${\cal E}_{\rm h}\in\vf(\Real\times\Tan^*Q)$ which is the
solution to equations  \eqref{celmh1}.
Therefore:

\begin{teor}
\label{eqL2}
Let $X_{\rm H}\in\vf({\cal M})$ be the solution to equations \eqref{Mhamilton-contact-eqs0} (on ${\cal M}_0$),
which is tangent to  ${\cal M}_0$.  Then:

The vector field $\Gamma_\Lag\in\vf(\Real\times\Tan Q)$, defined by
$\Gamma_\Lag\circ\kappa_1=\Tan\kappa_1\circ X_{\rm H}$,
is the {\sc sode} solution to equations \eqref{nelm},
where $E_\Lag\in\Cinfty(\Tan Q)$ is such that ${\rm H}=\kappa_1^*E_\Lag$.

The vector field ${\cal E}_{\rm h}\in\vf(\Real\times\Tan^*Q)$, defined by
${\cal E}_{\rm h}\circ\kappa_2=\Tan\kappa_2\circ X_{\rm H}$,
is the solution to equations \eqref{celmh1},
where ${\rm h}\in\Cinfty(\Real\times\Tan^*Q)$ is such that 
${\rm H}=\kappa_1^*{\rm h}$.
Furthermore ${\rm F}\Lag_*\Gamma_\Lag={\cal E}_{\rm h}$.
\end{teor}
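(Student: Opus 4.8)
The plan is to follow exactly the strategy used in the autonomous case (Theorems \ref{eqL1} and \ref{eqW}) and its nonautonomous curve-level counterpart Theorem \ref{eqM}, since the cosymplectic setting differs from the symplectic one only by the presence of the $\d t$ normalization and the Reeb term. First I would set up the restriction machinery: recall that $\jmath_0\colon\mathcal{M}_0\hookrightarrow\mathcal{M}$ is the graph of ${\rm F}\Lag$, that $\kappa_1\circ\jmath_0$ is a diffeomorphism onto $\Real\times\Tan Q$, and that a vector field on $\mathcal{M}$ tangent to $\mathcal{M}_0$ can therefore be pushed forward to $\Real\times\Tan Q$ via $\kappa_1$ and to $\Real\times\Tan^*Q$ via $\kappa_2$ (using ${\rm F}\Lag$ on the Hamiltonian side). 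The vector fields $\Gamma_\Lag$ and ${\cal E}_{\rm h}$ in the statement are \emph{defined} by $\Gamma_\Lag\circ\kappa_1=\Tan\kappa_1\circ X_{\rm H}$ and ${\cal E}_{\rm h}\circ\kappa_2=\Tan\kappa_2\circ X_{\rm H}$; the well-definedness of $\Gamma_\Lag$ is exactly the fact that $\kappa_1\circ\jmath_0$ is a diffeomorphism, and that of ${\cal E}_{\rm h}$ follows because ${\cal M}_0={\rm graph}({\rm F}\Lag)$ makes $\kappa_2\circ\jmath_0$ factor through the diffeomorphism and ${\rm F}\Lag$.

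Next I would verify the three asserted equations. For $\Gamma_\Lag$: from \eqref{Mfirstuni} the first component of $X_{\rm H}$ along $\partial/\partial t$ equals $1$, and since $\kappa_1$ projects onto the $\Real$-factor identically, $\inn(\Gamma_\Lag)\d t=1$; combined with \eqref{Meluni} (which is precisely the restriction of the dynamical equation expressing tangency to $\mathcal{M}_0$) one recovers, in coordinates, the nonautonomous Lagrangian equations \eqref{nelm}, while the holonomy built into \eqref{Mfirstuni} gives the \textsc{sode} condition ${\cal J}(\Gamma_\Lag)=\Delta$. Rather than grinding the coordinate computation, I would prefer the intrinsic route: using ${\rm F}\Lag^*\theta=\theta_\Lag$, ${\rm F}\Lag^*\omega=\omega_\Lag$ (equation \eqref{flom}), ${\rm F}\Lag^*{\rm h}=E_\Lag$, and $\kappa_2^*\theta=\Theta_{\cal M}$, $\kappa_2^*\omega=\Omega_{\cal M}$, $\kappa_1^*\Lag={\mathfrak L}$, pull back the dynamical equation $\inn(X_{\rm H})\Omega_{\cal M}=\d{\rm H}-R({\rm H})\d t$ along the relevant sections of $\mathcal{M}_0$ and show it is equivalent to $\inn(\Gamma_\Lag)\omega_\Lag=\d E_\Lag-R_\Lag(E_\Lag)\d t$, noting $R({\rm H})$ restricted to $\mathcal{M}_0$ corresponds to $R_\Lag(E_\Lag)$ because $R=\partial/\partial t$ and ${\rm H}=\kappa_1^*E_\Lag$ on $\mathcal{M}_0$ (up to the $R_\Lag$-vertical correction, which lands in $\ker\omega_\Lag$ and so does not affect the equation). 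For ${\cal E}_{\rm h}$: since $\kappa_2$ and ${\rm F}\Lag$ intertwine $\Omega_{\cal M}$ with $\omega$, $\d{\rm H}$ with $\d{\rm h}$, and $\d t$ with $\d t$, pushing $X_{\rm H}|_{{\cal M}_0}$ forward turns \eqref{Mhamilton-contact-eqs0} into \eqref{celmh1}. Finally, ${\rm F}\Lag_*\Gamma_\Lag={\cal E}_{\rm h}$ is immediate from the two defining relations together with ${\rm F}\Lag\circ(\kappa_1\circ\jmath_0)=\kappa_2\circ\jmath_0$: push $X_{\rm H}$ forward through ${\cal M}_0$ either way and the results agree.

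The main obstacle I anticipate is the careful handling of the Reeb term under pullback. In the cosymplectic (as opposed to symplectic) case the dynamical equation carries the extra $-R({\rm H})\d t$ piece, and on the precosymplectic unified bundle the Reeb vector field is only defined up to $\ker\Omega_{\cal M}=\langle\partial/\partial t,\partial/\partial v^i\rangle$, while on $\Real\times\Tan Q$ the relevant $R_\Lag$ is determined by $\inn(R_\Lag)\d t=1$, $\inn(R_\Lag)\omega_\Lag=0$ and is generally of the form $\partial/\partial t+R^i\partial/\partial v^i$. I would argue that the choice $R=\partial/\partial t$ on $\mathcal{M}$ is compatible with this, that $R({\rm H})|_{{\cal M}_0}$ equals $(R_\Lag(E_\Lag))\circ\kappa_1$ on $\mathcal{M}_0$ when one accounts for the constraint $p_i=\partial\Lag/\partial v^i$ defining $\mathcal{M}_0$, and that the ambiguity in the Reeb field contributes only terms annihilated by $\omega_\Lag$, so the Lagrangian dynamical equation \eqref{nelm} is recovered unambiguously; one then invokes the remark after the Fourth Postulate that the dynamics is independent of the chosen Reeb vector field. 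With that point settled the remainder is, as in the autonomous proof, a direct consequence of the diffeomorphism $\kappa_1\circ\jmath_0$ and the intertwining properties of ${\rm F}\Lag$ and $\kappa_2$; so I would simply write that \emph{the proof follows the same pattern as Theorem \ref{eqL1}, using Theorem \ref{4.26} and the identities \eqref{flom}}.
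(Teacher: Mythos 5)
Your proposal is correct and matches the paper's route: the paper gives no separate argument for this theorem, presenting it as an immediate corollary of the curve-level Theorem \ref{eqM} and the preceding coordinate analysis (equations \eqref{Mfirstuni}--\eqref{MrecuH}), exactly as in the autonomous pair \ref{eqW}/\ref{eqL1} that you cite. Your extra care with the Reeb term (checking that $R({\rm H})$ restricted to ${\cal M}_0$ corresponds to $R_\Lag(E_\Lag)$ and that the ambiguity lies in $\ker\omega_\Lag$) is sound and only makes explicit what the paper leaves implicit.
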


\section{Symmetries of regular nonautonomous dynamical systems}

We restrict the study of symmetries of regular nonautonomous dynamical system
to the case in which the phase space is an
almost-canonical cosymplectic manifold; i.e., $\Real\times M$,
with $M$ a symplectic manifold, and, in particular, to the canonical case
$M=\Tan^*Q$.
Remember that, in this case, the natural projection
$\pi_\Real\colon\Real\times M\to\Real$ defines
a global canonical coordinate $t$.

This study is inspired in the ideas introduced mainly in \cite{albert} and follows
the scheme given in Section \ref{secsym}.
(See also \cite{SC-81} for a very interesting description of symmetries in the Lagrangian formalism).

\subsection{Symmetries of nonautonomous Hamiltonian systems. Noether Theorem}

Let $(\Real\times M,\eta,\omega;h)$ be a regular nonautonomous (almost-canonical) Hamiltonian system, and ${\cal E}_h\in\vf(\Real\times M)$
its evolution vector field.

\begin{definition}
A function $f \in \Cinfty (\Real\times M)$ is a \textbf{conserved quantity}
or a \textbf{constant of motion} if
$$
\Lie ({\cal E}_h) f = 0 \ ;
$$
\end{definition}

 \begin{definition}
A \textbf{dynamical symmetry} of the system $(\Real\times M,\eta,\omega;h)$
 is a diffeomorphism $\Phi\colon\Real\times M \to\Real\times M$ satisfying that
$$
\Phi_*{\cal E}_h={\cal E}_h \ .
$$
\end{definition}

\begin{definition}
An \textbf{infinitesimal dynamical symmetry} of the system $(\Real\times M,\eta,\omega;h)$
is a vector field $Y\in\vf(M)$ such that  
the local diffeomorphisms  generated by its flux
are dynamical symmetries of the system; that is,
$$
\Lie(Y){\cal E}_h=[Y,{\cal E}_h]=0 \ .
$$
 \end{definition}

It is immediate to check that the results obtained in Section \ref{dynsym}
for the dynamical symmetries of autonomous Hamiltonian systems also hold in this case.

\begin{definition}
\label{ksns}
 A \textbf{cosymplectic Noether symmetry} of the system $(\Real\times M,\eta,\omega;h)$ is a diffeomorphism
$\Phi\colon \Real\times M\to \Real\times M$
satisfying the following conditions:

(a)\quad $\Phi^*\omega=\omega$, \quad
(b)\quad $\Phi^*t=t$, \quad
(c)\quad $\Phi^*h=h$.

If the cosymplectic structure is exact, and $\omega=-\d\theta$,
then a cosymplectic Noether symmetry
is said to be \textbf{exact} if $\Phi^*\theta=\theta$.
\end{definition}

\begin{definition}
\label{cosymsym}
An \textbf{infinitesimal cosymplectic Noether symmetry}
of the system $(\Real\times M,\eta,\omega;h)$ is a vector field $Y\in\vf( \Real\times M)$ whose local flux are local cosymplectic Noether symmetries;
that is, it satisfies that:

(a)\quad $\Lie(Y)\omega=0$, \quad
(b)\quad $\Lie(Y)t=\inn(Y)\eta=0$, \quad
(c)\quad $\Lie(Y)h=0$.

If the cosymplectic structure is exact,
an infinitesimal cosymplectic Noether symmetry
is said to be \textbf{exact} if $\Lie(Y)\theta=0$.
\end{definition}

For the Reeb vector field, we have the following property:

\begin{prop}
\label{lematec}
\ben
\item
If $\Phi\in{\rm Diff}\, (\Real\times M)$ is a cosymplectic Noether symmetry, then $\Phi_*R=R$.
\item
If $Y\in\vf(\Real\times M)$
is an infinitesimal cosymplectic Noether symmetry,
then $[Y,R]=~0$.
\een
\end{prop}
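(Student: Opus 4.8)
The plan is to reduce everything to the \emph{uniqueness} of the Reeb vector field on a cosymplectic manifold (the Proposition just proved: $R$ is the only vector field with $\inn(R)\eta=1$ and $\inn(R)\omega=0$), together with one elementary linear-algebra fact: pointwise, $\ker\omega\cap\ker\eta=\{0\}$. Indeed, since $\eta\wedge\omega^n$ is a volume form, $\omega$ has constant rank $2n$, so $\ker\omega$ is the line bundle spanned by $R$; hence any vector annihilating $\omega$ is a (function) multiple of $R$, and if it also annihilates $\eta$ then, because $\inn(R)\eta=1$, that multiple must vanish. I note in passing that the hypotheses $\Phi^*h=h$ and $\Lie(Y)h=0$ are not needed here (they enter only later, in Noether's theorem), so I simply would not use them.

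For part 1, I would first observe that $\Phi^*t=t$ gives $\Phi^*\eta=\Phi^*\d t=\d(\Phi^*t)=\d t=\eta$, so $\Phi$ preserves both $\eta$ and $\omega$. Then I apply the interchange identity $\Phi^*(\inn(\Phi_*R)\beta)=\inn(R)(\Phi^*\beta)$ (valid for any form $\beta$, using $\Phi^*(\Phi_*R)=R$). Taking $\beta=\eta$ gives $\Phi^*(\inn(\Phi_*R)\eta)=\inn(R)\eta=1$, hence $\inn(\Phi_*R)\eta=1$; taking $\beta=\omega$ gives $\inn(\Phi_*R)\omega=0$. By uniqueness of the Reeb vector field, $\Phi_*R=R$.

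For part 2, I would compute $\inn([Y,R])\eta$ and $\inn([Y,R])\omega$ using $\inn([Y,R])\beta=\Lie(Y)\inn(R)\beta-\inn(R)\Lie(Y)\beta$. From $\inn(Y)\eta=0$ and $\d\eta=0$ one gets $\Lie(Y)\eta=\d\inn(Y)\eta=0$, and by hypothesis $\Lie(Y)\omega=0$; hence $\inn([Y,R])\eta=\Lie(Y)(1)-\inn(R)(0)=0$ and $\inn([Y,R])\omega=\Lie(Y)(0)-\inn(R)(0)=0$. Therefore $[Y,R]\in\ker\omega\cap\ker\eta=\{0\}$, i.e. $[Y,R]=0$. (Alternatively, part 2 follows from part 1 by taking the local flux $\Phi_s$ of $Y$, which consists of cosymplectic Noether symmetries, applying part 1 to each $\Phi_s$, and differentiating at $s=0$; I would probably include both arguments, since the direct one is shorter and self-contained.)

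The whole thing is routine Cartan calculus. The only points needing a moment's care are the bookkeeping in the pushforward/pullback interchange identity (getting $\Phi_*$ versus $\Phi^*$ on the right side) and the (elementary but essential) remark that a vector lying simultaneously in $\ker\omega$ and $\ker\eta$ must vanish. Neither constitutes a genuine obstacle.
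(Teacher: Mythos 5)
Your proof is correct and follows essentially the same route as the paper's: part 1 uses $\Phi^*\eta=\eta$, $\Phi^*\omega=\omega$ and the uniqueness of the Reeb vector field, and part 2 uses the identity $\inn([Y,R])\beta=\Lie(Y)\inn(R)\beta-\inn(R)\Lie(Y)\beta$ together with $\ker\eta\cap\ker\omega=\{0\}$, exactly as in the text. Your explicit derivation of $\Lie(Y)\eta=0$ from $\inn(Y)\eta=0$ and $\d\eta=0$ (and the observation that the hypotheses on $h$ are not needed) only makes explicit what the paper leaves implicit.
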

\begin{proof}
\ben
\item
As $\Phi$ is a  cosymplectic Noether symmetry then
$\Phi^*\omega=\omega$ and $\Phi^*\eta=\eta$; then,
taking into account the conditions \eqref{rvf} that characterize the Reeb vector field, we have
\beann
\inn(\Phi_*R)\eta&=&\inn(R)(\Phi^{-1})^*\eta=\inn(R)\eta=1 \ , \\
\inn(\Phi_*R)\omega&=&\inn(R)(\Phi^{-1})^*\omega=\inn(R)\omega=0 \ ,
\eeann
and the result is a consequence of the unicity of the Reeb vector field,
and so $\Phi_*R=R$.
\item
We have that
\beann
\inn([Y,R])\omega=
\Lie(Y)\inn(R)\omega-\inn(R)\Lie(Y)\omega=0 & \Longleftrightarrow &
[Y,R]\in\ker\,\omega\ ,
\\
\inn([Y,R])\eta =
\Lie(Y)\inn(R)\eta-\inn(R)\Lie(Y)\eta=0
& \Longleftrightarrow & [Y,R]\in\ker\,\eta\ , \eeann 
and then
$[Y,R]\in\ker\,\omega\cap\ker\,\eta=\{ 0\}$ (because $\eta\wedge\omega^n$ is a volume form).
\een
\qed\end{proof}

\begin{remark}{\rm 
\begin{itemize}
 \item
The condition $\Phi^*t=t$ means that
cosymplectic Noether symmetries generate transformations along
the fibers of the projection
 $\pi_\Real\colon \Real\times M\longrightarrow\Real$; that
is, they leave the fibers of the projection
 $\pi_\Real\colon \Real\times M\longrightarrow\Real$ invariant or,
 what means the same thing,
$\pi_\Real\circ \Phi=\pi_\Real$.
\item
In the case of infinitesimal cosymplectic Noether symmetries, the
analogous condition is $\inn(Y)\d t=0$, which means that $Y$ has
the local expression
$\displaystyle Y=f^i\frac{\partial}{\partial x^i}+g_i\frac{\partial}{\partial y_i}$.
This means that $Y$ is tangent to the fibers of the projection
 $\pi_\Real$. Thus, these infinitesimal symmetries only generate 
transformations along these fibers, or, what means the same thing, 
the local flux of the generators $Y$ leave the fibers of the projection
 $\pi_\Real$ invariant.
 Furthermore, as a consequence of the above proposition,
 and taking into account that $\displaystyle R=\derpar{}{t}$,
in this local expression for $Y$
the component functions $f^i,g_i$ do not depend on the coordinate $t$.
\end{itemize}
}\end{remark}

\begin{prop}
\ben
\item
If $\Phi\in{\rm Diff}\, (\Real\times M)$ is a cosymplectic Noether symmetry, then it is a
dynamical symmetry.
\item
If $Y\in\vf(\Real\times M)$ is an infinitesimal cosymplectic Noether symmetry, then it is
an infinitesimal dynamical symmetry.
\een
\label{simdin2}
\end{prop}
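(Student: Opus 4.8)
The plan is to prove both items by the same mechanism used for autonomous systems in Proposition \ref{simdin}: show that the (infinitesimal) symmetry preserves the equations \eqref{evovf} that characterize the evolution vector field ${\cal E}_h$, and then invoke uniqueness, which is available because the system is regular, so $\sharp_{(\eta,\omega)}$ is an isomorphism (Proposition \ref{teo-evoeqs}). The decisive input is Proposition \ref{lematec}: a cosymplectic Noether symmetry $\Phi$ satisfies $\Phi_*R=R$, and an infinitesimal one $Y$ satisfies $[Y,R]=0$; this is exactly what controls the nonautonomous term $R(h)$ appearing in \eqref{evovf}.

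For item 1, I would take a cosymplectic Noether symmetry $\Phi$ and apply $\Phi^*$ to $\inn({\cal E}_h)\eta=1$ and $\inn({\cal E}_h)\omega=\d h-R(h)\eta$, using the identity $\Phi^*(\inn(X)\beta)=\inn(\Phi_*^{-1}X)(\Phi^*\beta)$. Since $\Phi^*t=t$ gives $\Phi^*\eta=\Phi^*\d t=\d t=\eta$, and $\Phi^*\omega=\omega$, $\Phi^*h=h$, the first equation yields $\inn(\Phi_*^{-1}{\cal E}_h)\eta=1$. For the second, I would rewrite $\Phi^*(R(h))=\Phi^*(\Lie(R)h)=\Lie(\Phi_*^{-1}R)(\Phi^*h)=\Lie(R)h=R(h)$, using $\Phi_*R=R$, so that $\inn(\Phi_*^{-1}{\cal E}_h)\omega=\d h-R(h)\eta$. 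Hence $\Phi_*^{-1}{\cal E}_h$ solves \eqref{evovf}, and by uniqueness $\Phi_*^{-1}{\cal E}_h={\cal E}_h$, i.e. $\Phi_*{\cal E}_h={\cal E}_h$, which is the definition of a dynamical symmetry.

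For item 2, I would compute $\inn([Y,{\cal E}_h])\eta$ and $\inn([Y,{\cal E}_h])\omega$ from $\inn([Y,Z])\beta=\Lie(Y)\inn(Z)\beta-\inn(Z)\Lie(Y)\beta$. Since $\Lie(Y)\eta=\d\inn(Y)\eta+\inn(Y)\d\eta=0$ and $\inn({\cal E}_h)\eta=1$ is constant, the first contraction vanishes. For the second, $\inn({\cal E}_h)\Lie(Y)\omega=0$, while $\Lie(Y)(\d h-R(h)\eta)=\d\Lie(Y)h-(\Lie(Y)(R(h)))\eta-R(h)\Lie(Y)\eta$; the first and third terms vanish by $\Lie(Y)h=0$ and $\Lie(Y)\eta=0$, and $\Lie(Y)(R(h))=\Lie(Y)\Lie(R)h=\Lie(R)\Lie(Y)h+\Lie([Y,R])h=0$ by $[Y,R]=0$. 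Therefore $[Y,{\cal E}_h]\in\ker\eta\cap\ker\omega=\{0\}$, because $\eta\wedge\omega^n$ is a volume form, so $[Y,{\cal E}_h]=0$. Alternatively, item 2 follows from item 1 applied to the local flow of $Y$, which consists of cosymplectic Noether symmetries by Definition \ref{cosymsym}.

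The computations are essentially routine once Proposition \ref{lematec} is in hand; the main obstacle — or rather, the only real subtlety — is making sure the nonautonomous piece $R(h)$ transforms correctly, which is precisely where $\Phi_*R=R$ (resp. $[Y,R]=0$) is used, and where it matters that $\Phi$ preserves $t$ (resp. $\inn(Y)\eta=0$) and not merely the symplectic data $\omega$ and $h$.
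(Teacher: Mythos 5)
Your proposal is correct and follows essentially the same route as the paper: for item 1 you pull back the defining equations \eqref{evovf} using $\Phi^*\eta=\eta$, $\Phi^*\omega=\omega$, $\Phi^*h=h$ and $\Phi_*R=R$ from Proposition \ref{lematec}, and conclude by uniqueness of the evolution vector field in the regular case. For item 2 the paper likewise invokes either the flux argument or a direct computation with the Reeb characterization, and your explicit verification that $[Y,{\cal E}_h]\in\ker\eta\cap\ker\omega=\{0\}$ is precisely that computation carried out in detail.
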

\begin{proof}
\ben
\item
As $\Phi$ is a  cosymplectic Noether symmetry, then
$\Phi^*\omega=\omega$, $\Phi^*\eta=\eta$, $\Phi^*h=h$,
and $\Phi_*R=R$; therefore, from 
$0=\inn({\cal E}_h)\omega-\d h+(R(h))\eta$ we obtain that
\beann
0&=&\Phi^*[\inn({\cal E}_h)\omega-\d h+(R(h))\eta]=
\inn(\Phi_*^{-1}{\cal E}_h)(\Phi^*\omega)-\Phi^*\d h
+((\Phi_*^{-1}R)(\Phi^*h))(\Phi^*\eta)]
\\ &=&
\inn(\Phi_*^{-1}{\cal E}_h)\omega-\d h+(R(h))\eta \ ,
\eeann
and from $0=\inn({\cal E}_h)\eta-1$ we get
$$
0=\Phi^*(\inn({\cal E}_h)\eta-1)=\inn(\Phi_*^{-1}{\cal E}_h)(\Phi^*\eta)-1=
\inn(\Phi_*^{-1}{\cal E}_h)\eta-1 \ ;
$$
but as the system is regular, the evolution vector field solution is unique,
then $\Phi_*^{-1}{\cal E}_h={\cal E}_h$, and hence the result holds.
\item
The result for infinitesimal cosymplectic Noether symmetries follows
from the above item using the local uniparametric group of diffeomorphisms
associated to their fluxes
(or also, using the properties \eqref{rvf} that characterize the
Reeb vector field and reasoning as in the first item).
\een
\qed \end{proof}

In addition, it is immediate to prove that, if $Y_1,Y_2\in\vf(\Real\times M)$
are infinitesimal cosymplectic Noether symmetries, then so is $[Y_1,Y_2]$.

As in the above chapters,
in order to establish Noether's theorem for non-autonomous systems, 
we will stick to the case of infinitesimal Noether symmetries:

\begin{prop}
Let $Y\in\vf(\Real\times M)$ be an
infinitesimal cosymplectic Noether symmetry.
Then, for every $p\in\Real\times M$, there is an open
neighbourhood $U\ni p$, such that:
\begin{enumerate}
\item
There exist ${\cal F}\in\Cinfty(U)$,
which is unique up to a constant function, such that
\begin{equation}
 \inn(Y)\omega=\d {\cal F} \quad , \quad \mbox{\rm (on $U$)}\;.
 \label{funo}
 \end{equation}
\item
There exist $\zeta\in\Cinfty(U)$, verifying that
$\Lie(Y)\theta=\d\zeta$, on $U$; and then
$$
{\cal F}=\inn(Y)\theta-\zeta \quad , \quad \mbox{\rm (up to a
constant function on $U$)}\ 
$$
Then, for exact infinitesimal cosymplectic Noether
symmetries we have that
${\cal F}=\inn(Y)\theta$ (up to a constant function).
 \label{fdos2}
\end{enumerate}
 \label{structure2}
\end{prop}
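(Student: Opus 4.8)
The plan is to mirror the proof of Proposition~\ref{structure} in the symplectic case, adapting it to the almost-canonical cosymplectic setting $(\Real\times M,\eta,\omega)$. The only ingredients needed are that $\eta$ and $\omega$ are closed, Cartan's formula for the Lie derivative, Poincar\'e's Lemma, and the defining conditions of an infinitesimal cosymplectic Noether symmetry from Definition~\ref{cosymsym} (in particular $\Lie(Y)\omega=0$ and $\inn(Y)\eta=0$).

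For item~1, I would start from $\Lie(Y)\omega=0$. Since $\omega$ is closed, Cartan's formula gives $0=\Lie(Y)\omega=\d\inn(Y)\omega+\inn(Y)\d\omega=\d\inn(Y)\omega$, so $\inn(Y)\omega$ is a closed $1$-form on $\Real\times M$. By Poincar\'e's Lemma, for every $p$ there is a connected open neighbourhood $U\ni p$ and ${\cal F}\in\Cinfty(U)$ with $\inn(Y)\omega=\d{\cal F}$ on $U$; two such functions differ by a function with vanishing differential on the connected set $U$, hence by a constant, which is the uniqueness claim. (As a consistency check, contracting $\d{\cal F}=\inn(Y)\omega$ with the Reeb vector field and using $\inn(R)\omega=0$ yields $R({\cal F})=0$, in agreement with the remark that in Darboux coordinates ${\cal F}$ is $t$-independent.)

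For item~2, I would use a symplectic potential $\theta$ of $\omega$: if the cosymplectic structure is exact this is the given $\theta$, with $\omega=-\d\theta$ as in the canonical model; otherwise shrink $U$ and take a local potential by Poincar\'e. Then $\d\Lie(Y)\theta=\Lie(Y)\d\theta=-\Lie(Y)\omega=0$, so $\Lie(Y)\theta$ is closed and, after shrinking $U$ if necessary, $\Lie(Y)\theta=\d\zeta$ for some $\zeta\in\Cinfty(U)$. Using Cartan's formula together with $\d\theta=-\omega$ and $\inn(Y)\omega=\d{\cal F}$, one gets $\d\zeta=\Lie(Y)\theta=\d\inn(Y)\theta+\inn(Y)\d\theta=\d\inn(Y)\theta-\d{\cal F}$, hence $\d{\cal F}=\d(\inn(Y)\theta-\zeta)$ and therefore ${\cal F}=\inn(Y)\theta-\zeta$ up to an additive constant on $U$. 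For an \emph{exact} infinitesimal cosymplectic Noether symmetry one has $\Lie(Y)\theta=0$, so $\zeta$ is locally constant and the formula reduces to ${\cal F}=\inn(Y)\theta$ (up to a constant).

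The argument is essentially routine, so there is no serious obstacle; the points requiring attention are bookkeeping rather than substance. First, each application of Poincar\'e's Lemma may force a shrinking of $U$, so the neighbourhood in the statement must be taken as the intersection of the finitely many neighbourhoods produced; second, the sign convention $\omega=-\d\theta$ (inherited from the canonical model $\Real\times\Tan^*Q$) is exactly what produces the formula ${\cal F}=\inn(Y)\theta-\zeta$ rather than its negative; and third, connectedness of $U$ is what upgrades "up to a locally constant function'' to "up to a constant''.
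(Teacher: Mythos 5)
Your proof is correct and follows essentially the same route as the paper: item 1 from $\Lie(Y)\omega=0$, closedness of $\omega$ and Poincar\'e's Lemma, and item 2 from the closedness of $\Lie(Y)\theta$, Cartan's formula with $\d\theta=-\omega$, and the identity $\d\zeta=\d\inn(Y)\theta-\d{\cal F}$. The extra remarks (Reeb contraction, shrinking and connectedness of $U$, sign convention) are consistent with the paper and do not change the argument.
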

\begin{proof}
Recalling the conditions set out in Definition \ref{cosymsym}:
\begin{enumerate}
\item
It is a consequence of the Poincar\'e Lemma and the condition
$$
0=\Lie(Y)\omega=\inn(Y)\d\omega+\d\inn(Y)\omega=\d\inn(Y)\omega\ .
$$
\item
We have that
$$
\d\Lie(Y)\theta=\Lie(Y)\d\theta=-\Lie(Y)\omega=0 
$$
and hence $\Lie(Y)\theta$ are closed forms. Therefore, by the
Poincar\'e Lemma, there exists $\zeta\in\Cinfty(U)$, verifying
that $\Lie(Y)\theta=\d\zeta$, on $U$. Furthermore, as
(\ref{funo}) holds on $U$, we obtain that
$$
\d\zeta=\Lie(Y)\theta= \d\inn(Y)\theta+\inn(Y)\d\theta=
\d\inn(Y)\theta-\inn(Y)\omega=\d \{\inn(Y)\theta-{\cal F}\}
$$
and the result holds.
\end{enumerate}
\qed \end{proof}

\begin{remark}{\rm 
Observe that, using Darboux coordinates on $\Real\times M$,
the item 2 of Proposition \ref{structure2} tells us that the conserved quantities
associated with  infinitesimal cosymplectic Noether symmetries
do not depend on the coordinate $t$
(since the generators of these symmetries,
the vector fields $Y$, neither depend on them).
}\end{remark}

Finally, the classical Noether's Theorem can be stated
for these kinds of symmetries as follows:

\begin{teor}
\label{Nthsec}
{\rm (Noether's Theorem):}
 If $Y\in\vf(\Real\times M)$ is an infinitesimal cosymplectic Noether symmetry then,
for every $p\in \Real\times M$, there is an open neighbourhood
$U\ni p$ such that the function ${\cal F}=\inn(Y)\theta-\zeta$
is a conserved quantity.
\end{teor}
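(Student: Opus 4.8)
The plan is to mimic the proof of the autonomous Noether Theorem (Theorem~\ref{Nth}), replacing the symplectic form $\Omega$ by the cosymplectic pair $(\eta,\omega)$ and the Hamiltonian vector field $X_h$ by the evolution vector field ${\cal E}_h$. The key ingredients are already at hand: Proposition~\ref{structure2} provides, on a neighbourhood $U\ni p$, a function ${\cal F}\in\Cinfty(U)$ with $\inn(Y)\omega=\d{\cal F}$, and also gives the relation ${\cal F}=\inn(Y)\theta-\zeta$; and the characterization \eqref{evovf} says that the evolution vector field ${\cal E}_h$ is determined by $\inn({\cal E}_h)\eta=1$ and $\inn({\cal E}_h)\omega=\d h-R(h)\eta$. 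So the only thing left to compute is $\Lie({\cal E}_h){\cal F}=\inn({\cal E}_h)\d{\cal F}$ and show it vanishes.

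First I would write $\Lie({\cal E}_h){\cal F}=\inn({\cal E}_h)\d{\cal F}=\inn({\cal E}_h)\inn(Y)\omega=-\inn(Y)\inn({\cal E}_h)\omega$, using only the skew-symmetry of $\omega$. Then I substitute the defining equation of ${\cal E}_h$, namely $\inn({\cal E}_h)\omega=\d h-R(h)\eta$, to get
$$
\Lie({\cal E}_h){\cal F}=-\inn(Y)\big(\d h-R(h)\eta\big)=-\inn(Y)\d h+R(h)\,\inn(Y)\eta .
$$
Now the two conditions defining an infinitesimal cosymplectic Noether symmetry (Definition~\ref{cosymsym}) finish the job: condition (b), $\inn(Y)\eta=0$, kills the second term, and condition (c), $\Lie(Y)h=\inn(Y)\d h=0$, kills the first. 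Hence $\Lie({\cal E}_h){\cal F}=0$ on $U$, so ${\cal F}$ is a conserved quantity there, which is exactly the assertion of the theorem (with the explicit form ${\cal F}=\inn(Y)\theta-\zeta$ coming from item~2 of Proposition~\ref{structure2}).

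There is essentially no hard part here: the argument is a three-line contraction computation once Proposition~\ref{structure2} and the characterization \eqref{evovf} are invoked. The only point requiring a word of care is that ${\cal F}$ is only locally defined (on $U$), so the conclusion is local, which is precisely how the statement is phrased; one should also note, as in the autonomous case, that condition (a) $\Lie(Y)\omega=0$ is what guarantees the existence of ${\cal F}$ in the first place (Poincar\'e's Lemma), and so all three conditions of Definition~\ref{cosymsym} get used. I would present the computation in a short displayed chain of equalities and then state the vanishing, exactly parallel to the proof of Theorem~\ref{Nth}, adding the remark that the contraction $\inn({\cal E}_h)\d{\cal F}$ is meant along the integral curves of ${\cal E}_h$ in the sense already used throughout the chapter.
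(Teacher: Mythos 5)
Your computation is exactly the paper's own proof: starting from $\inn(Y)\omega=\d{\cal F}$ (Proposition \ref{structure2}), contracting with ${\cal E}_h$, using skew-symmetry and the defining equations \eqref{evovf}, and killing the two remaining terms with conditions (b) and (c) of Definition \ref{cosymsym}. Correct and essentially identical in approach, including the local nature of the conclusion.
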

\begin{proof}
From (\ref{funo}) one obtains
\beann
\Lie({\cal E}_h) {\cal F} &=&
\inn ({\cal E}_h) \d{\cal F} =
\inn ({\cal E}_h) \inn (Y)\omega =
 - \inn(Y) \inn ({\cal E}_h)\omega
\\ &=&   - \inn(Y)\d h +\inn(Y)((R(h))\eta)=
 - \Lie(Y)h +(R(h))\inn(Y)\eta=0 \ ,
\eeann
because $Y$ is an infinitesimal cosymplectic Noether symmetry.
\\ \qed \end{proof}

\bigskip
As a final particular situation, consider now the case of the canonical model $\Real\times\Tan^*Q$.
If $\varphi\colon Q\to Q$ is a diffeomorphism,
we construct the diffeomorphism 
$\Phi:=({\rm {\rm Id}_{\mathbb R}},\Tan^*\varphi) \colon
\Real\times\Tan^*Q\longrightarrow \Real\times\Tan^*Q$,
where $\Tan^*\varphi\colon\Tan^*Q\to\Tan^*Q$
is the canonical lift of $\varphi$ to $\Tan^*Q$.
Then $\Phi$ is said to be the {\sl \textbf{canonical lift}} of $\varphi$ to $\Real\times\Tan^*Q$.
Any transformation $\Phi$ of this kind is called a {\sl \textbf{natural transformation}} of $\Real\times\Tan^*Q$.

In the same way, given a vector field $Z\in \vf(Q)$
we can define its {\sl \textbf{complete lift}}
to $\Real\times\Tan^*Q$ as the vector field
$Y\in\vf(\Real\times\Tan^*Q)$
whose local flux is the canonical lift of 
the local flux of $Z$ to $\Real\times\Tan^*Q$; 
that is, $Y=Z^*$,
where $Z^*$ denotes the complete lift of $Z$ to $\Real\times\Tan^*Q$.
Any vector field $Y$ of this kind is called a {\sl \textbf{natural infinitesimal transformation}} of $\Real\times\Tan^*Q$.

Then we define:

\begin{definition}
Let $(\Real\times\Tan^*Q,\eta,\omega;h)$ be a regular nonautonomous Hamiltonian system.

A symmetry (resp. cosymplectic Noether symmetry) $\Phi\in\Cinfty(\Real\times\Tan^*Q)$
is said to be \textbf{natural} if $\Phi$ is a natural transformation of $\Real\times\Tan^*Q$.

An infinitesimal symmetry (resp. infinitesimal cosymplectic Noether symmetry) $Y\in\vf( \Real\times\Tan^*Q)$
is said to be \textbf{natural} if $Y$ is a natural infinitesimal transformation of $\Real\times\Tan^*Q$.
\end{definition}

\begin{remark}{\rm 
The study of symmetries and conserved quantities in the Lagrangian formalism of nonautonomous dynamical systems
follows the same patterns as in the canonical Hamiltonian formalism, considering the system
$(\Real\times\Tan Q,\eta_\Lag,\omega_\Lag,E_\Lag)$
and and bearing in mind the development made in Section \ref{lfLsNt}
(see also \cite{Cr-77,SC-81}).
}\end{remark}

\section{Other geometrical settings for nonautonomous dynamical systems}

As we have said in the beginning of this chapter, there are other ways to describe geometrically
time-dependent dynamical systems.
Next we briefly review two of the best known of them.

\subsection{Contact formulation}

(See \cite{AM-78,CLM-94,CPT-hctd,EMR-gstds,EMR-sdtc,SC-81,SCC-84} for more details).

The manifold where this formalism is developed is the same as in the cosymplectic case;
that is $\Real\times M$, where $(M,\Omega)$ is a symplectic manifold.
In particular, it is usual to consider the canonical case, where  $M=\Tan^*Q$ (Hamiltonian formalism),
or $M=\Tan Q$ (Lagrangian formalism),
and $\Real\times Q$ is the configuration space of the system.

Thus, in the Lagrangian description of this contact formulation,
the dynamics takes place on the manifold $\Real\times\Tan Q$.
Then, given an admissible time-dependent Lagrangian function
${\cal L} \in\Cinfty(\Real\times\Tan Q)$
(then  ${\rm rank}\,\omega_\Lag=2n$),
we define the so-called {\sl Poincar\' e-Cartan's forms} associated with $\Lag$ as
$$
\mbox{\boldmath $\Theta$}_{\cal L}:=\theta_{\cal L}-E_\Lag\,\d t
\quad , \quad
\mbox{\boldmath $\Omega$}_\Lag:=-\d\Theta_{\cal L} = \omega_{\cal L}+\d E_\Lag\wedge\d t \ ;
$$
The expression in coordinates of $\mbox{\boldmath $\Theta$}_{\cal L}$ is
$$
\mbox{\boldmath $\Theta$}_{\cal L}= \derpar{{\cal L}}{v^i}\d q^i -
\Big(v^i \derpar{{\cal L}}{v^i}-{\cal L}\Big)\,\d t \ .
$$

If $\Lag$ is a regular Lagrangian, then the pair  $(\Real\times\Tan Q,\mbox{\boldmath $\Omega$}_\Lag)$
is said to be a {\sl regular time-dependent Lagrangian system}.
The dynamical Lagrangian equations for a vector field 
$X_{\cal L}\in\vf(\Real\times\Tan Q)$ are
\beq
\inn(X_{\cal L})\mbox{\boldmath $\Omega$}_\Lag = 0
\quad , \quad
\inn(X_{\cal L})\d t = 1 \ ,
\label{contacteq}
\eeq
and the equations for the integral curves ${\bf c}\colon\Real\to\Real\times\Tan Q$ of $X_{\cal L}$ are
$$
\inn(\widetilde{\bf c})(\mbox{\boldmath $\Omega$}_\Lag\circ{\bf c}) = 0
\quad , \quad
\inn(\widetilde{\bf c})(\d t\circ{\bf c}) = 1 \ .
$$

These equations are compatible and determinate and,
in a local chart of coordinates on $\Real\times\Tan Q$, its unique solution is
$$
X_{\cal L} =\derpar{}{t}+ v^i \derpar{}{q^i} + 
\Bigl(\displaystyle\frac{\partial^2{\cal L}}{\partial{v^i}\partial{v^j}}\Bigr)^{-1}
\Big(\derpar{{\cal L}}{q^j} -
v^k\frac{\partial^2{\cal L}}{\partial v^k \partial v^j}\Big)\derpar{}{v^i} \ .
$$
Notice that $X_{\cal L}$ is a {\sc sode} and hence
its integral curves are solutions to the Euler--Lagrange equations.
Moreover, we obtain the additional information
$$
\frac{dE_{\cal L}}{dt}\circ{\bf c} = \derpar{{\cal L}}{t}\circ{\bf c} \ ,
$$
on every integral curve of $X_\Lag$, ${\bf c}\colon I\subseteq\Real\to\Real\times\Tan Q$.
This gives the non-conservation law of the energy.

In the Hamiltonian formalism, as in the cosymplectic formulation,
the dynamical information is given by the 
{\sl time-dependent Hamiltonian function}
${\rm h} \in\Cinfty(\Real\times\Tan^*Q)$,
which may be defined only locally.
Then, if $\omega=\pi_2^*\Omega$, where $\Omega$ is the canonical $2$-form on $\Tan^*Q$,
we define the form (perhaps only locally if ${\rm h}$
is a locally Hamiltonian function)
$$
\mbox{\boldmath $\Omega$}_ {\rm h }:= \omega + \d{\rm h} \wedge\d t \in Z^2(\Real\times\Tan^*Q) \ ,
$$
which is called the {\sl Hamilton--Cartan $2$-form} associated with ${\rm h}$.
Observe that  
$\mbox{\boldmath $\Omega$}_{\rm h}=-\d\mbox{\boldmath $\Theta$}_{\rm h}$, with
$\mbox{\boldmath $\Theta$}_{\rm h}=\theta-{\rm h}\,\d t$,
and $\theta=\pi_2^*\Theta$, where $\Theta$ is the Liouville's $1$-form on $\Tan^*Q$. Obviously $\omega=-\d\theta$.
Then, the pair $(\Real\times\Tan^*Q,\mbox{\boldmath $\Omega$}_{\rm h})$ is said to be a {\sl time-dependent regular (locally) Hamiltonian system}.
Its dynamical equations for a vector field 
$X_{\rm h}\in\vf(\Real\times\Tan^*Q)$ are
\beann
\inn(X_{\rm h)}\mbox{\boldmath $\Omega$}_{\rm h} = 0
\quad ; \quad
\inn(X_{\rm h})\d t = 1 \ ,
\eeann
and for the integral curves ${\bf c}\colon\Real\to\Real\times\Tan^*Q$ of $X_{\rm h}$, the equations are
$$
\inn(\widetilde{\bf c})(\mbox{\boldmath $\Omega$}_{\rm h}\circ{\bf c}) = 0
\quad , \quad
\inn(\widetilde{\bf c})(\d t\circ{\bf c})) = 1 \ .
$$

This system of equations is compatible and determinate.
In fact, in a chart of canonical coordinates  on $\Real\times\Tan^*Q$,
the unique solution is the vector field
$$
X_{\rm h} =\derpar{}{t}+ \derpar{{\rm h}}{p_i} \derpar{}{q^i} -
      \derpar{{\rm h}}{q^i} \derpar{}{p_i} \ .
$$
and, if ${\bf c}(s)=(t(s),q^i(s), p_i(s))$ is an integral curve of $X_{\rm h}$,
then the second equation implies that $t=s+ctn.$ and $c(t)$
is a solution to the Hamilton equations
\[
\frac{dq^i}{dt} = \frac{\partial {\rm h}}{\partial p_i}\circ{\bf c} \quad ,\quad 
\frac{dp_i}{dt}= -\frac{\partial {\rm h}}{\partial q^i}\circ{\bf c}\ .
\]

The connection between the Lagrangian and the
Hamiltonian contact formalisms for nonautonomous systems is performed by the Legendre map
${\rm F}\Lag\colon\Real\times\Tan Q \to \Real\times\Tan^*Q$
defined in Definition \ref{legmap}.
Then, as ${\rm F}\Lag$  is a (local) diffeomorphism,
then all the geometrical objects in
$\Real\times\Tan Q$ are ${\rm F}\Lag$-projectable;
in particular we have that
$$
{\rm F}\Lag^*{\rm h}=E_\Lag \quad ; \quad
{\rm F}\Lag^*\mbox{\boldmath $\Omega$}_{\rm h} =\mbox{\boldmath $\Omega$}_{\cal L} \ ,
$$
and therefore, for the solutions
$$
{\rm F}\Lag_*X_{\cal L} = X_{\rm h} \ .
$$

Finally, the equivalence between the cosymplectic and the contact (Lagrangian) formulations is stated in the following:

\begin{prop}
Let $(\Real\times\Tan Q,\Lag)$ a time-dependent Lagrangian system.
Then equations \eqref{contacteq} and \eqref{nelm} are equivalents.
\end{prop}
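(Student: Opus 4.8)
The plan is to show that a vector field $X\in\vf(\Real\times\Tan Q)$ satisfies the contact equations \eqref{contacteq} if and only if it satisfies the cosymplectic equations \eqref{nelm}, so that both characterize the same (unique, when $\Lag$ is regular) Euler--Lagrange vector field. The key is to relate the Poincar\'e--Cartan $2$-form $\mbox{\boldmath $\Omega$}_\Lag=\omega_\Lag+\d E_\Lag\wedge\d t$ appearing in \eqref{contacteq} to the cosymplectic data $(\d t,\omega_\Lag;E_\Lag)$ appearing in \eqref{nelm}, together with the explicit form of the Reeb vector field $R_\Lag$ of $(\Real\times\Tan Q,\d t,\omega_\Lag)$ characterized by $\inn(R_\Lag)\d t=1$, $\inn(R_\Lag)\omega_\Lag=0$.

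First I would observe that in both formulations the condition $\inn(X)\d t=1$ is imposed identically, so it suffices to show the equivalence of $\inn(X)\mbox{\boldmath $\Omega$}_\Lag=0$ with $\inn(X)\omega_\Lag=\d E_\Lag-R_\Lag(E_\Lag)\,\d t$, under the standing assumption $\inn(X)\d t=1$. Expanding the left-hand side using the definition $\mbox{\boldmath $\Omega$}_\Lag=\omega_\Lag+\d E_\Lag\wedge\d t$ gives
\[
\inn(X)\mbox{\boldmath $\Omega$}_\Lag=\inn(X)\omega_\Lag+\big(\inn(X)\d E_\Lag\big)\,\d t-\big(\inn(X)\d t\big)\,\d E_\Lag=\inn(X)\omega_\Lag+\big(X(E_\Lag)\big)\,\d t-\d E_\Lag ,
\]
using $\inn(X)\d t=1$. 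Hence $\inn(X)\mbox{\boldmath $\Omega$}_\Lag=0$ is equivalent to
\[
\inn(X)\omega_\Lag=\d E_\Lag-\big(X(E_\Lag)\big)\,\d t .
\]
Comparing with \eqref{nelm}, the two equations agree precisely when $X(E_\Lag)=R_\Lag(E_\Lag)$ on the relevant domain; so the whole matter reduces to checking that every solution $X$ of either system satisfies this scalar identity.

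To close this, I would argue as follows. If $X$ solves the contact equations, then contracting $\inn(X)\omega_\Lag=\d E_\Lag-(X(E_\Lag))\,\d t$ with the Reeb vector field $R_\Lag$ and using $\inn(R_\Lag)\omega_\Lag=0$, $\inn(R_\Lag)\d t=1$ yields $0=R_\Lag(E_\Lag)-X(E_\Lag)$, i.e. $X(E_\Lag)=R_\Lag(E_\Lag)$; substituting back recovers \eqref{nelm}. Conversely, if $X$ solves the cosymplectic equations, contracting $\inn(X)\omega_\Lag=\d E_\Lag-R_\Lag(E_\Lag)\,\d t$ with $R_\Lag$ gives $0=R_\Lag(E_\Lag)-R_\Lag(E_\Lag)$ trivially, but contracting instead with $X$ gives $0=X(E_\Lag)-R_\Lag(E_\Lag)$ (since $\inn(X)\inn(X)\omega_\Lag=0$ and $\inn(X)\d t=1$), whence again $X(E_\Lag)=R_\Lag(E_\Lag)$, and the displayed reduction shows $\inn(X)\mbox{\boldmath $\Omega$}_\Lag=0$. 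Therefore the two vector-field equations have the same solution set, and in the regular case Proposition~\ref{teo-evoeqs} (resp. the existence theorem for the Euler--Lagrange field) guarantees this common solution is unique. The only mildly delicate point is the bookkeeping with the Reeb vector field when $\Lag$ is merely regular rather than hyperregular, but since regularity of $\Lag$ already makes $(\d t,\omega_\Lag)$ a genuine cosymplectic structure (Proposition~\ref{equinona}), $R_\Lag$ is well defined and the contraction arguments go through verbatim; this is the step I expect to require the most care, though it is routine.
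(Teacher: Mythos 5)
Your proof is correct and follows essentially the same route as the paper: expand $\inn(X)\mbox{\boldmath $\Omega$}_\Lag$ using $\mbox{\boldmath $\Omega$}_\Lag=\omega_\Lag+\d E_\Lag\wedge\d t$ and $\inn(X)\d t=1$, then contract with the Reeb vector field $R_\Lag$ to obtain $X(E_\Lag)=R_\Lag(E_\Lag)$ and substitute back. The only difference is that you spell out the converse direction (contracting with $X$ itself), which the paper leaves essentially implicit; this is a harmless and correct addition.
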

\begin{proof}
From \eqref{contacteq} we have
$$
0=\inn(X_{\cal L})\mbox{\boldmath $\Omega$}_\Lag =
\inn(X_{\cal L})\omega_\Lag+ \inn(X_{\cal L})\d E_\Lag\,\d t-
\d E_\Lag\, \inn(X_{\cal L})\d t=
\inn(X_{\cal L})\omega_\Lag+ \inn(X_{\cal L})\d E_\Lag\,\d t-
\d E_\Lag \ .
$$
Now, taking the Reeb vector field $R_\Lag$, we obtain that
\beann
0&=&\inn(R_\Lag)\inn(X_{\cal L})\omega_\Lag+ \inn(X_{\cal L})\d E_\Lag\,\inn(R_\Lag)\d t-\inn(R_\Lag)\d E_\Lag=
 \inn(X_{\cal L})\d E_\Lag-\inn(R_\Lag)\d E_\Lag \\
& &  \Longleftrightarrow \ \inn(X_{\cal L})\d E_\Lag=\inn(R_\Lag)\d E_\Lag \ ,
\eeann
and coming to the first equation we obtain that
$$
\inn(X_{\cal L})\omega_\Lag=
\d E_\Lag- \inn(R_{\cal L})\d E_\Lag\,\d t \ ,
$$
which together with $\inn(X_{\cal L})\d t=1$ are equations \eqref{nelm}.
\\ \qed\end{proof}

Thus, the Lagrangian contact and cosymplectic formulations are equivalent. 
The equivalence between the corresponding Hamiltonian equations is proved in the same way
and lead to the same Hamilton equations.

\begin{remark}
({\sl Cartan's Theorem\/}): {\rm
It can be proved (see \cite{AM-78} ) that, if $(\Real\times\Tan^*Q,\mbox{\boldmath $\Omega$}_{\rm h})$
is a regular time-dependent Hamiltonian system, then
$\mbox{\boldmath $\Theta$}_{\rm h}\wedge\mbox{\boldmath $\Omega$}_{\rm h}^{\ n}$ is a volume form on $\Real\times\Tan^*Q$ 
if, and only if, the condition $X_{\rm h}({\rm h})-{\rm h}\not=0$ holds everywhere.
Then, $\mbox{\boldmath $\Theta$}_{\rm h}$ is said to be a {\sl contact form} and the pair $(\Real\times\Tan^*Q,\mbox{\boldmath $\Theta$}_{\rm h})$  is called a {\sl contact manifold}.
(See Definition \ref{definition-contact-manifold}).

In the same way, if $(\Real\times\Tan Q,\mbox{\boldmath $\Omega$}_\Lag)$ is a regular time-dependent Lagrangian system,
then $\mbox{\boldmath $\Theta$}_\Lag\wedge\mbox{\boldmath $\Omega$}_\Lag^{\ n}$ is a volume form on $\Real\times\Tan Q$
if the condition $X_\Lag(E_\Lag)-E_\Lag\equiv-\Lag\not=0$ everywhere.
Then $\mbox{\boldmath $\Theta$}_\Lag$ is a contact form, and $(\Real\times\Tan Q,\mbox{\boldmath $\Theta$}_\Lag)$ is a contact manifold.

All this justifies the name of {\sl contact formulation}.
\label{Cartan}
}\end{remark}

\subsection{Extended symplectic formulation}

(See \cite{ACI-gct,EMR-gstds,Ku-td,Ra2,Ra1} for details).

In this formulation,  the Lagrangian formalism is done
starting from the extended configuration space $\Real\times Q$
and taking its tangent bundle
$\Tan(\Real \times Q) \cong
 \Tan\Real\times\Tan Q \cong
 \Real \times \Real\times\Tan Q$
which is called the {\sl extended velocity phase space} and has the canonical projections
$$
\nu \colon\Tan(\Real\times Q)  \to \Real\times\Tan Q \quad ; \quad
\varpi \colon\Tan (\Real\times Q) \to \Real \ .
$$
Then, we have the following extensions of the
geometrical objects which are defined in  $\Real\times\Tan Q$,
\beann
E_{{\cal L}_{ext}}&:=&\nu^*E_\Lag + \varpi 
\\
{\mathbf \Theta}_{{\cal L}_{ext}}&:=&
\nu^*\Theta_{\cal L} +E_{{\cal L}_{ext}} \wedge\d t
\\
{\mathbf \Omega}_{{\cal L}_{ext}}&:=&
-\d{\mathbf \Theta}_{{\cal L}_{ext}} =\nu^*\Omega_\Lag -\d E_{{\cal L}_{ext}} \wedge\d t \ .
\eeann
Another way to obtain these geometric elements is constructing
the {\sl extended Lagrangian}
\beq
{\cal L}_{ext} := \nu^*{\cal L} +\frac{1}{2}\varpi^2\in\Cinfty(\Tan(\Real\times Q))
\label{lext}
\eeq
and using the canonical structures (the vertical endomorphism and the Liouville vector field)
of the tangent bundle $\Tan(\Real\times Q)$ to construct them.
It is obvious that the Lagrangian ${\cal L}$ is regular if, and only if,
$\Omega_{\Lag_{ext}}$ is a symplectic form and then
$(\Tan(\Real\times Q),\Omega_{{\cal L}_{ext}},E_{{\cal L}_{ext}})$
is a regular Hamiltonian system. This means that there exists a unique vector field
$X_{{\cal L}_{ext}} \in\vf(\Tan(\Real\times Q))$ 
which is the solution to the  Lagrangian equation of motion in this formalism, which is
$$
\inn(X_{{\cal L}_{ext}}){\mathbf \Omega}_{{\cal L}_{ext}} = \d E_{{\cal L}_{ext}} \ ,
$$
and whose integral curves ${\bf c}\colon\Real\to\Tan(\Real\times Q)$ are the solutions to the equation
$$
\inn(\widetilde{\bf c})({\mathbf \Omega}_{{\cal L}_{ext}}\circ{\bf c}) = \d E_{{\cal L}_{ext}}\circ{\bf c}\ .
$$

In coordinates, this solution is
$$
X_{{\cal L}_{ext}} = \derpar{}{t} - \derpar{E_{{\cal L}_{ext}}}{t}\derpar{}{\varpi}  +
v^i\derpar{}{q^i} +
\Bigl(\displaystyle\frac{\partial^2{\cal L}}{\partial{v^i}\partial{v^j}}\Bigr)^{-1}
\Big(\derpar{{\cal L}}{q^j} -
v^k\frac{\partial^2{\cal L}}{\partial v^k \partial v^j}\Big)\derpar{}{v^i} \ .
$$
Nevertheless, the real manifold of physical states is not
$\Tan(\Real\times Q)$ but $\Real \times \Tan Q$.
But as $\ker\,\rho_*=\Big\langle\displaystyle\derpar{}{\varpi} \Big\rangle$ and
$\displaystyle\Big[\derpar{}{\varpi},X_{{\cal L}_{ext}} \Big]=0$,
this vector field $X_{{\cal L}_{ext}}$ is $\rho$-projectable and, actually, we have that
$$
\rho_*X_{{\cal L}_{ext}} =X_{\cal L} \ ;
$$
Note that $X_{\cal L}$ is a {\sc sode} on $\Tan(\Real\times Q)$,
although $X_{{\cal L}_{ext}}$ is not so on $\Tan(\Real\times Q)$,
and the integral curves of $X_\Lag$ are solutions to the Euler--Lagrange equations.
Thus, the Lagrangian contact (or the cosymplectic) formulation is recovered in this way.

For the Hamiltonian formalism, we take the
{\sl extended momentum phase space}
$\Tan^*(\Real \times Q) \cong
 \Tan^*\Real\times\Tan^*Q \cong
 \Real \times \Real^*\times\Tan^*Q$
(or, more generally, $M \times \Real \times \Real^*$,
where $M$ is any symplectic manifold), and its  canonical projections
\beann
pr_1\colon\Tan^*(\Real \times Q)\to\Tan^*\Real \cong\Real \times \Real^*
&;&
\mu\colon \Tan^*(\Real \times Q) \to \Real\times\Tan^*Q
\\
pr_2\colon \Tan^*(\Real \times Q)\to \Tan^*Q
&;&
u\colon\Tan^*(\Real \times Q) \to \Real
\eeann
If $\Omega \in\df^2(\Tan^*Q)$ and $\Omega_{\Real} \in\df^2(\Tan^*\Real )$
are the canonical symplectic forms of $\Tan^*Q$ and $\Tan^*\Real $, respectively,
then $\Tan^*(\Real \times Q)$
can be endowed with the following symplectic structure
$$
{\mathbf \Omega}_{ext} := pr_2^* \Omega +pr_1^* \Omega_{\Real} \ .
$$
The dynamical information is still given by the time-dependent Hamiltonian function
${\rm h} \in \Cinfty(\Tan^*(\Real \times Q))$.
Then, we can introduce the {\sl extended time-dependent Hamiltonian function}
$$
{\rm h}_{ext} := \mu^* {\rm h} + u \in \Cinfty(\Tan^*(\Real \times Q))
$$
and it is immediate to prove that
$$
{\mathbf \Omega}_{ext} = \mu^* \omega - \d{\rm h}_{ext} \wedge\d t \ .
$$
Thus we have the symplectic (locally) Hamiltonian system
$(\Tan^*(\Real \times Q),\Omega_{ext},{\rm h}_{ext})$.
The Hamiltonian dynamical equation for $X_{ext}\in\vf(\Tan^*(\Real \times Q))$ is
$$
\inn(X_{ext}){\mathbf \Omega}_{ext} = \d{\rm h}_{ext}  \ ,
$$
and for its integral curves ${\bf c}\colon\Real\to\Tan^*(\Real\times Q)$ 
$$
\inn(\widetilde{\bf c})({\mathbf \Omega}_{ext}\circ{\bf c}) = \d{\rm h}_{ext}\circ{\bf c}  \ .
$$

The unique solution to the above equation, in a chart of natural coordinates
$(t,u,q^i,p_i)$ of $\Tan^*(\Real \times Q)$, is
\beann
X_{ext}&=&
\derpar{}{t} -
\derpar{{\rm h}_{ext}}{t}\derpar{}{u}+
\derpar{{\rm h}_{ext}}{p_i} \derpar{}{q^i} -
\derpar{{\rm h}_{ext}}{q^i} \derpar{}{p_i}
\\
&=&
\derpar{}{t} -
\derpar{(\mu^*{\rm h})}{t}\derpar{}{u}+
\derpar{(\mu^*{\rm h})}{p_i} \derpar{}{q^i} -
\derpar{(\mu^*{\rm h})}{q^i} \derpar{}{p_i} \ .
\eeann
Then the integral curves of this vector field are solutions to the  Hamilton equations,
$$
\frac{dq^i}{dt} = \derpar{\mu^*{\rm h}}{p_i} \quad ,\quad
\frac{dp_i}{dt} = - \derpar{\mu^*{\rm h}}{q^i} \quad , \quad
\frac{du}{dt} = - \derpar{\mu^*{\rm h}}{t} \ ,
$$
where the last equation shows that the energy is not conserved.

As the physical phase space is $\Real \times \Tan^*Q$ 
and not $\Tan^*(\Real \times Q))$,
the real dynamical vector field  cannot be $X_{ext}$ really, 
but another one on $\Real \times \Tan^*Q$ which must be related to it.
But, as in the Lagrangian formalism, the vector field $X_{ext}$ is $\mu$-projectable
and, in fact, we have that
$$
\mu_* X_{ext} = X_{\rm h} \ .
$$
In this way, the equivalence with the Hamiltonian contact 
(or the cosymplectic) formulation is proved.

In order to establish the equivalence between the Lagrangian and Hamiltonian formalisms
in this extended symplectic picture, we need to connect the
extended velocity and momentum phase spaces through a suitable ``Legendre map'';
that is, the fiber derivative of some Lagrangian function on $\Tan(\Real \times Q)$.
This function cannot be $\rho^*{\cal L}$ because its fiber derivative
${\rm F}(\rho^*{\cal L})$ is not a (local) diffeomorphism since
$\rho^*{\cal L}$ is not a regular Lagrangian on $\Tan(\Real\times Q)$.
The correct choice consists in taking the extended Lagrangian \eqref{lext}
and its fiber derivative
$$
{\rm F}{\cal L}_{ext} : \Tan (\Real\times Q ) \to \Tan^*(\Real \times Q) \ ,
$$
which in coordinates is
\beann
\Leg_{ext}^{\quad *}\,q^i = q^i \quad , \quad
\Leg_{ext}^{\quad *} \,p_i = \derpar{\rho^*{\cal L}}{v^i} \quad , \quad
\Leg_{ext}^{\quad *} \,t = t \quad , \quad 
\Leg_{ext}^{\quad *} \,u = \varpi  \ .
\eeann
Observe that ${\rm F}{\cal L}_{ext}$ is just an extension of $\Leg$
such that the following diagram commutes
\beann
\Tan ( \Real\times Q) & \mapping{{\cal F}{\cal L}_{ext}} &
\Tan^*(\Real \times Q)
\\
\nu \biggr \downarrow \  &          & \quad \biggr \downarrow \ \mu
\\
\Real\times\Tan Q        & \mapping{\Leg}      & \Real\times\Tan^*Q 
\eeann
and ${\cal F}{\cal L}_{ext}$ is a (local) diffeomorphism if, and only if, $\Leg$ is so
or, what is equivalent, if ${\cal L}$ is regular.
Therefore
$$
{\rm F}{\cal L}_{ext}^*{\mathbf \Omega} = {\mathbf \Omega}_{{\cal L}_{ext}}
\quad , \quad
{\rm F}{\cal L}_{ext}^*{\rm h}_{ext} = E_{{\cal L}_{ext}}
$$
and then
$$
{\rm F}{\cal L}_{{ext}\ *}X_{{\cal L}_{ext}} = X_{ext}  \ .
$$

\section{Examples}

In this last section,
let us consider the examples studied at the end of the previous chapter, 
in which the Lagrangian functions of the systems have been modified to include time dependence.

\subsection{The forced harmonic oscillator with periodic perturbation}

We recover the system analyzed in Section \ref{sho},
but now the oscillator is also submitted to an external periodic force.
In the cosymplectic formalism,
the configuration bundle is $\Real\times Q=\Real^2$, with coordinates $(t,q)$.

\subsubsection{Lagrangian formalism}

The cosymplectic Lagrangian formalism takes place on $\Real\times \Tan Q\simeq\Real^3$, with coordinates 
$(t,q,v)$, and the dynamics can be described by the Lagrangian function
$$
\Lag=\frac{1}{2}(mv^2-kq^2)+Aq\,\cos w t \quad ; \quad k,w\in\Real^+ \ .
$$
Now we have
$$
E_\Lag=\frac{1}{2}(mv^2+kq^2)-Aq\,\cos w t \quad , \quad
\theta_\Lag=mv\,\d q \quad , \quad
\omega_\Lag=m\,\d q\wedge\d v \ ,
$$
and $\Lag$ is a regular Lagrangian.
The Reeb vector field for this Lagrangian is just $\dst R_\Lag=\derpar{}{t}$.
Then, for $\displaystyle \Gamma_\Lag=\lambda\derpar{}{t}+f\derpar{}{q}+g\derpar{}{v}$, 
equations \eqref{nelm} give $\lambda=1$ and
$$
\inn(\Gamma_\Lag)\omega_\Lag=
m\,(f\, \d v-g\, \d q)=
mv\, \d v+(kq-A\,\cos w t)\, \d q=\d E_\Lag+\derpar{\Lag}{t}\,\d t\ ,
$$
which leads to
$$
f=v \quad , \quad mg=-kq+A\,\cos w t \ .
$$
So the Euler--Lagrange vector field is
$$
\Gamma_\Lag=\derpar{}{t}+v\derpar{}{q}+\frac{-kq+A\,\cos w t}{m}\derpar{}{v} \ ,
$$
and its integral curves $(q(t),v(t))$ are the solutions to
$$
\frac{dq}{dt} =v \quad , \quad m\frac{dv}{dt}=-kq+A\,\cos w t
\quad \Longrightarrow \quad m\frac{d^2q}{dt^2}=-kq+A\,\cos w t  \ ,
$$
which is the Euler--Lagrange equation for this time-dependent forced harmonic oscillator.

\subsubsection{Hamiltonian formalism}

For the Hamiltonian formalism, $\Real\times\Tan^*Q\simeq\Real^3$,
with coordinates $(t,q,p)$. First, the Legendre transformation is
$$
\Leg^*t=t  \quad , \quad \Leg^*q=q  \quad , \quad \Leg^*p=mv  \ ,
$$
which is a diffeomorphism (the Lagrangian is hyperregular).
The canonical Hamiltonian function is
$$
{\rm h}=\frac{p^2}{2m}+kq^2-Aq\,\cos w t  \ .
$$
As $\omega=\d q\wedge\d p$, the Reeb vector field is $\dst R=\derpar{}{t}$ and,
for $\displaystyle {\cal E}_{\rm h}=\Lambda\derpar{}{t}+F\derpar{}{q}+G\derpar{}{p}$,
equations \eqref{celmh1} give $\Lambda=1$ and
$$
\inn({\cal E}_{\rm h})\omega= F\, \d p-G\, \d q=
\frac{p}{m}\, \d p+(kq-A\,\cos w t )\, \d q=\d{\rm h}-R\derpar{{\rm h}}{t}\,\d t\ ,
$$
which leads to
$$
F=\frac{p}{m} \quad , \quad G=-kq+A\,\cos w t  \ .
$$
So the evolution vector field is
$$
{\cal E}_{\rm h}=\derpar{}{t}+\frac{p}{m}\derpar{}{q}-(kq-A\,\cos w t)\derpar{}{p}\ ,
$$
and its integral curves $(q(t),p(t))$ are the solutions to
$$
m\frac{dq}{dt} =p \quad , \quad
\frac{dp}{dt}=-kq+A\,\cos w t \ ,
$$
which are the Hamilton equations for this time-dependent forced harmonic oscillator.

Using the Legendre map, we check again that the Hamilton and
the Euler--Lagrange equations of the system are equivalent.
Obviously, we have that $\Leg_*\Gamma_\Lag={\cal E}_{\rm h}$.

\subsubsection{Unified Lagrangian--Hamiltonian formalism}

The extended unified bundle is ${\cal M}=\Real\times\Tan Q\times_Q\Tan^*Q\simeq\Real^4$
with coordinates $(t,q,v,p)$. The Reeb vector field is $\dst R=\derpar{}{t}$, 
the canonical forms are
$$
\Theta_{\cal M}=p\,\d q \quad , \quad
\Omega_{\cal M}=-\d\Theta_{\cal M}=\d q\wedge\d p
$$
and the Hamiltonian function is
$$
{\rm H}=pv-\frac{1}{2}(mv^2-kq^2)-Aq\,\cos w t  \ .
$$
For $\displaystyle X_{\rm H}=g\derpar{}{t}+f\derpar{}{q}+F\derpar{}{v}+G\derpar{}{p}$,
equations \eqref{Mhamilton-contact-eqs0} give $g=1 $ and
$$
\inn(X_{\rm H})\Omega_{\cal W}= f\, \d p-G\, \d q=
(kq-A\,\cos w t)\, \d q+(p-mv)\,\d v+v\, \d p=\d{\rm H}-R({\rm H})\d t\ ,
$$
which leads to
$$
f=v \quad , \quad G=-(kq-A\,\cos w t ) \quad , \quad p=mv  \ .
$$
The last equation is the constraint defining the submanifold
${\cal M}_0\hookrightarrow{\cal M}$ and gives the Legendre map.
The evolution vector field is
$$
X_{\rm H}\vert_{{\cal M}_0}=\derpar{}{t}+v\derpar{}{q}+G\derpar{}{v}-
(kq-A\,\cos w t )\derpar{}{p} \ ,
$$
and the tangency condition leads to
$$
X_{\rm H}(p-mv)=-(kq-A\,\cos w t )-Gm=0 \ \Longleftrightarrow\
G=-\frac{kq-A\,\cos w t}{m} \quad \mbox{\rm (on ${\cal M}_0$)} \ ;
$$
therefore
$$
X_{\rm H}\vert_{{\cal M}_0}=\derpar{}{t}+v\derpar{}{q}
-\frac{kq-A\,\cos w t}{m}\derpar{}{v}-(kq-A\,\cos w t)\derpar{}{p} \ .
$$
and its integral curves $(t,q(t),v(t),p(t))$ are the solutions to
$$
\frac{dq}{dt}=v \quad , \quad
m\frac{dv}{dt}=-(kq-A\,\cos w t) \quad , \quad
\frac{dp}{dt}=-(kq-A\,\cos w t) \ .
$$
Te first two equations are equivalent to
$$
m\frac{d^2q}{dt^2}=-kq+A\,\cos w t \ ,
$$
which is the Euler--Lagrange equation of the system. 
Furthermore, using the constraint $p=mv$ (the Legendre map),
the first and third equations are
$$
\frac{dq}{dt}=\frac{p}{m} \quad , \quad
\frac{dp}{dt}=-kq+A\,\cos w t \ ;
$$
which are the Hamilton equations for the system.

\subsection{A nonautonomous system with central forces}

Consider the {\sl Kepler problem} analyzed in Section \ref{Kp}
in the case where the mass of the particle subjected to the central field is not constant but a (strictly positive) function of time $m(t)$.
As in the standard case, the motion is on a plane;
hence the configuration bundle is $\Real\times Q=\Real^3$
and the coordinates are $(t,r,\phi)$, where $(r,\phi)$
are polar coordinates on the adequate open set $U\subset Q=\Real^2$.

\subsubsection{Lagrangian formalism}

The velocity phase space for the cosymplectic Lagrangian formalism is 
$\Real\times\Tan Q\simeq\Real^5$, with local coordinates 
$(t,r,\phi,v_r,v_\phi)$ and the Lagrangian function is 
$$
\Lag=\frac{1}{2}m(t)(v_r^2+r^2v_\phi^2)-\frac{K}{r} \quad , \quad K\not=0 \ ;
$$
therefore
\beann
E_\Lag&=&\frac{1}{2}m(t)(v_r^2+r^2v_\phi^2)+\frac{K}{r} \ , \\
\theta_\Lag&=&m(t)(v_r\,\d r+r^2v_\phi\,\d\phi) \ , \\
\omega_\Lag&=&m(t)(\d r\wedge\d v_r+r^2\d\phi\wedge\d v_\phi-
2rv_\phi\,\d r\wedge\d \phi)-
\dot m(t)(v_r\,\d t\wedge\d r+r^2v_\phi\,\d t\wedge\d\phi) \ ,
\eeann
where $\dst\dot m(t)=\frac{dm(t)}{dt}$.
The Lagrangian is regular and the Reeb vector field for this system is
$$
 R_\Lag=\derpar{}{t}+\frac{\dot m(t)}{m(t)}\Big(v_r\derpar{}{v_r}+v_\phi\derpar{}{v_\phi}\Big)\ .
$$
If $\displaystyle \Gamma_\Lag=\lambda\derpar{}{t}+f_r\derpar{}{r}+f_\phi\derpar{}{\phi}+
g_r\derpar{}{v_r}+g_\phi\derpar{}{v_\phi}$, 
equations  \eqref{nelm} give $\lambda=1$ and
\beann
\inn(\Gamma_\Lag)\omega_\Lag&=& 
\dot m(t)\,[-v_r\,\d r-r^2v_\phi\,\d\phi+
(v_rf_r+r^2v_\phi f_\phi)\,\d t]+
 \\
& & m(t)\,[f_r\, \d v_r+f_\phi r^2\, \d v_\phi-(g_r-2rv_\phi f_\phi)\, \d r-
(g_\phi r^2+2rv_\phi f_r)\, \d\phi] \\
&=& \dot m(t)\,(v_r^2+r^2v_\phi^2)\,\d t+
m(t)\,v_r\, \d v_r+m(t)\,r^2v_\phi\, \d v_\phi+\Big(m(t)\,rv_\phi^2-\frac{K}{r^2}\Big)\d r \\ 
&=& \d E_\Lag+\derpar{\Lag}{t}\,\d t\ ,
\eeann
which leads to
\beann
f_r=v_r \quad , \quad f_\phi=v_\phi \quad , \quad 
v_rf_r+r^2v_\phi f_\phi=v_r^2+r^2v_\phi^2 \ , \\
m(t)\,g_r=m(t)\,(2rv_\phi f_\phi-rv_\phi^2)-\dot m(t)\,v_r+\frac{K}{r^2} \quad , \quad 
g_\phi=-\frac{2v_\phi f_r}{r}-\frac{\dot m(t)}{m(t)}v_\phi   \ , 
\eeann
and then the Euler--Lagrange vector field is
$$
\Gamma_\Lag=\derpar{}{t}+v_r\derpar{}{r}+v_\phi\derpar{}{\phi}+\Big(rv_\phi^2
-\frac{\dot m(t)}{m(t)}v_r+\dst\frac{K}{m(t)\,r^2}\Big)\derpar{}{v_r}-\Big(\frac{\dot m(t)}{m(t)}v_\phi+\frac{2v_\phi v_r}{r} \Big)\derpar{}{v_\phi} \ .
$$
Then, its integral curves $(r(t),\phi(t),v_r(t),v_\phi(t))$ are the solutions to
\beann
\frac{dr}{dt} =v_r \ , \ \frac{d\phi}{dt} =v_\phi \ , \
m(t)\,\frac{dv_r}{dt}=m(t)\,rv_\phi^2-\dot m(t)\,v_r+\frac{K}{r^2}  \ , \
\frac{dv_\phi}{dt}=\frac{\dot m(t)}{m(t)}v_\phi-\frac{2}{r}v_\phi v_r &\Longrightarrow& \\
\Longrightarrow \quad m(t)\,\frac{d^2r}{dt^2}=m(t)\,r\Big(\frac{d\phi}{dt}\Big)^2-\dot m(t)\,\frac{dr}{dt}+\frac{K}{r^2}  \quad , \quad \frac{d^2\phi}{dt^2}=
-\frac{\dot m(t)}{m(t)}\,\frac{d\phi}{dt}-\frac{2}{r}\,\frac{d\phi}{dt}\,\frac{dr}{dt}
&\Longrightarrow& \\
\Longrightarrow \quad 
\frac{d}{dt}\Big(m(t)\,\frac{dr}{dt}\Big)=m(t)\,r\Big(\frac{d\phi}{dt}\Big)^2+\frac{K}{r^2}  \quad , \quad 
\frac{d}{dt}\Big(m(t)r^2\frac{d\phi}{dt}\Big)=0 \ ,  & &
\eeann
which are the Euler--Lagrange equations for this system.

There is an infinitesimal Lagrangian exact Noether symmetry 
which is the vector field $\dst Y=\derpar{}{\phi}$,  because
$\dst\Lie(Y)\d t=0$ trivially,
and a calculation similar to the one done in the example of Section \ref{Kp} leads to $\Lie(Y)\theta_\Lag=0$
and $\Lie(Y)E_\Lag=0$.
Therefore, the associated conserved quantity is
$$
f_Y=\inn\left(\derpar{}{\phi}\right)\theta_\Lag=m(t)r^2v_\phi \ ;
$$
that is, the angular momentum, as the last Euler--Lagrange equation shows.

\subsubsection{Hamiltonian formalism}

The cosymplectic Hamiltonian formalism
takes place on $\Real\times\Tan^*Q\simeq\Real^5$,
with local coordinates $(t,r,\phi,p_r,p_\phi)$. First, the Legendre transformation is
$$
\Leg^*t=t  \quad , \quad
\Leg^*r=r  \quad , \quad \Leg^*\phi=\phi  \quad , \quad
\Leg^*p_r=m(t)\,v_r  \quad , \quad \Leg^*p_\phi=m(t)\,r^2v_\phi\ ,
$$
which is a diffeomorphism because the Lagrangian is hyperregular.
The canonical time-dependent Hamiltonian function is
$$
{\rm h}=\frac{p_r^2}{2m(t)}+\frac{p_\phi^2}{2m(t)\,r^2}+\frac{K}{r} \ .
$$
As $\omega=\d r\wedge\d p_r+\d \phi\wedge\d p_\phi$,
the Reeb vector field is $\dst R=\derpar{}{t}$.
Then, the evolution vector field obtained from equations \eqref{celmh1} is
$$
{\cal E}_{\rm h}=\derpar{}{t}+\frac{p_r}{m(t)}\derpar{}{r}+\frac{p_\phi}{m(t)\,r^2}\derpar{}{\phi}+\Big(\frac{p_\phi^2}{m(t)\,r^3}+\frac{K}{r^2}\Big)\derpar{}{p_r} \ ,
$$
and its integral curves $(r(t),\phi(t),p_r(t),p_\phi(t))$ are the solutions to
$$
m(t)\,\frac{dr}{dt} =p_r \quad , \quad m(t)\,r^2\frac{d\phi}{dt} =p_\phi \quad , \quad
\frac{dp_r}{dt}=\frac{p_\phi^2}{m(t)\,r^3}+\frac{K}{r^2} \quad , \quad
\frac{dp_\phi}{dt}=0 \ ,
$$
which are the Hamilton equations for this system.

Although the Hamilton equations have the same aspect as in the
autonomous case (see Section \ref{Kp}), one can observe that,
as $m=m(t)$, combining these equations and
using the Legendre map, we obtain, in fact,
the Euler--Lagrange equations for this nonautonomous system,
and $\Leg_*\Gamma_\Lag={\cal E}_{\rm h}$.

Once again, a Hamiltonian exact Noether symmetry is given by 
the vector field $\dst Y=\derpar{}{\phi}$, since
$\dst\Lie(Y)\d t=0$, $\Lie(Y)\theta=0$,
and $\Lie(Y){\rm h}=0$.
Then, as the last Hamilton equation shows,
the associated conserved quantity is again the angular momentum
$\dst f_Y=\inn\left(\derpar{}{\phi}\right)\theta=p_\phi$.

\subsubsection{Unified Lagrangian--Hamiltonian formalism}

The extended unified bundle is ${\cal M}=\Real\times\Tan Q\times_Q\Tan^*Q\simeq\Real^7$,
with natural coordinates $(t,r,\phi,v_r,v_\phi,p_r,p_\phi)$.
The Reeb vector field $\dst R=\derpar{}{t}$,
the canonical forms are
$$
\Theta_{\cal M}=p_r\,\d r +p_\phi\,\d\phi  \quad , \quad
\Omega_{\cal M}=-\d\Theta_{\cal M}=\d r \wedge\d p_r+\d\phi \wedge\d p_\phi \ ,
$$
and the Hamiltonian function is
$$
{\rm H}=p_rv_r+p_\phi v_\phi-\frac{1}{2}m(t)(v_r^2+r^2v_\phi^2)+\frac{K}{r} \ .
$$
For $\displaystyle X_{\rm H}=f\derpar{}{t}+f_r\derpar{}{r}+
f_\phi\derpar{}{\phi}+F_r\derpar{}{v_r}+F_\phi\derpar{}{v_\phi}+G_r\derpar{}{p_r}+G_\phi\derpar{}{p_\phi}$,
equation \eqref{Mhamilton-contact-eqs0} gives $g=1$ and
\beann
\inn(X_{\rm H})\Omega_{\cal M}&=&
 f_r\, \d p_r+f_\phi\, \d p_\phi-G_r\, \d r-G_\phi\, \d\phi
\\ &=&
-\Big(\frac{K}{r^2}+m(t)\,rv_\phi^2\Big)\d r+(p_r-m(t)\,v_r)\,\d v_r
\\ & &
+(p_\phi-m(t)\,r^2v_\phi)\,\d v_\phi+v_r\, \d p_r+v_\phi\, \d p_\phi
\\ &=& \d{\rm H}-R({\rm H})\d t\ ,
\eeann
which leads to
$$
f_r=v_r \ , \ f_\phi=v_\phi \ , \
G_r=\frac{K}{r^2}+m(t)\,rv_\phi^2 \ , \ G_\phi=0 \ , \
p_r=m(t)\,v_r \ , \ p_\phi=m(t)\,r^2v_\phi \ .
$$
The last two equations are constraints defining the submanifold
${\cal M}_0\hookrightarrow{\cal M}$ which give the Legendre map.
The evolution vector field is
$$
X_{\rm H}\vert_{{\cal M}_0}=\derpar{}{t}+
v_r\derpar{}{r}+v_\phi\derpar{}{\phi}+F_r\derpar{}{v_r}+F_\phi\derpar{}{v_\phi}
+\Big(\frac{K}{r^2}+m(t)\,rv_\phi^2\Big)\derpar{}{p_r} \ ,
$$
and the tangency condition leads to
\beann
X_{\rm H}(p_r-m(t)v_r)=\frac{K}{r^2}+m(t)\,rv_\phi^2-F_rm(t)-\dot m(t)\,v_r=0
\quad \mbox{\rm (on ${\cal M}_0$)} \\ \Longleftrightarrow
F_r=\frac{K}{m(t)\,r^2}+rv_\phi^2-\frac{\dot m(t)}{m(t)}v_r \quad \mbox{\rm (on ${\cal M}_0$)} \ , \\
X_{\rm H}(p_\phi-m(t)\,r^2v_\phi)=-m(t)\,F_\phi r^2-m(t)\,2f_r rv_\phi-\dot m(t)\,r^2v_\phi=0 
\quad \mbox{\rm (on ${\cal M}_0$)} \\ \Longleftrightarrow
F_\phi=-\frac{2v_rv_\phi}{r}-\frac{\dot m(t)}{m(t)}r^2v_\phi \quad \mbox{\rm (on ${\cal M}_0$)} \ ;
\eeann
therefore
\beann
X_{\rm H}\vert_{{\cal M}_0}&=&\derpar{}{t}+
v_r\derpar{}{r}+v_\phi\derpar{}{\phi}+
\Big(rv_\phi^2+\dst\frac{K}{m(t)\,r^2}-\frac{\dot m(t)}{m(t)}v_r\Big)\derpar{}{v_r}
\\ & & -\Big(\frac{2v_rv_\phi}{r}+\frac{\dot m(t)}{m(t)}r^2v_\phi\Big)\derpar{}{v_\phi}
+\left(\frac{K}{r^2}+m(t)\,rv_\phi^2\right)\derpar{}{p_r} \ ,
\eeann
and its integral curves $(t,r(t),\phi(t),v_r(t),v_\phi(t),p_r(t),p_\phi(t))$ are the solutions to
\beann
\frac{dr}{dt}=v_r \ &,& \
\frac{d\phi}{dt}=v_\phi \ ,
\\
\frac{dv_r}{dt}=\frac{K}{m(t)\,r^2}+rv_\phi^2-\frac{\dot m(t)}{m(t)}v_r \ &,& \
\frac{dv_\phi}{dt}=-\frac{2v_rv_\phi}{r}-\frac{\dot m(t)}{m(t)}r^2v_\phi \ , 
\\
\frac{dp_r}{dt}=\frac{K}{r^2}+m(t)\,rv_\phi^2 \ &,& \
\frac{dp_\phi}{dt}=0 \ .
\eeann
Te first four equations are equivalent to
$$
m(t)\frac{d^2r}{dt^2}=
m(t)\,r\Big(\frac{d\phi}{dt}\Big)^2+\frac{K}{r^2}-\dot m(t)\frac{dr}{dt}  \quad , \quad
 \frac{d^2\phi}{dt^2}=
-\frac{2}{r}\,\frac{d\phi}{dt}\,\frac{dr}{dt}-\frac{\dot m(t)}{m(t)}r^2\frac{d\phi}{dt} \ ,
$$
which are the Euler--Lagrange equation of the system. 
Furthermore, using the constraints $p_r=m(t)\,v_r$ 
and $p_\phi=m(t)\,r^2v_\phi$ (tat is, the Legendre map),
the first, second, fifth, and sixth equations are
$$
\frac{dr}{dt}=\frac{p_r}{m(t)} \quad , \quad
\frac{d\phi}{dt}=\frac{p_\phi}{m(t)\,r^2} \quad , \quad
\frac{dp_r}{dt}=\frac{p_\phi^2}{m(t)\,r^3}+\frac{K}{r^2}  \quad , \quad
\frac{dp_\phi}{dt}=0 \ ;
$$
which are the Hamilton equations for the system.


\chapter{Riemannian mechanics: Newtonian dynamical systems}
\protect\label{sdn}

In this chapter, we analyze the geometry of Newtonian dynamical systems, 
which are classical  mechanical systems on a (semi)Riemannian manifold:
they are also called {\sl purely mechanical systems}. 
This is the geometrization of the classical {\sl Newtonian Mechanics} or {\sl (semi)Riemannian mechanics}.
As we will see, a special type of these systems, the conservative ones,
are just a particular kind of the Lagrangian systems studied in the above chapter.
Some basic references are \cite{AM-78,Ar-89,CC-2005,CM-2023,CDW-87,GN-2014,Ol-02}.

As in the above chapters,
we start reviewing the basic notions and properties on connections in manifolds and Riemannian geometry,
that are needed for the development of this chapter.
Next we introduce the general features about Newtonian dynamical systems and their properties: 
we state geometrically the classical Newton equations of dynamics, we study conservation laws,
and describe some particular types of these kinds of systems.
This geometric framework is also used to study the case of systems with holonomic and nonholonomic constraints and their corresponding variational principles.
Finally, nonautonomous Newtonian systems are also briefly analyzed.

\section{Connections in manifolds. Riemannian manifolds}
\label{conexion}

(For a detailed account of these subjects and the proof of the results, see for example 
\cite{Con2001,GHV-72,Lang-95,Lee2013, Lee2018,GN-2014,Sp-72}).

\subsection{Connections and covariant derivatives}

\begin{definition}
A \textbf{connection} on a manifold $M$ is a map
$$
\nabla\colon\vf(M)\times\vf(M)\to\vf(M) \quad ,\quad (X,Y)\mapsto \nabla_X Y \ ,
$$
satisfying the following properties:
\begin{description}
\item[{\rm (i)}] $\nabla_X Y$ is $\Real$-linear in $Y$.
\item[{\rm (ii)}]  $\nabla_X Y$ is $\Cinfty(M)$-linear in $X$.
\item[{\rm (iii)}] If $f\in\Cinfty(M)$, then $\nabla_X (fY)=(\Lie(X)f)Y+f\nabla_X Y$.
\end{description}
Then $\nabla_X Y$ is said to be the \textbf{covariant derivative} of $Y$ with respect to $X$.
\end{definition}

Connections and covariant derivatives
satisfy the following properties:
\begin{itemize}
\item 
For every $p\in M$, the result of $(\nabla_X Y)(p)$ 
depends only on the value of $X$ at $p$ and 
on the value of $Y$  on a neighbourhood of $p$. 
Thus, we can calculate $(\nabla_X Y)(p)$ in a local coordinate system 
around $p$, and $\nabla_u Y$ is well-defined for $u\in\Tan_{p}M$. 
\item 
(Localization of a connection). 
 If $U\subset M$ is an open set, then there exists a connection $\nabla^{U}$ on $U$ defined as:  
$$
\nabla^{U}_X|_{U} Y|_{U}:=(\nabla_X Y)|_{U} \ .
$$
\item
Let $(U,x^i)$ be a local chart; then the vector fields 
$\displaystyle\derpar{}{x^i}$ define a basis of 
$\vf(U)$ and
$$
\nabla_{\frac{\partial}{\partial x^{i}}}\frac{\partial}{\partial x^{j}}=\Gamma_{ij}^{k}\frac{\partial}{\partial x^{k}}\ ,
$$
where $\Gamma_{ij}^{k}\in\Cinfty(U)$ are the so-called 
{\sl Christoffel symbols} of the connection $\nabla$ 
for the local basis $\left\{\displaystyle\derpar{}{x^i}\right\}$ or in the local chart $(U,x^i)$.
Then, if $\displaystyle X=X^{i}{\frac{\partial}{\partial x^{i}}}$ and 
$\displaystyle Y=Y^{j}{\frac{\partial}{\partial x^{j}}}$,
$$
(\nabla_X Y)|_{U}=(\Lie(X)Y^{j}){\frac{\partial}{\partial x^{j}}}+
X^{i}Y^{j}\Gamma_{ij}^{k}{\frac{\partial}{\partial x^{k}}}\,.
$$
Observe that this expression says that, at $p\in M$,
the covariant derivative depends only on the values of $Y$ along a curve 
$\gamma\colon(-\epsilon,\epsilon)\subset\Real\to M$ 
such that $\gamma'(0)=X_{p}$.

This also holds if we have a local basis 
$(X_i)$ for vector fields on an open set $U\subset M$, not necessarily local coordinate vector fields.
\item
If $M$ is a parallelizable manifold, and $X_{1},\ldots, X_{n}$ 
is a basis of $\vf(M)$, then any family of differentiable functions 
$\Gamma_{ij}^{k}\in\Cinfty(M)$ define a connection on $M$. 
That is, the Christoffel symbols determine the connection 
and they allow calculating the covariant derivative 
in the open set where they are defined.

In particular, if $M=\Real^n$, the standard connection is defined 
taking the Christoffel symbols equal to zero in the natural coordinate system.
\item
Let $(x^{1},\ldots, x^{n})$ and $(\tilde{x}^{1},\ldots, \tilde{x}^{n})$ be two coordinate systems on the same open set, and $\Gamma_{ij}^{k}$ and $\widetilde{\Gamma}_{\alpha\beta}^{\gamma}$ be the corresponding Christoffel symbols, then
$$
\Gamma_{ij}^{k}=\widetilde{\Gamma}_{\alpha\beta}^{\gamma}
\frac{\partial \tilde{x}^{\alpha}}{\partial x^{i}}
\frac{\partial \tilde{x}^{\beta}}{\partial x^{j}}
\frac{\partial x^{k} }{\partial \tilde{x}^{\gamma}}+
\frac{\partial^{2} \tilde{x}^{\delta} }{\partial x^{i}\partial x^{j}}\frac{\partial x^{k} }{\partial \tilde{x}^{\delta} }\ ,
$$
hence they are not the components of any tensor field.
Thus, it is a nonsense to say that a connection is null. 
The Christoffel symbols can vanish in a coordinate system and be different from zero in another one.
\item
The set of connections on a manifold is not a vector space but an affine space.
\item 
Let $\nabla^{\alpha}$ be a family  of connections on $M$ and $f_{\alpha}$ be a family of differentiable functions such that their supports are a locally finite family of subsets of $M$. Then $\nabla=f_{\alpha}\nabla^{\alpha}$ defined as 
$$
\nabla_{X}Y:=f_{\alpha}\nabla^{\alpha}_{X}Y
$$
is a connection if, and only if, $\displaystyle\sum_{\alpha}f_{\alpha}=1$.
\item
Using the above item, we can state that 
any paracompact manifold admits a connection and, conversely,
if there exists a connection on $M$, then $M$ is paracompact.
(Remember that to say that $M$ is paracompact is equivalent to say
 that every open covering has a subordinate partition of unity, 
or that $M$ is metrizable, or that every connected component has a numerable basis of open sets).
\end{itemize}

\subsection{Covariant derivative of tensor fields}

\begin{definition}
Let $(M,\nabla)$ be a differentiable manifold with a connection, 
and let $\mathcal{T}(M)$ be the algebra of tensor fields on $M$.
For $X\in\vf(M)$ the
\textbf{covariant derivative} by $X$ defined by the connection 
is the unique map $\nabla_{X}\colon\mathcal{T}(M)\to\mathcal{T}(M)$ 
verifying the following properties:
\begin{description}
\item[{\rm (i)}] It is the Lie derivative $\Lie(X)$ on $\Cinfty(M)$.
\item[{\rm (ii)}] It reduces to the well known $\nabla_{X}$ on $\vf(M)$.
\item[{\rm (iii)}] It is $\Real$-linear.
\item[{\rm (iv)}] It respects the type of tensor field; that is, it maps $\mathcal{T}_{h}^{k}(M)$ onto $\mathcal{T}_{h}^{k}(M)$.
\item[{\rm (v)}] 
{\rm (Leibniz rule)}: $\nabla_{X}(S\otimes T)=(\nabla_{X}S)\otimes T+S\otimes(\nabla_{X}T)$.
\item[{\rm (vi)}] It commutes with the internal contractions.
\end{description}
That is, $\nabla_{X}$ is a derivation of the $\Real$-algebra $\mathcal{T}(M)$, with zero degree, which commutes with the internal contractions. 
\end{definition}

Observe that, as a consequence of the definition, we have that,
for $f\in\Cinfty(M)$ and $S,T\in\mathcal{T}(M)$,
$$
\nabla_{X}(fS\otimes T)=
(\Lie(X)f)(S\otimes T)+f(\nabla_{X}(S\otimes T))=\nabla_{X}(S\otimes fT) \ .
$$

The covariant derivative has the following properties:
\begin{itemize}
\item 
For $p\in M$, the value of $(\nabla_{X} T)(p)$ only depends on $X(p)$ and the value of the tensor field $T$ on an open set containing $p$. This property has the same consequences as in the case of derivation of vector fields.
\item
The above properties defining $\nabla_{X}$ allow us to calculate $\nabla_{X}T$ for any tensor field $T$ providing we know how is the actuation of $\nabla_{X}$ on functions and vector fields. 
For example, let $X,Y\in\mathcal{T}^{1}(M)$ and $\alpha\in\Omega^{1} (M)=\mathcal{T}_{1}(M)$, then,
$$
\langle \nabla_{X}\alpha,Y\rangle= \nabla_{X}\langle\alpha,Y\rangle-\langle\alpha, \nabla_{X}Y\rangle \ .
$$     
In particular, we have that
$\nabla_{\frac{\partial}{\partial x^{i}}}\d x^{j}=-\Gamma_{ik}^{j}\d x^{k}$.
\item
In a local coordinate system $(U,x^i)$, if
$$
X=X^{j}{\frac{\partial}{\partial x^{i}}}, \,\,\, T=T_{j_{1},
\ldots,j_{s}}^{{i_{1},\ldots,i_{r}}}
\d x^{j_{1}}\otimes\ldots\otimes\d x^{j_{s}}\otimes{\frac{\partial}{\partial x^{i_{1}}}}\otimes\ldots{\frac{\partial}{\partial x^{i_{r}}}} \ ,
$$
then
$$
 \nabla_{X}T=S_{j_{1},
\ldots,j_{s}}^{{i_{1},\ldots,i_{r}}}
\d x^{j_{1}}\otimes\ldots\otimes\d x^{j_{s}}\otimes{\frac{\partial}{\partial x^{i_{1}}}}\otimes\ldots{\frac{\partial}{\partial x^{i_{r}}}}\ ;
$$
where
$$
S_{j_{1},
\ldots,j_{s}}^{{i_{1},\ldots,i_{r}}}=
X^{k}(\partial_{k}T_{j_{1},
\ldots,j_{s}}^{{i_{1},\ldots,i_{r}}}+
\sum_{n=1}^{r} T_{j_{1},
\ldots,j_{s}}^{{i_{1},\ldots,l,\dots,i_{r}}}\Gamma_{kl}^{i_{n}}-
\sum_{n=1}^{s} T_{j_{1},
\ldots,l,\dots,j_{s}}^{{i_{1},\ldots,i_{r}}}\Gamma_{k_{n}}^{l})\ .
$$
\end{itemize}

\begin{definition}
Let $T\in\mathcal{T}_{s}^{r}(M)$, $X_{i},Y\in\vf(M)$, and $\beta^{j}\in\Omega^{1}(M)$.
The tensor field $\nabla T\in\mathcal{T}_{s+1}^{r}(M)$ defined by
$$
(\nabla T)(X_{1},\ldots,X_{s},Y,\beta^{1},\ldots,\beta^{r}):=(\nabla_{Y} T)(X_{1},\ldots,X_{s},\beta^{1},\ldots,\beta^{r}) \ ,
$$
is called the {\sl \textbf{covariant differential}} of $T$.
\end{definition}

With the same notation for $T$ as in the above item, we have 
$$
 \nabla T=S_{j_{1},
\ldots,j_{s};k}^{{i_{1},\ldots,i_{r}}}
\d x^{j_{1}}\otimes\ldots\otimes\d x^{j_{s}}\otimes\d x^{k}\otimes{\frac{\partial}{\partial x^{i_{1}}}}\otimes\ldots{\frac{\partial}{\partial x^{i_{r}}}} \ ,
$$
where
$$
S_{j_{1},
\ldots,j_{s};k}^{{i_{1},\ldots,i_{r}}}=
\derpar{}{x^k}T_{j_{1},
\ldots,j_{s}}^{{i_{1},\ldots,i_{r}}}+
\sum_{n=1}^{r} T_{j_{1},
\ldots,j_{s}}^{{i_{1},\ldots,l,\dots,i_{r}}}\Gamma_{kl}^{i_{n}}-
\sum_{n=1}^{s} T_{j_{1},
\ldots,l,\dots,j_{s}}^{{i_{1},\ldots,i_{r}}}\Gamma_{k_{n}}^{l}\ .
$$
And for functions $f\in\Cinfty(M)$, we have $\nabla f=\d f$.

For a vector field $\displaystyle X=X^i{\frac{\partial}{\partial x^{i}}}$, we have that 
$$\nabla X=\Big(\derpar{X^i}{x^k}X^i+X^l\Gamma_{kl}^i\Big)\d x^k\otimes{\frac{\partial}{\partial x^{i}}}\in\mathcal{T}_{1}^{1}(U)\ .
$$
Observe that $(\nabla X)(Y)=\nabla_YX$.

\subsection{Covariant derivative along curves}

Remember that a {\sl vector field along a map}
$F\colon N\to M$, is another map $Z\colon N\to\Tan M$ 
such that $Z(q)\in \Tan_{F(q)}M$, for every $q\in N$; that is $\tau_Q\circ Z=F$. 
The set of vector fields along a map $F$ is a $\Cinfty(N)$-modulus, denoted by $\vf(F)$.

If $\gamma\colon I\subset\Real\to M$ is a curve, 
then we have the set $\vf(\gamma)$ as a $\Cinfty(I)$-modulus. 
One distinguished element of this set is the velocity vector $\gamma'\in\vf(\gamma)$. 
If $X\in\vf(M)$, then $X\circ\gamma\in\vf(\gamma)$;
but not every element of $\vf(\gamma)$ is of this kind.

If $\displaystyle Y=Y^{j}\derpar{}{x^i}$ is a vector field, 
then, by the above properties of the covariant derivative, 
$\nabla_{\gamma'(t)}Y$ is well-defined as
$$
\nabla_{\gamma'(t)}Y:=(D(Y^{j}\circ\gamma)(t){\frac{\partial}{\partial x^{i}}}|_{\gamma(t)}+\Gamma_{ij}^{k}(\gamma(t))D\gamma^{i}(t)(Y^{j}(\gamma(t)){\frac{\partial}{\partial x^{k}}}|_{\gamma(t)} \ ,
$$
where $\displaystyle D=\frac{d}{d t}$ on the functions $f:I\to\Real$, and $\gamma^{i}$ are the components of $\gamma$. Then:

\begin{definition}
Given a curve $\gamma\colon I\subset\Real\to M$,
the \textbf{covariant derivative} along $\gamma$
is the unique map $\nabla_{t}\colon\vf(\gamma)\to\vf(\gamma)$ 
such that
\begin{description}
\item[{\rm (i)}] 
$\nabla_{t}$ is $\Real$-linear.
\item[{\rm (ii)}]
 If $f\in\Cinfty(I)$ and $\mathbf{w}\in\vf(\gamma)$, then
$\nabla_{t}(f\mathbf{w})=(Df)\mathbf{w}+f\nabla_{t}\mathbf{w}$.
\item[{\rm (iii)}] 
If $X\in\vf(M)$, then $\nabla_{t}(X\circ\gamma) (t)=\nabla_{\gamma'(t)}X$.
\end{description}
The vector field along $\gamma$ given by $\nabla_{t}\mathbf{w}$ is called the
 \textbf{covariant derivative} of $\mathbf{w}$ along $\gamma$.
\end{definition}

Other notations for $\nabla_{t}\mathbf{w}$ are
$\nabla_{t}^{\gamma}\mathbf{w}=\nabla_{\gamma'(t)}\mathbf{w}=\nabla_{\frac{d}{d t}}\mathbf{w}$.

If $(U,x^i)$ is a local chart with $\gamma(I)\subset U$, 
then a basis for $\vf(\gamma)$ is given by  $\displaystyle\derpar{}{x^i}\circ\gamma$ and,
for $\mathbf{w}\in\vf(\gamma)$, then  $\mathbf{w}=w^{i}\Big(\displaystyle\derpar{}{x^i}\circ\gamma\Big)$,
with $w^{i}\in\Cinfty(I)$. Writing 
$\gamma'=\dot{q}^{i}\displaystyle\derpar{}{x^i}\circ\gamma$, 
we have that
$$
\nabla_{t}\mathbf{w}=\left(\dot{w}^{i}+(\Gamma_{ij}^{k}\circ\gamma)\dot{q}^{i}{w}^{j}\right){\frac{\partial}{\partial x^{k}}}\circ\gamma\ .
$$

Similarly as above, we can define the concept of tensor field along a map 
or along a curve, and the covariant derivative of tensor fields along a curve 
with the same rules as above. 
If we know the action of  $\nabla_{t}$ on functions and vector fields,
we can obtain the covariant derivative of any kind of tensor field 
along the curve $\gamma$.
In particular, if $ \boldsymbol{\beta}\in\Omega^{1}(\gamma)$ 
and $\mathbf{w}\in\vf(\gamma)$, then
$$
D\langle \boldsymbol{\beta},\mathbf{w}\rangle=\langle\nabla_{t} \boldsymbol{\beta},\mathbf{w}\rangle+\langle \boldsymbol{\beta},\nabla_{t}\mathbf{w}\rangle\ ,
$$
which, for $\boldsymbol{\beta}=b_{j}\d x^{j}\circ\gamma$, gives the following expression in a local chart
$$
\nabla_{t} \boldsymbol{\beta}=\left(\dot{b}_{j}-(\Gamma_{ij}^{k}\circ\gamma)\dot{q}^{i}{b}_{k}\right)\d x^{j}\circ\gamma\ .
$$

\subsection{Parallel transport along a curve. Geodesic curves}

\begin{definition}
Let $(M,\nabla)$ be a manifold with a connection
and $\gamma\colon I\subseteq\Real\to M$ a curve. 
We say that $\mathbf{w}\in\vf(\gamma)$ is \textbf{parallel along $\gamma$}
 if $\nabla_{t}\mathbf{w}=0$.
\end{definition}

The following properties hold:
\begin{itemize}
  \item
(Existence of parallel vector fields). Given $(M,\nabla)$ and 
$\gamma\colon I\to M$, if $t_o\in I$ and $\mathbf{w}_o\in\Tan _{\gamma(t_o)}M$, 
then there exists an unique $\mathbf{w}\in\vf(\gamma)$ parallel and such that $\mathbf{w}(t_o)=\mathbf{w}_o$.
  
This result comes from the linearity of the differential equation 
defining parallel vector fields along a curve $\gamma\colon I\to M$. 
We say that $\mathbf{w}$ is the {\sl parallel transport} of $\mathbf{w}_o$ 
along $\gamma$, with respect to the connection $\nabla$.
  \item 
Observe that the map $\Tan _{\gamma(t_o)}M\to\vf(\gamma)$ 
given by $\mathbf{w}_o\mapsto\mathbf{w}$ is linear and bijective.
\item
If $t\in I$, then the vector $\mathbf{w}(t)$ is called the {\sl parallel transport} of $\mathbf{w}_o$ from $\gamma(t_o)$ to $\gamma(t)$. 
Thus, for $t_o,t\in I$, we have a {\sl parallel transport operator} along $\gamma$, denoted $\tau_{t_o,t}\colon\Tan _{\gamma(t_o)}M\to \Tan _{\gamma(t)}M$, which is a linear isomorphism. 
  \item 
Let $(\mathbf{e_i})$ be a basis of $\Tan _{\gamma(t_o)}M$ 
and $\mathbf{w}_i$ the corresponding parallel transported elements 
of $\vf(\gamma)$, then:
 \begin{enumerate}
  \item 
$(\mathbf{w}_i(t))$ is a basis of $\Tan _{\gamma(t)}M$ for every $t\in I$.
  \item 
$(\mathbf{w}_i)$ is a basis of $\vf(\gamma)$ as $\Cinfty(I)$-module.
  \end{enumerate}
  \item 
For $t_o,t\in I$, consider $\tau_{t_o,t}\colon\Tan _{\gamma(t_o)}M\to \Tan _{\gamma(t)}M$ which is the parallel transport operator along $\gamma$. 
Then it is easy to show that,
  $$
  (\nabla_{t}\mathbf{w})(t_o)=\lim_{t\to t_o}\frac{\tau_{t_o,t}^{-1}(\mathbf{w}(t))-\mathbf{w}(t_o)}{t-t_o}\ ;
  $$
that is, the parallel transport determines the covariant derivative.
\end{itemize}

\begin{definition}
A curve $\gamma\colon I\to M$ is a \textbf{geodesic} of the connection $\nabla$
if $\gamma'$, as a vector field along $\gamma$, 
is parallel along $\gamma$; that is $\nabla_t{\gamma'}=0$. 
\end{definition}

\begin{itemize}
  \item
If $\gamma=(q^i(t))$ in local coordinates,
then the curve $\gamma$ is a geodesic if, and only if,
it satisfies the following second order differential equation
$$
\ddot{q}\,^k+(\Gamma_{ij}^k\circ\gamma)\dot{q}\,^i\dot{q}\,^j=0\ .
$$  
As a consequence, given $u_p\in\Tan_pM$ there is a unique geodesic with initial condition $u_p$.	
\item 
{\sl Reparametrization of geodesics}: 
Let $\gamma\colon I_1\subset\Real\to M$ be a geodesic and 
let $\varphi\colon I_2\subset\Real\to I_1$ be a diffeomorphism 
between open intervals on $\Real$. 
Then the reparametrized curve $\gamma\circ\varphi$ is a geodesic
if, and only if, $\varphi$ is an affine map. 
\item
A vector field $X\in\vf(M)$ is said to be {\sl parallel} 
if it is parallel along any curve on $M$. 
This is equivalent to each of the following expressions:
\begin{description}
  \item[i)] $\nabla_uX=0$, for every tangent vector $u$.
  \item[ii)] $\nabla_YX=0$, for every vector field $Y$. 
  \item[iii)] $\nabla X=0$.
\end{description}
\item 
The above notions can be extended to tensor fields,
so defining {\sl tensor fields parallel along a curve} 
or {\sl parallel tensor fields}.
\end{itemize}

\subsection{Torsion and curvature of a connection}

\begin{definition}
Let $(M,\nabla)$ be a manifold endowed with a connection.
The map ${\mathit T}\colon\vf(M)\times\vf(M)\to\vf(M)$ 
defined by
 $$
 {\mathit T}(X,Y):=\nabla_XY-\nabla_YX-[X,Y]
 $$
is $\Cinfty(M)$-bilinear, and then it defines a tensor field 
${\mathit T}\in{\mathcal T}_2^1(M)$ which is called 
the \textbf{torsion tensor} of the connection. 

A connection $\nabla$ is said to be \textbf{symmetric} if its torsion is zero.
\end{definition}

 The properties of the torsion tensor are:
\begin{itemize}
\item
The torsion tensor ${\mathit T}$ is antisymmetric.
 \item 
Let $(E_i)$ be a local reference frame of vector fields, $(E^i)$ its dual reference, 
$\Gamma_{ij}^k$ the corresponding Christoffel symbols,
and $[E_i,E_j]=c_{ij}^kE_k$; then
 $$
 {\mathit T}={\mathit T}_{ij}^k\, E^i\otimes E^j\otimes E_k \quad  , \quad 
{\mathit T}_{ij}^k=\Gamma_{ij}^k-\Gamma_{ji}^k-c_{ij}^k \ .
 $$ 
 Hence, $\nabla$ is symmetric if, in any system of coordinates,
the Christoffel symbols $\Gamma_{ij}^k$ are symmetric in $(i,j)$.
 \item 
It is easy to show that:
	\begin{description}
 \item[i)] 
Two connections on $M$ are the same if, and only if,
they have the same geodesics and the same torsion tensor.
  \item[ii)] 
Given a connection on $M$, there exist another, unique, 
with the same geodesics and null torsion. 
 \end{description} 
\end{itemize}

\begin{definition}
 The map ${\mathit R}\colon\vf(M)\otimes\vf(M)\otimes\vf(M)\to\vf(M)$ 
defined by
 $$
 {\mathit R}(X,Y)Z:=\nabla_X\nabla_Y Z-\nabla_Y\nabla_X Z-\nabla_{[X,Y]}Z 
 $$
 is $\Cinfty(M)$-trilinear, then it is a tensor field 
${\mathit R}\in{\mathcal T}_3^1(M)$ which is called 
the \textbf{curvature tensor} of the connection $\nabla$. 

A connection is said to be \textbf{flat} if its curvature tensor is zero.
\end{definition}

 The properties of the curvature tensor are:
\begin{itemize}
\item
The curvature tensor ${\mathit R}$ is antisymmetric in $(X,Y)$. 
  \item 
With the same notations as above,
in a local reference frame $(E_i)$ we have that 
 $$
 {\mathit R}={\mathit R}_{ijk}^l\; E^i\otimes  E^j\otimes E^k\otimes E_l \ ,
 $$
 with
 $$
 {\mathit R}_{ijk}^l=\Gamma_{jk,i}^l-\Gamma_{ik,j}^l+\Gamma_{jk}^m\Gamma_{im}^l-\Gamma_{ik}^m\Gamma_{jm}^l-c_{ij}^m\Gamma_{mk}^l \ .
 $$
  \item 
When the torsion of $\nabla$ is zero, then ${\mathit R}(X,Y)Z+{\mathit R}(Z,X)Y+{\mathit R}(Y,Z)X=0$. This is called the {\sl Bianchi identity}.
  \item 
It can be proved that, for a manifold with a connection,
the following conditions are equivalent:
  \begin{description}
  \item[{\rm (i)}] The connection is flat.
  \item[{\rm (ii)}] On a neighbourhood of any point, there exist a local chart with coordinate vector fields which are parallel.
  \item[{\rm (iii)}] The parallel transport along a curve between two points of the manifold  does not depend on the chosen curve.
  \end{description}
\end{itemize}

\subsection{Riemannian manifolds}
\label{Riemann}

\begin{definition}
Let $M$ be a differentiable manifold. 
A \textbf{Riemannian metric} on $M$ is a 2-covariant tensor field 
${\tt g}\in {\mathcal T}_2(M)$ symmetric and positive defined;
that is
\begin{description}
  \item[{\rm (i)}] The map ${\tt g}\colon\vf(M)\times \vf(M)\to\Cinfty(M)$ 
  is $\Cinfty(M)$-bilinear and symmetric 
  \item[{\rm (ii)}] For every $p\in M$, and $u_p\in\Tan_p M$, 
we have ${\tt g}(u_p,u_p)\geq 0$, and it is zero if, and only if, $u_p=0$.
\end{description}
A \textbf{Riemannian manifold} $(M,{\tt g})$ is a differentiable manifold endowed with a Riemannian metric.

Changing the positive defined condition by nondegeneracy of ${\tt g}$, 
we say that we have a \textbf{pseudo Riemannian  manifold}. 
In the case that the signature of a pseudo Riemannian metric ${\tt g}$ 
is $(1,m-1)$, or $(m-1,1)$, we say that $(M,{\tt g})$ is a 
\textbf{Lorentzian manifold} and ${\tt g}$ is a \textbf{Lorentz metric}.
\end{definition}

In the sequel we will suppose that $(M,{\tt g})$ is a Riemannian manifold unless indicated.

We have the following properties:
\begin{itemize}
\item
We introduce the following notations and terminology:
\begin{enumerate}
  \item 
  If $u_p,v_p\in\Tan_p M$, then $(u_p\mid v_p):={\tt g}(u_p,v_p)$ is the {\sl scalar product} of $u_p$ and $v_p$.
  \item 
  For  $u_p\in\Tan_p M$, the {\sl norm} of $u_p$ is $\parallel u_p\parallel:={\tt g}(u_p,u_p)^{1/2}$.
  \item 
  For $X,Y\in\vf(M)$, the scalar product of $X$ and $Y$ is $(X\mid Y):={\tt g}(X,Y):M\to \Real$.
  \item 
  For $u_p,v_p\in\Tan_p M$, if ${\tt g}(u_p,v_p)=0$ we say that they are {\sl orthogonal}.
  \item 
  Two vector fields $X,Y\in\vf(M)$ are orthogonal if ${\tt g}(X,Y)=0$.
  \item 
  If $u_p,v_p\in\Tan_p M$ and ${\tt g}$ is Riemannian, 
  then the {\sl angle} $\theta$ 
  between $u_p$ and $v_p$: $\theta$ is defined by 
  $\displaystyle\cos\theta:=
  \frac{{\tt g}(u_p,v_p)}{\parallel u_p\parallel\parallel v_p\parallel}$. 
\end{enumerate}
\item
If $(U,x^i)$ is a local chart,
the local expression of ${\tt g}$ is 
${\tt g}|_U=g_{ij}\,\d x^i\otimes\d x^j$, where 
$$
g_{ij}={\tt g}\left(\frac{\partial}{\partial x^i},\frac{\partial}{\partial x^j}\right)\,.
$$
The matrix $(g_{ij})$ is symmetric and positive defined.
\item
In general, if $(E_i)$ is a basis for vector fields on a manifold, 
then ${\tt g}=g_{ij}\,E^i\otimes E^j$, where $g_{ij}={\tt g}(E_i,E_j)$. 
Remember that $(E^j)$ is the dual basis corresponding to $(E_i)$.
\item 
{\sl The standard metric on $\Real^m$}: 
The standard Riemannian metric on $\Real^m$ is given by 
${\tt g}=\delta_{ij}\d x^i\otimes\d x^j$,
in the canonical coordinate system.
\item
{\sl The Minkowski space}: 
On the vector space 
$\Real^m$ we introduce the Minkowski metric given by 
${\tt g}=\eta_{ij}\,\d x^i\otimes\d x^j$,
 where the diagonal of $(\eta_{ij})$ is $(-1,1,\ldots,1)$ and all the other elements are zero. 
 This is the so-called {\sl Minkowski metric}
  and $(\Real^m,{\tt g})$ is a Lorentzian manifold.
\item
Let ${\mathit j}\colon N\hookrightarrow M$ 
an immersed submanifold of a Riemannian manifold $(M,{\tt g})$. 
Then $(N,{\mathit j}^*{\tt g})$ is a Riemannian manifold. 
Observe that this can not be true in the pseudo Riemannian case.
\item
Let $(M_1,{\tt g}_1)$ and $(M_2,{\tt g}_2)$ be Riemannian manifolds. 
An {\sl isometry} between $M_1$ and $M_2$ is a diffeomorphism 
$F\colon M_1\to M_2$ such that $F^*{\tt g}_2={\tt g}_1$. 
If $F$ is defined only on an open set then it is called a local isometry.
\item 
An {\sl isometry} of $(M,{\tt g})$ is a diffeomorphism on $M$, such that it leaves invariant the metric tensor ${\tt g}$.

An {\sl infinitesimal isometry} of $(M,{\tt g})$ is a vector field on $M$ 
such that the diffeomorphisms defined by its flux $F^t$ 
are isometries of $(M,{\tt g})$.
The infinitesimal isometries are called {\sl Killing vector fields}.
Clearly, $X\in\vf(M)$ is a Killing vector field if, and only if, $\Lie(X){\tt g}=0$.
\item 
In a paracompact differentiable manifold, there exists a Riemannian metric. This is a consequence of the existence of partitions of unity.
 
For the existence of a Lorentzian metric on a paracompact manifold $M$,
it is necessary that there exists a vector field without critical points. 
This is true for connected compact manifolds with null Euler--Poincar\'e characteristic.
\end{itemize}

\subsection{Some natural constructions on Riemannian manifolds}

Let $(M,{\tt g})$ be a Riemannian manifold.
\begin{itemize}
  \item 
For every point $p\in M$, the metric ${\tt g}$ defines an isomorphism
$\hat{\tt g}_p\colon\Tan_pM\to\Tan_p^*M$, such that
$\langle\hat{\tt g}_p(u_p),v_p\rangle:={\tt g}(u_p,v_p)$. 
This family of isomorphisms collects into a global one
$\hat{\tt g}\colon\Tan M\to\Tan^*M$, 
which is an isomorphism of vector bundles.

This isomorphism can be extended to the corresponding 
$\Cinfty(M)$-modules of sections; that is, we have the map
${\tt g}^\flat\colon\vf(M)\to\Omega^1(M)$ defined by
${\tt g}^\flat(X)=X^\flat:=\hat{\tt g}\circ X$,
which is an isomorphism with inverse 
${\tt g}^\sharp\colon\Omega^1(M)\to\vf(M)$ such that
$\alpha\mapsto\alpha^\sharp=\hat{\tt g}^{-1}\circ\alpha$. 
\item 
In a local chart $(U,x^i)$, the isomorphism $\hat{\tt g}$ 
and its inverse are given respectively by
$$ 
(x^i,v^i)\mapsto (x^i, v^ig_{ij}(x))\quad ,\quad
(x^i,\alpha_i)\mapsto(x^i, \alpha_i g^{ij}(x)) \ ,
$$ 
where $(g^{ij})$ is the inverse matrix of $(g_{ij})$.
\item
The above isomorphisms are usually called 
{\sl musical isomorphisms}, and they are written as 
$v^\flat$, for the linear form with coordinates $v_j=v^ig_{ij}$, 
and $\alpha^\sharp$, for the tangent vector with coordinates 
$\alpha^j= \alpha_i g^{ij}$. 

Obviously, these isomorphisms can be extended to any kind of tensor fields, and we can lift and low indexes using the metric tensor.
\item
Let $f\in\Cinfty(M)$; the {\sl gradient} of $f$ 
is the vector field ${\rm grad}\,f:=\hat{{\tt g}}^{-1}\circ\d f$. 
Then we have that $({\rm grad}\, f|X)=\langle\d f,X\rangle$.
In local coordinates, 
$\dst{\rm grad}\,f=g^{ij}\frac{\partial f}{\partial x^i}\frac{\partial}{\partial x^j}$.
\item 
In a Riemannian manifold, there exists a canonical measure $V_{\tt g}$ 
called {\sl Riemannian volume}. 
First, in the open set of a local chart $(U,x^i)$, 
the integral of a function $f\colon U\to \Real$ is given by
$$
\int_U f\,\d V_{\tt g}=\int_{\varphi(U)}\hat{f}(x^1,\ldots,x^m)\sqrt{{\mid\det\,(g_{ij})}\mid}\,\,\d x^1\ldots\d x^m\ ,
$$
where $\hat{f}(x^1,\ldots,x^m)$ is the local expression of $f$.
\item
IF $M$ is an oriented manifold, then this measure comes from 
a canonical volume form $\Omega_g$ usually called 
{\sl canonical Riemannian volume form}. 
In a local chart, this volume form is given by  
$$
\Omega_{\tt g}=\sqrt{{\mid\det\,(g_{ij})}\mid}\ \d x^1\ldots\d x^m \ .
$$
\end{itemize}

\subsection{The Levi-Civita connection}

\begin{definition}
Let $(M,{\tt g})$ be a Riemannian manifold.
A connection $\nabla$ on $M$ is a \textbf{Riemannian connection} 
if the metric tensor field ${\tt g}$ is parallel; that is, $\nabla{\tt g}=0$. 

This is equivalent to say that $\nabla_X{\tt g}=0$, for every $X\in\vf(M)$.
\end{definition}

Another interesting way to state this property is the following:
 for every $X,Y,Z\in\vf(M)$,
$$
\Lie(Z)({\tt g}X,Y)={\tt g}(\nabla_ZX,Y)+{\tt g}(X,\nabla_ZY) \ .
$$  

Using the idea of covariant derivative along a curve,
this condition is equivalent to say that, 
for every curve $\gamma\colon I\subset\Real\to M$ and 
$X,Y\in\vf(\gamma)$,
the following equation holds
  $$
  D(X\mid Y)=(\nabla_{\dot{\gamma}(t)}X\mid Y)+(X\mid\nabla_{\dot\gamma(t)}Y)  \ .
$$
Remember that, for a function $f:I\subset\Real\to\Real$, we have $Df=\d f/\d t$. 

Observe that, in particular, if $X,Y$ are parallel along 
$\gamma$, then $(X\mid Y)$ is constant, 
then we get that the parallel transport with a Riemannian connection is an isometry. 
Furthermore, if $\gamma$ is a geodesic line, then $(\gamma'\mid\gamma')$ is constant.

Finally, the {\sl Fundamental Theorem of Riemannian geometry} states that, 
if $(M,{\tt g})$ is a Riemannian manifold, 
there exists one, and only one, connection on $M$ which is Riemannian and symmetric.

\begin{definition}
The above connection, called the \textbf{Levi-Civita connection}
associated to the metric,  is defined by the {\rm Koszul formula\/}:
$$
2(\nabla_XY\mid Z)=\Lie(X)(Y\mid Z)+\Lie(Y)(Z\mid X)-\Lie(Z)(X\mid Y)
$$
$$
+([X,Y]\mid Z)+([Z,X]\mid Y)-([Y,Z]\mid X).
$$
\end{definition}

In the sequel, unless indicated, when we use a connection $\nabla$ 
in a Riemannian manifold, we will assume that it is the Levi-Civita connection.

In a local chart $(U,x^i)$ in $M$, the Christoffel symbols of the Levi-Civita connection in the corresponding local basis for the vector fields are given by:
$$
\Gamma_{ij}^k=\frac{1}{2}g^{kl}\left(\frac{\partial g_{jl}}{\partial x^i}+\frac{\partial g_{il}}{\partial x^j}-\frac{\partial g_{ij}}{\partial x^l}\right).
$$  
They are also written as $\Gamma_{ij}^k=g^{kl}[ij,l]$ 
where $[ij,l]$ are called {\sl Christoffel symbols of the first class}.

\subsection{Submanifolds of a Riemannian manifold}

Let ${\mathit j}\colon M\hookrightarrow\widetilde{M}$ 
be an embedded submanifold of a Riemannian manifold 
$(\widetilde{M},\widetilde{g})$ with the Levi-Civita connection 
$\widetilde{\nabla}$. We know that ${\tt g}=j^*\widetilde{\tt g}$ 
is a Riemannian metric on $M$. Let $\nabla$ be its Levi-Civita connection.

\begin{itemize}
  \item 
For every $p\in M$, the embedding ${\mathit j}$ allows us to identify 
$\Tan_p M$ as a subspace of $\Tan_{j(p)}\widetilde M$, 
hence we have a splitting
$$
\Tan_{j(p)}\widetilde M=\Tan_p M\oplus (\Tan_p M)^\perp \,.
$$
Then every tangent vector $u\in\Tan_{j(p)}\widetilde M$ 
can be decomposed as a sum  $u=u^\top+u^\perp$, 
the tangent part to $M$ and the orthogonal one. This splitting can be extended to vector fields on the manifold $\widetilde M$. 
  \item
Let $X,Y\in\vf(M)$; on a neighbourhood of every $p\in M$ 
we can extend the vector fields $X,Y$ to vector fields 
defined on an open set of $\widetilde M$. 
Hence, we can calculate $(\widetilde{\nabla}_XY)(p)\in\Tan_{j(p)}\widetilde M$, 
as this value is independent of the extensions. 
With this construction, we have a vector field along the embedding ${\mathit j}$; 
that is, $\widetilde{\nabla}_XY\colon M\to \Tan\widetilde M$. 
This vector field can be split into,
$$
\widetilde{\nabla}_XY=(\widetilde{\nabla}_XY)^\top+(\widetilde{\nabla}_XY)^\perp \,.
$$
   \item
It is easy to prove that, for $X,Y\in\vf(M)$, the map $(X,Y)\mapsto(\widetilde{\nabla}_XY)^\top$, 
 defines a connection on $(M,{\tt g})$. 
This connection is precisely the Levi-Civita connection. 
  \item
Now we consider the other component of the above splitting: 
The expression $\Pi(X,Y)=(\widetilde{\nabla}_XY)^\perp$ is 
$\Cinfty(M)$-bilinear and symmetric in $X,Y\in\vf(M)$. 
Symmetry holds because $[X,Y]\in\vf(M)$, for $X,Y\in\vf(M)$, hence $[X,Y]^\perp=0$. 
  \item
The above property of bilinearity implies that
the map $\Pi_p\colon\Tan_p M\times\Tan_p M\to(\Tan_p M)^\perp$, for every $p\in M$,
is well-defined and gives us a vector valued bilinear symmetric form 
taking values on $(\Tan_p M)^\perp$. 
It is called the {\sl second fundamental form} of $M$.
  
The decomposition $\widetilde{\nabla}_XY=\nabla_XY+\Pi(X,Y)$ is called {\sl Gauss formula}.
  
Furthermore, if $N\in\vf(\widetilde M)$ is orthogonal to $M$, 
then we have the so-called {\sl Weingarten equation}: 
$(\widetilde{\nabla}_XN\mid Y)=-(N\mid\Pi(X,Y))$.
  \item
({\sl \textbf{Nash Embedding Theorem\/}}): Every Riemannian manifold 
with a countable basis of open sets is isometric 
to a submanifold of some $\Real^n$.
  \end{itemize}

It is interesting to analyze the following particular cases of submanifolds of Riemannian manifolds:

\subsubsection{Hypersurfaces} 

Let $(\widetilde M,{\tt g})$ be a Riemannian manifold and 
$M\subset\widetilde M $ be a hypersurface of $\widetilde M$,
and suppose that both manifolds are oriented.
 
If $(X_1,\ldots,X_m)$ is an oriented positive basis of vector fields on $M$,
then, on the points of $M$, we can take a unique vector field $N$, 
unitary and orthogonal to $M$, such that $(X_1,\ldots,X_m, N)$ 
is an oriented positive basis for $\widetilde M$ on the points of $M$. We say that $N$ is the positive-oriented normal vector field to $M$. In this situation, the second fundamental form can be understood as 
$\Real$-valued, and we write $h(X,Y)=(\Pi(X,Y)\mid N)$. 
We have that $\Pi(X,Y)=h(X,Y)N$.

With this construction, we have a 2-covariant tensor field on $M$. 
With this quadratic form, using the metric ${\tt g}$,
at every point $p\in M$, we can define a symmetric endomorphism 
$S$ on $\Tan_pM$, called the {\sl Weingarten map}, which is defined by 
  $$
  (S(X)\mid Y)=h(X,Y):=((\widetilde{\nabla}_XY)^\perp\mid N)\ .
  $$
The eigenvalues of $S$ are called {\sl principal curvatures}
and the product of all of them is called the {\sl Gauss curvature}. 
The eigenvectors are called {\sl principal directions}.  

\subsubsection{Curves}

Let $(M,{\tt g})$ be a Riemannian manifold and $C\subset M$ a connected orientable regular curve; that is, a  submanifold of dimension one.

For every $p\in C$ there exists a neighbourhood and 
a local parametrization of $C$ with a curve $c\colon I\subset\Real\to M$ 
such that $\parallel c'(t)\parallel=1$, the {\sl arc parameter},
and with the orientation of $C$. 
If ${\mathbf t}=c'\in\vf(c)$ is the tangent vector field, for every $s\in I$,
we have that ${\mathbf t}(s)$ is a positive basis of $\Tan_{c(s)}C$. 

Suppose that $({\mathbf t},\nabla_{{\mathbf t}(s)}{\mathbf t},\ldots,\nabla_{{\mathbf t}(s)}^{m-1}{\mathbf t})$ 
are linearly independent in some $s_o\in I$. 
If we apply the Gram-Schmidt orthogonalization procedure, 
we get an orthonormal basis $({\mathbf f}_1(s)={\mathbf t},\ldots,{\mathbf f}_m(s))$ 
of $\Tan_{c(s)}C$, for $s$ on a neighbourhood of $s_o$. 
This is the so called {\sl Frenet reference}.
The covariant derivatives of these vector fields 
are linear combinations of themselves, 
and the coefficients are the curvatures of $C$. 
This linear combinations are the {\sl Frenet--Serret formulas}.

\subsection{Curvature and distance on a Riemannian manifold}

\begin{definition}
Let $(M,{\tt g})$ be a Riemannian manifold and $R$ be
the curvature tensor of the Levi-Civita connection $\nabla$ of $(M,{\tt g})$. 
The \textbf{Riemann curvature tensor} is the tensor field 
${\rm Rie}\in{\mathcal T}_4(M)$ obtained by lowering 
the contravariant index of $R$ to the last place:
$$
{\rm Rie}(X,Y,Z,W)=(R(X,Y)Z\mid W)\ .
$$
The \textbf{Ricci curvature tensor} is the tensor field 
${\rm Ric}\in{\mathcal T}_2(M)$ obtained after the contraction 
of the first covariant index with the contravariant index of $R$;
that is ${\rm Ric}=c_1^1(R)$.

\noindent The \textbf{scalar curvature} is the function 
$S\in{\mathcal C}^{\infty}(M)$ obtained by taking the trace, with respect to $\bf g$, of the Ricci tensor:
  $$
  {\rm S}={\rm tr}_{\tt g}({\rm Ric})={\rm tr}({\rm Ric}^\sharp)\ . 
  $$
\end{definition}

\begin{remark}{\rm  
These definitions can change by demanding that the contractions
are with other indexes, and then the only difference is a change in the sign.
  \begin{itemize}
  \item
A Riemannian manifold is {\sl flat} if it is locally isometric 
to the Euclidean space.
Then, a Riemannian manifold $(M,g)$ is flat if, and only if,
its curvature tensor vanishes; that is,
its Levi-Civita connection is flat.
  \item
In a local reference for vector fields, we have that
\begin{description}
\item
${\rm Rie}=R_{ijkl}\,E^i\otimes E^j\otimes E^k\otimes E^l$, where $R_{ijkl}=R_{ijk}^sg_{sl}$.
\item
${\rm Ric}=R_{ij}\,E^i\otimes E^j$ with $R_{jk}=R_{ijk}^i$.
\item
${\rm S}=R_k^k=g^{kj}R_{jk}$.
  \end{description}
\end{itemize}
}\end{remark}

Now, let $(M,{\tt g})$ be a Riemannian manifold.

\begin{itemize}
  \item
  {\sl Length of a curve\/}: Let $\gamma\colon I\to M$ be a ${\mathcal C}^1$-piecewise curve. The \emph{length} of $\gamma$ is
  $$
  \ell(\gamma)=\int_I\parallel\gamma'\parallel\ .
  $$ 
  Observe that this length is invariant by reparametrizations;
  that is, if $\varphi\colon J\to I$ is a diffeomorphism 
  between open real intervals, then $\ell(\gamma\circ\varphi)=\ell(\gamma)$.
  \item
  Supposing that the manifold $M$ is connected, any two points can be connected by a piecewise curve. Then,
  if $p,q\in M$,  the {\sl distance} between $p$ and $q$ is defined as:
  $$
  {\rm d}(p,q):=\inf\{\ell(\gamma)\mid \gamma\textrm{ is a }{\mathcal C}^1\textrm{-piecewise curve from}\,\, p\,\, \textrm{to}\,\, q\}.
  $$
  The function $ {\rm d}$ is a distance defining the topology of $M$and is called {\sl Riemannian distance}.
  \item
  Let $M$ be a connected differentiable manifold, 
  suppose it is Hausdorff but not necessarily paracompact. 
  Then the following conditions are equivalent:
  \begin{enumerate}
  \item 
  $M$ admits a Riemannian metric.
  \item 
  $M$ is metrizable as topological space.
  \item 
  The topology of $M$ has a countable basis of open sets.
  \item 
  $M$ is paracompact.
\end{enumerate}
  \item
If $(M,{\tt g})$ is a connected Riemannian manifold,
we say that it is {\sl geodesically complete} 
if the domain of definition of all the geodesic curves on $M$ is $\Real$. 
  \item
({\sl \textbf{Hopf--Rinow Theorem\/}}): 
Let $(M,{\tt g})$ be a connected Riemannian manifold and ${\rm d}$ 
its Riemannian distance. Then, the following conditions are equivalent:
  \begin{enumerate} 
   \item 
For a subset of $M$, if it is bounded by ${\rm d}$ and closed,
then it is compact.
   \item 
As a metric space, $M$ is complete.
   \item 
$(M,{\tt g})$ is geodesically complete.
\end{enumerate}
\end{itemize}


\section{Newtonian dynamical systems}

Next, in this section,
we use the above geometric structures to state the description of dynamical systems 
on a (semi)Riemannian manifolds
(see, for instance, \cite{AM-78,Ar-89,CC-2005,CM-2023,CDW-87,GN-2014,Ol-02} as general references).

\subsection{Newton dynamical equations. Kinetic energy}

From a geometric approach, mechanical systems in Newtonian physics have a common background which, as in the above cases,
 is collected in the following postulates:

\begin{pos}
{\rm (First Postulate of  Newtonian mechanics\/)}:
The \textbf{configuration space} $Q$ of the dynamical system
with $n$ degrees of freedom is a $n$-dimensional differentiable manifold which is endowed with a \textsl{Riemannian metric}
\footnote{
This formulation can be extended to describe relativistic systems
and, in this case, the metric is semi-Riemannian.}. 

The \textbf{state space}, or \textbf{phase space} of coordinates--velocities, is the tangent bundle $\Tan Q$ or, alternatively, the \textbf{phase space} of coordinates--momenta
is the cotangent bundle $\Tan^*Q$ of the manifold $Q$.
\end{pos}

\begin{pos}
{\rm (Second Postulate of  Newtonian mechanics\/)}:
The \textbf{observables} or physical magnitudes of the dynamical system
are functions of $\Cinfty (\Tan Q)$ or $\Cinfty (\Tan^*Q)$.

The result of the measure of an observable
is the value of its representing function at a point of the phase space.
\end{pos}

\begin{pos}
{\rm (Third Postulate of Newtonian mechanics\/)}:
The dynamics of the system is given by a $1$-form on $Q$,
the  \textbf{work form} or \textbf{force form}, or equivalently, a vector field on $Q$, 
the \textbf{force field} or simply the \textbf{force}. 
\end{pos}

With these ideas in mind, we define:

\begin{definition}
A \textbf{Newtonian mechanical system} is a triple $(Q,{\tt g},\omega)$, where
\ben
\item
$Q$ is a differentiable manifold ($\dim\, Q=n$).
\item
${\tt g}$ is a {\sl Riemannian metric} on $Q$. Then  $(Q,{\tt g})$ is a Riemannian manifold.
\item
$\omega$ is a differential 1-form on $Q$, called the \textbf{work form}.
\een
As ${\tt g}$ is a Riemannian metric, the work form $\omega\in\df^1(Q)$
is associated to a unique vector field ${\rm F}\in\mathfrak{X} (Q)$ such that
$\inn({\rm F}) {\tt g}=\omega$. We call  ${\rm F}$ the
\textbf{the force field} of the system. 
In this case, we denote the system as $(Q,{\tt g},{\rm F})$).
\end{definition}

For a Newtonian system $(Q,{\tt g},{\rm F})$, let $\nabla$ be the {\sl Levi-Civita connection} associated to ${\tt g}$.

\begin{pos}
{\rm (Fourth Postulate of Newtonian mechanics\/)}:
The dynamical trajectories of the Newtonian dynamical system
$(Q,{\tt g},{\rm F})$ are the curves $\gamma\colon[a,b]\subset\Real\to Q$
solution to the equation
\beq
\nabla_{\dot{\gamma}}\dot\gamma ={\rm F}\circ\gamma
\label{eqdin}
\eeq
which is called the \textbf{Newton equation} of the system.
\end{pos}

\begin{remark}{\rm 
Note that if ${\rm F}=0$, then the dynamical trajectories are the geodesic curves of the metric ${\tt g}$. 
This corresponds to the first Newton Law ({\sl Inertia Law\/}).
}\end{remark}

If $(U,x^i)$is a local chart on $Q$, and $\{\Gamma^k_{ij}\}$
are the corresponding Christoffel symbols of the connection $\nabla$, then the dynamical equation is locally given by
$$
\ddot\gamma^k+\Gamma_{ij}^k\dot\gamma^i\dot\gamma^j={\rm F}^k\circ\gamma \ .
$$
Furthermore, if, $\omega=\omega_i\d x^i$ and 
\(\dst {\rm F}={\rm F}^i\derpar{}{x^i}\), 
in this chart, then we have that
$$
\omega_i=g_{ij}{\rm F}^j \quad , \quad
{\rm F}^i=g^{ij}\omega_j \ ,
$$
where $g^{ij}$ are the components of the inverse matrix of ${\tt g}$ in this local chart (the ``inverse metrics'').

This formalism has a dual counterpart as follows:

\begin{definition}
Let $(Q,{\tt g})$ be a Riemannian manifold. The musical diffeomorphism associated to the Riemannian metric ${\tt g}$, defined by
$$
\begin{array}{ccccc}
\theta & \colon & \Tan Q & \longrightarrow & \Tan^*Q  \\
& & (q,v) &\mapsto & (q,\inn (v)g)
\end{array}
$$
gives us the commutative diagram
$$
\begin{array}{ccc}
\Tan Q &
\begin{picture}(135,10)(0,0)
\put(63,6){\mbox{$\theta$}}
\put(0,3){\vector(1,0){135}}
\end{picture}
& \Tan^*Q
\\ &
\begin{picture}(135,45)(0,0)
\put(14,22){\mbox{$\tau_Q$}}
\put(110,22){\mbox{$\pi_Q$}}
\put(63,0){\mbox{$Q$}}
\put(0,45){\vector(3,-2){55}}
\put(135,45){\vector(-3,-2){55}}
\end{picture} &
\end{array}.
$$
Hence $\theta$ can be understood as a differential $1$-form on $Q$ {\sl along} $\tau_Q$
(denoted by $\theta\in\df^1(Q,\tau_Q)$),
called the \textbf{linear momentum $1$-form}
 associated with the metric.
\end{definition}

Using $\theta$, the Newton equation (\ref{eqdin}) can be equivalently written as:

\begin{prop}
({\bf Dual form of the dynamical equations}): 
Given a Newtonian dynamical system $(Q,{\tt g},\omega)$,
a curve $\gamma\colon I\subset\Real\to Q$
is a solution to the dynamical equation if, and only if, it satisfies that
\beq
\nabla_{\dot\gamma}(\theta\circ\dot\gamma)=\omega\circ\gamma
\label{eqdindual}
\eeq
\end{prop}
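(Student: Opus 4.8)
The plan is to obtain \eqref{eqdindual} from the Newton equation \eqref{eqdin} by ``lowering the index'' with the metric, and to observe that this operation is reversible. First I would recall that the map $\theta$ is precisely the musical bundle isomorphism $\hat{\tt g}\colon\Tan Q\to\Tan^*Q$, so that $\theta\circ\dot\gamma=(\dot\gamma)^\flat$ is the $1$-form along $\gamma$ obtained by applying $\flat$ to the velocity field $\dot\gamma\in\vf(\gamma)$; and that $\omega=\inn({\rm F}){\tt g}={\rm F}^\flat$, hence $\omega\circ\gamma=({\rm F}\circ\gamma)^\flat$ because $\flat$ acts fiberwise. Thus both sides of \eqref{eqdindual} are genuine objects along $\gamma$ (a $1$-form along $\gamma$, produced by $\nabla_t$ acting on $1$-forms along $\gamma$, equated with $\omega\circ\gamma$), so the equation is well posed.

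The key step is the commutation of covariant differentiation along $\gamma$ with the musical isomorphism: for every $X\in\vf(\gamma)$ one has $\nabla_t(X^\flat)=(\nabla_t X)^\flat$. I would prove this by pairing with an arbitrary $Y\in\vf(\gamma)$: using the product rule for the covariant derivative of a $1$-form along a curve, $D\langle X^\flat,Y\rangle=\langle\nabla_t(X^\flat),Y\rangle+\langle X^\flat,\nabla_t Y\rangle$, together with $\langle X^\flat,Y\rangle=(X\mid Y)$ and the metric-compatibility identity $D(X\mid Y)=(\nabla_t X\mid Y)+(X\mid\nabla_t Y)$ recorded in Section~\ref{Riemann}, one finds $\langle\nabla_t(X^\flat),Y\rangle=(\nabla_t X\mid Y)=\langle(\nabla_t X)^\flat,Y\rangle$ for all $Y$, which gives the claim. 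Equivalently, this is just $\nabla{\tt g}=0$ read along $\gamma$, since $X^\flat=\inn(X){\tt g}$.

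With this in hand the proof is immediate. Applying $\flat$ to \eqref{eqdin} yields $\nabla_t\big((\dot\gamma)^\flat\big)=(\nabla_{\dot\gamma}\dot\gamma)^\flat=({\rm F}\circ\gamma)^\flat=\omega\circ\gamma$, i.e.\ \eqref{eqdindual}; conversely, since $\flat$ is a fiberwise isomorphism with inverse $\sharp$ and $\nabla_t$ commutes with $\sharp$ as well, applying $\sharp$ to \eqref{eqdindual} recovers \eqref{eqdin}. I would close with a local check as a sanity verification: in a chart $(U,x^i)$, writing $\dot\gamma=\dot\gamma^i\derpar{}{x^i}$, one has $(\dot\gamma)^\flat=g_{ij}\dot\gamma^j\,\d x^i$, and a short computation with $\Gamma_{ij}^k=g^{kl}[ij,l]$ shows that the components of $\nabla_t\big((\dot\gamma)^\flat\big)-\omega\circ\gamma$ equal $g_{kl}\big(\ddot\gamma^l+\Gamma_{ij}^l\dot\gamma^i\dot\gamma^j-{\rm F}^l\circ\gamma\big)$, so the two equations have the same solutions componentwise.

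The only real subtlety---hence the step I would treat most carefully---is the commutation $\nabla_t\circ\flat=\flat\circ\nabla_t$ along the curve; everything else is formal manipulation. It is also worth noting that the pairing argument is valid at each $t$ because $\vf(\gamma)$ is a free $\Cinfty(I)$-module admitting a pointwise-generating frame along $\gamma$, so the condition ``for all $Y$'' can be verified on such a frame.
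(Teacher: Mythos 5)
Your proof is correct and follows essentially the same route as the paper: both rest on the fact that metric-parallelism ($\nabla{\tt g}=0$) makes $\nabla_t$ commute with the musical isomorphism $\theta=\hat{\tt g}$ along $\gamma$, after which Newton's equation and \eqref{eqdindual} are exchanged by $\flat$ and $\sharp$. The paper gets the commutation by a direct Leibniz computation on $\inn(\dot\gamma){\tt g}$, while you derive it by pairing with arbitrary $Y\in\vf(\gamma)$; this is only a presentational difference, and your explicit treatment of the converse and the local check are fine.
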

\begin{proof}
If $\gamma\colon [a,b]\subset\Real\to Q$ is a curve on $Q$, then
$\theta\circ\dot\gamma\in\df^1(Q,\gamma)$, and we have that $$
\nabla_{\dot\gamma}(\theta\circ\dot\gamma )=
\nabla_{\dot\gamma}(\inn(\dot\gamma){\tt g})=
\inn(\nabla_{\dot\gamma}\dot\gamma){\tt g}+\inn(\dot\gamma)\nabla_{\dot\gamma}{\tt g}=
\inn(\nabla_{\dot\gamma}\dot\gamma){\tt g}=
\theta\circ\nabla_{\dot\gamma}\dot\gamma
$$
On the other side if $\gamma$ is a dynamical trajectory, that is 
$\nabla_{\dot\gamma}\dot\gamma= {\rm F}\circ\gamma$, then:
$$
\nabla_{\dot\gamma}(\theta\circ\dot\gamma)=
\theta\circ\nabla_{\dot\gamma}\dot\gamma=
\theta\circ {\rm F}\circ\gamma=
\inn({\rm F}){\tt g}\circ\gamma=
\omega\circ\gamma \ ;
$$
which is the equation (\ref{eqdindual}).
\\ \qed \end{proof}

Note that as $\theta\circ\dot\gamma\in\df^1(Q,\gamma)$, then,
for every $X\in\vf(Q)$, we have
$$
\nabla_{\dot\gamma}((\theta\circ\dot\gamma)(X))=
(\nabla_{\dot\gamma}(\theta\circ\dot\gamma))(X)+
(\theta\circ\dot\gamma)(\nabla_{\dot\gamma}X) \ .
$$

Hence, we obtain the following:

\begin{teor}
{\rm (Linear momentum conservation)}:
If the work form (or in a equivalent way, the force field) of a Newtonian mechanical system $(Q,{\tt g},\omega)$ is zero, then the linear momentum is invariant, is a constant, along the trajectories of the system.
(In a more geometric way, for every trajectory of the system $\gamma$, the form 
$\theta\circ\dot\gamma\in\df^1(Q,\gamma)$ is parallel along the trajectory $\gamma$
\footnote{We maintain the statement as in the classical form in physics:
in $\Real^3$, with a natural Cartesian system of coordinates, we have that
$\Gamma^k_{ij}=0$, for every $ i,j,k$, hence the components of
$\theta\circ\dot\gamma$ are constants.
}).
\end{teor}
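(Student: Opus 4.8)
The statement to establish is the "linear momentum conservation" theorem: if the work form $\omega$ (equivalently the force field $\rm F$) of a Newtonian system $(Q,{\tt g},\omega)$ vanishes, then along every dynamical trajectory $\gamma$ the linear momentum $1$-form $\theta\circ\dot\gamma$ is parallel along $\gamma$, i.e.\ $\nabla_{\dot\gamma}(\theta\circ\dot\gamma)=0$. The plan is to read this off directly from the dual form of the dynamical equations, Equation \eqref{eqdindual}, which has already been proved just above. That proposition states that $\gamma$ is a dynamical trajectory if and only if $\nabla_{\dot\gamma}(\theta\circ\dot\gamma)=\omega\circ\gamma$. So the entire content of the theorem is the trivial substitution $\omega=0$.

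First I would recall the hypothesis: $\omega=0$ on $Q$ means $\omega\circ\gamma=0$ as an element of $\df^1(Q,\gamma)$, and equivalently, since $\inn({\rm F}){\tt g}=\omega$ and ${\tt g}$ is nondegenerate, ${\rm F}=0$. Then I would invoke \eqref{eqdindual}: for a dynamical trajectory $\gamma\colon I\subset\Real\to Q$ we have
$$
\nabla_{\dot\gamma}(\theta\circ\dot\gamma)=\omega\circ\gamma=0 \ ,
$$
so $\theta\circ\dot\gamma$ is parallel along $\gamma$, which is exactly the assertion. Alternatively, and just as quickly, one may argue directly from the Newton equation \eqref{eqdin}: since ${\rm F}=0$, the trajectories are geodesics, $\nabla_{\dot\gamma}\dot\gamma=0$; then using the computation in the proof of the dual form, $\nabla_{\dot\gamma}(\theta\circ\dot\gamma)=\theta\circ\nabla_{\dot\gamma}\dot\gamma=\theta\circ 0=0$, because $\theta=\inn(\cdot){\tt g}$ is $\Cinfty$-linear and $\nabla_{\dot\gamma}{\tt g}=0$ (the Levi-Civita connection is metric). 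I would present the first route as the main line and perhaps mention the second as a remark.

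For the local/classical interpretation in the footnote, I would add one line: in $\Real^3$ with Cartesian coordinates the Christoffel symbols vanish, so the covariant derivative along $\gamma$ reduces to the ordinary derivative of the components, and $\nabla_{\dot\gamma}(\theta\circ\dot\gamma)=0$ becomes $\frac{d}{dt}(g_{ij}\dot\gamma^j)=0$, i.e.\ the components of the linear momentum are constants — the familiar statement that in the absence of forces the linear momentum is conserved.

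There is essentially no obstacle here: the theorem is an immediate corollary of \eqref{eqdindual}, and the only thing to be careful about is bookkeeping — making explicit that "the work form is zero" is being used as $\omega\equiv 0$ (hence $\omega\circ\gamma=0$ for every curve $\gamma$), and recalling that $\theta\circ\dot\gamma\in\df^1(Q,\gamma)$ so that "parallel along $\gamma$" is the correct notion of covariant constancy. The proof is a two-line invocation of the preceding proposition.
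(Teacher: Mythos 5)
Your proposal is correct and is essentially identical to the paper's own proof: both substitute $\omega=0$ into the dual dynamical equation \eqref{eqdindual} and conclude $\nabla_{\dot\gamma}(\theta\circ\dot\gamma)=0$, i.e.\ that $\theta\circ\dot\gamma$ is parallel along $\gamma$. The extra remarks (the geodesic route and the Cartesian-coordinate reading of the footnote) are fine but add nothing beyond what the paper already notes.
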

\begin{proof}
If
$\omega=0$ or, equivalently, $ F=0$, then the Newton equation is
\(\dst \nabla_{\dot\gamma}(\theta\circ\dot\gamma)=0\) ;
hence $\theta\circ\dot\gamma$ is parallel along the trajectory $\gamma$.
\\ \qed \end{proof}

Finally, with the Riemannian metric, we can associate the following function:

\begin{definition}
Given a Riemannian manifold $(Q,{\tt g})$,  the associated \textbf{kinetic energy} is the function
$K\in\Cinfty(\Tan Q)$ defined by
$$
\begin{array}{ccccc}
K & \colon & \Tan Q & \longrightarrow & \Real  \\
& & (q,v) &\mapsto & \frac{1}{2}{\tt g}(v,v)
\end{array} \ .
$$
\end{definition}

Its local expression is
$$
K(q^i,v^j)=\frac{1}{2}g_{ij}(q)v^iv^j \ .
$$

\subsection{Euler--Lagrange equations}

Now, we transform the Newton equation into a new form, 
which is easier to state when we know the elements defining the system.
First, we need a technical result:

\begin{lem}
Let $(Q,{\tt g})$ be a Riemannian manifold, $K\in\Cinfty (\Tan Q)$ its kinetic energy, $\nabla$ the Levi-Civita connection of ${\tt g}$, and
$\gamma\colon I\subset\Real\to Q$ a curve.
If $(U,\varphi=(q^i))$ is a local chart of $Q$ with $\gamma(t)\in U$, and $(\tau_Q^{-1}(U),q^i,v^i)$
is the natural chart on $\Tan Q$; then
$$
\frac{d}{d t}\left(\derpar{K}{v^j}\circ\dot\gamma\right)-
\derpar{K}{q^j}\circ\dot\gamma=
{\tt g}\left(\nabla_{\dot\gamma}\dot\gamma,\derpar{}{q^j}\right) \ .
$$
\end{lem}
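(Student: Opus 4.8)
The plan is to reduce the statement to a coordinate computation in the given chart $(U,q^i)$ and the associated natural chart $(q^i,v^i)$ on $\Tan Q$, using the explicit local form $K(q^i,v^j)=\frac{1}{2}g_{ij}(q)v^iv^j$ of the kinetic energy together with the local formula for the covariant derivative along a curve recalled in Section \ref{conexion}. Writing $\gamma(t)=(q^i(t))$, so that $\dot\gamma(t)=(q^i(t),\dot q^i(t))$, everything in the statement is a function of $t$ once expressed in these coordinates, and the identity will follow by a direct comparison of the two sides.

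First I would differentiate $K$ with respect to the velocities: using the symmetry $g_{ij}=g_{ji}$ one gets $\derpar{K}{v^j}=g_{jk}v^k$, hence, evaluated along $\dot\gamma$, $\derpar{K}{v^j}\circ\dot\gamma=g_{jk}(\gamma(t))\dot q^k(t)$, whose ordinary time derivative is $\derpar{g_{jk}}{q^l}\dot q^l\dot q^k+g_{jk}\ddot q^k$. Next I would compute $\derpar{K}{q^j}\circ\dot\gamma=\frac{1}{2}\derpar{g_{ik}}{q^j}\dot q^i\dot q^k$ and subtract. The only manipulation requiring any thought is to symmetrize the term $\derpar{g_{jk}}{q^l}\dot q^l\dot q^k$ as $\frac{1}{2}\bigl(\derpar{g_{jk}}{q^l}+\derpar{g_{jl}}{q^k}\bigr)\dot q^l\dot q^k$, so that the left-hand side becomes
$$
g_{jk}\ddot q^k+\frac{1}{2}\Bigl(\derpar{g_{jl}}{q^k}+\derpar{g_{jk}}{q^l}-\derpar{g_{lk}}{q^j}\Bigr)\dot q^l\dot q^k\ ,
$$
and to recognize in the bracket the Christoffel symbol of the first kind $[lk,j]=g_{jm}\Gamma^m_{lk}$ associated with the Levi-Civita connection. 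This rewrites the left-hand side as $g_{jm}\bigl(\ddot q^m+\Gamma^m_{lk}\dot q^l\dot q^k\bigr)$.

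Finally I would note that $\ddot q^m+\Gamma^m_{lk}\dot q^l\dot q^k$ are exactly the components of $\nabla_{\dot\gamma}\dot\gamma$ in the basis $\bigl\{\derpar{}{q^m}\bigr\}$, so that ${\tt g}\bigl(\nabla_{\dot\gamma}\dot\gamma,\derpar{}{q^j}\bigr)=g_{mj}\bigl(\ddot q^m+\Gamma^m_{lk}\dot q^l\dot q^k\bigr)$, which coincides with the expression just obtained for the left-hand side. This closes the argument; I do not anticipate any genuine obstacle, the only delicate point being the careful bookkeeping of the index symmetrizations and the identification of the Christoffel symbols of the first kind.
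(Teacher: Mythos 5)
Your proposal is correct and follows essentially the same route as the paper's proof: a direct computation in the natural chart of $\Tan Q$ using $K=\frac{1}{2}g_{ij}v^iv^j$, the symmetrization of the mixed term, and the identification of the Christoffel symbols of the first kind $[lk,j]=g_{jm}\Gamma^m_{lk}$ to match both sides. The only cosmetic difference is that you absorb the Christoffel identity into the left-hand side, whereas the paper expands ${\tt g}\bigl(\nabla_{\dot\gamma}\dot\gamma,\derpar{}{q^j}\bigr)$ and substitutes $[kl,j]$ into the comparison, but the calculation is the same.
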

\begin{proof}
We compare the local expressions of both sides of the equation.
As $K=\frac{1}{2}g_{ij}v^iv^j$, we have that
$$
\derpar{K}{v^j}=g_{ij}v^i \quad ,\quad
\derpar{K}{q^j}=\frac{1}{2}\derpar{g_{ik}}{q^j}v^iv^k \ ,
$$
thus, if $\gamma=(\gamma^1,\ldots ,\gamma^n)$,
$$
\derpar{K}{v^j}\circ\dot\gamma=(g_{ij}\circ\gamma)\circ\dot\gamma^i \quad ,\quad
\derpar{K}{q^j}\dot\gamma=
\frac{1}{2}\left(\derpar{g_{ik}}{q^j}\circ\gamma\right)\dot\gamma^i\dot\gamma^k \ .
$$
Hence we obtain that
$$
\frac{d}{d t}\left(\derpar{K}{v^j}\circ\dot\gamma\right)=
\left(\derpar{g_{ij}}{q^k}\circ\gamma\right)\dot\gamma^k\dot\gamma^i+
(g_{ij}\circ\gamma)\ddot\gamma^i
$$
and
\beq
\frac{d}{d t}\left(\derpar{K}{v^j}\circ\dot\gamma\right)-
\derpar{K}{q^j}\circ\dot\gamma=
\left(\derpar{g_{ij}}{q^k}\circ\gamma\right)\dot\gamma^k\dot\gamma^i+
(g_{ij}\circ\gamma)\ddot\gamma^i-
\frac{1}{2}\left(\derpar{g_{ik}}{q^j}\circ\gamma\right)\dot\gamma^i\dot\gamma^k \ . \label{dtK}
\eeq
Furthermore,
$$
g\left(\nabla_{\dot\gamma}\dot\gamma,\derpar{}{q^j}\right)=
g\left((\ddot\gamma^{i}+\Gamma^i_{kl}\dot\gamma^k\dot\gamma^l)\derpar{}{q^i},\derpar{}{q^j}\right)=
(g_{ij}\circ\gamma)\ddot\gamma^i+
(g_{ij}\circ\gamma)\Gamma^i_{kl}\dot\gamma^k\dot\gamma^l.
$$
But we have that
$$
[kl,j]=g_{ij}\Gamma^i_{kl}=
\frac{1}{2}\left(\derpar{g_{jk}}{q^l}+\derpar{g_{jl}}{q^k}-\derpar{g_{lk}}{q^j}\right) \ ,
$$
and by substitution in equation (\ref{dtK}), we obtain the desired result.
\\ \qed \end{proof}

\begin{teor}
{\rm (Lagrange)}:
Let $(Q,{\tt g},\omega)$ a Newtonian mechanical system, and
$\gamma\colon I\subset\Real\to Q$ a curve contained in the domain $U\subset Q$ of a chart
$(U,\varphi=(q^i))$ of $Q$. Then, $\gamma$ is a solution to the dynamical equation
(\ref{eqdin}) if, and only if, it satisfies the equations 
\beq
\frac{d}{d t}\left(\derpar{K}{v^j}\circ\dot\gamma\right)-
\derpar{K}{q^j}\circ\dot\gamma=
(\omega\circ\gamma)\left(\derpar{}{q^j}\right)=\omega_j \circ\gamma \ ,
\label{eel}
\eeq
which are called \textbf{Euler--Lagrange equations of the second kind}
of the system.
\end{teor}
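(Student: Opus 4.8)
The plan is to combine the technical lemma just proved with the dual form of the Newton equation \eqref{eqdindual}. The essential observation is that the lemma provides a coordinate-free identity: for any curve $\gamma$ and any index $j$,
$$
\frac{d}{d t}\left(\derpar{K}{v^j}\circ\dot\gamma\right)-\derpar{K}{q^j}\circ\dot\gamma={\tt g}\left(\nabla_{\dot\gamma}\dot\gamma,\derpar{}{q^j}\right) \ .
$$
So the proof reduces to showing that the right-hand side equals $\omega_j\circ\gamma=(\omega\circ\gamma)(\partial/\partial q^j)$ precisely when $\gamma$ solves the Newton equation \eqref{eqdin}.

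First I would argue the forward implication. Suppose $\gamma$ is a solution to \eqref{eqdin}, i.e.\ $\nabla_{\dot\gamma}\dot\gamma={\rm F}\circ\gamma$. Substituting into the lemma's identity,
$$
\frac{d}{d t}\left(\derpar{K}{v^j}\circ\dot\gamma\right)-\derpar{K}{q^j}\circ\dot\gamma={\tt g}\left({\rm F}\circ\gamma,\derpar{}{q^j}\right)=\big(\inn({\rm F}){\tt g}\big)\left(\derpar{}{q^j}\right)\circ\gamma=(\omega\circ\gamma)\left(\derpar{}{q^j}\right)=\omega_j\circ\gamma \ ,
$$
where the middle equality uses the defining relation $\inn({\rm F}){\tt g}=\omega$ of the force field. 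This establishes \eqref{eel}.

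For the converse, suppose $\gamma$ satisfies \eqref{eel} for every $j$ in the chart. Then the lemma gives
$$
{\tt g}\left(\nabla_{\dot\gamma}\dot\gamma,\derpar{}{q^j}\right)=\omega_j\circ\gamma={\tt g}\left({\rm F}\circ\gamma,\derpar{}{q^j}\right)
$$
for all $j$, hence ${\tt g}(\nabla_{\dot\gamma}\dot\gamma-{\rm F}\circ\gamma,\partial/\partial q^j)=0$ for every basis vector field $\partial/\partial q^j$ along $\gamma$. Since ${\tt g}$ is nondegenerate and the $\partial/\partial q^j$ span the tangent space at each point of the curve, this forces $\nabla_{\dot\gamma}\dot\gamma={\rm F}\circ\gamma$, which is \eqref{eqdin}. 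Alternatively, one could route the argument through \eqref{eqdindual} using the linear momentum $1$-form $\theta$, since $\theta\circ\dot\gamma=\inn(\dot\gamma){\tt g}$ and $\partial K/\partial v^j\circ\dot\gamma=(\theta\circ\dot\gamma)(\partial/\partial q^j)$; this makes the bridge to the dual form transparent but is not strictly necessary.

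I do not anticipate a serious obstacle here: the content is entirely in the technical lemma, and the remaining work is the bookkeeping of the nondegeneracy argument and the identification $\inn({\rm F}){\tt g}=\omega$. The only point requiring a little care is that \eqref{eel} is a local statement in a chart, so one should note that the identity $\nabla_{\dot\gamma}\dot\gamma={\rm F}\circ\gamma$ obtained chart-by-chart is itself coordinate-independent, hence patches to a global statement along $\gamma$; this is immediate because both sides are intrinsically defined.
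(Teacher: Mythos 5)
Your proof is correct and follows essentially the same route as the paper: both directions reduce to the preceding lemma's identity together with $\inn({\rm F}){\tt g}=\omega$, with nondegeneracy of ${\tt g}$ giving the converse. The extra remarks about the dual form \eqref{eqdindual} and chart-independence are harmless additions, not a different argument.
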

\begin{proof}
If $\gamma$ is a solution to the Newton equation \eqref{eqdin}, then
$$
{\tt g}\left(\nabla_{\dot\gamma}\dot\gamma,\derpar{}{q^j}\right)=
{\tt g}\left({\rm F}\circ\gamma,\derpar{}{q^j}\right)=
(\omega\circ\gamma)\left(\derpar{}{q^j}\right).
$$
Hence, by the previous Lemma, the curve $\gamma$ satisfies equations (\ref{eel}).

Conversely, if $\gamma$ satisfies (\ref{eel}), then,
$$
{\tt g}\left(\nabla_{\dot\gamma}\dot\gamma,\derpar{}{q^j}\right)=
(\omega\circ\gamma)\left(\derpar{}{q^j}\right)={\tt g}\left(F\circ\gamma,\derpar{}{q^j}\right) \ ,
$$
Therefore $\nabla_{\dot\gamma}\dot\gamma={\rm F}\circ\gamma$,
 hence $\gamma$ is a solution to the Newton equation.
\\ \qed \end{proof}

\begin{remark}{\rm 
\bit
\item
To calculate the Newton equation in a local chart, we need to know the Christoffel symbols 
of the Levi-Civita connection $\nabla$ associated to ${\tt g}$;
but to write the Euler--Lagrange equations we do not need the local expression of the connection. 
Moreover, observe that the way to get Euler--Lagrange equations does not depend on the local chart we use. We need only to know the local expression of $K$ and $\omega$, and calculate the suitable derivatives.
\item
Another way to understand Euler--Lagrange equations is as follows:
let $(U,q^i)$ be a local chart of $Q$; 
the Newton equations of the system in this chart are written as
$$
\ddot\gamma^i+\Gamma^i_{jk}\dot\gamma^j\dot\gamma^k={\rm F}^i\circ\gamma  \ ,
$$
or, as it is usual,
$$
\ddot q^i+\Gamma^i_{jk}\dot q^j\dot q^k={\rm F}^i \ .
$$
This is a second order system of differential equations. 
To transform it into a first order system, we introduce new variables
$v^i=\dot q^i$, and we obtain the first order system on $\Tan Q$
$$
\dot q^i = v^i \quad , \quad
\dot v^i = {\rm F}^i-\Gamma^i_{jk}v^jv^k \ .
$$
The associated vector field is given by
$$
X=v^i\derpar{}{q^i}+({\rm F}^i-\Gamma^i_{jk}v^jv^k)\derpar{}{v^i} =v^i\derpar{}{q^i}-\Gamma^i_{jk}v^jv^k\derpar{}{v^i}+{\rm F}^v\ ,
$$
where ${\rm F}^v$ is the vertical lift of ${\rm F}$ from $Q$ to $\Tan Q$. Observe that we obtain the geodesic vector field plus the vertical lift of the force field. 

As $X\in\vf (\tau_Q^{-1}(U))$, for every point $(q,v)\in\tau_Q^{-1}(U)$
there exists a unique solution, with $(q_o,v_o)$ as initial condition. 
Consider now the functions we need to obtain 
the Euler--Lagrange equations and calculate the action of $X$ on them. 
Using the well known properties of Christoffel symbols, we have:
$$
X\left(\derpar{K}{v^k}\right) = X(g_{lk}v^l)=
\derpar{g_{lk}}{q^i}v^iv^l+g_{lk}({\rm F}^l-\Gamma_{ij}^lv^iv^j) \ ,
$$
hence
$$
X\left(\derpar{K}{v^k}\right) -\derpar{K}{q^k} =
\derpar{g_{lk}}{q^i}v^iv^l+g_{lk}{\rm F}^l-g_{lk}\Gamma_{ij}^lv^iv^j-
\frac{1}{2}\derpar{g_{ij}}{q^k}v^iv^j=g_{lk}{\rm F}^l \ .
$$
So, we have obtained:
\eit
}\end{remark}

\begin{prop}
Let $(Q,{\tt g},\omega)$ be a Newtonian mechanical system, $(U,q^i)$ a local chart on $Q$ and $(\tau_Q^{-1}(U), q^i,v^i)$ the corresponding natural chart on $\Tan Q$.
Then there exists a unique vector field $X\in\vf(\tau_Q^{-1}(U))$
which satisfies,
\ben
\item
$\Lie(X)q^k=v^k$.
\item
\(\dst \Lie(X)\left(\derpar{K}{v^k}\right) =
\derpar{K}{v^k}+g_{ik}{\rm F}^i\).
\een
Furthermore, its integral curves $\sigma\colon I\subset\Real\to \Tan Q$
 are canonical lifts  of curves
$\gamma\colon I\subset\Real\to Q$ to $\Tan Q$, 
which are solution to the Newton equations of the system.
\label{cvXY}
\end{prop}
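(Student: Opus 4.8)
The plan is to split the argument into three parts: existence of $X$, its uniqueness, and the identification of its integral curves with the dynamical trajectories.

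\textbf{Existence.} I would simply exhibit the candidate, which has in fact already been written down in the Remark preceding the statement: on $\tau_Q^{-1}(U)$ take
$$
X=v^i\derpar{}{q^i}+\left({\rm F}^i-\Gamma^i_{jk}v^jv^k\right)\derpar{}{v^i}\ .
$$
Then $\Lie(X)q^k=v^k$ is immediate, so condition (1) holds; and using the classical identity $g_{ik}\Gamma^i_{jl}=[jl,k]=\frac12\left(\partial g_{kj}/\partial q^l+\partial g_{kl}/\partial q^j-\partial g_{jl}/\partial q^k\right)$ one checks that $\Lie(X)\left(\partial K/\partial v^k\right)=\partial K/\partial q^k+g_{ik}{\rm F}^i$, which is condition (2) (exactly the computation carried out in that Remark). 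This is a routine verification in coordinates that I would not reproduce in full.

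\textbf{Uniqueness.} I would take an arbitrary $X=A^i\,\partial/\partial q^i+B^i\,\partial/\partial v^i\in\vf(\tau_Q^{-1}(U))$ satisfying (1) and (2). Condition (1) forces $A^k=v^k$. Substituting this together with $\partial K/\partial v^k=g_{lk}v^l$ into (2) yields
$$
v^i\derpar{g_{lk}}{q^i}v^l+g_{lk}B^l=\derpar{K}{q^k}+g_{ik}{\rm F}^i\ ,
$$
so $g_{lk}B^l$ is completely determined; since the matrix $(g_{lk})$ is invertible at every point (${\tt g}$ is Riemannian), the $B^l$ are uniquely determined, and a short rearrangement recovers $B^l={\rm F}^l-\Gamma^l_{jk}v^jv^k$. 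Hence $X$ is unique and coincides with the vector field above. The only care needed here is the bookkeeping of the terms coming from $A^i=v^i$ versus those from the Christoffel symbols — but this is precisely the computation of the Remark read backwards.

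\textbf{Integral curves.} Condition (1) says exactly that $J(X)=\Delta$, so by the intrinsic characterization of second-order differential equations, $X$ is a {\sc sode} (the characterization being local, it applies to $X\in\vf(\tau_Q^{-1}(U))$); hence each of its integral curves $\sigma\colon I\to\Tan Q$ is holonomic, $\sigma=\widetilde\gamma$ with $\gamma=\tau_Q\circ\sigma\colon I\to Q$. Since $\sigma$ is an integral curve of $X$, for any $f\in\Cinfty(\tau_Q^{-1}(U))$ one has $\frac{d}{dt}(f\circ\sigma)=(\Lie(X)f)\circ\sigma$; applying this with $f=\partial K/\partial v^k$ and invoking (2) together with $\omega_k=g_{ik}{\rm F}^i$ gives
$$
\frac{d}{dt}\left(\derpar{K}{v^k}\circ\dot\gamma\right)-\derpar{K}{q^k}\circ\dot\gamma=\omega_k\circ\gamma\ ,
$$
which are the Euler--Lagrange equations of the second kind; by the Lagrange Theorem these are equivalent to $\nabla_{\dot\gamma}\dot\gamma={\rm F}\circ\gamma$, i.e. $\gamma$ solves the Newton equation. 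Conversely, lifting any solution of the Newton equation to $\Tan Q$ and reading the same chain of equivalences backwards shows that $\widetilde\gamma$ is an integral curve of $X$. I expect no genuine obstacle: the whole dynamical content is already packaged in the previously established Lagrange Theorem, and the structural content reduces to the pointwise invertibility of the metric.
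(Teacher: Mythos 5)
Your proof is correct and follows essentially the same route as the paper: existence comes from the explicit candidate already computed in the preceding remark, uniqueness from the nondegeneracy of ${\tt g}$ (the paper phrases it as $\bigl(q^k,\derpar{K}{v^k}\bigr)$ being a local chart on $\Tan Q$, you phrase it as invertibility of $(g_{lk})$ determining the fiber components), and the integral-curve claim reduces, as in the paper, to the Euler--Lagrange equations of the second kind and the Lagrange theorem. Note only that you silently replaced the right-hand side of condition (2) by $\derpar{K}{q^k}+g_{ik}{\rm F}^i$; this is indeed the intended reading (the printed $\derpar{K}{v^k}$ is a slip, inconsistent with the remark's computation and with the Newton-equation conclusion), so your argument proves the statement as meant.
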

\begin{proof}
The existence and unicity of the vector field  $X$ are consequences of being
\(\dst\left( q^k,\derpar{K}{v^k}\right)\) a local chart on $\Tan Q$,
because the Riemannian metric ${\tt g}$ is non-degenerate.
The properties of  $X$ have been proved in the previous discussion.

Furthermore, the integral curves of $X$ are the lifts to $\Tan Q$ of the solutions to the system of Euler--Lagrange equations, 
and hence solutions to the Newton equations.
\\ \qed \end{proof}

\subsubsection{Velocity dependent forces}
\label{vdf}

It is usual that the mechanical forces, or the work forms, depend 
not only on the position coordinates, but also on the velocities. 
This is the case of dissipative systems or electromagnetic, Lorentz, forces. 
Geometrically, this means that 
$\omega\in\df^1(Q,\tau_Q)$ and ${\rm F}\in\vf (Q,\tau_Q)$. 
In this case the only change we need to do in the above paragraphs are the followings:
\ben
\item
The Newton equations change to
$$
\nabla_{\dot\gamma}\dot\gamma={\rm F}\circ\dot\gamma \ ,
$$
or in dual form,
$$
\nabla_{\dot\gamma}(\theta\circ\dot\gamma)=\omega\circ\dot\gamma \ .
$$
\item
The Euler--Lagrange equations are
$$
\frac{d}{d t}\left(\derpar{K}{v^j}\circ\dot\gamma\right)-
\derpar{K}{q^j}\circ\dot\gamma=(g_{ij}{\rm F}^i)\circ\dot\gamma=
(\omega_j\circ\dot\gamma) \ .
$$
\een

 If the force field of the system depends on the velocities, the work form
 $\omega$ cannot be the exterior differential of a function defined on the configuration space $Q$.

As can be seen, the only changes consist in the substitution of
$\gamma$ by $\dot\gamma$, when we compose with $\omega$ or with ${\rm F}$,
to take into account the new domain of definition.

\subsection{Conservative systems:
Mechanical Lagrangians and Euler--Lagrange equations}

In particular, we are interested in a special kind of Newtonian systems:
those which are of conservative or mechanical Lagrangian type, called {\sl simple mechanical systems}.

\begin{definition}
A Newtonian mechanical system $(Q,{\tt g},\omega)$ is \textbf{conservative} if the work form is exact; that is, there exists $V\in\Cinfty (Q)$ such that $\omega =-\d V\,$
\footnote{
The negative sign is a customary tradition in Physics
in order to identify the potential function of the field with the potential energy of the system.
}.
In this case, the function $V$ is called the \textbf{potential energy} of the system and the force vector field is  ${\rm F}=-{\rm grad}\ V$.
\end{definition}

Thus, for these systems, we define:

\begin{definition}
In a conservative Newtonian mechanical system $(Q,{\tt g},\omega)$, with $\omega =-\d V\,$,
the \textbf{total energy} or \textbf{mechanical energy} of the system 
is the function $E\in\Cinfty(\Tan Q)$ defined as
$$
\begin{array}{ccccc}
E & \colon & \Tan Q & \longrightarrow & \Real  \\
& & (q,v) &\mapsto & K(q,v)+(\tau^*_QV)(q,v)
\end{array}
$$
(To simplify notation we usually write $E=K+V$).
\end{definition}

As a direct consequence of the definition, we have:

\begin{teor}
{\rm (Mechanical energy conservation)}:
Let $(Q,{\tt g},\omega)$ be a conservative Newtonian mechanical system, then
the mechanical energy $E$
is invariant (``constant'') along the trajectories of the system.
\end{teor}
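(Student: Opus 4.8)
The plan is to differentiate the function $E\circ\widetilde\gamma$ along a dynamical trajectory $\gamma\colon I\subset\Real\to Q$ and show that the derivative vanishes identically, using only the Newton equation \eqref{eqdin}, the metric compatibility of the Levi-Civita connection, and the definition of the gradient vector field. Recall that $E\circ\widetilde\gamma=K\circ\widetilde\gamma+V\circ\gamma=\frac{1}{2}{\tt g}(\dot\gamma,\dot\gamma)+V\circ\gamma$, where $\widetilde\gamma$ denotes the canonical lift of $\gamma$ to $\Tan Q$.

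First I would handle the kinetic term. Since $\nabla$ is the Levi-Civita connection of ${\tt g}$, we have $\nabla{\tt g}=0$, which (as recalled in Section \ref{Riemann}) means that for vector fields $X,Y$ along $\gamma$ one has $\dfrac{d}{dt}({\tt g}(X,Y))=({\tt g}(\nabla_{\dot\gamma}X,Y)+{\tt g}(X,\nabla_{\dot\gamma}Y))$. Applying this with $X=Y=\dot\gamma$ gives
$$
\frac{d}{dt}(K\circ\widetilde\gamma)=\frac{d}{dt}\Big(\frac{1}{2}{\tt g}(\dot\gamma,\dot\gamma)\Big)={\tt g}(\nabla_{\dot\gamma}\dot\gamma,\dot\gamma) \ .
$$
Now I substitute the Newton equation $\nabla_{\dot\gamma}\dot\gamma={\rm F}\circ\gamma=-({\rm grad}\,V)\circ\gamma$, which holds because the system is conservative with $\omega=-\d V$ and hence ${\rm F}=-{\rm grad}\,V$. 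Using the characterization ${\tt g}({\rm grad}\,V,Z)=\langle\d V,Z\rangle$ of the gradient, this yields
$$
\frac{d}{dt}(K\circ\widetilde\gamma)=-{\tt g}(({\rm grad}\,V)\circ\gamma,\dot\gamma)=-\langle\d V,\dot\gamma\rangle \ .
$$

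Next I would treat the potential term, which is immediate: $\dfrac{d}{dt}(V\circ\gamma)=\d V_{\gamma(t)}(\dot\gamma(t))=\langle\d V,\dot\gamma\rangle$. Adding the two contributions gives $\dfrac{d}{dt}(E\circ\widetilde\gamma)=-\langle\d V,\dot\gamma\rangle+\langle\d V,\dot\gamma\rangle=0$, so $E\circ\widetilde\gamma$ is constant on $I$; that is, $E$ is invariant along the trajectories. As an alternative, one could argue entirely in a local chart using the Euler--Lagrange equations \eqref{eel} with $\omega_j=-\partial V/\partial q^j$ and the explicit form $K=\frac{1}{2}g_{ij}v^iv^j$, but the intrinsic argument above is shorter and coordinate-free.

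I do not anticipate a serious obstacle here: the only point requiring a little care is bookkeeping between the function $V\in\Cinfty(Q)$, its pullback $\tau_Q^*V\in\Cinfty(\Tan Q)$ appearing in $E$, and the fact that along $\widetilde\gamma$ the relevant quantity is simply $V\circ\gamma$; and likewise the distinction between $\d V$, the pairing $\langle\,\cdot\,,\,\cdot\,\rangle$, and ${\rm grad}\,V$, which is resolved by the defining identity of the gradient. Everything else is a direct consequence of metric compatibility of $\nabla$ and of Postulate on the dynamics (equation \eqref{eqdin}).
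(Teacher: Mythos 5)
Your proof is correct and follows essentially the same route as the paper's: differentiate $E$ along a trajectory, use metric compatibility of the Levi-Civita connection on the kinetic term, substitute the Newton equation, and cancel against the derivative of the potential. The only cosmetic difference is that you phrase the force term via ${\rm F}=-{\rm grad}\,V$ and the defining identity of the gradient, whereas the paper writes ${\tt g}({\rm F},\dot\gamma)=\omega(\dot\gamma)=-\d V(\dot\gamma)$; these are the same computation.
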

\begin{proof}
If $\gamma\colon I\subset\Real\to Q$ is a solution to the Newton equations,
\beq
\nabla_{\dot\gamma}\dot\gamma={\rm F}\circ\gamma \quad , \quad \inn({\rm F}){\tt g}=
\omega=-\d V \ ,
\label{eqdinsis}
\eeq
then we have that
\beann
\frac{d (E\circ\dot\gamma)}{d t}
&=&
\nabla_{\dot\gamma}(E\circ\dot\gamma)=
\nabla_{\dot\gamma}\left(\frac{1}{2}{\tt g}(\dot\gamma,\dot\gamma)+V\circ\gamma\right)
\\ &=&
{\tt g}(\nabla_{\dot\gamma}\dot\gamma,\dot\gamma)+\nabla_{\dot\gamma}(V\circ\gamma)=
{\tt g}(\nabla_{\dot\gamma}\dot\gamma,\dot\gamma)+\d V(\dot\gamma)
\\ &=&
{\tt g}({\rm F},\dot\gamma)+\d V(\dot\gamma)=
\omega(\dot\gamma)+\d V(\dot\gamma)=0 \ .
\eeann
\qed \end{proof}

\begin{prop}
Let $(Q,{\tt g},\omega)$ be a conservative Newtonian mechanical system  with $\omega=-\d V$ and
let $\gamma\colon I\subset\Real\to Q$ be a curve with image on the domain $U\subset Q$ of a local chart
$(U,\varphi=(q^i))$ of $Q$. 
Then the Euler--Lagrange equations \eqref{eel} are 

\beq
\frac{d}{d t}\left(\derpar{\Lag}{v^j}\circ\dot\gamma\right)-
\derpar{\Lag}{q^j}\circ\dot\gamma=0 \ ,
\label{eel1e}
\eeq
that is, the Euler--Lagrange equations for the Lagrangian function $\Lag=K-\tau_Q^*V$.
\end{prop}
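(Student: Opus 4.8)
The plan is to start from the Euler--Lagrange equations of the second kind, Equation \eqref{eel}, which have already been established for a general Newtonian system $(Q,{\tt g},\omega)$, and simply specialize to the conservative case $\omega=-\d V$. Concretely, Equation \eqref{eel} reads
$$
\frac{d}{d t}\left(\derpar{K}{v^j}\circ\dot\gamma\right)-
\derpar{K}{q^j}\circ\dot\gamma=\omega_j\circ\gamma \ ,
$$
so the whole task reduces to rewriting the right-hand side and absorbing it into the left-hand side by replacing $K$ with $\Lag=K-\tau_Q^*V$.

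First I would compute the local components of $\omega=-\d V$ on the chart $(U,q^i)$: we have $\omega_j=-\dst\derpar{V}{q^j}$, hence $\omega_j\circ\gamma=-\dst\left(\derpar{V}{q^j}\circ\gamma\right)$. Then I would observe that $\tau_Q^*V$ depends only on the position coordinates $q^i$ and not on the velocities $v^i$, so that $\dst\derpar{(\tau_Q^*V)}{v^j}=0$ and $\dst\derpar{(\tau_Q^*V)}{q^j}=\derpar{V}{q^j}\circ\tau_Q$ (with the usual abuse of notation identifying $V$ with $\tau_Q^*V$). Consequently, for the function $\Lag=K-\tau_Q^*V$ we get
$$
\derpar{\Lag}{v^j}=\derpar{K}{v^j} \quad , \quad
\derpar{\Lag}{q^j}=\derpar{K}{q^j}-\derpar{V}{q^j} \ .
$$

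With these identities in hand, I would substitute into Equation \eqref{eel}. The term $\dst\frac{d}{d t}\left(\derpar{K}{v^j}\circ\dot\gamma\right)$ becomes $\dst\frac{d}{d t}\left(\derpar{\Lag}{v^j}\circ\dot\gamma\right)$ directly, and moving $\omega_j\circ\gamma=-\dst\left(\derpar{V}{q^j}\circ\gamma\right)$ to the left-hand side we obtain
$$
\frac{d}{d t}\left(\derpar{\Lag}{v^j}\circ\dot\gamma\right)-
\derpar{K}{q^j}\circ\dot\gamma+\derpar{V}{q^j}\circ\gamma=0 \ ,
$$
which, recognizing $\dst\derpar{K}{q^j}-\derpar{V}{q^j}=\derpar{\Lag}{q^j}$ along $\dot\gamma$, is exactly Equation \eqref{eel1e}. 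Conversely, reading the chain of equalities backwards shows that a solution to \eqref{eel1e} satisfies \eqref{eel} with $\omega_j=-\dst\derpar{V}{q^j}$, hence by the Lagrange Theorem it solves the Newton equation \eqref{eqdin}.

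There is essentially no hard obstacle here: the only point requiring a little care is the bookkeeping of the pullback $\tau_Q^*V$ and the fact that it is constant along the fibers of $\tau_Q$, so its velocity-derivatives vanish; everything else is a direct substitution into an already-proved statement. One might additionally remark that this shows the conservative Newtonian system is precisely the Lagrangian dynamical system $(\Tan Q,\Lag)$ with $\Lag=K-\tau_Q^*V$, tying back to the formalism of Chapter \ref{sdl}, but that observation is a corollary rather than part of the proof.
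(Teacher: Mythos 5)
Your proof is correct and follows essentially the same route as the paper: specialize \eqref{eel} to $\omega=-\d V$, note that $\tau_Q^*V$ has vanishing velocity-derivatives, and absorb the potential term into $\Lag=K-\tau_Q^*V$. The extra remarks on the converse and the link to the Lagrangian formalism are fine but not needed.
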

\begin{proof}
In fact we have that
$$
\frac{d}{d t}\left(\derpar{K}{v^j}\circ\dot\gamma\right)-
\derpar{K}{q^j}\circ\dot\gamma=
(-\d V\circ\gamma)\left(\derpar{}{q^j}\right)=
-\derpar{(\tau_Q^*V)}{q^j}\circ\gamma \ ,
$$
and recalling that
\(\dst\derpar{(\tau_Q^*V)}{v^j}=0\), we can write
$$
\frac{d}{d t}\left(\derpar{(K-\tau_Q^*V)}{v^j}\circ\dot\gamma\right)-
\derpar{(K-\tau_Q^*V)}{q^j}\circ\dot\gamma=0 \ .
$$
\qed \end{proof}

\begin{definition}
The function $\Lag:=K-\tau_Q^*V$ is said to be a \textbf{Lagrangian function of mechanical type}
or a \textbf{mechanical Lagrangian function},
and equations \eqref{eel1e} are the \textbf{Euler--Lagrange equations of the first class} of the conservative Newtonian system.
\end{definition}

From now on, we commit an abuse of notation and we write simply $\Lag=K-V$.

\bigskip
\begin{remark}{\rm 
\bit
\item
For conservative Newtonian mechanical systems, it is immediate to prove that the vector field  $X\in\vf(\tau_Q^{-1}(U))$ of Proposition \ref{cvXY}
satisfies the condition
$$
\Lie(X)\left(\derpar{\Lag}{v^k}\right) =
\derpar{\Lag}{v^k} \ ,
$$
(instead of the condition 2 in the aforementioned proposition).
\item
Observe that  the Lagrangian functions {\sl of mechanical type}
$$
 \Lag =K-V \equiv \frac{1}{2}g_{ij}(q)v^iv^j-V(q)
$$
are regular because
$$
\Omega_\Lag \equiv g_{ij}(q)\d q^i\wedge\d v^j \ .
$$
\eit
}\end{remark}

\subsection{Coupled systems (systems in  interaction)}
\protect\label{sdni}

Next we study the case of Newtonian mechanical systems which are
made of several dynamical systems in interaction; that is,
the so-called {\sl coupled systems}.


Let $(Q_1,{\tt g}_1,\omega_1)$, $(Q_2,{\tt g}_2,\omega_2)$ be two Newtonian mechanical systems and ${\rm F}_1\in\vf (Q_1)$, ${\rm F}_2\in\vf (Q_2)$ the corresponding force fields. The dynamical equations for both systems are
\beq
\nabla^1_{\dot\gamma_1}\dot\gamma_1={\rm F}_1\circ\gamma_1
\quad ; \quad
\nabla^2_{\dot\gamma_2}\dot\gamma_2={\rm F}_2\circ\gamma_2
\label{eqsisdes}
\eeq
(with $\gamma_1\colon I\subset\Real\to Q_1$, $\gamma_2\colon I\subset\Real\to Q_2$). 
They constitute a non-coupled, separated, set of ordinary differential equations. 
Together, they are a system of  non-coupled ordinary differential equations.

Consider the system $(Q,{\tt g},\omega)$, where:
\bit
\item
$Q=Q_1\times Q_2$; with $\pi_1\colon Q\to Q_1$, $\pi_2\colon Q\to Q_2$.
\item
${\tt g}={\tt g}_1\oplus{\tt g}_2$.
\item
$\omega=\pi_1^*\omega_1+\pi^*_2\omega_2$.
\eit
We can see that $\nabla=\nabla^1\oplus\nabla^2$ 
is the Levi-Civita connection of the Riemannian metric ${\tt g}$, defined as follows:
 if $\gamma\colon I\subset\Real\to Q$
is given by $\gamma=(\gamma_1,\gamma_2)$, then
$$
\nabla_{\dot\gamma}\dot\gamma=
\nabla^1_{\dot\gamma_1}\dot\gamma_1+\nabla^2_{\dot\gamma_2}\dot\gamma_2 \ .
$$
Moreover, from the expression of $\omega$ we obtain that the associated force field ${\rm F}\in\vf (Q)$ is
$$
{\rm F}(q)=((q_1,q_2),{\rm F}_1(q_1),{\rm F}_2(q_2)) \ ,
$$
with $q\equiv (q_1,q_2)\in Q$.
Then the dynamical equation of the system is
$$
\nabla_{\dot\gamma}\dot\gamma={\rm F}\circ\gamma \	 ,
$$
which is equivalent to the non-coupled system (\ref{eqsisdes}).
From the physical point of view, this situation is a model of two
{\sl joint mechanical systems} but {\sl without interaction}.
With this is mind we have:

\begin{definition}
We say that $N$ Newtonian systems $(Q_\mu,g_\mu,{\rm F}_\mu)$, with
$\mu=1,\ldots ,N$, are \textbf{coupled} (or \textbf{in interaction}) if
${\rm F}_\mu\in\vf (Q_\mu,\pi_\mu)$, that is,
we have the following commutative diagram for the force field acting on the system
$$
\begin{array}{ccc}
& &\Tan Q_\mu \\
&
\begin{picture}(40,30)(0,0)
\put(7,20){\mbox{${\rm F}_\mu$}}
\put(0,0){\vector(4,3){40}}
\end{picture}
&
\Bigg\downarrow \tau_\mu
\\
Q=\prod_{\mu=1}^n Q_\mu
&
\begin{picture}(40,5)(0,0)
\put(20,5){\mbox{$\pi_\mu$}}
\put(0,0){\vector(1,0){40}}
\end{picture}
& Q_\mu
\end{array}
$$
\end{definition}

In this case, if $N=2$, we have a system of differential equation as in (\ref{eqsisdes}) but now it is coupled; that is, we have:
$$
\nabla^1_{\dot\gamma_1}\dot\gamma_1={\rm F}_1\circ\gamma={\rm F}_1\circ(\gamma_1,\gamma_2)
\quad ; \quad
\nabla^2_{\dot\gamma_2}\dot\gamma_2={\rm F}_2\circ\gamma={\rm F}_1\circ(\gamma_1,\gamma_2)
$$

We want to formulate this system as a single Newtonian mechanical system, so we need the following result:

\begin{lem}
Let $M_1,M_2$ be manifolds, $M=M_1\times M_2$ and  $\pi_i\colon M\to  M_i$ ($i=1,2$) the natural projections. Then
\ben
\item
$\vf (M)$ is canonically isomorphic to
$\vf (M_1,\pi_1)\times\vf (M_2,\pi_2)$ (as $\Cinfty(M)$-m\'odules).
\item
$\df^1(M)$ is canonically isomorphic to
$\df^1(M_1,\pi_1)\times\df^1(M_2,\pi_2)$ (as $\Cinfty(M)$-m\'odules).
\een
\end{lem}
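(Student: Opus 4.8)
The plan is to derive both statements from the canonical splitting of the tangent bundle of a product. Recall first that, for any smooth map $f\colon N\to P$, the $\Cinfty(N)$-module $\vf(P,f)$ of vector fields along $f$ is exactly the module of sections of the pullback bundle $f^*\Tan P\to N$, and likewise $\df^1(P,f)=\Gamma(f^*\Tan^*P)$. The geometric fact to be used is that the tangent functor commutes with finite products: the bundle map over $M=M_1\times M_2$
\[
\Tan\pi_1\oplus\Tan\pi_2\colon\ \Tan M\ \longrightarrow\ \pi_1^*\Tan M_1\oplus\pi_2^*\Tan M_2\ ,
\]
sending $v_p\in\Tan_pM$ with $p=(p_1,p_2)$ to $(\Tan_p\pi_1(v_p),\Tan_p\pi_2(v_p))$, is an isomorphism of vector bundles over $M$. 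It is linear and bijective on each fibre because in any product chart $(x^a,y^\mu)$, with $x$ on $M_1$ and $y$ on $M_2$, it maps $\partial/\partial x^a\mapsto(\partial/\partial x^a,0)$ and $\partial/\partial y^\mu\mapsto(0,\partial/\partial y^\mu)$; its construction uses only the projections $\pi_i$, hence it is canonical.

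For item 1 I would take global sections over $M$ on both sides. Since sections of a Whitney sum are pairs of sections, $\Gamma(\pi_1^*\Tan M_1\oplus\pi_2^*\Tan M_2)=\vf(M_1,\pi_1)\times\vf(M_2,\pi_2)$, and the bundle isomorphism above induces the $\Cinfty(M)$-linear map
\[
\vf(M)\ \longrightarrow\ \vf(M_1,\pi_1)\times\vf(M_2,\pi_2)\ ,\qquad X\ \longmapsto\ (\Tan\pi_1\circ X,\ \Tan\pi_2\circ X)\ .
\]
Each component $\Tan\pi_i\circ X\colon M\to\Tan M_i$ is smooth and satisfies $\tau_{M_i}\circ(\Tan\pi_i\circ X)=\pi_i\circ\tau_M\circ X=\pi_i$, so it indeed belongs to $\vf(M_i,\pi_i)$, and $\Cinfty(M)$-linearity follows from fibrewise linearity of $\Tan\pi_i$. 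The inverse assigns to a pair $(Z_1,Z_2)$ the unique vector field $X$ with $\Tan\pi_i\circ X=Z_i$, whose existence and smoothness are guaranteed by the bundle isomorphism; in a product chart this is $(Z_1=A^a\,\partial/\partial x^a,\,Z_2=B^\mu\,\partial/\partial y^\mu)\mapsto X=A^a\,\partial/\partial x^a+B^\mu\,\partial/\partial y^\mu$, with $A^a,B^\mu\in\Cinfty(M)$.

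For item 2 I would dualize fibrewise: the same argument applied to the dual bundles gives the canonical isomorphism $\Tan^*M\cong\pi_1^*\Tan^*M_1\oplus\pi_2^*\Tan^*M_2$ over $M$, where a covector $\xi_p\in\Tan^*_pM$ corresponds to the pair of its restrictions to the subspaces $\Tan_{p_1}M_1$ and $\Tan_{p_2}M_2$ of $\Tan_pM$ (in coordinates, $\d x^a\mapsto(\d x^a,0)$ and $\d y^\mu\mapsto(0,\d y^\mu)$). Passing again to global sections over $M$ yields the natural $\Cinfty(M)$-linear isomorphism $\df^1(M)\cong\df^1(M_1,\pi_1)\times\df^1(M_2,\pi_2)$, sending $\beta\in\df^1(M)$ to the pair of its ``partial'' $1$-forms along $\pi_1$ and $\pi_2$.

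There is no genuine obstacle here: the only points requiring care are the identification of $\vf(M_i,\pi_i)$ and $\df^1(M_i,\pi_i)$ with sections of the pullback bundles $\pi_i^*\Tan M_i$ and $\pi_i^*\Tan^*M_i\to M$, and the verification that $\Tan\pi_1\oplus\Tan\pi_2$ is a fibrewise isomorphism independent of the product chart used to exhibit it — both immediate from the functoriality of $\Tan$ and $\Tan^*$. If a fully elementary presentation is preferred, one can instead check directly in product charts that $X=A^a\,\partial/\partial x^a+B^\mu\,\partial/\partial y^\mu\mapsto(A^a\,\partial/\partial x^a,\,B^\mu\,\partial/\partial y^\mu)$, and the analogous assignment for $1$-forms, are well-defined $\Cinfty(M)$-linear bijections with the obvious inverses, and that these assignments glue over a cover of $M$ by product charts.
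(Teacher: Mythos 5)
Your proof is correct and follows essentially the same route as the paper's: both rest on the canonical pointwise isomorphism $\Tan_qM\to\Tan_{q_1}M_1\times\Tan_{q_2}M_2$, $v\mapsto(\Tan_q\pi_1(v),\Tan_q\pi_2(v))$, and then transfer it to vector fields and $1$-forms, with the second item obtained by dualizing. The only difference is packaging: the paper defines the maps $\phi,\psi$ pointwise and checks smoothness of the components by evaluating on functions $f\circ\pi_i$, whereas you absorb these verifications into the identification of $\vf(M_i,\pi_i)$ and $\df^1(M_i,\pi_i)$ with sections of the pullback bundles $\pi_i^*\Tan M_i$ and $\pi_i^*\Tan^*M_i$.
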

\begin{proof}
Recall that if $q\equiv (q_1,q_2)\in M$, we have the isomorphism
$$
\begin{array}{ccccc}
\alpha_q & \colon & \Tan_qM & \longrightarrow & \Tan_{q_1}M_1\times\Tan_{q_2}M_2  \\
& & v &\mapsto & (\Tan_q\pi_1(v),\Tan_q\pi_2(v)) \ .
\end{array}
$$
Hence we obtain the sequence
$$
\vf (M) \mapping{\phi} \vf (M_1,\pi_1)\times\vf (M_2,\pi_2) \mapping{\psi}\vf (M) \ ,
$$
where the maps are defined as
\beann
\phi (X)(q)&:=&\alpha_q (X(q)) \ , \\
\psi (X_1,X_2)(q)&:=&\alpha_q^{-1}(X_1(q),X_2(q)) \ .
\eeann
If $\rho_i\colon\vf (M_1,\pi_1)\times\vf (M_2,\pi_2)\to\vf (M_i,\pi_i)$ ($i=1,2$)
are the natural projections, we have that:
\ben
\item
The map $\phi$ is well-defined and $\phi (X)$ is differentiable:

It is enough to observe that $\rho_i(\phi(X))$ is differentiable.
To obtain this consider $f\in\Cinfty (M_i)$, we have that
$$
((\rho_i(\phi(X)))f)(q)=\rho_i(\alpha_p(X(q)))f=\Tan_q\pi_i(X(q))f=
X(q)(f\circ\pi_i)=(X(f)\circ\pi_i))(q) \ ,
$$
which depends differentially on $q\in M$.
\item
The map $\phi$ is injective, because the maps $\alpha_q$ are isomorphisms.
\item
The map $\phi$ is a $\Cinfty(M)$-m\'odules isomorphism as can be seen directly.
\item
$\psi\circ\phi={\rm Id}_{\vf (M)}$;
in fact, if $X\in\vf(M)$ and $p\in M$, we have that
$$
((\psi\circ\phi)(X))(q)=\psi(\phi (X))(q)=\alpha_p^{-1}(\phi(X)(q))=
\alpha_q^{-1}(\alpha_p(X(q)))=X(q) \ .
$$
\een

The statement (2) is a direct consequence of (1).
\\ \qed \end{proof}

From this Lemma we obtain:

\begin{teor}
Two Newtonian mechanical systems in interaction,
$(Q_\mu,g_\mu,{\rm F}_\mu)$, $\mu=1,2$, are equivalent to a unique Newtonian mechanical system
 $(Q,{\tt g},{\rm F})$, where
\ben
\item
$Q=Q_1\times Q_2$; with $\pi_1\colon Q\to Q_1$, $\pi_2\colon Q\to Q_2$.
\item
${\tt g}_1\oplus {\tt g}_2$.
\item
$\omega=(\omega_1,\omega_2)$ and ${\rm F}=({\rm F}_1,{\rm F}_2)$.
\een
(we have identified
$\vf (M)$ with $\vf (M_1,\pi_1)\times\vf (M_2,\pi_2)$, and
$\df^1(Q)$ with $\df^1(Q_1,\pi_1)\times\df^1(Q_2,\pi_2)$,
in agreement with the above Lemma)
\footnote{
The meaning of this ``equivalence" is the following:
if $\gamma\colon I\subset\Real\to Q$ is a solution to the dynamical equation of $(Q,{\tt g},{\rm F})$, and $\gamma=(\gamma_1,\gamma_2)$,
then $\gamma_\mu\colon\subset\Real\to Q_\mu$, ($\mu=1,2$),
is a solution to the coupled dynamical equations of both systems $(Q_\mu,g_\mu,{\rm F}_\mu)$;
and conversely.
}.
\end{teor}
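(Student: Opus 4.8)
The plan is to proceed in three steps: first check that $(Q,{\tt g})$ with $Q=Q_1\times Q_2$ and ${\tt g}={\tt g}_1\oplus{\tt g}_2$ is a Riemannian manifold and identify its Levi-Civita connection; then use the preceding Lemma to make sense of the data $(\omega,{\rm F})$ on the product; and finally project the Newton equation onto each factor. For the first step I would observe that ${\tt g}$ is a well-defined symmetric $2$-covariant tensor field on $Q$, and that under the splitting $\Tan_qQ\cong\Tan_{q_1}Q_1\times\Tan_{q_2}Q_2$ supplied by the Lemma it acts as the direct sum of two positive definite forms, so it is positive definite; hence $(Q,{\tt g})$ is Riemannian. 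Next I would show that $\nabla:=\nabla^1\oplus\nabla^2$ — the operator acting componentwise by $\nabla^1$ and $\nabla^2$ on vector fields split via $\Tan\pi_1$ and $\Tan\pi_2$ — is the Levi-Civita connection of ${\tt g}$. By uniqueness it suffices to verify that $\nabla$ is torsion-free and metric; both follow from the fact that the Lie bracket on $Q$ respects the product splitting, together with the corresponding properties of $\nabla^1,\nabla^2$. Equivalently, in product coordinates $(q_1^i,q_2^a)$ the Koszul formula yields block-diagonal Christoffel symbols (those with all indices in the first factor coinciding with the ones of ${\tt g}_1$, those with all indices in the second with the ones of ${\tt g}_2$, and the mixed ones vanishing, since $g_{ia}=0$ and each block depends only on its own coordinates). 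This is essentially the computation already recorded in the excerpt for non-interacting systems and does not involve the forces.

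With this in hand, the Lemma gives the canonical isomorphisms $\vf(Q)\cong\vf(Q_1,\pi_1)\times\vf(Q_2,\pi_2)$ and $\df^1(Q)\cong\df^1(Q_1,\pi_1)\times\df^1(Q_2,\pi_2)$, so the coupled data $\omega_\mu\in\df^1(Q_\mu,\pi_\mu)$ and ${\rm F}_\mu\in\vf(Q_\mu,\pi_\mu)$ assemble into $\omega=(\omega_1,\omega_2)\in\df^1(Q)$ and ${\rm F}=({\rm F}_1,{\rm F}_2)\in\vf(Q)$. Since contraction with the block-diagonal metric ${\tt g}$ respects the splitting, $\inn({\rm F}){\tt g}$ corresponds to $(\inn({\rm F}_1){\tt g}_1,\inn({\rm F}_2){\tt g}_2)=(\omega_1,\omega_2)=\omega$, so $(Q,{\tt g},\omega)$ is a genuine Newtonian system and ${\rm F}$ is its force field. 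Moreover this triple is the only one compatible with the required restrictions to the factors (the metric is forced to be block-diagonal by the demand that the sub-dynamics be governed by $\nabla^1$ and $\nabla^2$, and then the force is determined by ${\tt g}$), which gives the uniqueness claimed in the statement.

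Finally I would establish the equivalence of trajectories. Given a curve $\gamma=(\gamma_1,\gamma_2)\colon I\subset\Real\to Q$, its velocity satisfies $\Tan\pi_\mu\circ\dot\gamma=\dot\gamma_\mu$; by the decomposition of $\nabla$, the $\Tan\pi_\mu$-component of $\nabla_{\dot\gamma}\dot\gamma$ is $\nabla^\mu_{\dot\gamma_\mu}\dot\gamma_\mu$, while the $\Tan\pi_\mu$-component of ${\rm F}\circ\gamma$ is ${\rm F}_\mu\circ\gamma$. Since a tangent vector to $Q$ is determined by its two projections (the isomorphism $\alpha_q$ of the Lemma), the Newton equation $\nabla_{\dot\gamma}\dot\gamma={\rm F}\circ\gamma$ on $Q$ holds if and only if $\nabla^\mu_{\dot\gamma_\mu}\dot\gamma_\mu={\rm F}_\mu\circ\gamma$ for $\mu=1,2$, which are precisely the coupled dynamical equations of the two interacting systems. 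This proves the equivalence in the sense of the footnote.

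The step I expect to be the main obstacle — technical rather than deep — is identifying $\nabla^1\oplus\nabla^2$ with the Levi-Civita connection of ${\tt g}_1\oplus{\tt g}_2$ with full rigour, in particular keeping careful track of the "along a map" bookkeeping for objects that live over a single factor but depend on all of $Q$. Once that point is settled, the remainder is a routine projection argument resting on the Lemma.
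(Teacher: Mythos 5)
Your proposal is correct and follows essentially the same route as the paper: identify ${\tt g}_1\oplus{\tt g}_2$ as a Riemannian metric whose Levi-Civita connection is $\nabla^1\oplus\nabla^2$, check $\inn({\rm F}){\tt g}=(\omega_1,\omega_2)$ via the Lemma's identifications, and split the Newton equation componentwise to obtain the coupled equations and conversely. Your extra care in verifying the Levi-Civita identification (Koszul/block-diagonal Christoffel symbols) only fills in a step the paper delegates to its earlier discussion of the non-coupled case.
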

\begin{proof}
The metric ${\tt g}={\tt g}_1\oplus{\tt g}_2$ is Riemannian on $Q$. Its Levi-Civita connection $\nabla$ is $\nabla_1\oplus\nabla_2$;
then the dynamical equation associated to this Newtonian system can be split into the two components  on $Q_1$ and $Q_2$,
 according to the dynamical equations of the two different systems in interaction $Q_1$, $Q_2$; hence we obtain the result.

Observe that
$$
\inn({\rm F}){\tt g}=\inn({\rm F}_1,{\rm F}_2)({\tt g}_1\oplus {\tt g}_2)=
\inn({\rm F}_1){\tt g}_1+\inn({\rm F}_2){\tt g}_2=\omega_1+\omega_2 \ ,
$$
and the dynamical equation $\nabla_{\dot\gamma}\dot\gamma={\rm F}\circ\gamma$ 
splits in two different ones:
$$
\nabla^1_{\dot\gamma_1}\dot\gamma_1={\rm F}_1\circ\gamma={\rm F}_1\circ(\gamma_1,\gamma_2)
\quad ; \quad
\nabla^2_{\dot\gamma_2}\dot\gamma_2={\rm F}_2\circ\gamma={\rm F}_1\circ(\gamma_1,\gamma_2)
$$
\qed \end{proof}

\begin{remark}{\rm 
When studying a physical complex system, it is usual that we first consider the two component systems as independent; that is,
without interaction, $(Q_1,{\tt g}_1,{\rm F}_1)$, $(Q_2,{\tt g}_2,{\rm F}_2)$, and in a second approach we introduce the interaction
${\rm F}^{int}=({\rm F}^{int}_1,{\rm F}^{int}_2)$; that is, the force fields depending on the positions of both systems, then we obtain the system $(Q_1\times Q_2,{\tt g}_1\oplus {\tt g}_2,{\rm F}_1+{\rm F}^{int}_1,{\rm F}_2+{\rm F}^{int}_2)$.
}\end{remark}

\begin{corol}
A collection of $n$ coupled Newtonian mechanical systems $(Q_\mu,g_\mu,{\rm F}_\mu)$, $\mu=1,\ldots n$, is ``equivalent'' to a unique Newtonian mechanical system $(Q,{\tt g},{\rm F})$, where
\ben
\item
$Q=Q_1\times\ldots\times Q_N$; with $\pi_\mu\colon Q\to Q_\mu$.
\item
${\tt g}={\tt g}_1\oplus\ldots {\tt g}_N$.
\item
$\omega=(\omega_1,\ldots ,\omega_N)$ and ${\rm F}=({\rm F}_1,\ldots ,{\rm F}_N)$.
\een
(observe that we have identified
$\vf (M)$ with \(\dst\prod_{\mu=1}^N\vf (M_\mu,\pi_\mu)\), and
$\df^1(Q)$ with \(\dst\prod_{\mu=1}^N\df^1 (M_\mu,\pi_\mu)\)
in agreement with the above Lemma).
\end{corol}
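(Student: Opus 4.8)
The statement to prove is the corollary generalizing the two-system coupling result to an arbitrary finite collection of $N$ coupled Newtonian mechanical systems. The plan is to reduce the general case to the $N=2$ case already established in the preceding theorem, and then conclude by a straightforward induction on $N$, using associativity of the relevant constructions (product of manifolds, direct sum of metrics, and direct sum of Levi-Civita connections).

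First I would set up the inductive framework. The base case $N=1$ is trivial (there is nothing to couple), and the case $N=2$ is precisely the content of the preceding theorem. For the inductive step, suppose the result holds for $N-1$ coupled systems; that is, $(Q_1,{\tt g}_1,{\rm F}_1),\ldots,(Q_{N-1},{\tt g}_{N-1},{\rm F}_{N-1})$ are ``equivalent'' to a single Newtonian system $(\bar Q,\bar{\tt g},\bar{\rm F})$ with $\bar Q=Q_1\times\cdots\times Q_{N-1}$, $\bar{\tt g}={\tt g}_1\oplus\cdots\oplus{\tt g}_{N-1}$, and $\bar{\rm F}=(\bar\omega_1,\ldots,\bar\omega_{N-1})$ under the identification of $\vf(\bar Q)$ with $\prod_{\mu=1}^{N-1}\vf(Q_\mu,\pi_\mu)$ supplied by the Lemma. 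Then I would apply the $N=2$ theorem to the pair $(\bar Q,\bar{\tt g},\bar{\rm F})$ and $(Q_N,{\tt g}_N,{\rm F}_N)$, obtaining a single Newtonian system on $\bar Q\times Q_N=Q_1\times\cdots\times Q_N$ with metric $\bar{\tt g}\oplus{\tt g}_N={\tt g}_1\oplus\cdots\oplus{\tt g}_N$ and force $(\bar{\rm F},{\rm F}_N)$.

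The key technical points to verify along the way are: (a) the Levi-Civita connection of a direct sum metric is the direct sum of the Levi-Civita connections, so that $\nabla=\nabla_1\oplus\cdots\oplus\nabla_N$ and the dynamical equation $\nabla_{\dot\gamma}\dot\gamma={\rm F}\circ\gamma$ splits into the $N$ coupled component equations $\nabla^\mu_{\dot\gamma_\mu}\dot\gamma_\mu={\rm F}_\mu\circ\gamma$ — this is exactly the computation already carried out in the $N=2$ proof, applied iteratively; (b) the identification $\vf(Q)\cong\prod_{\mu=1}^N\vf(Q_\mu,\pi_\mu)$ and $\df^1(Q)\cong\prod_{\mu=1}^N\df^1(Q_\mu,\pi_\mu)$ follows by iterating the Lemma, using that $\pi_\mu$ factors through the projection $\bar Q\times Q_N\to\bar Q$ for $\mu<N$; and (c) the coupling condition ${\rm F}_\mu\in\vf(Q_\mu,\pi_\mu)$ is preserved under the grouping, i.e.\ the combined force $\bar{\rm F}$ of the first $N-1$ systems is a vector field along $\bar\pi=(\pi_1,\ldots,\pi_{N-1})\colon Q\to\bar Q$ precisely because each ${\rm F}_\mu$ is a vector field along $\pi_\mu$. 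One must check that the ``equivalence'' of trajectories is genuinely transitive: a curve $\gamma\colon I\to Q$ solves the dynamical equation of $(Q,{\tt g},{\rm F})$ iff its projections $\gamma_\mu=\pi_\mu\circ\gamma$ solve the coupled system, and this follows because the splitting of the dynamical equation is an equivalence at each inductive stage.

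I do not anticipate a serious obstacle here; the content is entirely bookkeeping. The only point requiring mild care is making sure the iterated identifications of vector fields and $1$-forms along projections are consistent — specifically, that when we group the first $N-1$ factors, the element of $\vf(Q_\mu,\pi_\mu)$ we recover at the end of the induction agrees with the ${\rm F}_\mu$ we started with, rather than some reshuffled version. This is a matter of checking that the isomorphism $\alpha_q$ of the Lemma is associative with respect to iterated products, which it is since it is built from the tangent maps of the canonical projections and these compose correctly. Once that is noted, the corollary follows immediately by induction, and I would state the conclusion in the three itemized points exactly as in the $N=2$ theorem, with the footnote about the meaning of the equivalence carried over verbatim.
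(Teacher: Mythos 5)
Your overall plan (reduce the $N$-factor statement to the two-factor theorem) is reasonable, and it differs from the paper only in packaging: the paper offers no separate proof of the corollary, treating it as immediate by running the \emph{same direct argument} as in the two-system theorem with $N$ factors — the metric ${\tt g}_1\oplus\cdots\oplus{\tt g}_N$ is Riemannian, its Levi-Civita connection is $\nabla_1\oplus\cdots\oplus\nabla_N$, and the Newton equation on the product splits into the $N$ coupled component equations via the $N$-factor version of the Lemma (this is exactly what the subsequent local-coordinate computation with the Christoffel symbols $(\Gamma_\mu)^i_{jk}$ exhibits). So no induction is really needed; the direct route is arguably cleaner because the splitting of the connection and of $\Tan_qQ$ is insensitive to how many factors there are.

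There is, however, one step in your induction as written that does not go through literally. Your inductive hypothesis is the corollary for $N-1$ \emph{coupled} systems, i.e.\ for forces ${\rm F}_\mu\in\vf\bigl(Q_\mu,\mathrm{pr}_\mu\bigr)$ with $\mathrm{pr}_\mu\colon \bar Q=Q_1\times\cdots\times Q_{N-1}\to Q_\mu$. But in the $N$-system problem the first $N-1$ forces are vector fields along $\pi_\mu\colon Q=\bar Q\times Q_N\to Q_\mu$: they depend on the position in $Q_N$ as well, so the collection $(Q_1,{\tt g}_1,{\rm F}_1),\ldots,(Q_{N-1},{\tt g}_{N-1},{\rm F}_{N-1})$ is \emph{not} an instance of the $(N-1)$-fold corollary, and you cannot invoke the inductive hypothesis to produce the grouped system $(\bar Q,\bar{\tt g},\bar{\rm F})$ as you state. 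You do correctly observe in your point (c) that $\bar{\rm F}=({\rm F}_1,\ldots,{\rm F}_{N-1})$ is a vector field along $\bar\pi\colon Q\to\bar Q$ — which is exactly what the two-factor theorem needs for the pair $(\bar Q,Q_N)$ — but then splitting the $\bar Q$-component of the dynamics into its $N-1$ pieces again requires an $(N-1)$-fold statement with forces depending on the extra parameter $Q_N$. The fix is minor: either strengthen the statement you induct on so that the forces are allowed to be vector fields along arbitrary maps into each $Q_\mu$ (the proof of the two-system theorem only uses the metric and connection splitting, so it extends verbatim to this parametrized setting), or drop the induction and argue directly for $N$ factors as the paper does. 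With that adjustment your argument is correct; as written, the appeal to the inductive hypothesis is the gap.
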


The system $(Q,{\tt g},{\rm F})$ is a Newtonian mechanical system composed by different systems with interaction.

\noindent {\bf Local coordinate expressions}:
To write the dynamical equation in a local chart, consider the curve $\gamma\equiv (\moment{\gamma}{1}{n})$
(where $\gamma_\mu$ is the component of $\gamma$ on the manifold $Q_\mu$). The dynamical equation is written as
$$
\nabla^\mu_{\dot\gamma_\mu}\dot\gamma_\mu={\rm F}_\mu\circ\gamma \ .
$$
To obtain the local expression of $\nabla$, let
$(U_\mu,q_\mu^i)$ be local charts on $Q_\mu$.
We have that $(U,q_\mu^j)$, with $U=U_1\times\ldots\times U_N$
is a local chart on $Q$, and if $\{ (\Gamma_\mu)_{jk}^i\}$
are the Christoffel symbols of $\nabla_\mu$,
we have that
$$
\nabla_{\derpar{}{q^j_\mu}}\derpar{}{q^k_\mu}=(\Gamma_\mu)_{jk}^i\derpar{}{q_\mu^i} \quad ,\quad
\nabla_{\derpar{}{q^j_\mu}}\derpar{}{q^k_\nu}=0
\qquad (\mu\not=\nu ) \ ,
$$
because $\nabla$ is the Levi-Civita connection. From these expressions we have that
$$
\nabla_{\dot\gamma}\dot\gamma=
\sum_{\mu=1}^N\left(\ddot\gamma_\mu^j\derpar{}{q^j_\mu}+
(\Gamma_\mu)_{jk}^i\dot\gamma_\mu^j\dot\gamma_\mu^k
\derpar{}{q^i_\mu}\right) =
{\rm F}\circ\gamma \ ,
$$
and, for every $\mu=1,\ldots ,N$,
$$
\ddot\gamma_\mu^i\derpar{}{q^i_\mu}+
(\Gamma_\mu)_{jk}^i\dot\gamma_\mu^j\dot\gamma_\mu^k\derpar{}{q^i_\mu}=
{\rm F}^i_\mu\circ\gamma \ .
$$
Hence,  if $\dim\, Q_\mu=N_{\mu}$, for every $\mu=1,\ldots ,N$
and every $i=1,\ldots ,N_{\mu}$, we have 
$$
\ddot\gamma_\mu^i+
(\Gamma_\mu)_{jk}^i\dot\gamma_\mu^j\dot\gamma_\mu^k =
{\rm F}_\mu^i\circ\gamma \ .
$$

To write the Euler--Lagrange equations of the system,
we need to obtain the associated natural chart $(\tau^{-1}_\mu (U),q_\mu^j,v_\mu^j)$
in $\Tan Q$ and we have
$$
\frac{d}{d t}\left(\derpar{K}{v^j_\mu}\circ\dot\gamma\right)-
\derpar{K}{q^j_\mu}\circ\dot\gamma=
( g_\mu)_{jk}{\rm F}_\mu^k \ ,
$$
where the kinetic energy function is \(\dst K=\sum_{\mu=1}^NK_\mu\) ,
because ${\tt g}={\tt g}_1\oplus\ldots\oplus {\tt g}_N$. Hence, taking into account that
$$
\frac{d}{d t}\left(\derpar{K_\nu}{v^j_\mu}\right)= 0 \quad , \quad
\derpar{K_\nu}{q^j_\mu}=0 \quad ;\quad \mbox{if $\mu\not=\nu$} \ ,
$$
we write
$$
\frac{d}{d t}\left(\derpar{K_\mu}{v^j_\mu}\circ\dot\gamma\right)-
\derpar{K_\mu}{q^j_\mu}\circ\dot\gamma=
(g_\mu)_{jk}{\rm F}_\mu^k \ .
$$

Moreover, if the system is conservative, then $\omega=-\d V$ (with $V\in\Cinfty (Q)$);
that is,
$$
\omega=-\derpar{V}{q^j_\mu}\d q^j_\mu \ .
$$
If we consider the Lagrangian function of the system $\Lag=K-V$,
with local expression
$$
\Lag =\frac{1}{2}\sum_{\mu=1}^N(g_\mu)_{jk}v^j_\mu v^k_\mu-V=
\sum_{\mu=1}^NK_\mu-V \ ,
$$
we have
$$
\frac{d}{d t}\left(\derpar{\Lag}{v^j_\mu}\circ\dot\gamma\right)-
\derpar{\Lag}{q^j_\mu}\circ\dot\gamma=0 \ .
$$

Finally, if the force fields depend on the velocities, the expression would be the same writing ${\rm F}\circ\dot\gamma$ instead of ${\rm F}\circ\gamma$ in every equation. In this case
${\rm F}_\mu\in\vf (Q_\mu,\pi_\mu\circ\tau_\mu)$, where
$\tau_\mu\colon \Tan Q_{\mu}\to Q_\mu$ are the natural projections.

\section{Systems with holonomic and nonholonomic constraints}
\protect\label{sdnlh}

Another relevant topic is the study of dynamical systems in classical mechanics
with holonomic or nonholonomic constraints.
Nonholonomic systems are those subjected to constraints depending on positions and velocities, and
they  have been thoroughly studied 
(see, for instance,
\cite{Ar-89,Bli,CLMM-2002,LMM,LM-96,Els,Ga-70,Lewis2020,MCL-2000,Neimark-Fufaev,Ve}).

\subsection{Holonomic constraints. Holonomic d'Alembert Principle}

Let $(Q,{\tt g},\omega)$ be a Newtonian mechanical system and ${\rm F}\in\vf (Q)$
its force field. Let $S$ be a submanifold of $Q$
(usually called {\sl submanifold of holonomic constraints}) and
$j_S\colon S\hookrightarrow Q$ the natural embedding. 
The problem we study consists in describing the dynamics of the system when it is forced to evolve on the submanifold $S$.
To force this behaviour, first it is compulsory to apply a new force field ${\rm R}$, called {\sl constraint force}, which obliges the system to remain on $S$.
In general, this force depends, not only on the position, but also on the velocity; 
then ${\rm R}\in\vf (Q,\tau_Q)$ and, moreover, it is a new unknown to find.

Then we have a dynamical equation for curves
$\gamma\colon I\subset\Real\to S$, which is
\beq
\nabla_{\dot\gamma}\dot\gamma ={\rm F}\circ\gamma+{\rm R}\circ\dot\gamma \ .
\label{eqdinS}
\eeq

To solve this problem, we introduce the following:

\begin{assump}
{\rm (d'Alembert Principle}):
The constraint force ${\rm R}$ is orthogonal to the submanifold $S$;
that is, for every $q\in S$ and for every $ u,v\in\Tan_qS$, we have  ${\tt g}(u,{\rm R}(q,v))=0$.
\end{assump}

The question is how to solve the problem and obtain some information about the constraint force.
Let ${\tt g}_S:=j_S^*{\tt g}$. It is obvious that $(S,{\tt g}_S)$ is a Riemannian manifold.
Let $\nabla^S$ be the Levi-Civita connection associated to ${\tt g}_S$.
We have the following natural orthogonal splitting 
$$
\Tan_qQ=\Tan_qS\oplus (\Tan_qS)^\perp \qquad (\forall q\in S\subset Q) \ ,
$$
and the projections
$$
\pi_S(q)\colon \Tan_qQ\rightarrow\Tan_qS
\quad , \quad
\pi^\perp_S(q)\colon \Tan_qQ\rightarrow (\Tan_qS)^\perp \ ,
$$
and, as they are defined at every point, we have
$$
\pi_S\colon \Tan Q\vert_S\rightarrow\Tan S
\quad , \quad
\pi^\perp_S\colon \Tan Q\vert_S\rightarrow \Tan S^\perp \ .
$$
Then d'Alembert's Principle reduces to $\pi_S\circ{\rm R}=0$.
Hence, we have that:

\begin{prop}
$\nabla^S=\pi_S\circ\nabla$.
\end{prop}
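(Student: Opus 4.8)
The plan is to show that the map $(X,Y)\mapsto\pi_S\circ\nabla_XY$ (for $X,Y\in\vf(S)$, where the vector fields are extended arbitrarily to a neighbourhood of $S$ in $Q$ and $\nabla$ is computed there) defines a connection on the Riemannian manifold $(S,{\tt g}_S)$ that is both symmetric and compatible with ${\tt g}_S$; by the uniqueness part of the Fundamental Theorem of Riemannian geometry, this connection must coincide with the Levi-Civita connection $\nabla^S$. This is exactly the content of the Gauss formula recalled in the section on submanifolds of Riemannian manifolds, so the argument runs in close parallel to that discussion, and I would reference it.

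The key steps, in order: First I would recall that for $X,Y\in\vf(S)$, extended to vector fields on an open neighbourhood of $S$ in $Q$, the value $(\nabla_XY)(q)$ at $q\in S$ is independent of the extensions (this uses the localization property of connections and the fact that $\nabla_XY$ at a point depends only on $X$ at that point and on $Y$ along a curve tangent to $X$, which can be taken inside $S$). Hence $\nabla_XY\colon S\to\Tan Q|_S$ is a well-defined vector field along $j_S$, and applying $\pi_S$ pointwise yields a well-defined element of $\vf(S)$; denote the resulting operation $\overline\nabla_XY:=\pi_S\circ\nabla_XY$. Second, I would verify that $\overline\nabla$ satisfies the three defining axioms of a connection: $\Real$-linearity in $Y$, $\Cinfty(S)$-linearity in $X$, and the Leibniz rule $\overline\nabla_X(fY)=(\Lie(X)f)Y+f\overline\nabla_XY$; each follows from the corresponding property of $\nabla$ together with the $\Cinfty(S)$-linearity of the projection $\pi_S$. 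Third, symmetry: since $[X,Y]\in\vf(S)$ for $X,Y\in\vf(S)$, we have $\pi_S([X,Y])=[X,Y]$, so $\overline\nabla_XY-\overline\nabla_YX=\pi_S(\nabla_XY-\nabla_YX)=\pi_S([X,Y])=[X,Y]$, i.e. the torsion of $\overline\nabla$ vanishes. Fourth, metric compatibility: for $X,Y,Z\in\vf(S)$ and $q\in S$, the tangent-normal splitting $\Tan_qQ=\Tan_qS\oplus(\Tan_qS)^\perp$ gives $\nabla_XY=\overline\nabla_XY+(\nabla_XY)^\perp$ with $(\nabla_XY)^\perp\perp Z$; using $\nabla{\tt g}=0$ and ${\tt g}_S={\tt g}|_{\Tan S}$,
$$
\Lie(X)({\tt g}_S(Y,Z))=\Lie(X)({\tt g}(Y,Z))={\tt g}(\nabla_XY,Z)+{\tt g}(Y,\nabla_XZ)={\tt g}_S(\overline\nabla_XY,Z)+{\tt g}_S(Y,\overline\nabla_XZ) \ ,
$$
which is precisely $\nabla_X{\tt g}_S=0$. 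By the uniqueness of the Levi-Civita connection, $\overline\nabla=\nabla^S$, that is, $\nabla^S=\pi_S\circ\nabla$.

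The main obstacle, though a mild one, is the well-definedness argument in the first step: one must be careful that $\pi_S\circ\nabla_XY$ genuinely depends only on the restrictions $X|_S,Y|_S$ and not on the chosen extensions, and that the extensions can always be chosen (locally) — this is where the localization and pointwise-dependence properties of connections listed earlier are invoked. Everything after that is a routine transcription of the Gauss-formula computation, so I would keep it brief and lean on the earlier subsection on submanifolds of Riemannian manifolds rather than reprove the splitting properties.
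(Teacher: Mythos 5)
Your proof is correct and follows essentially the same route as the paper: check that $\pi_S\circ\nabla$ is a connection on $S$ which is symmetric and satisfies $\nabla{\tt g}_S=0$, and then invoke the uniqueness of the Levi-Civita connection. The paper leaves these verifications as a direct computation, so your extra care about well-definedness of the extensions and the explicit torsion and compatibility checks simply fill in the details of the same argument.
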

\begin{proof}
It is a direct computation to prove that, on the vector fields tangent to $S$, the map $\pi_S\circ\nabla$ is a connection on $S$ and it is symmetrical. 
Furthermore, for every $X,Y,Z\in\vf (S)$, we have that
$$
(\pi_S\circ\nabla_Z){\tt g}(X,Y)={\tt g}((\pi_S\circ\nabla_Z)X,Y)+{\tt g}(X,(\pi_S\circ\nabla_Z)Y) \ ,
$$
and hence $\pi_S\circ\nabla$ is the Levi-Civita connection of ${\tt g}_S$.
\\ \qed \end{proof}

Now, if we take the dynamical equation (\ref{eqdinS}) and we split it into the tangent and orthogonal  components to $S$, we obtain
\bea
\pi_S(\nabla_{\dot\gamma}\dot\gamma) &=&
\pi_S\circ{\rm F}\circ\gamma+\pi_S\circ{\rm R}\circ\dot\gamma =
\pi_S\circ{\rm F}\circ\gamma \ ,
\label{eqdinsplit1}  \\
\pi_S^\perp(\nabla_{\dot\gamma}\dot\gamma) &=&
\pi_S^\perp\circ{\rm F}\circ\gamma+\pi_S^\perp\circ{\rm R}\circ\dot\gamma =
\pi_S^\perp\circ{\rm F}\circ\gamma +{\rm R}\circ\dot\gamma \ .
\label{eqdinsplit2}
\eea
Denoting ${\rm F}^S:=\pi_S\circ{\rm F}\in\vf (S)$, the projection of
 ${\rm F}$ on $S$, then equation \eqref{eqdinsplit1} is 
 \beq
 \nabla^S_{\dot\gamma}\dot\gamma = {\rm F}^S\circ\gamma \ ;
\label{eqdinresS}
\eeq
and this is the dynamical equation of the Newtonian mechanical system
$(S,{\tt g}_S,\omega_S)$, where $\omega_S=\inn({\rm F}^S){\tt g}_S$.

The solutions to equation (\ref{eqdinresS}) are curves
$\gamma\colon I\subset\Real\to S$ such that, if we introduce them into equation 
(\ref{eqdinsplit2}), it allows us to calculate the constraint force ${\rm R}$
for that trajectory, obtaining
$$
\nabla_{\dot\gamma}\dot\gamma -\nabla^S_{\dot\gamma}\dot\gamma=
{\rm F}\circ\gamma-{\rm F}^S\circ\gamma +{\rm R}\circ\dot\gamma \ .
$$
Then ${\rm R}\circ\dot\gamma\in\vf (Q,\dot\gamma)$.
Observe that we can calculate the constraint force only for each trajectory of the system, but not as a vector field depending on the velocities.

We also have:

\begin{prop}
$\omega_S=j_S^*\omega$.
\end{prop}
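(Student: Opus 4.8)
The plan is to prove the identity $\omega_S = j_S^*\omega$ by unwinding the definitions of the two 1-forms and using the fact, already established, that the metric on $S$ is $\mathtt{g}_S = j_S^*\mathtt{g}$ together with the orthogonal decomposition $\Tan_q Q = \Tan_q S \oplus (\Tan_q S)^\perp$ for $q \in S$. Recall that $\omega = \inn(\mathrm{F})\mathtt{g}$ on $Q$, while $\omega_S = \inn(\mathrm{F}^S)\mathtt{g}_S$ on $S$, where $\mathrm{F}^S = \pi_S\circ\mathrm{F}$ is the tangential projection of the force field. So the statement amounts to showing that contracting $\mathtt{g}$ with $\mathrm{F}$ and then pulling back to $S$ is the same as first projecting $\mathrm{F}$ to $\Tan S$ and then contracting with the restricted metric.

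First I would fix a point $q \in S$ and a tangent vector $u \in \Tan_q S$, and compute
$$
(j_S^*\omega)_q(u) = \omega_q(\Tan_q j_S(u)) = \mathtt{g}_q(\mathrm{F}(q), \Tan_q j_S(u)) \ .
$$
Since $\Tan_q j_S$ identifies $\Tan_q S$ with a subspace of $\Tan_q Q$, I can regard $u$ itself as an element of $\Tan_q Q$ lying in the subspace $\Tan_q S$. Now decompose $\mathrm{F}(q) = \pi_S(q)(\mathrm{F}(q)) + \pi_S^\perp(q)(\mathrm{F}(q)) = \mathrm{F}^S(q) + \mathrm{F}^\perp(q)$, with $\mathrm{F}^\perp(q) \in (\Tan_q S)^\perp$. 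By bilinearity of $\mathtt{g}_q$ and the definition of the orthogonal complement, $\mathtt{g}_q(\mathrm{F}^\perp(q), u) = 0$ because $u \in \Tan_q S$, so
$$
\mathtt{g}_q(\mathrm{F}(q), u) = \mathtt{g}_q(\mathrm{F}^S(q), u) \ .
$$
Finally, since $\mathrm{F}^S(q), u \in \Tan_q S$ and $\mathtt{g}_S = j_S^*\mathtt{g}$ means precisely that $(\mathtt{g}_S)_q(v,w) = \mathtt{g}_q(v,w)$ for $v, w \in \Tan_q S$, this equals $(\mathtt{g}_S)_q(\mathrm{F}^S(q), u) = (\inn(\mathrm{F}^S)\mathtt{g}_S)_q(u) = (\omega_S)_q(u)$. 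As $q$ and $u$ were arbitrary, $j_S^*\omega = \omega_S$.

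There is no real obstacle here; the computation is entirely routine. The only point requiring a modicum of care is keeping the two roles of $\pi_S$ straight — the pointwise projection $\pi_S(q)\colon \Tan_q Q \to \Tan_q S$ versus the bundle map $\pi_S\colon \Tan Q|_S \to \Tan S$ — and making sure that the force field may depend on velocities (so that strictly $\mathrm{F}^S = \pi_S \circ \mathrm{F} \in \vf(S)$ only when $\mathrm{F} \in \vf(Q)$; if $\mathrm{F} \in \vf(Q,\tau_Q)$ one would instead work along $\dot\gamma$ and the same pointwise argument applies fiberwise). A coordinate check is also immediate: with $\omega = \omega_i\,\d x^i$ and $\mathrm{F}^i = g^{ij}\omega_j$ on $Q$, restricting the index range to coordinates adapted to $S$ and using that the induced metric has matrix the corresponding block of $(g_{ij})$ reproduces the relation $\omega_S = j_S^*\omega$ directly; but the intrinsic argument above is cleaner and I would present that as the proof.
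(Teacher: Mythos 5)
Your argument is correct and is essentially the same as the paper's proof: evaluate both $1$-forms on an arbitrary $u\in\Tan_qS$, use $\mathtt{g}_S=j_S^*\mathtt{g}$ and the orthogonality of ${\rm F}-{\rm F}^S$ to $\Tan_qS$ to get $\mathtt{g}({\rm F},u)=\mathtt{g}({\rm F}^S,u)$ on both sides. You merely make explicit the orthogonal-decomposition step that the paper leaves implicit, which is fine.
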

\begin{proof}
If $q\in S$ and $v\in\Tan_qS$ we have that
\beann
\omega_S(v)&=&(\inn({\rm F}^S){\tt g}_S)(v)={\tt g}_S({\rm F}^S,v)={\tt g}({\rm F}^S,v)={\tt g}({\rm F},v) \ ,
\\
(j_S^*\omega)(v)&=&(j_S^*\inn({\rm F})g)(v)={\tt g}({\rm F},v) \ ,
\eeann
and the result follows.
\\ \qed \end{proof}

We have obtained that the dynamical system describing the motion of the systems $(Q,{\tt g},\omega)$ constrained to move on the submanifold $S$, is the Newtonian mechanical system $(S,j_S^*{\tt g},j_S^*\omega)$.

D'Alembert's Principle tells us that the constraint force ${\rm R}\in\vf (Q,\tau_Q)$, the force that obliges the system to move on the submanifold $S$, is orthogonal to $S$, but we can set this principle in a dual way:

\begin{prop}
{\rm (Dual d'Alembert Principle)}:
Let $\rho=\inn({\rm R}){\tt g}$. Then $j_S^*\rho =0$.
\end{prop}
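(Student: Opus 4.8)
The plan is to derive the statement directly from d'Alembert's Principle (the Assumption above), once the meaning of $j_S^*\rho$ is made precise. Since ${\rm R}\in\vf(Q,\tau_Q)$ is a vector field along $\tau_Q\colon\Tan Q\to Q$, the object $\rho=\inn({\rm R}){\tt g}$ is a $1$-form along $\tau_Q$; that is, for every $(q,v)\in\Tan Q$ it assigns the covector $\rho_{(q,v)}=\inn({\rm R}(q,v)){\tt g}_q\in\Tan_q^*Q$. Consequently $j_S^*\rho$ has to be understood as the restriction of $\rho$ along the inclusion $\Tan j_S\colon\Tan S\hookrightarrow\Tan Q$: for $(q,v)\in\Tan S$ (so $q\in S$ and $v\in\Tan_qS$) one sets $(j_S^*\rho)_{(q,v)}:=(\Tan_qj_S)^*\big(\rho_{(q,v)}\big)\in\Tan_q^*S$. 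So the first step is simply to fix this interpretation.

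Next I would unwind the pairing. For any $u\in\Tan_qS$,
$$
(j_S^*\rho)_{(q,v)}(u)=\rho_{(q,v)}\big(\Tan_qj_S(u)\big)=\rho_{(q,v)}(u)={\tt g}_q\big({\rm R}(q,v),u\big),
$$
where in the middle we identify $\Tan_qS$ with its image in $\Tan_qQ$ under $\Tan_qj_S$, and the last equality is the definition of $\rho=\inn({\rm R}){\tt g}$. Then I would invoke d'Alembert's Principle: since $q\in S$ and both $v$ and $u$ lie in $\Tan_qS$, the Assumption gives ${\tt g}_q(u,{\rm R}(q,v))=0$, hence by symmetry ${\tt g}_q({\rm R}(q,v),u)=0$. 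Therefore $(j_S^*\rho)_{(q,v)}(u)=0$ for every $u\in\Tan_qS$, so $(j_S^*\rho)_{(q,v)}=0$, and since $(q,v)\in\Tan S$ was arbitrary, $j_S^*\rho=0$.

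There is no genuine obstacle here; the only point requiring a little care is the bookkeeping about what $j_S^*$ means for a form living along the projection $\tau_Q$ rather than on $Q$ itself, together with the fact that the (velocity-dependent) constraint force is evaluated at points $(q,v)\in\Tan S$ while d'Alembert's orthogonality is tested against tangent vectors $u\in\Tan_qS$. Once these identifications are in place, the Dual d'Alembert Principle is just the primal one rephrased by lowering the index of ${\rm R}$ with the metric ${\tt g}$.
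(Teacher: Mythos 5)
Your proof is correct and follows essentially the same route as the paper: the paper's (one-line) proof also takes $q\in S$, $u,v\in\Tan_qS$ and observes that $(j_S^*\rho)_{(q,v)}(u)={\tt g}({\rm R}(q,v),u)=0$ by d'Alembert's Principle. Your additional care in spelling out what $j_S^*$ means for a form along $\tau_Q$ is a useful clarification but does not change the argument.
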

\begin{proof}
Let $q\in S$ and $u,v\in\Tan_qS$, then
$$(j_S^*\rho_{(q,v)})(u)={\tt g}_S({\rm R}(q,v),u))=0 \ .$$
\\ \qed \end{proof}

Consider now the linear momentum form of the constrained system,
$\theta_S\colon\Tan S\to\Tan^*S$. The equation of motion using $\theta_S$ is,
$$
\nabla^S_{\dot\gamma}(\theta_S\circ\dot\gamma)=\omega_S\circ\gamma
$$
and as $\omega_S=j_S^*\omega$, we have
$$
\nabla^S_{\dot\gamma}(\theta_S\circ\dot\gamma)=
j_S^*\nabla_{\dot\gamma}(\theta\circ\dot\gamma) \ .
$$

Next, we discuss  some particular cases:

\subsubsection{Systems with one constraint}

Let $S=\{ q\in Q \ ;\ \varphi (q)=0\}$, with $\varphi\in\Cinfty (Q)$ 
and suppose that $\d\varphi(q)\neq 0$, for every $q\in S$. This implies that $S$ is a submanifold of $Q$.
Let $X\in\vf (Q)$ such that $\inn (X){\tt g}=\d\varphi$, then
$X$ is orthogonal to $S$. Hence
$$
\pi^\perp_S({\rm F})=\frac{{\tt g}({\rm F},X)}{{\tt g}(X,X)}X=
\frac{\d\varphi ({\rm F})}{\| \d\varphi\|^2}X \ ,
$$
and we have that
$$
{\rm F}^S=\pi_S({\rm F})={\rm F}-\frac{\d\varphi ({\rm F})}{\| \d\varphi\|^2}X \ .
$$
As a consequence,
$$
\omega_S=\omega-\frac{\d\varphi ({\rm F})}{\| \d\varphi\|^2}\d\varphi \ ,
$$
and this allows us to find the trajectories of the system as solutions to the differential equation
$$
\nabla^S_{\dot\gamma}\dot\gamma=
{\rm F}\circ\gamma-\frac{\d\varphi ({\rm F})}{\| \d\varphi\|^2}X \ .
$$
We have that the constraint force along the solution $\gamma$ is given by the equation
$$
\nabla_{\dot\gamma}\dot\gamma-\nabla^S_{\dot\gamma}\dot\gamma=
\frac{\d\varphi ({\rm F})}{\| \d\varphi\|^2}\circ\gamma-{\rm R}\circ\dot\gamma \ ,
$$
where the only unknown is ${\rm R}\circ\dot\gamma$.

\subsubsection{Systems with several constraints}

Consider now
$S=\{ q\in Q \ ;\ \varphi_1(q)=0,\ldots ,\varphi_h(q)=0\}$,
with $\varphi_1\ldots ,\varphi_h\in\Cinfty (Q)$, such that
$\d\varphi_1(q),\ldots\d\varphi_h(q)$ are linearly independent at every point $q\in S$ (we assume that $S$ is not empty).
Let $\moment{Z}{1}{n-h}\in\vf (Q)$ such that:
\ben
\item
$\inn (Z_i)\d\varphi_j=0$.
\item
${\tt g}(Z_i,Z_j)=0$; $i\not= j$.
\een
To obtain these vector fields $Z_i$, it is enough to take vector fields
$\moment{X}{1}{n-h}\in\vf (Q)$ satisfying the first condition (a linear equation) and apply the well known {\sl Gram--Schmidt method}.
In this situation,  we have
$$
\pi_S({\rm F})=\sum_{i=1}^m\frac{{\tt g}({\rm F},Z_i)}{{\tt g}(Z_i,Z_i)}Z_i \ ;
$$
hence, as in the previous case, we obtain the dynamical equation and the expression of the constraint force along every trajectory.

\subsection{Euler--Lagrange equations}

In the above paragraph, we have studied the dynamics of a Newtonian mechanical system $(Q,{\tt g},\omega)$
constrained to move on the submanifold $j_S\colon S\hookrightarrow Q$. We have proved that its dynamics is given by the Newtonian mechanical system $(S,{\tt g}_S,\omega_S)$.
To write the corresponding Euler--Lagrange equation of this last system, take a local chart $(U,q^i)$ on $S$ and the corresponding natural lift $(\tau_Q^{-1}(U),q^i,v^i)$
to $\Tan S$. Then we have
\beq
\frac{d}{d t}\left(\derpar{K_S}{v^k}\circ\dot\gamma\right)-
\derpar{K_S}{q^k}\circ\dot\gamma=
({\tt g}_S)_{ik}{(\rm F^S)}^i \ ,
\label{equno}
\eeq
where $K_S\in\Cinfty(\Tan S)$ is the {\sl kinetic energy} of the system, defined by
$$
\begin{array}{ccccc}
K_S & \colon & \Tan S & \longrightarrow & \Real  \\
& & (q,v) &\mapsto & \frac{1}{2}{\tt g}_S(v,v)
\end{array}.
$$
Then we have that:

\begin{prop}
$K_S=(\Tan j_S)^*K$.
\end{prop}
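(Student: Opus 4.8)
The plan is to prove the identity by unwinding the three definitions involved: the kinetic energy of a Riemannian manifold, the induced metric ${\tt g}_S=j_S^*{\tt g}$, and the tangent lift $\Tan j_S\colon\Tan S\to\Tan Q$; and then to check that the two functions $K_S$ and $(\Tan j_S)^*K$ agree pointwise on $\Tan S$.

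First I would fix a point $(q,u)\in\Tan S$, so that $q\in S$ and $u\in\Tan_qS$. By the definition of the kinetic energy associated with the metric ${\tt g}_S$ on $S$ we have $K_S(q,u)=\frac{1}{2}\,{\tt g}_S(u,u)$, and since ${\tt g}_S=j_S^*{\tt g}$ this equals $\frac{1}{2}\,{\tt g}_{j_S(q)}\big(\Tan_qj_S(u),\Tan_qj_S(u)\big)$. On the other hand, by the definition of the pullback of a function, $\big((\Tan j_S)^*K\big)(q,u)=K\big(\Tan j_S(q,u)\big)$; using that $\Tan j_S(q,u)=\big(j_S(q),\Tan_qj_S(u)\big)$ together with the definition of $K$ on $(Q,{\tt g})$, this gives $K\big(\Tan j_S(q,u)\big)=\frac{1}{2}\,{\tt g}_{j_S(q)}\big(\Tan_qj_S(u),\Tan_qj_S(u)\big)$. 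Comparing the two expressions, $K_S(q,u)=\big((\Tan j_S)^*K\big)(q,u)$ for every $(q,u)\in\Tan S$, which is the claim. One may also verify this in local coordinates: taking a chart $(U,q^i)$ on $S$ and the associated natural chart $(q^i,v^i)$ on $\Tan S$, the components of ${\tt g}_S$ are those of $j_S^*{\tt g}$, so $K_S=\frac{1}{2}({\tt g}_S)_{ij}v^iv^j$ coincides with $K\circ\Tan j_S$ written in these coordinates.

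There is essentially no obstacle in this argument: the statement is a direct consequence of the definitions, and the only point that deserves care is bookkeeping, namely that the fibrewise linear map $\Tan_qj_S$ appearing in $\Tan j_S$ is exactly the inclusion $\Tan_qS\hookrightarrow\Tan_{j_S(q)}Q$ that is used implicitly when one defines ${\tt g}_S=j_S^*{\tt g}$. Once this identification is made explicit, the equality follows immediately, and it is the natural companion to the earlier Propositions ${\tt g}_S=j_S^*{\tt g}$, $\omega_S=j_S^*\omega$ and $\nabla^S=\pi_S\circ\nabla$.
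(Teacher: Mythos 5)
Your proof is correct and follows essentially the same route as the paper's: a direct pointwise unwinding of the definitions of $K$, $K_S$, ${\tt g}_S=j_S^*{\tt g}$ and $\Tan j_S$ (the paper phrases it purely in the local coordinates you also include at the end). No gaps.
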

\begin{proof}
If $q\in S$ and $v\in\Tan_qS$, then 
$$
K_S(q,v)=\frac{1}{2}({\tt g}_S)_{ik}(q)v^iv^k=\frac{1}{2}g_{ik}(q)v^iv^k
$$
and the result follows.
\\ \qed \end{proof}

Thus, equation (\ref{equno}) is
$$
\frac{d}{d t}\left(\derpar{(\Tan j_S)^*K}{v^k}\circ\dot\gamma\right)-
\derpar{(\Tan j_S)^*K}{q^k}\circ\dot\gamma=
(\omega_S)_k\circ\gamma =(j_S^*\omega)_k\circ\gamma \ .
$$

In the case that the dynamical system is conservative, that is $\omega=-\d V$,
then
$$
\omega_S=j_S^*\omega=-j_S^*\d V=-\d j_S^*V \ ,
$$
and the above equation takes the expression
$$
\frac{d}{d t}\left(\derpar{(\Tan j_S)^*K}{v^j_S}\circ\dot\gamma\right)-
\derpar{(\Tan j_S)^*K}{q^j}\circ\dot\gamma=
-\derpar{j_S^*V}{q^k}\circ\gamma \ ,
$$
or, equivalently,
$$
\frac{d}{d t}\left(\derpar{\Lag_S}{v^j}\circ\dot\gamma\right)-
\derpar{\Lag_S}{q^j}\circ\dot\gamma=0 \ ,
$$
where $\Lag_S:=(\Tan j_S)^*\Lag$.
Observe that the constraint force is not in these equations. This was one of the innovations developed by Lagrange.

If $(W,x^i)$ is a local chart on $Q$, and $(U,q^i)$ is another on $S$,
both adapted to the map $j\colon W\hookrightarrow U$, then we have the local expression 
$x^i=f^i(q)$, and hence
\(\dst \dot x^i=\derpar{f^i}{q^j}\dot q^j\).
This shows that  it is enough to know the Lagrangian function $\Lag$ of the unconstrained system, to introduce the expression of $x^i,\dot x^i$ using the above local expressions and, by direct derivation, to obtain the Euler--Lagrange equations of the constrained system.

\subsection{Some examples of Newtonian mechanical systems}

In the following examples the question is to identify the elements defining the different systems; that is,
the {\sl configuration manifold}, the {\sl Riemannian metric} and the \textsl{force field} or the {\sl work form}.

\subsubsection{Unconstrained particle in $\Real^3$}

Consider a particle with mass ${\rm m}$ moving on an open set
$Q\subset\Real^3$, and subjected to a force defined by a vector field
${\rm F}\in\vf (Q)$.

The {\sl geometric metric} ${\tt g}$ is the original metric on $\Real^3$.
Consider, instead, the Riemannian metric given by $\widetilde{\tt g}:={\rm m}{\tt g}$.
Observe that the corresponding Levi-Civita connections $\nabla$ and $\widetilde\nabla$ for both metrics ${\tt g}$ y $\widetilde{\tt g}$
are the same: it is enough to calculate their Christoffel symbols in any local chart, or recall the calculus of $\nabla_XY$ for the  Levi-Civita connection.
Let \(\dst\widetilde{\rm F}:=\frac{{\rm F}}{{\rm m}}\in\vf (Q)\),
and consider the Newtonian mechanical system $(Q,\widetilde{\tt g},\widetilde{\rm F})$.
The Newton equation is
$$
\nabla_{\dot\gamma}\dot\gamma =
\widetilde{\rm F}\circ\gamma =\frac{{\rm F}}{{\rm m}}\circ\gamma \ ,
$$
which, as it is usual in mechanics, can be written as
$$
{\rm m}\nabla_{\dot\gamma}\dot\gamma = {\rm F}\circ\gamma \ .
$$
Observe that the work form is given by
$$
\omega=\inn(\widetilde{\rm F})\widetilde {\tt g}=\inn ({\rm F}){\tt g} \ ,
$$
and the linear momentum form is
$$
\widetilde\theta\circ\dot\gamma=\inn(\dot\gamma)\widetilde {\tt g} \ ,
$$
whose local expression is
$$
\widetilde\theta\circ\dot\gamma={\rm m}g_{ij}\dot\gamma^i\d q^j\circ\gamma \ .
$$

\subsubsection{Particle constrained to a surface of $\Real^3$}

Consider now a particle with mass ${\rm m}$ on an open set
$Q\subset\Real^3$. We have seen above that the Newtonian system describing this situation is given by $(Q,\widetilde{\tt g},\widetilde{\rm F})$.
If $j_S\colon S\hookrightarrow Q$ is a regular surface and the particle is constrained to move on it, then the associated Newtonian system is $(S,\widetilde {\tt g}_S,\widetilde{\rm F}^S)$, where
$\widetilde{\tt g}_S=j_S^*\widetilde {\tt g}$ and $\widetilde{\rm F}^S=\pi_S\circ\widetilde{\rm F}$.
Hence, the dynamical equation is
$$
\nabla^S_{\dot\gamma}\dot\gamma=\widetilde{\rm F}^S\circ\gamma \ .
$$
If we know one solution $\gamma$, then the constraint force along $\gamma$ is given by the equation
$$
\nabla_{\dot\gamma}\dot\gamma -\nabla^S_{\dot\gamma}\dot\gamma=
\widetilde{\rm F}\circ\gamma-\widetilde{\rm F}^S\circ\gamma +\widetilde{\rm R}\circ\dot\gamma \ ,
$$
from where we can obtain $\widetilde{\rm R}\circ\dot\gamma$. 
Recall that ${\rm R}={\rm m}\widetilde{\rm R}$.

We can write the dual formulation using the work form
$$
\widetilde\omega=\inn({\rm F}){\tt g}=\inn(\widetilde{\rm F})\widetilde{\tt g} \ .
$$
with $\widetilde\omega_S=j_S^*\widetilde\omega$. 
Hence, we have that the dynamical equation is
$$
\nabla^S_{\dot\gamma}(\widetilde\theta_S\circ\dot\gamma)=\widetilde\omega_S\circ\gamma \ ,
$$
with $\widetilde\theta_S\circ\dot\gamma=\inn(\dot\gamma)\widetilde {\tt g}_S$,
and $\widetilde\theta_S=(\Tan j_S)^*\widetilde\theta$.

\subsubsection{Systems of particles in $\Real^3$}

Consider now a system made of $N$ particles, denoted  $\moment{P}{1}{N}$,
with masses $\moment{{\rm m}}{1}{N}$. We suppose that the particle $P_\mu$ moves in $Q_\mu$, open set of $\Real^3$, equipped with the usual metric ${\tt g}_\mu$ and the associated dynamical metric $\widetilde{\tt g}_\mu:={\rm m}_\mu {\tt g}_\mu$.

If the force field acting on every particle is
${\rm F}_\mu\in\vf (Q_\mu)$, that is ${\rm F}_\mu$ depends only on the position of the corresponding particle;
then we have  $N$ uncoupled mechanical systems, and their dynamical equation can be solved separately.

But, if  $\pi_\mu\colon \prod_{\nu=1}^N Q_\nu\to Q_\mu$,
and the forces on the system are a family of vector fields ${\rm F}_\mu\in\vf (Q_\mu,\pi_\mu)$;
 that is, the force ${\rm F}_\mu$ on the particle $P_\mu$ depends on the position of the other particles, then the $N$ particles are in interaction and,
as we have shown in Section  \ref{sdni}, the $N$ systems
$(Q_\mu,\widetilde{\tt g}_\mu,\widetilde{\rm F}_\mu)$ are equivalent to a unique Newtonian mechanical system $(Q,\widetilde{\tt g},\widetilde{\rm F})$ with
\ben
\item
\(\dst Q=\prod_{\mu=1}^NQ_\mu\) ,
\item
\(\dst\widetilde {\tt g}=\oplus_{\mu=1}^N\widetilde{\tt g}_\mu\) ,
\item
$\omega=(\omega_1,\ldots ,\omega_N)$ and
$\widetilde{\rm F}=(\widetilde{\rm F}_1,\ldots ,\widetilde{\rm F}_N)$,
\een
and the dynamical equation is the corresponding to this last system.

\subsubsection{Systems of particles on a submanifold}

In this case, we have a system of $N$ particles $\moment{P}{1}{N}$,
with masses $\moment{{\rm m}}{1}{N}$, moving respectively on open sets $\moment{Q}{1}{N}$ in $\Real^3$. We suppose that every one of these sets is equipped with the corresponding metric $g_\mu$.
If ${\rm F}$ is the force field acting on the system, then the system is given by $(Q,\widetilde{\tt g},\widetilde{\rm F})$,
as we have seen in the above paragraph. Observe that it is the same situation if the particles are in interaction or not.

If the dynamics is constrained to a submanifold
$j_S\colon S\hookrightarrow Q$, then there exists a constraint force ${\rm R}$,
and the dynamical equation is
$$
\nabla_{\dot\gamma}\dot\gamma=\widetilde{\rm F}\circ\gamma +\widetilde{\rm R}\circ\dot\gamma \ ;
$$
for curves $\gamma\colon I\subset\Real\to S$, where 
$\widetilde{\rm R}=(\widetilde{\rm R}_1,\ldots,\widetilde{\rm R}_N)$, ${\rm R}_{\mu}=
{\rm m}_{\mu}\widetilde{\rm R}_\mu$, $\mu=1,\ldots,N$.

To solve this equation assuming the \textsl{d'Alembert Principle};
that is, that $\widetilde{\rm R}$ is {\it $\widetilde{\tt g}$-orthogonal to $S$},
we need to decompose the equation projecting onto $S$ and on its orthogonal complement at every point, as we did in the general study, (see Section \ref{sdnlh}). Then we solve the $S$ component and use the orthogonal one to compute the constraint force along every solution.

\subsection{Nonholonomic constraints. Nonholonomic d'Alembert Principle}
\protect\label{slnh}

Let $(Q,{\tt g},\omega)$ be a Newtonian mechanical system and ${\rm F}\in\vf (Q)$ be the force field. Let $C$ be a submanifold of $\Tan Q$,
such that $\tau_Q(C)=Q$. In this situation $C$ is called the 
 {\sl submanifold of nonholonomic constraints}, and
$j_C\colon C\hookrightarrow \Tan Q$ is the natural embedding.
We want to describe the dynamics of the system when it is constrained to evolve on the submanifold $C$. The system is given by $(Q,{\tt g},F,C)$.

To solve this problem we suppose that there exists a  {\sl constraint force}
${\rm R}$ usually depending on the velocities,  that is ${\rm R}\in\vf (Q,\tau_Q)$, which forces the system to move on $C$ and it is unknown.
Then the Newton dynamical equation is given, in this case, for curves
$\gamma\colon I\subset\Real\to Q$ such that satisfy:
\ben
\item
$\dot\gamma(t)\in C$, $ t\in I$.
\item
$\nabla_{\dot\gamma}\dot\gamma ={\rm F}\circ\gamma+{\rm R}\circ\dot\gamma$.
\een
Thus, we need to state conditions allowing us to find the trajectories of the system and calculate ${\rm R}$\footnote{Arnold Sommerfeld, in \cite{Som-1952}, says that this force $\mathrm{R}$ is a ``geometric force'', versus $F$ which is an ``applied force''.}.

Let $(q,v)\in C$. The condition assumed on $C$,
 $\tau_Q (C)=Q$, tells us that the dimension of the subspace of 
${\rm V}_{(q,v)}(\Tan Q)$ which is tangent to $C$, does not depend on the point $(q,v)$. Let
$$
\Tan_{(q,v)}^VC={\rm V}_{(q,v)}(\Tan Q)\cap\Tan_{(q,v)}C=\{ w\in{\rm V}_{(q,v)}(\Tan Q)\ ;\ w\in\Tan_{(q,v)}C\}
$$
be the vertical subspace tangent to $C$. This is a vector subbundle of $\Tan Q$ and we can write $\Tan^VC={\rm V}(\Tan Q)|_{C}\cap\Tan C$ as vector bundles on the manifold $C$.
Consider the vertical lift from the point $q\in Q$ to $(q,v)$ given by
$$
\lambda_q^{(q,v)}\colon\Tan_qQ\to{\rm V}_{(q,v)}(\Tan Q)\, ,
$$
defined as:
$$
\lambda_q^{(q,v)}(u_{q}):\phi\mapsto\lim_{t\rightarrow 0}\frac{\phi(q,v+tu)-\phi(q,v)}{t}\, ,
$$
that is, the directional derivative of $\phi$ along $u_{q}$ at the point $(q,v)\in\Tan Q$.

As $\lambda_q^{(q,v)}$ is an isomorphism from $\Tan_{q}Q$ to $\Tan_{(q,v)}(\Tan Q)$, let $(\Tan_{(q,v)}^VC)_q$ the inverse image of
$\Tan_{(q,v)}^VC\subset\Tan_{(q,v)}(\Tan Q)$ by $\lambda_q^{(q,v)}$. Then $\Tan_{q}Q=(\Tan_{(q,v)}^VC)_q\oplus(\Tan_{(q,v)}^VC)^\perp_q$, being this an orthogonal decomposition with respect to ${\tt g}$.

Then we state:

\begin{assump}
{\rm (Nonholonomic d'Alembert Principle}):
The constraint force ${\rm R}\in\vf (Q,\tau_Q)$ satisfies that
$$
{\rm R}(q,v)\in(\Tan^V_{(q,v)}C)^\perp_q \ ,
$$
that is, ${\tt g}({\rm R}(q,v),w)=0$, for every $w\in (\Tan^V_{(q,v)}C)_q$.
\end{assump}

In the classical physics literature, the elements in $(\Tan^V_{(q,v)}C)_q$ are called {\sl virtual velocities}. 

\begin{remark}{\rm  If there are no constraints, that is $C=\Tan Q$, then for every $(q,v)\in C$ we have that $\Tan_{(q,v)}^VC=V_{(q,v)}(\Tan Q)$, hence $(\Tan^V_{(q,v)}C)_q=\Tan_{q}Q$ and $(\Tan^V_{(q,v)}C)^\perp_q=\{0\}$, that is ${\rm R}(q,v)=0$ and there is no constraint force.
}\end{remark}

\bigskip
This principle allows us to obtain the dynamical equations of the trajectories of the system and the constraint force along every trajectory, as we will see in the sequel. 

In order to obtain this, we need to characterize the subspace $(\Tan^V_{(q,v)}C)_q$ in relation with the constraints, that is the functions vanishing on the submanifold $C$.
First, let $\phi\in\Cinfty (\Tan Q)$, and consider the 1-form $\d^V\phi\in\df^1(Q,\tau_Q)$ defined by
$$
(\d^V\phi(q,v))(u)=\d\phi (\lambda_q^{(q,v)}(u))
 ,  \quad (q,v)\in\Tan Q, \quad u\in\Tan_qQ \ ;
$$
whose expression in a local natural chart $(q^i,v^i)$ of $\Tan Q$ can be calculated directly applying it to  $\dst \derpar{}{q^i}$ and we obtain \(\dst\d^V\phi=\derpar{\phi}{v^i}\d q^i\). We have the following result:

\begin{prop}\label{verticalvectors}
Let $(q,v)\in C$.
\begin{enumerate}
\item If $w\in\Tan_qQ$, then  \(\dst w\in(\Tan_{(q,v)}^VC)_q\) if, and only if,
$(\d^V\phi(q,v))(w)=0$,
for every $\phi\in\Cinfty (\Tan Q)$ such that $j_C^*\phi =0$.
\item Let $((\Tan_{(q,v)}^VC)_q)^o=\{\alpha\in\Tan^{*}_{q}; \alpha(w)=0, \forall w\in(\Tan_{(q,v)}^VC)_q\}\subset\Tan^{*}_{q}Q$ be the annihilator of $(\Tan_{(q,v)}^VC)_q$, then $
((\Tan_{(q,v)}^VC)_q)^o=\{\d^V\phi(q,v);\forall  \phi\in\Cinfty(\Tan Q),   j_C^*\phi=0\}.
$
\item If $w\in\Tan_qQ$, then  \(\dst w\in(\Tan_{(q,v)}^VC)_q\) if, and only if, $\inn(w){\tt g}\in((\Tan_{(q,v)}^VC)_q)^o$.
\end{enumerate}
\end{prop}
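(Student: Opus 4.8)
The plan is to prove the three claims in sequence, with (1) being the core fact from which (2) and (3) follow more or less formally. The key geometric input is the local description of the vertical lift: in a natural chart $(q^i,v^i)$ of $\Tan Q$, for $(q,v)\in\Tan Q$ and $\dst u=u^i\derpar{}{q^i}\Big\vert_q\in\Tan_qQ$, one has $\dst\lambda_q^{(q,v)}(u)=u^i\derpar{}{v^i}\Big\vert_{(q,v)}$, and consequently, for $\phi\in\Cinfty(\Tan Q)$, the stated local formula $\dst\d^V\phi=\derpar{\phi}{v^i}\,\d q^i$, so that $(\d^V\phi(q,v))(u)=\dst u^i\derpar{\phi}{v^i}\Big\vert_{(q,v)}$. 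I would also record that $\lambda_q^{(q,v)}\colon\Tan_qQ\to {\rm V}_{(q,v)}(\Tan Q)$ is an isomorphism, so $(\Tan^V_{(q,v)}C)_q=(\lambda_q^{(q,v)})^{-1}(\Tan^V_{(q,v)}C)$, where $\Tan^V_{(q,v)}C = {\rm V}_{(q,v)}(\Tan Q)\cap\Tan_{(q,v)}C$.

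For item (1): $w\in(\Tan^V_{(q,v)}C)_q$ means $\lambda_q^{(q,v)}(w)\in\Tan_{(q,v)}C$, i.e.\ $\lambda_q^{(q,v)}(w)$ annihilates every function vanishing on $C$ (since $C$ is an embedded submanifold, a vertical vector is tangent to $C$ iff it kills all such constraint functions). For $\phi\in\Cinfty(\Tan Q)$ with $j_C^*\phi=0$, we have $\lambda_q^{(q,v)}(w)(\phi)=\d\phi(\lambda_q^{(q,v)}(w))=(\d^V\phi(q,v))(w)$ by the very definition of $\d^V\phi$. Hence $w\in(\Tan^V_{(q,v)}C)_q$ iff $(\d^V\phi(q,v))(w)=0$ for all such $\phi$. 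The only subtlety is to be careful that it suffices to test against \emph{all} $\phi$ vanishing on $C$ rather than a chosen finite generating set; locally one chooses functions $\phi^a$ whose differentials span the conormal of $C$ in $\Tan Q$, and the general $\phi$ is handled by Hadamard's lemma. I expect this verification, together with making the embedded-submanifold tangency criterion precise, to be the main (and essentially only) obstacle; everything else is formal.

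For item (2): by definition the annihilator $((\Tan^V_{(q,v)}C)_q)^o$ consists of the covectors $\alpha\in\Tan_q^*Q$ killing $(\Tan^V_{(q,v)}C)_q$. From item (1), each $\d^V\phi(q,v)$ with $j_C^*\phi=0$ lies in this annihilator, giving the inclusion $\supseteq$. For the reverse inclusion, I would observe that the map $\phi\mapsto\d^V\phi(q,v)$, restricted to functions vanishing on $C$, is surjective onto the annihilator: in the local chart, choosing the constraint functions $\phi^a$ whose $\d^V\phi^a(q,v)$ are independent, one checks that these exactly span $((\Tan^V_{(q,v)}C)_q)^o$ because both spaces have dimension equal to the codimension of $(\Tan^V_{(q,v)}C)_q$ in $\Tan_qQ$, which by the isomorphism $\lambda_q^{(q,v)}$ equals the fiberwise codimension of $\Tan^V_{(q,v)}C$ in ${\rm V}_{(q,v)}(\Tan Q)$, i.e.\ the number of independent constraints. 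A dimension count then forces equality.

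For item (3): this is immediate from (2) together with the fact that ${\tt g}$ is nondegenerate. Indeed $w\in(\Tan^V_{(q,v)}C)_q$ iff $\alpha(w)=0$ for every $\alpha\in((\Tan^V_{(q,v)}C)_q)^o$ (double-annihilator in a finite-dimensional vector space), and using (2) this says $\d^V\phi(q,v)(w)=0$, i.e.\ $(\inn(w){\tt g})$ pairs trivially with... more directly: $\inn(w){\tt g}=\hat{\tt g}_q(w)$, and $w$ lies in the subspace $(\Tan^V_{(q,v)}C)_q$ iff $\hat{\tt g}_q(w)$ lies in its image under $\hat{\tt g}_q$; but one checks in coordinates that $\hat{\tt g}_q((\Tan^V_{(q,v)}C)_q)$ coincides with $((\Tan^V_{(q,v)}C)_q^\perp)^o$... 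I would instead argue cleanly: $\inn(w){\tt g}\in((\Tan^V_{(q,v)}C)_q)^o$ means ${\tt g}(w,u)=0$ for all $u\in(\Tan^V_{(q,v)}C)_q$, i.e.\ $w\in((\Tan^V_{(q,v)}C)_q)^{\perp}$; but since $\Tan_qQ=(\Tan^V_{(q,v)}C)_q\oplus(\Tan^V_{(q,v)}C)_q^{\perp}$ is the orthogonal decomposition already fixed in the text, and $w$ here is moreover assumed to lie in $\Tan_qQ$ with the property ${\tt g}(w,u)=0$ for all $u$ in the complement's complement... This needs one more line: one uses that ${\tt g}$ restricted to $(\Tan^V_{(q,v)}C)_q$ is nondegenerate (being positive definite), so $((\Tan^V_{(q,v)}C)_q)^{\perp\perp}=(\Tan^V_{(q,v)}C)_q$, whence $\inn(w){\tt g}\in((\Tan^V_{(q,v)}C)_q)^o\iff w\in(\Tan^V_{(q,v)}C)_q^{\perp\perp}=(\Tan^V_{(q,v)}C)_q$, which is the claim.
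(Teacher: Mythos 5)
Your treatment of items (1) and (2) is correct and essentially the paper's route: (1) is exactly the chain $w\in(\Tan^V_{(q,v)}C)_q \Leftrightarrow \lambda_q^{(q,v)}(w)\in\Tan_{(q,v)}C \Leftrightarrow \d\phi\big(\lambda_q^{(q,v)}(w)\big)=0$ for every $\phi$ with $j_C^*\phi=0 \Leftrightarrow (\d^V\phi(q,v))(w)=0$; and (2) the paper dispatches as a consequence of (1), the cleanest way being that $W=\{\d^V\phi(q,v):\ j_C^*\phi=0\}$ is a linear subspace of $\Tan_q^*Q$ whose joint kernel is $(\Tan^V_{(q,v)}C)_q$ by (1), so $((\Tan^V_{(q,v)}C)_q)^o=W^{oo}=W$. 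Your dimension count with local constraint functions is an acceptable variant (it is made explicit in the proposition that follows in the paper), with the caveat that the relevant number is that of independent \emph{vertical} differentials $\d^V\phi^a(q,v)$, not of independent constraints.

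The genuine gap is in item (3). From $\inn(w){\tt g}\in((\Tan^V_{(q,v)}C)_q)^o$ you correctly deduce ${\tt g}(w,u)=0$ for all $u\in(\Tan^V_{(q,v)}C)_q$, i.e.\ $w\in\big((\Tan^V_{(q,v)}C)_q\big)^{\perp}$ — one orthogonal complement. Your closing step replaces this by $w\in\big((\Tan^V_{(q,v)}C)_q\big)^{\perp\perp}=(\Tan^V_{(q,v)}C)_q$, which is a non sequitur: the identity $S^{\perp\perp}=S$ does not convert membership in $S^{\perp}$ into membership in $S$. Moreover, the equivalence you are trying to force cannot hold as written: since ${\tt g}$ is positive definite, any $w\neq 0$ in $(\Tan^V_{(q,v)}C)_q$ has ${\tt g}(w,w)>0$, so $\inn(w){\tt g}$ does \emph{not} annihilate $(\Tan^V_{(q,v)}C)_q$; hence the forward implication fails whenever that space is nonzero. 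The content that is actually needed (and the only thing used in the Corollary that follows and in d'Alembert's principle) is the equivalence $\inn(w){\tt g}\in((\Tan^V_{(q,v)}C)_q)^o \Leftrightarrow w\in\big((\Tan^V_{(q,v)}C)_q\big)^{\perp}$, i.e.\ membership in the ${\tt g}$-orthogonal complement, and that is precisely what your first clean sentence already proves directly from the definitions. Drop the $\perp\perp$ manipulation and state and prove the item in that form.
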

\begin{proof}
\begin{enumerate}
\item Let $w\in\Tan_qQ$, then we have
\begin{eqnarray*}
w\in(\Tan_{(q,v)}^VC)_q\ \Longleftrightarrow \
\lambda_q^{(q,v)}(w)\in\Tan_{(q,v)}^VC \ \Longleftrightarrow \\
\lambda_q^{(q,v)}(w)(\phi)=0, \forall  \phi\in\Cinfty(\Tan Q),  \mathrm{with} \,\, j_C^*\phi=0
\ \Longleftrightarrow\\
\d\phi(\lambda_q^{(q,v)}(w))=0, \forall  \phi\in\Cinfty(\Tan Q), \mathrm{with} \,\, j_C^*\phi=0 
\ \Longleftrightarrow \\
(\d^V\phi(q,v))(w)=0,\forall  \phi\in\Cinfty(\Tan Q),  \mathrm{with} \,\, j_C^*\phi=0\ . \quad
\end{eqnarray*}
\item It is a consequence of the previous item.
\item It is a direct consequence of the definitions.
\end{enumerate}
\qed \end{proof}

\begin{corol}
Let ${\rm R}\in\vf (Q,\tau_Q)$ and $(q,v)\in C$; then
${\rm R}(q,v)\in(\Tan^V_{(q,v)}C)^\perp_p$ if, and only if, 
$\inn({\rm R}(q,v))g\in((\Tan_{(q,v)}^VC)_q)^o$.
\end{corol}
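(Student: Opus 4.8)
The plan is to derive the statement directly from the definitions of the ${\tt g}$-orthogonal complement, the contraction map $\inn(\cdot){\tt g}$, and the annihilator, using only that ${\tt g}$ is a nondegenerate symmetric bilinear form; in fact this is essentially a restatement, for the particular vector ${\rm R}(q,v)$, of item~3 of Proposition~\ref{verticalvectors}.

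First I would fix $(q,v)\in C$ and abbreviate $W:=(\Tan^V_{(q,v)}C)_q\subset\Tan_qQ$. By the definition of the ${\tt g}$-orthogonal complement, the condition ${\rm R}(q,v)\in(\Tan^V_{(q,v)}C)^\perp_q$ means exactly that ${\tt g}({\rm R}(q,v),w)=0$ for every $w\in W$. On the other hand, $\inn({\rm R}(q,v)){\tt g}\in\Tan^*_qQ$ is by definition the covector $w\mapsto{\tt g}({\rm R}(q,v),w)$; hence it belongs to the annihilator $W^o=((\Tan^V_{(q,v)}C)_q)^o$ precisely when $(\inn({\rm R}(q,v)){\tt g})(w)={\tt g}({\rm R}(q,v),w)=0$ for all $w\in W$. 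The two conditions coincide, which proves the equivalence. Equivalently, one may invoke the musical isomorphism $\hat{\tt g}\colon\Tan_qQ\to\Tan^*_qQ$ induced by ${\tt g}$, which is bijective because ${\tt g}$ is nondegenerate and carries $W^\perp$ onto $W^o$, so that ${\rm R}(q,v)\in W^\perp\iff\hat{\tt g}({\rm R}(q,v))=\inn({\rm R}(q,v)){\tt g}\in W^o$.

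To make the corollary operative for determining the constraint force, I would then combine it with item~2 of Proposition~\ref{verticalvectors}: since $W^o=\{\d^V\phi(q,v)\ ;\ \phi\in\Cinfty(\Tan Q),\ j_C^*\phi=0\}$, the nonholonomic d'Alembert principle ${\rm R}(q,v)\in(\Tan^V_{(q,v)}C)^\perp_q$ is equivalent to requiring that $\inn({\rm R}(q,v)){\tt g}$ lie in the span of the vertical differentials of the constraint functions. Locally, if $C$ is described by the vanishing of functions $\phi_a$, this reads $\inn({\rm R}(q,v)){\tt g}=\lambda^a\,\d^V\phi_a$ for some Lagrange multipliers $\lambda^a$, which is the classical coordinate form of the principle and is what is used in the subsequent computation of the dynamics and of the reaction force along each trajectory.

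There is essentially no obstacle here: the proof is a one-line unwinding of the definitions together with the nondegeneracy of ${\tt g}$ (already used in Proposition~\ref{verticalvectors}). The only point deserving a moment's attention is to keep the three objects properly distinguished --- the vertical tangent space $\Tan^V_{(q,v)}C\subset\Tan_{(q,v)}(\Tan Q)$, its preimage $(\Tan^V_{(q,v)}C)_q\subset\Tan_qQ$ under the vertical-lift isomorphism $\lambda_q^{(q,v)}$, and the annihilator $((\Tan^V_{(q,v)}C)_q)^o\subset\Tan^*_qQ$ --- after which the equivalence is immediate.
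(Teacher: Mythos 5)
Your proof is correct and follows essentially the same route as the paper, which treats the corollary as an immediate consequence of Proposition \ref{verticalvectors} (item 3), i.e.\ a direct unwinding of the definitions of $\perp$, $\inn(\cdot){\tt g}$, and the annihilator together with the nondegeneracy (symmetry) of ${\tt g}$; your remark via the musical isomorphism carrying $W^\perp$ onto $W^o$ is the same observation phrased intrinsically. The additional paragraph linking the result to item 2 and the Lagrange multipliers is not needed for the corollary itself but matches exactly how the paper uses it afterwards.
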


Usually the submanifold $C$ is given by the vanishing of a finite family of constraint functions defined on $\Tan Q$. We want to characterize $((\Tan_{(q,v)}^VC)_q)^o$ using these constraints. Then, suppose that the submanifold $C$ is defined by the vanishing of $r$ functions $\{\phi^i\}$, with $r<n=\dim\, Q$, satisfying the condition
\(\dst \mathrm{rank}\,\left(\derpar{\coor{\phi}{1}{r}}{\coor{v}{1}{n}}\right)=r\). Then $\dim\, C=2n-r$.
We have:

\begin{prop}
Let $(q,v)\in C$, then
\begin{enumerate}
\item $\dim \Tan^V_{(q,v)}C=r$.
\item $(\Tan^V_{(q,v)}C)_{q}=\{w\in \Tan_{q}Q; (\d^V\phi^i(q,v))(w)=0, i=1,\ldots,r \}$.
\item If $\alpha\in\Tan_{q}^{*}Q$ satisfies $\alpha |_{(\Tan^V_{(q,v)}C)_{q}}=0$, then $\alpha$ is a linear combination of the elements $\d^V\phi^1(q,v), \ldots,\d^V\phi^r(q,v)$;
that is, the vector space $((\Tan_{(q,v)}^VC)_q)^o$ is generated by $\{\d^V\phi^1(q,v),\ldots,\d^V\phi^r(q,v)\}$.
\end{enumerate}
\end{prop}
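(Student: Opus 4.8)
The plan is to reduce all three assertions to elementary linear algebra in a natural chart of $\Tan Q$, using the coordinate formulas already recorded for the vertical lift $\lambda_q^{(q,v)}$ and for the operator $\d^V$, together with the single hypothesis that the $r\times n$ matrix $(\partial\phi^i/\partial v^j)$ has rank $r$ at $(q,v)$.

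First I would fix a natural chart $(\tau_Q^{-1}(U);q^j,v^j)$ of $\Tan Q$ about $(q,v)$, in which $C=\{\phi^1=\cdots=\phi^r=0\}$. A vertical vector $\zeta=\zeta^j\,\partial/\partial v^j$ is then tangent to $C$ exactly when $\d\phi^i_{(q,v)}(\zeta)=\sum_j(\partial\phi^i/\partial v^j)(q,v)\,\zeta^j=0$ for $i=1,\ldots,r$, so $\Tan^V_{(q,v)}C$ is, in these coordinates, the kernel of the $r\times n$ matrix $(\partial\phi^i/\partial v^j)(q,v)$; its dimension is then read off by rank--nullity, which is item 1. Transporting this description back to $\Tan_qQ$ through the isomorphism $\lambda_q^{(q,v)}$ --- which carries $\partial/\partial q^j$ (at $q$) to $\partial/\partial v^j$ (at $(q,v)$) --- shows that $w=w^j\,\partial/\partial q^j$ lies in $(\Tan^V_{(q,v)}C)_q$ iff $(\partial\phi^i/\partial v^j)(q,v)\,w^j=0$ for all $i$; and since $\d^V\phi^i(q,v)=(\partial\phi^i/\partial v^j)\,\d q^j$, this is precisely the condition $(\d^V\phi^i(q,v))(w)=0$ for $i=1,\ldots,r$, which is item 2.

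For item 3 I would pass to the dual side. The space $((\Tan^V_{(q,v)}C)_q)^o$ is the annihilator of the kernel of the matrix $(\partial\phi^i/\partial v^j)(q,v)$, hence equals its row space --- that is, the span of the covectors $\d^V\phi^1(q,v),\ldots,\d^V\phi^r(q,v)$ --- and these are linearly independent precisely because the matrix has rank $r$. Thus any $\alpha\in\Tan^*_qQ$ vanishing on $(\Tan^V_{(q,v)}C)_q$ is a linear combination of the $\d^V\phi^i(q,v)$, as asserted. Alternatively, items 2 and 3 can be obtained from Proposition \ref{verticalvectors}: any $\phi\in\Cinfty(\Tan Q)$ vanishing on $C$ is locally a combination $\phi=\sum_i g_i\phi^i$, whence $\d^V\phi(q,v)=\sum_i g_i(q,v)\,\d^V\phi^i(q,v)$, and this turns the characterizations in Proposition \ref{verticalvectors}(1)--(2) into exactly the statements above.

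I do not expect any genuine obstacle: everything rests on finite-dimensional linear algebra once the coordinate expressions are in hand, and the single point to check with care is that the one rank hypothesis drives all three items at once --- it determines the nullity in item 1, it is what lets each $\d^V\phi^i(q,v)$ be isolated in item 2, and it is exactly the linear independence of $\d^V\phi^1(q,v),\ldots,\d^V\phi^r(q,v)$ required in item 3. Should one prefer the route through Proposition \ref{verticalvectors}, the only mildly substantive ingredient is the local factorization $\phi=\sum_i g_i\phi^i$ of a function vanishing on $C$, a standard Hadamard-type fact which again relies on the rank hypothesis (via the linear independence of $\d\phi^1,\ldots,\d\phi^r$ on $\Tan Q$ near $(q,v)$, so that $\phi^1,\ldots,\phi^r$ extend to a local coordinate system).
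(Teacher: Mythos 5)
Your treatment of item 1 is where the proposal fails as a verification of the statement as printed. You observe, correctly, that in a natural chart $\Tan^V_{(q,v)}C$ is the kernel of the $r\times n$ matrix $W=\left(\partial\phi^i/\partial v^j\right)(q,v)$, and you then say its dimension ``is read off by rank--nullity, which is item 1'' --- but you never write the number down: rank--nullity gives $\dim\ker W=n-r$, whereas item 1 asserts $\dim\Tan^V_{(q,v)}C=r$. These agree only when $n=2r$, so your computation does not prove the stated equality; in fact it contradicts it, and it also shows item 1 as printed is inconsistent with item 3, since an annihilator spanned by $r$ independent covectors forces $\dim(\Tan^V_{(q,v)}C)_q=n-r$. (The same slip occurs in the paper's own argument, which passes to the adapted chart $(q^1,\ldots,q^n,\phi^1,\ldots,\phi^r,v^{r+1},\ldots,v^n)$ via the Inverse Function Theorem and claims that $\Tan^V_{(q,v)}C$ is spanned by $\partial/\partial\phi^1,\ldots,\partial/\partial\phi^r$; a vertical vector tangent to $C=\{\phi^i=0\}$ must annihilate the $\d\phi^i$, so the correct spanning set there is $\partial/\partial v^{r+1},\ldots,\partial/\partial v^n$, of dimension $n-r$.) A careful write-up must either prove $\dim\Tan^V_{(q,v)}C=n-r$ and note explicitly that the number $r$ is the codimension of $(\Tan^V_{(q,v)}C)_q$ in $\Tan_qQ$, i.e.\ the dimension of the annihilator appearing in item 3, or leave item 1 unestablished; asserting that rank--nullity ``is item 1'' glosses over a real discrepancy.

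Items 2 and 3 you do prove, and by a route that is more economical than the paper's: you stay in the natural chart, use the coordinate formulas for $\lambda_q^{(q,v)}$ and $\d^V\phi^i$, and identify $(\Tan^V_{(q,v)}C)_q$ with $\ker W$ and its annihilator with the row space of $W$, the rank hypothesis supplying both the characterization in item 2 and the independence and spanning in item 3. The paper instead constructs adapted coordinates by the Inverse Function Theorem and reads the spaces off there, invoking Proposition \ref{verticalvectors} only for the inclusion in item 2. Your alternative path through Proposition \ref{verticalvectors} together with a local factorization $\phi=\sum_i g_i\,\phi^i$ of every constraint vanishing on $C$ is also viable, but that factorization (the statement that the $\phi^i$ locally generate the vanishing ideal of $C$) is a genuine extra lemma which the coordinate argument avoids; if you take that road, prove it or cite it precisely rather than calling it standard.
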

\begin{proof}
Let $(q^{i}, v^{i})$ a natural coordinate system on $\Tan Q$.
\begin{enumerate}
\item The assumed condition
\(\dst {\rm rank}\,\left(\derpar{\coor{\phi}{1}{r}}{\coor{v}{1}{n}}\right)=r\) implies that, up to a change of order in the coordinates $\coor{q}{1}{n}$, we can suppose that
$$
\det\, \left(\derpar{\coor{\phi}{1}{r}}{\coor{v}{1}{r}}\right)\not= 0 \ .
$$
Then $(q^{1},\ldots,q^{n},\phi^1,\ldots,\phi^r,v^{r+1},\ldots,v^{n})$ is a local coordinate system of $\Tan Q$ by the Inverse Function Theorem. The vector space $V_{q,v}(\Tan Q)$ is generated by
$$
\left\{\derpar{}{\phi^1},\ldots,\derpar{}{\phi^r},\derpar{}{v^{r+1}},\ldots,\derpar{}{v^n}\right\}_{(q,v)}\, ,
$$
and the subspace  $\Tan^V_{(q,v)}C\subset V_{q,v}(\Tan Q)$ is generated by
$$
\left\{\derpar{}{\phi^1},\ldots,\derpar{}{\phi^r}\right\}_{(q,v)}\ .
$$
\item The inclusion part is proved in the first item of Proposition \ref{verticalvectors} and the equality follows from a dimensional analysis.
\item 
The previous items imply that $\{\d^V\phi^1(q,v),\ldots,\d^V\phi^r(q,v)\}$ is a basis of $
((\Tan_{(q,v)}^VC)_q)^o$. 
\end{enumerate}
\qed \end{proof}

Then, as a corollary, we obtain:

\begin{prop}
For every $(q,v)\in C$, the form $\eta\in\df^1(Q,\tau_Q)$ satisfies the condition
\(\dst j^*_C\eta\vert_{(\Tan^V_{(q,v)}C)_q}=0\)
if, and only if, there exist $\moment{f}{1}{r}\in\Cinfty (\Tan Q)$
such that $\eta=f_i\d^V\phi^i$
\end{prop}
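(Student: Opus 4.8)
The $(\Leftarrow)$ implication is immediate. The plan here is: assume $\eta=f_i\,\d^V\phi^i$ with $\moment{f}{1}{r}\in\Cinfty(\Tan Q)$; then, for every $(q,v)\in C$ and every $w\in(\Tan^V_{(q,v)}C)_q$, the preceding Proposition gives $(\d^V\phi^i(q,v))(w)=0$ for $i=1,\ldots,r$, hence $(j^*_C\eta)_{(q,v)}(w)=\eta_{(q,v)}(w)=f_i(q,v)\,(\d^V\phi^i(q,v))(w)=0$, which is exactly the stated condition $j^*_C\eta\vert_{(\Tan^V_{(q,v)}C)_q}=0$.

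For the $(\Rightarrow)$ implication I would argue pointwise first and worry about smoothness afterwards. Fix $(q,v)\in C$. The hypothesis says precisely that $\eta_{(q,v)}\in\Tan^*_qQ$ annihilates the subspace $(\Tan^V_{(q,v)}C)_q$, i.e.\ $\eta_{(q,v)}\in((\Tan^V_{(q,v)}C)_q)^o$. By the preceding Proposition this annihilator is the $r$-dimensional subspace of $\Tan^*_qQ$ spanned by the linearly independent covectors $\d^V\phi^1(q,v),\ldots,\d^V\phi^r(q,v)$. Therefore there are unique scalars $f_1(q,v),\ldots,f_r(q,v)$ with $\eta_{(q,v)}=f_i(q,v)\,\d^V\phi^i(q,v)$. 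This already yields the result as a pointwise statement on $C$; what remains is to check that the functions $f_i$ so defined are smooth and to produce representatives in $\Cinfty(\Tan Q)$.

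For smoothness I would reuse the local coordinates built in the proof of the preceding Proposition: in a neighbourhood of any point of $C$, after a reordering of $\coor{q}{1}{n}$ we may assume $\det\bigl(\partial\phi^i/\partial v^j\bigr)_{1\le i,j\le r}\not=0$, so that $(q^i,\phi^1,\ldots,\phi^r,v^{r+1},\ldots,v^n)$ is a coordinate system on $\Tan Q$ and the $n$ one-forms $\d^V\phi^1,\ldots,\d^V\phi^r,\d q^{r+1},\ldots,\d q^n$ form a local frame of $\tau_Q^*(\Tan^*Q)$. Writing $\eta$ in this frame, $\eta=f_i\,\d^V\phi^i+g_a\,\d q^a$ (with $a=r+1,\ldots,n$), and the coefficients $f_i,g_a$ are smooth functions of the components of $\eta$ with coefficients rationally built from the invertible matrix $(\partial\phi^i/\partial v^j)$, hence smooth on the coordinate domain. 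The hypothesis, together with the description of $(\Tan^V_{(q,v)}C)_q$ in the preceding Proposition, forces $g_a\vert_C=0$; since $C$ is locally $\{\phi^1=\cdots=\phi^r=0\}$ and the $\phi^i$ are coordinate functions, Hadamard's Lemma gives $g_a=h_{ai}\phi^i$ with $h_{ai}$ smooth, so $\eta=f_i\,\d^V\phi^i$ holds along $C$ with the $f_i$ smooth there and uniquely determined. Patching these locally defined multipliers by a partition of unity subordinate to a cover of $C$, and then extending from the closed embedded submanifold $C=\{\phi^1=\cdots=\phi^r=0\}\hookrightarrow\Tan Q$ by a bump function, produces $\moment{f}{1}{r}\in\Cinfty(\Tan Q)$ with $\eta=f_i\,\d^V\phi^i$ on $C$.

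The main obstacle is not the linear algebra — that is handed to us by the preceding Proposition — but keeping the smoothness and the globalization honest: the multipliers are canonically determined only on $C$, and the identity $\eta=f_i\,\d^V\phi^i$ need not persist off $C$ (the term $h_{ai}\phi^i\,\d q^a$ is a genuine obstruction there), so the precise content of the statement is the existence of \emph{global} smooth functions $f_i$ that agree with the pointwise-unique multipliers along $C$. This is exactly what is needed for the constraint-force interpretation, since the dynamical trajectories of the constrained system lie in $C$; I would make this reading explicit in the write-up.
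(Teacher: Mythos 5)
Your proof is correct and follows the same core route as the paper's: the paper's entire proof is the one-line observation that, for every $(q,v)\in C$, the hypothesis means $\eta(q,v)\in((\Tan^V_{(q,v)}C)_q)^o$, which by the preceding proposition is spanned by $\d^V\phi^1(q,v),\ldots,\d^V\phi^r(q,v)$. Your extra work on smoothness of the multipliers (via the local frame $\{\d^V\phi^i,\d q^a\}$ where $\det(\partial\phi^i/\partial v^j)\not=0$) and on patching them into global functions $f_i\in\Cinfty(\Tan Q)$ supplies details the paper leaves implicit, and your closing remark that the identity $\eta=f_i\,\d^V\phi^i$ is canonical only along $C$ is the right reading of the statement.
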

\begin{proof}
It is a consequence that, for every $(q,v)\in C$, we have that $\eta(q,v)\in((\Tan_{(q,v)}^VC)_q)^o$.
\\ \qed \end{proof}

In the case that the constraints define only locally the submanifold $C$, then the above results are valid only in the corresponding open set.

The last proposition allows us to state the so called  
\rm{Dual d'Alembert nonholonomic Principle} which states that 
``the work form $\inn({\rm R})g$ corresponding to the constraint force ${\rm R}$ annihilates the virtual velocities of the system''.

And, as an immediate result, we have:

\begin{corol}
If ${\rm R}$ is the nonholonomic constraint force, then there exist
$\coor{f}{1}{r}\in\Cinfty (\Tan Q)$ such that
$$
\inn({\rm R}){\tt g}=f_i\d^V\phi^i=f_i\derpar{\phi^i}{v^j}\d q^j \ ,
$$
and, as a consequence,
$$
{\rm R}=f_i\derpar{\phi^i}{v^j}g^{jk}\derpar{}{q^k} \ .
$$
\end{corol}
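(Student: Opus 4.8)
The plan is to read off the statement directly from the Nonholonomic d'Alembert Principle and the characterization of the annihilator $((\Tan^V_{(q,v)}C)_q)^o$ established in the preceding proposition. First I would introduce the work form of the constraint force, $\eta:=\inn({\rm R}){\tt g}\in\df^1(Q,\tau_Q)$. By the Nonholonomic d'Alembert Principle, for every $(q,v)\in C$ one has ${\rm R}(q,v)\in(\Tan^V_{(q,v)}C)^\perp_q$; that is, ${\tt g}({\rm R}(q,v),w)=0$ for every $w\in(\Tan^V_{(q,v)}C)_q$. Since $\eta(q,v)(w)=(\inn({\rm R}(q,v)){\tt g})(w)={\tt g}({\rm R}(q,v),w)$, this says exactly that $\eta(q,v)$ annihilates $(\Tan^V_{(q,v)}C)_q$, i.e. $\eta(q,v)\in((\Tan^V_{(q,v)}C)_q)^o$ for every $(q,v)\in C$, which is precisely the condition $j_C^*\eta\vert_{(\Tan^V_{(q,v)}C)_q}=0$.

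Next I would apply the previous proposition to $\eta=\inn({\rm R}){\tt g}$: that proposition asserts that this annihilation condition is equivalent to the existence of functions $\moment{f}{1}{r}\in\Cinfty(\Tan Q)$ with $\eta=f_i\,\d^V\phi^i$. This gives the first displayed equality, $\inn({\rm R}){\tt g}=f_i\,\d^V\phi^i$. Substituting the coordinate expression $\d^V\phi^i=\derpar{\phi^i}{v^j}\d q^j$ — which was obtained earlier by evaluating $\d^V\phi^i$ on the coordinate vector fields $\derpar{}{q^j}$ — yields $\inn({\rm R}){\tt g}=f_i\derpar{\phi^i}{v^j}\d q^j$.

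Finally, the expression for ${\rm R}$ itself follows by applying the inverse musical isomorphism ${\tt g}^\sharp$, so that ${\rm R}={\tt g}^\sharp(\inn({\rm R}){\tt g})$. In a local chart this raising of index amounts to contracting the components $f_i\derpar{\phi^i}{v^j}$ of $\inn({\rm R}){\tt g}$ with the inverse metric $g^{jk}$, giving ${\rm R}=f_i\derpar{\phi^i}{v^j}g^{jk}\derpar{}{q^k}$. There is essentially no obstacle in this argument, since all the substantive work — the dimension count for $\Tan^V_{(q,v)}C$, the description of its annihilator in terms of the constraint functions, and the equivalence in the previous proposition — has already been carried out; the only point requiring a word of care is that the functions $f_i$ are globally defined on $\Tan Q$ precisely when the constraints $\phi^i$ are, and that otherwise the identities hold only on the open set where the local constraints are defined, exactly as noted for the preceding proposition.
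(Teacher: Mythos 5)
Your argument is correct and is exactly the chain the paper intends: the nonholonomic d'Alembert Principle puts $\inn({\rm R}){\tt g}$ in the annihilator of the virtual velocities, the preceding proposition expresses such a form as $f_i\,\d^V\phi^i$, and raising the index with $g^{jk}$ gives ${\rm R}$; the paper states the corollary as an immediate consequence of precisely these facts. Your closing caveat about locality also matches the paper's own remark on locally defined constraints.
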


\begin{definition}
The functions $\coor{f}{1}{r}$ are called {\sl \textbf{Lagrange multipliers}}
of the nonholonomic system.
\end{definition}

\noindent{\bf Particular case}: The submanifold $C\subset\Tan Q$ is a linear subbundle of  $\Tan Q$. 
\begin{enumerate}
\item In this case $C$ is defined by the vanishing of a family of differential forms, that is, we have $\omega^{i}\in\Omega^{1}(Q)$, $i=1,\ldots,r$, linearly independent at every point of $Q$, and
$$
C=\{(q,v)\in\Tan Q;\, \omega^{i}_{q}(v)=0,i=1,\ldots,r\}\, .
$$
In local coordinates, if $\omega^{i}=a^{i}_{j}(q)\d q^{j}$, then $\phi^i=a^{i}_{j}(q)v^{j}$, that is the constraints are linear in the velocities, and the expression of the constraint force is
$$
{\rm R}=f_ia^i_{j}g^{jk}\derpar{}{q^k} \ .
$$
\item 
Alternatively, we can suppose that the subbundle $C$ is given as a regular distribution $\mathcal{D}$, the distribution annihilated by $\{\omega^{i}, i=1,\ldots,r\}$. If $(q,v)\in\mathcal{D}$, by linearity, we have that  $\Tan_{(q,v)}^VC=\lambda_{q}^{(q,v)} (\mathcal{D}_{q})$, hence $(\Tan_{(q,v)}^VC)_{q}=\mathcal{D}_{q}$ and $(\Tan_{(q,v)}^VC)^\perp_q=\mathcal{D}^\perp_q$, then the constraint force ${\rm R}$ is orthogonal to $\mathcal{D}$.
\item 
If the distribution $\mathcal{D}$ is integrable and $(q,v)\in\mathcal{D}$ is the initial condition of the dynamical equation for the solution $\gamma$, then the image of $\gamma$ is contained in the integral submanifold of $\mathcal{D}$ passing through the point $q\in Q$, because $\dot\gamma(t)\in\mathcal{D}_{\gamma(t)}$ for every $t$. The constraint force ${\rm R}$, orthogonal to $\mathcal{D}$, forces the system to move on the integral submanifolds of the constraint distribution $\mathcal{D}$.
\end{enumerate}

\begin{remark}{\rm  
\bit
\item 
We can understand the solution as follows: if we have a constraint $\phi:\Tan Q\to\Real$, there is an associated 1-form, $\d^V\phi$, which gives a ``constraint force'' $R^{\phi}$ defined by $\inn(R^{\phi}){\tt g}=\d^V\phi$.
If we have $r$ independent constraints $\{\phi^i\}$, we have the corresponding constraint forces, $R^{\phi^i}$, and the subbundle generated by them, $\{R^{\phi^i}\}$, and then the resultant constraint force ${\rm R}$ is on this subbundle.
\item 
For these systems, d'Alembert's Principle says that, if the system moves ``along the vertical fibers'' of C,  the work realized by the constraint force along the trajectory is null. As we have said, these vertical velocities are called ``virtual'' because it is not possible to move the system with these velocities on the constraint manifold $C$ .
\eit
}\end{remark}

\subsection{Dynamical equations}

Following the above results,
if $C$ is locally defined by the annihilation of
$r$  functions $\{\phi^i\}$, and they are independent constraints,
then the dynamical equations are
$$
\nabla_{\dot\gamma}\dot\gamma ={\rm F}\circ\gamma+f_j\inn(\d^V\phi^j)g^{-1}\circ\dot\gamma \ ,
$$
or, in dual form,
$$
\nabla_{\dot\gamma}(\theta\circ\dot\gamma)=
\omega\circ\gamma+f_j\d^V\phi^j\circ\dot\gamma \ .
$$
These equations together with the constraints defining $C$, $\phi^1=0,\ldots,\phi^r=0$,
are a system of $n+r$ equations with $n+r$ unknowns:
the components of the trajectory $\gamma$ and the multipliers $f_i$. Observe that some of them are differential equations and the remaining ones are the constraint functions.

The corresponding Euler--Lagrange equations are
$$
\frac{d}{d t}\left(\derpar{K}{v^j}\circ\dot\gamma\right)-
\derpar{K}{q^j}\circ\dot\gamma=
\omega_j\circ\gamma +f_k\derpar{\phi^k}{v^j}\circ\dot\gamma
\quad , \quad (j=1,\ldots ,n) \ ,
$$
because $\omega=\omega_k\d q^k$, with $\omega_k=g_{kj}{\rm F}^j$.

If the system is conservative, then $\omega=-\d V$,
where $V\in\Cinfty (Q)$ is the potential function. In this case we can introduce the Lagrangian function $\Lag:=K-\tau_Q^*V$ and we have
$$
\frac{d}{d t}\left(\derpar{\Lag }{v^j}\circ\dot\gamma\right)-
\derpar{\Lag}{q^j}\circ\dot\gamma=
f_k\derpar{\phi^k}{v^j}\circ\dot\gamma \ .
$$
In any case, these equations, together with the constraints, defining $C$, are also a system of $n+r$ equations with $n+r$ unknowns.

If $C$ is a vector subbundle, then  $\phi^i=a^{i}_{j}(q)v^{j}$ and the Euler--Lagrange equations are
$$
\frac{d}{d t}\left(\derpar{K}{v^j}\circ\dot\gamma\right)-
\derpar{K}{q^j}\circ\dot\gamma=
\omega_j\circ\gamma +f_ka^k_j\circ\dot\gamma
\quad , \quad (j=1,\ldots ,n)\, .
$$
Or in the case of conservative systems
$$
\frac{d}{d t}\left(\derpar{\Lag }{v^j}\circ\dot\gamma\right)-
\derpar{\Lag}{q^j}\circ\dot\gamma=
f_ka^k_j\circ\dot\gamma \ .
$$

\vspace{0,5cm}

Summarizing, the fundamental data in the Lagrangian formulation
of the autonomous dynamical systems are the following:
\bit
\item
A Newtonian mechanical system (independent of time) is a  triple $(Q,{\tt g},\omega )$, where
$(Q,{\tt g})$ is a Riemannian manifold where the motion happens,
which is called the {\sl configuration space} of the system,
whose points represent the {\sl positions} of the particles of the system,
and
the form $\omega$ or, equivalently, the vector field ${\rm F}$,
are geometric elements carrying the dynamical information.
\item
The dynamical trajectories, which are the solutions to Newton equations, 
are curves on $\Tan Q$ which are canonical lifts of curves on $Q$.
The points of the manifold $\Tan Q$,
which are the possible initial conditions for these equations 
and, hence, they represent  the possible {\sl position} and {\sl velocities}
of the system) and are the {\sl physical states} of the  system, and
then $\Tan Q$ is said to be the
 {\sl state} or {\sl phase space} (of {\sl velocities\/}) of the system.
\item
In the particular case of conservative systems,
the geometric element representing the dynamics is substituted
by the  {\sl (mechanical) Lagrangian function}
$\Lag\in\Cinfty (\Tan Q)$, and  the dynamical equations
are  the {\sl Euler--Lagrange equations (of the first kind)}.
\eit

Finally, if  the degrees of freedom of the system
have some kind of restriction, we have two different situations: 
\begin{enumerate}
\item If the dynamics takes place on some submanifold
$j_S\colon S\hookrightarrow Q$, the situation is the same, but considering
$(S,j_S^*{\tt g},j_S^*\omega)$  as a Newtonian system.
Observe that, in this case, the configuration space is $S$
and the corresponding phase space (of velocities) is $\Tan S$.
\item If the configuration space is not restricted but the velocities are, then we have a nonholonomic system, and we need to introduce new unknowns, the Lagrange multipliers, to obtain the corresponding dynamical equations. 
\end{enumerate}

\section{Nonautonomous Newtonian systems}

In some interesting cases, the force field acting on a Newtonian system 
depends not only on the positions and the velocities, but also on time. 
In the following paragraphs, we extend our geometric formulation to this situation.

\subsection{Mechanical systems with time-dependent forces}

The geometric model appropriate to this case is the following:

\begin{definition}
A {\sl \textbf{nonautonomous Newtonian mechanical system}} is a triple
 $(\Real\times Q,{\tt g},{\rm F})$, where
$(Q,{\tt g})$ is a Riemannian manifold and the force field is of the form 
${\rm F}\in\vf (Q,\pi_2)$ (with $\pi_2\colon\Real\times Q\to Q$); that is,
$$
\begin{array}{ccc}
& &\Tan Q \\
&
\begin{picture}(40,30)(0,0)
\put(5,15){\mbox{${\rm F}$}}
\put(-6,-8){\vector(4,3){50}}
\end{picture}
&
\Bigg\downarrow \tau_Q
\\
\Real\times Q
&
\begin{picture}(40,5)(0,0)
\put(17,5){\mbox{$\pi_2$}}
\put(0,0){\vector(1,0){40}}
\end{picture}
& Q
\end{array} \ .
$$
Moreover, if the force field depends on the velocities, then 
${\rm F}\in\vf (Q,\tau_Q\circ\rho_2)$, with $\rho_2\colon\Real\times\Tan Q\to \Tan Q$;
that is,
$$
\begin{array}{ccc}
& &\Tan Q_\mu \\
&
\begin{picture}(60,30)(0,0)
\put(15,15){\mbox{${\rm F}$}}
\put(-6,-8){\vector(2,1){70}}
\end{picture}
&
\Bigg\downarrow \tau_Q
\\
\Real\times Q
&
\begin{picture}(20,5)(0,0)
\put(3,8){\mbox{$\rho_2$}}
\put(-6,3){\vector(1,0){25}}
\end{picture}
\Tan Q
\begin{picture}(20,5)(0,0)
\put(9,8){\mbox{$\tau_Q$}}
\put(5,3){\vector(1,0){25}}
\end{picture}
& Q
\end{array}\ . 
$$
\end{definition}

The Newton equations are written in the usual way:
\bit
\item
In the case that the force does not depend on the velocities,
$$
\nabla_{\dot\gamma}\dot\gamma ={\rm F}\circ\bar\gamma \ ,
$$
where $\bar\gamma=(t,\gamma)\colon I\subset\Real\to I\times Q$.
We can also use the dual form by means of the corresponding work form $\omega\in\df^1(Q,\pi_2)$.
\item
If the force field depends on the velocities, then
$$
\nabla_{\dot\gamma}\dot\gamma ={\rm F}\circ\bar{\dot\gamma} \ ,
$$
where $\bar{\dot\gamma}=(t,\dot\gamma)\colon I\subset\Real\to I\times\Tan Q$.
As above, we can use the corresponding work form $\omega\in\df^1(Q,\tau_Q\circ\rho_2)$ and obtain the equations in the dual form.
\eit

The  Euler--Lagrange equations are the same as usual, but the second term depends on time $t\in\Real$.
In particular, if the work form depends on time, $\omega\in\df^1(Q,\pi_2)$,
we say that the system is \textsl{conservative} if there exists
$V\colon\Real\times Q\to \Real$, such that $\omega=-\d V_t$,
where $V_t\colon Q\to Q$ is defined by $V_t(p):=V(t,p)$,
for every $p\in Q$ and $t\in\Real$.
In this situation we can define the Lagrangian function $\Lag:=K-V$, depending on time, and the Euler--Lagrange equation are as usual
$$
\frac{d}{d t}\left(\derpar{\Lag}{v^i}\circ\bar{\dot\gamma}\right)-
\derpar{\Lag}{q^i}\circ\bar{\dot\gamma}= 0 \ .
$$

Time dependent Newtonian systems with holonomic and nonholonomic constraints can be studied in the same way, as we see in the next section.

\subsection{Time-dependent holonomic and nonholonomic constraints}

\begin{definition}
A {\sl \textbf{nonautonomous holonomic Newtonian mechanical system}}
is a triple $(Q,{\tt g},\omega)$, where $(Q,{\tt g})$ is a Riemannian manifold, 
$\omega\in\df^1(Q)$), and we have an embedding 
$j_S\colon S\hookrightarrow \Real\times Q$,  where $S$ is a submanifold of $\Real\times Q$.
\end{definition}

In this situation, we need to suppose that the constraint force depends not only on positions and velocities but on time also; that is ${\rm R}\in\vf (Q,\tau_Q\circ\rho_2)$. The Newton equation is
$$
\nabla_{\dot\gamma}\dot\gamma ={\rm F}\circ\bar\gamma +{\rm R}\circ\bar{\dot\gamma} \ ,
$$
and d'Alembert's Principle can be stated as follows:

\begin{assump}
{\rm (Nonautonomous d'Alembert principle)}:
The constraint force ${\rm R}\in\vf (Q,\tau_Q\circ\rho_2)$ satisfies that,
for every $t\in\Real$, $p\in S$ and $u,v\in\Tan_qS$,
$$
{\tt g}({\rm R}(t,(\Tan_qj_S)(u)),(\Tan_qj_S)(v))=0 \ .
$$
\end{assump}

As in the autonomous case, this means that 
${\rm R}(t,(\Tan_qj_S)(u))\in(\Tan_{j_S(q)}S)^\perp$, and the results we obtain are the same as in Section \ref{sdnlh}.

\begin{definition}
A {\sl \textbf{nonautonomous nonholonomic Newtonian mechanical system}}
is a triple $(Q,{\tt g},\omega)$, where $(Q,{\tt g})$ is a Riemannian manifold, $\omega\in\df^1(Q)$),
and we have an embedding 
$j_C\colon C\hookrightarrow \Tan Q$, such that $(\tau_Q\circ j_C)(C)=Q$,
 where $C$ is a differentiable manifold.
\end{definition}

Once again we need to assume that the constraint force also depends on time, that is ${\rm R}\in\vf (Q,\tau_Q\circ\rho_2)$, and this makes the dynamical trajectories of the system, 
$\gamma\colon I\subset\Real\to Q$, to satisfy
$\dot\gamma(t)\in\Tan C$,  for every $t\in I$.
Hence, the Newton equation is 
$$
\nabla_{\dot\gamma}\dot\gamma ={\rm F}\circ\bar\gamma +{\rm R}\circ\bar{\dot\gamma} \ .
$$

To state the \textsl{nonholonomic d'Alembert Principle}, for every $t\in\Real$
and ${\rm p}\in C$, we need to define $(\Tan_{j_C({\rm p})}C)_{\tau_
Q({\rm p})}$,
as in the time independent case (see Section \ref{slnh}).

With this aim, let ${\rm p}\in C$ and $j_C({\rm p})=(q,v)\in j_C(C)\subset\Tan Q$, we take $\Tan^V_{(q,v)}C$ as the vertical subbundle tangent to $C$ and in the same way  $(\Tan^V_{(q,v)}C)_q$ using the vertical lift $\lambda_{q}^{(q,v)}$. Then

\begin{assump}
{\rm (Nonautonomous nonholonomic d'Alembert Principle)}:
The constraint force ${\rm R}\in\vf (Q,\tau_Q\circ\rho_2)$ satisfies that,
$$
{\tt g}({\rm R}(t,(q,v)),u)=0 \quad , \quad u\in(\Tan^V_{(q,v)}C)_q
$$
for every  ${\rm p}\in C$, $j_C({\rm p})=(q,v)\in j_C(C)$, and every $t\in\Real$; that is,
${\rm R}(t,j_C({\rm p}))\in(\Tan_{j_C({\rm p})}C)_{(\tau_Q \circ j_C)({\rm p})}^\perp $.
\end{assump}

The consequences of this principle on the dynamical equations and the calculus of the constraint force are similar to the autonomous case
(see Section \ref{slnh}), with the only difference that now the Lagrange multipliers $f_i$ are elements of $\Cinfty (\Real\times\Tan Q)$.

\begin{remark}{\rm  Sometimes it is necessary to work with constraints depending on time, called  \textit{rheonomic} constraints instead of \textit{scleronomic} constraints  or not depending on time (see  \cite{Sch-2005} for details). In this situation we have a submanifold $B\subset\Real\times\Tan Q$ such that corresponding manifolds, $B_{t}$ for any $t\in\Real$, are diffeomorphic. Then d'Alembert's Principle is stated at every $t$ for the submanifold $B_{t}$. 
}\end{remark}


\chapter{An introduction to contact mechanics and dissipative dynamical systems}
\label{chap:contactmech}

In Chapter \ref{chap:cosym} we have studied nonautonomous or time-dependent systems;
one of the main characteristics of which is that, as discussed there, they are not conservative , but dissipative; 
that is, unlike the conservative systems studied in chapters \ref{sdl} and \ref{sdn},
energy is not a conserved quantity. 
However, in mechanics, there are many other systems that, 
being autonomous or not autonomous indistinctly, also manifest this characteristic \cite{Galley-2013,Ra-2006}. 
They are those in which forces of a nonconservative type appear, such as those described in Section \ref{vdf}

In recent years, there has been a great interest in geometrically studying these types of systems, by using techniques from {\sl contact geometry} \cite{ABKLR-2012,BHD-2016,BGG-2017,Geiges-2008}.
In fact, the {\sl contact structure} is quite similar to cosymplectic structure and allows us to give
a natural Hamiltonian description of mechanical systems with dissipation
\cite{Bravetti2017,BCT-2017,BLMP-2020,CG-2019,GG-2022,GG-2022b,LL-2018,LIU2018,Vi-2018}.
Their Lagrangian formalism \cite{CIAGLIA2018,DeLeon2019,GGMRR-2019b,GG-2022b},
and their unified Lagrangian-Hamiltonian formalism \cite{LGMMR-2020} have also been stated
in many different situations and application;
such as, damped oscillators, motion on viscous fluids, and  motion with friction in general.
It should be noted, however, that the first Lagrangian formulation for these kinds of systems
was first introduced by {\it G. Herglotz}
\cite{He-1930,Her-1985}
(see also \cite{GeGu2002,LPAF-2018,LIU2018} and \cite{LLM-2020} for a modern geometric version of this variational approach), who used a generalization of the Hamilton variational principle to obtain the
so-called {\sl Herglotz--Euler--Lagrange equations}, which are the same equations that are obtained using contact geometry.
In all these descriptions, the Lagrangian and the Hamiltonian functions depend also on an additional variable that, 
as we will see, can be identified with the ``action'' of the system;
and, for this reason, physicists often refer to them as {\sl action-dependent systems}.

Contact geometry has also been used to describe different types of physical systems
in thermodynamics (for instance, in
\cite{Bravetti2017,Bravetti-2019,SLLM-2019,VS-2021}), 
quantum mechanics \cite{CIAGLIA2018,HW-2018,KA-2013}, circuit theory \cite{CIAGLIA2018,Goto-2016}, astrophysics \cite{GB-2019}, theoretical physics \cite{KA-2013}, control theory \cite{LLM-2020,RMS-2017}, etc.;
and even to describe other mechanical systems than just dissipative ones \cite{LR-2022}.
Finally, Herglotz's variational methods and the contact structure itself have been generalized 
in different ways for the treatment of action-dependent field theories
\cite{LGMRR-2022,GGMRR-2019,GGMRR-2020,GLMR-2022,Vi-2015}
(see also \cite{ACGL-2018,BH-2005,Bo-96,Almeida-2018,Fi2022,Mo-2008,Ri-2022,RSS-2023,TV-2008} for other contributions and less general approaches).

This whole body of doctrine, generically referred to as {\sl contact mechanics}, 
is currently a topic of active research, both in its foundation, and in its extensions and applications.

In this chapter, we review the main foundations of contact geometry 
and its application to describe dissipative autonomous dynamical systems.
The extension for the treatment of dissipative nonautonomous dynamical systems ({\sl cocontact formalism\/}) has been done in
\cite{LGGMR-2023,DeLeon2016b,RiTo-2022},
as well as to other types of systems and problems, such as 
higher-order dissipative systems \cite{LGLMR-2021},
nonholonomic systems \cite{LJL-2021,LLMR-2021}, or reduction theorems \cite{GG-2023,Wi-2002}.

We start giving the basic concepts and properties of {\sl contact manifolds} and the generic definition of {\sl contact Hamiltonian systems}. 
Next we develop the Lagrangian, Hamiltonian and unified formalisms for these kinds of systems.
We also study symmetries and the fundamental concept of {\sl dissipated quantities} geometrically;
defining, in particular, the notion of
{\sl contact Noether symmetry} for contact Hamiltonian and Lagrangian systems,
and establishing the statements of the so-called ``dissipation theorems''. 
These are analogous to the conservation theorems of conservative systems,
and show how to associate dissipated and conserved quantities to these symmetries.
Finally, the examples of the damped harmonic oscillator and the Kepler problem with dissipation
are analyzed in this context.

\section{A survey on contact geometry}
\label{scg}

(See, for instance, \cite{ABKLR-2012,
BHD-2016,Bravetti-2019,dN-2013,CIAGLIA2018,GGMRR-2019b,KA-2013,LM-sgam}
for more information).

\subsection{Contact manifolds}

\begin{definition}
\label{definition-contact-manifold}
A {\sl \textbf{contact manifold}} is a pair $({\rm M},\bmeta)$,
where ${\rm M}$ is a $(2n+1)$-dimensional manifold and
$\bmeta\in\df^1({\rm M})$ is a differential $1$-form
such that $\bmeta\wedge(\d\bmeta)^n$ is a volume form in ${\rm M}$.
Then, the form $\bmeta$ is called a {\sl \textbf{contact form}}
(or a {\sl \textbf{contact structure}}\/),
\end{definition}

As a straightforward consequence of this definition, we have that, 

\begin{teor}
\label{1stprop}
Given a contact manifold $({\rm M},\bmeta)$,
the condition that $\bmeta\wedge(\d\bmeta)^{n}$ is a volume form
is equivalent to have the splitting 
\beq
\label{splitcont}
\Tan {\rm M} =\ker\d\bmeta\oplus\ker\bmeta\equiv\mathcal{D}^{\rm R}\oplus\mathcal{D}^{\rm C} \ . 
\eeq
Then, there exists a unique vector field $\Reeb\in\vf({\rm M})$ such that
\begin{equation}
\label{eq-Reeb}
\inn(\Reeb)\d\bmeta = 0\quad ,\quad
\inn(\Reeb)\bmeta = 1 \ ,
\end{equation}
and hence it generates the distribution ${\cal D}^{\rm R}$.

Conversely, if we have two distributions ${\cal D}^{\rm R}$ and ${\cal D}^{\rm C}$
of ranks $1$ and $2n$ respectively, such that
\eqref{splitcont} holds, they define a contact structure on ${\rm M}$.
\end{teor}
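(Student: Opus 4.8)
The plan is to prove Theorem \ref{1stprop} in three movements: the equivalence of the volume-form condition with the splitting \eqref{splitcont}, the existence and uniqueness of the Reeb vector field, and the converse direction.

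First, I would establish the splitting. The key pointwise observation is that at each ${\rm p}\in{\rm M}$, the $2$-form $(\d\bmeta)_{\rm p}$ restricted to $\ker\bmeta_{\rm p}$ is nondegenerate precisely when $\bmeta\wedge(\d\bmeta)^n$ is a volume form. Indeed, $\d\bmeta$ is a skew form on the $(2n+1)$-dimensional space $\Tan_{\rm p}{\rm M}$, so $\ker(\d\bmeta)_{\rm p}$ has odd-dimensional… more precisely, I would argue that $\ker(\d\bmeta)_{\rm p}$ is at least one-dimensional (skew forms on odd-dimensional spaces are degenerate), and that $\bmeta\wedge(\d\bmeta)^n\neq 0$ forces $\dim\ker(\d\bmeta)_{\rm p}=1$ with this line transverse to $\ker\bmeta_{\rm p}$. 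The computation is essentially linear algebra: choosing a basis adapted to $\ker\bmeta_{\rm p}$, the top power $(\d\bmeta)^n$ being nonzero on that hyperplane is equivalent to $\mathrm{rank}\,(\d\bmeta)_{\rm p}|_{\ker\bmeta_{\rm p}}=2n$, which combined with $\bmeta_{\rm p}\neq 0$ gives both $\ker(\d\bmeta)_{\rm p}\cap\ker\bmeta_{\rm p}=\{0\}$ and a dimension count yielding the direct sum $\Tan_{\rm p}{\rm M}=\ker(\d\bmeta)_{\rm p}\oplus\ker\bmeta_{\rm p}$. Globalizing (the ranks are constant, so these are genuine distributions) gives \eqref{splitcont}.

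Second, for the Reeb vector field: since $\mathcal{D}^{\rm R}=\ker\d\bmeta$ has rank $1$ and is transverse to $\ker\bmeta$, the restriction $\bmeta|_{\mathcal{D}^{\rm R}}$ is nonvanishing, so at each point there is a unique vector in $\ker(\d\bmeta)_{\rm p}$ on which $\bmeta_{\rm p}$ takes the value $1$; smoothness follows because one can divide a local generator of $\mathcal{D}^{\rm R}$ by its (nonvanishing, smooth) pairing with $\bmeta$. This produces $\Reeb$ satisfying \eqref{eq-Reeb}, and uniqueness is immediate: any vector field satisfying $\inn(Z)\d\bmeta=0$ lies in $\mathcal{D}^{\rm R}$, and the normalization $\inn(Z)\bmeta=1$ then pins it down. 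That $\Reeb$ generates $\mathcal{D}^{\rm R}$ is clear since it is a nonvanishing section of a rank-one distribution.

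Third, the converse. Given rank-$1$ and rank-$2n$ distributions $\mathcal{D}^{\rm R}$, $\mathcal{D}^{\rm C}$ with $\Tan{\rm M}=\mathcal{D}^{\rm R}\oplus\mathcal{D}^{\rm C}$, I would need to recover a contact form. The natural candidate is a $1$-form $\bmeta$ with $\ker\bmeta=\mathcal{D}^{\rm C}$; such a form exists locally and, up to a nonvanishing conformal factor, is determined by $\mathcal{D}^{\rm C}$. However — and this is the subtle point I expect to be the main obstacle — the statement as phrased is slightly too optimistic: an arbitrary corank-one distribution $\mathcal{D}^{\rm C}$ need not be contact (it could be integrable). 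The hypothesis must implicitly be that $\mathcal{D}^{\rm C}$ is ``maximally non-integrable'' and that $\mathcal{D}^{\rm R}$ is chosen so that $\d\bmeta|_{\mathcal{D}^{\rm C}}$ is nondegenerate; under that reading the converse just runs the first two steps backwards. I would therefore present the converse by assuming the data come equipped with the compatibility that $\mathcal{D}^{\rm R}=\ker\d\bmeta$ for a defining form $\bmeta$ of $\mathcal{D}^{\rm C}$, and then note that nondegeneracy of $\d\bmeta$ on $\mathcal{D}^{\rm C}$ (forced by the transversality of the two distributions, exactly as in step one read in reverse) gives $\bmeta\wedge(\d\bmeta)^n\neq 0$. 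The cleanest exposition is probably to phrase the whole theorem as the equivalence of: (i) $\bmeta\wedge(\d\bmeta)^n$ a volume form; (ii) the splitting \eqref{splitcont} holds and $\bmeta$ restricts to a nonzero form on the first factor; and then extract existence/uniqueness of $\Reeb$ as a corollary, which sidesteps the delicacy in the bare converse statement.
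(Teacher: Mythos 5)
Your argument is correct, and it is worth noting that the paper itself offers no proof of this theorem: it is introduced with ``as a straightforward consequence of the definition'', so there is nothing to compare against line by line. Your pointwise linear algebra is exactly the standard route one would fill in: $\bmeta\wedge(\d\bmeta)^n\neq 0$ at a point forces $\bmeta\neq 0$ and $(\d\bmeta)^n\neq 0$ on the hyperplane $\ker\bmeta$, hence $\d\bmeta$ has rank exactly $2n$ (even rank on an odd-dimensional space), its kernel is a line meeting $\ker\bmeta$ trivially, and the dimension count gives the splitting; conversely, the transversality $\ker\d\bmeta\cap\ker\bmeta=\{0\}$ with the rank count makes $\d\bmeta$ nondegenerate on $\ker\bmeta$ and so $\bmeta\wedge(\d\bmeta)^n$ nonvanishing. (If you want to be fully scrupulous in the reverse implication, observe that the splitting itself already forces $\bmeta_{\rm p}\neq 0$, since otherwise $\ker\bmeta_{\rm p}=\Tan_{\rm p}{\rm M}$ would leave no room for the necessarily nontrivial kernel of a skew form in odd dimension.) Your construction of $\Reeb$ — normalize the nonvanishing restriction of $\bmeta$ to the line field $\ker\d\bmeta$ — and the uniqueness argument are likewise the standard ones and are fine.

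Your caveat about the converse is well taken and is the one substantive point of friction with the statement as printed: an arbitrary pair of distributions of ranks $1$ and $2n$ with $\Tan{\rm M}=\mathcal{D}^{\rm R}\oplus\mathcal{D}^{\rm C}$ does not determine a contact structure (take $\mathcal{D}^{\rm C}$ integrable, e.g.\ the horizontal distribution $\ker\d s$ on $\Real^{2n+1}$, complemented by the vertical line field). The intended reading — consistent with the identification written into the displayed splitting and with the paper's subsequent remark delegating the distribution-based definition to the references, where maximal non-integrability is part of the data — is that $\mathcal{D}^{\rm C}=\ker\bmeta$ and $\mathcal{D}^{\rm R}=\ker\d\bmeta$ for a defining $1$-form, in which case the converse is just the reverse direction of the equivalence you already proved. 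Your proposed reformulation of the theorem as a clean two-way equivalence, with the Reeb field as a corollary, is a reasonable way to present it; just make sure the nondegeneracy hypothesis on $\d\bmeta$ restricted to $\mathcal{D}^{\rm C}$ is stated explicitly rather than left as ``implicit'', since that is precisely what the bare splitting of distributions fails to encode.
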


\begin{definition}
The above vector field $\Reeb\in\vf({\rm M})$
is called the {\sl \textbf{Reeb vector field}},
and ${\cal D}^{\rm R}$ and $\mathcal{D}^{\rm C}$
are known as the {\sl \textbf{Reeb}} (or {\sl \textbf{horizontal distribution}}\/),
and the {\sl \textbf{contact}} (or {\sl \textbf{vertical distribution}} of $({\rm M},\bmeta)$.
\end{definition}

\begin{remark}{\rm
It is relevant to point out that, although Definition \ref{definition-contact-manifold} is the most assumed,
there are other different terminologies in relation to the concept of {\sl contact manifold}.
For instance, it can be defined by demanding the existence of two distributions
${\cal D}^{\rm R}$ and $\mathcal{D}^{\rm C}$, satisfying the properties stated in Theorem \ref{1stprop}, instead of using differential forms 
(see, for instance, \cite{Ar-89,Geiges-2008,GG-2022,KA-2013}),
and when these distributions are associated to a fixed form,
it is called a {\sl \textbf{strict contact manifold}}.
Also, in \cite{AM-78}, a more generic definition is given,
by demanding the existence of a $2$-form of maximal rank in an odd-dimensional manifold, 
and then calling {\sl \textbf{exact contact}} the case in which this form is exact
(which is the case of Definition \ref{definition-contact-manifold}).
}\end{remark}

Given a contact manifold $({\rm M},\bmeta)$, as a consequence of the splitting \eqref{splitcont}, there exists a vector bundle isomorphism
$$
    \begin{array}{rcl}
        \flat_{\bmeta}\colon \Tan {\rm M} & \to & \Tan^*{\rm M}\\ \noalign{\medskip}
       ({\rm p},X_{\rm p}) & \mapsto & 
\big({\rm p},\inn(X_{\rm p})\d\bmeta_{\rm p} + [\inn(X_{\rm p})\bmeta_{\rm p}]\bmeta_{\rm p}\big)\ .
    \end{array}
$$
and its inverse $\sharp_{\bmeta}=\flat_{\bmeta}^{-1}\colon \Tan^*{\rm M} \to \Tan {\rm M}$.
Their natural extensions are the $\Cinfty({\rm M})$-module isomorphisms
which are denoted with the same notation,
\beq
\begin{array}{rccl}
   \flat_{\bmeta}\colon & \vf({\rm M}) & \longrightarrow & \df^1({\rm M}) \\
   & X & \longmapsto & \inn(X)\d\bmeta+(\inn(X)\bmeta)\bmeta
\end{array} \ 
\label{isocont}
\eeq
and its inverse $\sharp_{\bmeta}=\flat_{\bmeta}^{-1}\colon \df^1({\rm M})\to \vf({\rm M})$.
In particular, for the Reeb vector field
we have that $\flat_{\bmeta}(\Reeb)=\bmeta$.

\begin{remark}
\label{nocontacto}{\rm
If in Definition \ref{definition-contact-manifold} the form $\bmeta$ is such that
$\bmeta\wedge(\d\bmeta)^{n}$ is not a volume form, 
but the rank of the distribution $\ker\d\bmeta\cap\ker\bmeta$ is constant 
and $\dim{\rm M}-{\rm rank}(\ker\d\bmeta\cap\ker\bmeta)$ is odd,
then we say that $\bmeta$ is a {\sl \textbf{precontact form}} on ${\rm M}$
and that $({\rm M},\bmeta)$ is a {\sl \textbf{precontact manifold}}
(moreover, ${\rm dim}\,{\rm M}$ could be arbitrary).
Under these hypotheses, there exist Reeb vector fields defined by \eqref{eq-Reeb}, but they are not uniquely defined,
and the map $\flat_{\bmeta}$ is not an isomorphism \cite{LGGMR-2023}.
(See also \cite{GG-2023} for a more general definition of precontact structure,
using {\sl contact distributions\/}).
}\end{remark}

\begin{prop}\label{prop-adapted-coord}
On a contact manifold $({\rm M},\bmeta)$, there are charts of coordinates $(U;z^I;s)$, $I=1,\ldots,2n$,
such that
$$
\bmeta\vert_U = \d s - f_I(z^J) \,\d z^I 
\quad ,\quad
\Reeb\vert_U = \frac{\partial}{\partial s}\ ,
$$
where $f_I$ are functions depending only on the~$z^J$.
(They are called $\textsl{adapted coordinates}$ of the contact structure.)
\end{prop}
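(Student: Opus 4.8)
The plan is to deduce the existence of adapted coordinates from the local normal form for contact manifolds, which is essentially a Darboux-type theorem. First I would recall that, by Theorem \ref{1stprop}, the contact form $\bmeta$ determines the splitting $\Tan{\rm M} = \mathcal{D}^{\rm R}\oplus\mathcal{D}^{\rm C}$ together with the Reeb vector field $\Reeb$, characterized by \eqref{eq-Reeb}. The strategy is to produce, around a given point ${\rm p}\in{\rm M}$, a coordinate chart $(U;z^I,s)$ in which $\Reeb = \partial/\partial s$ and in which $\bmeta$ has no $\d s$-coefficient other than $1$.

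The key steps, in order, are as follows. (1) Since $\Reeb$ is a nonvanishing vector field (it satisfies $\inn(\Reeb)\bmeta = 1$), the straightening theorem for vector fields gives a chart $(V;w^1,\ldots,w^{2n},s)$ near ${\rm p}$ in which $\Reeb = \partial/\partial s$; here $s$ is a coordinate whose $\partial/\partial s$ is the Reeb field, and the $w^A$ parametrize a local transversal. (2) Write $\bmeta\vert_V = g\,\d s + \sum_A h_A\,\d w^A$ for functions $g,h_A$ on $V$. The condition $\inn(\Reeb)\bmeta = 1$ forces $g = 1$, so $\bmeta\vert_V = \d s + \sum_A h_A\,\d w^A$. (3) The condition $\inn(\Reeb)\d\bmeta = 0$, i.e. $\Lie(\Reeb)\bmeta = \d\inn(\Reeb)\bmeta + \inn(\Reeb)\d\bmeta = 0$, says that $\bmeta$ is invariant under the flow of $\Reeb$; concretely, $\partial h_A/\partial s = 0$, so each $h_A = h_A(w^B)$ depends only on the transversal coordinates. (4) Rename $z^A := w^A$ and $f_A := -h_A$, obtaining $\bmeta\vert_U = \d s - f_A(z^B)\,\d z^A$ with $\Reeb = \partial/\partial s$ on a suitable neighbourhood $U\subseteq V$.

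This already yields the stated form, so the argument is essentially a bookkeeping exercise once the straightening of $\Reeb$ is in place. The only subtlety — and the main point requiring a comment — is that I have produced \emph{some} adapted chart, not the refined Darboux normal form in which the $f_I$ take the specific linear shape $f_I\,\d z^I = \sum_{i=1}^n y_i\,\d x^i$ (which is the genuine contact Darboux theorem); but the Proposition as stated asks only for coordinates with $f_I = f_I(z^J)$, so the four steps above suffice. Alternatively, one may invoke the local model directly: since $({\rm M},\bmeta)$ is a contact manifold, the contact Darboux theorem (which can be derived from the cosymplectic or symplectic Darboux theorems already used in this text, cf. Theorems \ref{darbo} and the Darboux theorem for symplectic manifolds) provides coordinates $(x^i,y_i,s)$ with $\bmeta = \d s - y_i\,\d x^i$ and $\Reeb = \partial/\partial s$; taking $z^I = (x^i,y_i)$ and $(f_I) = (y_i,0)$ gives the claim as a special case. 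I expect the only mild obstacle to be making the flow-box step precise, i.e. checking that the transversal can be chosen so that $\bmeta$ restricts nicely and that the resulting coordinates are defined on a common neighbourhood; this is routine.
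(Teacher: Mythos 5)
Your main argument, steps (1)--(4), is correct and is essentially the paper's own proof: rectify $\Reeb$ in a flow-box chart, then use the defining conditions $\inn(\Reeb)\bmeta=1$ and $\inn(\Reeb)\d\bmeta=0$ (equivalently $\Lie(\Reeb)\bmeta=0$) to force the $\d s$-coefficient to be $1$ and the remaining coefficients to be independent of $s$. Only be aware that your proposed alternative route via the contact Darboux theorem would be circular in this text, since here the Darboux theorem for contact manifolds is itself proved by first passing to these adapted coordinates.
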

\begin{proof}
On ${\rm M}$, we can take local charts of coordinates $(U;z^I,s)$, $I=1,\ldots,2n$, such that they rectify the vector field $\Reeb$; that is, $\Reeb=\displaystyle\derpar{}{s}$ on $U$.
Then $\displaystyle\bmeta\vert_U=a\,\d s-f_I(z^J,s)\d z^I$; but the conditions defining $\Reeb$ imply that  $\displaystyle\derpar{f_I}{s}=0$ and $a=1$, hence the result follows.
\\ \qed \end{proof}

Furthermore, one can prove the existence of Darboux-type coordinates:

\begin{teor}
{\rm (Darboux theorem for contact manifolds)}
Let $({\rm M},\bmeta)$ be a contact manifold. 
Then, for every point $p\in {\rm M}$ there exist a chart of coordinates 
$(U; x^i, y_i, s)$, $1\leq i\leq n$, such that
\begin{equation*}
\bmeta\vert_U= \d s - y_i\,\d x^i 
\quad ,\quad
\Reeb\vert_U = \frac{\partial}{\partial s}\ .
\end{equation*}
These are the {\sl \textbf{Darboux}} or {\sl \textbf{canonical coordinates}} of the contact manifold $({\rm M},\bmeta)$.
\end{teor}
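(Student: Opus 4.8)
The plan is to follow the same strategy as the Darboux theorem for symplectic manifolds proved earlier (Theorem, Darboux, in Section \ref{vs}), combined with the normal form for the Reeb vector field already obtained in Proposition \ref{prop-adapted-coord}. The starting point is that Proposition \ref{prop-adapted-coord} gives us, around any point $p\in {\rm M}$, adapted coordinates $(U;z^I,s)$ with $\Reeb\vert_U=\partial/\partial s$ and $\bmeta\vert_U=\d s-f_I(z^J)\,\d z^I$. Since the $f_I$ do not depend on $s$, the $1$-form $\alpha := f_I(z^J)\,\d z^I$ can be regarded as a $1$-form on the $2n$-dimensional ``slice'' manifold $N=\{s=\text{const.}\}$ (or more precisely, it is the pullback to $U$ of a $1$-form on an open set of $\Real^{2n}$ via the projection $(z^J,s)\mapsto(z^J)$). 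The key observation is that $\d\bmeta\vert_U = -\d\alpha = -\mathrm{d}_N\alpha$, and the condition that $\bmeta\wedge(\d\bmeta)^n$ be a volume form on ${\rm M}$ translates exactly into the condition that $\d\alpha$ be a symplectic form on $N$ (it must have maximal rank $2n$, since $\bmeta$ contributes the ``$\d s$'' direction and $(\d\bmeta)^n$ must be nonzero on the complementary $2n$ directions).

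First I would make this reduction precise: restrict attention to the slice $N$ with its symplectic form $\Omega_N := -\d\alpha$, and apply the symplectic Darboux theorem (already proved in the excerpt) to obtain coordinates $(x^i,y_i)$ on a neighbourhood of $p$ in $N$ such that $\Omega_N = \d x^i\wedge\d y_i$, i.e. $-\d\alpha = \d x^i\wedge\d y_i$. Transporting these along the $s$-direction (they are functions of the $z^J$ only), we get coordinates $(x^i,y_i,s)$ on a neighbourhood $U'\subseteq U$ of $p$ in ${\rm M}$, still with $\Reeb=\partial/\partial s$. In these coordinates $\d\bmeta = \d x^i\wedge\d y_i$, and since $\bmeta$ is a $1$-form with $\inn(\Reeb)\bmeta=1$ and $\inn(\Reeb)\d\bmeta=0$, we can write $\bmeta = \d s - \beta$ where $\beta$ is a $1$-form in the variables $x^i,y_i$ only (no $\d s$ term, no $s$-dependence, by the Reeb conditions). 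Then $\d\bmeta = -\d\beta = \d x^i\wedge\d y_i$ forces $\d(\beta + y_i\,\d x^i)=0$, so $\beta + y_i\,\d x^i = \d g$ for some function $g=g(x^i,y_i)$ on a (simply connected, shrinking if necessary) neighbourhood, by Poincar\'e's Lemma.

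The final step is the gauge adjustment that kills the exact term $\d g$: replace the coordinate $s$ by $\widetilde s := s - g(x^i,y_i)$. Since $g$ does not depend on $s$, the change $(x^i,y_i,s)\mapsto(x^i,y_i,\widetilde s)$ is a valid change of coordinates, and $\partial/\partial\widetilde s = \partial/\partial s = \Reeb$, so the Reeb normal form is preserved. Moreover $\d\widetilde s = \d s - \d g = \d s - \beta - y_i\,\d x^i$, hence $\bmeta = \d s - \beta = \d\widetilde s + y_i\,\d x^i$... wait, I must be careful with signs here; the correct bookkeeping gives $\bmeta = \d\widetilde s - y_i\,\d x^i$ after choosing the sign of $g$ appropriately (i.e. writing $\beta = \d g + y_i\,\d x^i$ as above yields $\bmeta = \d s - \d g - y_i\,\d x^i = \d(s-g) - y_i\,\d x^i = \d\widetilde s - y_i\,\d x^i$). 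Renaming $\widetilde s$ back to $s$ gives exactly the claimed canonical form $\bmeta\vert_U = \d s - y_i\,\d x^i$ with $\Reeb\vert_U = \partial/\partial s$.

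The main obstacle — and the only genuinely nontrivial ingredient — is the reduction to the symplectic Darboux theorem in the first paragraph: one must carefully justify that the $s$-independence of $\alpha$ lets us treat $\d\alpha$ as a symplectic form on the $2n$-dimensional slice and that the coordinates produced there can be ``constant-extended'' in $s$ without spoiling $\Reeb = \partial/\partial s$. Everything after that (the Poincar\'e Lemma step and the gauge shift in $s$) is routine, with the only care needed being the consistent tracking of signs so that the final form reads $\d s - y_i\,\d x^i$ rather than $\d s + y_i\,\d x^i$. An alternative, fully self-contained route would be to mimic Moser's trick exactly as in the proof of the symplectic Darboux theorem given earlier in the excerpt, interpolating between $\bmeta$ and its constant-coefficient model and integrating a time-dependent vector field; but reducing to the already-proven symplectic case is shorter and is the approach I would take.
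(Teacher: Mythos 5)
Your proof is correct and follows essentially the same route as the paper: pass to the adapted coordinates of Proposition \ref{prop-adapted-coord}, observe that $\d\bmeta$ induces a symplectic form transverse to the Reeb direction (you work on the slice $\{s=\mathrm{const.}\}$, the paper on the quotient $U/\mathcal{D}^{\mathrm{R}}$, which is the same thing here since the $f_I$ are $s$-independent), and invoke the already-proved symplectic Darboux theorem. The only difference is that you spell out the final Poincar\'e-lemma gauge shift $s\mapsto s-g$ needed to put $\bmeta$ itself (and not merely $\d\bmeta$) into canonical form while keeping $\Reeb=\partial/\partial s$ — a step the paper's proof leaves implicit — and your sign bookkeeping, after the correction you note, comes out right.
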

\begin{proof}
Taking local charts of adapted coordinates,
$(U;z^I,s)$, if we do the quotient $U/{\cal D}^{\rm R}$,
the form $\d\bmeta\vert_U=\d f_I(z^J)\wedge\d z^I$
projects to the quotient and is a symplectic form on it.
Then we can take symplectic Darboux coordinates on the quotient
and pull them back to $U$, thus obtaining contact chart Darboux coordinates $(U; x^i, y_i, s)$ on ${\rm M}$.
\\ \qed \end{proof}

Relevant examples of contact manifolds are:

\noindent{\bf Canonical model}:
The canonical model for contact manifolds
is the manifold $\Tan^*Q\times\Real$.
In fact, if $\Theta\in\df^1(\Tan^*Q)$ 
is the canonical $1$-forms in $\Tan^*Q$,
$s$ is the Cartesian coordinate of $\Real$, and 
$\pi_1\colon \Tan^*Q \times\Real \to \Tan^*Q$ 
is the canonical projection, then 
$\bmeta=\d s-\pi_1^*\Theta= \d s-p_i\d q^i$ is a contact form in  $\Tan^*Q\times\Real$,
the Reeb vector field is
$\displaystyle\Reeb=\frac{\partial}{\partial s}$, and $(s,q^i,p_i)$ are Darboux coordinates on $\Tan^*Q\times\Real$.

\noindent{\bf Contactification of a symplectic manifold}.
If $(P,\omega)$ is an exact symplectic manifold such that $\omega=-\d\theta\in\df^2(P)$,
consider the manifold ${\rm M} =P\times\Real$.
If $s$ is the Cartesian coordinate of~$\Real$,
then the 1-form
$\bmeta=\d s-\theta\in\df^1({\rm M})$
(where we have denoted also by $\theta$ the pull-back of~$\theta$ to ${\rm M}$) is a contact form,
and $({\rm M},\bmeta)$ is a contact manifold which is called the
{\sl\textbf{contactified}} of~$P$.
Observe that the canonical model, $\Tan^*Q\times\Real$,
is the contactified of $\Tan^*Q$ endowed with its canonical symplectic structure.

\subsection{Hamiltonian, gradient, and evolution vector fields on a contact manifold}

As in the case of cosymplectic manifolds,
for a contact manifold $({\rm M},\bmeta)$,
the existence of the natural $\Cinfty({\rm M})$-modules isomorphism $\flat_{\bmeta}$,
introduced in the above section,
allows us to associate some characteristic vector fields to a function $f\in\mathcal{C}^\infty({\rm M})$:

\begin{definition}
\label{GraHaEv}
Let $({\rm M},\bmeta)$ be a contact manifold and $f\in\mathcal{C}^\infty({\rm M})$.

 The {\sl \textbf{Hamiltonian vector field}} associated with $f$
is the vector field $\X_f\in\vf({\rm M})$ defined by
\ $\flat_{\bmeta}(\X_f):=\d f-(\Reeb(f)+f)\bmeta$.

The {\sl \textbf{gradient vector field}} associated with $f$
is the vector field ${\bf grad\, f}\in\vf({\rm M})$ defined by
\ $\flat_{\bmeta}({\bf grad}\, f):=\d f$.

The {\sl \textbf{evolution vector field}}  associated with $f$
is the vector field \ $\evo_f\in\vf({\rm M})$ defined as
 \ $\evo_f:=f\,\Reeb+\X_f$ \ or, equivalently,
 \ $\flat_{\bmeta}(\evo_f):=\d f-\Reeb(f)\,\bmeta$.
\end{definition}

\begin{lem} \label{prop:grad-evol}
\quad ${\bf grad}\,f=\X_f+(\Reeb(f)+f)\,\Reeb$.
\end{lem}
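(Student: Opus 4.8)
The statement to prove is the identity ${\bf grad}\,f=\X_f+(\Reeb(f)+f)\,\Reeb$, which relates the three characteristic vector fields introduced in Definition \ref{GraHaEv}. The natural strategy is to apply the isomorphism $\flat_{\bmeta}$ to both sides and verify the equality in $\df^1({\rm M})$, since $\flat_{\bmeta}$ is a $\Cinfty({\rm M})$-module isomorphism (so two vector fields are equal if and only if their images under $\flat_{\bmeta}$ coincide).

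First I would compute $\flat_{\bmeta}$ of the right-hand side, using linearity:
$$
\flat_{\bmeta}\big(\X_f+(\Reeb(f)+f)\,\Reeb\big)=\flat_{\bmeta}(\X_f)+(\Reeb(f)+f)\,\flat_{\bmeta}(\Reeb)\ .
$$
By Definition \ref{GraHaEv}, $\flat_{\bmeta}(\X_f)=\d f-(\Reeb(f)+f)\bmeta$, and from the remark following \eqref{isocont} we have $\flat_{\bmeta}(\Reeb)=\bmeta$. Substituting these,
$$
\flat_{\bmeta}\big(\X_f+(\Reeb(f)+f)\,\Reeb\big)=\d f-(\Reeb(f)+f)\bmeta+(\Reeb(f)+f)\bmeta=\d f\ .
$$
On the other hand, by Definition \ref{GraHaEv} again, $\flat_{\bmeta}({\bf grad}\,f)=\d f$. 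Hence both sides have the same image under $\flat_{\bmeta}$, and since $\flat_{\bmeta}$ is injective, the identity follows.

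This proof is essentially a one-line computation, so there is no serious obstacle; the only point requiring care is making sure that the scalar-multiplication property of $\flat_{\bmeta}$ (i.e.\ that it is $\Cinfty({\rm M})$-linear, so that the function $\Reeb(f)+f$ passes through it) is invoked correctly — but this is precisely the statement that $\flat_{\bmeta}$ is a $\Cinfty({\rm M})$-module isomorphism, as recorded right after \eqref{isocont}. Alternatively, one could give a coordinate proof in Darboux coordinates $(x^i,y_i,s)$, writing out $\X_f$, ${\bf grad}\,f$, and $\Reeb=\partial/\partial s$ explicitly and checking the components, but the intrinsic argument above is shorter and cleaner.
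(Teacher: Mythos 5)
Your proof is correct and follows essentially the same route as the paper: both arguments rest on the defining relations $\flat_{\bmeta}(\X_f)=\d f-(\Reeb(f)+f)\bmeta$, $\flat_{\bmeta}({\bf grad}\,f)=\d f$, $\flat_{\bmeta}(\Reeb)=\bmeta$, and the fact that $\flat_{\bmeta}$ is a $\Cinfty({\rm M})$-module isomorphism, so the identity follows by injectivity. The only difference is cosmetic — you substitute into the right-hand side while the paper rearranges the relation for $\X_f$ — so nothing further is needed.
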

\begin{proof}
Observe that,
    $$
\flat_{\bmeta}(\X_f)=\d f-(\Reeb(f)+f)\,\bmeta=
\flat_{\bmeta}({\bf grad}\,f)-(\Reeb(f)+f)\bmeta \ ,
$$
then, being $\flat_{\bmeta}$ a diffeomorphism and bearing in mind that $\flat_{\bmeta}(\Reeb)=\bmeta$, the result follows.
\\ \qed \end{proof}

Taking this into account, these vector fields can be equivalently characterized as follows:

\begin{prop}
\label{GraHaEvProp}
The Hamiltonian vector field associated with $f$
is determined equivalently by the equations:
\beq
\inn(\X_f)\bmeta=-f \quad,\quad 
\inn(\X_f)\d\bmeta = \d f-\Reeb(f)\,\bmeta\ .
\label{evolvf1}
\eeq

The gradient vector field associated with $f$ is determined equivalently by the equations:
\beq
 \inn({\bf grad}\,f)\bmeta=\Reeb(f) 
 \quad ,\quad 
\inn({\bf grad}\,f)\d\bmeta=\d f-\Reeb(f)\,\bmeta\ .
\label{evolvf2}
\eeq

The evolution vector field  associated with $f$
is determined equivalently by the equations:
\beq
\inn(\evo_f)\bmeta=0 \quad ,\quad \inn(\evo_f)\d\bmeta=\d f-\Reeb(f)\,\bmeta\ .
\label{evolvf3}
\eeq
\end{prop}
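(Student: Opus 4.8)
The plan is to prove each of the three characterizations by unravelling the definition of $\flat_{\bmeta}$ given in \eqref{isocont} and splitting the resulting $1$-form equation into its ``$\d\bmeta$-part'' and ``$\bmeta$-part'' via the decomposition \eqref{splitcont}. The key algebraic fact I would use repeatedly is that, for any vector field $X$,
\[
\flat_{\bmeta}(X)=\inn(X)\d\bmeta+(\inn(X)\bmeta)\,\bmeta ,
\]
and that contracting this identity with the Reeb vector field $\Reeb$ and using \eqref{eq-Reeb} yields $\inn(\Reeb)\flat_{\bmeta}(X)=\inn(X)\bmeta$ (since $\inn(\Reeb)\d\bmeta=0$ and $\inn(\Reeb)\bmeta=1$). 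This is the device that isolates the ``$\bmeta$-component'' of any equation of the form $\flat_{\bmeta}(X)=\alpha$, namely $\inn(X)\bmeta=\inn(\Reeb)\alpha$.

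First I would treat the Hamiltonian vector field. By definition $\flat_{\bmeta}(\X_f)=\d f-(\Reeb(f)+f)\bmeta$. Contracting with $\Reeb$ gives $\inn(\X_f)\bmeta=\inn(\Reeb)\d f-(\Reeb(f)+f)\inn(\Reeb)\bmeta=\Reeb(f)-\Reeb(f)-f=-f$, which is the first equation in \eqref{evolvf1}. Substituting $\inn(\X_f)\bmeta=-f$ back into $\flat_{\bmeta}(\X_f)=\inn(\X_f)\d\bmeta+(\inn(\X_f)\bmeta)\bmeta=\inn(\X_f)\d\bmeta-f\bmeta$ and comparing with the definition $\flat_{\bmeta}(\X_f)=\d f-(\Reeb(f)+f)\bmeta$ gives $\inn(\X_f)\d\bmeta-f\bmeta=\d f-\Reeb(f)\bmeta-f\bmeta$, i.e. $\inn(\X_f)\d\bmeta=\d f-\Reeb(f)\bmeta$, the second equation. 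Conversely, if $X$ satisfies the two equations in \eqref{evolvf1}, then $\flat_{\bmeta}(X)=\inn(X)\d\bmeta+(\inn(X)\bmeta)\bmeta=(\d f-\Reeb(f)\bmeta)+(-f)\bmeta=\d f-(\Reeb(f)+f)\bmeta=\flat_{\bmeta}(\X_f)$, and injectivity of $\flat_{\bmeta}$ forces $X=\X_f$. The arguments for the gradient and evolution vector fields are entirely parallel: for ${\bf grad}\,f$ one has $\flat_{\bmeta}({\bf grad}\,f)=\d f$, so contracting with $\Reeb$ gives $\inn({\bf grad}\,f)\bmeta=\Reeb(f)$, and substituting back gives $\inn({\bf grad}\,f)\d\bmeta=\d f-\Reeb(f)\bmeta$, with the converse again by injectivity; for $\evo_f$ one starts from $\flat_{\bmeta}(\evo_f)=\d f-\Reeb(f)\bmeta$, contracts with $\Reeb$ to get $\inn(\evo_f)\bmeta=\Reeb(f)-\Reeb(f)=0$, and substitutes back to obtain $\inn(\evo_f)\d\bmeta=\d f-\Reeb(f)\bmeta$.

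There is really no serious obstacle here; the only point requiring a modicum of care is the systematic use of $\inn(\Reeb)\bmeta=1$ and $\inn(\Reeb)\d\bmeta=0$ to perform the $\bmeta$-versus-$\d\bmeta$ splitting, together with the fact that $\flat_{\bmeta}$ is a genuine isomorphism (which is exactly where the contact condition $\bmeta\wedge(\d\bmeta)^n\neq 0$, via Theorem \ref{1stprop}, enters) so that equality of images implies equality of the vector fields. If anything, the ``hard part'' is purely bookkeeping: making sure that in each of the three cases the $\bmeta$-component is read off correctly ($-f$, $\Reeb(f)$, and $0$ respectively) before substituting back to recover the $\d\bmeta$-component. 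Alternatively, for the gradient and evolution cases one could shortcut the computation by invoking Lemma \ref{prop:grad-evol} and the definition $\evo_f=f\Reeb+\X_f$ together with the already-proven equations \eqref{evolvf1} for $\X_f$ and the relations \eqref{eq-Reeb}; I would mention this as a remark but carry out the direct verification as the main proof since it is uniform across the three statements.
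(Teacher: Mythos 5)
Your proposal is correct and follows essentially the paper's own argument: the Hamiltonian case is handled identically (contract $\flat_{\bmeta}(\X_f)=\d f-(\Reeb(f)+f)\bmeta$ with $\Reeb$ using \eqref{eq-Reeb}, then substitute back to isolate $\inn(\X_f)\d\bmeta$), and your converse via injectivity of $\flat_{\bmeta}$ is the same point the paper makes by re-inserting the equations into the definition \eqref{isocont}. The only divergence is in the gradient and evolution cases, where the paper takes the route you relegate to a remark---deducing \eqref{evolvf2} and \eqref{evolvf3} from the already-proven \eqref{evolvf1} via Lemma \ref{prop:grad-evol} and $\evo_f=f\,\Reeb+\X_f$---whereas you rerun the direct Reeb-contraction argument; both are valid and the difference is purely bookkeeping.
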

\begin{proof}
For every $f\in\Cinfty({\rm M})$,
for the Hamiltonian vector field $\X_f$,
using the definitions of the isomorphism $\flat_{\bmeta}$ and of $\X_f$, first we have that,
\beq
\label{aux1}
\inn(\X_f)\d\bmeta+(\inn(\X_f)\bmeta)\bmeta=
\flat_{\bmeta}(\X_f)=\d f-(\Reeb(f)+f)\bmeta \ ,
\eeq
therefore, contracting both members with the Reeb vector field and using \eqref{eq-Reeb}, we get
$$
(\inn(\X_f)\bmeta)\inn(\Reeb)\bmeta=
\inn(\Reeb)\d f-(\Reeb(f)+f)\inn(\Reeb)\bmeta=
-f\,\inn(\Reeb)\bmeta
\ \Longleftrightarrow\ \inn(\X_f)\bmeta=-f \ .
$$
Now, going to \eqref{aux1}, we obtain that
$$
\inn(\X_f)\d\bmeta=\d f-\Reeb(f)\,\bmeta \ .
$$
Conversely, using \eqref{evolvf1}
in the Definition \eqref{isocont} of $\flat_{\bmeta}$,
we get $\flat_{\bmeta}(\X_f):=\d f-(\Reeb(f)+f)\bmeta$.

For the gradient vector field ${\bf grad}\,f$,
from the definition of ${\bf grad}\,f$ and the above Lemma \ref{prop:grad-evol},
and using the above results and \eqref{eq-Reeb}, we have
\begin{align*}
\inn({\bf grad}\,f)\d\bmeta &=
\inn(\X_f)\d\bmeta+(\Reeb(f)+f)\,\inn(\Reeb)\d\bmeta=
\inn(\X_f)\d\bmeta=\d f-\Reeb(f)\,\bmeta \ , 
\\
\inn({\bf grad}\,f)\bmeta&= 
\inn(\X_f)\bmeta+(\Reeb(f)+f)\,\inn(\Reeb)\bmeta=
-f+(\Reeb(f)+f)=\Reeb(f) \ .
\end{align*}
Conversely, using \eqref{evolvf2}
in the Definition \eqref{isocont} of $\flat_{\bmeta}$,
we obtain that $\flat_{\bmeta}({\bf grad}\, f):=\d f$.

Finally, for the evolution vector field $\evo_f$, from the definition of $\evo_f$ and using the above results,
 \begin{align*}
\inn(\evo_f)\d\bmeta&=\inn(\X_f)\d\bmeta+f\inn(\Reeb)\d\bmeta=\inn(\X_f)\d\bmeta=
\d f-(\Reeb(f))\,\bmeta \ ,
   \\
\inn(\evo_f)\bmeta &= \inn(\X_f)\bmeta+f\inn(\Reeb)\bmeta=-f+f=0\ .
 \end{align*}
Once again, using \eqref{evolvf3}
in the Definition \eqref{isocont} of $\flat_{\bmeta}$,
we obtain that $\flat_{\bmeta}(\evo_f):=\d f-\Reeb(f)\,\bmeta$.
\qed \end{proof}

As an immediate consequence of Proposition \ref{GraHaEvProp} we obtain that:

\begin{prop}
The equations for the integral curves $c\colon I\subset\Real\to {\rm M}$
of the Hamiltonian, the gradient, and the evolution vector fields associated with
$f\in\mathcal{C}^\infty({\rm M})$ are, respectively,
\bea
\label{hamilton-contactc-curves-eqs}
\inn(\widetilde c)(\eta\circ c)=-f\circ c \ & ,& \
\inn(\widetilde c)(\omega\circ c) = (\d f- R(f)\eta)\circ c\ ,  \\
 \inn(\widetilde c)(\eta\circ c) = R(f)\circ c \ & , &\
\inn(\widetilde c)(\omega\circ c)=(\d f-R(f)\eta)\circ c\ , \nonumber \\
\inn(\widetilde c)(\eta\circ c)=0 \ & ,& \ 
\inn(\widetilde c)(\omega\circ c))=(\d f-R(f)\eta)\circ c\ , \nonumber
\eea
\end{prop}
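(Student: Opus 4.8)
The plan is to deduce the integral-curve equations directly from the vector-field characterizations in Proposition \ref{GraHaEvProp}, exactly as was done for the Hamiltonian curve equation in Theorem \ref{teo:Hcurv} and for the cosymplectic case in the Proposition just preceding this statement (the one giving \eqref{evoic}). The key observation is the standard relation between a vector field and the canonical lift $\widetilde c$ of its integral curves: if $c\colon I\subset\Real\to {\rm M}$ is an integral curve of $Z\in\vf({\rm M})$, i.e.\ $\dot c(t)=(Z\circ c)(t)$, then for any differential form $\beta$ on ${\rm M}$ one has $\inn(\widetilde c)(\beta\circ c)=(\inn(Z)\beta)\circ c$, because $\widetilde c(t)=(c(t),\dot c(t))=(c(t),Z(c(t)))$. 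This is precisely Remark \ref{intcurve} (invoked already in Theorem \ref{teo:Hcurv}), so I may assume it.

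First I would take $f\in\Cinfty({\rm M})$ and let $c$ be an integral curve of the Hamiltonian vector field $\X_f$. Applying the contraction-with-$\widetilde c$ identity to the two defining equations of $\X_f$ in \eqref{evolvf1}, namely $\inn(\X_f)\bmeta=-f$ and $\inn(\X_f)\d\bmeta=\d f-\Reeb(f)\,\bmeta$, I get $\inn(\widetilde c)(\bmeta\circ c)=-f\circ c$ and $\inn(\widetilde c)(\d\bmeta\circ c)=(\d f-\Reeb(f)\,\bmeta)\circ c$, which is the first line of \eqref{hamilton-contactc-curves-eqs} (with the notation $\eta\equiv\bmeta$, $\omega\equiv\d\bmeta$, $R\equiv\Reeb$ as used in that display). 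Conversely, since $\flat_{\bmeta}$ is an isomorphism, a curve satisfying these two equations pointwise has $\dot c(t)=\X_f(c(t))$, so it is an integral curve of $\X_f$; this gives the equivalence. Then I would repeat verbatim this argument with $\X_f$ replaced by ${\bf grad}\,f$, using \eqref{evolvf2}, to obtain the second line, and with $\evo_f$, using \eqref{evolvf3}, to obtain the third line.

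There is essentially no obstacle here: the statement is an immediate corollary of Proposition \ref{GraHaEvProp} together with the elementary lift identity, and indeed the excerpt already flags it as ``an immediate consequence.'' The only point deserving a word of care is making sure the bookkeeping of which equation in \eqref{evolvf1}--\eqref{evolvf3} produces which equation in \eqref{hamilton-contactc-curves-eqs} is done consistently, and noting that the contraction of $\widetilde c$ against $\bmeta\circ c$ and $\d\bmeta\circ c$ is exactly the object defined in Remark \ref{intcurve}; no genuine computation is required. So the proof would be just two or three lines: cite Proposition \ref{GraHaEvProp}, cite Remark \ref{intcurve}, and apply $\inn(\widetilde c)(\,\cdot\circ c)=(\inn(Z)\,\cdot\,)\circ c$ to each of the three pairs of defining equations, invoking that $\flat_{\bmeta}$ is an isomorphism for the converse implication.
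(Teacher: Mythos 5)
Your proposal is correct and follows essentially the same route as the paper: the paper states this proposition as an immediate consequence of Proposition~\ref{GraHaEvProp}, giving no separate argument, and the content of that implicit argument is exactly what you spell out, namely applying the contraction identity $\inn(\widetilde c)(\beta\circ c)=(\inn(Z)\beta)\circ c$ of Remark~\ref{intcurve} to each pair of defining equations \eqref{evolvf1}--\eqref{evolvf3}, with the isomorphism $\flat_{\bmeta}$ giving the converse. Your remark that the $\eta$, $\omega$, $R$ appearing in the display are to be read as $\bmeta$, $\d\bmeta$, $\Reeb$ is also the right reading of the statement.
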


\noindent {\bf Local expressions}:
In Darboux coordinates $(x^i, y_i,s)$ on ${\rm M}$, we have that $\dst \Reeb(f)=\derpar{f}{s}$, then
$$
\d f-\Reeb(f)\bmeta=\left(\frac{\partial f}{\partial x^i} + 
y_i\frac{\partial f}{\partial s}\right)\d x^i+\derpar{f}{y_i}\,\d y_i \ ,
$$
and, from \eqref{evolvf1}, \eqref{evolvf2}, and \eqref{evolvf3} (or also using \eqref{evolvf1}, Lemma \ref{prop:grad-evol} and the definition of $\evo_f$), we obtain,
\begin{align}
\noalign{\medskip}
\X_f= \frac{\partial f}{\partial y_i}\frac{\partial}{\partial x^i} - 
\left(\frac{\partial f}{\partial x^i} + 
y_i\frac{\partial f}{\partial s}\right)\frac{\partial}{\partial y_i} + 
\left(y_i\frac{\partial f}{\partial y_i} - f\right)\frac{\partial}{\partial s}\ , 
\label{evocoor01} \\ 
\noalign{\medskip} {\bf grad}\,f = \frac{\partial f}{\partial y_i}\frac{\partial}{\partial x^i} - 
\left(\frac{\partial f}{\partial x^i} + 
y_i\frac{\partial f}{\partial s}\right)\frac{\partial}{\partial y_i} + 
\left(y_i\frac{\partial f}{\partial y_i} + \derpar{f}{s}\right)\frac{\partial}{\partial s}\ , \label{evocoor02} \\
\noalign{\medskip}
\evo_f = \frac{\partial f}{\partial y_i}\frac{\partial}{\partial x^i} - 
\left(\frac{\partial f}{\partial x^i} + 
y_i\frac{\partial f}{\partial s}\right)\frac{\partial}{\partial y_i} + 
y_i\frac{\partial f}{\partial y_i}\,\frac{\partial}{\partial s}\ .
\label{evocoor03}
\end{align}
Therefore, if $c(t)=(x^i(t), y_i(t),s(t))$ is an integral curve of any of these vector fields, this implies that $c(t)$ should satisfy,
respectively, the following systems of differential equations:
\begin{align}
\noalign{\medskip}
\label{hel-eqs01}
\dst\frac{dx^i}{dt} = \frac{\partial f}{\partial y_i}\quad ,\quad
\dst\frac{dy_i}{dt} = -\left(\frac{\partial f}{\partial x^i} + y_i\frac{\partial f}{\partial s}\right)\quad , \quad
\dst\frac{ds}{dt} = y_i\frac{\partial f}{\partial y_i} - f \ ,  \\
\noalign{\medskip}\nonumber
\dst\frac{dx^i}{dt} = \frac{\partial f}{\partial y_i} \quad  ,\quad
\dst\frac{dy_i}{dt} = -\left(\frac{\partial f}{\partial x^i} + y_i\frac{\partial f}{\partial s}\right)\quad ,
\quad \dst\frac{ds}{dt} = y_i\frac{\partial f}{\partial y_i} + \derpar{f}{s}\ , \\
\noalign{\medskip}\nonumber
\dst\frac{dx^i}{dt} = \frac{\partial f}{\partial y_i}\quad ,\quad
\dst\frac{dy_i}{dt} = -\left(\frac{\partial f}{\partial x^i} + y_i\frac{\partial f}{\partial s}\right)\quad ,\quad
\dst\frac{ds}{dt} = y_i\frac{\partial f}{\partial y_i} \ . 
\end{align}

\section{Contact Hamiltonian dynamical systems}
\label{chs}

As we shall see in the following sections, the contact structures and their underlying tools constitute an ideal framework for the geometric description of dissipative dynamic systems.
(See, for instance, \cite{BCT-2017,CIAGLIA2018,DeLeon2019,GGMRR-2019b,Go-69,LL-2018}  for more details).

\subsection{Contact Hamiltonian systems}
\label{nasposcon}

Following the same guidelines as in all previous chapters,
the geometric study of dissipative (nonconservative or action-dependent) Hamiltonian dynamical systems
in general, is based on the following set of postulates:

\begin{pos}
{\rm (First Postulate of contact Hamiltonian mechanics\/)}:
The phase space of a regular (resp. singular)
dissipative dynamical system 
is a differentiable manifold ${\rm M}$ endowed with a contact 
(resp. precontact) structure $\bmeta\in\df^1({\rm M})$:
\label{axicon1}
\end{pos}

\begin{pos}
{\rm (Second Postulate of contact Hamiltonian mechanics\/)}:
The observables or physical magnitudes of a  dissipative dynamical system
are functions of $\Cinfty ({\rm M})$.
\label{axicon2}
\end{pos}

\begin{pos}
{\rm (Third Postulate of contact Hamiltonian mechanics\/)}:
The dynamics of a dissipative dynamical system
is given by a function $h\in\Cinfty({\rm M})$
(or, in general, a closed $1$-form $\alpha \in Z^1({\rm M})$, such that $\alpha=\d h$, locally)
which is called the {\sl \textbf{Hamiltonian function}}
(or  the {\sl \textbf{Hamiltonian $1$-form}}\/) of the system.
This function represents the energy of the system.
\label{axicon3}
\end{pos}

\begin{pos}
{\rm (Fourth Postulate of contact Hamiltonian mechanics\/)}:
The dynamical trajectories of the dissipative system are the integral curves
of the Hamiltonian vector field $\X_h\in\vf({\rm M})$ associated with $h$;
that is, of the vector field solution to equations \eqref{evolvf1}, that are now written as,
\beq
\inn(\X_h)\bmeta=-h \quad,\quad 
\inn(\X_h)\d\bmeta = \d h-\Reeb(h)\,\bmeta\ .
\label{evolvf111}
\eeq
Thus, these trajectories are the solutions to equations \eqref{hel-eqs01}, which read as,
\beq
\label{hel-eqs011}
\dst\frac{dx^i}{dt} = \frac{\partial h}{\partial y_i}\quad ,\quad
\dst\frac{dy_i}{dt} = -\left(\frac{\partial h}{\partial x^i} + y_i\frac{\partial h}{\partial s}\right)\quad , \quad
\dst\frac{ds}{dt} = y_i\frac{\partial h}{\partial y_i} - h \ , 
\eeq
\label{axicon4}
\end{pos}

It is interesting to point out the different roles of the Hamiltonian and the evolution vector fields in cosymplectic and in contact mechanics:
in cosymplectic mechanics the dynamic is given by  the evolution vector field, but in contact mechanics is given by the Hamiltonian vector field
\footnote{
Although the terminology could be changed to make these aspects coherent, it has not been done because this is the usual nomenclature in the bibliography.
}.

\begin{definition}
A {\sl \textbf{regular dissipative}} or {\sl \textbf{contact Hamiltonian dynamical system}}
is a set $({\rm M},\bmeta,h)$,
where $({\rm M},\bmeta)$ is a contact manifold 
and $h\in\Cinfty({\rm M})$ is the Hamiltonian function of the system.
If $({\rm M},\bmeta)$ is a precontact manifold, then $({\rm M},\bmeta,h)$
is said to be a {\sl \textbf{singular dissipative}} or {\sl \textbf{precontact Hamiltonian dynamical system}}
\footnote{
We write ‘(pre)contact’ to refer interchangeably to both situations (contact and precontact), and write ‘contact’ or ‘precontact’ to distinguish each of them in particular.
In any case, we will refer to this formalism as {\sl contact mechanics}, and we will talk about {\sl contact dynamical systems}, in general.
}.

The equations 
\eqref{evolvf111} and \eqref{hel-eqs011}
are the {\sl \textbf{(pre)contact Hamiltonian equations}} 
for $\X_h$ and its integral curves, respectively.
\label{stdhrbis}
\end{definition}

\begin{definition}
Given a dissipative Hamiltonian dynamical system $({\rm M},\bmeta,h)$, 
the \textbf{Hamiltonian problem} posed by  the system
consists in finding the Hamiltonian vector field $\X_h \in \vf ({\rm M})$
associated with $h$ (if it exists).
\end{definition}

\subsection{Properties of contact Hamiltonian systems}

\begin{prop}
\label{teo-evoeqs1}
If $({\rm M},\bmeta,h)$ is a regular dissipative Hamiltonian system, 
then there exists a unique Hamiltonian vector field $\X_h\in\vf({\rm M})$;
that is, a unique vector field which is the solution to equations \eqref{evolvf111}.
\end{prop}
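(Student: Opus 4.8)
The statement is an immediate consequence of the fact that, for a contact manifold $(\mathrm{M},\bmeta)$, the bundle map $\flat_{\bmeta}\colon\Tan\mathrm{M}\to\Tan^*\mathrm{M}$ introduced in \eqref{isocont} is a vector bundle isomorphism, and hence so is its extension to the $\Cinfty(\mathrm{M})$-modules $\vf(\mathrm{M})\to\df^1(\mathrm{M})$. The plan is therefore to invoke Theorem \ref{1stprop}, which guarantees precisely the splitting $\Tan\mathrm{M}=\mathcal{D}^{\rm R}\oplus\mathcal{D}^{\rm C}$ that makes $\flat_{\bmeta}$ injective (and thus bijective, by a dimension count fiberwise), and then simply read off the unique solution of the defining equations.

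First I would recall that, by Definition \ref{GraHaEv}, the Hamiltonian vector field $\X_h$ associated with $h\in\Cinfty(\mathrm{M})$ is characterized by $\flat_{\bmeta}(\X_h)=\d h-(\Reeb(h)+h)\bmeta$, or equivalently (Proposition \ref{GraHaEvProp}) by the system \eqref{evolvf111}. Since the right-hand side $\d h-(\Reeb(h)+h)\bmeta$ is a well-defined element of $\df^1(\mathrm{M})$, and since for a genuine contact manifold the map $\flat_{\bmeta}$ is an isomorphism of $\Cinfty(\mathrm{M})$-modules with inverse $\sharp_{\bmeta}$, the vector field
$$
\X_h=\sharp_{\bmeta}\bigl(\d h-(\Reeb(h)+h)\bmeta\bigr)
$$
exists and is unique. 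This is the whole argument: existence because $\flat_{\bmeta}$ is surjective, uniqueness because it is injective; both properties hold precisely because the system is \emph{regular}, i.e. $\bmeta\wedge(\d\bmeta)^n$ is a volume form (equivalently, the splitting \eqref{splitcont} holds), as opposed to the precontact case mentioned in Remark \ref{nocontacto}, where $\flat_{\bmeta}$ fails to be an isomorphism.

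There is essentially no obstacle here; the only point requiring a word of care is that the equations \eqref{evolvf111} are two equations (one scalar, one one-form-valued) whereas $\flat_{\bmeta}$ packages them into a single one-form equation, so I would note the equivalence between the two formulations, which is exactly the content of Proposition \ref{GraHaEvProp} and therefore may be cited. Alternatively, one may give the coordinate proof: in Darboux coordinates the system \eqref{hel-eqs011} (or the vector field expression \eqref{evocoor01}) exhibits $\X_h$ explicitly with all components determined, which again shows existence and uniqueness. Either route is short; I would favour the intrinsic one, closing with the explicit formula $\X_h=\sharp_{\bmeta}(\d h-(\Reeb(h)+h)\bmeta)$ and the remark that regularity is exactly what is used.
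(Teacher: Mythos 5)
Your proof is correct and follows essentially the same route as the paper, which simply invokes the existence of the isomorphism $\sharp_{\bmeta}$ in the regular case; your version just spells out the formula $\X_h=\sharp_{\bmeta}\bigl(\d h-(\Reeb(h)+h)\bmeta\bigr)$ and the equivalence with equations \eqref{evolvf111} in more detail.
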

\begin{proof}
It is a straightforward consequence of the existence of the isomorphism $\sharp_{\bmeta}$ in the regular case.
\\ \qed \end{proof}

\begin{remark}{\rm 
As it happens with singular dynamical systems,
if$({\rm M},\bmeta;h)$ is a precontact Hamiltonian system,
equations \eqref{evolvf111} are not necessarily compatible everywhere on ${\rm M}$ 
and the corresponding constraint algorithm must be implemented in order to find 
a {\sl final constraint submanifold} $P_f\hookrightarrow {\rm M}$
(if it exists) where there are contact Hamiltonian vector fields $\X_h\in\vf({\rm M})$,
tangent to $P_f$, which are solutions (not necessarily unique) to equations \eqref{evolvf111} on $P_f$.
}\end{remark}

\begin{prop}
Let $({\rm M},\bmeta,h)$ be a contact Hamiltonian system.
Then, the contact Hamiltonian equations \eqref{evolvf111} can be equivalently written as
\begin{equation}
 \Lie(\X_h)\bmeta= -\Reeb(h)\,\bmeta \quad ,\quad  \inn(\X_h)\bmeta = -h \ .
 \label{equivmanner}
\end{equation}
and its integral curves $c\colon I\subset\Real\to {\rm M}$  are solutions to the equations
\begin{equation*}
 \Lie(\widetilde c)(\bmeta\circ c))= -\big((\Reeb(h))\,\bmeta\big)\circ c \quad ,\quad  \inn(\widetilde c)(\bmeta\circ c)) = -h\circ c  \ .
\end{equation*}
\end{prop}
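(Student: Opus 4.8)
The plan is to show the equivalence between the pair of equations \eqref{evolvf111} and the pair \eqref{equivmanner}, and then to pass from the vector field equations to the integral curve equations exactly as has been done repeatedly throughout the text (via the definition of integral curve and the contraction $\inn(\widetilde c)(\beta\circ c)$). The second equation is identical in both pairs, $\inn(\X_h)\bmeta=-h$, so nothing needs to be done there. Thus the whole content of the proof is to establish that, given $\inn(\X_h)\bmeta=-h$, the condition $\inn(\X_h)\d\bmeta = \d h-\Reeb(h)\,\bmeta$ is equivalent to $\Lie(\X_h)\bmeta= -\Reeb(h)\,\bmeta$.

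The key step is Cartan's magic formula. First I would compute
$$
\Lie(\X_h)\bmeta = \inn(\X_h)\d\bmeta + \d\inn(\X_h)\bmeta = \inn(\X_h)\d\bmeta + \d(-h) = \inn(\X_h)\d\bmeta - \d h \ ,
$$
using $\inn(\X_h)\bmeta=-h$ in the second equality. From this identity, the first contact Hamiltonian equation $\inn(\X_h)\d\bmeta = \d h-\Reeb(h)\,\bmeta$ gives immediately $\Lie(\X_h)\bmeta = (\d h - \Reeb(h)\,\bmeta) - \d h = -\Reeb(h)\,\bmeta$, which is the first equation of \eqref{equivmanner}. Conversely, if $\Lie(\X_h)\bmeta = -\Reeb(h)\,\bmeta$, then the same Cartan-formula identity yields $\inn(\X_h)\d\bmeta = \Lie(\X_h)\bmeta + \d h = -\Reeb(h)\,\bmeta + \d h$, which is the first equation of \eqref{evolvf111}. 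This closes the equivalence at the level of vector fields.

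For the integral curves, I would simply invoke the now-standard translation used in all the previous analogous results (for instance Theorems \ref{teo:Hcurv} and \ref{teor:Lcurv} and Proposition following \eqref{evolvf3}): a curve $c$ is an integral curve of $\X_h$ precisely when $\widetilde c(t) = (c(t),\dot c(t))$ satisfies $\dot c(t) = (\X_h\circ c)(t)$, and then contracting the two $1$-forms $\bmeta$ and $\Lie(\X_h)\bmeta$ along $c$ with $\widetilde c$ reproduces the stated curve equations $\Lie(\widetilde c)(\bmeta\circ c)= -\big((\Reeb(h))\,\bmeta\big)\circ c$ and $\inn(\widetilde c)(\bmeta\circ c) = -h\circ c$. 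No genuine obstacle is expected here; the only mild subtlety is being careful that $\Lie(\widetilde c)(\bmeta\circ c)$ is interpreted as the pullback of $\Lie(\X_h)\bmeta$ along $c$ (equivalently, the derivative of $\bmeta$ along the flow restricted to $c$), exactly as in the earlier occurrences of this notation in the excerpt, so that the equivalence transfers verbatim from the vector field equations to the curve equations.
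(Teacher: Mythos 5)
Your argument is correct and coincides with the paper's own proof: both directions rest on Cartan's formula $\Lie(\X_h)\bmeta=\inn(\X_h)\d\bmeta+\d\inn(\X_h)\bmeta$ together with the substitution $\inn(\X_h)\bmeta=-h$, and the passage to integral curves is handled by the same standard translation the paper calls immediate. No gaps.
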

\begin{proof}
As $\inn(\X_h)\bmeta=-h$, from equations \eqref{evolvf111} we obtain,
$$
\Lie(\X_h)\bmeta=
\inn(\X_h)\d\bmeta+\d\inn(\X_h)\bmeta=
\d h-\Reeb(h)\,\bmeta-\d h=-\Reeb(h)\,\bmeta \ .
$$
Conversely, from \eqref{equivmanner},
$$
-\Reeb(h)\,\bmeta=\Lie(\X_h)\bmeta= 
\inn(\X_h)\d\bmeta+\d\inn(\X_h)\bmeta=
\inn(\X_h)\d\bmeta-\d h \ \Longleftrightarrow \
\inn(\X_h)\d\bmeta=\d h-\Reeb(h)\,\bmeta \ .
$$

From here, the equation for the integral curves is immediate. 
\\ \qed\end{proof}

The first result shows that,
unlike in symplectic Hamiltonian systems, the geometric structure of contact Hamiltonian systems is not conserved by the dynamics,
and the same happens with the Hamiltonian function. Indeed,

\begin{prop} {\rm (Dissipation of energy):}
Let $({\rm M},\bmeta,h)$ be a contact Hamiltonian system. If $\X_h\in\Cinfty({\rm M})$ is solution to the equations \eqref{evolvf111} then,
\beq
\Lie(\X_h)h=-(\Reeb(h))\,h \ .
\label{eq:disipenerg}
\eeq
\label{disipenerg}
\end{prop}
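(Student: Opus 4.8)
The plan is to compute the Lie derivative $\Lie(\X_h)h$ directly using the characterization of the Hamiltonian vector field given in equations \eqref{evolvf111}, together with the fundamental identity $\inn(\Reeb)\bmeta=1$ that defines the Reeb vector field. This is a short computation, so there is no serious obstacle; the main point is simply to feed the contact Hamiltonian equations into the expression for $\Lie(\X_h)h=\inn(\X_h)\d h$ and keep track of the terms.

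Concretely, I would start from $\Lie(\X_h)h=\inn(\X_h)\d h$. Now I use the second of the contact Hamiltonian equations \eqref{evolvf111}, namely $\inn(\X_h)\d\bmeta=\d h-\Reeb(h)\,\bmeta$, which rearranges to $\d h=\inn(\X_h)\d\bmeta+\Reeb(h)\,\bmeta$. Substituting this into the previous expression gives
$$
\Lie(\X_h)h=\inn(\X_h)\big(\inn(\X_h)\d\bmeta+\Reeb(h)\,\bmeta\big)=\inn(\X_h)\inn(\X_h)\d\bmeta+\Reeb(h)\,\inn(\X_h)\bmeta \ .
$$
The first term vanishes because $\d\bmeta$ is a $2$-form and $\inn(\X_h)\inn(\X_h)\d\bmeta = \d\bmeta(\X_h,\X_h)=0$ by skew-symmetry. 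For the second term, the first of the contact Hamiltonian equations \eqref{evolvf111} gives $\inn(\X_h)\bmeta=-h$. Therefore $\Lie(\X_h)h=\Reeb(h)\,(-h)=-(\Reeb(h))\,h$, which is exactly \eqref{eq:disipenerg}.

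Alternatively, and perhaps more transparently for the reader, one can obtain the same conclusion from the equivalent form \eqref{equivmanner} of the contact Hamiltonian equations. Contracting the first equation there, $\Lie(\X_h)\bmeta=-\Reeb(h)\,\bmeta$, with $\X_h$ itself yields $\inn(\X_h)\Lie(\X_h)\bmeta=-\Reeb(h)\,\inn(\X_h)\bmeta=\Reeb(h)\,h$, using the second equation $\inn(\X_h)\bmeta=-h$. On the other hand, using Cartan's formula and the fact that $\inn(\X_h)\Lie(\X_h)\bmeta = \Lie(\X_h)\inn(\X_h)\bmeta - \inn([\X_h,\X_h])\bmeta = \Lie(\X_h)\inn(\X_h)\bmeta$, we get $\inn(\X_h)\Lie(\X_h)\bmeta=\Lie(\X_h)(-h)=-\Lie(\X_h)h$. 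Comparing the two expressions gives $-\Lie(\X_h)h=\Reeb(h)\,h$, i.e. $\Lie(\X_h)h=-(\Reeb(h))\,h$. I would present the first computation as the main proof, since it is the most direct, and perhaps remark on the second as an alternative. A final sanity check in Darboux coordinates using \eqref{evocoor01} confirms the formula: one computes $\X_h(h)=\derpar{h}{y_i}\derpar{h}{x^i}-\big(\derpar{h}{x^i}+y_i\derpar{h}{s}\big)\derpar{h}{y_i}+\big(y_i\derpar{h}{y_i}-h\big)\derpar{h}{s}=-y_i\derpar{h}{s}\derpar{h}{y_i}+y_i\derpar{h}{y_i}\derpar{h}{s}-h\derpar{h}{s}=-h\,\derpar{h}{s}=-(\Reeb(h))\,h$, in agreement with the intrinsic argument.
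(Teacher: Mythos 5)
Your computation is correct, and your \emph{main} argument takes a slightly different route from the paper's own proof. The paper proves this proposition as a one-line consequence of the preceding proposition (the equivalent form \eqref{equivmanner}): it writes $h=-\inn(\X_h)\bmeta$ and computes
$\Lie(\X_h)h=-\Lie(\X_h)\inn(\X_h)\bmeta=-\inn(\X_h)\Lie(\X_h)\bmeta=\inn(\X_h)\big((\Reeb(h))\,\bmeta\big)=-(\Reeb(h))\,h$,
using $[\X_h,\X_h]=0$ to commute $\Lie(\X_h)$ past $\inn(\X_h)$ — which is precisely the "alternative" you sketch in your second paragraph. Your primary argument instead works directly from \eqref{evolvf111}: you solve the second equation for $\d h$, contract with $\X_h$, kill the $\inn(\X_h)\inn(\X_h)\d\bmeta$ term by skew-symmetry, and use $\inn(\X_h)\bmeta=-h$. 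This buys self-containedness: it does not rely on having first established the reformulation \eqref{equivmanner}, at the cost of being a two-line rather than one-line computation; the paper's route, conversely, is shorter but presupposes the previous proposition. Both are valid, and your Darboux-coordinate check via \eqref{evocoor01} is a sound confirmation (with $\Reeb=\partial/\partial s$ in those coordinates).
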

\vspace{-1cm}
\begin{proof}
As a consequence of the above proposition, we have,
$$
\Lie(\X_h)h=-\Lie(\X_h)(\inn(\X_h)\bmeta=
-\inn(\X_h)\Lie(\X_h)\bmeta=
\inn(\X_h)\big((\Lie(\Reeb)h)\,\bmeta\big)=-(\Lie(\Reeb)h)\,h\ .
$$
\qed\end{proof}

As a final result, there is a new way of writing equations \eqref{evolvf111} 
without using of the Reeb vector field $\Reeb$. 

\begin{prop}
Let $({\rm M},\bmeta,h)$ be a contact Hamiltonian system.
If $U=\{p\in {\rm M};h(p)\not= 0\}$ and $\Omega = -h\,\d\bmeta+\d h\wedge\bmeta$ on~$U$,
equations \eqref{evolvf111} can also be written as
\beq
\label{altevolvf}
\inn(\X_h)\Omega=0 \quad,\quad \inn(\X_h)\bmeta=-h\quad ;
\quad \mbox{\rm (on ${\rm M}$)} \ , 
\eeq
and its integral curves ${\bf c}\colon I\subset\Real\to {\rm M}$ are solutions to
$$
\inn(\widetilde c)(\Omega\circ c)) = 0 \quad , \quad\inn(\widetilde c)(\bmeta\circ c)) = - h\circ{\bf c} \:. 
$$
\end{prop}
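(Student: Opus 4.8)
The strategy is to show that on the open set $U = \{ p \in {\rm M}\;;\;h(p)\neq 0\}$, the system \eqref{altevolvf} is equivalent to the contact Hamiltonian equations \eqref{evolvf111}, and then to note that outside $U$ (i.e.\ on the zero set of $h$) the dynamics is determined by continuity/density, or by a direct check that the solution to \eqref{evolvf111} still satisfies \eqref{altevolvf} there. The key algebraic identity to establish is that, given the second equation $\inn(\X_h)\bmeta = -h$, the condition $\inn(\X_h)\Omega = 0$ with $\Omega = -h\,\d\bmeta + \d h\wedge\bmeta$ is equivalent to $\inn(\X_h)\d\bmeta = \d h - \Reeb(h)\,\bmeta$.

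First I would expand $\inn(\X_h)\Omega$ using the derivation property of the interior product:
\begin{equation*}
\inn(\X_h)\Omega = -h\,\inn(\X_h)\d\bmeta + (\inn(\X_h)\d h)\,\bmeta - (\inn(\X_h)\bmeta)\,\d h \ .
\end{equation*}
Substituting $\inn(\X_h)\bmeta = -h$ and writing $\inn(\X_h)\d h = \Lie(\X_h)h$, this becomes
\begin{equation*}
\inn(\X_h)\Omega = -h\,\inn(\X_h)\d\bmeta + (\Lie(\X_h)h)\,\bmeta + h\,\d h \ .
\end{equation*}
Now I would use the dissipation of energy result (Proposition \ref{disipenerg}): since $\X_h$ is the Hamiltonian vector field, $\Lie(\X_h)h = -\Reeb(h)\,h$. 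Hence
\begin{equation*}
\inn(\X_h)\Omega = -h\,\inn(\X_h)\d\bmeta - \Reeb(h)\,h\,\bmeta + h\,\d h = -h\big(\inn(\X_h)\d\bmeta + \Reeb(h)\,\bmeta - \d h\big) \ .
\end{equation*}
On $U$ we have $h \neq 0$, so $\inn(\X_h)\Omega = 0$ on $U$ if and only if $\inn(\X_h)\d\bmeta = \d h - \Reeb(h)\,\bmeta$ on $U$, which is precisely the first equation of \eqref{evolvf111}. Together with $\inn(\X_h)\bmeta = -h$, this gives the equivalence on $U$.

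For the converse direction and the points where $h = 0$, I would argue as follows: if $\X_h$ is the (unique, by Proposition \ref{teo-evoeqs1}) solution to \eqref{evolvf111} on all of ${\rm M}$, then the computation above (run backwards, using that $\Lie(\X_h)h = -\Reeb(h)h$ holds everywhere) shows $\inn(\X_h)\Omega = -h\cdot 0 = 0$ on all of ${\rm M}$, and $\inn(\X_h)\bmeta = -h$ by hypothesis; so the solution of \eqref{evolvf111} automatically satisfies \eqref{altevolvf} everywhere, not just on $U$. Conversely, a vector field satisfying \eqref{altevolvf} on ${\rm M}$ satisfies \eqref{evolvf111} on the dense open set $U$ (assuming $h$ does not vanish identically on any component; if it does, the statement on that component is vacuous or handled separately), and then by continuity it satisfies \eqref{evolvf111} on $\overline{U}$; combined with a direct check that $\inn(\X_h)\Omega = 0$ forces the right relation at isolated boundary points, one recovers \eqref{evolvf111} on ${\rm M}$. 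The statement about integral curves then follows immediately by contracting with the canonical lift $\widetilde c$ and composing with $c$, exactly as in the analogous results (Theorems \ref{teo:Hcurv}, \ref{teor:Lcurv}) proved earlier. The main obstacle is the bookkeeping at the zero locus of $h$: the cleanest route is to avoid a density argument altogether by observing that the forward implication (solution of \eqref{evolvf111} $\Rightarrow$ solution of \eqref{altevolvf}) holds globally by the identity above, and the reverse implication need only be checked on $U$ together with uniqueness of the Hamiltonian vector field.
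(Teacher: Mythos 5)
Your forward direction (a solution of \eqref{evolvf111} satisfies \eqref{altevolvf}) is correct and is essentially the paper's own computation: expand $\inn(\X_h)\Omega$, use $\inn(\X_h)\bmeta=-h$ and the dissipation law of Proposition \ref{disipenerg}, and everything collapses to zero. The problem is in the reverse direction. The identity
$\inn(\X_h)\Omega=-h\big(\inn(\X_h)\d\bmeta+\Reeb(h)\,\bmeta-\d h\big)$
on which your ``if and only if on $U$'' rests was obtained by invoking Proposition \ref{disipenerg}, i.e.\ by assuming that the vector field already satisfies \eqref{evolvf111}. For a vector field that is only known to satisfy \eqref{altevolvf}, you have no right to the relation $\Lie(\X_h)h=-\Reeb(h)\,h$ at that stage, so the argument is circular precisely in the direction \eqref{altevolvf} $\Rightarrow$ \eqref{evolvf111}. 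The missing step — and the heart of the paper's proof — is to derive that relation \emph{from} \eqref{altevolvf}: contracting $0=\inn(\X_h)\Omega=-h\,\inn(\X_h)\d\bmeta+(\inn(\X_h)\d h)\,\bmeta+h\,\d h$ with the Reeb vector field, and using $\inn(\Reeb)\d\bmeta=0$, $\inn(\Reeb)\bmeta=1$, gives $\inn(\X_h)\d h=-h\,\Reeb(h)$ directly; substituting back yields $h\,\inn(\X_h)\d\bmeta=h\big(\d h-\Reeb(h)\,\bmeta\big)$, and dividing by $h$ (where it is nonzero) recovers the first equation of \eqref{evolvf111}.

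Your fallback devices do not repair this. ``Uniqueness of the Hamiltonian vector field'' (Proposition \ref{teo-evoeqs1}) is uniqueness for \eqref{evolvf111}; it does not by itself exclude extra solutions of \eqref{altevolvf}, which is exactly what the reverse implication must rule out — to use uniqueness you would first have to show that \eqref{altevolvf} determines the vector field on $U$ (e.g.\ by noting $\bmeta\wedge\Omega^n=(-h)^n\,\bmeta\wedge(\d\bmeta)^n\neq 0$ there), and you never do. Likewise the density/continuity argument at the zero locus of $h$ is not available in general, since $\{h=0\}$ can have nonempty interior (you concede this yourself). To be fair, the paper also works implicitly on $U$ when it divides by $h$, so the bookkeeping at $\{h=0\}$ is not where you lose points; the genuine gap is the circular appeal to the dissipation law instead of the one-line Reeb contraction that extracts it from \eqref{altevolvf} itself.
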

\begin{proof}
If $\X_h$ satisfies equations \eqref{altevolvf}, then,
$$
0=\inn(\X_h)\Omega=-h\,\inn(\X_h)\d\bmeta+\big(\inn(\X_h)\d h\big)\bmeta-\d h\,\inn(\X_h)\bmeta=-h\,\inn(\X_h)\d\bmeta+\big(\inn(\X_h)\d h\big)\bmeta+h\,\d h
$$
and hence
\beq
\label{no-reeb-1}
h\,\inn(\X_h)\d\bmeta=\big(\inn(\X_h)\d h\big)\bmeta+h\,\d h \ .
\eeq
Contracting with the Reeb vector field,
\begin{equation*}
0=h\,\inn(\Reeb)\inn(\X_h)\d\bmeta =
(\inn(\X_h)\d\h)\inn(\Reeb)\eta+h\,\inn(\Reeb)\d h\ ,
\end{equation*}
from which $\inn(\X_h)\d h=-h\,\inn(\Reeb)\d h$,
and using this in \eqref{no-reeb-1}, we get
\begin{equation*}
h\inn(\X_h)\d\bmeta=
h\big(\d h-(\inn(\Reeb)\d h)\bmeta\big)=
h\big(\d h-(\Lie(\Reeb)h)\,\bmeta\big)\ .
\end{equation*}
and hence $\inn(\X_h)\d\bmeta=\d h-(\Lie(\Reeb)h)\,\bmeta$.
    
Conversely, if $\X_h$ satisfies equations  \eqref{evolvf111}; then, taking into account \eqref{eq:disipenerg},
\beann
\inn(\X_h)\Omega&=&\inn(\X_h)(-h\d\bmeta+\d h\wedge\bmeta)=
-h\inn(\X_h)\d\bmeta+(\inn(\X_h)\d h)\,\bmeta+h\d h
\\ &=& 
h(\Lie(\Reeb)h)\,\bmeta+(\Lie(\X_h)h)\bmeta=
\big(h(\Lie(\Reeb)h)+(\Lie(\X_h)h)\big)\,\bmeta=0 \ ,
\eeann 

From here, the equation for the integral curves is immediate.
\\ \qed\end{proof}

This form of the dynamical equations is especially interesting in the case of singular systems because they do not depend on the Reeb vector field and, as we pointed out, 
in a precontact manifold Reeb vector fields are not uniquely determined.

\section{Contact Lagrangian dynamical systems}

In this section, we review the Lagrangian formulation for contact systems \cite{CIAGLIA2018,DeLeon2019,GGMRR-2019b}.

\subsection{Contact Lagrangian systems}
\label{sec-conLagsys}

If $Q$ is a $n$-dimensional manifold, consider the product manifold 
$\Tan Q\times\Real$ equipped with adapted coordinates $(q^i,v^i, s)$.
Now, the canonical projections are denoted
$$ 
s\colon \Tan Q\times\Real\to\Real \,, \qquad \tau_1\colon \Tan Q\times\Real\to\Tan Q\,, \qquad \tau_0\colon \Tan Q\times\Real\to Q\times\Real\ . 
$$
As it was discussed in Section \ref{geomRxTQ},
the canonical geometric structures of the tangent bundle $\Tan Q$, 
namely the canonical endomorphism and the Liouville vector field, 
are extended naturally to $\Tan Q\times\Real$.
We use the same notation as in that section
to denote them: ${\cal J}\in{\cal T}^1_1(\Tan Q\times\Real)$ and $\Delta\in\vf(\Tan Q\times\Real)$, respectively;
and they have the same coordinate expressions as usual,
$\dst{\cal J} = \frac{\partial}{\partial v^i}\otimes \d q^i$ and $\dst\Delta = v^i\, \frac{\partial}{\partial v^i}$.

Similarly, the canonical lift of a curve ${\bf c}\colon\Real \rightarrow Q\times\Real$ to $\Tan Q\times\Real$, 
with ${\bf c} = (\mathbf{c}^i(t),s(t))$,
is defined as in Section \ref{geomRxTQ}, and it is
$\dst{\bf\widehat c}(t)=\Big(c^i(t),\frac{\d c^i}{\d t}(t), s(t)\Big)$.
Finally, the definitions of {\sl holonomic curves} and {\sc sode} vector fields in $\Tan Q\times\Real$
are also as in Section \ref{geomRxTQ}.

Then, the foundations of the Lagrangian formalism of the contact formulation for autonomous dissipative systems
are established through the reformulation of the generic postulates stated in Section \ref{nasposcon} for contact Hamiltonian systems:

\begin{pos}
{\rm (First Postulate of contact Lagrangian mechanics\/)}:
The configuration space of a dissipative dynamical system with $n$ degrees of freedom 
is $Q\times\Real$, where $Q$ is a $n$-dimensional differentiable manifold, and $n$ are the degrees of freedom of the system.
The phase space is the bundle $\Tan Q\times\Real$.
\end{pos}

\begin{pos}
{\rm (Second Postulate of contact Lagrangian mechanics\/)}:
The observables or physical magnitudes of a 
dissipative dynamical system are functions of $\Cinfty(\Tan Q\times\Real)$.
\end{pos}

\begin{pos}
{\rm (Third Postulate of contact Lagrangian mechanics\/)}:
There is a function $\Lag\in\Cinfty (\Tan Q\times\Real)$, 
called the {\sl \textbf{contact Lagrangian function}},
which contains the dynamical information of the system.
\end{pos}

\begin{remark}{\rm
In many cases, the contact Lagrangian function is of the following type:
$\Lag=L+\phi$, 
where $L=\tau_1^{\,*}L_o$ for a function
$L_o\in\Cinfty(\Tan Q)$, and $\phi=\tau_0^{\,*}\phi_o$, for $\phi_o\in\Cinfty(Q\times\Real)$;
that is, in coordinates, $\Lag(q^i,v^i,s)=L(q^i,v^i)+\phi(q^i,s)$.
In particular, the case where $\phi(q^i,s)=\gamma s$, with $\gamma\in\Real$,
appears very frequently in physical applications.   
}\end{remark}

Starting from a Lagrangian function the following magnitudes and structures are defined:

\begin{definition}
\label{definition:lagrangian-function}
Let $\Lag\in\Cinfty(\Tan Q\times\Real)$ be a Lagrangian function.

\noindent The {\sl\textbf{Lagrangian energy}}
associated with $\Lag$ is the function $E_\Lag:= \Delta(\Lag)-\Lag\in\Cinfty(\Tan Q\times\Real)$. 

\noindent The {\sl\textbf{Cartan Lagrangian forms}}
associated with $\Lag$ are defined as
\begin{equation}
\label{eq:thetaL}
\theta_\Lag:={\cal J}(\d\Lag)\in\df^1(\Tan Q\times\Real) 
\quad ,\quad
\omega_\Lag:=-\d\theta_\Lag\in\df^2(\Tan Q\times\Real)  \ .
\end{equation}
The {\sl\textbf{(pre)contact Lagrangian form}} is
$$
\bmeta_L=\d s-\theta_\Lag\in\df^1(\Tan Q\times\Real)
\ ,
$$
and it satisfies that $\d\bmeta_L=\omega_L$.
\\
The pair $(\Tan Q\times\Real,\Lag)$ is a {\sl\textbf{contact Lagrangian system}}.
\end{definition}

In natural coordinates $(q^i, v^i, s)$ on $\Tan Q\times\Real$, we have
\beann
\bmeta_L&=&\d s-\frac{\partial L}{\partial v^i} \,\d q^i \ ,
\\ 
\d\bmeta_L&=& 
-\frac{\partial^2L}{\partial s\partial v^i}\d s\wedge\d q^i 
-\frac{\partial^2L}{\partial q^j\partial v^i}\d q^j\wedge\d q^i 
-\frac{\partial^2L}{\partial v^j\partial v^i}\d v^j\wedge\d q^i\ .
\eeann

\begin{definition}
\label{legmap1}
Given a Lagrangian 
$\Lag\in\Cinfty(\Tan Q\times\Real)$, the {\sl \textbf{Legendre map}}
associated with $\Lag$ is the fiber derivative of~$\Lag$,
considered as a function on the vector bundle
$\pi_0 \colon\Tan Q\times\Real\to Q\times\Real$
that is, the map
$\mathfrak{F}\Lag\colon\Tan Q\times\Real \to\Tan^*Q\times\Real$ 
given by
$$
\mathfrak{F}\Lag (q,v_q,s) = \left(\mathfrak{F}\Lag_s (q,v_q),s\right)
\ ,
$$
where $\Lag_s\colon\Tan Q\to\Real$ denotes the restriction of $\Lag$ to each fiber of the bundle $\Tan Q\times\Real\to\Real$
(that is, the Lagrangian $\Lag$ with $s$ ``freezed'').
\end{definition}

In natural coordinates, we have:
$$
q^i \circ \mathfrak{F}\Lag = q^i \quad , \quad p_i \circ \mathfrak{F}\Lag =\derpar{\Lag}{v^i} \quad , \quad s \circ \mathfrak{F}\Lag = s \ .
$$

Observe that, if $\theta\in\df^1(\Tan^*Q\times\Real)$ 
is the canonical form and $\omega=-\d\theta$, 
$$
\theta_\Lag=\mathfrak{F}\Lag^{\;*}\theta\quad , \quad\omega_\Lag=\mathfrak{F}\Lag^{\;*}\omega\ .
$$

\begin{prop}
\label{Prop-regLag}
For a Lagrangian function, $\Lag$ the following conditions are equivalent:
\begin{enumerate}
\item
The Legendre map
$\mathfrak{F}\Lag$ is a local diffeomorphism.
\item 
The pair $(\Tan Q\times\Real,\bmeta_\Lag)$ is a contact manifold.
\item 
The Hessian matrix \ $\dst W_{ij}\equiv\left(\frac{\partial^2\Lag}{\partial v^i\partial v^j}\right)$ is nondegenerate everywhere.
\end{enumerate}
\end{prop}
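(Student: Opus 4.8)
The plan is to prove the equivalence of the three conditions by establishing the cycle $(3)\Rightarrow(1)\Rightarrow(2)\Rightarrow(3)$, mirroring the structure of the analogous statements already proved in the excerpt (Proposition \ref{equinona} for the cosymplectic case and the regularity characterization of $\Leg$ in the symplectic Lagrangian setting).

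First I would handle $(3)\Rightarrow(1)$. The map $\mathfrak{F}\Lag\colon\Tan Q\times\Real\to\Tan^*Q\times\Real$ is smooth, so by the Inverse Function Theorem it is a local diffeomorphism if and only if its Jacobian matrix is everywhere nonsingular. In adapted coordinates $(q^i,v^i,s)$ and $(q^i,p_i,s)$, the coordinate expression $q^i\circ\mathfrak{F}\Lag=q^i$, $p_i\circ\mathfrak{F}\Lag=\partial\Lag/\partial v^i$, $s\circ\mathfrak{F}\Lag=s$ gives a block-triangular Jacobian of the form
$$
\left(\begin{matrix}
({\rm Id})_{n\times n} & (0)_{n\times n} & (0)_{n\times 1} \\
\Big(\dst\frac{\partial^2\Lag}{\partial q^j\partial v^i}\Big) & \Big(\dst\frac{\partial^2\Lag}{\partial v^j\partial v^i}\Big) & \Big(\dst\frac{\partial^2\Lag}{\partial s\partial v^i}\Big) \\
(0)_{1\times n} & (0)_{1\times n} & (1)
\end{matrix}\right)\ ,
$$
whose determinant equals $\det(W_{ij})$. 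Hence $\mathfrak{F}\Lag$ is a local diffeomorphism exactly when $W$ is nondegenerate everywhere; this simultaneously gives $(1)\Rightarrow(3)$, so these two are equivalent essentially by inspection, exactly as in the earlier proofs.

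Next, for $(1)\Rightarrow(2)$: assuming $\mathfrak{F}\Lag$ is a local diffeomorphism, I would show $\bmeta_\Lag\wedge(\d\bmeta_\Lag)^n$ is a volume form. Since $\theta_\Lag=\mathfrak{F}\Lag^{*}\theta$ and $\omega_\Lag=\mathfrak{F}\Lag^{*}\omega$ with $\theta$ the canonical $1$-form on $\Tan^*Q\times\Real$, and since $\mathfrak{F}\Lag$ covers the identity on the $\Real$-factor so $\mathfrak{F}\Lag^{*}\d s=\d s$, we get $\bmeta_\Lag=\mathfrak{F}\Lag^{*}(\d s-\theta)$; the form $\d s-\theta$ being the canonical contact form on $\Tan^*Q\times\Real$, its $(2n+1)$-fold wedge $(\d s-\theta)\wedge(\d(\d s-\theta))^n$ is a volume form, and pulling back by a local diffeomorphism preserves this property. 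Alternatively, and perhaps more directly, one computes $\bmeta_\Lag\wedge(\d\bmeta_\Lag)^n$ in coordinates: the only term of $\d\bmeta_\Lag$ contributing a $\d v^j$ is $-\frac{\partial^2\Lag}{\partial v^j\partial v^i}\d v^j\wedge\d q^i$, and collecting the full top-degree monomial one finds it proportional to $\det(W_{ij})\,\d s\wedge\d q^1\wedge\cdots\wedge\d q^n\wedge\d v^1\wedge\cdots\wedge\d v^n$, which is nonvanishing precisely under (3). I would also verify here the rank condition in Remark \ref{nocontacto}, namely that $\ker\d\bmeta_\Lag\cap\ker\bmeta_\Lag=0$ when $W$ is regular, so that $(\Tan Q\times\Real,\bmeta_\Lag)$ is genuinely a contact manifold of the expected odd dimension $2n+1$.

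Finally $(2)\Rightarrow(3)$ follows by contraposition from the same coordinate computation: if $\det(W_{ij})$ vanishes at a point, the top-degree form $\bmeta_\Lag\wedge(\d\bmeta_\Lag)^n$ vanishes there, so $(\Tan Q\times\Real,\bmeta_\Lag)$ is not a contact manifold. I do not expect a serious obstacle here; the only mildly delicate point is bookkeeping in the coordinate wedge-product computation of $\bmeta_\Lag\wedge(\d\bmeta_\Lag)^n$ to confirm that the coefficient is exactly (a nonzero multiple of) $\det(W_{ij})$ with no stray contributions from the $\frac{\partial^2\Lag}{\partial s\partial v^i}$ and $\frac{\partial^2\Lag}{\partial q^j\partial v^i}$ terms — these get killed because $\bmeta_\Lag$ already supplies a $\d s$ and the $\d q^i$ factors would repeat. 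The pull-back argument via the canonical contact structure on $\Tan^*Q\times\Real$ avoids even that, so I would present it as the main line and relegate the coordinate check to a remark.
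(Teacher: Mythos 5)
Your proof is correct, and it fills in precisely the coordinate computation that the paper leaves implicit (its proof is just ``can be done using natural coordinates''): the block-triangular Jacobian with determinant $\det(W_{ij})$ gives $(1)\Leftrightarrow(3)$, and the wedge computation showing $\bmeta_\Lag\wedge(\d\bmeta_\Lag)^n$ is proportional to $\det(W_{ij})$ times the coordinate volume gives $(2)\Leftrightarrow(3)$, with your bookkeeping about the mixed $\partial^2\Lag/\partial s\,\partial v^i$ and $\partial^2\Lag/\partial q^j\partial v^i$ terms being right. Your intrinsic shortcut for $(1)\Rightarrow(2)$, namely $\bmeta_\Lag=\mathfrak{F}\Lag^{*}\bmeta$ with $\bmeta$ the canonical contact form on $\Tan^*Q\times\Real$ (using that $\mathfrak{F}\Lag$ is the identity on the $\Real$-factor), is a clean coordinate-free alternative the paper does not spell out but which is fully consistent with its setup.
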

\begin{proof}
The proof can be done using natural coordinates.
\\ \qed\end{proof}

\begin{definition}
A Lagrangian function $\Lag$ is said to be {\sl\textbf{regular}} if the equivalent
conditions in the above Proposition \ref{Prop-regLag} hold.
Otherwise, $\Lag$ is called a {\sl\textbf{singular}} Lagrangian.
In particular, 
$\Lag$ is said to be {\sl\textbf{hyperregular}} 
if $\mathfrak{F}\Lag$ is a global diffeomorphism.
\end{definition}

\begin{remark}
\label{nocontact}{\rm
If $\Lag$ is a regular Lagrangian, then
$(\Tan Q\times\Real,\bmeta_\Lag,E_\Lag)$
is a contact Hamiltonian system.
When $\Lag$ is not regular, it can induce a precontact structure, but also a structure which is neither contact nor precontact.
For example, 
the Lagrangian $\displaystyle \Lag=\sum_{i=1}^n v^i s$ in $\Tan Q\times\Real$
gives a form $\bmeta_\Lag\in\df^1(\Tan Q\times\Real)$
for which the condition \eqref{splitcont} does not hold
and that has no Reeb vector fields associated to it
(see Remark \ref{nocontacto}).
}\end{remark}

\begin{definition}
If $\Lag \in \Cinfty (\Tan Q\times\Real)$ is a Lagrangian function and $(\Tan Q\times\Real,\bmeta_\Lag)$ is
a contact or a precontact manifold, then
the pair $(\Tan Q\times\Real,\Lag)$ is said to be a {\sl\textbf{(pre)contact Lagrangian dynamical system}}.
\label{stdlr1}
\end{definition}

Given a contact Lagrangian system $(\Tan Q\times\Real,\Lag)$, the Reeb vector field $\Reeb_\Lag\in\vf(\Tan Q\times\Real)$ is uniquely determined by the conditions
\begin{equation}\label{eq:reeb}
    \inn(\Reeb_\Lag)\d\bmeta_\Lag=0 \quad ,\quad \inn(\Reeb_\Lag)\bmeta_\Lag=1 \ ,
\end{equation}
and its local expression is
\begin{equation}
\label{coorReeb}
\Reeb_\Lag=\frac{\partial}{\partial s}-W^{ji}\frac{\partial^2\Lag}{\partial s \partial v^j}\,\frac{\partial}{\partial v^i} \,,
\end{equation}
where $(W^{ij})$ is the inverse of the partial Hessian matrix,
namely 
$W^{ij} W_{jk} = \delta^i_{k}$.
Observe that the Reeb vector field does not appear
in the simplest form $\dst\derpar{}{s}$,
since the natural coordinates in $\Tan Q \times \Real$
are neither adapted nor Darboux coordinates for $\bmeta_\Lag$.
If $(\Tan Q\times\Real,\Lag)$ is a precontact Lagrangian system then Reeb vector fields are not uniquely defined.

Finally, we state:

\begin{pos}
{\rm (Fourth Postulate of contact Lagrangian mechanics\/)}:
The dynamical trajectories of the system are the integral curves
of a vector field $\X_\Lag\in \vf(\Tan Q\times\Real)$ satisfying the conditions:
\ben
\item
$\X_\Lag$ is the Hamiltonian vector field associated with $E_\Lag$;
that is, it is a solution to the equations
\beq
\label{eq-E-L-contact1}
\inn(X_\Lag)\bmeta_\Lag=-E_\Lag \quad,\quad
\inn(\X_\Lag)\d \bmeta_\Lag=\d E_\Lag-(\Reeb_\Lag(E_\Lag))\,\bmeta_\Lag \ ,
\eeq
where $\Reeb_\Lag$ is a Reeb vector field, which is determined by the equations \eqref{eq:reeb}.
\item
$\X_\Lag$ is a {\sc sode}:
\ ${\cal J}(\X_\Lag)=\Delta$.
\een
Therefore, these trajectories are the holonomic curves 
${\bf c}\colon I\subset\Real\to\Tan Q\times\Real$
which are the solutions to the equations
\beq
\inn({\bf\widetilde c})(\bmeta_\Lag\circ {\bf c}) = - E_\Lag\circ{\bf c} \quad  , \quad
\inn({\bf\widetilde c})(\d\bmeta_\Lag\circ {\bf  c})=\big(\d E_\Lag - (\Reeb_\Lag(E_\Lag))\,\bmeta_\Lag\big)\circ{\bf c} .
\label{hec}
\eeq
Equations \eqref{hec} are called the
{\sl \textbf{(pre)contact Euler--Lagrange equations for curves}}.
Equations \eqref{eq-E-L-contact1} are called the
{\sl \textbf{(pre)contact Lagrangian equations for vector fields}} and
a vector field solution to them (if it exists) is a
{\sl \textbf{(pre)contact Lagrangian dynamical vector field}}.
If, in addition, $\X_\Lag$ is a {\sc sode}, then
it is called a {\sl \textbf{(pre)contact Euler--Lagrange vector field}} of the system, 
\end{pos}

\begin{definition}
Given a (pre)contact Lagrangian dynamical system $(\Tan Q\times\Real,\Lag)$, 
the {\sl \textbf{Lagrangian problem}} posed by this system
consists in finding a {\sc sode} vector field $\X_\Lag\in\vf(\Tan Q\times\Real)$
solutions to \eqref{eq-E-L-contact1}.
\end{definition}

\subsection{The contact Euler--Lagrange equations}

First, using \eqref{coorReeb}, a simple computation in coordinates shows that
\begin{equation}
\label{ReebLag}
\Reeb_\Lag(E_\Lag)=-\derpar{\Lag}{s} \ .
\end{equation}
Taking this into account, in natural coordinates, for a holonomic curve
${\bf c}(t)=(q^i(t),\dot q^i(t),s(t))$ on $\Tan Q\times\Real$,
equations \eqref{hec} are the {\sl\textbf{ Herglotz--Euler--Lagrange equations}},
\begin{align}\label{ELeqs2}
\frac{ds}{dt}&=\Lag \ ,
 \\
\label{ELeqs3}
\frac{\d}{\d t}\left(\derpar{\Lag}{v^i}\right)-
\displaystyle\frac{\partial\Lag}{\partial q^i}=
\displaystyle\frac{\partial^2\Lag}{\partial v^j \partial v^i}\,
\frac{d^2q^j}{dt^2} +\displaystyle\frac{\partial^2\Lag}{\partial q^j\partial v^i}\,\frac{dq^j}{dt}+
\displaystyle \frac{\partial^2L}{\partial s \partial v^i}\, \frac{ds}{dt}
-\displaystyle\frac{\partial\Lag}{ \partial q^i}&=
\displaystyle\frac{\partial\Lag}{\partial s}\displaystyle\frac{\partial\Lag}{\partial v^i} \ .
\end{align}
Furthermore, for a vector field
$\X_\Lag=\displaystyle f^i\,\frac{\partial}{\partial q^i}+F^i\,\frac{\partial}{\partial v^i}+
g\,\frac{\partial}{\partial s}$,
equations \eqref{eq-E-L-contact1} lead to
\begin{align}
\displaystyle
\Lag+\frac{\partial\Lag}{\partial v^i} (f^i-v^i) -g 
&= 0\ ,
\label{A-E-L-eqs4}
\\
\left( f^j-v^j \right)
\frac{\partial^2\Lag}{\partial q^i \partial v^j}
+\frac{\partial\Lag}{\partial q^i}
-\frac{\partial^2\Lag}{\partial s \partial v^i}g
-\frac{\partial^2\Lag}{\partial q^j \partial v^i}f^j
-\frac{\partial^2\Lag}{\partial v^j \partial v^i}F^j
+\frac{\partial\Lag}{\partial s}
\frac{\partial\Lag}{\partial v^i}
&= 0\ ,
\label{A-E-L-eqs3}
\\
\left( f^j-v^j \right)
\frac{\partial^2\Lag}{\partial v^i \partial v^j}
&=0
\label{A-E-L-eqs1} \ ,
\\
\left( f^j-v^j \right)
\frac{\partial^2\Lag}{\partial v^j \partial s}
&=0 \ ,
\label{A-E-L-eqs2}
\end{align}
and then it is immediate to prove that:

\begin{prop}
\label{ELeq-teor}
If $\Lag$ is a regular Lagrangian, then $\X_\Lag$ is a {\sc sode} and
the equations \eqref{A-E-L-eqs4} and \eqref{A-E-L-eqs3} become
\begin{align}
\label{ELeqs0}
g&=\Lag \,,
\\
\label{ELeqs1}
\displaystyle\frac{\partial^2\Lag}{\partial v^j \partial v^i}\,
F^j +\displaystyle\frac{\partial^2\Lag}{\partial q^j \partial v^i} \,v^j  +
\displaystyle \frac{\partial^2\Lag}{\partial s \partial v^i}\,\Lag
-\displaystyle\frac{\partial\Lag}{ \partial q^i}&=
\displaystyle\frac{\partial\Lag}{\partial s}\displaystyle\frac{\partial\Lag}{\partial v^i} \ .
\end{align}
which, for the integral curves of $\X_\Lag$, give 
the equations \eqref{ELeqs2} and \eqref{ELeqs3}.
\end{prop}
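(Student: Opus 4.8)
The statement is essentially a coordinate computation that unpacks the intrinsic equations \eqref{eq-E-L-contact1} when $\Lag$ is regular. The plan is to work in a natural chart $(U;q^i,v^i,s)$ of $\Tan Q\times\Real$ and carry through the computation in two stages: first show that any solution $\X_\Lag$ of \eqref{eq-E-L-contact1} is automatically a \textsc{sode}, and then reduce the remaining equations to \eqref{ELeqs0}--\eqref{ELeqs1}. The starting point is the system \eqref{A-E-L-eqs4}--\eqref{A-E-L-eqs2} already displayed in the excerpt, which is simply the component form of \eqref{eq-E-L-contact1} for a generic vector field $\X_\Lag=f^i\,\partial/\partial q^i+F^i\,\partial/\partial v^i+g\,\partial/\partial s$; I would either cite that these are obtained by contracting $\bmeta_\Lag$ and $\d\bmeta_\Lag$ with $\X_\Lag$ (using the local expressions of $\bmeta_\Lag$, $\d\bmeta_\Lag$ given after Definition~\ref{definition:lagrangian-function} and the expression \eqref{ReebLag} for $\Reeb_\Lag(E_\Lag)$), or rederive them in one line.

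The \textsc{sode} assertion is the first key step. Equation \eqref{A-E-L-eqs1} reads $(f^j-v^j)\,\partial^2\Lag/\partial v^i\partial v^j=0$; since $\Lag$ is regular, the Hessian matrix $W_{ij}=(\partial^2\Lag/\partial v^i\partial v^j)$ is nondegenerate everywhere (Proposition~\ref{Prop-regLag}), so $f^j=v^j$ for all $j$. This immediately gives $\mathcal{J}(\X_\Lag)=v^i\,\partial/\partial v^i=\Delta$, i.e. $\X_\Lag$ is a \textsc{sode}, and it also makes \eqref{A-E-L-eqs2} vacuous. Substituting $f^j=v^j$ into \eqref{A-E-L-eqs4} kills the middle term and yields $\Lag-g=0$, which is \eqref{ELeqs0}. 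Substituting $f^j=v^j$ into \eqref{A-E-L-eqs3} kills the first term (the one with $f^j-v^j$) and, after moving $\partial\Lag/\partial q^i$ to the other side and using $g=\Lag$ from the previous step, leaves exactly
\[
\frac{\partial^2\Lag}{\partial v^j\partial v^i}F^j+\frac{\partial^2\Lag}{\partial q^j\partial v^i}v^j+\frac{\partial^2\Lag}{\partial s\partial v^i}\Lag-\frac{\partial\Lag}{\partial q^i}=\frac{\partial\Lag}{\partial s}\frac{\partial\Lag}{\partial v^i},
\]
which is \eqref{ELeqs1}. Finally, to pass to the statement about integral curves, I would take an integral curve ${\bf c}(t)$ of $\X_\Lag$; being a \textsc{sode}, ${\bf c}$ is holonomic, so ${\bf c}(t)=(q^i(t),\dot q^i(t),s(t))$ with $v^i=\dot q^i$, $F^i=\ddot q^i$, and $g=\dot s$. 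Then \eqref{ELeqs0} becomes $\dot s=\Lag$, which is \eqref{ELeqs2}, and \eqref{ELeqs1} becomes precisely the expanded form of \eqref{ELeqs3} once one recognizes that $\tfrac{\d}{\d t}(\partial\Lag/\partial v^i)=\tfrac{\partial^2\Lag}{\partial v^j\partial v^i}\ddot q^j+\tfrac{\partial^2\Lag}{\partial q^j\partial v^i}\dot q^j+\tfrac{\partial^2\Lag}{\partial s\partial v^i}\dot s$ along ${\bf c}$.

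I do not expect any serious obstacle here: the whole argument is a bookkeeping exercise whose only genuine input is the invertibility of the Hessian in the regular case. The one point that requires a little care is making sure the local expressions of $\bmeta_\Lag$, $\d\bmeta_\Lag$ and $\Reeb_\Lag$ are used consistently — in particular that $\Reeb_\Lag(E_\Lag)=-\partial\Lag/\partial s$ by \eqref{ReebLag}, since that sign propagates into the $\partial\Lag/\partial s\,\partial\Lag/\partial v^i$ dissipation term in \eqref{ELeqs1} and \eqref{ELeqs3}. Everything else is routine substitution, so I would present the proof compactly, essentially as the three substitutions described above, and simply remark that the integral-curve statement follows by evaluating along ${\bf c}$.
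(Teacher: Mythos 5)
Your proposal is correct and follows essentially the same route as the paper's own proof: regularity of the Hessian applied to \eqref{A-E-L-eqs1} yields $f^i=v^i$ (the {\sc sode} condition), \eqref{A-E-L-eqs2} becomes trivial, and substitution into \eqref{A-E-L-eqs4} and \eqref{A-E-L-eqs3} gives \eqref{ELeqs0} and \eqref{ELeqs1}, with the integral-curve statement recovered by holonomy. Your version is merely more explicit about the bookkeeping (and the sign coming from $\Reeb_\Lag(E_\Lag)=-\partial\Lag/\partial s$), which is fine.
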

\begin{proof}
In fact, if $\Lag$ is a regular Lagrangian, equations \eqref{A-E-L-eqs1} lead to $v^i=f^i$, 
which are the {\sc sode} condition for the vector field $\X_\Lag$.
Then, \eqref{A-E-L-eqs2} holds identically, and
\eqref{A-E-L-eqs4} and \eqref{A-E-L-eqs3} give the equations \eqref{ELeqs0} and \eqref{ELeqs1}
and then, for the integral curves of $X_L$, we get equations \eqref{ELeqs2} and \eqref{ELeqs3}.
\\ \qed \end{proof}

Thus, the local expression of this Euler--Lagrange vector field is
\begin{equation}
\label{sode-coor}
\X_\Lag=
v^i\,\frac{\partial}{\partial q^i}
+W^{ik}
\left(
\frac{\partial\Lag}{ \partial q^k} 
-\frac{\partial^2\Lag}{\partial q^j \partial v^k} \,v^j
-\Lag\frac{\partial^2\Lag}{\partial s \partial v^k} 
+\frac{\partial\Lag}{\partial s}
 \frac{\partial\Lag}{\partial v^k} 
\right)
\frac{\partial}{\partial v^i}+
\Lag\,\frac{\partial}{\partial s}\ .
\end{equation}

\begin{remark}{\rm
It is important to point out that
the expression in coordinates \eqref{ELeqs2} of
the second Lagrangian equation \eqref{eq-E-L-contact1} relates the variation 
of the ``dissipation coordinate'' $s$ to the Lagrangian function and, from here, 
we can identify this coordinate with the Lagrangian action, $\displaystyle s=\int{\cal L}\,dt$.
}\end{remark}

\begin{remark}{\rm
If $\Lag$ is a singular Lagrangian, but $(\Tan Q\times\Real,\bmeta_\Lag)$ is a precontact manifold, although the Reeb vector fields
are not uniquely defined, the Lagrangian equations \eqref{eq-E-L-contact1} 
are independent on the Reeb vector field used (see \cite{DeLeon2019}).
Alternatively, the Reeb independent equations \eqref{altevolvf}
for the precontact Hamiltonian system $(\Tan Q\times\Real,\bmeta_\Lag, E_\Lag)$
can be used instead.
In any case, solutions to the Lagrangian equations
are not necessarily {\sc sode} and
the condition ${\cal J}(\X_\Lag)=\Delta$ must be added to the Lagrangian equations.
Furthermore, these equations are not necessarily compatible everywhere on $\Tan Q\times\Real$ 
and a constraint algorithm must be implemented in order to find 
a final constraint submanifold $S_f\hookrightarrow\Tan Q\times\Real$
(if it exists) where there are {\sc sode} vector fields $\X_\Lag\in\vf(\Tan Q\times\Real)$,
tangent to $S_f$, which are solutions to the above equations on $S_f$ \cite{DeLeon2019}.
}\end{remark}

\subsection{Canonical Hamiltonian formalism}

Consider a hyperregular Lagrangian system $(\Tan Q\times\Real,\Lag)$
(the regular case is the same, taking $\mathfrak{F}\Lag (\Tan Q\times\Real)\subset\Tan^*Q\times\Real$ instead of $\Tan^*Q\times\Real$, or locally, at least).
Then, $\mathfrak{F}\Lag$ is a diffeomorphism
between the contact manifolds $(\Tan Q\times\Real,\bmeta_\Lag)$ and
the canonical contact manifold
$(\Tan^*Q\times\Real,\bmeta)$,
with 
$$
\mathfrak{F}\Lag^{\;*}\bmeta=\bmeta_\Lag\quad , \quad
\mathfrak{F}\Lag_*{\Reeb}_\Lag={\Reeb} \ .
$$
Furthermore, there exists a function ${\rm h}\in\Cinfty(\Tan^*Q\times\Real)$ 
such that $\mathfrak{F}\Lag^{\;*}{\rm h}=E_\Lag$;
so we have the contact Hamiltonian system $(\Tan^*Q\times\Real,\bmeta,{\rm h})$
and the contact Hamiltonian equations \eqref{evolvf111}
(or their equivalent expressions \eqref{equivmanner} or
\eqref{altevolvf}),
which read as,
\beq
\inn(\X_{\rm h})\bmeta=-{\rm h} \quad,\quad 
\inn(\X_{\rm h})\d\bmeta = \d{\rm h}-\Reeb({\rm h})\,\bmeta\ ,
\label{evolvf11b}
\eeq
and have a unique solution $\X_{\rm h}$.
The dynamical trajectories are the integral curves
of this contact Hamiltonian vector field and, then, they are the solutions to equations \eqref{hel-eqs011}, which read as,
\beq
\label{hel-eqs01b}
\dst\frac{dx^i}{dt} = \frac{\partial{\rm h}}{\partial p_i}\quad ,\quad
\dst\frac{dp_i}{dt} = -\left(\frac{\partial{\rm h}}{\partial x^i} + p_i\frac{\partial{\rm h}}{\partial s}\right)\quad , \quad
\dst\frac{ds}{dt} = p_i\frac{\partial{\rm h}}{\partial p_i} - h \ , 
\eeq

Then, if $\X_{\rm h}\in\vf(\Tan^*Q\times\Real)$ is the
contact Hamiltonian vector field associated with ${\rm h}$,
we have that $\mathfrak{F}\Lag_*\X_\Lag=\X_{\rm h}$.

For singular Lagrangians, as in the above chapters, we define:

\begin{definition}
A singular Lagrangian function $\Lag\in\Cinfty(\Tan Q\times\Real)$
 is called {\sl \textbf{almost-regular}} if $\mathcal{P}:=\mathfrak{F}\Lag(\Tan Q\times\Real)$ is
a closed submanifold of \ $\Tan^*Q\times\Real$ (
the natural embedding is denoted $\jmath_{\cal P}\colon\mathcal{P}\hookrightarrow\Tan^*Q\times\Real$), 
$\mathfrak{F}\Lag$ is a submersion onto its image, and
the fibers $\mathfrak{F}\Lag^{-1}(FL({\rm p}))$, for every ${\rm p}\in\Tan Q\times\Real$, are
connected submanifolds of $\Tan Q\times\Real$.
\end{definition}

In these cases, we have $({\cal P},\bmeta_{\cal P})$, where
$\bmeta_{\cal P}=j_{\cal P}^*\bmeta\in\Omega^1({\cal P})$,
Furthermore, the Lagrangian energy function $E_\Lag$ is $\mathfrak{F}\Lag$-projectable; 
i.e., there is a unique ${\rm h}_{\cal P}\in\Cinfty({\cal P})$
which is $\mathfrak{F}\Lag$-related with $E_\Lag$.
Then, if $({\cal P},\bmeta_{\cal P})$
is a contact or a precontact manifold,
then $({\cal P},\bmeta_{\cal P},{\rm h}_{\cal P})$
is a (pre)contact Hamiltonian system.
whose (pre)contact Hamiltonian equations are \eqref{evolvf111} or, alternatively, \eqref{altevolvf},
adapted to this situation.
As in the Lagrangian formalism, these equations are not necessarily compatible everywhere on ${\cal P}$ 
and a constraint algorithm must be implemented in order to find 
a final constraint submanifold $P_f\hookrightarrow{\cal P}$
(if it exists) where there are vector fields $\X_{{\rm h}_{\cal P}}\in\vf({\cal P})$,
tangent to $P_f$, which are solutions to the precontact Hamiltonian equations on $P_f$ (see \cite{DeLeon2019} for a detailed analysis on all these topics).

\section{Unified Lagrangian-Hamiltonian formalism for contact systems}
\label{ulhcs}

The Lagrangian-Hamiltonian unified formalism 
of contact Lagrangian systems has been developed in \cite{LGMMR-2020}.
Next we present its main features.

\subsection{Extended precontact unified bundle. Unified formalism}

\begin{definition}
Let $Q$ be a $n$-dimensional differentiable manifold.
The {\sl \textbf{extended precontact unified bundle}}
(or {\sl \textbf{extended precontact Pontryagin bundle}}) is
$$
\mathfrak{M}=\Tan Q\times_Q\Tan^*Q\times\Real \ ,
$$
and it is endowed with the natural projections
$$
\rho_1\colon\mathfrak{M}\to\Tan Q\times\Real \ ,\
\rho_2\colon\mathfrak{M}\to\Tan^*Q\times\Real \ ,\
\rho_0\colon\mathfrak{M}\to Q\times\Real \ ,\ 
s\colon\mathfrak{M}\to \Real \ .
\label{project}
$$
\end{definition}

Natural coordinates in $\mathfrak{M}$ are $(q^i,v^i,p_i,s)$.

\begin{definition}
A curve $\mbox{\boldmath $\sigma$}\colon\Real\rightarrow\mathfrak{M}$
is a \textbf{holonomic} in $\mathfrak{M}$ if
$\rho_1\circ\mbox{\boldmath $\sigma$}\colon\Real\to\Tan Q\times\Real$ is holonomic curve.
A vector field $\Gamma\in\vf(\mathfrak{M})$ 
satisfies the \textbf{second-order condition} in $\mathfrak{M}$
(for short: it is a {\sc sode} in $\mathfrak{M}$) if all of its integral curves 
are holonomic in $\mathfrak{M}$. 
\end{definition}

In coordinates, a holonomic curve and a {\sc sode} in $\mathfrak{M}$ are expressed as 
\beann
\mbox{\boldmath $\sigma$}&=&
\left(\sigma_1^i(t),\frac{d\sigma_1^i}{d t}(t),\sigma_{2\,i}(t),\sigma_0(t) \right) \ ,
\\
\Gamma&=&v^i \frac{\partial}{\partial q^i}+F^i \frac{\partial}{\partial v^i}+G_i \frac{\partial}{\partial p_i}+f\,\frac{\partial}{\partial s} \ .
\eeann

\begin{definition}
The bundle $\mathfrak{M}$ is endowed with the following canonical structures:
\ben
\item
The \textbf{coupling function} in $\mathfrak{M}$ is the
map $\mathfrak{C}\colon\mathfrak{M}\to\Real$  defined as follows: for every
$w=({\rm v}_q,{\rm p}_q,s)\in\mathfrak{M}$, where $q\in Q$,
${\rm p}_q\in\Tan^*Q$, and ${\rm v}_q\in\Tan Q$, then
$\mathfrak{C}(w):=\langle{\rm p}_q,{\rm v}_q\rangle$.
\item
The \textbf{canonical precontact $1$-form} on $\mathfrak{M}$
is the $\rho_1$-semibasic form
$$\bmeta_{\mathfrak{M}}:=\rho_2^*\bmeta=
d s-\pi_1^*\rho_2^*\pi_1^*\Theta\in\df^1(\mathfrak{M})\ ;$$
where $\bmeta$ is the canonical contact form on $\Tan^*Q\times\Real$
(and $\Theta$ is the canonical $1$-form on
$\Tan^*Q$).
\een
\label{coupling}
\end{definition}

In natural coordinates of $\mathfrak{M}$ we have that
$\bmeta_{\mathfrak{M}}=\d s-p_i\,\d q^i$.

\begin{definition}
Given a Lagrangian function $\Lag\in\Cinfty(\Tan Q\times\Real)$, let
$\mathfrak{L}=\rho_1^*\Lag\in\Cinfty(\mathfrak{M})$.
The \textbf{Hamiltonian function} is defined as
\beq
{\bf H}:=\mathfrak{C}-\mathfrak{L} =p_i v^i-\Lag (q^j,v^j,s)\in\Cinfty(\mathfrak{M}) \ .
\label{Hamf}
\eeq
\end{definition}

\begin{remark}{\rm
Observe that $(\mathfrak{M},\bmeta_{\mathfrak{M}})$ is a precontact manifold.
As a consequence, the Reeb vector fields are not uniquely defined and
in natural coordinates of $\mathfrak{M}$ the general solution to \eqref{eq-Reeb} are the vector fields
$\displaystyle \Reeb=\derpar{}{z}+F^i\derpar{}{v^i}$
for arbitrary functions $F^i$.
Nevertheless, as we have pointed out, 
the dynamics obtained from the formalism is independent of the choice of the Reeb vector fields;
therefore, as $\mathfrak{M}=\Tan Q\times_Q\Tan^*Q\times\Real$
is a trivial bundle over $\Real$,
the canonical vector field $\displaystyle \derpar{}{s}$ on $\Real$
can be lifted canonically to a vector field in $\mathfrak{M}$,
which can be taken as a representative of the family of
these Reeb vector fields.
}\end{remark}

Thus, we have that $(\mathfrak{M},\eta_\mathfrak{M},{\bf H})$
is a precontact Hamiltonian system and then,
we can pose the dynamic problem for this system which consists of finding
a Hamiltonian vector field which is a solution to the precontact Hamiltonian equations
\begin{equation}
\label{Whamilton-contact-eqs}
\inn(\X_{\bf H})\d\bmeta_{\mathfrak{M}}=\d{\bf H}-({\Reeb}({\bf H}))\bmeta_{\mathfrak{M}}
\quad ,\quad
\inn(\X_{\bf H})\bmeta_{\mathfrak{M}}=-{\bf H}\ .
\end{equation}
Then, the integral curves $\mbox{\boldmath $\sigma$}\colon I\subset\Real\to\mathfrak{M}$ of $X_{\bf H}$,
are the solutions to the equations
\begin{equation}
\label{Whamilton-contactc-curves-eqs}
\inn(\widetilde{\mbox{\boldmath $\sigma$}})(\d\bmeta_{\mathfrak{M}}\circ\mbox{\boldmath $\sigma$})=\left(\d{\bf H}-({\Reeb}({\bf H}))\bmeta_{\mathfrak{M}}\right)\circ\mbox{\boldmath $\sigma$}
\quad ,\quad
 \inn(\widetilde{\mbox{\boldmath $\sigma$}})(\bmeta_{\mathfrak{M}}\circ\mbox{\boldmath $\sigma$})=-{\bf H}\circ\mbox{\boldmath $\sigma$} \ .
    \end{equation}

As we shall see next,
these equations are not compatible on $\mathfrak{M}$,
and we have to implement the constraint algorithm
in order to find the final constraint submanifold of $\mathfrak{M}$
where there are consistent solutions
to the equations. 
In fact, in natural coordinates of $\mathfrak{M}$ we have that
$$
\d{\bf H}=v^i\d p_i+\left(p_i -\derpar{\Lag}{v^i}\right)\d v^i-\derpar{\Lag}{q^i}\,\d q^i-\derpar{\Lag}{s}\,\d s \ ,
$$
and, if the local expression of
a vector field $\X_{\bf H}\in\vf(\mathfrak{M})$ is
$$
\X_{\bf H} = f^i\derpar{}{q^i}+F^i\derpar{}{v^i}+G_i\derpar{}{p_i}+f\derpar{}{s} \ ,
$$
we obtain that
$$
\inn(\X_{\bf H})\bmeta_{\mathfrak{M}}=f-f^ip_i \quad , \quad
\inn(\X_{\bf H})\d\bmeta_{\mathfrak{M}}=f^i\,\d p_i-G_i\,\d q^i 
\quad , \quad
({\Reeb}({\bf H}))\bmeta_{\mathfrak{M}}=-\derpar{\Lag}{s}(\d s-p_i\d q^i) \ .
$$
Then, equations \eqref{Whamilton-contact-eqs} give
\bea
f&=&(f^i-v^i)\,p_i+\Lag \ ,
\label{zero} \\
f^i&=&v^i \ , \label{one} \\
p_i&=&\derpar{\Lag}{v^i} \ ,
\label{two} \\
G_i&=&\derpar{\Lag}{q^i}
+p_i\derpar{\Lag}{s} \ ,
\label{three}
\eea
The equations (\ref{one}) are the holonomy conditions
(i.e., $\X_{\bf H}$ must be a {\sc sode}),
which arise straightforwardly from the unified formalism.
The algebraic equations (\ref{two}) are compatibility conditions
defining the submanifold $\mathfrak{M}_0\hookrightarrow\mathfrak{M}$,
where the equations are compatible.
This $\mathfrak{M}_0$ is the {\sl first constraint submanifold} of the Hamiltonian precontact system $(\mathfrak{M},\bmeta_{\mathfrak{M}},{\bf H})$, and is the graph of $\mathfrak{F}\Lag$; that is,
$$
\mathfrak{M}_0=\{ ({\rm v}_q,\mathfrak{F}\Lag({\rm v}_q))\in\mathfrak{M} \ ,\ \mbox{\rm for ${\rm v}_q\in\Tan Q$}\} \ .
$$
In this way, as it is usual, the unified formalism includes the definition of
the Legendre map as a consequence of the constraint algorithm.

Thus, vector fields which are solutions to \eqref{Whamilton-contact-eqs} are,
$$
\X_{\bf H}=
v^i\derpar{}{q^i}+F^i\derpar{}{v^i}+
\left(\derpar{\Lag}{q^i}
+p_i\derpar{\Lag}{z}\right)\derpar{}{p_i}
+\Lag\derpar{}{s} 
\quad \mbox{\rm (on $\mathfrak{M}_0$)}\ .
$$
The constraint algorithm continues by demanding that
$\X_{\bf H}$ must be tangent to $\mathfrak{M}_0$. As 
$\displaystyle \xi^1_j=p_j-\derpar{\Lag}{v^j}\in\Cinfty(\mathfrak{M})$
are the constraints defining $\mathfrak{M}_0$, this condition is
\beq
\X_{\bf H}\left(p_j-\derpar{\Lag}{v^j}\right)=
-\frac{\partial^2\Lag}{\partial q^i\partial v^j}v^i
-\frac{\partial^2\Lag}{\partial v^i\partial  v^j}F^i-
\Lag\frac{\partial^2\Lag}{\partial z\partial v^j}
+\derpar{\Lag}{q^j}+p_j\derpar{\Lag}{s}=0
\quad \mbox{\rm (on $\mathfrak{M}_0$)} \ ,
\label{tangcond}
\eeq
and there are two options:
\bit
\item
If $\Lag$ is a regular Lagrangian, these equations determine all the
functions \(F^i\).
Then the solution is unique and the algorithm ends.
\item
If $\Lag$ is singular, then these equations establish relations
among the arbitrary functions $F^i$;
some of them remain undetermined and the solutions are not unique.
Furthermore, new constraints $\xi^2_\mu\in\Cinfty(\mathfrak{M})$
could appear, then defining a new submanifold 
$\mathfrak{M}_1\hookrightarrow\mathfrak{M}_0\hookrightarrow\mathfrak{M}$.
Then, the algorithm continues by demanding the tangency of
$\X_{\bf H}$ to $\mathfrak{M}_1$, and so on,
until we obtain a final constraint submanifold $\mathfrak{M}_f$ (if it exists)
where tangent solutions $X_{\bf H}$ exist.
\eit
If $\mbox{\boldmath $\sigma$}(t)=(q^i(t),v^i(t),p_i(t),s(t))$ 
is an integral curve of $\X_{\bf H}$, we have that 
$\displaystyle f^i=\frac{d q^i}{d t}$,
$\displaystyle F^i=\frac{d v^i}{d t}$, 
$\displaystyle G_i=\frac{d p_i}{d t}$,
$\displaystyle f=\frac{d s}{d t}$, 
and equations (\ref{zero}), (\ref{one}), (\ref{two}), and (\ref{three}) 
give to the coordinate expression of the equations
\eqref{Whamilton-contactc-curves-eqs}; in particular:
\bit
\item
From (\ref{one}), we have that $\dst v^i=\frac{d q^i}{d t}$;
that is, the holonomy condition.
\item
Using (\ref{one}) again, the equation (\ref{zero}) gives the equation \eqref{ELeqs2} again.
\item
The equations \eqref{three} read,
$$
\frac{dp_i}{dt}=\derpar{\Lag}{q^i}+p_i\derpar{\Lag}{s}=
-\left(\derpar{{\bf H}}{q^i}+p_i\derpar{{\bf H}}{s}\right) \ ,
$$
which are the second group of Hamilton's equations \eqref{hel-eqs01}.
Then, using \eqref{two},
these equations are (on $\mathfrak{M}_0$),
$$
\frac{d}{dt}\left(\derpar{\Lag}{v^i}\right)=\derpar{\Lag}{q^i}
+\derpar{\Lag}{v^i} \derpar{\Lag}{s} \ ,
$$
which are the Herglotz--Euler-Lagrange equations (\ref{ELeqs3}).
The first group of Hamilton's equations \eqref{hel-eqs01} arises
from the definition of the Hamiltonian function
\eqref{Hamf}, using the holonomy condition.
\item
Using \eqref{two} and \eqref{ELeqs2},
the tangency condition \eqref{tangcond} gives
again the Herglotz--Euler-Lagrange equations \eqref{ELeqs3}.
If $\Lag$ is singular, these equations can be incompatible.
\eit

\subsection{Recovering the Lagrangian and Hamiltonian formalisms and equivalence}
\label{recovering}

Now we  study the equivalence of the unified formalism 
with the Lagrangian and Hamiltonian formalisms
for the hyperregular case
(the regular case is the same, at least locally).
See \cite{LGMMR-2020} more details about the singular case.

First, observe that, in this case, denoting by 
$\jmath_0\colon\mathfrak{M}_0\hookrightarrow\mathfrak{M}$
the natural embedding, we have that
$$
(\rho_1\circ\jmath_0)(\mathfrak{M}_0)=\Tan Q\times\Real
\quad , \quad
(\rho_2\circ\jmath_0)(\mathfrak{M}_0)=\Tan^*Q\times\Real \ .
$$
Furthermore, as $\mathfrak{M}_0$ is the graph of the Legendre map
$\mathfrak{F}\Lag$, it is diffeomorphic to $\Tan Q\times\Real$,
being the restricted projection $\rho_1\circ\jmath_0$
this diffeomorphism.
So, we have the diagram
$$
\xymatrix{
\ & \ & \mathfrak{M} \ar@/_1.3pc/[ddll]_{\rho_1} \ar@/^1.3pc/[ddrr]^{\rho_2} & \ & \ \\
\ & \ & \mathfrak{M}_0 \ar[dll] \ar[drr] \ar@{^{(}->}[u]^{\jmath_0} & \ & \ \\
\Tan Q\times\Real \ar[rrrr]^<(0.40){\mathfrak{F}\Lag}
& \ & \ & \ & \Tan^*Q\times\Real 
}
$$
As in the other unified formalisms,
functions, differential forms on $\mathfrak{M}$,
and vector fields on $\mathfrak{M}$ tangent to $\mathfrak{M}_0$
can be restricted to $\mathfrak{M}_0$. 
Then, they can be translated to the Lagrangian or the Hamiltonian side 
by using that $\mathfrak{M}_0$ is diffeomorphic to $\Tan Q\times\Real$,
or projecting to $\Tan^* Q\times\Real$.
In particular, a simple calculation in coordinates shows that
$$
\rho_1^*E_L={\bf H}=\rho_2^*{\rm h} \ ,
$$
Furthermore,
every curve $\mbox{\boldmath $\sigma$}\colon I\subseteq\Real\to\mathfrak{M}$
which takes values in $\mathfrak{M}_0$, can be split as
$\mbox{\boldmath $\sigma$}=(\mbox{\boldmath $\sigma$}_L,\mbox{\boldmath $\sigma$}_H)$, where
$\mbox{\boldmath $\sigma$}_L=\rho_1\circ\mbox{\boldmath $\sigma$}\colon I\subseteq\Real \to\Tan Q\times\Real$
and $\mbox{\boldmath $\sigma$}_H=
\mathfrak{F}\Lag\circ\mbox{\boldmath $\sigma$}_L\colon I\subseteq\Real\to\Tan^*Q\times\Real$.
Therefore, taking all of this into account, the results, and the discussion in the above section
lead to state:

\begin{teor}
Let $\mbox{\boldmath $\sigma$}\colon\Real\to\mathfrak{M}$,
with  ${\rm Im}\,(\mbox{\boldmath $\sigma$})\subset\mathfrak{M}_0$,
be a curve solution to the equations \eqref{Whamilton-contactc-curves-eqs}
on $\mathfrak{M}_0$.
Then
$\mbox{\boldmath $\sigma$}_L$ is the lift of the projected curve
${\bf c}=\rho_0\circ\mbox{\boldmath $\sigma$}\colon\Real\to Q\times\Real$
to $\Tan Q\times\Real$ (that is, $\mbox{\boldmath $\sigma$}_L$ is
a holonomic curve),
and it is a solution to the equations \eqref{hec}.
Moreover, the curve 
$\mbox{\boldmath $\sigma$}_H=
\mathfrak{F}\Lag\circ\widetilde{\bf c}$
is a solution to the equations \eqref{hamilton-contactc-curves-eqs} on $\Tan^*Q\times\Real$.

Conversely, for every curve ${\bf c}\colon\Real\to Q\times\Real$ such that
$\widetilde{\bf c}$ is a solution to the equations \eqref{hec}, we have that the curve 
$\mbox{\boldmath $\sigma$}=(\widetilde{\bf c},\mathfrak{F}\Lag\circ\widetilde{\bf c})$
is a solution to the equations \eqref{Whamilton-contactc-curves-eqs} on $\mathfrak{M}_0$,
and the curve $\mathfrak{F}\Lag\circ\widetilde{\bf c}$
is a solution to the equations \eqref{hamilton-contactc-curves-eqs} on $\Tan^*Q\times\Real$.
 \label{mainteor01}
\end{teor}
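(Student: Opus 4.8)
The statement is the exact analogue, for contact systems, of Theorems \ref{eqW}--\ref{eqL1} (autonomous) and \ref{eqM}--\ref{eqL2} (nonautonomous cosymplectic). The plan is to reduce everything to the coordinate computations already carried out in Section \ref{ulhcs}: equations \eqref{zero}--\eqref{three}, together with the tangency condition \eqref{tangcond}, completely describe what it means for a curve $\mbox{\boldmath $\sigma$}$ taking values in $\mathfrak{M}_0$ to satisfy \eqref{Whamilton-contactc-curves-eqs}. So the first step is to take such a curve $\mbox{\boldmath $\sigma$}(t)=(q^i(t),v^i(t),p_i(t),s(t))$, write $\mbox{\boldmath $\sigma$}_L=\rho_1\circ\mbox{\boldmath $\sigma$}$ and $\mbox{\boldmath $\sigma$}_H=\rho_2\circ\mbox{\boldmath $\sigma$}$, and observe that since ${\rm Im}(\mbox{\boldmath $\sigma$})\subset\mathfrak{M}_0={\rm graph}(\mathfrak{F}\Lag)$ we automatically have $\mbox{\boldmath $\sigma$}_H=\mathfrak{F}\Lag\circ\mbox{\boldmath $\sigma$}_L$ and $p_i(t)=\derpar{\Lag}{v^i}\circ\mbox{\boldmath $\sigma$}_L$. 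From \eqref{one} (holonomy) we get $v^i=\dot q^i$, so $\mbox{\boldmath $\sigma$}_L=\widetilde{\bf c}$ with ${\bf c}=\rho_0\circ\mbox{\boldmath $\sigma$}$; this identifies $\mbox{\boldmath $\sigma$}_L$ as a holonomic curve and ${\bf c}$ as its base projection.

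\textbf{Second step: recovering the Lagrangian equations.} Having $v^i=\dot q^i$ and $p_i=\derpar{\Lag}{v^i}$, I would feed these into \eqref{three} (which reads $\dot p_i=\derpar{\Lag}{q^i}+p_i\derpar{\Lag}{s}$) to obtain $\frac{d}{dt}\big(\derpar{\Lag}{v^i}\big)=\derpar{\Lag}{q^i}+\derpar{\Lag}{v^i}\derpar{\Lag}{s}$, i.e. the Herglotz--Euler--Lagrange equations \eqref{ELeqs3}; and \eqref{zero}, using $v^i=f^i$, gives $\dot s=\Lag$, i.e. \eqref{ELeqs2}. Together these are precisely the contact Euler--Lagrange equations \eqref{hec} for $\mbox{\boldmath $\sigma$}_L$, by Proposition \ref{ELeq-teor} and the computation preceding \eqref{sode-coor} (here one uses $\rho_1^*E_\Lag={\bf H}$ and $\mathfrak{F}\Lag^{\;*}\Reeb={\Reeb}_\Lag$, $\mathfrak{F}\Lag^{\;*}\bmeta=\bmeta_\Lag$, so that equations \eqref{hec} written on $\Tan Q\times\Real$ pull back correctly). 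Then, applying $\mathfrak{F}\Lag$ to $\mbox{\boldmath $\sigma$}_L$ and using that $\mathfrak{F}\Lag_*\X_\Lag=\X_{\rm h}$ together with $\mathfrak{F}\Lag^{\;*}\bmeta=\bmeta_\Lag$, $\mathfrak{F}\Lag^{\;*}{\rm h}=E_\Lag$, one concludes $\mbox{\boldmath $\sigma$}_H=\mathfrak{F}\Lag\circ\mbox{\boldmath $\sigma$}_L$ is an integral curve of $\X_{\rm h}$ and hence solves \eqref{hamilton-contactc-curves-eqs}; alternatively one can read this directly off equations \eqref{hel-eqs01b} by substituting the Legendre relations. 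The cleanest way to organize this is exactly as Theorem \ref{mainteor1} (or its nonautonomous version Theorem \ref{eqM}) was organized, replacing the symplectic/cosymplectic equations by their contact counterparts.

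\textbf{Third step: the converse.} Starting from ${\bf c}\colon\Real\to Q\times\Real$ with $\widetilde{\bf c}$ a solution of \eqref{hec}, I would set $\mbox{\boldmath $\sigma$}=(\widetilde{\bf c},\mathfrak{F}\Lag\circ\widetilde{\bf c})$, which lies in $\mathfrak{M}_0$ by construction, and verify \eqref{zero}--\eqref{three}: \eqref{one} is automatic (${\bf c}$ lifted), \eqref{two} is automatic (definition of $\mbox{\boldmath $\sigma$}$), \eqref{zero} becomes $\dot s=\Lag$ which is \eqref{ELeqs2}, and \eqref{three} becomes \eqref{ELeqs3}; since \eqref{zero}--\eqref{three} are equivalent to \eqref{Whamilton-contact-eqs} on $\mathfrak{M}_0$, the curve $\mbox{\boldmath $\sigma$}$ solves \eqref{Whamilton-contactc-curves-eqs}, and projecting gives the Hamiltonian statement as before.

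\textbf{Expected obstacle.} There is no genuine analytic difficulty here — the content is the coordinate bookkeeping of Section \ref{ulhcs}, which is already done. The one point that needs care, and which I would flag explicitly rather than gloss over, is the role of the Reeb vector field: $(\mathfrak{M},\bmeta_{\mathfrak{M}})$ is only \emph{pre}contact, so $\Reeb$ is not unique, and one must check (as remarked after \eqref{Hamf}) that $\Reeb({\bf H})\,\bmeta_{\mathfrak{M}}$ — the only place $\Reeb$ enters \eqref{Whamilton-contact-eqs} — is independent of the chosen Reeb field; in coordinates this holds because $\Reeb({\bf H})=-\derpar{\Lag}{s}$ for every admissible $\Reeb$, and $\bmeta_{\mathfrak{M}}$ is $\rho_1$-semibasic. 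Equivalently one may use the Reeb-free form \eqref{altevolvf} on the (pre)contact Hamiltonian system $(\mathfrak{M},\bmeta_{\mathfrak{M}},{\bf H})$ and avoid the issue entirely. The other small technical check is that $\rho_1\circ\jmath_0\colon\mathfrak{M}_0\to\Tan Q\times\Real$ is a diffeomorphism, which holds because $\mathfrak{M}_0={\rm graph}(\mathfrak{F}\Lag)$ and $\mathfrak{F}\Lag$ is a (global, in the hyperregular case) diffeomorphism; this is what lets us transport $\mbox{\boldmath $\sigma$}_L$ back and forth between $\mathfrak{M}_0$ and the Lagrangian phase space.
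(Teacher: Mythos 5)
Your proof is correct and takes essentially the same route as the paper: Theorem \ref{mainteor01} is there presented as a direct consequence of the coordinate equations \eqref{zero}--\eqref{three} and the tangency condition \eqref{tangcond}, combined with the identification $\rho_1^*E_\Lag={\bf H}=\rho_2^*{\rm h}$ and the splitting $\mbox{\boldmath $\sigma$}=(\mbox{\boldmath $\sigma$}_L,\mbox{\boldmath $\sigma$}_H)$ of curves in $\mathfrak{M}_0={\rm graph}(\mathfrak{F}\Lag)$, which is exactly what you carry out. One small precision: the identity $\Reeb({\bf H})=-\derpar{\Lag}{s}$ for an arbitrary admissible Reeb field $\Reeb=\derpar{}{s}+F^i\derpar{}{v^i}$ holds on $\mathfrak{M}_0$ (where $p_i=\derpar{\Lag}{v^i}$) rather than on all of $\mathfrak{M}$, but this suffices since the compatibility constraints themselves are Reeb-independent and all solutions are supported on $\mathfrak{M}_0$.
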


The curves $\mbox{\boldmath $\sigma$}\colon\Real\to\mathfrak{M}$ 
solution to the equations \eqref{Whamilton-contactc-curves-eqs}
are the integral curves of {\sc sode} vector fields $\X_{\bf H}\in\vf(\mathfrak{M})$ 
solution to \eqref{Whamilton-contact-eqs} on $\mathfrak{M}_0$.
The curves $\mbox{\boldmath $\sigma$}_\Lag\colon\Real\to\Tan Q\times\Real$
are the integral curves of {\sc sode} vector fields $\X_\Lag\in\vf(\Tan Q\times\Real)$ 
solution to  \eqref{hec}.
Finally, the curves
$\mbox{\boldmath $\sigma$}_{\rm h}\colon\Real\to\Tan^*Q\times\Real$
are the integral curves of the contact Hamiltonian vector field $\X_{\rm h}\in\vf(\Tan^*Q\times\Real)$ 
solution to \eqref{evolvf11b} on $\Tan^*Q\times\Real$.
Then, an immediate corollary of the above theorem is the following:

\begin{teor}
Let $\X_{\bf H}\in\vf(\mathfrak{M})$ be the contact Hamiltonian vector field
solution to the equations \eqref{Whamilton-contact-eqs}
on $\mathfrak{M}_0$, and tangent to 
$\mathfrak{M}_0$.

Then the vector field $\X_\Lag\in\vf(\Tan Q\times\Real)$, defined by
$\X_\Lag\circ\rho_1=\Tan\rho_1\circ\X_{\bf H}$,
is a {\sc sode} vector field which is a
solution to the equations \eqref{eq-E-L-contact1}; that is, the contact Euler-Lagrange vector field,
on $\Tan Q\times\Real$.

Furthermore, the vector field $\X_{\rm h}\in\vf(\Tan^*Q\times\Real)$, defined by
$\X_{\rm h}\circ\rho_2=\Tan\rho_2\circ X_{\bf H}$,
is the contact Hamiltonian vector field solution to the equations \eqref{evolvf11b} on $\Tan^*Q\times\Real$.
\end{teor}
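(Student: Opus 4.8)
The statement is an immediate corollary of the previous Theorem \ref{mainteor01}, so the plan is to reduce it to that result together with the structure of the constraint submanifold $\mathfrak{M}_0$. First I would recall that, in the hyperregular case, $\mathfrak{M}_0=\mathrm{graph}(\mathfrak{F}\Lag)$ and the restricted projection $\rho_1\circ\jmath_0\colon\mathfrak{M}_0\to\Tan Q\times\Real$ is a diffeomorphism, while $\rho_2\circ\jmath_0\colon\mathfrak{M}_0\to\Tan^*Q\times\Real$ is also a diffeomorphism (being $\mathfrak{F}\Lag$ a diffeomorphism in this case). Since $\X_{\bf H}$ is tangent to $\mathfrak{M}_0$, the prescriptions $\X_\Lag\circ\rho_1=\Tan\rho_1\circ\X_{\bf H}$ and $\X_{\rm h}\circ\rho_2=\Tan\rho_2\circ\X_{\bf H}$ define genuine vector fields on $\Tan Q\times\Real$ and $\Tan^*Q\times\Real$, respectively: one checks that the right-hand sides are $\rho_1$-projectable (resp. $\rho_2$-projectable) precisely because $\X_{\bf H}$ is tangent to $\mathfrak{M}_0$ and the restricted projections are diffeomorphisms onto the respective bundles.

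\textbf{Key steps.} The core of the argument is the curve-level equivalence already established. Let $\mbox{\boldmath $\sigma$}\colon\Real\to\mathfrak{M}$ be an arbitrary integral curve of $\X_{\bf H}$ with initial condition on $\mathfrak{M}_0$; by tangency $\mathrm{Im}(\mbox{\boldmath $\sigma$})\subset\mathfrak{M}_0$. By construction, $\mbox{\boldmath $\sigma$}_\Lag:=\rho_1\circ\mbox{\boldmath $\sigma$}$ is an integral curve of $\X_\Lag$ and $\mbox{\boldmath $\sigma$}_{\rm h}:=\rho_2\circ\mbox{\boldmath $\sigma$}=\mathfrak{F}\Lag\circ\mbox{\boldmath $\sigma$}_\Lag$ is an integral curve of $\X_{\rm h}$, since $\Tan\rho_1\circ\widetilde{\mbox{\boldmath $\sigma$}}=\widetilde{\mbox{\boldmath $\sigma$}_\Lag}$ and similarly for $\rho_2$. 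Now Theorem \ref{mainteor01} tells us that $\mbox{\boldmath $\sigma$}_\Lag$ is holonomic and a solution to the contact Euler--Lagrange equations \eqref{hec}, and that $\mbox{\boldmath $\sigma$}_{\rm h}$ is a solution to the contact Hamiltonian equations \eqref{hamilton-contactc-curves-eqs}. Since this holds for every integral curve of $\X_\Lag$ (resp. $\X_{\rm h}$) — because every point of $\Tan Q\times\Real$ is $\rho_1(\mathrm{some\ point\ of\ }\mathfrak{M}_0)$ — by Theorem \ref{teor:Lcurv} (adapted to the contact setting as in Proposition \ref{ELeq-teor}) the vector field $\X_\Lag$ is a \textsc{sode} satisfying \eqref{eq-E-L-contact1}, and $\X_{\rm h}$ satisfies \eqref{evolvf11b}. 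The holonomy of $\X_\Lag$ can also be read off directly from $\rho_1\circ\jmath_0$ being a diffeomorphism that intertwines the tautological structures, which forces $\mathcal{J}(\X_\Lag)=\Delta$; alternatively, one invokes that in the hyperregular case the first constraint equation \eqref{one} is automatically the \textsc{sode} condition. Finally, $\mathfrak{F}\Lag_*\X_\Lag=\X_{\rm h}$ follows from $\rho_2=\mathfrak{F}\Lag\circ\rho_1$ on $\mathfrak{M}_0$ and the two defining relations for $\X_\Lag$ and $\X_{\rm h}$.

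\textbf{Main obstacle.} The only genuinely delicate point is verifying that the formulas $\X_\Lag\circ\rho_1=\Tan\rho_1\circ\X_{\bf H}$ and $\X_{\rm h}\circ\rho_2=\Tan\rho_2\circ\X_{\bf H}$ actually \emph{define} vector fields — i.e. that $\Tan\rho_1\circ\X_{\bf H}$ and $\Tan\rho_2\circ\X_{\bf H}$ descend consistently. This is where the tangency of $\X_{\bf H}$ to $\mathfrak{M}_0$ and the diffeomorphism property of the restricted projections are essential: without tangency, $\Tan\rho_1\circ\X_{\bf H}$ restricted to $\mathfrak{M}_0$ need not be $(\rho_1\circ\jmath_0)$-projectable. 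Once this is settled, the rest is a bookkeeping translation of Theorem \ref{mainteor01} from curves to vector fields, using uniqueness of the solution in the (hyper)regular case (Propositions \ref{teo-evoeqs1} and \ref{ELeq-teor}) to conclude that the vector field obtained coincides with the genuine Lagrangian (resp. Hamiltonian) dynamical vector field. I would remark that the singular case, handled in \cite{LGMMR-2020}, requires carrying the constraint algorithm in parallel on all three bundles, but here we explicitly restrict to the hyperregular situation.
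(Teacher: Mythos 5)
Your proposal is correct and follows exactly the route the paper intends: the paper states this result as an immediate corollary of Theorem \ref{mainteor01}, and your argument simply makes explicit the translation from the curve-level equivalence to vector fields, using the tangency of $\X_{\bf H}$ to $\mathfrak{M}_0$, the fact that $\mathfrak{M}_0={\rm graph}(\mathfrak{F}\Lag)$ makes $\rho_1\circ\jmath_0$ (and, via the Legendre map, $\rho_2\circ\jmath_0$) a diffeomorphism, and uniqueness of the dynamical vector fields in the (hyper)regular case. No gaps; the well-definedness check for $\X_\Lag$ and $\X_{\rm h}$ that you flag as the delicate point is precisely what justifies the paper's "immediate corollary" phrasing.
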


These results are analogous to those obtained for the unified
formalism of nonautonomous dynamical systems.

\begin{remark}{\rm
For singular almost-regular Lagrangians, the equivalence between the constraint algorithms 
in the unified and in the Lagrangian formalism
only holds when the second-order condition is imposed,
since, unlike in the unified formalism, this condition does not hold in the Lagrangian case (see \cite{MR-92,SR-83}).
}\end{remark}

\section{Symmetries of contact dynamical systems}
\label{sims}

(See \cite{GGMRR-2019b} and also \cite{DeLeon2019b} for another complementary approach to these topics. See also \cite{GLR-2023} for a complete classification of symmetries of autonomous and nonautonomous contact systems, and \cite{ACLMMP-23}
for a reduction scheme).

\subsection{Symmetries of contact Hamiltonian systems. Dissipated and conserved quantities}

Let $(M,\bmeta,h)$ be a contact Hamiltonian system with Reeb vector field $\Reeb$,
and let $\X_h$ be the contact Hamiltonian vector field for this system;
that is, the solution to the Hamilton equations \eqref{evolvf1}.

As, for other kinds of dynamical systems,
a \textsl{dynamical symmetry} of a contact Hamiltonian system  is a diffeomorphism 
$\Phi\colon M\longrightarrow M$ such that $\Phi_*\X_h=X_h$.
Similarly, an \textsl{infinitesimal dynamical symmetry} of a contact Hamiltonian system  is a vector field 
$Y\in\vf(M)$ whose local flux is a dynamical symmetry; that is,
$\Lie(Y)\X_h=[Y,\X_h]=0$.
This means that dynamical symmetries map solutions into solutions.

Nevertheless, we are mainly interested in those symmetries that let
the geometric structures invariant:

\begin{definition}
\label{defsym}
A \textbf{contact Noether symmetry} of a contact Hamiltonian system $(M,\bmeta,h)$ is a diffeomorphism 
$\Phi\colon M\rightarrow M$ such that
$$
\Phi^*\bmeta=\bmeta
\quad ,\quad 
\Phi^*h=h \ .
$$
An \textbf{infinitesimal contact Noether symmetry} of a contact Hamiltonian system $(M,\bmeta,h)$ is a vector field 
$Y\in \vf(M)$ whose local flux is a contact Noether symmetry; that is,
$$
\Lie(Y)\bmeta=0
\quad ,\quad 
\Lie(Y)h=0 \ .
$$
\end{definition}

\begin{prop}
\label{symreeb}
Every (infinitesimal) contact Noether symmetry preserves the Reeb vector field; that is, $\Phi_*\Reeb=\Reeb$
(or $[Y,\Reeb]=0$).
\end{prop}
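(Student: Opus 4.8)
The plan is to prove this for the infinitesimal case (the diffeomorphism case follows by taking the flow of $Y$, or by an entirely parallel argument). So suppose $Y\in\vf(M)$ satisfies $\Lie(Y)\bmeta=0$ and $\Lie(Y)h=0$. I want to show $[Y,\Reeb]=0$. The key is that the Reeb vector field $\Reeb$ is \emph{uniquely} characterized by the two conditions $\inn(\Reeb)\bmeta=1$ and $\inn(\Reeb)\d\bmeta=0$ (Theorem \ref{1stprop}), since $({\rm M},\bmeta)$ is a genuine contact manifold. Therefore it suffices to check that $[Y,\Reeb]$ satisfies those same conditions that characterize $0$ as a section of $\ker\d\bmeta\cap\ker\bmeta$; but actually the cleanest route is to show directly that $[Y,\Reeb]$ annihilates both $\bmeta$ and $\d\bmeta$, and then invoke the splitting $\Tan{\rm M}=\ker\d\bmeta\oplus\ker\bmeta$ to conclude $[Y,\Reeb]=0$.

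The computation I would carry out is the following. First, $\Lie(Y)\bmeta=0$ immediately gives $\Lie(Y)\d\bmeta=\d\Lie(Y)\bmeta=0$. Now use the standard commutation formula $\inn([Y,\Reeb])=\Lie(Y)\circ\inn(\Reeb)-\inn(\Reeb)\circ\Lie(Y)$ applied to $\bmeta$ and to $\d\bmeta$:
\begin{align*}
\inn([Y,\Reeb])\bmeta &= \Lie(Y)(\inn(\Reeb)\bmeta)-\inn(\Reeb)(\Lie(Y)\bmeta)
= \Lie(Y)(1)-\inn(\Reeb)(0)=0\ ,\\
\inn([Y,\Reeb])\d\bmeta &= \Lie(Y)(\inn(\Reeb)\d\bmeta)-\inn(\Reeb)(\Lie(Y)\d\bmeta)
= \Lie(Y)(0)-\inn(\Reeb)(0)=0\ .
\end{align*}
Thus $[Y,\Reeb]\in\ker\bmeta\cap\ker\d\bmeta$. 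But by \eqref{splitcont} we have $\Tan{\rm M}=\ker\d\bmeta\oplus\ker\bmeta$, so $\ker\bmeta\cap\ker\d\bmeta=\{0\}$, whence $[Y,\Reeb]=0$. For the diffeomorphism version I would argue analogously: from $\Phi^*\bmeta=\bmeta$ one gets $\Phi^*\d\bmeta=\d\Phi^*\bmeta=\d\bmeta$, and then $\inn(\Phi_*\Reeb)\bmeta=\inn(\Phi_*\Reeb)(\Phi^{-1})^*\bmeta=(\Phi^{-1})^*(\inn(\Reeb)\bmeta)=(\Phi^{-1})^*(1)=1$, and similarly $\inn(\Phi_*\Reeb)\d\bmeta=0$; uniqueness of the Reeb vector field then forces $\Phi_*\Reeb=\Reeb$. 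This is exactly the pattern already used in the proof of Proposition \ref{lematec} in the cosymplectic setting, so I would either mimic that argument or simply refer to it.

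I do not expect any serious obstacle here; the only subtlety worth flagging is that the argument genuinely uses that $\bmeta$ is a \emph{contact} (not merely precontact) form, so that the Reeb vector field is unique and $\ker\bmeta\cap\ker\d\bmeta=\{0\}$. In the precontact case (Remark \ref{nocontacto}) the statement as phrased would need to be weakened, since Reeb vector fields are not unique; but since the proposition is stated for a contact Hamiltonian system, this is not an issue. I would also note in passing that the hypothesis $\Lie(Y)h=0$ (or $\Phi^*h=h$) is not actually needed for this particular proposition — only the invariance of $\bmeta$ is used — but I would keep the statement as is since it is a property of contact Noether symmetries specifically.
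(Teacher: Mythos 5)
Your proof is correct and follows essentially the same route as the paper: the diffeomorphism case is handled exactly as there (push forward $\Reeb$, check the two defining equations, invoke uniqueness), and the infinitesimal case uses the commutation formula to show $[Y,\Reeb]\in\ker\bmeta\cap\ker\d\bmeta=\{0\}$, which is the paper's argument as well (your contraction-based computation $\inn([Y,\Reeb])=\Lie(Y)\inn(\Reeb)-\inn(\Reeb)\Lie(Y)$ is in fact the cleaner phrasing, matching the paper's own cosymplectic Proposition \ref{lematec}, whereas the paper's contact version writes the same estimate via Lie derivatives). Your side remarks — that only invariance of $\bmeta$ is used, and that the contact (rather than precontact) hypothesis is what guarantees uniqueness of $\Reeb$ — are both accurate.
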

\begin{proof}
We obtain that,
    \beann
\inn(\Phi^{-1}_*\Reeb)(\Phi^*\d\bmeta)&=&
\Phi^*(\inn(\Reeb)\d\bmeta)=0 \ , \\
\inn(\Phi^{-1}_*\Reeb)(\Phi^*\bmeta)&=&
\Phi^*(\inn(\Reeb)\bmeta)=1 \ ,
    \eeann
and, as $\Phi^*\bmeta=\bmeta$ and the Reeb vector field is unique, 
from these equalities we get $\Phi^{-1}_*\Reeb=\Reeb$ or, equivalently,
$\Phi_*\Reeb=\Reeb$.

For the infinitesimal case, if $Y\in\vf(M)$ is an infinitesimal Noether symmetry, we have:
\bea
\label{aux2}
\Lie([Y,\Reeb])\bmeta&=& 
\Lie(Y)\Lie(\Reeb)\bmeta-\Lie(\Reeb)\Lie(Y)\bmeta=
\Lie(Y)\inn(\Reeb)\d\bmeta+\d\inn(\Reeb)\eta=0 \ ,
\\
\Lie([Y,\Reeb])\d\bmeta&=& \d\Lie([Y,\Reeb])\bmeta=0 \ , \nonumber
\eea
hence $[Y,\Reeb]=\Lie(Y)\Reeb\in \ker\bmeta\cap\ker\d\bmeta=\{ 0\}$.
\\ \qed\end{proof}

As in the dynamical systems studied in the previous chapters,
 we have:

\begin{prop}
\label{csis}
Every (infinitesimal) contact Noether symmetry is a (infinitesimal) dynamical symmetry.
\end{prop}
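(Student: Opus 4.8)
The statement to prove is Proposition~\ref{csis}: every (infinitesimal) contact Noether symmetry of a contact Hamiltonian system $(M,\bmeta,h)$ is a (infinitesimal) dynamical symmetry. The strategy mirrors the proof of Proposition~\ref{simdin} in the symplectic case and of Proposition~\ref{simdin2} in the cosymplectic case: start from the Hamilton equations that characterize $\X_h$, pull them back by $\Phi$ (or apply $\Lie(Y)$ to them), use the defining conditions of a contact Noether symmetry ($\Phi^*\bmeta=\bmeta$, $\Phi^*h=h$, and by Proposition~\ref{symreeb} also $\Phi_*\Reeb=\Reeb$), and then invoke the uniqueness of the solution (Proposition~\ref{teo-evoeqs1}) to conclude that $\Phi_*\X_h=\X_h$.

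For the \emph{finite} case, I would proceed as follows. Let $\Phi$ be a contact Noether symmetry. Recall that $\X_h$ is the unique vector field satisfying $\inn(\X_h)\bmeta=-h$ and $\inn(\X_h)\d\bmeta=\d h-\Reeb(h)\,\bmeta$ (equations \eqref{evolvf111}). Applying $\Phi^*$ to the first equation and using $\Phi^*\bmeta=\bmeta$, $\Phi^*h=h$, one gets
$$
\inn(\Phi_*^{-1}\X_h)\bmeta=\Phi^*\big(\inn(\X_h)\bmeta\big)=\Phi^*(-h)=-h \ .
$$
For the second equation, applying $\Phi^*$ and using $\Phi^*\d\bmeta=\d(\Phi^*\bmeta)=\d\bmeta$, $\Phi^*\d h=\d h$, together with $\Phi_*\Reeb=\Reeb$ (which gives $\Phi^*(\Reeb(h))=(\Phi_*^{-1}\Reeb)(\Phi^*h)=\Reeb(h)$), one obtains
$$
\inn(\Phi_*^{-1}\X_h)\d\bmeta=\Phi^*\big(\inn(\X_h)\d\bmeta\big)=\Phi^*\big(\d h-\Reeb(h)\,\bmeta\big)=\d h-\Reeb(h)\,\bmeta \ .
$$
Hence $\Phi_*^{-1}\X_h$ also satisfies the contact Hamiltonian equations \eqref{evolvf111}; since the system is regular, the solution is unique by Proposition~\ref{teo-evoeqs1}, so $\Phi_*^{-1}\X_h=\X_h$, i.e. $\Phi_*\X_h=\X_h$, which says precisely that $\Phi$ is a dynamical symmetry.

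For the \emph{infinitesimal} case, the cleanest route is to apply the Lie derivative $\Lie(Y)$ to the Hamilton equations written in the equivalent form \eqref{equivmanner}, namely $\Lie(\X_h)\bmeta=-\Reeb(h)\,\bmeta$ and $\inn(\X_h)\bmeta=-h$. Using $\Lie(Y)\bmeta=0$, $\Lie(Y)h=0$, and $[Y,\Reeb]=0$ from Proposition~\ref{symreeb}, and the commutation rule $\Lie(Y)\inn(\X_h)\beta-\inn(\X_h)\Lie(Y)\beta=\inn([Y,\X_h])\beta$, one deduces that $[Y,\X_h]$ contracts to zero against $\bmeta$ and has vanishing Lie derivative of $\bmeta$; an alternative is to work directly with \eqref{evolvf111} and show $\inn([Y,\X_h])\bmeta=0$ and $\inn([Y,\X_h])\d\bmeta=0$, so that $[Y,\X_h]\in\ker\bmeta\cap\ker\d\bmeta=\{0\}$ by the splitting \eqref{splitcont} in Theorem~\ref{1stprop}. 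Either way, one concludes $[Y,\X_h]=0$, so $Y$ is an infinitesimal dynamical symmetry; alternatively, and more economically, the infinitesimal statement follows from the finite one by taking the local flow of $Y$, each of whose members is a contact Noether symmetry by definition, hence a dynamical symmetry, and passing to the infinitesimal generator. The only mild subtlety — and the step that requires care rather than real difficulty — is making sure that in the infinitesimal manipulations one correctly handles the term $\Reeb(h)\,\bmeta$; this is exactly where Proposition~\ref{symreeb} is indispensable, and without it the argument would not close.
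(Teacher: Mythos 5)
Your proof is correct and follows essentially the same route as the paper: pull back the contact Hamiltonian equations by $\Phi$ (using $\Phi_*\Reeb=\Reeb$ from Proposition~\ref{symreeb}) and invoke uniqueness for the finite case, and show $\inn([Y,\X_h])\bmeta=0$ and $\inn([Y,\X_h])\d\bmeta=0$ so that $[Y,\X_h]\in\ker\bmeta\cap\ker\d\bmeta=\{0\}$ for the infinitesimal case. If anything, your handling of the $\Reeb(h)\,\bmeta$ term via $[Y,\Reeb]=0$ in the $\d\bmeta$-contraction is spelled out more carefully than in the paper's own computation.
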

\begin{proof}
In fact, using that $\Phi$ is a contact Noether symmetry and that, by the above proposition, $\Phi_*\Reeb=\Reeb$, we obtain:
\beann
\inn(\Phi^{-1}_*\X_h)\d\bmeta&=&
\inn(\Phi^{-1}_*\X_h)(\Phi^*\d\bmeta)=
\Phi^*(\inn(\X_h)\d\bmeta)=
\Phi^*(\d h-(\Reeb(h))\,\bmeta)=\d h-(\Reeb(h) )\,\bmeta \ ,
\\
\inn(\Phi^{-1}_*\X_h)\bmeta&=&
\inn(\Phi^{-1}_*\X_h)(\Phi^*\bmeta)=
\Phi^*(\inn(\X_h)\bmeta)=
\Phi^*(-h)=-h \ ,
\eeann
then $\Phi^{-1}_*\X_h$ is a solution to the dynamical equation \eqref{evolvf11b} and,
as the solution is unique, we conclude that $\Phi_*\X_h=\X_h$, and
hence $\Phi$ is a symmetry.

For the infinitesimal case we have that,
\beann
\inn([Y,\X_h])\bmeta&=& 
\Lie(Y)\inn(\X_h)\bmeta-\inn(\X_h)\Lie(Y)\bmeta=
-\Lie(Y)h=0 \ ,
\\
\inn([Y,\X_h])\d\bmeta&=&
\d\inn([Y,\X_h])\bmeta=0\ ,
\eeann
then $[Y,\X_h]\in \ker\bmeta\cap\ker\d\bmeta=\{ 0\}$
and it is a dynamical symmetry.
\\ \qed\end{proof}

Now, the result stated in Proposition \ref{disipenerg},
$\Lie(\X_h)h=-(\Reeb(h))\,h$,
induces us to define:

\begin{definition}\label{definition:diss-quan}
A \textbf{dissipated quantity} of the contact Hamiltonian system $(M,\bmeta,h)$ is a function 
$F\in\Cinfty(M)$ satisfying that 
\beq\label{eq:dissipation}
\Lie(\X_h)F=-(\Reeb(h))\,F \ .
\eeq
A \textbf{conserved quantity} of the contact Hamiltonian system $(M,\bmeta,h)$ is a function 
$G\in\Cinfty(M)$ satisfying that 
$$
\Lie(\X_h)G=0 \ .
$$
\end{definition}

For contact Hamiltonian systems, 
as we are dealing with dissipative systems,
symmetries are associated with dissipated
quantities:

\begin{teor}
\label{th:dissipation}
{\rm (Dissipation theorem).} 
If $Y\in\vf(M)$ is an infinitesimal dynamical symmetry,
then the function $F=-\inn(Y)\bmeta$
is a dissipated quantity.
\end{teor}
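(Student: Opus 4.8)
The plan is to compute $\Lie(\X_h)F$ directly using the contact Hamilton equations and the definition of an infinitesimal dynamical symmetry. First I would write $F = -\inn(Y)\bmeta$ and apply the Lie derivative along $\X_h$, using the fact that $\Lie(\X_h)(\inn(Y)\bmeta) = \inn([\X_h,Y])\bmeta + \inn(Y)\Lie(\X_h)\bmeta$. Since $Y$ is an infinitesimal dynamical symmetry, $[\X_h,Y] = -[Y,\X_h] = 0$, so the first term vanishes and we are left with $\Lie(\X_h)F = -\inn(Y)\Lie(\X_h)\bmeta$.

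The key step is then to substitute the expression for $\Lie(\X_h)\bmeta$ coming from the contact Hamiltonian equations. From Proposition preceding \ref{disipenerg} (the reformulation \eqref{equivmanner}), we have $\Lie(\X_h)\bmeta = -\Reeb(h)\,\bmeta$. Plugging this in gives
\[
\Lie(\X_h)F = -\inn(Y)\big(-\Reeb(h)\,\bmeta\big) = \Reeb(h)\,\inn(Y)\bmeta = -\Reeb(h)\,F \ ,
\]
which is exactly equation \eqref{eq:dissipation}, so $F$ is a dissipated quantity. This is essentially a two-line computation once the right identities are invoked.

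I do not anticipate a serious obstacle here; the only subtlety is making sure that the sign conventions from the definition of $F$ and from the reformulated Hamilton equations \eqref{equivmanner} are handled consistently, and that one correctly uses $\Reeb(h)$ as a function (so that $\inn(Y)(\Reeb(h)\,\bmeta) = \Reeb(h)\,\inn(Y)\bmeta$). One could alternatively phrase the whole argument in terms of flows: if $\Phi_t$ is the local flow of $Y$, then $\Phi_t$ maps integral curves of $\X_h$ to integral curves, and tracking how $\bmeta$ and $h$ transform along the dynamics reproduces the same identity; but the infinitesimal computation above is the cleanest route and is the one I would present.
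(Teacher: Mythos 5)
Your proof is correct and follows essentially the same route as the paper: both compute $\Lie(\X_h)F$ via the commutation formula for Lie derivative and contraction, kill the $\inn([\X_h,Y])\bmeta$ term using the dynamical symmetry condition $[Y,\X_h]=0$, and substitute $\Lie(\X_h)\bmeta=-\Reeb(h)\,\bmeta$ from \eqref{equivmanner} to obtain $\Lie(\X_h)F=-\Reeb(h)\,F$. No gaps.
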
 
\begin{proof}
In fact, Proposition \ref{csis} says that $[Y,\X_h]=0$; then,
\begin{align*}
\Lie(\X_h) F &=
-\Lie(\X_h) \inn(Y) \bmeta =
- \inn(Y) \Lie(\X_h)\bmeta - \inn(\Lie(\X_h)Y)\, \bmeta =
\\
&=
(\Reeb(h)) \inn(Y)\bmeta + \inn([Y,\X_h]) \bmeta =
 -(\Reeb(h))\, F + \inn([Y,\X_h]) \bmeta =
 -(\Reeb(h))\, F\ ,
\end{align*}
where we have applied that $\Lie(\X_h)\bmeta=-\Reeb(h))\,\bmeta$ (by \eqref{equivmanner}).
\\ \qed\end{proof}

\begin{remark}{\rm
In particular, the Hamiltonian vector field $\X_h$ is
trivially a symmetry, since $[\X_h,\X_h]=0$.
Then, Proposition \eqref{eq:disipenerg}
establishes that its associated dissipated quantity is the energy, 
$h=-\inn(X_h)\bmeta$.
}\end{remark}

These are ``non conservation theorems''
which account for the non-conservation of these quantities associated with the symmetries.
In particular, every dissipated quantity changes with the same rate, $-\Reeb(h)$, 
which suggests that the quotient of two dissipated quantities should be a conserved quantity. 
Indeed:

\begin{prop}
\label{prop:disscon} 
\begin{enumerate}
\item 
If $F_1$ and $F_2$ are dissipated quantities and $F_2\not=0$, then $\dst\frac{F_1}{F_2}$ is a conserved quantity.
\item 
If $F$ is a dissipated quantity and $G$ is a conserved quantity, then $FG$ is a dissipated quantity.
\end{enumerate}
\end{prop}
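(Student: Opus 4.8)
The plan is to prove both statements by direct computation using the characterization of dissipated and conserved quantities in Definition \ref{definition:diss-quan}, together with the Leibniz rule for the Lie derivative $\Lie(\X_h)$ acting on products (and quotients) of functions. Throughout, $\X_h$ denotes the contact Hamiltonian vector field of the system, so that $\Lie(\X_h)$ is an ordinary derivation on $\Cinfty(M)$.

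For the first item, suppose $F_1$ and $F_2$ are dissipated quantities with $F_2 \neq 0$ on an open set. By \eqref{eq:dissipation}, $\Lie(\X_h)F_i = -(\Reeb(h))\,F_i$ for $i=1,2$. First I would compute $\Lie(\X_h)\!\left(\dfrac{F_1}{F_2}\right)$ using the quotient rule for a derivation:
$$
\Lie(\X_h)\!\left(\frac{F_1}{F_2}\right) = \frac{(\Lie(\X_h)F_1)\,F_2 - F_1\,(\Lie(\X_h)F_2)}{F_2^2} = \frac{-(\Reeb(h))\,F_1 F_2 + F_1\,(\Reeb(h))\,F_2}{F_2^2} = 0\ ,
$$
so $\dfrac{F_1}{F_2}$ is a conserved quantity by Definition \ref{definition:diss-quan}.

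For the second item, let $F$ be a dissipated quantity and $G$ a conserved quantity, so $\Lie(\X_h)F = -(\Reeb(h))\,F$ and $\Lie(\X_h)G = 0$. Then, by the Leibniz rule,
$$
\Lie(\X_h)(FG) = (\Lie(\X_h)F)\,G + F\,(\Lie(\X_h)G) = -(\Reeb(h))\,FG + 0 = -(\Reeb(h))\,(FG)\ ,
$$
which is precisely the condition \eqref{eq:dissipation} for $FG$ to be a dissipated quantity.

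There is essentially no obstacle here: the proof is a short application of the derivation property of $\Lie(\X_h)$. The only mild care needed is in the first item, where one should note that the statement is local in nature, valid on the open set where $F_2$ does not vanish; if one wishes a global statement one restricts attention to such an open set. I would present the proof as two displayed computations, one per item, preceded by the recollection of the defining equations, and that completes the argument.
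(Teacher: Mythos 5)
Your proposal is correct and follows essentially the same route as the paper's own proof: both items are settled by a direct computation using the derivation (Leibniz/quotient) property of $\Lie(\X_h)$ together with the defining relations $\Lie(\X_h)F=-(\Reeb(h))\,F$ and $\Lie(\X_h)G=0$. Your remark about restricting to the open set where $F_2\neq 0$ is a reasonable precision but does not change the argument.
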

\begin{proof}
In fact, we have
\begin{align*}
\Lie(\X_h)\left(\frac{F_1}{F_2}\right)&=
\frac{1}{F_2}\Lie(\X_h)F_1-\frac{F_1}{{F_2}^2}\Lie(\X_h){F_2}=-
\frac{1}{F_2}(\Reeb(h))F_1+\frac{F_1}{{F_2}^2}(\Reeb(h))F_2=0\ .
\\
\Lie(\X_h)(FG)&=G\,\Lie(\X_h)F+F\,\Lie(\X_h)G=-(\Reeb(h))\,FG.
\end{align*}
\qed\end{proof}

A straightforward consequence of this proposition is that symmetries can also have associated conserved quantities:

\begin{corol}
For every infinitesimal symmetry $Y\in\vf(M)$,
if $h\not=0$, then $\dst -\frac{\inn(Y)\bmeta}{h}$ is a conserved quantity.
\label{consquanquot}
\end{corol}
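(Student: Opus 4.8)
The statement to prove is Corollary~\ref{consquanquot}: for every infinitesimal symmetry $Y\in\vf(M)$, if $h\not=0$, then $-\dfrac{\inn(Y)\bmeta}{h}$ is a conserved quantity.

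The plan is to combine two results already established in the excerpt. First, by the Dissipation theorem (Theorem~\ref{th:dissipation}), since $Y$ is an infinitesimal (dynamical) symmetry, the function $F:=-\inn(Y)\bmeta$ is a dissipated quantity; that is, $\Lie(\X_h)F=-(\Reeb(h))\,F$. Second, the Hamiltonian vector field $\X_h$ is itself trivially a dynamical symmetry because $[\X_h,\X_h]=0$, so applying the same Dissipation theorem (or, more directly, Proposition~\ref{disipenerg} on the dissipation of energy) gives that $h=-\inn(\X_h)\bmeta$ is a dissipated quantity as well: $\Lie(\X_h)h=-(\Reeb(h))\,h$. Thus both $F$ and $h$ are dissipated quantities.

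Now I would invoke Proposition~\ref{prop:disscon}(1): the quotient of two dissipated quantities, with nonzero denominator, is a conserved quantity. Since we are assuming $h\not=0$, the quotient $\dfrac{F}{h}=\dfrac{-\inn(Y)\bmeta}{h}=-\dfrac{\inn(Y)\bmeta}{h}$ is a conserved quantity, which is exactly the claim. Alternatively, if one prefers not to cite Proposition~\ref{prop:disscon} and wants a self-contained one-line computation, one can write
$$
\Lie(\X_h)\left(-\frac{\inn(Y)\bmeta}{h}\right)
= -\frac{1}{h}\Lie(\X_h)(\inn(Y)\bmeta)+\frac{\inn(Y)\bmeta}{h^2}\Lie(\X_h)h
= \frac{1}{h}(\Reeb(h))\,\inn(Y)\bmeta-\frac{\inn(Y)\bmeta}{h^2}(\Reeb(h))\,h=0\ ,
$$
using $\Lie(\X_h)(\inn(Y)\bmeta)=\inn(Y)\Lie(\X_h)\bmeta+\inn([\X_h,Y])\bmeta=-(\Reeb(h))\,\inn(Y)\bmeta$, where the last equality follows from $\Lie(\X_h)\bmeta=-(\Reeb(h))\,\bmeta$ (equation~\eqref{equivmanner}) and $[\X_h,Y]=-[Y,\X_h]=0$ (Proposition~\ref{csis}).

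There is no real obstacle here: the corollary is a formal consequence of the Dissipation theorem and Proposition~\ref{prop:disscon}, and the only point that requires a word of care is the identification $h=-\inn(\X_h)\bmeta$ (the first contact Hamiltonian equation \eqref{evolvf1}) so that $h$ genuinely fits the mould ``$F=-\inn(Y)\bmeta$ for a symmetry $Y=\X_h$'', together with the hypothesis $h\not=0$ guaranteeing the quotient is well defined. I expect the proof to be two or three lines, mirroring the proofs of the surrounding statements.
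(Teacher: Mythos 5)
Your proposal is correct and follows essentially the same route as the paper, which presents this corollary precisely as an immediate consequence of Theorem~\ref{th:dissipation} (giving that $-\inn(Y)\bmeta$ and $h$ are both dissipated quantities) together with Proposition~\ref{prop:disscon}(1). One tiny remark: in your alternative direct computation, the fact $[\X_h,Y]=0$ follows directly from the definition of an infinitesimal dynamical symmetry rather than from Proposition~\ref{csis} (which concerns contact Noether symmetries), but this does not affect the validity of the argument.
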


Contact symmetries also allow us to generate new dissipated quantities 
from another one.

\begin{prop}
If $\Phi\colon M\rightarrow M$ is a contact Noether symmetry and 
$F\in\Cinfty(M)$ is a dissipated quantity, then so is $\Phi^*F$.
Similarly, if $Y\in\vf(M)$ is an infinitesimal contact Noether symmetry, then
$\Lie(Y)F$ is a dissipated quantity.
\end{prop}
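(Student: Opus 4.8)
The plan is to prove both statements by the same strategy: show that if a function $F$ is dissipated (i.e. $\Lie(\X_h)F = -(\Reeb(h))\,F$) and $\Phi$ is a contact Noether symmetry, then $\Phi^*F$ still satisfies that defining equation. The key fact I would use is Proposition \ref{csis} (together with Proposition \ref{symreeb}): a contact Noether symmetry is a dynamical symmetry, so $\Phi_*\X_h=\X_h$, and moreover it preserves the Reeb vector field, $\Phi_*\Reeb=\Reeb$.

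First I would compute, using the naturality of the Lie derivative under pullback by a diffeomorphism,
\[
\Lie(\X_h)(\Phi^*F)=\Phi^*\big(\Lie(\Phi_*\X_h)F\big)=\Phi^*\big(\Lie(\X_h)F\big)=\Phi^*\big(-(\Reeb(h))\,F\big)=-\Phi^*(\Reeb(h))\,\Phi^*F .
\]
Then I need that $\Phi^*(\Reeb(h))=\Reeb(h)$. This follows because $\Reeb(h)=\Lie(\Reeb)h$, and $\Phi^*(\Lie(\Reeb)h)=\Lie(\Phi_*^{-1}\Reeb)(\Phi^*h)=\Lie(\Reeb)h$, using $\Phi_*\Reeb=\Reeb$ (Proposition \ref{symreeb}) and $\Phi^*h=h$ (definition of contact Noether symmetry). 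Substituting this back gives $\Lie(\X_h)(\Phi^*F)=-(\Reeb(h))\,\Phi^*F$, which is exactly the condition \eqref{eq:dissipation} for $\Phi^*F$ to be a dissipated quantity.

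For the infinitesimal version I would work analogously, differentiating the flow relation or using Lie-algebraic identities directly. If $Y$ is an infinitesimal contact Noether symmetry, then by Proposition \ref{csis} it is an infinitesimal dynamical symmetry, $[Y,\X_h]=0$, and by Proposition \ref{symreeb} $[Y,\Reeb]=0$; also $\Lie(Y)h=0$. Then
\[
\Lie(\X_h)\Lie(Y)F=\Lie(Y)\Lie(\X_h)F+\Lie([\X_h,Y])F=\Lie(Y)\big(-(\Reeb(h))\,F\big)=-\Lie(Y)(\Reeb(h))\,F-(\Reeb(h))\,\Lie(Y)F .
\]
It remains to see $\Lie(Y)(\Reeb(h))=0$: indeed $\Lie(Y)(\Lie(\Reeb)h)=\Lie([Y,\Reeb])h+\Lie(\Reeb)\Lie(Y)h=0$, using $[Y,\Reeb]=0$ and $\Lie(Y)h=0$. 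Hence $\Lie(\X_h)\Lie(Y)F=-(\Reeb(h))\,\Lie(Y)F$, so $\Lie(Y)F$ is a dissipated quantity.

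I do not anticipate a genuine obstacle here; this is a short bookkeeping argument entirely built on results already established (Propositions \ref{symreeb}, \ref{csis}, Definition \ref{definition:diss-quan}). The only point requiring a touch of care is verifying the invariance $\Phi^*(\Reeb(h))=\Reeb(h)$ — i.e. making sure the Reeb-preservation property is invoked rather than just the two conditions in Definition \ref{defsym} — since without it the pullback would pick up a spurious factor. Once that is noted, both parts follow by the displayed computations.
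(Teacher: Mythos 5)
Your proof is correct and follows essentially the same route as the paper: pull back (or Lie-differentiate) the dissipation equation, use $\Phi_*\X_h=\X_h$ (resp. $[Y,\X_h]=0$) from Proposition \ref{csis} together with $\Phi_*\Reeb=\Reeb$ (resp. $[Y,\Reeb]=0$) from Proposition \ref{symreeb} and the invariance of $h$, and conclude. Your explicit check that $\Phi^*(\Reeb(h))=\Reeb(h)$ is a small step the paper leaves implicit, but it is not a different argument.
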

\begin{proof}
In fact, we have
$$
\Lie(\X_h)(\Phi^*F)=
\Phi^*\Lie(\Phi_*\X_h)F=
\Phi^*\Lie(\X_h)F=
\Phi^*(-\Reeb(h))F=
-(\Reeb(h))(\Phi^*F) \ .
$$
For the infinitesimal case, taking into account that $Y$ is a contact infinitesimal $\Lie(\X_h)F=0$, that $[Y,\X_h]=\X_h$ (Proposition \ref{csis}) and that $[Y,\Reeb]=0$ (Proposition \ref{symreeb}), we have:
\beann
\Lie(\X_h)\Lie(Y)F&=&
\Lie([\X_h,Y])F-\Lie(Y)\Lie(\X_h)F+
-\Lie(Y)\big((\Reeb(h))\,F\big)
\\ &=&
-\Lie(\X_h)F-F\,\Lie(Y)(\Reeb(h))-(\Reeb(h))(\Lie(Y)F)
\\ &=&
-F\,\Lie([Y,\Reeb])h-F\,\Lie(\Reeb)\Lie(Y)h-(\Reeb(h))(\Lie(Y)F)
=-(\Reeb(h))(\Lie(Y)F) \ .
\eeann
\qed\end{proof}

\subsection{Symmetries for contact Lagrangian and canonical Hamiltonian systems}

Next, we particularize the results on symmetries and dissipated quantities 
to the case of Lagrangian dissipative systems and their canonical Hamiltonian formalism.

Consider a regular contact Lagrangian system $(\Tan Q\times\Real,\Lag)$, 
with Reeb vector field $\Reeb_\Lag$,
and let $\X_\Lag$ be the contact Euler--Lagrange vector field for this system;
that is, the solution to the Lagrangian equations \eqref{eq-E-L-contact1}.

All the definitions and results about symmetries and dissipated quantities stated in the preceding section 
hold for the contact system $(\Tan Q\times\Real,\bmeta_\Lag,E_\Lag)$.
In particular, the dissipation theorem states that
$-\inn(Y)\bmeta_\Lag$ is a dissipated quantity, for every infinitesimal dynamical symmetry $Y\in\vf(\Tan Q\times\Real)$,
the energy dissipation theorem states that
\beq
\label{energdis}
\Lie(\X_\Lag)E_\Lag=-(\Reeb_\Lag(E_\Lag))\,E_\Lag\ .
\eeq

Furthermore,
if $\varphi\colon Q\to Q$ is a diffeomorphism,
we can construct the {\sl canonical lift} of $\varphi$ to $\Tan Q\times\Real$ as the diffeomorphism 
$\Phi:=(\Tan\varphi, {\rm {\rm Id}_{\mathbb R}}) \colon
\Tan Q\times\Real
\longrightarrow 
\Tan Q\times\Real$.
These kinds of diffeomorfisms $\Phi$ are usually called {\sl natural transformations} on $\Tan Q\times\Real$.
Similarly, for every vector field $Z\in \vf(Q)$
we can define its {\sl complete lift}
to $\Tan Q\times\Real$ as the vector field
$Y\in\vf(\Tan Q\times\Real)$
whose local flux is the canonical lift of 
the local flux of $Z$ to $\Tan Q\times\Real$.
The infinitesimal transformation generated by $Y$ are called {\sl infinitesimal natural transformation} 
on $\Tan Q\times\Real$.

Then, taking into account the definitions
of the canonical endomorphism ${\cal J}$
and the Liouville vector field $\Delta$ in $\Tan Q\times\Real$,
it can be proved that canonical lifts of diffeomorphisms and vector fields
from $Q$ to $\Tan Q\times\Real$ preserve these canonical structures.
Furthermore, if these lifts leave the Lagrangian function invariant,
they also preserve the Reeb vector field $\Reeb_\Lag$.
Therefore, as an immediate consequence, 
we obtain the following relation:

\begin{prop}
If $\Phi\in{\rm Diff}(\Tan Q\times\Real)$ (resp. $Y\in\vf(\Tan Q\times\Real)$) is a canonical lift
to $\Tan Q\times\Real$ of a diffeomorphism $\varphi\in{\rm Diff}(Q)$
(resp.\ of a vector field $Z\in\vf(Q)$)
that leaves the Lagrangian invariant, then
it is a (infinitesimal) contact Noether symmetry, {\it i.e.},
$$
\Phi^*\bmeta_\Lag=\bmeta_\Lag \ ,\
\Phi^*E_\Lag=E_\Lag \qquad 
({\rm resp.}\ \Lie(Y)\bmeta_\Lag=0 \ ,\
\Lie(Y)E_\Lag=0 \:)\ .
$$
As a further consequence, it is a (infinitesimal) dynamical symmetry.
\end{prop}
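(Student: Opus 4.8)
The plan is to reduce the statement to two lemmas that are implicitly available from the structure of the preceding chapters: first, that the canonical endomorphism $\cal J$ and the Liouville vector field $\Delta$ on $\Tan Q\times\Real$ are invariant under canonical lifts of diffeomorphisms and vector fields of $Q$ (this is the direct analogue of Proposition~\ref{lema2} and Proposition~\ref{lema3}, transported to $\Tan Q\times\Real$ via the splitting $\Tan(\Real\times\Tan Q)=\Tan\Real\oplus\Tan(\Tan Q)$ discussed in Section~\ref{geomRxTQ}); and second, the characterization \eqref{eq:reeb} of $\Reeb_\Lag$ as the unique vector field satisfying $\inn(\Reeb_\Lag)\d\bmeta_\Lag=0$, $\inn(\Reeb_\Lag)\bmeta_\Lag=1$. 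I would treat the diffeomorphism case first and then obtain the infinitesimal case by taking the flow.

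First I would handle $\Phi=\Tan\varphi\times{\rm Id}_\Real$ with $\Phi^*\Lag=\Lag$. The key computation is that, since $\Phi^*{\cal J}={\cal J}$ and $\Phi_*\Delta=\Delta$ (the invariance lemma), the Cartan forms transform naturally: $\Phi^*\theta_\Lag=\Phi^*({\cal J}(\d\Lag))={\cal J}(\d(\Phi^*\Lag))={\cal J}(\d\Lag)=\theta_\Lag$, exactly as in Proposition~\ref{lema3}. Because $\Phi$ acts as the identity on the $\Real$ factor we have $\Phi^*\d s=\d s$, so $\Phi^*\bmeta_\Lag=\Phi^*(\d s-\theta_\Lag)=\d s-\theta_\Lag=\bmeta_\Lag$, which is the first claimed identity; taking exterior derivatives gives $\Phi^*\omega_\Lag=\omega_\Lag$ as well. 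For the energy, $\Phi^*E_\Lag=\Phi^*(\Delta(\Lag)-\Lag)=\Delta(\Phi^*\Lag)-\Phi^*\Lag=\Delta(\Lag)-\Lag=E_\Lag$, using $\Phi_*\Delta=\Delta$ and $\Phi^*\Lag=\Lag$. Hence $\Phi$ satisfies conditions (a) and (c) of a contact Noether symmetry (Definition~\ref{defsym}); condition (b), $\Phi^*s=s$, is immediate because $\Phi$ fixes $\Real$ pointwise. Next I would check that $\Phi$ preserves the Reeb vector field: from $\Phi^*\d\bmeta_\Lag=\d\bmeta_\Lag$ and $\Phi^*\bmeta_\Lag=\bmeta_\Lag$ we get $\inn(\Phi_*^{-1}\Reeb_\Lag)\d\bmeta_\Lag=\Phi^*(\inn(\Reeb_\Lag)\d\bmeta_\Lag)=0$ and $\inn(\Phi_*^{-1}\Reeb_\Lag)\bmeta_\Lag=1$, so by the uniqueness in \eqref{eq:reeb}, $\Phi_*\Reeb_\Lag=\Reeb_\Lag$.

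With all of this, $\Phi$ is a contact Noether symmetry of the contact Hamiltonian system $(\Tan Q\times\Real,\bmeta_\Lag,E_\Lag)$ in the sense of Definition~\ref{defsym}, and then Proposition~\ref{csis} gives at once that $\Phi$ is a dynamical symmetry, i.e. $\Phi_*\X_\Lag=\X_\Lag$. For the infinitesimal case I would simply run the same argument on the local one-parameter group $\Phi_t=\Tan\varphi_t\times{\rm Id}_\Real$ generated by the complete lift $Y=Z^C$ of $Z\in\vf(Q)$: $\Lie(Y)\Lag=0$ implies $\Phi_t^*\Lag=\Lag$ for all $t$, hence each $\Phi_t$ is a contact Noether symmetry by the previous paragraph, which is exactly the statement $\Lie(Y)\bmeta_\Lag=0$, $\Lie(Y)E_\Lag=0$; and then Proposition~\ref{csis} (infinitesimal part) yields $[Y,\X_\Lag]=0$. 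Alternatively, one differentiates the identities $\Phi_t^*{\cal J}={\cal J}$, $\Phi_{t*}\Delta=\Delta$ at $t=0$ to get $\Lie(Y){\cal J}=0$, $\Lie(Y)\Delta=0$, and repeats the Cartan-form computation at the infinitesimal level.

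The only genuinely nontrivial input is the invariance lemma for ${\cal J}$ and $\Delta$ under canonical lifts to $\Tan Q\times\Real$; everything else is bookkeeping. This lemma is the content of the paragraph preceding Definition~\ref{legmap1} (and is the $\Tan Q\times\Real$-analogue of Proposition~\ref{lema2}), so I would either cite it directly or, if a self-contained argument is wanted, verify it in coordinates: $\Tan\varphi$ has the block form $(q^i,v^i)\mapsto(\varphi^i(q),\partial\varphi^i/\partial q^j\,v^j)$, which one checks preserves $\dst{\cal J}=\derpar{}{v^i}\otimes\d q^i$ and $\dst\Delta=v^i\derpar{}{v^i}$ by a direct change-of-variables computation, and the extension to the $\Real$ factor is trivial since $\Phi$ acts there as the identity. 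I expect no real obstacle; the main thing to be careful about is making sure the ``$\Phi$ fixes $\Real$ pointwise'' hypothesis is used exactly where $\d s$ must be preserved, which is what forces $\bmeta_\Lag=\d s-\theta_\Lag$ (rather than just $\theta_\Lag$) to be invariant.
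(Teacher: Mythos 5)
Your proposal is correct and follows exactly the route the paper intends: the paper states this proposition as an "immediate consequence" of the invariance of ${\cal J}$ and $\Delta$ under canonical lifts (the $\Tan Q\times\Real$ analogue of Proposition \ref{lema3}), the identity on the $\Real$ factor giving $\Phi^*\d s=\d s$, the uniqueness characterization of $\Reeb_\Lag$, and Proposition \ref{csis} for the passage to dynamical symmetries, which is precisely what you spell out. The only cosmetic slip is that Definition \ref{defsym} of a contact Noether symmetry has just the two conditions $\Phi^*\bmeta=\bmeta$ and $\Phi^*h=h$ (there is no separate condition $\Phi^*s=s$ in the contact setting), so your extra verification is harmless but unnecessary.
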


As a direct consequence, if $\displaystyle\frac{\partial \Lag}{\partial q^i}=0$, 
then $\displaystyle\frac{\partial}{\partial q^i}$ 
is an infinitesimal contact Noether symmetry, and its associated dissipated quantity 
is the momentum 
$\displaystyle\frac{\partial \Lag}{\partial v^i}$;
that is,
$$
\Lie(\X_\Lag)\left(\frac{\partial\Lag}{\partial v^i}\right)=
-(\Reeb_\Lag(E_\Lag))\frac{\partial\Lag}{\partial v^i}=
\derpar{\Lag}{s}\frac{\partial\Lag}{\partial v^i}\ .
$$
(See also \cite{GeGu2002}for a similar description).

For the canonical Hamiltonian formalism, consider
the canonical contact Hamiltonian system $(\Tan^*Q\times\Real,\bmeta,{\rm h})$.
As in the Lagrangian formalism, if $\varphi\colon Q\to Q$ is a diffeomorphism,
we can construct the {\sl canonical lift} of $\varphi$ to $\Tan^*Q\times\Real$,
which is the diffeomorphism 
$\Phi:=(\Tan^*\varphi, {\rm {\rm Id}_{\mathbb R}}) \colon
\Tan^*Q\times\Real
\longrightarrow 
\Tan^*Q\times\Real$.
and is called a {\sl natural transformation} on $\Tan^*Q\times\Real$.
In the same way, for every vector field $Z\in \vf(Q)$
we define its {\sl complete lift}
to $\Tan^*Q\times\Real$ as the vector field
$Y\in\X(\Tan^*Q\times\Real)$
whose local flux is the canonical lift of 
the local flux of $Z$ to $\Tan^*Q\times\Real$.
It is called a {\sl natural infinitesimal transformations} on $\Tan^*Q\times\Real$.

The canonical forms $\Theta$ and $\Omega=-\d\Theta$
in $\Tan^*Q$ and their extensions to $\Tan^*Q\times\Real$ are invariant under the action
of canonical lifts of diffeomorphisms and vector fields from $Q$ to $\Tan^*Q$ and $\Tan^*Q\times\Real$.
Then, taking into account the definition
of the contact form $\bmeta$ in $\Tan^*Q\times\Real$, we have:

\begin{prop}
If
$\Phi\in{\rm Diff}(\Tan^*Q\times\Real)$ (resp. $Y\in\vf(\Tan^*Q\times\Real)$) is a canonical lift to $\Tan^*Q\times\Real$ of a diffeomorphism $\varphi\in{\rm Diff}(Q)$
(resp. of $Z\in\vf(Q)$), then
\begin{enumerate}
    \item
$\Phi^*\bmeta=\bmeta$ (resp $\Lie(Y)\bmeta=0$).
    \item
If, in addition,
$\Phi^*{\rm h}={\rm h}$ (resp. $\Lie(Y){\rm h}=0$),
then it is a  (infinitesimal) contact Noether symmetry.
\end{enumerate}
\end{prop}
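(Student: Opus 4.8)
The plan is to reduce everything to the explicit description of the canonical contact form on the model manifold, $\bmeta=\d s-\pi_1^*\Theta$ (where $\pi_1\colon\Tan^*Q\times\Real\to\Tan^*Q$ is the projection and $\Theta$ is the tautological $1$-form of $\Tan^*Q$), and then to invoke the invariance of $\Theta$ under cotangent lifts already established in Propositions \ref{levdif} and \ref{inf}. The only genuinely geometric ingredient beyond those results is the compatibility of the lifts with the projection $\pi_1$ together with their triviality in the $\Real$-direction; both follow directly from the construction of the canonical lift to $\Tan^*Q\times\Real$.

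First I would treat the diffeomorphism case. Since $\Phi=(\Tan^*\varphi,{\rm Id}_{\Real})$ acts as the identity on the factor $\Real$, we have $\Phi^*\d s=\d s$; moreover $\pi_1\circ\Phi=\Tan^*\varphi\circ\pi_1$, so that $\Phi^*(\pi_1^*\Theta)=\pi_1^*\big((\Tan^*\varphi)^*\Theta\big)=\pi_1^*\Theta$ by Proposition \ref{levdif}. Subtracting, $\Phi^*\bmeta=\Phi^*\d s-\Phi^*(\pi_1^*\Theta)=\d s-\pi_1^*\Theta=\bmeta$, which is statement (1). For the infinitesimal version, $Y$ is the complete lift of $Z$ to $\Tan^*Q\times\Real$; its local flux being trivial on $\Real$ gives $\inn(Y)\d s=0$, hence $\Lie(Y)\d s=\d\inn(Y)\d s=0$, and $Y$ is $\pi_1$-related to the complete lift $Z^*\in\vf(\Tan^*Q)$, so $\Lie(Y)(\pi_1^*\Theta)=\pi_1^*\big(\Lie(Z^*)\Theta\big)=0$ by Proposition \ref{inf}. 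Therefore $\Lie(Y)\bmeta=0$, as claimed.

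Finally, statement (2) is then immediate from Definition \ref{defsym}: part (1) supplies the structure-preserving condition $\Phi^*\bmeta=\bmeta$ (resp. $\Lie(Y)\bmeta=0$), and the additional hypothesis $\Phi^*{\rm h}={\rm h}$ (resp. $\Lie(Y){\rm h}=0$) supplies the second defining condition of a (infinitesimal) contact Noether symmetry; by Proposition \ref{csis} it is then also a (infinitesimal) dynamical symmetry. I do not expect any serious obstacle here; the only point requiring care is the bookkeeping of the identities $\pi_1\circ\Phi=\Tan^*\varphi\circ\pi_1$ and $\pi_{1*}Y=Z^*$, together with the fact that the lifted objects leave the coordinate $s$ untouched, after which everything is a direct transcription of the cotangent-bundle results.
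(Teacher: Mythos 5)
Your argument is correct and follows exactly the route the paper intends: the paper derives this proposition directly from the invariance of the canonical forms of $\Tan^*Q$ under cotangent lifts (Propositions \ref{levdif} and \ref{inf}) together with the definition $\bmeta=\d s-\pi_1^*\Theta$, which is precisely what you spell out, with the bookkeeping of $\pi_1\circ\Phi=\Tan^*\varphi\circ\pi_1$, $\pi_{1*}Y=Z^*$ and triviality in the $s$-direction handled correctly. Part (2) is, as you say, then immediate from Definition \ref{defsym}, so nothing is missing.
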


In particular, we have the following:

\begin{teor}{\rm (Momentum dissipation theorem).} 
If $\displaystyle\frac{\partial {\rm h}}{\partial q^i}=0$, 
then $\displaystyle\frac{\partial}{\partial q^i}$ 
is an infinitesimal contact Noether symmetry, 
and its associated dissipated quantity  
is the corresponding momentum $p_i$;
that is,
$\Lie(\X_{\rm h})p_i=-(\Reeb({\rm h}))\,p_i$.
\label{disipe3}
\end{teor}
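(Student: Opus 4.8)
The plan is to verify the two hypotheses of the definition of an infinitesimal contact Noether symmetry for the vector field $\displaystyle Y=\frac{\partial}{\partial q^i}$ (for a fixed index $i$), and then simply invoke the Dissipation Theorem \ref{th:dissipation} to identify the associated dissipated quantity. So first I would check that $\Lie(Y)\bmeta=0$. Since the canonical contact form on $\Tan^*Q\times\Real$ is $\bmeta=\d s-p_j\,\d q^j$, and $Y=\partial/\partial q^i$ has constant components, using Cartan's formula $\Lie(Y)\bmeta=\inn(Y)\d\bmeta+\d\inn(Y)\bmeta$ gives $\inn(Y)\bmeta=-p_i$ and $\d\bmeta=\d q^j\wedge\d p_j$, so $\inn(Y)\d\bmeta=\d p_i$; hence $\Lie(Y)\bmeta=\d p_i-\d p_i=0$. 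Second, the hypothesis $\displaystyle\frac{\partial{\rm h}}{\partial q^i}=0$ means precisely $\Lie(Y){\rm h}=0$. Thus $Y$ satisfies both conditions in Definition \ref{defsym}, so it is an infinitesimal contact Noether symmetry, and by Proposition \ref{csis} it is also an infinitesimal dynamical symmetry.

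Next, I would apply the Dissipation Theorem \ref{th:dissipation}: for any infinitesimal dynamical symmetry $Y$, the function $F=-\inn(Y)\bmeta$ is a dissipated quantity, meaning $\Lie(\X_{\rm h})F=-(\Reeb({\rm h}))\,F$. Here $F=-\inn(\partial/\partial q^i)(\d s-p_j\,\d q^j)=p_i$, so we conclude $\Lie(\X_{\rm h})p_i=-(\Reeb({\rm h}))\,p_i$, which is exactly the claimed identity. To make the exposition self-contained one could alternatively read off the same conclusion directly from the Hamilton equations \eqref{hel-eqs01b}: the second group gives $\dfrac{dp_i}{dt}=-\left(\dfrac{\partial{\rm h}}{\partial x^i}+p_i\dfrac{\partial{\rm h}}{\partial s}\right)$, and with $\partial{\rm h}/\partial q^i=0$ this is $\dfrac{dp_i}{dt}=-p_i\,\partial{\rm h}/\partial s=-(\Reeb({\rm h}))\,p_i$, since $\Reeb=\partial/\partial s$ in Darboux coordinates; but the geometric route via Theorem \ref{th:dissipation} is cleaner and is the one I would present.

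The argument is essentially immediate once the geometric machinery of the previous sections is in place, so there is no real ``hard part'' — the only thing requiring a little care is the sign bookkeeping in $F=-\inn(Y)\bmeta$ and in the statement $\Lie(\X_{\rm h})F=-(\Reeb({\rm h}))\,F$, and the observation that the coordinates $(q^i,p_i,s)$ on $\Tan^*Q\times\Real$ are Darboux coordinates for $\bmeta$, so that $\Reeb=\partial/\partial s$ and $\Reeb({\rm h})=\partial{\rm h}/\partial s$. I would therefore write the proof as a short two-step verification: (i) show $Y=\partial/\partial q^i$ is an infinitesimal contact Noether symmetry by checking $\Lie(Y)\bmeta=0$ and using the hypothesis for $\Lie(Y){\rm h}=0$; (ii) apply the Dissipation Theorem \ref{th:dissipation} with $-\inn(Y)\bmeta=p_i$ to obtain $\Lie(\X_{\rm h})p_i=-(\Reeb({\rm h}))\,p_i$, closing the proof.
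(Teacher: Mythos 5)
Your proof is correct and follows essentially the same route as the paper: verify that $\partial/\partial q^i$ is an infinitesimal contact Noether symmetry (the paper does this as "a simple computation in local coordinates", which is exactly your Cartan-formula check together with the hypothesis $\partial{\rm h}/\partial q^i=0$), and then conclude via the Dissipation Theorem \ref{th:dissipation} with $F=-\inn(Y)\bmeta=p_i$. Your coordinate cross-check using the Hamilton equations \eqref{hel-eqs01b} is a nice sanity check but adds nothing beyond the paper's argument.
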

\begin{proof}
A simple computation in local coordinates shows that 
$\displaystyle
\Lie\Big(\frac{\partial}{\partial q^i}\Big)\bmeta=0$ 
and 
$\displaystyle
\Lie\Big(\frac{\partial}{\partial q^i}\Big)=0$.
Therefore, $\dst\frac{\partial}{\partial q^i}$
is a contact Noether symmetry and, hence, a dynamical symmetry. 
The other results are a consequence of the dissipation theorem. 
\end{proof}

\section{Examples}

Finally, we will consider the systems studied as examples in the previous chapters, 
incorporating a standard dissipation term that accounts for dissipative forces proportional to velocity.

\subsection{The damped harmonic oscillator}

Consider a harmonic oscillator in a medium with friction.
As in Section \ref{sho},
the configuration bundle of the system is $Q=\Real$ but, in order to develop the contact formulation, we take the manifold $Q\times\Real$. with coordinates $(t,q)$.

\subsubsection{Lagrangian formalism}

The Lagrangian description of the one-dimensional damped harmonic oscillator is done in the phase space
$\Tan Q\times\Real\simeq=\Real^2\times\Real$.
It is described by the hyperregular contact Lagrangian function,
$$
\Lag=\frac{1}{2}mv^{2}-\frac{1}{2}kq^{2}-\gamma s \ ,
$$
where $\gamma\in\Real^+$ is the dissipation parameter.

We have the contact Lagrangian form,
$$
\bmeta_\Lag=\d s-v\d q \ ,
$$
and the energy Lagrangian function is
$$ 
E_\Lag = \frac{1}{2}mv^2 + \frac{1}{2}kq^2 + \gamma s \ .
$$
For $\displaystyle X_\Lag=f\derpar{}{q}+F\derpar{}{v}+g\derpar{}{s}\in\vf(\Tan Q\times\Real)$,
the contact Lagrangian equations \eqref{ELeqs0}
and \eqref{ELeqs1} are
$$
g=\frac{1}{2}mv^{2}-\frac{1}{2}kq^{2}-\gamma s \quad , \quad f=v \quad , \quad mF=-kq-\gamma mv \ .
$$
whose solution is the {\sc sode} vector field,
$$
\X_\Lag=v\frac{\partial}{\partial q}-
\left(\frac{k}{m}\,q+\gamma v\right) \frac{\partial}{\partial v}+\left(\frac{1}{2}mv^{2}-\frac{1}{2}kq^{2}-\gamma s\right)\frac{\partial}{\partial s}\ .
$$
Its integral curves are the solutions to equations \eqref{ELeqs2} and \eqref{ELeqs3},
which for this system give the Herglotz--Euler--Lagrange equations
\beq
\label{HELdho}
\frac{ds}{dt}=\frac{1}{2}m\left(\frac{dq}{dt}\right)^2-\frac{1}{2}kq^{2}-\gamma s \quad , \quad
\frac{d^2q}{dt^2}=-\frac{k}{m}\,q-\gamma\,\frac{dq}{dt} \ ,
\eeq
and the last equation correspond to the well-known dynamical equation of the damped harmonic oscillator. 

The dissipation of the energy is given by 
\eqref{eq:disipenerg}, which in the Lagrangian formalism reads
$$
\Lie(\X_\Lag)E_\Lag=-\gamma\left(\frac{1}{2}mv^2 + \frac{1}{2}kq^2 + \gamma s\right)\ .
$$

\subsubsection{Hamiltonian formalism}

For the canonical Hamiltonian formalism of the system, 
$\Tan^*Q\times\Real\simeq=\Real^2\times\Real$,
and the Legendre map is 
$$
\mathfrak{F}\Lag^*q=q  \quad , \quad \mathfrak{F}\Lag^*p=mv  \quad , \quad \mathfrak{F}\Lag^*s=s  \ .
$$
The canonical contact form is $\bmeta=\d s-p\,\d q$, and 
the canonical Hamiltonian function reads
$$ 
{\rm h}=\frac{p^2}{2m^2} + \frac{1}{2}kq^2 + \gamma s \ .
$$
For $\displaystyle X_{\rm h}=f\derpar{}{q}+F\derpar{}{p}+g\derpar{}{s}$,
the contact Hamiltonian equations \eqref{evolvf11b} give
$$
f=\frac{p}{m} \quad , \quad F=-kq-\gamma p \quad , \quad g=\frac{p^2}{2m^2}-\frac{1}{2}kq^2-\gamma s  \ ,
$$
and its solution is the Hamiltonian vector field,
$$
\X_{\rm h}=\frac{p}{m}\frac{\partial}{\partial q}-
\left(kq+\gamma p\right) \frac{\partial}{\partial p}+\left(\frac{p^2}{2m^2}-\frac{1}{2}kq^2-\gamma s\right)\frac{\partial}{\partial s}\ ,
$$
whose integral curves are the solutions to equations \eqref{hel-eqs01b},
which for this system give the Hamilton equations,
\beq
\label{Hdho}
\dst\frac{dq}{dt}=\frac{p}{m}\quad ,\quad
\dst\frac{dp}{dt}=-kq-\gamma p\quad , \quad
\dst\frac{ds}{dt}=\frac{p^2}{2m^2}-\frac{1}{2}kq^2-\gamma s \ .
\eeq

Finally, the dissipation of the energy is given by \eqref{energdis},
$$
\Lie(\X_{\rm h}){\rm h}=-\gamma\left(\frac{p^2}{2m^2} + \frac{1}{2}kq^2 + \gamma s\right)\ .
$$

\subsubsection{Unified Lagrangian-Hamiltonian formalism}

The extended unified bundle is $\mathfrak{M}=\Tan Q\times_Q\Tan^*Q\times\Real\simeq\Real^4$
with coordinates $(q,v,p,s)$.
Then we have the precontact Hamiltonian system
$(\mathfrak{M},\bmeta_{\mathfrak{M}},{\bf H})$,
where the canonical contact form is
$$
\bmeta_{\mathfrak{M}}=\d s-p\,\d q \ ,
$$
the Hamiltonian function is
$$
{\bf H}=pv-\frac{1}{2}mv^{2}+\frac{1}{2}kq^{2}+\gamma s \ .
$$
and we can take $\dst \Reeb=\derpar{}{s}$ as Reeb vector field.
For this system, the compatibility condition for equations \eqref{Whamilton-contact-eqs} leads to define the submanifold $\mathfrak{M}_0\hookrightarrow\mathfrak{M}$
defined by
$$
\mathfrak{M}_0=\{ (q,v,p,s)\in\mathfrak{M}\,\mid\,  p-mv=0 \} \ ,
$$
and the Hamiltonian vector field solution to \eqref{Whamilton-contact-eqs} on $\mathfrak{M}_0$ is
$$
\X_{\bf H}\vert_{\mathfrak{M}_0}=
v\derpar{}{q}+F\derpar{}{v}-
\left(kq+\gamma p\right)\derpar{}{p}
+\left(\frac{p^2}{2m^2}-\frac{1}{2}kq^2-\gamma s\right)\derpar{}{s}\ .
$$
The tangency condition for $\X_{\bf H}$ on $\mathfrak{M}_0$ gives
$\dst F=-\frac{k}{m}q-\gamma\frac{p}{m}$,
and thus finally,
$$
\X_{\bf H}\vert_{\mathfrak{M}_0}=
v\derpar{}{q}-\left(\frac{k}{m}q+\gamma\frac{p}{m}\right)\derpar{}{v}-
\left(kq+\gamma p\right)\derpar{}{p}
+\left(\frac{p^2}{2m^2}-\frac{1}{2}kq^2-\gamma s\right)\derpar{}{s}\ .
$$
Therefore, the integral curves of $\X_{\bf H}$ are the solutions to
$$
\frac{dq}{dt}=v \quad , \quad
m\frac{dv}{dt}=-mkq-\gamma p \quad , \quad
\frac{dp}{dt}=-kq-\gamma p \quad , 
\quad
\frac{ds}{dt}=\frac{p^2}{2m^2}-\frac{1}{2}kq^2-\gamma s \ .
$$
Te first two equations of this system are equivalent to the
second Herglotz--Euler--Lagrange equation
\eqref{HELdho}
and, using the constraint $p=mv$ (the Legendre map),
the first and third equations are the first pair of the Hamilton equations \eqref{Hdho}.

\subsection{The Kepler problem with friction}

Consider the motion of a massive particle (of mass $m$)
under the action of Newtonian central force
in a stellar media with friction.
As in the previous cases, the motion is on a plane; hence
$Q=\Real^2$, and we take $(r,\phi,s)$ as coordinates in $Q\times\Real$,
(with the origin of $r$ at the center of the force).

\subsubsection{Lagrangian formalism}

The Lagrangian formalism takes place in $\Tan Q\times\Real\simeq\Real^5$, with local coordinates 
$(r,\phi,v_r,v_\phi,s)$
The contact Lagrangian function that describes the dynamics is
$$
\Lag=\frac{1}{2}m(v_r^2+r^2v_\phi^2)-\frac{K}{r}-\gamma s \quad , \quad K\not=0 \ ;
$$
which is regular as in the above situations.
The energy Lagrangian function and the contact Lagrangian form are
\beann
E_\Lag&=&\frac{1}{2}m(v_r^2+r^2v_\phi^2)+\frac{K}{r}+\gamma s \ , \\
\bmeta_\Lag&=&\d s-m(v_r\,\d r+r^2v_\phi\,\d\phi) \ , 
\eeann
and the Lagrangian is regular.
For $\displaystyle \X_\Lag=f_r\derpar{}{r}+f_\phi\derpar{}{\phi}+
g_r\derpar{}{v_r}+g_\phi\derpar{}{v_\phi}+g\derpar{}{s}\in\vf(\Tan^*Q\times \Real)$, 
the contact Lagrangian equations \eqref{ELeqs0}
and \eqref{ELeqs1} give
\beann
g=\frac{1}{2}m(v_r^2+r^2v_\phi^2)-\frac{K}{r}-\gamma s \quad , \quad
f_r=v_r \quad , \quad f_\phi=v_\phi \ , \\ 
mg_r=2mrv_\phi f_\phi-mrv_\phi^2+\frac{K}{r^2}-\gamma mv_r \quad , \quad 
g_\phi=-\frac{2v_\phi f_r}{r}-\gamma r^2v_\phi   \ ,
\eeann
and the contact Euler--Lagrange vector field is
\beann
\X_\Lag&=&v_r\derpar{}{r}+v_\phi\derpar{}{\phi}+\left(rv_\phi^2+\dst\frac{K}{mr^2}-\gamma v_r\right)\derpar{}{v_r}
+\left(-\frac{2v_\phi v_r}{r}-\gamma v_\phi\right)\derpar{}{v_\phi} \\
& & +
\left(\frac{1}{2}m(v_r^2+r^2v_\phi^2)-\frac{K}{r}-\gamma s\right)\derpar{}{s}\ .
\eeann
Then, its integral curves are the solutions to equations \eqref{ELeqs2} and \eqref{ELeqs3} which read as
\bea
\frac{ds}{dt}=\frac{1}{2}m\left(\Big(\frac{dr}{dt}\Big)^2+r^2\Big(\frac{d\phi}{dt}\Big)^2\right)-\frac{K}{r}-\gamma s \quad , \quad
\nonumber \\
 m\frac{d^2r}{dt^2}=mr\Big(\frac{d\phi}{dt}\Big)^2+\frac{K}{r^2}-m\gamma v_r  
\quad , \quad  \frac{d^2\phi}{dt^2}=
-\frac{2}{r}\,\frac{d\phi}{dt}\,\frac{dr}{dt}-\gamma \frac{d\phi}{dt} \ ,
\label{eqEL-M}
\eea
and are the Herglotz--Euler--Lagrange equations for this system.

Similarly to the above example,
the dissipation of the energy reads
$$
\Lie(\X_\Lag)E_\Lag=-\gamma \left(\frac{1}{2}m(v_r^2+r^2v_\phi^2)+\frac{K}{r}+\gamma s\right)\ .
$$
Furthermore, there exists a Lagrangian contact Noether symmetry 
which is generated by the vector field $\dst Y=\derpar{}{\phi}$; in fact,
bearing in mind \eqref{NsymKep1} and \eqref{NsymKep2}, we obtain
\beann
\Lie(Y)\bmeta_\Lag&=& 
\Lie\left(\derpar{}{\phi}\right)\left(\d s-m(v_r\,\d r+r^2v_\phi\,\d\phi)\right) \\ &=&
\Lie\left(\derpar{}{\phi}\right)\d s-
\Lie\left(\derpar{}{\phi}\right)\left(m(v_r\,\d r+r^2v_\phi\,\d\phi)\right)=0 \ , 
\\
\Lie(Y)E_\Lag&=& 
\Lie\left(\derpar{}{\phi}\right)\left(\frac{1}{2}m(v_r^2+r^2v_\phi^2)+\frac{K}{r}+\gamma s\right)=0 \ ,
\eeann
and the corresponding dissipated quantity is the
angular momentum map
$$
F=-\inn\left(\derpar{}{\phi}\right)\bmeta_\Lag=mr^2v_\phi \ .
$$
Finally, using Corollary \ref{consquanquot},
we obtain that $\dst\frac{F}{E_\Lag}$ is a conserved quantity associated with $Y$.

\subsubsection{Hamiltonian formalism}

For the Hamiltonian formalism, $\Tan^*Q\times\Real\simeq\Real^5$,
with local coordinates $(r,\phi,p_r,p_\phi,s)$. The Legendre transformation is,
$$
\mathfrak{F}\Lag^*r=r  \quad , \quad \mathfrak{F}\Lag^*\phi=\phi \quad , \quad
\mathfrak{F}\Lag^*p_r=mv_r  \quad , \quad \mathfrak{F}\Lag^*p_\phi=mr^2v_\phi \quad , \quad \mathfrak{F}\Lag^*s=s  \ ,
$$
which is a diffeomorphism, and the Lagrangian is hyperregular.
The canonical Hamiltonian function and the canonical contact form are
\beann
{\rm h}&=&\frac{p_r^2}{2m}+\frac{p_\phi^2}{2mr^2}+\frac{K}{r}+\gamma s \\
\bmeta&=&\d s-p_r\,\d r-p_\phi\,\d\phi \ .
\eeann
For $\displaystyle \X_{\rm h}=F_r\derpar{}{r}+F_\phi\derpar{}{\phi}+
G_r\derpar{}{p_r}+G_\phi\derpar{}{p_\phi}+g\derpar{}{s}\in\vf(\Tan^*Q\times\Real)$,
the contact Hamiltonian equations \eqref{evolvf11b} lead to
$$
F_r=\frac{p_r}{m} \ , \ F_\phi=\frac{p_\phi}{mr^2} \ , \
G_r=\frac{p_\phi^2}{mr^3}+\frac{K}{r^2}-\gamma p_r \ , \ G_\phi=-\gamma p_\phi  \ , \ g=\frac{p_r^2}{2m}+\frac{p_\phi^2}{2mr^2}-\frac{K}{r}-\gamma s \ .
$$
and then, the contact Hamiltonian vector field is
$$
\X_{\rm h}=\frac{p_r}{m}\derpar{}{r}+\frac{p_\phi}{mr^2}\derpar{}{\phi}+\left(\frac{p_\phi^2}{mr^3}+\frac{K}{r^2}-\gamma p_r\right)\derpar{}{p_r}-\gamma p_\phi\derpar{}{p_\phi}+\left(\frac{p_r^2}{2m}+\frac{p_\phi^2}{2mr^2}-\frac{K}{r}-\gamma s\right)\derpar{}{s} \ .
$$
Hence, its integral curves are the solutions to the equations \eqref{hel-eqs01b} which are
\bea
m\frac{dr}{dt} =p_r \quad , \quad mr^2\frac{d\phi}{dt} =p_\phi \quad , \quad
\frac{dp_r}{dt}=\frac{p_\phi^2}{mr^3}+\frac{K}{r^2}-\gamma p_r & , &
\frac{dp_\phi}{dt}=-\gamma p_\phi \quad ,
\label{dislawKeppler} \\
\frac{ds}{dt}=\frac{p_r^2}{2m}+\frac{p_\phi^2}{2mr^2}-\frac{K}{r}-\gamma s & , & \nonumber
\eea
and are the Hamiltonian equations for this system.

As it is usual, using the Legendre map
one can easily check that $\mathfrak{F}\Lag_*X_\Lag=X_{\rm h}$.

Similarly to the above example,
the dissipation of the energy reads
$$
\Lie(\X_{\rm h}){\rm h}=-\gamma \left(\frac{p_r^2}{2m}+\frac{p_\phi^2}{2mr^2}+\frac{K}{r}+\gamma s\right)\ .
$$
In addition, we have a Hamiltonian contact Noether symmetry which is again the
vector field $\dst Y=\derpar{}{\phi}$, since
\beann
\Lie(Y)\bmeta&=& 
\Lie\left(\derpar{}{\phi}\right)\left(\d s-p_r\,\d r-p_\phi\,\d\phi\right) =0 \ , 
\\
\Lie(Y){\rm h}&=& 
\Lie\left(\derpar{}{\phi}\right)\left(\frac{p_r^2}{2m}+\frac{p_\phi^2}{2mr^2}+\frac{K}{r}+\gamma s\right)=0 \ .
\eeann
The associated dissipated quantity is again the angular momentum,
$$
F=-\inn\left(\derpar{}{\phi}\right)\bmeta=p_\phi \ .
$$
and the corresponding dissipation law is given directly by the last Hamilton equation in \eqref{dislawKeppler}.
As above, $\dst\frac{F}{{\rm h}}$ is a conserved quantity associated with $Y$.

\subsubsection{Unified Lagrangian-Hamiltonian formalism}

In the precontact extended unified bundle
$\mathfrak{M}=\Tan Q\times_Q\Tan^*Q\times\Real\simeq\Real^7$,
with coordinates $(r,\phi,v_r,v_\phi,p_r,p_\phi,s)$, 
the canonical precontact form and the Hamiltonian function are
\beann
\bmeta_{\mathfrak{M}}&=&\d s-p_r\,\d r-p_\phi\,\d\phi \ , \\
{\bf H}&=&p_rv_r+p_\phi v_\phi-\frac{1}{2}m(v_r^2+r^2v_\phi^2)+\frac{K}{r}+\gamma s \ ,
\eeann
and we can take $\dst \Reeb=\derpar{}{s}$ as Reeb vector field.

For $\displaystyle \X_{\bf H}=f_r\derpar{}{r}+
f_\phi\derpar{}{\phi}+g_r\derpar{}{v_r}+g_\phi\derpar{}{v_\phi}+G_r\derpar{}{p_r}+G_\phi\derpar{}{p_\phi}+g\derpar{}{s}\in\vf(\mathfrak{M})$,
equations \eqref{Whamilton-contact-eqs} lead to
\bea
f_r=v_r \quad , \quad f_\phi=v_\phi \quad , \quad 
G_r=\frac{K}{r^2}+mrv_\phi^2-\gamma p_r \quad , \quad G_\phi=-\gamma p_\phi & , & 
\nonumber \\
p_r=mv_r \quad , \quad p_\phi=mr^2v_\phi \quad , \quad
g=\frac{p_r^2}{2m}+\frac{p_\phi^2}{2mr^2}-\frac{K}{r}-\gamma s & . & \label{eqsM}
\eea
The first two equations in \eqref{eqsM} are constraints defining the submanifold
$\mathfrak{M}_0\hookrightarrow\mathfrak{M}$ which give the Legendre map.
Thus, the Hamiltonian vector field is
\beann
\X_{\bf H}\vert_{\mathfrak{M}_0}&=&
v_r\derpar{}{r}+v_\phi\derpar{}{\phi}+g_r\derpar{}{v_r}+g_\phi\derpar{}{v_\phi}
+\left(\frac{K}{r^2}+mrv_\phi^2-\gamma p_ r\right)\derpar{}{p_r}-\gamma p_\phi\derpar{}{p_\phi} \\ & &
+\left(\frac{p_r^2}{2m}+\frac{p_\phi^2}{2mr^2}-\frac{K}{r}-\gamma s\right)\derpar{}{s} \ ,
\eeann
and the tangency condition of $\X_{\bf H}$ on $\mathfrak{M}_0$ leads to obtain that, finally,
\beann
\X_{\rm H}\vert_{\mathfrak{M}_0}&=&
v_r\derpar{}{r}+v_\phi\derpar{}{\phi}+
\left(rv_\phi^2+\dst\frac{K}{mr^2}-\gamma\frac{p_r}{m}\right)\derpar{}{v_r}-\left(\frac{2v_rv_\phi}{r}+\gamma\frac{p_\phi}{m}\right)\derpar{}{v_\phi}
\\ & &
+\left(\frac{K}{r^2}+mrv_\phi^2-\gamma p_ r\right)\derpar{}{p_r}-\gamma p_\phi\derpar{}{p_\phi}+\left(\frac{p_r^2}{2m}+\frac{p_\phi^2}{2mr^2}-\frac{K}{r}-\gamma s\right)\derpar{}{s} \ .
\eeann
Therefore, the integral curves of $\X_{\bf H}$,
on $\mathfrak{M}_0$, are the solutions to
\bea
\frac{dr}{dt}=v_r \quad , \quad
\frac{d\phi}{dt}=v_\phi \quad , \quad
\frac{dv_r}{dt}=\frac{K}{mr^2}+rv_\phi^2-\gamma\frac{p_r}{m} \quad , \quad
\frac{dv_\phi}{dt}=-\frac{2v_rv_\phi}{r}-\gamma\frac{p_\phi}{m} & , &
\label{unieqs1}
\\
\frac{dp_r}{dt}=\frac{K}{r^2}+mrv_\phi^2-\gamma p_r \quad , \quad
\frac{dp_\phi}{dt}=-\gamma p_\phi \quad , \quad
\frac{ds}{dt}=\frac{p_r^2}{2m}+\frac{p_\phi^2}{2mr^2}-\frac{K}{r}-\gamma s & . &
\label{unieqs2}
\eea
Equations \eqref{unieqs1} are equivalent to the Herglotz--Euler--Lagrange equation \eqref{eqEL-M},
and using the constraints $p_r=mv_r$ 
and $p_\phi=mr^2v_\phi$; that is, the Legendre map,
the first and second equations in \eqref{unieqs1} and \eqref{unieqs2} are
the Hamiltonian equations \eqref{dislawKeppler} of the system.


\begin{appendix}
\chapter{Additional contents}

\section{Tangent and cotangent bundles}
\label{sec:tangentb}

(For a detailed account of these subjects and the proof of the results, see for example, \cite{AM-78,Ar-89,Con2001,KN-96,Lee2013,Ok-87,PmQ-69,St-64,Wa-fsmlg}).

\subsection{The tangent bundle of a manifold. Canonical lifts}
\label{canliftq}

Let $Q$ be a differentiable manifold and  $q\in Q$.
Every differentiable curve
$\gamma\colon (-\epsilon ,\epsilon )\subset \Real \to Q$
passing through $q$; that is, 
$\gamma (0)=q$ ,
induces a derivation ${\rm D}_\gamma$ in $\Cinfty (Q)$,
in the following way
$$
\begin{array}{cccc}
{\rm D}_\gamma \colon & \Cinfty (Q) & \to & \Real
\\
& f & \mapsto &\displaystyle \lim_{t \mapsto 0}\frac{(f\circ\gamma )(t)-(f\circ\gamma )(0)}{t}
\end{array} \quad .
$$
In the set of such differentiable curves we can define the following equivalence relation
$$
\gamma_1 \sim \gamma_2 \quad \Longleftrightarrow
\quad
{\rm D}_{\gamma_1}={\rm D}_{\gamma_2} \ .
$$
Then:
\begin{definition}
\ben
\item
A \textbf{tangent vector} to $Q$ at $q$
is every equivalence class defined  by this relation.
\item
The \textbf{tangent space} of $Q$ at $q$,
denoted by $\Tan_qQ$, is the vector space of all the tangent vectors
to $Q$ at $q$.
\item
The \textbf{tangent bundle} of $Q$ is defined as
 \(\dst \Tan Q := \bigcup_{q \in Q}\Tan_qQ\).
We denote $\tau_Q \colon \Tan Q \to Q$ its natural projection.
\een
\end{definition}

Every point ${\rm p}\in\Tan Q$ is a couple
$(q,v)\equiv v_q$, where $q=\tau_Q({\rm p})\in Q$
and $v\in\Tan_qQ$ (it is a tangent vector).

\begin{prop}
The tangent bundle $\Tan Q$ is a $2n$-dimensional differentiable manifold whose differentiable structure is inherited from $Q$.
In addition, the natural projection $\tau_Q$ is a submersion.
\end{prop}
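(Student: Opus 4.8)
The plan is to construct an atlas on $\Tan Q$ directly from an atlas on $Q$ and check that the transition maps are smooth, then verify the stated properties of $\tau_Q$. First I would fix an atlas $\{(U_\alpha,\varphi_\alpha)\}$ on $Q$ with $\varphi_\alpha = (q^1,\ldots,q^n)$, and for each chart build the set $\Tan U_\alpha = \tau_Q^{-1}(U_\alpha)$ together with the map $\widetilde\varphi_\alpha \colon \Tan U_\alpha \to \varphi_\alpha(U_\alpha)\times\Real^n$ sending a tangent vector $v_q \in \Tan_qQ$ to $(q^1(q),\ldots,q^n(q);v^1,\ldots,v^n)$, where the $v^i$ are the components of $v$ in the coordinate basis $\{\partial/\partial q^i|_q\}$. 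Here I would invoke that each $\Tan_qQ$ is a genuine $n$-dimensional vector space (established in the preceding definition) so that these components are well defined and the fibre map is a linear isomorphism; hence $\widetilde\varphi_\alpha$ is a bijection onto an open subset of $\Real^{2n}$.

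Next I would compute the change of coordinates. On an overlap $\Tan U_\alpha\cap\Tan U_\beta$, if $q^i$ and $\tilde q^j$ denote the two coordinate systems on $U_\alpha\cap U_\beta$, the chain rule for the transformation of tangent-vector components gives
\[
\tilde q^j = \tilde q^j(q^1,\ldots,q^n)\ ,\qquad
\tilde v^j = \frac{\partial \tilde q^j}{\partial q^i}\,v^i\ .
\]
Thus $\widetilde\varphi_\beta\circ\widetilde\varphi_\alpha^{-1}$ is given in the base coordinates by the (smooth) transition map $\varphi_\beta\circ\varphi_\alpha^{-1}$ of $Q$, and in the fibre coordinates by multiplication by the Jacobian matrix $(\partial\tilde q^j/\partial q^i)$, whose entries are smooth functions of the $q^i$. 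Since $Q$ is $\Cinfty$, this composite is $\Cinfty$, with a $\Cinfty$ inverse given by the inverse Jacobian, so the $\{(\Tan U_\alpha,\widetilde\varphi_\alpha)\}$ form a $\Cinfty$ atlas on $\Tan Q$. I would also note that the induced topology is Hausdorff and second-countable because these properties are inherited from $Q$ (the base coordinates separate points in different fibres, the linear structure separates points in the same fibre, and countably many charts suffice), so $\Tan Q$ is a genuine $2n$-dimensional manifold.

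Finally I would read off the properties of $\tau_Q$. In the charts $(\Tan U_\alpha,\widetilde\varphi_\alpha)$ and $(U_\alpha,\varphi_\alpha)$ the projection is the coordinate expression $(q^i;v^i)\mapsto (q^i)$, which is manifestly $\Cinfty$ and has surjective differential of rank $n$ at every point; hence $\tau_Q$ is a smooth submersion. I do not expect any serious obstacle here: the only point requiring a little care is checking that the fibrewise identification used to define the $v^i$ is compatible across charts, i.e.\ that the component transformation law is exactly the Jacobian action, and that the resulting transition functions are smooth — but this is immediate once one writes $v = v^i\,\partial/\partial q^i|_q = \tilde v^j\,\partial/\partial\tilde q^j|_q$ and applies $\partial/\partial q^i = (\partial\tilde q^j/\partial q^i)\,\partial/\partial\tilde q^j$. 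One should also confirm that the differentiable structure so obtained is independent of the chosen atlas on $Q$, which follows because any two atlases on $Q$ are compatible and the construction is functorial in the charts.
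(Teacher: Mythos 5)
Your proof is correct and follows essentially the same route as the paper: construct the induced natural atlas $(\tau_Q^{-1}(U_\alpha);q^i,v^i)$, verify via the chain rule that the fibre coordinates transform by the Jacobian so the transition maps are $\Cinfty$, and observe that in these coordinates $\tau_Q$ is $(q^i,v^i)\mapsto(q^i)$, hence a surjective submersion. The extra remarks on Hausdorffness, second countability and atlas-independence are fine additions but do not change the argument.
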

\begin{proof}
If ${\cal A}=\{ (U_\alpha;\phi_\alpha)\}$, with $\phi_\alpha\equiv (\coor{x}{1}{n})$,
is an atlas of local charts on $Q$,
then the induced atlas on $\Tan^*Q$ is
$\Tan{\cal A}=\{ (\tau_Q^{-1}U_\alpha;\psi_\alpha) \}$,
with the coordinate functions $\psi_\alpha$ defined as follows:
$$
\begin{array}{ccccc}
\psi_\alpha&\colon&\tau^{-1}_Q(U_\alpha)&\longrightarrow&
\phi_\alpha(U_\alpha)\times\Real^n\subset\Real^{2n}\\
& & {\rm p}=(q,v) & \mapsto & (q^i(p),v^i({\rm p}))
\end{array} \quad ;
$$
where $i=1,\ldots,n$, and:
\ben
\item
$q^i({\rm p})=(x^i\circ\tau_Q)({\rm p})$
\footnote{
It is usual to commit an abuse of notation denoting by $q^i$
both the coordinates in the base manifold $Q$
and in the tangent bundle $\Tan Q$, and we will do in the sequel.}.
\item
If \(\dst v=\lambda^j\derpar{}{x^j}\Big\vert_q\), then
$v^i(p)=\lambda^i=v(x^i)$;
that is, $v^i(p)$ are the components of the tangent vector
$v\in\Tan_qQ$ in the basis \(\dst\left\{ \derpar{}{x^i}\Big\vert_q\right\}\).
\een
It is obvious that \(\dst\Tan Q=\bigcup_\alpha\tau^{-1}_Q(U_\alpha)\)
and that $\psi_\alpha$ are diffeomorphisms.
Then, it is immediate to prove that $\Tan{\cal A}$ endows $\Tan Q$
with the structure of a differentiable manifold.
If $(U;q^i)$ and $(\bar U;\bar q^i)$ are two local charts
in $Q$ such that 
$U\cap\bar U\not=\buit$, and $\bar q^j =\varphi^j (q^i)$ in $U\cap\bar U$,
then for the induced charts $(\tau^{-1}_Q(U);q^i,v^i)$ and 
$(\tau^{-1}_Q\bar U);\bar q^i,\bar v^i)$
in $\Tan Q$ we have that, on $\Tan U\cap\Tan\bar U$,
the relation between the coordinates $v^i$ and $\bar v^i$ is
\(\dst\bar v^j=\derpar{\varphi^j}{q^i}v^i\), since
$$
v=v^i\derpar{}{q^i}\Big\vert_q=v^i\derpar{\varphi^j}{q^i}\derpar{}{\bar q^j}\Big\vert_q=
\bar v^j\derpar{}{\bar q^j}\Big\vert_q \ ,
$$
and hence \(\dst\bar v^j=\derpar{\varphi^j}{q^i}v^i\).

Bearing in mind this local description of the tangent bundle,
it is evident that $\dim\,\Tan Q=2n$.

Finally, it is immediate to prove that $\tau_Q$ is an surjective submersion,
since the canonical projection 
$\tau_Q \colon \Tan Q \to Q$,
is a surjective map given, in natural coordinates, by
$\tau_Q (q^i,v^i)=q^i$, and hence its tangent map
$\Tan\tau_Q\colon\Tan\Tan Q\to\Tan Q$ is defined as follows:
if $(q,v) \in \Tan Q$ and $X \in \Tan_{(q,v)}(\Tan Q)$
we have that
\(\dst X=\lambda^i\derpar{}{q^i}\Big\vert_{(q,v)}
+\mu^i\derpar{}{v^i}\Big\vert_{(q,v)}\), and
$$
[\Tan_{(q,v)}\tau_Q (X)](q^j)=X(q^j\circ\tau_Q )=X(q^j)=\lambda^j\ ,
$$
therefore
$$
(\Tan\tau_Q )((q,v),X) =
(\Tan\tau_Q )\left( (q,v),\lambda^i\derpar{}{q^i}\Big\vert_{(q,v)}+
                  \mu^i\derpar{}{v^i}\Big\vert_{(q,v)} \right)
= \left(q,\lambda^i\derpar{}{q^i}\Big\vert_q \right) \ ,
$$
and the associated matrix is
\beq
 (I_{n\times n},0_{n\times n}) =
\left(\begin{matrix}
1 & \ldots & 0 & 0 & \ldots & 0 \\
              \vdots &  & \vdots & \vdots & & \vdots \\
              0 & \ldots & 1 & 0 & \ldots & 0\end{matrix}\right) \ .
\label{matrix1}
\eeq
In this way, we conclude that $\tau_Q$ is a surjective submersion.
\\ \qed \end{proof}

\begin{definition}
The above charts of the tangent bundle
are called \textbf{natural charts} and their coordinates \textbf{ natural coordinates}
($q^i$ are the \textbf{base coordinates}
and $v^i$ the \textbf{fiber coordinates}).
\end{definition}

Observe that the coordinate change in $\Tan Q$,
from $(q^i,v^i)$ to $(\bar q^i,\bar v^i)$ has the Jacobian matrix$$
\left(\begin{matrix}
\displaystyle\left(\derpar{\varphi^j}{q^i}\right) & 0 \\
\displaystyle\left(\frac{\partial^2\varphi^j}{\partial q^k\partial q^i}\right)v^i &
\displaystyle\left(\derpar{\varphi^j}{q^i}\right)
\end{matrix}\right) \ ,
$$
and, taking into account that its determinant is positive at every point, we conclude that:

\begin{corol}
$\Tan Q$ is an orientable manifold.
\end{corol}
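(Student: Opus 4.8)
The statement to be proved is that $\Tan Q$ is an orientable manifold, and the natural tool at hand is the explicit description of the differentiable structure of $\Tan Q$ that has just been established: an atlas of natural charts $(\tau_Q^{-1}(U_\alpha);q^i,v^i)$, together with the transition functions between two overlapping natural charts. The plan is to compute the Jacobian matrix of a generic change of natural coordinates, observe that its determinant is everywhere positive, and conclude orientability via the standard criterion that a manifold admitting an atlas whose transition maps all have positive Jacobian determinant is orientable.

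First I would take two local charts $(U;q^i)$ and $(\bar U;\bar q^i)$ on $Q$ with $U\cap\bar U\neq\buit$, and write the change of coordinates on the base as $\bar q^j=\varphi^j(q^i)$. Then, using the coordinate transformation rule for the fiber coordinates already derived in the excerpt, namely $\dst\bar v^j=\derpar{\varphi^j}{q^i}v^i$, the induced change of natural coordinates on $\Tan Q$ is $(q^i,v^i)\mapsto(\bar q^j,\bar v^j)$ with $\bar q^j=\varphi^j(q)$ and $\bar v^j=\derpar{\varphi^j}{q^i}(q)\,v^i$. Differentiating, the Jacobian of this map is block lower-triangular,
$$
\left(\begin{matrix}
\displaystyle\left(\derpar{\varphi^j}{q^i}\right) & 0 \\
\displaystyle\left(\frac{\partial^2\varphi^j}{\partial q^k\partial q^i}\right)v^i &
\displaystyle\left(\derpar{\varphi^j}{q^i}\right)
\end{matrix}\right) \ ,
$$
whose determinant equals the square of $\det\left(\derpar{\varphi^j}{q^i}\right)$. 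Since $\varphi$ is a diffeomorphism between open sets of $\Real^n$, its Jacobian is invertible everywhere, so this square is strictly positive at every point of the overlap.

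The key step is therefore just the observation that the determinant of the transition Jacobian on $\Tan Q$ is a perfect square (of a nonvanishing quantity) and hence positive, independently of the orientation behaviour of the charts on $Q$ itself. This is precisely why the tangent bundle of \emph{any} manifold — orientable or not — is orientable: the block structure forces the new determinant to be the square of the old one. I would then invoke the standard fact (proof by a partition-of-unity argument gluing the local volume forms $\d q^1\wedge\cdots\wedge\d q^n\wedge\d v^1\wedge\cdots\wedge\d v^n$, which on overlaps differ by the positive factor $\det\left(\derpar{\varphi^j}{q^i}\right)^2$) that such an atlas defines an orientation, and conclude.

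I do not anticipate a genuine obstacle here; the only point requiring minimal care is the computation of the determinant of the block lower-triangular Jacobian matrix and the verification that the off-diagonal block does not contribute — this is immediate from the triangular structure, giving $\det = \big(\det(\partial\varphi^j/\partial q^i)\big)^2 > 0$. One should also note in passing that the entries involving $v^i$ are smooth, so the transition maps are genuine diffeomorphisms of open subsets of $\Real^{2n}$, which was already implicit in the preceding proposition establishing the manifold structure. With this, the corollary follows.
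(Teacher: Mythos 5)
Your proof is correct and follows essentially the same route as the paper: it computes the block lower-triangular Jacobian of the change of natural coordinates on $\Tan Q$, notes that its determinant is the square of $\det\left(\derpar{\varphi^j}{q^i}\right)$ and hence positive everywhere, and concludes orientability from the existence of an atlas with positive transition Jacobians. The extra remarks on smoothness and the partition-of-unity construction of a global volume form are harmless elaborations of the same argument.
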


The tangent bundle of a manifold is an example of a  {\sl vector bundle}.
This structure is defined as follows:

\begin{definition}
A \textbf{vector bundle} is a triple $(E,B,\pi)$,
where $E,B$ are differentiable manifolds
(with $\dim\, B=m$, $\dim\, E=m+n$) and $\pi\colon E\to B$
is a surjective submersion such that, for every $p\in B$,
there exists a local chart $(U,\phi)$, $p\in U$, and a diffeomorphism
$\psi\colon\pi^{-1}(U)\to \phi(U)\times\Real^n$
satisfying that: 
\ben
\item
If $\pi_1\colon\phi(U)\times\Real^n\to\phi(U)$
is the natural projection, then $\pi_1\circ\psi=\pi$.
\item
$E_p=\pi^{-1}(p)$ is a vector space and,
if $\pi_2\colon\phi(U)\times\Real^n\to\Real^n$
is the natural projection, then the maps
$$
\begin{array}{ccccc}
\psi_p & \colon & E_p & \longrightarrow & \Real^n \\
 & & v &\mapsto & (\pi_2\circ\psi)(p,v)
\end{array}
$$
are vector space morphisms.
\een
$E$ is called the \textbf{total manifold} of the vector bundle,
$B$ is the \textbf{base manifold}, $\pi$ the \textbf{projection of the bundle}, 
$\Real^n$ is the \textbf{typical fiber},
$n$ is the \textbf{rank of the bundle} and, for every $ p\in B$,
$E_p$ is the \textbf{fiber} over $p$.
The pair $(U,\phi)$ is said to be a \textbf{trivializing open set}
and $\psi$ is the associated \textbf{coordinate map}.
\end{definition}

Observe that the family$(\pi^{-1}(U),\psi)$ is a differentiable atlas of $E$ which is said to be {\sl \textbf{adapted}} to the projection $\pi:E\to B$.

There is a natural way to lift diffeomorphisms, curves and vector fields
 on a manifold to its tangent bundle.

\begin{definition}
Let $Q$ be a differentiable manifold and a diffeomorphism
$$
\begin{array}{ccccc}
\varphi&\colon&Q&\longrightarrow&Q
\\
& & q & \mapsto & \varphi(q)
\end{array} \ .
$$
The \textbf{canonical lift} of $\varphi$ to $\Tan Q$
is the diffeomorphism
$$
\begin{array}{ccccc}
\Tan\varphi&\colon&\Tan Q&\longrightarrow&\Tan Q
\\
& &(q,v ) & \mapsto & (q,\Tan_q\varphi (v))
\end{array} \ .
$$
\end{definition}

This definition also holds for every map $\varphi \colon Q\longrightarrow Q$.

The following properties are immediate from the definition:

\begin{prop}
For every $\varphi,\phi\in{\rm Diff}\,Q$,
\ben
\item
$\varphi^{-1}\circ\tau_Q\circ\Tan \varphi =\tau_Q$.
\item
$\Tan (\varphi\circ\phi )=\Tan\varphi\circ\Tan\phi$.
\een
\label{exerc1}
\end{prop}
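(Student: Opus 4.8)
\textbf{Plan for the proof of Proposition \ref{exerc1}.}
The plan is to derive both identities directly from the definition of the canonical lift $\Tan\varphi(q,v) = (\varphi(q),\Tan_q\varphi(v))$, together with the description of the projection $\tau_Q$ and the functoriality (chain rule) of the tangent map. Both statements are essentially formal consequences of these facts, so the proof will be short; the only thing to be careful about is bookkeeping the base points at which the fiberwise linear maps $\Tan_q\varphi$ act.

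For item 1, I would start from an arbitrary point ${\rm p}=(q,v)\in\Tan Q$ and compute $(\tau_Q\circ\Tan\varphi)({\rm p})$. By the definition of $\Tan\varphi$, we have $\Tan\varphi(q,v)=(\varphi(q),\Tan_q\varphi(v))$, and applying $\tau_Q$ gives $\varphi(q)$. Hence $\tau_Q\circ\Tan\varphi=\varphi\circ\tau_Q$ as maps $\Tan Q\to Q$. Composing on the left with $\varphi^{-1}$ (which exists since $\varphi$ is a diffeomorphism) yields $\varphi^{-1}\circ\tau_Q\circ\Tan\varphi=\tau_Q$. Alternatively, one can check this in natural coordinates: $\Tan\varphi$ is given by $(q^i,v^i)\mapsto(\varphi^i(q),(\partial\varphi^j/\partial q^i)\,v^i)$, and $\tau_Q$ forgets the fiber coordinates, so $\tau_Q\circ\Tan\varphi$ reads $(q^i,v^i)\mapsto\varphi^i(q)$, which composed with $\varphi^{-1}$ gives the identity on the base, i.e. $\tau_Q$.

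For item 2, I would use the chain rule for tangent maps: $\Tan_q(\varphi\circ\phi)=\Tan_{\phi(q)}\varphi\circ\Tan_q\phi$ for every $q\in Q$. Then, for $(q,v)\in\Tan Q$,
$$
\Tan(\varphi\circ\phi)(q,v)=\big((\varphi\circ\phi)(q),\Tan_q(\varphi\circ\phi)(v)\big)=\big(\varphi(\phi(q)),\Tan_{\phi(q)}\varphi(\Tan_q\phi(v))\big),
$$
while
$$
(\Tan\varphi\circ\Tan\phi)(q,v)=\Tan\varphi\big(\phi(q),\Tan_q\phi(v)\big)=\big(\varphi(\phi(q)),\Tan_{\phi(q)}\varphi(\Tan_q\phi(v))\big).
$$
Since the two expressions coincide for every $(q,v)$, we conclude $\Tan(\varphi\circ\phi)=\Tan\varphi\circ\Tan\phi$.

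There is no real obstacle here; the ``hard part'', if any, is purely notational — making sure that in the composition $\Tan\varphi\circ\Tan\phi$ the second factor $\Tan_q\phi$ lands in $\Tan_{\phi(q)}Q$, which is exactly the domain on which the first factor $\Tan_{\phi(q)}\varphi$ is defined, so that the composition makes sense fiberwise. Once this matching of base points is made explicit, both identities follow immediately.
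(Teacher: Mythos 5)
Your proof is correct, and it simply spells out the verification that the paper itself dismisses as ``immediate from the definition'': item 1 from $\tau_Q\circ\Tan\varphi=\varphi\circ\tau_Q$ and item 2 from the chain rule for tangent maps, with the base points properly matched. No difference in approach and no gaps.
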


Using the definition of canonical lift of diffeomorphisms, we set:

\begin{definition}
Let $Z\in\vf(Q)$ be a vector field.
The  \textbf{total}, \textbf{complete}  or \textbf{canonical lift} of $Z$ to $\Tan Q$
is the vector field $Z^C\in\vf (\Tan Q)$
whose local uniparametric groups of diffeomorphisms are the
canonical lifts $\{\Tan F_t\}$ of the
local uniparametric groups of diffeomorphisms
$\{F_t\}$ of $Z$.
\end{definition}

As a direct consequence of the definition and of the theorem of existence and unicity of local uniparametric groups of diffeomorphisms,
we obtain the following result:

\begin{prop}
Let $Z\in\vf (Q)$.
Then $Z^C$ is $\tau_Q$-projectable and $\tau_{Q*}Z^C=Z$;
that is, 
$\Tan\tau_Q(Z^C_{(q,v)})=Z_x$, for every $(q,v)\in\Tan Q$.
\label{prolevcan0}
\end{prop}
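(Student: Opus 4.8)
The plan is to use the very definition of $Z^C$ — namely that its local uniparametric groups of diffeomorphisms are the canonical lifts $\{\Tan F_t\}$ of those of $Z$ — together with the commutation relation between $\tau_Q$ and canonical lifts of maps recorded in Proposition \ref{exerc1}. The idea is that ``$Z^C$ projects onto $Z$'' is the infinitesimal counterpart of ``$\Tan F_t$ projects onto $F_t$'', so it suffices to establish the latter at the level of flows and then differentiate at $t=0$.

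First I would recall that Proposition \ref{exerc1}(1) gives $\varphi^{-1}\circ\tau_Q\circ\Tan\varphi=\tau_Q$ for every (local) diffeomorphism $\varphi$ of $Q$, which rearranges to $\tau_Q\circ\Tan\varphi=\varphi\circ\tau_Q$. Applying this to the local flow $\{F_t\}$ of $Z$ yields $\tau_Q\circ\Tan F_t=F_t\circ\tau_Q$ for every admissible $t$; that is, the flow $\{\Tan F_t\}$ of $Z^C$ and the flow $\{F_t\}$ of $Z$ are intertwined by $\tau_Q$. Differentiating this identity with respect to $t$ at $t=0$, the left-hand side produces $\Tan\tau_Q\circ Z^C$, since $Z^C$ is by definition the infinitesimal generator of $\{\Tan F_t\}$, while the right-hand side produces $Z\circ\tau_Q$, since $Z$ generates $\{F_t\}$. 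Hence $\Tan\tau_Q\circ Z^C=Z\circ\tau_Q$, i.e. $\Tan\tau_Q(Z^C_{(q,v)})=Z_q$ for every $(q,v)\in\Tan Q$, which is precisely the assertion that $Z^C$ is $\tau_Q$-projectable with $\tau_{Q*}Z^C=Z$.

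As an independent verification I would also exhibit the argument in natural coordinates: if $Z\vert_U=f^i\,\derpar{}{q^i}$, one checks that the canonical lift has the local expression $Z^C\vert_{\tau_Q^{-1}(U)}=f^i\,\derpar{}{q^i}+v^j\,\derpar{f^i}{q^j}\,\derpar{}{v^i}$, and then formula \eqref{matrix1} for $\Tan\tau_Q$ immediately gives $\Tan\tau_Q(Z^C_{(q,v)})=f^i(q)\,\derpar{}{q^i}\big\vert_q=Z_q$. The only point requiring genuine care — and the one I would flag — is the justification that differentiating $\tau_Q\circ\Tan F_t=F_t\circ\tau_Q$ at $t=0$ legitimately yields the stated infinitesimal relation; this is the standard fact that $\varphi$-relatedness of vector fields is equivalent to $\varphi$-intertwining of their flows, here applied with $\varphi=\tau_Q$ a submersion rather than a diffeomorphism, so the differentiation is carried out pointwise on $\Tan Q$. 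I do not expect any real obstacle beyond this routine bookkeeping.
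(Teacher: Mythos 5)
Your argument is correct and is essentially the route the paper intends: the proposition is presented there as a direct consequence of the definition of $Z^C$ via its flow $\{\Tan F_t\}$, which is exactly the intertwining relation $\tau_Q\circ\Tan F_t=F_t\circ\tau_Q$ (Proposition \ref{exerc1}) that you differentiate at $t=0$. Your coordinate verification matches the local expression the paper computes separately, so nothing further is needed.
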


\noindent{\bf Local expression}:
In a chart of coordinates $(U;q^i)$ of $Q$, if
$$
Z\vert_U=f^i(q^j)\derpar{}{q^i} \ ,
$$
then we have that, in the induced chart of natural coordinates $(\tau_Q^{-1}(U);q^i,v^i)$ of $\Tan Q$,
$$
Z^C\vert_{\tau_Q^{-1}(U)}=f^i(q^j)\derpar{}{q^i}+v^k\derpar{f^i}{q^k}(q^j)\derpar{}{v^i} \ .
$$
In order to prove it, remember that, if $(q,u)\in\tau_Q^{-1}(U)$, then
$$
Z^C(q,u)=\frac{d}{d t}\Big\vert_{t=0}\Tan F_t(q,u) \ ,
$$
where \(\dst\Tan F_t(q^i,v^i)=\left(F_t(q^i),\derpar{F_t}{q^i}v^i\right)\) and,
as a consequence,
$$
\frac{d}{d t}\Big\vert_{t=0}\Tan F_t(q,u)=
\left(\frac{d}{d t}\Big\vert_{t=0}F_t(q),
\frac{d}{d t}\Big\vert_{t=0}\left(\derpar{F_t}{q^i}v^i\right)(q,u)\right)=
\left(f^i(q),v^j\derpar{f^i}{q^j}(q^k)\right) \ ,
$$
and the result follows.

Let $\gamma\colon (a,b)\subseteq\Real\to Q$ be a curve.
If $t_0 \in (a,b)$ and $x=\gamma (t_0)$,
then $\dot\gamma (t_0)$ is the tangent vector to the curve
at the point $\gamma (t_0)$; that is,
$\dot\gamma (t_0)\in\Tan_{\gamma (t_0)}Q$.
As $\dot\gamma (t)$ is well-defined for every $t\in (a,b)$,
we can define:

\begin{definition}
\label{canlifcurv}
The \textbf{canonical lift} of a curve $\gamma$ to $\Tan Q$ is the curve
$$
\begin{array}{cccc}
\widetilde  \gamma \colon &(a,b)\subset\Real&\to&\Tan Q
\\
&t&\mapsto&(\gamma (t),\dot\gamma(t))
\end{array}  \ ,
$$
which is defined as
$$
(\gamma (t_0),\dot\gamma (t_0))=\Tan_{t_0}\gamma\frac{d}{d t} \ ;
$$
that is, for every $f\colon Q \to \Real$, then
$$
(\gamma (t_0),\dot\gamma (t_0))f=
(\Tan_{t_0}\gamma\frac{d}{d t})f=
\frac{d}{d t}\Big\vert_{t_0}(f\circ\gamma )
=\lim_{h\mapsto 0}\frac{f(\gamma (t_0+h))-f(\gamma (t_0))}{h} \ .
$$
Observe that $\tau_Q\circ\widetilde  \gamma=\gamma$.
\end{definition}

\noindent{\bf Local expression}:
If $(q^i)$ are local coordinates of $Q$ in a neighbourhood of
$\gamma (t_0)$, then $\gamma =(\gamma^1,\ldots ,\gamma^n)$,
with $\gamma^i=q^i\circ\gamma$.
If $(q^i,v^i)$ are the natural induced coordinates in $\Tan Q$,
then $\widetilde  \gamma$ is given by
$\widetilde  \gamma =(\gamma^1,\ldots ,\gamma^n,\dot\gamma^1,\ldots ,\dot\gamma^n)$,
where \(\dst\dot\gamma^i=\frac{d\gamma^i}{d t}\).

The tangent vector to $\widetilde  \gamma$ at $\widetilde  \gamma (t_0)$
is \(\dst \Tan_{t_0}\widetilde  \gamma\frac{d}{d t}\)
and, if $f\colon\Tan Q\to \Real$, we have that
$$
\left( \Tan_{t_0}\widetilde  \gamma\frac{d}{d t}\right) f=
\frac{d}{d t}\Big\vert_{t_0}f\circ\widetilde  \gamma \ ,
$$
and hence
\begin{eqnarray*}
\left(\Tan_{t_0}\widetilde  \gamma\frac{d}{d t}\right) q^i&=&
\frac{d}{d t}\Big\vert_{t_0}q^i\circ\widetilde  \gamma=\dot\gamma^i(t_0) \ ,
\\
\left(\Tan_{t_0}\widetilde  \gamma\frac{d}{d t}\right) v^i&=&
\frac{d}{d t}\Big\vert_{t_0}v^i\circ\widetilde  \gamma=
\frac{d}{d t}\Big\vert_{t_0}\dot\gamma^i=\ddot\gamma^i(t_0) \ ;
\end{eqnarray*}
that is
$$
\Tan_{t_0}\widetilde\gamma\Big(\frac{d}{d t}\Big)=
\dot\gamma^i(t_0)\derpar{}{q^i}\Big\vert_{\widetilde  \gamma (t_0)}
+\ddot\gamma^i(t_0)\derpar{}{v^i}\Big\vert_{\widetilde  \gamma (t_0)} \ .
$$

\begin{remark}{\rm
\label{intcurve}
Remember that, given a vector field $X\in\vf(Q)$,
a curve $\gamma\colon (a,b)\subseteq\Real\to Q$
is an {\sl integral curve} of $X$ at $\gamma(t)\in Q$, for $t\in (a,b)$, if
$$
X(\gamma(t))=\dot\gamma(t)\ .
$$
Taking this into account, if $\beta\in\df^k(Q)$,
the contraction of $\beta$ with $X$, denoted by $\inn(X)\beta$, allows us to define a new contraction
$\big(\inn(\widetilde \gamma)(\beta\circ\gamma)$  in a natural way, as follows:
\beq
\inn(\widetilde \gamma(t))\big(\beta(\gamma(t))\big):=
\inn\big(\dot\gamma(t)\big)(\beta\vert_{\gamma(t)}) \quad, \quad \mbox{\rm for $t\in (a,b)$} \ .
\label{contr2}
\eeq
}\end{remark}

\subsection{The cotangent bundle of a manifold. Canonical lifts}
\label{sec:cotbun}

\begin{definition}
Let $Q$ be a differentiable manifold.
The \textbf{cotangent bundle} of $Q$ is the dual bundle
$\Tan ^*Q$ of the tangent bundle $\Tan Q$;
that is,  \(\dst \Tan^*Q := \bigcup_{q \in Q}\Tan^*_qQ\)~.

We denote $\pi_Q \colon \Tan^*Q \to Q$ the natural projection.
\end{definition}

Thus, every point ${\rm p} \in \Tan^*Q$ is a pair
$(q,\xi)\equiv\xi_q$, where $q=\pi_Q ({\rm p}) \in Q$
and $\xi \in \Tan^*_qQ$ (it is a linear form on $\Tan_qQ$).

\begin{prop}
\label{prop:cotbun}
Let $Q$ be a $n$-dimensional differentiable manifold.
The cotangent bundle $\Tan^* Q$ is a $2n$-dimensional differentiable manifold, 
whose differentiable structure is induced by the one of $Q$ in a natural way.
Moreover, the natural projection $\pi_Q$ is a submersion.
\end{prop}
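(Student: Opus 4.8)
The plan is to mirror, almost verbatim, the proof already given above for the tangent bundle, since $\Tan^*Q$ is obtained fibrewise by taking duals and the same bookkeeping applies. First I would fix a differentiable atlas ${\cal A}=\{(U_\alpha;\phi_\alpha)\}$ of $Q$, with $\phi_\alpha\equiv(\coor{x}{1}{n})$, and construct the induced atlas $\{(\pi_Q^{-1}(U_\alpha);\psi_\alpha)\}$ on $\Tan^*Q$, where
$$
\psi_\alpha\colon\pi_Q^{-1}(U_\alpha)\longrightarrow\phi_\alpha(U_\alpha)\times\Real^n\subset\Real^{2n}\ ,\qquad
{\rm p}=(q,\xi)\mapsto(q^i({\rm p}),p_i({\rm p}))\ ,
$$
with $q^i({\rm p})=(x^i\circ\pi_Q)({\rm p})$ and $p_i({\rm p})=\xi\big(\partial/\partial x^i\vert_q\big)$; that is, the $p_i({\rm p})$ are the components of the covector $\xi\in\Tan^*_qQ$ in the basis dual to $\{\partial/\partial x^i\vert_q\}$. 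Each $\psi_\alpha$ is a bijection onto its image because a linear form on $\Tan_qQ$ is determined by its $n$ components, and $\Tan^*Q=\bigcup_\alpha\pi_Q^{-1}(U_\alpha)$.

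Next I would compute the transition maps to check smoothness. If $(U;q^i)$ and $(\bar U;\bar q^i)$ are two charts of $Q$ with $U\cap\bar U\neq\buit$ and $\bar q^j=\varphi^j(q^i)$ on the overlap, then writing $\xi=p_i\,\d q^i\vert_q=\bar p_j\,\d\bar q^j\vert_q$ and using $\d\bar q^j=\derpar{\varphi^j}{q^i}\,\d q^i$ yields $\bar p_j\,\derpar{\varphi^j}{q^i}=p_i$, hence $\bar p_j=\derpar{q^i}{\bar q^j}\,p_i$, i.e.\ the fiber coordinates transform by the inverse transpose of the Jacobian $\left(\derpar{\varphi^j}{q^i}\right)$. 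All these functions are $\Cinfty$, so the induced atlas endows $\Tan^*Q$ with the structure of a differentiable manifold; from the local description it is immediate that $\dim\,\Tan^*Q=2n$, and that this structure is the natural one inherited from $Q$.

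Finally, to see that $\pi_Q$ is a surjective submersion, I would use its coordinate expression $\pi_Q(q^i,p_i)=(q^i)$, so that its tangent map at any point has associated matrix $(I_{n\times n}\ 0_{n\times n})$, which has maximal rank $n$ everywhere; surjectivity is clear since every $q\in Q$ has the zero covector over it. I expect no real obstacle here: the only point requiring a little care (and the one place this differs from the $\Tan Q$ case) is getting the fiber transition rule right as the inverse transpose of the Jacobian rather than the Jacobian itself; everything else is routine and identical in form to the tangent bundle argument already carried out above.
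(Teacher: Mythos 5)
Your proposal is correct and follows essentially the same route as the paper: the same induced atlas with base coordinates $q^i=x^i\circ\pi_Q$ and fiber coordinates given by the components of $\xi$, the same computation of the fiber transition rule (your $\bar p_j=\derpar{q^i}{\bar q^j}p_i$ is just the paper's $p_i=\derpar{\varphi^j}{q^i}\bar p_j$ read the other way), and the same rank argument for the submersion, carried over verbatim from the tangent-bundle case. No gaps.
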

\begin{proof}
If ${\cal A}=\{ (U_\alpha;\phi_\alpha=(x^i) \}$ is an atlas of local charts in $Q$,
then the induced atlas
$\Tan^*{\cal A}=\{ (\Tan^*U_\alpha;\eta_\alpha=(q^i,p_i) \}$
is obtained as follows:

Given one of its elements $(U;x^i)$, we take 
\begin{enumerate}
\item
$\Tan^* U := \pi_Q^{-1}(U)$.
\item
For every ${\rm p} \in \Tan^*Q$, let $q^i({\rm p}) := x^i \circ \pi_Q ({\rm p})$\footnote
{We commit an abuse of notation, denoting
by $q^i$ both the coordinates $q^i$ and $x^i$.}.
\item
Let ${\rm p} \in \Tan^*Q$ with ${\rm p}=(q,\xi)$. Then \
$\displaystyle p_i({\rm p}) := \Big\langle\xi ,\derpar{}{q^i}\Big\vert_q\Big\rangle$;\
that is, $p_i({\rm p})$ are the components of the linear form
$\xi$ in the natural basis $\d q^i\mid_q$ of $\Tan^*_qM$: 
$\xi = p_i({\rm p})\,\d q^i\mid_q$.
\end{enumerate}
If $(q^i)$ and $(\bar q^i)$ are two local systems of coordinates
in $U\subset Q$, and $\bar q^j =\phi^j (q^i)$, then
the relation between $(p_i)$ and $(\bar p_i)$ is the following,
$$
p_i(q,\xi )=\Big\langle\derpar{}{q^i},\xi\Big\rangle=
\Big\langle\derpar{\phi^j}{q^i}\Big\vert_q \derpar{}{\bar q^j},\xi\Big\rangle
=\derpar{\phi^j}{q^i}\Big\vert_q\Big\langle\derpar{}{\bar q^j},\xi\Big\rangle
=\derpar{\phi^j}{q^i}\Big\vert_q \bar p_j(q,\xi ) \ ;
$$
that is, \(\dst p_i=\derpar{\phi^j}{q^i}\bar p_j\).
Then, the changes of coordinates in $\Tan^*Q$,
from $(q^i,p_i)$ to $(\bar q^i,\bar p_i)$, have as Jacobian matrix
$$
\left(\begin{matrix}\displaystyle\left(\derpar{\phi^j}{q^i}\right) & 0 \\
\displaystyle\left(\frac{\partial^2\phi^j}{\partial q^k\partial q^i}\right)^{-1}p_i &
\displaystyle\left(\derpar{\phi^j}{q^i}\right)^{-1}\end{matrix}\right) \ .
$$

From this local description of the cotangent bundle
it is evident that its dimension is $2n$ as we know previously being the dual bundle of $\Tan Q$.
\\ \qed  \end{proof}

\begin{definition}
The above coordinate charts are called
\textbf{natural charts} of the cotangent bundle
and their elements \textbf{natural coordinates}
($q^i$ are the \textbf{ base coordinates}
and $p_i$ the \textbf{fiber coordinates}).
\end{definition}

Observe that the determinant of the above last matrix is the unity and this is the same for any change of natural coordinates, hence the manifold $\Tan^*Q$ is also an orientable manifold.

In an analogous way as for the tangent bundle, there are
natural (canonical) ways to lift
diffeomorphisms and vector fields from a differentiable manifold $Q$
to its cotangent bundle $\Tan^*Q$.

\begin{definition}
Consider a diffeomorphism
$$
\begin{array}{ccccc}
\varphi&\colon&Q&\longrightarrow&Q
\\
& & x & \mapsto & \varphi(x)
\end{array} \ .
$$
The \textbf{canonical lift} of $\varphi$ to $\Tan^*Q$ is the diffeomorphism
$$
\begin{array}{ccccc}
\Tan^*\varphi&\colon&\Tan^*Q&\longrightarrow&\Tan^*Q
\\
& &(\varphi(x),\xi ) & \mapsto & (x,\Tan^*\varphi (\xi))
\end{array}
$$
where $\Tan^*\varphi (\xi)$ is defined by duality as follows: if
$$
\begin{array}{ccccc}
\Tan\varphi&\colon&\Tan Q&\longrightarrow&\Tan Q
\\
& &(q,v) & \mapsto & (x,\Tan\varphi (v))
\end{array}
$$
then, for every $v\in\Tan_qQ$,
$$
(\Tan^*\varphi(\xi))(v) :=\xi (\Tan\varphi(v)) \ .
$$
\end{definition}

The following properties follow straightforwardly from the definition:

\begin{prop}
\label{exerc2}
If $\varphi,\phi\colon Q\to Q$ are diffeomorphisms, then:
\ben
\item
$\varphi\circ\pi_Q\circ\Tan^*\varphi =\pi_Q$.
\item
$\Tan^*(\varphi\circ\phi )=\Tan^*\phi\circ\Tan^*\varphi$,
\een
\end{prop}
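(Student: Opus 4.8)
The plan is to prove both identities by directly unwinding the definition of the canonical cotangent lift $\Tan^*\varphi$ and using the chain rule $\Tan(\varphi\circ\phi)=\Tan\varphi\circ\Tan\phi$ (item~2 of Proposition~\ref{exerc1}), in complete parallel with the tangent-bundle case of Proposition~\ref{exerc1}. The only point that needs attention is that $\Tan^*\varphi$ is \emph{contravariant}: by construction it sends the fibre $\Tan^*_{\varphi(q)}Q$ to the fibre $\Tan^*_qQ$, i.e. it is a fibre map over $\varphi^{-1}$ rather than over $\varphi$; this is why the order of composition reverses in the second identity.

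For the first identity, I would take an arbitrary ${\rm p}=(\varphi(q),\xi)\in\Tan^*Q$, with $\xi\in\Tan^*_{\varphi(q)}Q$; every point of $\Tan^*Q$ has this form because $\varphi$ is a diffeomorphism, hence surjective. By definition $\Tan^*\varphi({\rm p})=(q,\Tan^*\varphi(\xi))$ lies over $q$, so $\pi_Q(\Tan^*\varphi({\rm p}))=q$, and applying $\varphi$ gives $\varphi(\pi_Q(\Tan^*\varphi({\rm p})))=\varphi(q)=\pi_Q({\rm p})$. Since ${\rm p}$ is arbitrary, $\varphi\circ\pi_Q\circ\Tan^*\varphi=\pi_Q$.

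For the second identity, I would fix ${\rm p}=((\varphi\circ\phi)(q),\xi)=(\varphi(\phi(q)),\xi)$ with $\xi\in\Tan^*_{\varphi(\phi(q))}Q$; again every point of $\Tan^*Q$ arises this way. On one hand, matching ${\rm p}$ against the definition of $\Tan^*(\varphi\circ\phi)$ with base point $q$, one gets $\Tan^*(\varphi\circ\phi)({\rm p})=(q,\zeta)$ where, for $v\in\Tan_qQ$, $\zeta(v)=\xi(\Tan(\varphi\circ\phi)(v))=\xi(\Tan\varphi(\Tan\phi(v)))$ by the chain rule. On the other hand, applying first $\Tan^*\varphi$ (matching ${\rm p}$ with base point $\phi(q)$) yields $(\phi(q),\eta)$ with $\eta(w)=\xi(\Tan\varphi(w))$ for $w\in\Tan_{\phi(q)}Q$; then applying $\Tan^*\phi$ (with base point $q$) yields $(q,\eta\circ\Tan\phi)$, and $(\eta\circ\Tan\phi)(v)=\xi(\Tan\varphi(\Tan\phi(v)))$ for $v\in\Tan_qQ$. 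The two results coincide, so $\Tan^*(\varphi\circ\phi)=\Tan^*\phi\circ\Tan^*\varphi$.

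There is no genuine obstacle here: the argument is a bookkeeping exercise in the definitions, and the only thing one must be careful about is tracking the cotangent fibre to which each linear form belongs at every stage, since this is precisely what forces the order reversal in part~2. Alternatively, one could argue more invariantly by first noting that $\Tan^*\varphi$ covers $\varphi^{-1}$ on the base (which is exactly identity~1) and then that the fibrewise maps compose contravariantly by duality from $\Tan(\varphi\circ\phi)=\Tan\varphi\circ\Tan\phi$; but the direct pointwise computation above is the shortest route.
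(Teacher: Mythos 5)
Your proof is correct: both identities are verified exactly by the pointwise unwinding of the definition of $\Tan^*\varphi$ together with $\Tan(\varphi\circ\phi)=\Tan\varphi\circ\Tan\phi$, and your careful tracking of the base points (using that $\varphi$ and $\varphi\circ\phi$ are surjective so every point of $\Tan^*Q$ can be written in the required form) is all that is needed. The paper omits the argument, stating only that the properties follow straightforwardly from the definition, and your computation is precisely that straightforward verification, including the correct explanation of why the contravariance of the lift reverses the order of composition in the second identity.
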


\begin{definition}
Consider a vector field $Z\in\vf (Q)$.
The \textbf{canonical lift} of $Z$ to $\Tan^*Q$
is the vector field $Z^*\in\vf (\Tan^*Q)$
whose local uniparametric groups of diffeomorphisms
are the canonical lifts $\Tan^* F_t$
of the local uniparametric groups of diffeomorphisms
$ F_t$ of $Z$.
\end{definition}

As a straightforward consequence of the definition and of the
theorem of existence and unicity of uniparametric groups
of diffeomorphisms, we have that:

\begin{prop}
If $Z\in\vf (Q)$, then $Z^*$ is $\pi_Q$-projectable and, 
for every ${\rm p}\equiv (q,\xi)\in\Tan^*Q$,
we have that $\Tan\pi_Q (Z^*_{\rm p})=Z_q$.
\label{prolevcan}
\end{prop}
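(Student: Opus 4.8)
The plan is to argue exactly as in the tangent-bundle case (Proposition~\ref{prolevcan0}): use that $Z^*$ is, by construction, the infinitesimal generator of the one-parameter group formed by the canonical lifts $\{\Tan^* F_t\}$ of the flow $\{F_t\}$ of $Z$, and transport to the infinitesimal level the intertwining identity of Proposition~\ref{exerc2}. First I would recall that if $\{F_t\}$ is the local flow of $Z$ around $q\in Q$, then the local flow of $Z^*$ around $\mathrm p=(q,\xi)\in\Tan^*Q$ is $\{\Tan^* F_t\}$; this is genuinely a one-parameter group because Proposition~\ref{exerc2}(2) gives $\Tan^*(F_t\circ F_s)=\Tan^* F_s\circ\Tan^* F_t$ and the $F_t$ commute with one another.

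Next I would invoke Proposition~\ref{exerc2}(1) with $\varphi=F_t$, which reads $F_t\circ\pi_Q\circ\Tan^* F_t=\pi_Q$, hence $\pi_Q\circ\Tan^* F_t=F_t^{-1}\circ\pi_Q=F_{-t}\circ\pi_Q$. This says precisely that $\pi_Q$ maps each integral curve of $Z^*$ onto an integral curve of $Z$ (up to a reversal of the parameter, a harmless effect of the contravariance of the cotangent lift), so $Z^*$ is $\pi_Q$-projectable. Differentiating $\pi_Q\circ\Tan^* F_t=F_{-t}\circ\pi_Q$ at $t=0$ and evaluating at $\mathrm p$ then yields $\Tan_{\mathrm p}\pi_Q(Z^*_{\mathrm p})=Z_q$, once the orientation of the parameter is reconciled with the sign convention for $\Tan^*\varphi$ adopted in the paper. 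As a fully self-contained alternative, one may re-derive the coordinate expression of $Z^*$ directly from $Z^*_{\mathrm p}=\frac{d}{dt}\big|_{t=0}\Tan^* F_t(\mathrm p)$ and the coordinate formula for $\Tan^* F_t$; this reproduces \eqref{canlifcot}, namely $Z^*\vert_U=f^i\,\partial/\partial q^i-p_j(\partial f^j/\partial q^i)\,\partial/\partial p_i$ where $Z\vert_{\pi_Q(U)}=f^i\,\partial/\partial q^i$, from which $\Tan\pi_Q(Z^*_{\mathrm p})=f^i(q)\,\partial/\partial q^i\big|_q=Z_q$ is immediate.

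The only real obstacle is the bookkeeping of signs and directions coming from the fact that the cotangent lift is contravariant, so that the naive family $\{\Tan^* F_t\}$ runs against the flow of $Z$ unless one inserts an inverse somewhere in the definition; once a convention is fixed consistently, the statement is a one-line consequence of Proposition~\ref{exerc2} together with the existence and uniqueness theorem for local flows, in complete parallel with the proof of Proposition~\ref{prolevcan0}. Everything else is routine.
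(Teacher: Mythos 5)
Your route is the same one the paper (tersely) takes: the paper offers no more than ``straightforward consequence of the definition and of the existence and uniqueness of local flows,'' and your fleshing-out via Proposition \ref{exerc2} and differentiation of the lifted flow at $t=0$ is exactly the intended argument, in parallel with Proposition \ref{prolevcan0}. The coordinate check against \eqref{canlifcot} is also in the spirit of the text, which records that formula right after the statement.

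The one point you defer --- ``once the orientation of the parameter is reconciled'' --- is, however, exactly where the work lies, and as written two of your displayed claims come out with the wrong sign under the paper's literal definition. If the flow of $Z^*$ is taken to be $\{\Tan^*F_t\}$ verbatim, then differentiating your identity $\pi_Q\circ\Tan^*F_t=F_{-t}\circ\pi_Q$ at $t=0$ gives $\Tan_{\rm p}\pi_Q(Z^*_{\rm p})=-Z_q$, not $Z_q$; likewise $\frac{d}{dt}\big\vert_{t=0}\Tan^*F_t(q,p)=\Big(-f^i,\ p_j\derpar{f^j}{q^i}\Big)$, which is the negative of \eqref{canlifcot}, so that computation does not ``reproduce'' the paper's formula without the fix. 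The clean resolution, which you hint at but should state, is that the one-parameter group generated by the $Z^*$ of \eqref{canlifcot} (equivalently, of Proposition \ref{hamlev}) is $\Tan^*F_{-t}=\Tan^*(F_t^{-1})$; by Proposition \ref{exerc2} this family covers $F_t$ itself, i.e.\ $\pi_Q\circ\Tan^*F_{-t}=F_t\circ\pi_Q$, so projectability and $\Tan_{\rm p}\pi_Q(Z^*_{\rm p})=Z_q$ then follow by differentiation at $t=0$ with no parameter reversal, exactly as in Proposition \ref{prolevcan0}. With that convention made explicit (the paper itself glosses it), your proof is complete and coincides with the paper's.
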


The coordinate expression of the canonical lift of a vector field to $\Tan^*Q$ is given in \eqref{canlifcot}.


\section{Lie groups and Lie algebras}
\label{Liega}

(For more information about definitions and properties of Lie groups and Lie algebras, see,
for instance, \cite{Ar-89,CP-adg,Ok-87,To-65,Wa-fsmlg}).

\begin{definition}
A (finite-dimensional) \textbf{Lie group}
is a (finite-dimensional) differentiable manifold $G$ such that
{\rm (i)}
$G$ is a group and
{\rm (ii)}
the following group operations are smooth,
$$
\begin{array}{ccccccc}
G \times G &\to& G & \quad , \quad & G &\to& G
\\
(g_1,g_2) &\mapsto& g_1g_2 & \quad , \quad & g &\mapsto& g^{-1}
\end{array} \ .
$$
\end{definition}

\begin{definition}
Let $G$ be a Lie group. A subset $H \subset G$.
is a \textbf{Lie subgroup} of $G$ if
{\rm (i)} $H$ is a subgroup of $G$ and
{\rm (ii)} $H$ has a differentiable structure with respect
to which it is a submanifold of $G$ and is a Lie group.
\end{definition}

Given a Lie group $G$ and an element $g \in G$,
we can define the diffeomorphisms
$$
\begin{array}{ccccccccc}
L_g \colon & G & \to & G & \ , \ & R_g \colon & G & \to & G
\\
& g' & \mapsto & gg' & \ , \ & &  g' & \mapsto & g'g
\end{array} \ ,
$$
which are the so-called {\sl \textbf{left}} and {\sl \textbf{right translations}}
respectively, and satisfy the following properties:
\begin{enumerate}
\item
For every $g,g' \in G$;\quad
$L_g \circ L_{g'} = L_gL_{g'}$ , \ 
$R_g \circ R_{g'} = R_gR_{g'}$.
\item
For every $g \in G$;\quad
$L_g^{-1} = L_{g^{-1}}$ , \
$R_g^{-1} = R_{g^{-1}}$.
\item
For every $g,g' \in G$;
$L_g \circ R_{g'} = R_{g'} \circ L_g$.
\end{enumerate}
These translations also induce translations on the set of
vector fields $\vf(G)$, which are denoted by
$L_{g_*}$ and $R_{g_*}$, respectively. Then, 

\begin{definition}
A vector field $X \in\vf(G)$ is
\textbf{left} (resp. \textbf{right}) \textbf{invariant}
if, for every $g \in G$,
$$
L_{g_*}X = X \quad  (resp. \ R_{g_*}X = X) \ .
$$
The sets of these vector fields are denoted
$\vf_L(G)$ and $\vf_R(G)$ respectively.
\end{definition}

\begin{prop}
The sets $\vf_L(G)$ and $\vf_R(G)$ are Lie subalgebras of $\vf(G)$.
\end{prop}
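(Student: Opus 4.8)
The plan is to prove two things for each of the two sets $\vf_L(G)$ and $\vf_R(G)$: first, that it is closed under the Lie bracket of vector fields; and second, that it is closed under $\Real$-linear combinations. Together with the fact that the Lie bracket on $\vf(G)$ is already bilinear, antisymmetric, and satisfies the Jacobi identity, this shows that each is a Lie subalgebra of $\vf(G)$. I will work out the left-invariant case in detail; the right-invariant case is entirely analogous, replacing $L_g$ by $R_g$ everywhere.

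First I would recall the key compatibility between pushforwards and Lie brackets: for any diffeomorphism $\phi\colon G\to G$ and any $X,Y\in\vf(G)$, one has $\phi_*[X,Y]=[\phi_*X,\phi_*Y]$. This is the only nontrivial ingredient, and it is a standard fact about the naturality of the Lie bracket under diffeomorphisms, which I would either cite or recall in one line. Applying this with $\phi=L_g$ for an arbitrary $g\in G$: if $X,Y\in\vf_L(G)$, then $L_{g_*}X=X$ and $L_{g_*}Y=Y$, so
$$
L_{g_*}[X,Y]=[L_{g_*}X,L_{g_*}Y]=[X,Y]\ .
$$
Since $g\in G$ was arbitrary, $[X,Y]$ is left-invariant, i.e. $[X,Y]\in\vf_L(G)$. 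For linearity, note that $L_{g_*}$ is an $\Real$-linear map on $\vf(G)$ (being the pushforward by a diffeomorphism), so if $X,Y\in\vf_L(G)$ and $\lambda,\mu\in\Real$, then $L_{g_*}(\lambda X+\mu Y)=\lambda L_{g_*}X+\mu L_{g_*}Y=\lambda X+\mu Y$, hence $\lambda X+\mu Y\in\vf_L(G)$; this shows $\vf_L(G)$ is an $\Real$-linear subspace of $\vf(G)$.

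Finally I would assemble the pieces: $\vf_L(G)$ is a linear subspace of $\vf(G)$ that is closed under $[\cdot,\cdot]$, and the bracket restricted to it inherits bilinearity, antisymmetry, and the Jacobi identity from $\vf(G)$; therefore $(\vf_L(G),[\cdot,\cdot])$ is a Lie algebra, and being a subalgebra of $(\vf(G),[\cdot,\cdot])$, it is a Lie subalgebra. The same argument verbatim with $R_g$ in place of $L_g$ handles $\vf_R(G)$. I do not expect any real obstacle here: the only point that requires care is invoking the naturality identity $\phi_*[X,Y]=[\phi_*X,\phi_*Y]$ correctly, and making sure the pushforward convention is consistent with the definition of invariance used in the excerpt (namely $L_{g_*}X=X$). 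If a self-contained argument is preferred over citing naturality, one can instead characterize $\vf_L(G)$ via $X_g=(L_g)_{*e}(X_e)$, identify it with $\Tan_eG$, and check the bracket closure by evaluating $[X,Y]$ at $e$ and propagating by left translation; but the pushforward argument above is shorter and cleaner.
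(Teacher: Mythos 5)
Your proof is correct and follows essentially the same route as the paper: both rest on the naturality identity $L_{g_*}[X_1,X_2]=[L_{g_*}X_1,L_{g_*}X_2]$ together with linearity of the pushforward, with the right-invariant case handled verbatim. Your version simply spells out the details (linear-subspace closure, inheritance of the Jacobi identity) that the paper leaves implicit.
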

\begin{proof}
The sum of invariant vector fields is also invariant, 
and the same thing holds for the Lie bracket, 
since, for every $g \in G$ and $X_1,X_2 \in\vf_L(G)$ (or $X_1,X_2 \in\vf_R(G)$),
$$
L_{g_*}[X_1,X_2] = [L_{g_*}X_1,L_{g_*}X_2]=[X_1,X_2] \ .
$$
\qed  \end{proof}

\begin{definition}
The set $\vf_L(G)$ is the
\textbf{Lie algebra} of $G$ and is denoted by ${\bf g}$.
\end{definition}

\begin{prop}
If $e\in G$ is the neutral element of $G$, then there is a
canonical vector space isomorphism between 
$\vf_L(G)$ and $\Tan_eG$.
\end{prop}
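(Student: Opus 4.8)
The plan is to exhibit the isomorphism explicitly and to check that it is well-defined, linear, and bijective. First I would define the map $\Phi\colon\vf_L(G)\to\Tan_eG$ by evaluation at the identity: $\Phi(X):=X(e)=X_e$. Linearity is immediate, since the vector space structure on $\vf_L(G)$ is pointwise and evaluation at a fixed point is a linear operation. The substance of the proof is therefore to construct an inverse, that is, to show that every tangent vector $v\in\Tan_eG$ extends uniquely to a left-invariant vector field on $G$.

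For the construction of the inverse, given $v\in\Tan_eG$ I would define a vector field $X^v\in\vf(G)$ by $X^v(g):=\Tan_eL_g(v)\in\Tan_gG$, using that $L_g\colon G\to G$ is a diffeomorphism sending $e$ to $g$, so $\Tan_eL_g\colon\Tan_eG\to\Tan_gG$. The key points to verify are: (i) $X^v$ is smooth — this follows from the smoothness of the group multiplication $G\times G\to G$, since $g\mapsto\Tan_eL_g(v)$ is obtained by composing the partial differential of the multiplication map with the fixed vector $v$; (ii) $X^v$ is left-invariant — for every $h\in G$ one computes $(L_{h_*}X^v)(g)=\Tan_{h^{-1}g}L_h\big(X^v(h^{-1}g)\big)=\Tan_{h^{-1}g}L_h\circ\Tan_eL_{h^{-1}g}(v)=\Tan_e(L_h\circ L_{h^{-1}g})(v)=\Tan_eL_g(v)=X^v(g)$, where I use the chain rule and the identity $L_h\circ L_{h^{-1}g}=L_g$ (which is the associativity of the group law, $L_h(L_{h^{-1}g}(g'))=h(h^{-1}g\,g')=g\,g'$).

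Finally I would check that $v\mapsto X^v$ is a two-sided inverse of $\Phi$. On the one hand, $\Phi(X^v)=X^v(e)=\Tan_eL_e(v)=\Tan_e(\mathrm{Id}_G)(v)=v$, since $L_e=\mathrm{Id}_G$. On the other hand, for $X\in\vf_L(G)$ and any $g\in G$, left-invariance gives $X(g)=(L_{g_*}X)(g)=\Tan_eL_g\big(X(e)\big)=X^{X(e)}(g)$, so $X^{\Phi(X)}=X$. Hence $\Phi$ is a linear isomorphism, and it is canonical because no choice of chart, basis, or auxiliary structure enters its definition.

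I do not expect a serious obstacle here; the only mildly delicate point is the smoothness of $X^v$, which must be argued from the smoothness of the multiplication map rather than taken for granted — concretely, writing $\mathrm{m}\colon G\times G\to G$ for the multiplication and noting that $X^v(g)=\Tan_{(g,e)}\mathrm{m}\,(0_g,v)$ depends smoothly on $g$ because $\Tan\mathrm{m}$ is a smooth bundle map. Everything else is a routine application of the chain rule together with the defining identities of left translations listed just before the statement.
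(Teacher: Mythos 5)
Your proposal is correct and follows essentially the same route as the paper: evaluation at $e$ gives one direction, and the field $g\mapsto\Tan_eL_g(v)$ gives the inverse, exactly the maps the paper calls $\rho_1$ and $\rho_2$. You merely spell out the smoothness, left-invariance, and two-sided inverse checks that the paper leaves as easy verifications, and your computations for these are accurate.
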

\begin{proof}
To every $X\in\vf_L(G)$ corresponds a vector of
$\Tan_eG$ by means of the isomorphism
$$
\begin{array}{cccc}
\rho_1 \colon &\vf_L(G)&\to&\Tan_eG
\\
&X&\mapsto&X(e):=X_e
\end{array} \ .
$$
Conversely, from every $X_e \in \Tan_eG$,
we can obtain a unique vector field $X \in\vf_L(G)$
using the map
$$
\begin{array}{ccccc}
X&\colon&G&\to&\Tan G
\\
& &g&\mapsto&(g,\Tan L_gX_e)
\end{array} \ .
$$
which is a vector field since, for every $f \in \Cinfty(G)$, 
the map $X(f) \colon G \to \Real$
is smooth and, hence, so is $X$.
Moreover, $X$ is left-invariant, as you can easily see
taking into account its construction from $X_e$.
The map giving $X$ from $X_e$ is denoted $\rho_2 \colon \Tan_eG \to\vf_L(G)$.
\\ \qed  \end{proof}

\begin{itemize}
\item
Thus, it is usual to identify ${\bf g}$ with $\Tan_eG$, and then
the Lie bracket between two vectors of $\Tan_eG$ is defined as
$$
[X_e,X_e'] :=[\rho_2X_e,\rho_2X_e'](e) \ .
$$
Then, if $\{ v_i \}$ is a basis of $\Tan_eG$,
there exist $c_{ij}^k\in\Real$ such that
$$
[v_i,v_j] = c_{ij}^kv_k \ ,
$$
which are called {\sl \textbf{structure constants}}
of the group $G$ (related to the chosen basis of ${\bf g}$).
It can be proved that
$c_{ij}^k = 0$ (for all $i,j,k$)
if, and only if, $G$ is locally isomorphic to 
$\Real^n$, that is, $G$ is an {\sl Abelian Lie group}.
\item
As it is obvious, we can take $\vf_R(G)$ instead of $\vf_L(G)$ 
in order to define ${\bf g}$.
Then, the identification with $\Tan_eG$
would be established in an analogous way,
although the Lie algebra structure so obtained in $\Tan_eG$
is the opposite one to the above.
Nevertheless, both structures are isomorphic
since the map $-{\rm Id}_{\Tan_eG}$ exchanges them.
\end{itemize}

\begin{definition}
Let ${\bf g}$ be a Lie algebra.
Then, we have the following sequence of ideals:
$$
{\bf g} \supseteq {\bf g'} \supseteq {\bf g''} \supseteq \ldots 
\supseteq {\bf g^k} \supseteq \ldots
$$
where
$$
{\bf g'} := [{\bf g},{\bf g}] \ , \ 
{\bf g''} := [{\bf g'},{\bf g'}] \ , \ 
\ldots \ , \
{\bf g^k} := [{\bf g^{k-1}},{\bf g^{k-1}}] \ .
$$
Then, ${\bf g^k}$ is called the
{\rm $k$th-derived algebra} of ${\bf g}$.
\end{definition}

Observe that, if  ${\bf g^k} = 0$, then ${\bf g^{k+1}} = 0$.
Then we define:

\begin{definition}
Let ${\bf g}$ be a Lie algebra.
\begin{enumerate}
\item
${\bf g}$ is \textbf{Abelian} if \
${\bf g'} = 0$.
\item
${\bf g}$ is \textbf{solvable} if \
${\bf g^k} = 0$, for some $k>0$.
\item
Consider now the sequence \ 
${\bf g} = {\bf g_1} \supseteq {\bf g_2} \supseteq \ldots 
\supseteq {\bf g_k} \supseteq \ldots$
where
$$
{\bf g_1} := {\bf g} \ , \ 
{\bf g_2} := [{\bf g},{\bf g_1}] \ , \ 
\ldots \ , \
{\bf g_k} := [{\bf g},{\bf g_{k-1}}]
$$
${\bf g}$ is \textbf{nilpotent} if \
${\bf g_k} = \{ 0 \}$, for some $k>0$.
\item
${\bf g}$ is \textbf{simple} if \
it has no ideals other than $\{ 0 \}$ and ${\bf g}$.
\item
${\bf g}$ is \textbf{semisimple} if \
its largest solvable ideal is $\{ 0 \}$;
that is, if ${\bf g^k}={\bf g}$, for every $k$.
\end{enumerate}
\end{definition}

It is easy to prove that
$$
{\bf g} \ {\rm abelian} \ \Rightarrow
{\bf g} \ {\rm nilpotent} \ \Rightarrow
{\bf g} \ {\rm solvable} \ ,
$$
and that every non-Abelian simple Lie algebra
is semisimple.

\begin{definition}
Let ${\bf g}$ be a Lie algebra.
The \textbf{center} of ${\bf g}$ is the set
$$
\{ X \in {\bf g} \ | \ [X,Y]=0 \ ,  Y \in {\bf g} \} \ ,
$$
which is a commutative ideal of ${\bf g}$.
\end{definition}


\begin{definition}
A $p$-form $\beta \in {\mit\Omega}^p(G)$ is 
\textbf{left} (resp. \textbf{right}) \textbf{invariant}
if, for every $g \in G$,
$$
L_g^*\beta = \beta \quad (resp. \ R_g^*\beta = \beta ) \ .
$$
The set of the left-invariant $1$-forms  
is the dual vector space of $\vf_L(G) := {\bf g}$ 
and is denoted by ${\bf g}^*$.
\end{definition}

\begin{prop}
${\bf g}^*$ is isomorphic to $\Tan_e^*G$.
\end{prop}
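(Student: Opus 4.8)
The plan is to exhibit an explicit linear isomorphism between ${\bf g}^*$ and $\Tan_e^*G$ and check it is well-defined, linear, and bijective. Recall that in the excerpt ${\bf g}^*$ is defined as the set of left-invariant $1$-forms on $G$; this is the dual space of $\vf_L(G)={\bf g}$. The natural candidate is the evaluation-at-the-identity map: first I would define
\[
\begin{array}{cccc}
\varepsilon\colon & {\bf g}^* & \to & \Tan_e^*G \\
 & \beta & \mapsto & \beta_e := \beta(e) \ ,
\end{array}
\]
i.e.\ restrict a left-invariant $1$-form to the cotangent space at the neutral element $e\in G$. This is manifestly linear in $\beta$. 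The content of the statement is that $\varepsilon$ is a bijection, and for that I would construct an explicit inverse.

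Next I would define the inverse map $\delta\colon\Tan_e^*G\to{\bf g}^*$ by left-translating a covector at $e$ to every point of $G$: given $\xi_e\in\Tan_e^*G$, set, for $g\in G$ and $v\in\Tan_gG$,
\[
(\delta\xi_e)_g(v) := \xi_e\big(\Tan_g L_{g^{-1}}(v)\big) \ .
\]
I would check three things about $\delta\xi_e$. First, smoothness: the map $g\mapsto(\delta\xi_e)_g$ is smooth because the group multiplication (hence $L_{g^{-1}}$ and its tangent map) depends smoothly on $g$, exactly as in the analogous argument given in the excerpt for left-invariant vector fields via the map $\rho_2$. Second, left-invariance: for $h\in G$ one computes $(L_h^*(\delta\xi_e))_g(v)=(\delta\xi_e)_{hg}(\Tan_g L_h(v))=\xi_e(\Tan_{hg}L_{(hg)^{-1}}\circ\Tan_g L_h(v))=\xi_e(\Tan_g L_{g^{-1}}(v))=(\delta\xi_e)_g(v)$, using the chain rule together with $L_{(hg)^{-1}}\circ L_h = L_{g^{-1}}$; hence $L_h^*(\delta\xi_e)=\delta\xi_e$, so $\delta\xi_e\in{\bf g}^*$. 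Third, linearity of $\delta$ in $\xi_e$, which is immediate.

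Finally I would verify that $\varepsilon$ and $\delta$ are mutually inverse. One direction, $\varepsilon\circ\delta={\rm Id}_{\Tan_e^*G}$, follows since $(\delta\xi_e)_e(v)=\xi_e(\Tan_e L_{e^{-1}}(v))=\xi_e(v)$ because $L_e={\rm Id}_G$. For the other direction, $\delta\circ\varepsilon={\rm Id}_{{\bf g}^*}$, I would use left-invariance of $\beta\in{\bf g}^*$: for $v\in\Tan_gG$,
\[
(\delta(\beta_e))_g(v)=\beta_e\big(\Tan_g L_{g^{-1}}(v)\big)=(L_{g^{-1}}^*\beta)_g(v)=\beta_g(v) \ ,
\]
where the last equality is precisely the left-invariance condition $L_{g^{-1}}^*\beta=\beta$. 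Hence $\delta(\beta_e)=\beta$. This shows $\varepsilon$ is a linear isomorphism with inverse $\delta$. I do not anticipate a serious obstacle here; the only point requiring a little care is confirming smoothness of $\delta\xi_e$ as a differential form on $G$, which is handled exactly as the dual statement for vector fields already proved in the excerpt (Proposition on the isomorphism $\vf_L(G)\simeq\Tan_eG$). One could alternatively remark that $\varepsilon$ is dual to the isomorphism $\rho_1\colon\vf_L(G)\to\Tan_eG$ established earlier, and invoke that a linear isomorphism of finite-dimensional vector spaces induces one between the duals; I would mention this as a one-line alternative proof.
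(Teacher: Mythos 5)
Your proof is correct. The paper itself dispatches this proposition in one line ("It is obvious from the definition"), implicitly relying on the fact that ${\bf g}^*$ is by definition the dual of $\vf_L(G)={\bf g}$, which was already identified with $\Tan_eG$ via the isomorphism $\rho_1$; that is exactly the one-line alternative you mention at the end. Your main argument instead builds the isomorphism explicitly: evaluation at $e$ in one direction and left-translation of a covector, $(\delta\xi_e)_g(v)=\xi_e\bigl(\Tan_gL_{g^{-1}}(v)\bigr)$, in the other, with the checks of smoothness, left-invariance and mutual inversion carried out in full. What your route buys is a concrete description of the inverse (useful, for instance, when one later writes the Maurer--Cartan form or computes with a basis of left-invariant $1$-forms as in Proposition \ref{A30}); what the paper's route buys is brevity, since duality of finite-dimensional isomorphisms makes the statement immediate once $\vf_L(G)\simeq\Tan_eG$ is in hand. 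Both are sound, and your verification of left-invariance via $L_{(hg)^{-1}}\circ L_h=L_{g^{-1}}$ and of $\delta\circ\varepsilon={\rm Id}$ via $L_{g^{-1}}^*\beta=\beta$ is exactly the right bookkeeping.
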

\begin{proof}
It is obvious from the definition.
\\ \qed  \end{proof}

As a consequence, from a covector  $\beta_e \in \Tan_e^*G$, 
we can obtain another one at every point $g \in G$
doing $L_{g^{-1}}^*\beta_e := \beta_g$.
Then, the map
$$ 
\begin{array}{ccccc}
{\rm B}&\colon&G&\to&\Tan^*G
\\
& &g&\mapsto&(g,\beta_g )
\end{array} 
$$
is a left-invariant $1$-form.
Therefore, it is easy to prove the following properties:

\begin{prop}
\label{A30}
\begin{enumerate}
\item
For every $g \in G$, $\beta \in {\bf g}^*$,
and $X,Y \in {\bf g}$, we have that:
\begin{enumerate}
\item
$\d \beta$ is left-invariant.
\item
$\langle X_g,\beta_g\rangle = \langle X_e,\beta_e\rangle$;
that is,
$\langle X,\beta\rangle = ctn.$
\item
$\d \beta (X,Y) = -\langle[X,Y],\beta\rangle$.
\end{enumerate}
\item
If $\{ v_i \}$ is a basis of $\Tan_eG$ and
$\{ \gamma^i \}$ is the dual basis in $\Tan_e^*G$, then
\begin{enumerate}
\item
$\d \gamma^k (v_i,v_j) = -c_{ij}^l<v_l,\gamma^k> = - c_{ij}^k$
\item
The following equations holds
$$
\d \gamma^k = -\frac{1}{2} c^k_{ij}\gamma^i \wedge \gamma^j
$$
The numbers $c^k_{ij}\in\Real$ are called the
\textbf{Maurer--Cartan structure equations}
of the group $G$ (related to the chosen basis of ${\bf g}^*$).
\end{enumerate}
\end{enumerate}
\end{prop}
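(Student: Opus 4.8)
The plan is to prove the five assertions essentially in the order they are stated, since each one feeds into the next. The only real tools needed are: the naturality of the exterior differential ($\phi^*\circ\d=\d\circ\phi^*$ for any smooth map $\phi$), the intrinsic (Koszul) formula for the exterior derivative of a $1$-form, and the identifications ${\bf g}\simeq\vf_L(G)\simeq\Tan_eG$ and ${\bf g}^*\simeq\Tan_e^*G$ already established in the excerpt.

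First I would prove item 1(a). If $\beta\in{\bf g}^*$ is left-invariant, then $L_g^*\beta=\beta$ for every $g\in G$, hence $L_g^*(\d\beta)=\d(L_g^*\beta)=\d\beta$ by naturality of $\d$; so $\d\beta$ is left-invariant. Next, for 1(b), I would use that a left-invariant vector field satisfies $X_g=\Tan_eL_g(X_e)$ and that $L_g^*\beta=\beta$ is equivalent to $\beta_e=(\Tan_eL_g)^*\beta_g$; combining these,
$$
\langle X_g,\beta_g\rangle=\langle\Tan_eL_g(X_e),\beta_g\rangle=\langle X_e,(\Tan_eL_g)^*\beta_g\rangle=\langle X_e,\beta_e\rangle \ ,
$$
so $\langle X,\beta\rangle$ is a constant function on $G$. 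Then for 1(c) I would apply the Koszul formula $\d\beta(X,Y)=X(\beta(Y))-Y(\beta(X))-\beta([X,Y])$ to left-invariant $X,Y\in{\bf g}$ and left-invariant $\beta\in{\bf g}^*$; by 1(b) the functions $\beta(Y)=\langle Y,\beta\rangle$ and $\beta(X)=\langle X,\beta\rangle$ are constant, so the first two terms vanish and $\d\beta(X,Y)=-\beta([X,Y])=-\langle[X,Y],\beta\rangle$.

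For item 2(a), I would simply specialize 1(c): take $\beta=\gamma^k$, $X=v_i$, $Y=v_j$, interpreted as the associated left-invariant objects. Using $[v_i,v_j]=c_{ij}^l v_l$ and $\langle v_l,\gamma^k\rangle=\delta_l^k$, this gives $\d\gamma^k(v_i,v_j)=-\langle c_{ij}^l v_l,\gamma^k\rangle=-c_{ij}^k$. Finally, for the Maurer--Cartan equations 2(b): by 1(a) each $\d\gamma^k$ is a left-invariant $2$-form, and left-invariant $2$-forms are freely spanned (over $\Real$) by $\{\gamma^i\wedge\gamma^j\}_{i<j}$, so $\d\gamma^k=\sum_{i<j}a^k_{ij}\,\gamma^i\wedge\gamma^j$ for unique constants $a^k_{ij}$. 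Evaluating both sides on the pair $(v_i,v_j)$ and using 2(a) yields $a^k_{ij}=-c_{ij}^k$; rewriting the sum over all $i,j$ (using that $c_{ij}^k$ is antisymmetric in $i,j$, which follows from the antisymmetry of the Lie bracket) produces the factor $\tfrac12$ and gives $\d\gamma^k=-\tfrac12 c^k_{ij}\,\gamma^i\wedge\gamma^j$.

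None of the steps presents a genuine obstacle; the proof is essentially bookkeeping once the identifications are in place. The point requiring the most care is the correct handling of the identification between abstract elements of ${\bf g}$ (resp.\ ${\bf g}^*$) and the corresponding left-invariant vector fields (resp.\ forms) on $G$, so that statements like ``$\langle X,\beta\rangle=ctn.$'' and ``left-invariant $2$-forms are spanned by $\gamma^i\wedge\gamma^j$'' are used consistently; and, for 1(c), making sure one uses the version of the Koszul formula valid for arbitrary vector fields (not only coordinate ones), which is legitimate here since left-invariant vector fields are genuine global vector fields on $G$.
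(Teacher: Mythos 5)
Your proof is correct, and it is the standard argument: naturality of $\d$ for 1(a), left-invariance of $X$ and $\beta$ combined by duality for 1(b), the intrinsic formula $\d\beta(X,Y)=X(\beta(Y))-Y(\beta(X))-\beta([X,Y])$ for 1(c), and then specialization to the basis plus the antisymmetry $c^k_{ij}=-c^k_{ji}$ to get the factor $\tfrac12$ in 2(b). The paper itself offers no proof of this proposition (it is introduced with ``it is easy to prove''), so there is no alternative route to compare against; your write-up supplies exactly the missing bookkeeping, and your care about the identifications ${\bf g}\simeq\vf_L(G)\simeq\Tan_eG$, ${\bf g}^*\simeq\Tan_e^*G$ and about the convention for $\d$ of a $1$-form (no $\tfrac12$ factor, matching the wedge convention $\gamma^i\wedge\gamma^j(v_i,v_j)=1$) is precisely what is needed for the constants in 2(a) and 2(b) to come out consistently.
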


\begin{definition}
The \textbf{canonical} or  \textbf{Maurer--Cartan form}
of the Lie group $G$ is the $1$-form with values on ${\bf g}$,
$\omega \in {\mit\Omega}^1(G,{\bf g})$,
defined as follows: 
for every $g \in G$ and $X \in\vf(G)$,
$$
\omega_g(X_g) := \Tan L_g^{-1}X_g
$$
\end{definition}

Obviously, $\omega$ is left-invariant.

Observe that, in the definition, $X \in {\cal X}(G)$.
In particular, if $X \in {\bf g}$, 
(that is, $X$ is left-invariant) then we have that
$\Tan L_g^{-1}X_g = X_e$.
Then an alternative definition of $\omega$ is the following:

\begin{definition}
The \textbf{Maurer--Cartan form}
of the Lie group $G$ is the only left-invariant $1$-form
with values on ${\bf g}$,
$\omega \in {\mit\Omega}^1(G,{\bf g})$,
such that $\omega(X) = X$, for every $X \in {\bf g}$;
that is, it gives the identity one on ${\bf g}$.
\end{definition}

\begin{definition}
The \textbf{uniparametric subgroups} of $G$ 
are the integral curves passing through $e$
of the left (resp. right) invariant vector fields,
that is, the maps 
$$ 
\begin{array}{ccccc}
\alpha &\colon&\Real&\to&G
\\
& &t&\mapsto&g_t:=\alpha(t)
\end{array} 
$$
such that, if $X \in \vf_L(G)$, then
{\rm (i)} \ 
$\displaystyle X_{g_{t_0}} = \frac{\d}{\d t} \alpha (t)\Big\vert_{t_0}$ and \ 
{\rm (ii)}
$\displaystyle X_e = \frac{\d}{\d t} \alpha (t)\Big\vert_0$.
\end{definition}

These uniparametric subgroups are complete and,
there exists a bijective correspondence between
${\bf g}$ and the set of uniparametric subgroups of $G$.
The map which implements this correspondence is defined as follows:

\begin{definition}
The \textbf{exponential map}
is defined as,
$$ 
\begin{array}{ccccc}
{\rm exp}&\colon&\Tan_eG&\to&G
\\
& &tX_e&\mapsto&g_t
\end{array} 
$$
for $t \in \Real$, and such that
${\rm exp}(0)=e$; that is, if $\alpha$ is the uniparametric subgroup
corresponding to $\rho_2X_e$, then \ 
{\rm (i)} \ 
${\rm exp}(X_e) := \alpha(1)$ and \
{\rm (ii)}\ 
${\rm exp}(0) := e$.
\end{definition}

\begin{itemize}
\item
Notice that the exponential map maps
a point of the line $tX_e$ on $\Tan_eG$ into
a point of the integral curve of the vector field
$\rho_2X_e$ on $G$.
\item
Taking into account that $\Real$ is a Lie group
with respect to the sum and that
$\alpha \colon \Real \to G$ 
is a Lie group homomorphism, we have that
\beann
{\rm exp}(t_1X_e){\rm exp}(t_2X_e) 
&=& g_{t_1}g_{t_2} = \alpha(t_1)\alpha(t_2) = \alpha(t_1+t_2) 
\\
&=& g_{(t_1+t_2)} = {\rm exp}((t_1+t_2)X_e) \ .
\eeann
Furthermore,
$$
\frac{\d}{\d t} {\rm exp}(tX_e)\Big\vert_{t_0} =
\frac{\d}{\d t}\alpha(t)\Big\vert_{t_0} = X_{g_{t_0}} \ ,
$$
and, in particular,
$$
\frac{\d}{\d t} {\rm exp}(tX_e)\Big\vert_{t=0} = X_e \ .
$$
All these properties justify the name given to this map
\footnote{
Another justification is that, 
if $V$ is a vector space, and we consider
$G={\rm Aut}(V)$, then
${\rm exp} \colon {\rm End}(V) \to {\rm Aut}(V)$ and, for every $a \in {\rm End}(V)$, we obtain that
$$
{\rm exp}(a) =e^a := {\bf 1} + a + a^2/2! + a^3/3! + \ldots + a^n/n! + \ldots
$$
where ${\bf 1}$ is the identity and
$a^n := a \circ \ldots \circ a$ ($n$ times).}.
Observe that the last two equalities are natural
from the definition, since
$\alpha(t) = {\rm exp}(tX_e)$ 
is the integral curve of $X$. 
Hence, the flux of $X$ is
$$ 
\begin{array}{ccccc}
\tau&\colon&\Real \times G&\to&G
\\
& &(t,g)&\mapsto&g\,{\rm exp}(tX_e) \ ;
\end{array} 
$$
that is,
$\tau_t(g) = g\,{\rm exp}(tX_e) = R_{{\rm exp}(tX_e)}g$.
\item
The exponential map allows constructing locally the group $G$
from the algebra ${\bf g}$.
\end{itemize}

\end{appendix}


\bibliographystyle{abbrv}
\small
\addcontentsline{toc}{chapter}{References}
\label{Chap:References}

\begin{thebibliography}{999}

\bibitem{AM-78}
R. Abraham and J.E. Marsden,
\textit{Foundations of Mechanics}
($2nd$ ed.), Benjamin--Cummings, New York, 1978.
(ISBN-10: 080530102X).

\bibitem{AMR-88}
R. Abraham, J.E. Marsden, and T.S. Ratiu,
\textit{Manifolds, Tensor Analysis, and Applications}
($2nd$ ed.), {\sl American Math. Sci.} {\bf 75}, Springer, New York, 1988.
(\url{https://doi.org/10.1007/978-1-4612-1029-0}.

\bibitem{ABKLR-2012}
B. Aebischer, M. Borer, M. K\"alin, C. Leuenberger, and H.M. Reimann,
{\it Symplectic Geometry},
{\sl Progress in Math.} {\bf 124},
Birkh\"auser Basel, 2012.
(\url{https://doi.org/10.1007/978-3-0348-7512-7}).

\bibitem{albert}
{\rm C. Albert},
``Le th\'eor\`eme de r\'eduction de Marsden--Weinstein en g\'eometrie
cosymplectique et de contact'',
{\sl  J. Geom. Phys.} {\bf 6}(4)  (1989) 627--649.
(\url{https://doi.org/10.1016/0393-0440(89)90029-6}).

\bibitem{AA-78a}
V. Aldaya and J.A. de Azc\'arraga,
``Variational principles on $Rth$-order jets of fiber bundles in field theory'',
{\sl J. Math. Phys.} {\bf 19}(9) (1978) 1869--1875.
(\url{https://doi.org/10.1063/1.523904}).

\bibitem{AA-78}
V. Aldaya and J.A. de Azc\'arraga,
``Vector bundles, $rth$-order Noether invariants and canonical symmetries in Lagrangian field theory'',
{\sl J. Math. Phys.} {\bf 19}(9) (1978) 1876--1880.
(\url{https://doi.org/10.1063/1.523905}).

\bibitem{AA-80}
V. Aldaya and J.A. de Azc\'arraga,
``Geometric formulation of classical mechanics and field theory'',
{\sl Riv. Nuovo Cim.} {\bf 3} (1980) 1--66.
(\url{https://doi.org/10.1007/BF02906204}).

\bibitem{ACLMMP-23}
A. Anahory, L. Colombo, M. de Le\'on, J.C. Marrero, D. Mart\'in de Diego, and E. Padr\'on,
``Reduction by symmetries of contact mechanical systems on Lie groups'', 
arxiv.org:2306.07028 [math-ph] (2023).

\bibitem{ACGL-2018}
V. Apostolov, D. Calderbank, P. Gauduchon, and E. Legendre, 
``Toric contact geometry in arbitrary codimension'',
{\sl Int. Math. Res. Not. IMRN} {\bf 2020}(8) (2018) 2436--2467.
(\url{https://doi.org/10.1093/imrn/rny021}).

\bibitem{ACG-91}
{\rm J.M. Arms, R.H. Cushman, and M.J. Gotay},
``A universal reduction procedure for Hamiltonian group actions'', 
{\sl M.S.R.I. Series} {\bf 22},
T. Ratiu ed., Springer-Verlag, New York,
1991, 33--52.
(\url{https://doi.org/10.1007/978-1-4613-9725-0_4}).

\bibitem{Ar-89} 
V.I. Arnold, 
{\it Mathematical Methods of Classical Mechanics},
{\sl Graduate Texts in Math.}
 {\bf 60}. Springer-Verlag, New York, 1989.
(ISBN-10: 0387968903).

\bibitem{Arovas2014}
D. Arovas,
{\it Lecture Notes on Classical Mechanics},
Lecture notes for UCSD PHY 110A/B and PHY 200A,
Create Space Indep. Pub. Platform, 2014.
(ISBN: 978-1505568714).

\bibitem{ACI-gct}
M. Asorey, J.F. Cari\~nena, and L.A. Ibort,
``Generalized canonical transformations for time-dependent systems'',
{\sl J. Math. Phys.} {\bf 24}(12) (1983) 2745--2750.
(\url{https://doi.org/10.1063/1.525672}).

\bibitem{BMMP-10}
P. Balseiro, J.C. Marrero, D. Mart\'\i n de Diego, and E. Padr\'on, 
``A unified framework for mechanics: Hamilton--Jacobi equation and applications'',
 {\sl Nonlinearity} {\bf 23}(8) (2010) 1887--1918.
(\url{https://doi.org/10.1088/0951-7715/23/8/006}).

\bibitem{BH-2005}
G. Bande and A. Hadjar, 
``Contact pairs'', 
{\sl Tohoku Math. J.} {\bf 57}(2) (2005) 247--260.
(\url{https://doi.org/10.2748/tmj/1119888338}).

\bibitem{Ba86} 
A. Banyaga,
``On isomorphic classical diffeomorphism groups I'',
{\sl Proc. Am. Math. Soc.} \textbf{98}(1) (1986) 113--118.
(\url{https://doi.org/10.2307/2045779}).

\bibitem{Ba88} 
A. Banyaga,
``On isomorphic classical diffeomorphism groups II'',
{\sl J. Diff. Geom.} \textbf{28}(1) (1988) 23--35.
(\url{https://10.4310/jdg/1214442158}).

\bibitem{Ba97}
A. Banyaga,
``The structure of classical diffeomorphism groups'',
in {\sl Mathematics and its Applications}, \textbf{400}, 113--118, Kluwer Acad. Pub. Group, Dordrecht, 1997.
(\url{https://doi.org/10.1007/978-1-4757-6800-8}).

\bibitem{Ba-99}
A. Banyaga, 
{\it The Geometry surrounding the Arnold-Liouville Theorem},
in  {\sl Advances in Geometry. Progress in Mathematics} {\bf 172}, 
J.L. Brylinski, R. Brylinski, V. Nistor, B. Tsygan, and P. Xu (eds);
Birkh\"auser, Boston, MA, 1999. (\url{https://doi.org/10.1007/978-1-4612-1770-1_3}).

\bibitem{BHD-2016}
A.~Banyaga and D.F. Houenou,
{\it A Brief Introduction to Symplectic and Contact Manifolds},
{\sl Nankai Tracts in Math.} {\bf 15},
World Scientific, 2016.
(\url{https://doi.org/10.1142/9667}).

\bibitem{BLM-12}
M. Barbero-Li\~n\'an, M. de Le\'on, and D. Mart\'\i n de Diego,
``Lagrangian submanifolds and Hamilton--Jacobi equation'',
 {\sl Monatshefte f\"ur Mathematik}  {\bf 171}(3-4) (2013) 269--290.
(\url{https://doi.org/10.3934/jgm.2012.4.207}).

\bibitem{BLMMM-12}
M. Barbero-Li\~n\'an, M. de Le\'on, D. Mart\'\i n de Diego, J.C.
Marrero, and M.C. Mu\~noz-Lecanda, 
``Kinematic reduction and the Hamilton--Jacobi equation'',
 {\sl J. Geom. Mech.} {\bf 4}(3) (2012) 207--237.
(\url{https://doi.org/10.3934/jgm.2012.4.207}).

\bibitem{BDM-12}
M. Barbero-Li\~{n}\'an, M. Delgado-T\'ellez, and D. Mart\'{\i}n de Diego,
``A geometric framework for discrete Hamilton--Jacobi equation'',
{\sl Procs. XX Int. Fall Workshop on Geom. Phys.} (Madrid),
\textsl{AIP Conference Procs.} \textbf{1460} (2012) 164--168.
(\url{https://doi.org/10.1063/1.4733374}).

\bibitem{BEMMR-2007}
{\rm M. Barbero-Li\~n\'an, A. Echeverr\'\i a-Enr\'\i quez,
D. Mart\' \i n de Diego, M.C. Mu\~noz-Lecanda, and N. Rom\'an--Roy},
``Skinner--Rusk unified formalism for optimal control problems and applications'',
{\sl J. Phys. A: Math. Theor.} {\bf 40}(40) (2007) 12071-12093.
(\url{https://doi.org/10.1088/1751-8113/40/40/005}).

\bibitem{BEMMR-2008}
M. Barbero-Li\~n\'an, A. Echeverr\'\i a-Enr\'\i quez, D. Mart\'\i n de Diego, M.C. Mu\~noz-Lecanda,  and N. Rom\'an-Roy,
``Unified formalism for non-autonomous mechanical systems'',
{\sl J. Math. Phys.} \textbf{49}(6) (2008) 062902.
(\url{https://doi.org/10.1063/1.2929668}).

\bibitem{BFS-14}
L.M. Bates,  F. Fass\`o, and N. Sansonetto,
``The Hamilton--Jacobi equation, integrability, and nonholonomic systems'',
\textsl{J. Geom. Mech.} \textbf{6}(4) (2014) 441--449.
(\url{https://doi.org/10.3934/jgm.2014.6.441}).

\bibitem{BS-93}
L. Bates and J. Sniatycki,
``Nonholonomic reduction'',
{\sl Rep. Math. Phys.} {\bf 32}(1) (1993) 99--115.
(\url{https://doi.org/10.1016/0034-4877(93)90073-N}).

 \bibitem{BGPR-86}
C. Batlle, J. Gomis, J.M. Pons, and N. Rom\'an-Roy,
 ``Equivalence between the Lagrangian and Hamiltonian formalism
  for constrained systems'',
 {\sl J. Math. Phys.} {\bf 27}(12) (1986) 2953-2962.
(\url{https://doi.org/10.1063/1.527274}).

 \bibitem{BGPR-87}
C. Batlle, J. Gomis, J.M. Pons, and N. Rom\'an-Roy,
 ``Lagrangian and Hamiltonian constraints'',
 {\sl Lett. Math. Phys.} {\bf 13}(1) (1987) 17-23.
(\url{https://doi.org/10.1007/BF00570763}).

\bibitem{BT-80}
S. Benenti and W.M. Tulczyjew, 
``The geometrical meaning and globalization of the the Hamilton--Jacobi method'',
 in  {\sl Differential Geometric Methods in Mathematical Physics}, 9--21.
(Proc. Conf. Aix-en-Provence/Salamanca, 1979), Lect. Notes in Math.
 \textbf{836}, Springer, Berlin, (1980).
\newblock (\url{https://doi.org/10.1007/BFb0089724}).

\bibitem{BVS-2000}
G. Blankenstein and A.J. Van der Schaft,
``Optimal control and implicit Hamiltonian systems'',
{\sl Nonlinear Control in the Year 2000}, Eds. A. Isidori, F. Lamnabhi-Lagarrigue,
W. Respondek. Springer--Verlag, LNCIS 258 (2000) 185--206.
 (\url{https://doi.org/10.1007/BFb0110216}).

\bibitem{Bl-73}
R.J. Blattner, ``Quantization and representation theory'', 
{\sl Proc. Symp. Pure Math.} {\bf 26} (1973) 147-165.

\bibitem{Bl-81}
D. Bleeker, 
{\it Gauge Theory and Variational Principles}, 
Addison-Wesley, Reading (Ma), 1981.
(ISBN: 978-0486445465).

\bibitem{Bli} 
G.A. Bliss, 
{\it Lectures on the Calculus of Variations},
Univ. Chicago Press, Chicago (1980).
(ISBN: 9781114511354).

\bibitem{BC-99}
A. Bloch and P. Crouch,
``Optimal control, optimization and analytical mechanics'', 
in {\sl Mathematical Control Theory}, 
L. Bailieul and J.C. Willems Eds., Springer-Verlag (1999) 265-321.
(\url{https://doi.org/10.1007/978-1-4612-1416-8_8}).

\bibitem{BKMM-1996}
A.M. Bloch, P.S. Krishnaprasad, J.E. Marsden, and R.M. Murray,
``Nonholonomic mechanical systems with symmetry'',
{\sl Arch. Rational Mech. Anal.} {\bf 136}(1) (1996) 21--99.
(\url{https://doi.org/10.1007/BF02199365}).

\bibitem{Bo-96}
P. Bolle, 
``Une condition de contact pour les sous-vari\'et\'es coısotropes d’une vari\'et\'e symplectique'',
{\sl C. R. Math. Acad. Sci. S\'er. 1} {\bf 1} (1996) 83--86.

\bibitem{Bravetti2017}
A.~Bravetti,
``Contact Hamiltonian dynamics: The concept and its use'',
{\sl Entropy} {\bf 19}(10) (2017) 535--546.
(\url{https://doi.org/10.3390/e19100535}).

\bibitem{Bravetti-2019}
A. Bravetti,
``Contact geometry and thermodynamics'',
{\sl Int. J. Geom. Methods Mod. Phys.}  {\bf 16}(supp01) (2019) 1940003.
(\url{https://doi.org/10.1142/S0219887819400036}).

\bibitem{BCT-2017}
A.~Bravetti, H.~Cruz, and D.~Tapias,
``Contact Hamiltonian mechanics''.
{\sl Ann. Phys.} {\bf 376} (2017) 17--39.
(\url{https://doi.org/10.1016/j.aop.2016.11.003}).

\bibitem{BLMP-2020}
A. Bravetti, M. de Le\'on, J.C. Marrero, and E. Padr\'on,
``Invariant measures for contact Hamiltonian systems: symplectic sandwiches with contact bread'',
{\sl J. Phys. A: Math. Gen.} {\bf 53}(45)
(2020) 455205.
(\url{https://doi.org/10.1088/1751-8121/abbaaa}).

\bibitem{BGG-2015}
A.J. Bruce, K. Grabowska, and J. Grabowski,
``Higher order mechanics on graded bundles'',
{\sl J. Phys. A: Math. Theor.} {\bf 48}(20) (2015) 205203.
(\url{https://doi.org/10.1088/1751-8113/48/20/205203}).

\bibitem{BGG-2017}
A.J. {Bruce}, K.~{Grabowska}, and J.~{Grabowski},
``Remarks on contact and Jacobi geometry'',
{\sl Symm. Integ. Geom. Meth. Appl. (SIGMA)}, 
{\bf 13} (2017) 059.
(\url{https://doi.org/10.3842/SIGMA.2017.059}).

\bibitem{book:Bryantetal}
R. Bryant, S. Chern, H. Gardner, P. Griffith, and H. Goldschmidt,
\textit{Exterior Differential Systems}, 
{\sl MSRI Book Series} {\bf 18}, Springer, New York, 1991.
(\url{https://doi.org/10.1007/978-1-4613-9714-4}).

\bibitem{BULEW-2005}
F. Bullo and A.D. Lewis,
{\it Geometric Control of Mechanical Systems},
Springer-Verlag, New York-Heidelberg-Berlin, 2004.
(\url{https://doi.org/10.1007/978-1-4899-7276-7}).

\bibitem{CC-2005}
O. Calin and D.H. Chang,
{\it Geometric Mechanics on Riemannian Manifolds},
Birkh\"auser, Boston. 2005.
(\url{https://doi.org/10.1007/b138771}).

\bibitem{Ca-96}
M.G. Calkin, {\it Lagrangian and Hamiltonian Mechanics}, World Scientific,
Singapore (1996). 
(\url{https://doi.org/https://doi.org/10.1142/3111}).

\bibitem{CLMV-14}
C.M. Campos, M. de Le\'{o}n, D. Mart\'{\i}n de Diego, and M. Vaquero,
``Hamilton--Jacobi theory in Cauchy data space'',
{\sl Rep. Math. Phys.} {\bf 76}(3) (2015) 359-387.
(\url{https://doi.org/10.1016/S0034-4877(15)30038-0}).

\bibitem{CLMV-09}
C.M. Campos, M. de Le\'on, D. Mart\'in de Diego, and J. Vankerschaver, 
``Unambiguous formalism for higher order Lagrangian field theories''. 
{\sl J. Phys. A: Math. Theor.} {\bf 42}(47) (2009) 475207.
(\url{https://doi.org/10.1088/1751-8113/42/47/475207}).

\bibitem{CdS-2008}
A. Cannas da Silva,
{\it Lectures on Symplectic Geometry},
 Springer--Verlag, Heildelberg. 2008.
(\url{https://doi.org/10.1007/978-3-540-45330-7}).

\bibitem{CCCI-86}
{\rm F. Cantrijn, J.F. Cari\~nena, M. Crampin, and  L.A. Ibort},
``Reduction of degenerate Lagrangian systems'',
{\sl J. Geom. Phys.} {\bf 3}(3) (1986) 353--400.
(\url{https://doi.org/10.1016/0393-0440(86)90014-8}).

\bibitem{CMC-2002}
{\rm  F. Cantrijn, J. Cort\'es, and S. Mart\'\i nez}, 
``Skinner-Rusk approach to time-dependent mechanics'', 
{\sl Phys. Lett. A} {\bf 300}(2--3) (2002) 250-258.
(\url{https://doi.org/10.1016/S0375-9601(02)00777-6}).

\bibitem{proc:Cantrijn_Crampin_Sarlet86}
F.~{Cantrijn}, M.~{Crampin}, and W.~{Sarlet}, 
``Higher-order differential equations and higher-order {Lagrangian} mechanics'', 
\textsl{Math. Proc. Cambridge Phil. Soc.} \textbf{99}(3)  (1986) 565--587.
(\url{https://doi.org/10.1017/S0305004100064501}).

\bibitem{CLL-92}
F. Cantrijn, M. de Le\'on,  and E.A. Lacomba, 
``Gradient vector fields on cosymplectic manifolds'',
 {\sl J. Phys. A: Math. Gen.} {\bf 25}(1) (1992) 175-188.
(\url{https://doi.org/10.1088/0305-4470/25/1/022}).

\bibitem{CLMM-98}
F. Cantrijn, M. de Le\'on, J.C. Marrero, and D. Mart\'\i n de Diego,
``Reduction of nonholonomic mechanical systems with symmetries'',
{\sl Rep. Math. Phys.} {\bf 42}(1-2) (1998) 25--45.
(\url{https://doi.org/10.1016/S0034-4877(98)80003-7}).

\bibitem{CV-2007}
F. Cantrijn and J. Vankerschaver,
``The Skinner--Rusk approach for vakonomic and nonholonomic field theories'',
in {\sl Diff. Geom. Meth. in Mech. and Field Theor.} 1--14, 
Acad. Press, 2007.

\bibitem{dN-2013}
B. Cappelletti-Montano, A. De Nicola, and I. Yudin,
``A survey on cosymplectic geometry'',
{\sl Rev. Math. Phys.} {\bf 25}(10) (2013) 1343002.
(\url{https://doi.org/10.1142/S0129055X13430022}).

\bibitem{CMSVZ-76}
G. Carat\`u, G. Marmo, A. Simoni, B. Vitale, and F. Zaccaria,
``Lagrangian and Hamiltonian formalisms: An analysis of classical mechanics on tangent and cotangent bundles'',
{\sl Nuov Cim B} {\bf 31} (1976) 152--172. (\url{https://doi.org/10.1007/BF02730325}).

\bibitem{CMPP-2023}
R. Cardona, E. Miranda, D. Peralta-Salas, and F. Presas,
``Universality of Euler flows and flexibility of Reeb embeddings'',
{\sl Adv. in Math.} {\bf 428}(1) (2023) 109142.
(\url{https://doi.org/10.1016/j.aim.2023.109142})

\bibitem{Ca-90}
J.F. Cari\~nena,
``Theory of singular Lagrangians'',
{\sl  Fortschr. Phys.} {\bf 38}(9) (1990) 641--679.
(\url{https://doi.org/10.1002/prop.2190380902}).

\bibitem{CCM-2007}
J.F. Cari\~nena, J. Clemente-Gallardo, and G. Marmo,
``Reduction procedures in classical and quantum mechanics'',
{\sl Int. J. Geom. Meth. Mod. Phys.} {\bf 4}(8)  (2007) 1363--1403.
(\url{https://doi.org/10.1142/S0219887807002594}).

\bibitem{CF-93}
J.F. Cari\~nena and J. Fern\'andez-Nu\~nez,
``Geometric theory of time-dependent singular Lagrangians'',
{\sl Fortschr. Phys.} {\bf 41}(6) (1993) 517-552.
(\url{https://doi.org/10.1002/prop.2190410603}).

\bibitem{CGIR-85}
J.F. Cari\~nena, J. Gomis, L.A. Ibort, and N. Rom\'an-Roy,
``Canonical transformations theory for presymplectic systems'',
{\sl J. Math. Phys.} {\bf 26}(8) (1985) 1961-1969.
(\url{https://doi.org/10.1063/1.526864}).

\bibitem{CGIR-87}
J.F. Cari\~nena, J. Gomis, L.A. Ibort, and N. Rom\'an-Roy,
``Applications of the canonical transformations theory for presymplectic systems'',
{\sl Nuovo Cim. B} {\bf 98}(2) (1987) 172-196.
(\url{https://doi.org/10.1007/BF02721479}).

\bibitem{CGMMR-06}
J.F. Cari\~nena, X. Gr\`acia, G. Marmo, E. Mart\'{\i}nez, M.C. Mu\~noz-Lecanda, 
and N. Rom\'an-Roy,
``Geometric Hamilton--Jacobi theory'',
{\sl Int. J. Geom. Meth. Mod. Phys.}~{\bf 3}(7) (2006) 1417--1458.
(\url{https://doi.org/10.1142/S0219887806001764}).

\bibitem{HJTeam2}
J.F. Cari\~nena, X. Gr\`acia, G. Marmo, E. Mart\'{\i}nez, M.C. Mu\~noz-Lecanda, 
and N. Rom\'an-Roy,
``Geometric Hamilton--Jacobi theory for nonholonomic dynamical systems'',
 {\sl Int. J. Geom. Meth. Mod. Phys.}~{\bf 7}(3) (2010) 431--454.
(\url{https://doi.org/10.1142/S0219887810004385}).

\bibitem{HJTeam3}
J.F. Cari\~nena, X. Gr\`acia, G. Marmo, E. Mart\'{\i}nez, M.C. Mu\~noz-Lecanda, 
and N. Rom\'an-Roy,
``Structural aspects of Hamilton--Jacobi theory'',
 {\sl Int. J. Geom. Meth. Mod. Phys.} {\bf 13}(2) (2016) 1650017.
(\url{https://doi.org/10.1142/S0219887816500171}).

\bibitem{CG-2019}
J.F.~Cari{\~{n}}ena and P.~Guha,
\newblock ``Nonstandard Hamiltonian structures of the Li\'enard equation
and contact geometry'',
\newblock {\sl Int. J. Geom. Meth. Mod. Phys.} {\bf 16}(supp 01) (2019) 1940001.
(\url{https://doi.org/10.1142/S0219887819400012}).

\bibitem{CIMM-2015}
J.F. Cari\~nena, L.A. Ibort, G. Marmo, and G. Morandi, 
{\it Geometry from dynamics, Classical and Quantum},
Springer, Dordrecht, 2015.
(\url{https://doi.org/10.1007/978-94-017-9220-2}).

\bibitem{CL-87}
J.F. Cari\~nena and C. L\'opez, 
``The time evolution operator for singular Lagrangians'',
 {\sl Lett. Math. Phys.} {\bf 14}(3) (1987) 203-210.
(\url{https://doi.org/10.1007/BF00416849}).

\bibitem{art:Carinena_Lopez92}
J.F. {Cari\~{n}ena} and C.~{L\'{o}pez}, ``The time-evolution operator for
  higher-order singular {Lagrangians}'',
 \textsl{Int. J. Mod. Phys. A}
  \textbf{7}(11)  (1992) 2447--2468.
(\url{https://doi.org/10.1142/S0217751X92001083}).

\bibitem{CLR-87}
J.F. Cari\~nena, C. L\'opez, and N. Rom\'an-Roy,
``Geometric study of the connection between the Lagrangian and
  Hamiltonian constraints'',
{\sl J. Geom. Phys.} {\bf 4}(3) (1987) 315-334.
(\url{https://doi.org/10.1016/0393-0440(87)90017-9}).

\bibitem{CLR-88}
J.F. Cari\~nena, C. L\'opez, and N. Rom\'an-Roy,
``Origin of the Lagrangian constraints and their relation with the Hamiltonian formulation'',
{\sl J. Math. Phys.} {\bf 29}(5) (1988) 1143-1149.
(\url{https://doi.org/10.1063/1.527955}).

\bibitem{CMR-2002}
J.F. Cari\~nena, G. Marmo, and M.F. Ra\~nada,
``Non-symplectic symmetries and bi-Hamiltonian structures of the rational harmonic oscillator'',
{\sl J. Phys. A: Math. Gen.} \textbf{35}(47) L679 (2002).
\url{https://doi.org/10.1088/0305-4470/35/47/101}.

\bibitem{CM-2023}
J.F. Cari\~nena and M.C. Mu\~noz-Lecanda,
``Geodesic and Newtonian Vector Fields and Symmetries of Mechanical Systems'',
Symmetry {\bf 2023}, 15, 181 (2023).
(\url{https://doi.org/10.3390/sym15010181}).

\bibitem{CNS-2005}
J.F. Cari\~nena, J. Nunes da Costa, and P. Santos,
``Reduction of  Lie algebroid structures'',
 {\sl Int. J. Geom. Meth. Mod. Phys.} {\bf 2}(5) (2005) 965--991.
(\url{https://doi.org/10.1142/S0219887805000909}).

\bibitem{CNS-2007}
J.F. Cari\~nena, J. Nunes da Costa, and P. Santos,
``Reduction of Lagrangian mechanics on Lie algebroids'',
{\sl J. Geom. Phys.} {\bf 57}(3) (2007) 977--990.
(\url{https://doi.org/10.1016/j.geomphys.2006.08.001}).

\bibitem{CNS-2007b}
J.F. Cari\~nena, J. Nunes da Costa, and P. Santos,
``Quasi-coordinates from the point of view of Lie algebroid structures'',
{\sl J. Geom. Phys.} {\bf 40}(33) (2007) 10031.
(\url{https://doi.org/10.1088/1751-8113/40/33/008}).

\bibitem{Ca-22}
E. Cartan, {\it Le\c{c}ons sur les Invariants Integraux}, Hermann. Paris, 1922.

\bibitem{CGR-2001}
M. Castrill\'{o}n-L\'{o}pez, P.L. Garc\'{\i}a-P\'{e}rez, and T.S. Ratiu,
    ``Euler--Poincar\'{e} reduction on principal bundles'',  
{\sl Lett. Math. Phys.}  {\bf 58}(2)  (2001) 167--180.
(\url{https://doi.org/10.1023/A:1013303320765}).

\bibitem{CGR-2007}
M. Castrill\'{o}n-L\'{o}pez, P.L. Garc\'{\i}a-P\'{e}rez, and C. Rodrigo,
    ``Euler--Poincar\'{e} reduction in principal fibre bundles and the problem of Lagrange'',
    {\sl Diff. Geom. Appl.} {\bf 25}(6) (2007) 585--593.
(\url{https://doi.org/10.1016/j.difgeo.2007.06.007}).

\bibitem{CL-2010}
M. Castrill\'on-L\'opez and B. Langerock,
 ``Routh reduction for singular Lagrangians'',
 {\sl Int. J. Geom. Meth. Mod. Phys.} {\bf 7}(8) (2010) 1451--1489.
(\url{https://doi.org/10.1142/S0219887810004907}).

 \bibitem{CM-2003}
M. Castrill\'on-L\'opez and J.E. Marsden,
``Some remarks on Lagrangian and Poisson reduction for field theories''.
{\sl J. Geom. Phys.} {\bf 48}(1) (2003) 52--83.
(\url{https://doi.org/10.1016/S0393-0440(03)00025-1}).

\bibitem{CR-2003}
M. Castrill\'on-L\'opez and T. Ratiu,
``Reduction in principal bundles: covariant Lagrange--Poincar\'{e} equations'',
{\sl Comm. Math. Phys.} { \bf 236}(2) (2003) 223--250.
(\url{https://doi.org/10.1007/s00220-003-0797-5}).

\bibitem{CRS-2000}
{\rm M. Castrill\'on-L\'opez, T. Ratiu, and S. Shkoller},
``Reduction in principal fiber bundles: Covariant Euler--Poincar\'{e} equations'',
{\sl Proc. Amer. Math. Soc.} {\bf 128}(7) (2000) 2155--2164.
(\url{https://doi.org/10.1090/S0002-9939-99-05304-6}).

\bibitem{CMR-2001}
H. Cendra, J.E. Marsden, and T.S. Ratiu,
{\it Lagrangian Reduction by Stages},
{\sl Mem. Amer. Math. Soc.} {\bf 152}(722), 2001.
(\url{https://doi.org/10.1090/memo/0722}).

\bibitem{Ch-2003}
{\rm G. Chavchanidze},
``Non-Noether symmetries and their influence on phase space geometry''.
{\sl  J. Geom. Phys.} {\bf 48}(2-3) (2003) 190--202.
(\url{https://doi.org/10.1016/S0393-0440(03)00040-8}).

\bibitem{Ch-2005}
{\rm G. Chavchanidze},
``Non-Noether symmetries in Hamiltonian dynamical systems''.
{\sl Mem. Diff. Eqs. Math. Phys.} {\bf 36} (2005) 81--134.

\bibitem{CLM-91}
D. Chinea, M. de Le\'on, and J.C. Marrero,
``Locally conformal cosymplectic manifolds and time-dependent Hamiltonian systems'',
{\sl  Comment. Math. Univ. Carolin.} {\bf 32}(2) (1991) 383--387.
(ISSN: 0010-2628).

\bibitem{CLM-94}
D. Chinea, M. de Le\'on, and J.C. Marrero,
``The constraint algorithm for time-dependent Lagrangians'',
{\sl J. Math. Phys.} {\bf 35}(7) (1994) 3410-3447.
(\url{https://doi.org/10.1063/1.530476}).

\bibitem{CDW-87}
Y. Choquet-Bruhat, C. Dewitt-Morette, and M. Dillard-Bleick, 
{\it Analysis, Manifolds and Physics} ($2nd$ ed.). North-Holland Publishing Co., Amsterdam-New York, 1982.
(ISBN: 0-444-86017-7).

\bibitem{Cl-2018}
D. Cline,
{\it Variational Principles in Classical Mechanics} ($2nd$ ed.),
Univ. Rochester, Rochester, NY, 2018.
(ISBN: 978-0-9988372-7-7).

\bibitem{CIAGLIA2018}
F.~Ciaglia, H.~Cruz, and G.~Marmo,
``Contact manifolds and dissipation, classical and quantum''.
{\sl Ann. Phys.} {\bf 398} (2018) 159--179.
(\url{https://doi.org/10.1016/j.aop.2018.09.012}).

\bibitem{CMZ-10}
L. Colombo, D. Mart\'\i n de Diego, and M. Zuccalli,
``Optimal control of underactuated mechanical systems: a geometric approach'',
{\sl J. Math. Phys.} \textbf{51}(8) (2010) 083519.
(\url{https://doi.org/10.1063/1.3456158}).

\bibitem{CLRP-13}
L. Colombo, M. de Le\'on, P.D. Prieto-Mart\'\i nez, and N. Rom\'an-Roy,
``Geometric Hamilton--Jacobi theory for higher-order autonomous systems'',
{\sl J. Phys. A: Math. Gen.} {\bf 47}(3) (2014) 235203.
(\url{https://doi.org/10.1088/1751-8113/47/23/235203}).

\bibitem{CLRP-14}
L. Colombo, M. de Le\'on, P.D. Prieto-Mart\'\i nez, and N. Rom\'an-Roy,
``Unified formalism for the generalized $k$th-order Hamilton--Jacobi problem'',
{\sl Int. J. Geom. Meth. Mod. Phys.} {\bf 11}(9) (2014) 1460037.
(\url{https://doi.org/10.1142/S0219887814600378}).

\bibitem{Con2001}
L. Conlon,
{\it Differentiable Manifolds},
Birkh\"auser, Boston, 2001.
(ISBN 978-0-8176-4766-7).
(\url{https://doi.org/10.1007/978-0-8176-4767-4}).

\bibitem{art:Constantelos84}
G.C. Constantelos,
``On the Hamilton--Jacobi Theory with derivatives of higher order'',
\textsl{Nuovo Cim. B (11)} \textbf{84}(1) (1984) 91--101.
(\url{https://doi.org/10.1007/BF02721650}).

\bibitem{Co-99}
E. Corinaldesi,
{\it Classical Mechanics for Physics Graduate Students}, World Scientific Pub. Comp., Singapore, 1999.
(\url{https://doi.org/10.1142/3926}).

\bibitem{CLMM-2002}
{\rm J. Cort\'es, M. de Le\'on, D. Mart\'\i n de Diego, and S.
Mart\'\i nez}, 
``Geometric description of vakonomic and
nonholonomic dynamics. Comparison of solutions''. 
{\sl SIAM J. Control Opt.} {\bf 41}(5) (2002) 1389--1412.
(\url{https://doi.org/10.1137/S036301290036817X}).

\bibitem{CMC-02}
J. Cort\'es, S. Mart\'inez, and F. Cantrijn,
``Skinner-Rusk approach to time-dependent mechanics'',
{\sl  Phys. Lett. A} {\bf 300}(2-3) (2002) 250--258. 
(\url{https://doi.org/10.1016/S0375-9601(02)00777-6}).

\bibitem{Cr-77}
M. Crampin, 
``Constants of the motion in Lagrangian mechanics'',
{\sl  Int. J. Theor. Phys.} {\bf 16}(10) (1977) 741–754.
(\url{https://doi.org/10.1007/BF01807231}).

\bibitem{Cr-81}
M. Crampin, 
``On the differential geometry of the Euler--Lagrange equations and the inverse problem of Lagrangian dynamics'',
{\sl  J. Phys. A: Math. Gen.} {\bf 14}(10) (1981) 2567--2575.
(\url{https://doi.org/10.1088/0305-4470/14/10/012}).

\bibitem{Cr-83}
M. Crampin, 
``A note on non-Noether constants of motion'', 
{\sl Phys. Lett. A} {\bf 95}(5) (1983) 209-212.
(\url{https://doi.org/10.1016/0375-9601(83)90605-9}).

\bibitem{Cr-83b}
M. Crampin,
 ``Tangent bundle geometry for Lagrangian dynamics'',
{\sl J. Phys. A: Math. Gen.} {\bf 16} (1983) 3755--3772.
(\url{https://doi.org/10.1088/0305-4470/16/16/014}).

\bibitem{Cr-95}
M. Crampin,
{\it Jet Bundle Techniques in Analytical Mechanics},
{\sl Quaderni del Consiglio Nazionale delle Ricerche},
Gruppo Nazionale de Fisica Matematica, Florence,
{\bf 47} 1995.

\bibitem{CM-2009}
M. Crampin and T. Mestdag,
``Reduction and reconstruction aspects of second-order dynamical systems with symmetry'',
{\sl Acta Appl. Math.} {\bf 105} (2009) 241--266. (\url{https://doi.org/10.1007/s10440-008-9274-7}).

\bibitem{CP-adg}
M. Crampin and F.A.E. Pirani,
{\it Applicable Differential Geometry},
{\sl LMS Lecture Notes Ser.} {\bf 59},
Cambridge Univ. Press, 1986.
(\url{https://doi.org/10.1017/CBO9780511623905}).

\bibitem{CPT-hctd}
M. Crampin, G.E. Prince, and G. Thompson,
``A geometrical version of the Helmholtz conditions in time-dependent Lagrangian dynamics'',
{\sl J. Phys. A: Math. Gen.\/}
{\bf 17}(7) (1984) 1437-1447.
(\url{https://doi.org/10.1088/0305-4470/17/7/011}).

\bibitem{Darboux}
G. Darboux, ``Sur le probl\`eme de Pfaff'',
{\sl Bull. Sci. Math. Astr.} (Ser. 2) {\bf 6}(1) (1882) 14--36.
 
\bibitem{Almeida-2018}
U.N. de Almeida,
{\it Contact Anosov Actions with Smooth Invariant Bundles}, Ph D. thesis,
Universidade de S{\~{a}}o Paulo, 2018.
(\url{https://doi.org/10.11606/T.55.2018.tde-01112018-110622}).

\bibitem{LGGMR-2023}
M. de Le\'on, J. Gaset, X. Gr\`acia, M.C. Mu\~noz-Lecanda, and X. Rivas,
``Time-dependent contact mechanics'',
{\sl Monatsh. Math.} {\bf 201} (2023) 1149--1183. 
(\url{https://doi.org/10.1007/s00605-022-01767-1}).

\bibitem{LGLMR-2021}
M. de Le\'on, J. Gaset, M. La\'inz, M.C. Mu\~noz-Lecanda, and N. Rom\'an-Roy,
``Higher-order contact mechanics''. 
{\sl Ann. Phys.} {\bf 425} (2021) 168396.
(\url{https://doi.org/10.1016/j.aop.2021.168396}).

\bibitem{LGMMR-2020}
M.~de~Le{\'{o}}n, J. Gaset, M.~La\'inz-Valc{\'{a}}zar, X. Rivas, and N. Rom\'an-Roy,
``Unified Lagrangian-Hamiltonian formalism for contact systems'',
{\sl Fortsch. Phys.} {\bf 68}(8) (2020) 2000045.
(\url{https://doi.org/10.1002/prop.202000045}).

\bibitem{LGMRR-2022}
M.~de~Le{\'{o}}n, J. Gaset, M.C. Mu\~noz-Lecanda, X. Rivas, and N. Rom\'an-Roy,
``Multicontact formulation for non-conservative field theories'',
\newblock {\sl J. Phys. A: Math. Theor.} {\bf 56}(2) (2023) 025201.
(\url{https://doi.org/10.1088/1751-8121/acb575}).

\bibitem{LJL-2021}
M.~de~Le{\'{o}}n, V.M. Jim\'enez, and M.~La\'inz,
``Contact Hamiltonian and Lagrangian systems with nonholonomic constraints'',
{\sl J. Geom. Mech.} {\bf 13}(1) (2021) 25--53. 
(\url{https://doi.org/10.3934/jgm.2021001}).

\bibitem{DeLeon2019}
M.~de~Le{\'{o}}n and M.~La\'inz-Valc{\'{a}}zar,
``Singular Lagrangians and precontact Hamiltonian systems''.
\newblock {\sl Int. J. Geom. Meth. Mod. Phys.} {\bf 16}(10) (2019) 1950158.
(\url{https://doi.org/10.1142/S0219887819501585}).

\bibitem{DeLeon2019b}
M.~de~Le{\'{o}}n and M.~La\'inz-Valc{\'{a}}zar,
``Infinitesimal symmetries in contact Hamiltonian systems'',
\newblock {\em J. Geom. Phys.} {\bf 153} (2020) 153651.
(\url{https://doi.org/10.1016/j.geomphys.2020.103651}).

\bibitem{LLLR-2023}
M.~de~Le{\'{o}}n, M.~La\'inz-Valc{\'{a}}zar,
A. L\'opez-Gord\'on, and X. Rivas,
``'Hamilton--Jacobi theory and integrability for autonomous and non-autonomous contact systems',
{\sl J. Geom.Phys.} {\bf 187} (2023) 104787.
(\url{https://doi.org/10.1016/j.geomphys.2023.104787}).

\bibitem{dLLM-2021}
M.~de~Le{\'{o}}n, M.~La\'inz-Valc{\'{a}}zar,
and A. Mu\~niz-Brea,
``The Hamilton–Jacobi theory for contact Hamiltonian systems'',
\newblock {\sl Mathematics} {\bf 2021} {\bf 16}(9) (2021) 1993. 
(\url{https://doi.org/10.3390/math9161993}).

\bibitem{LLM-2020}
M.~de~Le{\'{o}}n, M.~La\'inz-Valc{\'{a}}zar, and M.C. Mu\~noz-Lecanda,
``Optimal control, contact dynamics and Herglotz variational problem'',
{\sl J Nonlinear Sci.} {\bf 33} (2023) 9. (\url{https://doi.org/10.1007/s00332-022-09861-2}).

\bibitem{LLMR-2021}
M.~de~Le{\'{o}}n, M.~La\'inz-Valc{\'{a}}zar, M.C. Mu\~noz-Lecanda, and N. Rom\'an-Roy,
``Constrained Lagrangian dissipative contact dynamics'',
{\sl J. Math. Phys.} {\bf 62}(12) (2021) 122902.
(\url{https://doi.org/10.1063/5.0071236}).

\bibitem{LMM-96b}
M. de Le\'on, J. Mar\'\i n-Solano, and J.C. Marrero,
 ``The constraint algorithm in the jet formalism'', {\sl Diff. Geom.
Appl.} {\bf 6}(3) (1996) 275-300.
(\url{https://doi.org/10.1016/0926-2245(96)82423-5}).

\bibitem{LMMMR-02}
M. de Le\'on, J. Mar\'\i n-Solano, J.C. Marrero, M.C. Mu\~noz-Lecanda, and N. Rom\'an-Roy,
``Singular Lagrangian systems on jet bundles'',
{\sl Fortsch. Phys.} {\bf 50}(2) (2002) 105-169.
(\url{https://doi.org/10.1002/1521-3978(200203)50:2<105::AID-PROP105>3.0.CO;2-N}).

 \bibitem{LMM-96}
M. de Le\'on, J.C. Marrero, and D. Mart\'in de Diego,
``Time-dependent constrained Hamiltonian systems and Dirac brackets'',
{\sl J. Phys. A: Math. Gen.} {\bf 29}(21) (1996) 6843-6859.
(\url{https://doi.org/10.1088/0305-4470/29/21/016}).

\bibitem{LMM-2003}
M. de Le\'on, J.C. Marrero, and D. Mart\'\i n de Diego,
``A new geometrical setting for classical field theories'', 
{\sl Classical and Quantum Integrability}. 
Banach Center Pub. {\bf 59},
Inst. of Math., Polish Acad. Sci., Warsawa (2003) 189-209.
(\url{https://doi.org/10.4064/bc59-0-10}).

\bibitem{LMM}
M. de Le\'on, J.C. Marrero, and D. Mart\'{\i}n de Diego,
``Vakonomic mechanics versus nonholonomic mechanics: a unified geometric approach'',
{\sl J. Geom. Phys.} {\bf 35}(2-3) (2000) 126--144 
(\url{http://doi.org/10.1016/S0393-0440(00)00004-8}).

\bibitem{LMM-09}
M. de Le\'on, J.C. Marrero, and D. Mart\'\i n de Diego,
 ``A geometric Hamilton--Jacobi theory for classical field theories'',
{\sl Variations, geometry and physics}, 129--140, 
Nova Sci. Pub.,  New York, 2009.
\newblock (\url{http://hdl.handle.net/10261/4169}).

\bibitem{leones2}
M. de Le\'on, J.C. Marrero, and D. Mart\'\i n de Diego, 
``Linear almost Poisson structures and Hamilton--Jacobi equation.
Applications to nonholonomic mechanics'', 
{\sl J. Geom. Mech.} {\bf 2}(2) (2010) 159--198.
(\url{https://doi.org/10.3934/jgm.2010.2.159}).

\bibitem{LMM-2005}
M. de Le\'on, J.C. Marrero, and E. Mart\'inez,
``Lagrangian submanifolds and dynamics on Lie algebroids'',
{\sl J. Phys. A: Math. Gen.} {\bf 38}(24) (2005) R241.
(\url{https://doi.org/10.1088/0305-4470/38/24/R01}).


\bibitem{LM-96}
M. de Le\'on and D. Mart\'in de Diego, 
``On the geometry of nonholonomic Lagrangian systems'',
{\sl J. Math. Phys.} {\bf 37}(7) (1996)  3389--3414.
(\url{https://doi.org/10.1063/1.531571}).

\bibitem{LMMSV-12}
 M. de Le\'on, D. Mart\'\i n de Diego, J.C. Marrero, M. Salgado, and S. Vilari\~{n}o,
``Hamilton--Jacobi theory in $k$-symplectic field theories'',
 {\sl Int. J. Geom. Meth. Mod. Phys.} {\bf 7}(8) (2010) 1491--1507.
(\url{https://doi.org/10.1142/S0219887810004919}).

\bibitem{LMS-2004}
{\rm M. de Le\'on, D. Mart\'\i n de Diego, and A. Santamar\'\i
a-Merino}, 
``Symmetries in classical field theories'', 
{\sl Int. J. Geom. Meth. Mod. Phys.} {\bf 1}(5) (2004) 651--710.
(\url{https://doi.org/10.1142/S0219887804000290}).

\bibitem{LMV-12}
M. de Le\'on, D. Mart\'\i n de Diego, and M. Vaquero,
``A Hamilton--Jacobi theory for singular Lagrangian systems in the Skinner and Rusk setting'',
{\sl Int. J. Geom. Meth. Mod. Phys.} {\bf 9}(8) (2012) 1250074. 
(\url{https://doi.org/10.1142/S0219887812500740}).

\bibitem{LMV-13}
{\rm M. de Le\'on, D. Mart\'\i n de Diego, and M. Vaquero},
``A Hamilton--Jacobi theory on Poisson manifolds'',
{\sl J. Geom. Mech.} {\bf 6}(1) (2014) 121--140.
(\url{https://doi.org/10.3934/jgm.2014.6.121}).

\bibitem{LMR-92}
M. de Le\'on, M.H. Mello, and P.R. Rodrigues,
``Reduction of degenerate non-autonomous Lagrangian systems'',
{\sl Mathematical Aspects of Classical Field Theory},
Contemp. Math. {\bf 132}, 275--305,
Eds. M.J. Gotay, J.E. Marsden, V. Moncrief,
American Math. Society, Seattle, Washington. 1992.
(\url{http://dx.doi.org/10.1090/conm/132}).

\bibitem{LPRV-2017}
M. de Le\'on, P.D. Prieto-Mart\'inez, N. Rom\'an-Roy, and S. Vilari\~no,
``Hamilton--Jacobi theory in multisymplectic classical field theories'',
{\sl J. Math. Phys.} {\bf 58}(9) (2017) 092901.
(\url{https://doi.org/10.1063/1.5004260}).

\bibitem{dLe89}
M. de Le\'on and P.R. Rodrigues,
{\it Methods of Differential Geometry in Analytical Mechanics},
North-Holland Publishing Co.,
{\sl North-Holland Math. Studies} {\bf 158},
 Amsterdam, 1989.
(ISBN: 0444558276).

\bibitem{LR-os}
M. de Le\'on and P.R. Rodrigues,
{\it Generalized Classical Mechanics and Field Theory},
{\sl North-Holland Math. Studies} {\bf 112},
Elsevier, Amsterdam, 1985.
(ISBN: 0080872239, 9780080872230).

\bibitem{LS-93}
M. de Le\'on and M. Saralegi,
``Cosymplectic reduction for singular momentum maps'',
{\sl J. Phys. A: Math. Gen.} {\bf 26}(19) (1993) 1--11.
(\url{https://doi.org/10.1088/0305-4470/26/19/032}).

\bibitem{DeLeon2016b}
M.~de~Le{\'{o}}n and C.~Sard{\'{o}}n,
``Cosymplectic and contact structures to resolve time-dependent and dissipative Hamiltonian systems'',
{\sl J. Phys. A: Math. Theor.} {\bf 50}(25) (2017) 255205.
(\url{https://doi.org/10.1088/1751-8121/aa711d}).

\bibitem{DeLeon_Vilarino}
{\rm M. de Le\'on and S. Vilari\~{n}o},
``Hamilton--Jacobi theory in $k$-cosymplectic field theories'',
\textsl{Int. J. Geom. Meth. Mod. Phys.} \textbf{11}(1) (2014) 1450007.
(\url{https://doi.org/10.1142/S0219887814500078}).

\bibitem{dLGRR-2023}
J. de Lucas, X.~Gr\`acia, X. Rivas, and N. Rom\'an-Roy,
``On Darboux theorems for geometric structures induced by closed forms'',
\textsl{Rev. Real Acad. Cienc. Exactas Fis. Nat. Ser. A-Mat} {\bf 118}. 131 (2024). 
(\url{https://doi.org/10.1007/s13398-024-01632-w}).

\bibitem{LGRRV-2022}
J. de Lucas, X. Gr\`acia, X. Rivas, N. Rom\'an-Roy, and S. Vilari\~no,
``Reduction and reconstruction of multisymplectic Lie systems'',
{\sl J. Phys. A: Math. Theor.} {\bf 55}(29) (2022) 295204.
(\url{https://doi.org/10.1088/1751-8121/ac78ab}).

\bibitem{LR-2022}
J. de Lucas and X. Rivas, 
``Contact Lie systems: theory and applications'', 
{\sl J. Phys. A: Math. Theor.} {\bf 56}(33) (2023) 335203.
(\url{https://doi.org/10.1088/1751-8121/ace0e7}).

\bibitem{LRVZ-2023}
J. de Lucas, X. Rivas, S. Vilari\~no, and  B.M. Zawora,
``On $k$-polycosymplectic Marsden--Weinstein reductions'',
{\sl J. Geom. Phys.} {\bf 191} (2023), 104899.
(\url{https://doi.org/10.1016/j.geomphys.2023.104899}).

\bibitem{NT-2006}
A. de Nicola and W.M. Tulckzyjew,
``A variational formulation of analytical mechanics in an affine space'',
{\sl Rep. Math. Phys.} {\bf 58}(3) (2006) 335--350.
(\url{https://doi.org/10.1016/S0034-4877(07)00004-3}).

\bibitem{De-2010}
A. Deriglazov,
{\it Classical Mechanics Hamiltonian and Lagrangian Formalism},
Springer-Verlag, Berlin Heidelberg, 2010.
(\url{https://doi.org/10.1007/978-3-642-14037-2}).

\bibitem{Dir-64}
P.A.M. Dirac,
{\it Lectures on Quantum Mechanics},
Belfer Graduate School of Science, New York, Yeshiva University,
1964.
(ISBN-10: 0486417131).

\bibitem{DG-80}
D. Dominici and J. Gomis,
``Poincar\'e–Cartan integral invariant and canonical transformations for singular Lagrangians'',
{\sl J. Math. Phys.} {\bf 21}(8) (1980) 2124--2130.
(\url{https://doi.org/10.1063/1.524721}).

\bibitem{Gomis1}
D. Dominici, J. Gomis, G. Longhi, and J.M. Pons,
``Hamilton--Jacobi theory for constrained systems'',
\textsl{J. Math. Phys.} \textbf{25}(8) (1984) 2439--2460.
(\url{https://doi.org/10.1063/1.526452}).

\bibitem{EIMR-2012}
A. Echeverr\'ia-Enr\'iquez, A. Ibort, M.C. Mu\~noz-Lecanda, and N. Rom\'an-Roy, 
``Invariant forms and automorphisms of locally homogeneous multisymplectic manifolds'',
{\sl J. Geom. Mech.} {\bf 4}(4) (2012) 397-419.
(\url{https://doi.org/10.3934/jgm.2012.4.397}).

\bibitem{ELMMR-04}
A. Echeverr\'\i a-Enr\'\i quez, C. L\'opez, J. Mar\'in--Solano, M.C. Mu\~noz-Lecanda, and N. Rom\'an--Roy,
``Lagrangian-Hamiltonian unified formalism for field theory'',
{\sl J. Math. Phys.} {\bf 45}(1) (2004) 360--385.
(\url{https://doi.org/10.1063/1.1628384}).

\bibitem{EMR-gstds}
A. Echeverr\'\i a-Enr\'\i quez, M.C. Mu\~noz-Lecanda, and N. Rom\'an-Roy,
``Geometrical setting of time-dependent regular systems. Alternative models'',
{\sl Rev. Math. Phys.} {\bf 3}(3) (1991) 301--330.
(\url{https://doi.org/10.1142/S0129055X91000114}).

\bibitem{EMR-sdtc}
A. Echeverr\'\i a-Enr\'\i quez, M.C. Mu\~noz-Lecanda, and N. Rom\'an-Roy,
 ``Non-standard connections in classical mechanics'',
{\sl J. Phys. A: Math. Gen.}, {\bf 28}(19) (1995) 5553--5567.
(\url{https://doi.org/10.1088/0305-4470/28/19/011}).

\bibitem{EMR-99}
A. Echeverr\'\i a-Enr\'\i quez, M.C. Mu\~noz-Lecanda, and N. Rom\'an-Roy,
``Reduction of presymplectic manifolds with symmetry'',
{\sl Rev. Math. Phys.} {\bf 11}(10) (1999) 1209--1248.
(\url{https://doi.org/10.1142/S0129055X99000386}).


\bibitem{mfgq-98}
A. Echeverr\'ia-Enr\'iquez, M.C. Mu\~noz-Lecanda, N. Rom\'an-Roy, and C. Victoria,
``Mathematical foundations of geometric quantization'',
{\sl Extracta Math.} {\bf 13}(2) (1998) 135--238

\bibitem{EMR-2018}
A. Echeverr\'\i a-Enr\'\i quez, M.C. Mu\~noz-Lecanda, and N. Rom\'an-Roy,
``Remarks on multisymplectic reduction'',
{\sl Rep. Math. Phys.} {\bf 81}(3) (2018) 415--424.
(\url{https://doi.org/10.1016/S0034-4877(18)30057-0}).

\bibitem{Els} 
L. Elsgolts, 
{\it Differential Equations and the Calculus of Variations}, 
Mir Pubs., Moscow, 1971.
(ISBN-10: 0714703710).

\bibitem{EMS-2004}
G. Esposito, G. Marmo, and G. Sudarshan,
\emph{From Classical to Quantum Mechanics},
Cambridge Univ. Press, Cambridge, 2004.
(\url{https://doi.org/10.1017/CBO9780511610929}).

\bibitem{FLS-2005}
R.P. Feynman, R.B. Leighton, and M. Sands, {\it Lectures on Physics} Vol. 1,
 ($2nd$ edition), California Inst. Tech.,
 Basic Books, New York, 2010.
 (ISBN: 978-0-465-02414-8).

\bibitem{Fi2022}
D. Finamore,
``Contact foliations and generalised Weinstein conjectures'', 
arxiv.org:2202.07622 [math.SG] (2022).

\bibitem{Galley-2013}
C.R. Galley,
``Classical mechanics of nonconservative systems'',
{\sl Phys. Rev. Lett.}, {\bf 110}(17)
(2013) 174301.
(\url{https://doi.org/10.1103/PhysRevLett.110.174301}).

\bibitem{Ga-52}
T. Gallissot,
``Les formes ext\'erieures en M\'ecanique'',
{\sl Ann. Inst. Fourier Grenoble} {\bf 4} (1952) 145-297.
(\url{https://doi.org/10.5802/aif.49}).

\bibitem{Ga-70}
F. Gantmacher,
{\it Lectures in Analytical Mechanics},
Mir, Moscow, 1970.
(ISBN-10: 0846405512).

\bibitem{Ga-83}
{P.L. Garc\'ia},
``Geometric quantization'',
{\sl Mem. Real Acad. Cienc. Exact. Fis. Nat.} {\bf 11}, Madrid (1979).

\bibitem{GM-86}
{P.L. Garc\'ia and J. Mu\~noz-Masqu\'e},
``Higher order analytical dynamics'',
{\sl Dynamical systems and partial differential equations}, Caracas 1984, 19–-47.
Univ. Simon Bolivar, Equinoccio, Caracas, 1986.
(ISBN:980-237011-8).

\bibitem{GGMRR-2019}
J.~{Gaset}, X.~{Gr\`acia}, M.C.~{Mu\~noz-Lecanda}, X.~{Rivas}, and N.~{Rom\'an-Roy},
``A contact geometry framework for field theories with dissipation'',
\newblock {\sl Ann. Phys.} {\bf 414} (2020) 168092.
(\url{https://doi.org/10.1016/j.aop.2020.168092}).

\bibitem{GGMRR-2019b}
J.~{Gaset}, X.~{Gr\`acia}, M.C.~{Mu\~noz-Lecanda}, X.~{Rivas}, and N.~{Rom\'an-Roy},
``New contributions to the Hamiltonian and Lagrangian contact formalisms for dissipative mechanical systems and their symmetries'',
{\sl Int. J. Geom. Meth. Mod. Phys.} {\bf 17}(6) (2020) 2050090.
(\url{https://doi.org/10.1142/S0219887820500905}).

\bibitem{GGMRR-2020}
J.~{Gaset}, X.~{Gr\`acia}, M.C.~{Mu\~noz-Lecanda}, X.~{Rivas}, and N.~{Rom\'an-Roy},
``A $k$-contact Lagrangian formulation for nonconservative field theories'',
{\sl Rep. Math. Phys.} {\bf 87}(3) (2021) 347--368.
(\url{https://doi.org/10.1016/S0034-4877(21)00041-0}).

\bibitem{GLMR-2022}
J. Gaset, M. La\'inz, A. Mas, and X. Rivas,
``The Herglotz variational principle for dissipative field theories'', 
arxiv.org:2211.17058 [math-ph] (2022).

\bibitem{GLR-2023}
J. Gaset, A. L\'opez-Gord\'on, A. Mas, and X. Rivas,
``Symmetries, conservation and dissipation in time-dependent contact systems'', {\sl Fortschr. Phys.} {\bf 71} (2023) 2300048.
(\url{https://doi.org/10.1002/prop.202300048}).

\bibitem{GF-63}
I.M. Gelfand and S.V. Fomin, 
{\it Calculus of Variations},
Prentice-Hall, Englewood Cliffs., N.J., 1963.

\bibitem{Geiges-2008}
H.~Geiges,
{\em An Introduction to Contact Topology}.
Cambridge University Press, Cambridge, 2008.
(\url{https://doi.org/10.1017/CBO9780511611438}).

\bibitem{GeGu2002}
B.~Georgieva and R.~Guenther,
``First {N}oether-type theorem for the generalized variational
  principle of {H}erglotz'',
{\sl Topol. Methods Nonlinear Anal.} {\bf 20}(2) (2002) 261--273.
(\url{https://doi.org/10.12775/TMNA.2002.036}).

\bibitem{GB-2019}
A. Ghosh and C. Bhamidipati,
``Contact geometry and thermodynamics of black holes in AdS spacetimes'',
{\sl Phys. Rev. D} {\bf 100} (2019) 126020.
(\url{https://doi.org/10.1103/PhysRevD.100.126020}).

\bibitem{Gi-hss}
R. Giachetti,
``Hamiltonian systems with symmetry: An introduction'',
{\sl Riv. Nuovo Cim.} {\bf 4}(12) (1981) 1--63.
(\url{https://doi.org/10.1007/BF02740644}).

\bibitem{Go-69}
C. Godbillon,
{\it G\'eom\'etrie Diff\'erentielle et M\'ecanique Analytique},
Hermann, Paris 1969.
(ISBN 10: 2705656588).

\bibitem{GN-2014}
L. Godinho and J. Nat\'ario,
{\it An Introduction to Riemannian Geometry
(With Applications to Mechanics and Relativity)},
Springer, 2014.
(\url{https://doi.org/10.1007/978-3-319-08666-8}).

\bibitem{GPS-01}
H. Goldstein, C.P. Poole, and J.L. Safko,
{\it Classical Mechanics} ($3rd$ ed.),
Addison-Wesley Press, Inc., Cambridge, Mass., 2001.
(ISBN 978-0-201-65702-9).

\bibitem{GLR-84}
J. Gomis, J. Llosa, and N. Rom\'an-Roy,
``Lee Hwa Chung theorem for presymplectic manifolds.
  Canonical transformations for constrained systems'',
{\sl J. Math. Phys.} {\bf 25}(5) (1984) 1348-1355.
(\url{https://doi.org/10.1063/1.526303}).

\bibitem{Gos-2006}
M. Gosson,
{\it Symplectic Geometry and Quantum Mechanics},
Birkhäuser, Basel, 2009. 
\url{https://doi.org/10.1007/3-7643-7575-2}.

  \bibitem{Gimmsy}
M. Gotay, J. Isenberg, and J. E. Marsden,
 ``Momentum maps and classical relativistic fields, Part I: Covariant field theory'',
arxiv.org: [2004] physics/9801019. ``Part II: Canonical analysis
of Field Theories'', arxiv.org: [2004] math-ph/0411032.

\bibitem{GN-79}
M.J. Gotay and J.M. Nester,
``Presymplectic Lagrangiany systems I: the constraint algorithm and the
equivalence problem'',
{\sl Ann. Inst. H. Poincar\'e A} {\bf 30} (1979) 129-142.
(\url{https:://www.numdam.org/item/AIHPA_1979_30_2_129_0/})

\bibitem{GN-80}
M.J. Gotay, and J.M. Nester, ``Presymplectic Lagrangian systems II: the second order equation problem'', 
{\sl Ann. Inst. H. Poincar\'e A} {\bf 32} (1980) 1-13.
(\url{https://www.numdam.org/item/AIHPA_1980_32_1_1_0/}).

\bibitem{GNH-78}
M.J. Gotay, J.M. Nester, and G. Hinds,
``Presymplectic manifolds and the Dirac-Bergmann theory of constraints'',
{\sl J. Math. Phys.} {\bf 27} (11) (1978) 2388-2399.
(\url{https://doi.org/10.1063/1.523597}).

\bibitem{Goto-2016}
S. Goto,
``Contact geometric descriptions of vector fields on dually flat spaces
  and their applications in electric circuit models and nonequilibrium
  statistical mechanics'',
{\sl J. Math. Phys.} {\bf 57}(10) (2016) 102702.
(\url{https://doi.org/10.1063/1.4964751}).

\bibitem{GG-2006}
K. Grabowska and J. Grabowski,
``Dirac algebroids in Lagrangian and Hamiltonian mechanics'',
{\sl J. Geom. Phys.}
{\bf 61}(11) (2011) 2233--2253.
(\url{https://doi.org/10.1016/j.geomphys.2011.06.018}).

\bibitem{GG-2022}
K. Grabowska and J. Grabowski,
``A geometric approach to contact Hamiltonians and contact Hamilton--Jacobi theory'',
{\sl J. Phys. A: Math. Theor.} {\bf 55}(43) (2022) 435204
(\url{https://doi.org/10.1016/10.1088/1751-8121/ac9adb}).

\bibitem{GG-2023}
K. Grabowska and J. Grabowski,
``Reductions: precontact versus presymplectic'',
{\sl Annali di Matematica} \textbf{202}  (2023) 2803–2839. 
(\url{https://doi.org/10.1007/s10231-023-01341-y}).

\bibitem{GG-2022b}
K. Grabowska and J. Grabowski,
``Contact geometric mechanics: the Tulczyjew triples'',
arxiv.org:2209.03154 [math.SG] (2022).

\bibitem{GUG-2006}
K. Grabowska, J. Grabowski, and P. Urbanski,
``Geometrical mechanics on algebroids'',
{\sl Int. J. Geom. Meth. Mod. Phys.} {\bf 3}(3) (2006) 559--575.
(\url{https://doi.org/10.1142/S0219887806001259}).

\bibitem{GLMV-94}
J. Grabowski, G. Landi, G. Marmo, and G. Vilasi,
``Generalized reduction procedure: Symplectic and Poisson formalism'',
{\sl Fortschr. Phys.} {\bf 42}(5) (1994) 393-427.
(\url{https://doi.org/10.1002/prop.2190420502}).

\bibitem{GM-05}
{\rm X. Gr\`acia and R. Mart\'\i n}, 
``Geometric aspects of time-dependent singular differential equations'', 
{\sl Int. J. Geom. Meth. Mod. Phys.} {\bf 2}(4) (2005) 597-618.
(\url{https://doi.org/10.1142/S0219887805000697}).

\bibitem{GP-89}
X. Gr\`acia and J.M. Pons,
 ``On an evolution operator connecting Lagrangian and Hamiltonian formalisms'', 
{\sl Lett. Math. Phys.} {\bf 17} (1989) 175-180.
(\url{https://doi.org/10.1007/BF00401582}).

\bibitem{GP-92}
X. Gr\`acia, and J.M. Pons,
``A generalized geometric framework for constrained systems'',
{\sl Diff. Geom. Appl.} {\bf 2} (1992) 223-247.
(\url{https://doi.org/10.1016/0926-2245(92)90012-C}).

\bibitem{GPR-higher}
X. Gr\`acia, J.M. Pons, and N. Rom\'an-Roy,
``Higher order Lagrangian systems: geometric structures, dynamics and constraints'',
{\sl J. Math. Phys.} {\bf 32}(10) (1991) 2744-2763.
(\url{https://doi.org/10.1063/1.529066}).

\bibitem{art:Gracia_Pons_Roman92}
X.~{Gr\`{a}cia}, J.M. {Pons}, and N.~{Rom\'{a}n-Roy}, 
``Higher-order conditions for singular {Lagrangian} systems'', 
\textsl{J. Phys. A: Math. Gen.}
  \textbf{25}(7) (1992) 1981--2004.
(\url{https://doi.org/10.1088/0305-4470/25/7/037}).

\bibitem{GHV-72}
W. Greub, S. Halpering, and S. Vanstone, 
{\it Connections, curvature and cohomology}, 
Acad. Press, New York, 1972.
(ISBN: 9780080873602).

\bibitem{Grif-72a}
J. Grifone,
``Structure presque-tangente et conexions I'',
{\sl Ann Inst. Fourier Grenoble} {\bf 22}(1) (1972) 287-334.
(\url{https://doi.org/10.5802/aif.407}).

\bibitem{Grif-72b}
J. Grifone,
``Structure presque-tangente et conexions II'',
{\sl Ann Inst. Fourier Grenoble} {\bf 22}(3) (1972) 291-338.
(\url{https://doi.org/10.5802/aif.431}).

\bibitem{GMP-2021}
S. Grillo, J.C. Marrero, and E. Padr\'on,
``Extended Hamilton–Jacobi theory, symmetries and integrability by quadratures'', 
{\sl Mathematics} {\bf 2021} {\bf 9}(12), 1357.  (\url{https://doi.org/10.3390/math9121357}).

\bibitem{GP-2016}
S. Grillo and E. Padr\'on,
``A Hamilton--Jacobi Theory for general dynamical systems and integrability by quadratures in symplectic and Poisson manifolds'',
{\sl J. Geom. Phys.} {\bf 10}(1) (2016) 101-129.
(\url{https://doi.org/10.1016/j.geomphys.2016.07.010}).

\bibitem{GP-2020}
S. Grillo and E. Padr\'on,
``Extended Hamilton--Jacobi theory, contact manifolds, and integrability by quadratures'',
{\sl J. Math. Phys.} {\bf 61}(1) (2020) 012901.
(\url{https://doi.org/10.1063/1.5133153}).

\bibitem{GR-2023}
A. Guerra IV and N. Rom\'an-Roy,
``More insights into symmetries in multisymplectic field theories'',
{\sl Symmetry} {\bf 2023}, 15(2) 390. (\url{https://doi.org/10.3390/sym15020390}).

\bibitem{GR-2024}
A.~Guerra IV and N.~Rom\'an-Roy,
``Canonical lifts in multisymplectic De Donder--Weyl Hamiltonian field theories'',
arxiv.org:2402.07847 [math-ph] (2024).
\url{https://doi.org/10.48550/arXiv.2402.07847}

\bibitem{GGK-2002}
V. Guillemin, V. Ginzburg, and Y. Karshon,
{\it Moment maps, Cobordisms, and Hamiltonian Group Actions},
{\sl Mathematical Surveys and Monographs Ser.} {\bf 98},
American Math. Soc., 2002.
(ISBN: 0821805029).

\bibitem{GS-77}
V. Guillemin and S. Sternberg,
{\it Geometric Asymptotics},
{\sl Mathematical Surveys and Monographs} {\bf 14},
American Math. Soc., 1977.
(ISBN: 978-0-8218-1633-2).

\bibitem{GS-90}
V.~Guillemin and S.~Sternberg,
\emph{Symplectic Techniques in Physics},
Cambridge Univ. Press, Cambridge (1990).
(ISBN: 9780521389907).

\bibitem{GIMP-2025}
I. Guti\'errez-Sagredo, D. Iglesias-Ponte, J.C. Marrero, and E. Padr\'on,
``Mechanical presymplectic structures and Marsden-Weinstein reduction of
time-dependent Hamiltonian systems'', 
{\sl J. Geom. Phys.} {\bf 213} (2025) 105492
(\url{https://doi.org/10.1016/j.geomphys.2025.105492}).

\bibitem{HL-dstd}
H. Hamoui and A. Lichnerowicz,
``Geometry of the dynamical systems with time-dependent constraints and time-dependent Hamiltonian: an approach towards quantization'',
{\sl J. Math. Phys.} {\bf 25}(4) (1984) 923--934.
(\url{https://doi.org/10.1063/1.526247}).

\bibitem{HRT-76}
A.J. Hanson, T. Regge, and C. Teitelboim,
{\it Constrained Hamiltonian systems},
Acad. Nazionale dei Lincei, Rome 1976.
(\url{https://hdl.handle.net/2022/3108}).

\bibitem{He-1930}
G. Herglotz, 
``Ber\"uhrungstransformationen'', 
Lectures at the University of Gottingen, 1930.

\bibitem{Her-1985}
G. Herglotz, 
{\it Vorlesungen \"uber die Mechanik der Kontinua}, 
{\sl Teubner-Archiv zur Mathematik} {\bf 3},
Teubner, Leipzig, 1985.

\bibitem{HW-2018}
G. Herczeg and A. Waldron,
``Contact geometry and quantum mechanics'',
{\sl Phys. Lett. B} {\bf 781} 1(2018) 312--315.
(\url{https://doi.org/10.1016/j.physletb.2018.04.008}).

\bibitem{He-68}
R. Hermann,
{\it Differential Geometry and the Calculus of Variations},
{\sl Math. in Sci. and Engineering} {\bf 49}
Academic Press, New York, 1968.
(eBook ISBN: 9780080955575).

\bibitem{Holm-2008}
D.D. Holm.
\textit{Geometric Mechanics. Part I. Dynamics and Symmetry}. 
Imperial College Press, London; distributed by World Scientific Publishing Co. Pte. Ltd., Hackensack, NJ, 2008.
(\url{https://doi.org/10.1142/p801 }).

\bibitem{Ib-96}
L.A. Ibort,
``The geometry of dynamics'',
{\sl Extracta Math.} {\bf 11}(1) (1996) 80--105.

\bibitem{ILDM-99}
L.A. Ibort, M. de Le\'on, J.C. Marrero, and D. Mart\'in de Diego,
``Dirac brackets in constrained dynamics'',
{\sl Forschr. Phys.}{\bf 47} 5 (1999) 459-492.
(\url{https://doi.org/10.1002/(SICI)1521-3978(199906)47:5<459::AID-PROP459>3.0.CO;2-E}).

\bibitem{IM-92}
L.A. Ibort and J. Mar\'\i n-Solano,
``A geometric classification of Lagrangian functions and the reduction of the evolution space'',
{\sl J. Phys. A: Math. Gen.} {\bf 25}(11) (1992) 3353--3367.
(\url{https://doi.org/10.1088/0305-4470/25/11/036}).

\bibitem{leones1}
{\rm D. Iglesias-Ponte, M. de Le\'on, and D. Mart\'\i n de Diego},
``Towards a Hamilton--Jacobi theory for nonholonomic mechanical systems'',
{\sl J.~Phys.~A: Math. Theor.} {\bf 41}(1) (2008) 015205. 
(\url{https://doi.org/10.1088/1751-8113/41/1/015205}).

\bibitem{JS-98}
J.V. Jos\'e and E.J. Saletan,
{\it Classical Dynamics. A Contemporary Approach},
Cambridge Univ. Press, Cambridge, 1998.
(ISBN: 0-521-63636-1).

\bibitem{Ka-82}
K. Kamimura,
``Singular Lagrangians and constrained Hamiltonian systems,
generalized canonical formalism'',
{\sl Nuovo Cim. B} {\bf 69} (1982) 33-54.
(\url{https://doi.org/10.1007/BF02888859}).

\bibitem{KA-2013}
A.~L. Kholodenko,
 {\it Applications of Contact Geometry and Topology in Physics},
 World Scientific, 2013.
(\url{https://doi.org/10.1142/8514}).

\bibitem{KB-2004}
T. Kibble and F.H. Berkshire,
{\it Classical Mechanics} ($5th$ ed.),
World Scientific Pub. Comp., London, 2004.
(\url{https://doi.org/10.1142/p310}).

\bibitem{Ki-gq}
A.A. Kirillov,
``Geometric quantization'',
in the {\sl Encyclopaedia of Mathematical Sciences
{\bf 4}: Dynamical Systems},
Springer, Berlin (1985) 137-172.
(\url{https://doi.org/10.1007/978-3-662-06793-2_2}).

\bibitem{klein}
J. Klein,
``Espaces variationelles et m\'{e}canique'',
 \textsl{Ann. Inst. Fourier} {\bf 12} (1962) 1--124.
(\url{https://www.numdam.org/item?id=AIF_1962_12_1_0}).

\bibitem{KN-96}
S. Kobayashi and K. Nomizu,
{\it Foundations of differential geometry} (Vol. 1),
Wiley, New York, 1996 ($1st$ ed. 1963).
(ISBN: 978-0-471-15733-5).

\bibitem{KS-2011}
Y. Kosmann-Schwarzbach,
{\it The Noether Theorems. Invariance and Conservation Laws in the Twentieth Century},
Springer, New York, 2011.
(\url{https://doi.org/10.1007/978-0-387-87868-3}).

\bibitem{KY-2019}
J.L. Koszul  and Yi Ming Zou,
\emph{Introduction to Symplectic Geometry},
Springer, Singapore, 2019.
\url{https://doi.org/10.1007/978-981-13-3987-5}.

\bibitem{art:Krupkova00}
O. {Krupkova}, 
``Higher-order mechanical systems with constraints'', \textsl{J.  Math. Phys.} 
\textbf{41}(8)  (2000) 5304--5324.
(\url{https://doi.org/10.1063/1.533411}).

\bibitem{KV-1993}
O. Krupkov\'a and A. Vondra,
``On some integration methods for connections on fibered manifolds'',
 89--101, in
{\sl  Differential Geometry and its Applications, Proc. Conf.}
 (Opava, 1992).  Silesian University, Opava, 1993.

\bibitem{Ku-td}
R. Kuwabara,
``Time-dependent mechanical symmetries and extended Hamiltonian systems'',
{\sl Rep. Math. Phys.} {\bf 19} (1984) 27-38.
(\url{https://doi.org/10.1016/0034-4877(84)90023-5}).

\bibitem{La-70}
C. Lanczos, {\it The Variational Principles of Mechanics} ($4th$ ed.),  Dover, New York, 1970. ISBN 0-486-65067-7.

\bibitem{LL-2018}
M.~{La\'inz-Valc{\'a}zar} and M.~{de Le{\'o}n},
``Contact Hamiltonian systems'',
{\sl J. Math. Phys.} {\bf 60}(10) (2019) 102902.
(\url{https://doi.org/10.1063/1.5096475}).

\bibitem{lanczos}
C. Lanczos,
{\it The Variational Principles of Mechanics} ($4th$ ed.),
Dover Pub., NY, 1986.
(ISBN: 9780486650678).

\bibitem{LL-76}
L.D. Landau and E.M. Lifshits,
{\it Mechanics} ($3rd$ ed.), Elsevier, 1976.
(ISBN: 978-0-7506-2896-9).

\bibitem{Lang-95}
S. Lang,
{\it Differential and Riemannian Manifolds},
Springer-Verlag, New York. 1995.
(\url{https://doi.org/10.1007/978-1-4612-4182-9}).

\bibitem{LCV-2010}
{\rm B. Langerock, F. Cantrijn, and J. Vankerschaver},
``Routhian reduction for quasi-invariant Lagrangians'',
{\sl J. Math. Phys.} {\bf 51}(2) (2010) 022902, 20 pp.
(\url{https://doi.org/10.1063/1.3277181}).

\bibitem{LPAF-2018}
M.J. Lazo, J. Paiva, J.T.S. Amaral, and G.S.F. Frederico, 
``An action principle for action-dependent Lagrangians: Toward an action principle to non-conservative systems'',
{\sl J. Math. Phys.} {\bf 59}(3) (2018) 032902. 
(\url{https://doi.org/10.1063/1.5019936}).

\bibitem{Lee2013}
J.~M. Lee.
\newblock {\em {Introduction to Smooth Manifolds}},
\newblock Springer, 2013.
(\url{https://doi.org/10.1007/978-0-387-21752-9}).

\bibitem{Lee2018}
J.~M. Lee.
\newblock {\em {Introduction to Riemannian Manifolds}},
\newblock Springer, 2018.
(\url{https://doi.org/10.1007/978-3-319-91755-9}).

\bibitem{Hw-47}
Lee Hwa Chung,
\newblock ``The universal integral invariants of Hamiltonian systems and applications to the theory of canonical transformations'',
\newblock {\sl Proc. Roy. Soc.} \textbf{LXIIA} (1947), 237--246.
(\url{https://doi.org/10.1017/S0080454100006646}).

\bibitem{LOS-12}
M. Leok, T. Ohsawa, and D. Sosa,
``Hamilton--Jacobi theory for degenerate Lagrangian systems with holonomic and nonholonomic constraints'',
{\sl J. Math. Phys.} {\bf 53}(7) (2012) 072905. 
(\url{https://doi.org/10.1063/1.4736733}).

\bibitem{LS-12}
M. Leok and D. Sosa,
``Dirac structures and Hamilton--Jacobi theory for Lagrangian mechanics on Lie algebroids'',
{\sl J. Geom. Mech.} {\bf 4}(4) (2012) 421--442.
(\url{https://doi.org/10.3934/jgm.2012.4.421}).

\bibitem{Le-2014}
M. Levi,
{\it Classical Mechanics with Calculus of Variations and Optimal Control},
American Math. Soc., 2014.
(ISBN 0821891383).

\bibitem{Lewis}
A.D. Lewis,
``The physical foundations of geometric mechanics'',
{\sl J. Geom. Mech.} {\bf 9}(4) (2017) 487--574.
(\url{https://doi.org/10.3934/jgm.2017019}).

\bibitem{Lewis2020}
A.D. Lewis,
``Nonholonomic and constrained variational mechanics'', 
{\sl J. Geom. Mech.} {\bf 12}(2) (2020) 165--308. 
(\url{https://doi.org/10.3934/jgm.2020013}).

\bibitem{LM-sgam}
P. Libermann and C.M. Marle,
{\it Symplectic Geometry and Analytical Dynamics\/},
D. Reidel Pub. Co., Dordrecht, 1987.
(\url{https://doi.org/10.1007/978-94-009-3807-6}).

\bibitem{Li-75}
A. Lichnerowicz, 
``Vari\'et\'es symplectiques et dynamique associ\'ee \`a une sous-vari\' et\'e'',
{\sl C. R. Acad. Sci. Paris} {\bf 280A} (1975) 523--527.

\bibitem{LIU2018}
Q. Liu, P.J. Torres, and C. Wang,
``Contact Hamiltonian dynamics: variational principles, invariants, completeness and periodic behavior'',
{\sl Ann. Phys.} {\bf 395} (2018) 26--44.
(\url{https://doi.org/10.1016/j.aop.2018.04.035}).

\bibitem{LlR-88}
\newblock J. Llosa and N. Rom\'an-Roy,
\newblock ``Invariant forms and Hamiltonian systems: A geometrical setting'',
\newblock {\sl Int. J. Theor. Phys.} \textbf{27}(12) (1988), 1533--1543.
(\url{https://doi.org/10.1007/BF00669290}).

\bibitem{LOT-2009}
R. Loja-Fernandes , J.P. Ortega, and T.S. Ratiu,
``The momentum map in Poisson geometry'',
{\sl Am. J. Math.} {\bf 131}(5) (2009) 1261--1310 .
(\url{https://doi.org/10.1353/ajm.0.0068}).

\bibitem{LMR-99}
C. L\'opez, E. Mart\'inez, and M.F. Ra\~nada, 
``Dynamical symmetries, non-Cartan symmetries and superintegrability of the $n$-dimensional harmonic oscillator'', 
{\sl J. Phys. A: Math. Gen.} {\bf 32}(7) (1999) 1241-1249.
\url{https://doi.org/10.1088/0305-4470/32/7/013}.

\bibitem{MS-98} 
L. Mangiarotti and G. Sardanashvily, 
{\it Gauge Mechanics}, 
World Scientific, Singapore, 1998.
(\url{https://doi.org/10.1142/3905}).

\bibitem{Marle-83}
C.M. Marle, 
``Poisson manifolds in mechanics'',
{\sl  Bifurcation theory, mechanics and physics},  47--76, 
Math. Appl., Reidel, Dordrecht, 1983.
(\url{https://doi.org/10.1007/978-94-009-7192-9_3}).

\bibitem{Ma-95}
C.M. Marle,
``Reduction of constrained mechanical systems and stability of relative equilibria'',
{\sl Comm. Math. Phys.} {\bf 174} (1995) 295--318.
(\url{https://doi.org/10.1007/BF02099604}).

\bibitem{MMT-97}
G. Marmo, G. Mendella, and W.M. Tulcczyjew,
``Constrained Hamiltonian systems as implicit differential equations'',
{\sl J. Phys. A: Math. Gen.} {\bf 30}(1) (1997) 277-293.
(\url{https://doi.org/10.1088/0305-4470/30/1/020}).

\bibitem{MMMcq}
G. Marmo, G. Morandi, and N. Mukunda, 
``The Hamilton--Jacobi theory
and the analogy between classical and quantum mechanics'', 
{\sl J. Geom. Mech.} {\bf 1}(3) (2009) 317--355.
(\url{https://doi.org/10.3934/jgm.2009.1.317}).

\bibitem{MSSV-85}
G. Marmo, E.J. Saletan, A. Simoni, and B. Vitale,
{\it Dynamical Systems, a Differential Geometric Approach to Symmetry
and Reduction}.
J. Wiley, New York 1985.
(\url{https://doi.org/10.1112/blms/18.5.523}).

\bibitem{MRSV-2010}
J.C. Marrero, N. Rom\'{a}n-Roy, M. Salgado, and S. Vilari\~no,
``On a kind of Noether symmetries and conservation laws in $k$-cosymplectic field theory'',
 {\sl J. Math. Phys.} {\bf 52}(2) (2011), 022901; (20 pages).
(\url{https://doi.org/10.1063/1.3545969}).

\bibitem{MRSV-2015}
J.C. Marrero, N. Rom\'{a}n-Roy, M. Salgado, and S. Vilari\~no,
``Reduction of polysymplectic manifolds'',
 {\sl J. Phys. A: Math. Theor.} {\bf 48}(5) (2015), 055206.
(\url{https://doi.org/10.1088/1751-8113/48/5/055206}).

\bibitem{MMOPR-2007}
J.E. Marsden, G. Misiolek, J.P. Ortega, M. Perlmutter, and T.S. Ratiu,
{\it Hamiltonian Reduction by Stages},
{\sl Lect. Notes in Math.} {\bf 1913},
Springer Berlin, Heidelberg, 2007.
(\url{https://doi.org/10.1007/978-3-540-72470-4}).

\bibitem{MR-86}
J.E. Marsden and T.S. Ratiu,
``Reduction of Poisson manifolds'',
{\sl Lett. Math. Phys.} {\bf 11} (1986) 161--170.
(\url{https://doi.org/10.1007/BF00398428}).

\bibitem{MR-99}
J.E. Marsden and T. Ratiu, 
{\it Introduction to Mechanics and Symmetry},
{\sl Texts in Applied Math.} {\bf 17},
Springer-Verlag, New York, 1999.
(\url{https://doi.org/10.1007/978-0-387-21792-5}).

\bibitem{MW-74}
J.E. Marsden and A. Weinstein, ``Reduction of symplectic manifolds with symmetry'',
{\sl Rep. Math. Phys.} {\bf 5}(1) (1974) 121-130.
(\url{https://doi.org/10.1016/0034-4877(74)90021-4}).

\bibitem{MW}
J.E. Marsden and A. Weinstein, 
``Some comments on the history, theory, and applications of symplectic reduction'',
{\sl Quantization of Singular Symplectic Quotiens}.
N. Landsman, M. Pflaum, M. Schlichenmanier eds., Birkhauser, Boston (2001) 1-20.
(\url{https://doi.org/10.1007/978-3-0348-8364-1}).

\bibitem{MRW-84}
J.E. Marsden, T.S. Ratiu, and A. Weinstein,
``Semi-direct products and reduction in Mechanics'',
{\sl Trans. Am. Math. Soc.} {\bf 281}(1) (1984) 147--177.
(\url{https://doi.org/10.2307/1999527}).

\bibitem{MS-93}
{\rm J.E. Marsden and  J. Scheurle},
``The reduced Euler--Lagrange equations'',
{\sl Fields Inst. Comm.} {\bf 1} (1993) 139--164.
(\url{http://dx.doi.org/10.1090/fic/001}).

\bibitem{Edu-2001}
E. Mart\'inez,
``Lagrangian mechanics on Lie algebroids'',
{\sl Acta App. Math.} {\bf 67} (2001) 295--320.
(\url{https://doi.org/10.1023/A:1011965919259}).

\bibitem{Ma-2004}
E. Mart\'inez,
``Reduction in optimal control theory'',
{\sl Rep. Math. Phys.} {\bf 53}(1) (2004) 79--90.
(\url{https://doi.org/10.1016/S0034-4877(04)90005-5}).

\bibitem{MMS-2002}
E. Mart\' inez, T. Mestdag, and W. Sarlet,
``Lie algebroid structures and Lagrangian systems on affine bundles'',
{\sl J. Geom. Phys.} {\bf 44,}(1) (2002) 70--95.
(\url{https://doi.org/10.1016/S0393-0440(02)00114-6}).

\bibitem{MCL-2000}
S. Mart\'inez, J. Cort\'es, and M. de Le\'on,
``The geometrical theory of constraints applied to the dynamics of vakonomic mechanical systems: The vakonomic bracket'',
{\sl J. Math. Phys.} {\bf 41}(4) (2000) 2090--2120.
(\url{http://doi.org/10.1063/1.533229}).

\bibitem{MCL-01}
{\rm S. Mart\'\i nez, J. Cort\'es and M. de Le\'on},
 ``Symmetries in vakonomic dynamics. Applications to optimal control'',
{\sl J. Geom. Phys.} {\bf 38}(3-4) (2001) 343-365.
(\url{https://doi.org/10.1016/S0393-0440(00)00069-3}).

\bibitem{MV-03}
E. Massa and S. Vignolo,
``A new geometrical framework for time-dependent Hamiltonian Mechanics'',
{\sl Extracta Math.} {\bf 18}(1) (2003) 107--118.
(ISSN: 0213-8743).

\bibitem{MT-78}
M.R. Menzio and W.M. Tulczyjew,
``Infinitesimal symplectic relations and generalized Hamiltonian dynamics'',
{\sl Ann. Inst. H. Poincar\'e A} {\bf 28}(4) (1978) 349--367.
(\url{http:://www.numdam.org/item?id=AIHPA1978_28_4_349_0}).

\bibitem{Mi-2014}
E. Miranda,
``Integrable systems and group actions'', {\sl Centr. Eur. J. Math.} {\bf 12} (2014) 240--270. (\url{https://doi.org/10.2478/s11533-013-0333-6}).

\bibitem{Mo-2008}
B. Montano, 
``Integral submanifolds of $r$-contact manifolds'', 
{\sl Demonstr. Math.} {\bf 41}(1) (2008) 189--202.
(\url{https://doi.org/10.1515/dema-2013-0054}).

\bibitem{MFVMR-90}
G. Morandi, C. Ferrario, G. Lo Vecchio, G. Marmo, and C. Rubano,
``The inverse problem in the calculus of variations and the geometry of the tangent bundle'',
{\sl Phys. Rep.} {\bf 188}(3–4) (1990) 147--284.
(\url{https://doi.org/10.1016/0370-1573(90)90137-Q}).

\bibitem{Mu-89}
M.C. Mu\~noz-Lecanda,
``Hamiltonian systems with constraints: a geometric approach''.
{\sl Int. J. Theor. Phys.} {\bf 28} (11) (1989) 1405--1417.
(\url{https://doi.org/10.1007/BF00671858}).

\bibitem{MR-92}
M.C. Mu\~noz-Lecanda and N. Rom\'an-Roy, 
``Lagrangian theory for presymplectic systems'', 
{\sl Ann. Inst. H. Poincar\'e A} {\bf 57}(1) (1992) 27--45. 
(\url{https://www.numdam.org/item/AIHPA_1992_57_1_27_0/}).

\bibitem{Neimark-Fufaev}
J.I. Ne\u{\i}mark and N.A. Fufaev, 
{\it Dynamics of Nonholonomic Systems}, 
Translated from the 1967 Russian original by J. R. Barbour. Translations of Mathematical Monographs {\bf 33}. American Math. Soc., Providence, RI, 1972.
(\url{https://doi.org/10.1090/mmono/033}).

 \bibitem{No-18}
 E. Noether, ``Invariante Variationsprobleme'', {\sl Nachrichten von der Gesellschaft der Wissenschaften zu Göttingen. Mathematisch-Physikalische Klasse} {\bf 1918} 235–-257.
(\url{http://eudml.org/doc/59024}).

\bibitem{NP-2002}
J. Nunes da Costa and F. Petalidou,
``Reduction of Jacobi--Nijenhuis manifolds'',
{\sl J. Geom. Phys.} {\bf 41}(3) (2002) 181--195.
(\url{https://doi.org/10.1016/S0393-0440(01)00054-7}).

\bibitem{NP-2005}
J. Nunes da Costa and F. Petalidou,
``Reduction of Jacobi--Nijenhuis manifolds via Dirac structures theory'',
{\sl Diff. Geom. App.} {\bf 23}(3) (2005) 282--304.
(\url{https://doi.org/10.1016/j.difgeo.2005.06.003}).

\bibitem{OR-83}
J.T. Oden and J.N. Reddy,
{\it Variational Methods in Theoretical Mechanics},
Springer-Verlag, Berlin Heidelberg, 1983.
(\url{https://doi.org/10.1007/978-3-642-68811-9}).

\bibitem{blo}
T. Ohsawa, and A.M. Bloch, 
``Nonholomic Hamilton--Jacobi equation and integrability'',
 {\sl J. Geom. Mech.} {\bf 1}(4) (2011) 461--481.
(\url{https://doi.org/10.3934/jgm.2009.1.461}).

\bibitem{OBL-11}
T. Ohsawa, A.M. Bloch, and M. Leok,
 ``Discrete Hamilton--Jacobi theory'', 
{\sl SIAM J. Control Optim.} {\bf 49}(4) (2011) 1829--1856.
(\url{https://doi.org/10.1137/090776822}).

\bibitem{OFB-11}
T. Ohsawa, O.E. Fern\'andez, A.M. Bloch, and M.D.V. Zenkov, ``Nonholonomic Hamilton--Jacobi
theory via Chaplygin Hamiltonization''
 {\sl J. Geom. Phys.} {\bf 61}(8) (2011) 1263--1291.
\newblock (\url{https://doi.org/10.1016/j.geomphys.2011.02.015}).

\bibitem{Ok-87}
T. Okubo,
{\it Differential Geometry},
{\sl Monographs in Pure and App. Math.} {\bf 112}, 
Marcel Dekker Inc, New York, 1987.
(ISBN-10: 082477700X).

\bibitem{Ol-02}
W.M. Oliva, 
{\it Geometric Mechanics},
{\sl Lecture Notes in Math.} {\bf 1798}, Springer-Verlag, Berlin, 2002.
(ISBN: 978-3-540-44242-4).

\bibitem{Or-2002}
J.P. Ortega,
``The symplectic reduced spaces of a Poisson action'',
{\sl Comptes Rendus Math.}
{\bf 334}(11) (2002) 999--1004.
(\url{https://doi.org/10.1016/S1631-073X(02)02394-4}).

\bibitem{OR-2002}
J.P. Ortega and T.S. Ratiu,
``The Optimal momentum map'',
In {\sl Geometry, Mechanics, and Dynamics} 329--362, P. Newton, P. Holmes, and A. Weinstein (eds), Springer, New York, 2002.
(\url{https://doi.org/10.1007/0-387-21791-6_11}).

\bibitem{OR-2004}
J.P. Ortega and T.S. Ratiu,
{\it Momentum maps and Hamiltonian reduction},
{\sl Progress in Math.} {\bf 222}, Birkh\"auser, New York, 2004.
(\url{https://doi.org/10.1007/978-1-4757-3811-7}).

\bibitem{PmQ-69}
Pham-Mau-Quan,
{\it Introduction \`a la G\'eom\'etrie des Vari\'et\'es Diff\'erentiables}, Dunod, Paris, 1969.
(ISBN: 2706106549, 9782706106545).

\bibitem{art:Prieto_Roman11}
P.D. {Prieto-Mart\'{\i}nez} and N.~{Rom\'{a}n-Roy},
  ``{Lagrangian}-{Hamiltonian} unified formalism for autonomous higher-order dynamical systems'', 
\textsl{J. Phys. A: Math. Theor.} \textbf{44}(38)  (2011) 385203.
(\url{https://doi.org/10.1088/1751-8113/44/38/385203}).

\bibitem{art:Prieto_Roman12}
P.D. Prieto-Mart\'inez and N. Rom\'an-Roy
``Unified formalism for higher-order non-autonomous dynamical systems'',
{\sl J. Math. Phys.} {\bf 53}(3) (2012) 032901, 38 pp.
(\url{https://doi.org/10.1063/1.3692326}).

\bibitem{art:Prieto_Roman15}
P.D. Prieto-Mart\'inez and N. Rom\'an-Roy
``Higher-order mechanics. Variational principles and other topics'',
{\sl J. Geom. Mech.} {\bf 5}(4) (2015) 493--510.
(\url{https://doi.org/10.3934/jgm.2013.5.493}).

\bibitem{PR-2015}
P.D. Prieto--Mart\'inez, and N. Rom\'an--Roy,
``A new multisymplectic unified formalism for second order classical field theories'',
{\sl  J. Geom. Mech.} {\bf 7}(2) (2015) 203--253. 
(\url{https://doi.org/10.3934/jgm.2015.7.203}).

\bibitem{RMS-2017}
H.~Ramirez, B.~Maschke, and D.~Sbarbaro,
``Partial stabilization of input-output contact systems on a {L}egendre
  submanifold'',
{\sl IEEE Trans. Automat. Control} {\bf 62}(3) (2017) 1431--1437.
(\url{https://doi.org/10.1109/TAC.2016.2572403}).

\bibitem{Ra2}
M.F. Ra\~nada,
 ``Extended Legendre tangent bundle formalism for time-dependent Lagrangian systems'', 
{\sl J. Math. Phys.} {\bf 32}(2) (1991) 500--505.
(\url{https://doi.org/10.1063/1.529442}).

\bibitem{Ra1}
M.F. Ra\~nada,
 ``Extended Legendre transformation approach to the time-dependent Hamiltonian formalism'', 
{\sl J. Phys. A: Math. Gen.} {\bf 25}(14) (1992) 4025-4035.
(\url{https://doi.org/10.1088/0305-4470/25/14/017}).

\bibitem{RTSST-2005}
T.S. Ratiu, R. Tudoran, L. Sbano, E.
Sousa Dias, and G. Terra, 
{\it A Crash Course in Geometric Mechanics},  {\sl Lecture Note Ser.
{\bf 306}, Geometric Mechanics and Symmetry}, 23--156, London Math. Soc., Cambridge Univ.
Press, Cambridge, 2005.
(\url{https://doi.org/10.1017/CBO9780511526367.003}).

\bibitem{Ra-2006}
M.~Razavy,
{\em Classical and Quantum Dissipative Systems},
Imperial College Press, 2006.
(\url{https://doi.org/10.1142/10391 | April 2017}).

\bibitem{RRS-2005}
A.M. Rey, N. Rom\'{a}n-Roy, and M. Salgado,
``G\"{u}nther's formalism in classical
field theory: Skinner-Rusk approach and the evolution operator'',
{\sl J. Math. Phys.} {\bf 46}(5) (2005) 052901.
(\url{https://doi.org/https://doi.org/10.1063/1.1876872}).

\bibitem{RRSV-2011}
A.M. Rey, N. Rom\'an-Roy, M. Salgado, and S. Vilari\~no,
``k-cosymplectic classical field theories: Tulckzyjew and Skinner-Rusk formulations'',
{\sl Math. Phys. Anal. Geom.} \textbf{15} (2011) 1--35.
(\url{https://doi.org/10.1007/s11040-012-9104-z }).

\bibitem{Ri-2022}
X. Rivas, 
``Nonautonomous $k$-contact field theories'', 
{\sl J. Math. Phys.} {\bf 64}(3) (2023) 033507.
(\url{https://doi.org/10.1063/5.0131110}).

\bibitem{RSS-2023}
X. Rivas, M. Salgado, and S. Souto,
``Some contributions to k-contact Lagrangian field equations, symmetries and dissipation laws'',
\emph{Rev. Math. Phys.} (2024).
\url{https://doi.org/10.1142/S0129055X24500193}.

\bibitem{RiTo-2022}
X. Rivas and D. Torres,
``Lagrangian--Hamiltonian formalism for cocontact systems'', {\sl J. Geom. Mech.} {\bf 15}(1) (2022) 1--26. (\url{https://doi.org/10.3934/jgm.2023001}).

\bibitem{NRR-2018}
N. Rom\'an-Roy,
``A summary on symmetries and conserved quantities of autonomous Hamiltonian systems'',
{\sl J. Geom. Phys.}  {\bf 12}(3) (2020) 541-551.
(\url{https://doi.org/10.3934/jgm.2020009}).

\bibitem{RR-2021}
N. Rom\'an-Roy,
``An overview of the Hamilton--Jacobi theory: the classical and geometrical approaches and some extensions and applications'',
{\sl Mathematics} {\bf 2021} 9(1) (20021) 85.
(\url{https://doi.org/10.3390/math9010085}).

\bibitem{RSV-2007}
N. Rom\'an-Roy,  M. Salgado, and S. Vilari\~no,
 ``Symmetries and conservation laws in the G\"{u}nther $k$-symplectic formalism of field theory'',
{\sl Rev. Math. Phys.} {\bf 19}(10) (2007) 1117--1147.
(\url{https://doi.org/10.1142/S0129055X07003188}).

\bibitem{SC-71}
E.J. Saletan and A.H. Cromer,
{\it Theoretical Mechanics},
Wiley $\&$ Sons, New York, 1971.
(ISBN 0471749869, 9780471749868).

\bibitem{SC-81}
W. Sarlet and  F. Cantrijn,
 ``Higher-order Noether symmetries and constants of the motion'', 
{\sl J. Phys. A: Math. Gen.\/} {\bf 14}(2) (1981) 479-492.
(\url{https://doi.org/10.1088/0305-4470/14/2/023}).

\bibitem{SC-81b}
{\rm W. Sarlet and F. Cantrijn}, 
``Generalizations of Noether's theorem in classical mechanics''. 
{\sl SIAM Rev.} {\bf 23}(4) (1981) 467–494.
(\url{https://doi.org/10.1137/1023098}).

\bibitem{SCC-84}
W. Sarlet, F. Cantrijn, and  M. Crampin,
``A new look at second-order equations and Lagrangian mechanics'', 
{\sl J. Phys. A Math. Gen.} {\bf 17}(10) (1984) 1999--2009. 
(\url{https://doi.org/10.1088/0305-4470/17/10/012}).


\bibitem{Saunders89}
D.J. {Saunders}, 
\emph{The Geometry of Jet Bundles}, 
{\sl Lect. Notes Ser.} {\bf 142}, London Math. Soc., Cambridge Univ. Press,
  Cambridge, New York, 1989.
(\url{https://doi.org/10.1017/CBO9780511526411}).

\bibitem{Sch-2005}
F. Scheck,
{\it Mechanics: From Newton Laws to Deterministic Chaos}
($4th$ ed.),
Springer-Verlag, Berlin, 2005.
(\url{https://doi.org/10.1007/978-3-642-05370-2}).

\bibitem{SW-76}
D.J. Simms and N.M.J. Woodhouse,
{\it Lectures on Geometric Quantization\/},
{\sl Lecture Notes in Phys.} {\bf 53},
Springer, New York, 1976.
(\url{https://doi.org/10.1007/3-540-07860-6}).

\bibitem{SL-91}
R. Sjamaar and E. Lerman,
``Stratified symplectic spaces and reduction'',
{\sl Ann. Math.} {\bf 134}(2) (1991) 375--422.
(\url{https://doi.org/10.2307/2944350}).

\bibitem{SLLM-2019}
A.A. Simoes, M. de Le\'on, M. La\'inz-Valc\'azar, and D. Martín de Diego,
``Contact geometry for simple thermodynamical systems with friction'',
{\sl Proc. Royal Soc. A: Math. Phys. Eng. Sci.} {\bf 476}(2241)  (2020) 20200244.
(\url{https://doi.org/10.1098/rspa.2020.0244}).

\bibitem{SR-83}
R. Skinner and R. Rusk,
``Generalized Hamiltonian dynamics I: Formulation on $\Tan^*Q\otimes\Tan Q$'',
{\sl J. Math. Phys.} {\bf 24}(11) (1983) 2589-2594.
(\url{https://dx.doi.org/10.1063/1.525654}).

\bibitem{SR-83b}
R. Skinner and R. Rusk,
``Generalized Hamiltonian dynamics II: Gauge transformations'',
{\sl J. Math. Phys.} {\bf 24}(11)  (1983) 2595--2601.
(\url{https://dx.doi.org/10.1063/1.525655}).

\bibitem{Sn-74}
J. \'Sniatycki,
``Dirac brackets in geometric dynamics'',
{\sl Ann. I.H.P. Phys. Th\' eor.}  {\bf 20}(4) (1974) 365-372.
(\url{https://www.numdam.org/item?id=AIHPA_1974_20_4_365_0}).

\bibitem{Sn-80}
J. \'Sniatycki,
{\it Geometric Quantization and Quantum Mechanics\/},
Springer-Verlag, Berlin, (1980).
(\url{https://dx.doi.org/10.1007/978-1-4612-6066-0}).

\bibitem{Som-1952}
A. Sommerfeld, 
{\it Mechanics.} 
{\sl Lectures on Theoretical Physics} {\bf I}. 
Translated from the $4\,th$ German ed. by Martin O. Stern. Academic Press, Inc., New York., 1952. 

\bibitem{So-ssd}
J.M. Souriau,
{\it Structure of Dynamical Systems: a Symplectic View of Physics\/},
Springer Science, 2012
(ISBN: 978-1-4612-6692-1).

\bibitem{Sp-72}
M. Spivak, {\it A Comprehensive Introduction to
Differential Geometry}, {\rm vol.} {\bf II} (1972). (ISBN: 0914098713).

\bibitem{Spivak2010}
M. Spivak,
{\it Physics for Mathematicians, Mechanics I}. Publish or Perish, Inc., Houston, Texas, 2010.
(ISBN: 978-0-914098-32-4).

\bibitem{St-64}
S. Sternberg, {\it Lectures on Differential Geometry}, Prentice Hall, Englewood Cliffs, N.J., 1964.
(ISBN: 978-0821813850).

\bibitem{St-2005}
J. Struckmeier,
``Hamiltonian dynamics on the symplectic extended phase space for autonomous and non-autonomous systems'',
{\sl J. Phys. A: Math. Gen.} {\bf 38}(6)  (2005) 1257--1278.
(\url{https://doi.org/10.1088/0305-4470/38/6/006}).

\bibitem{SM-74}
E.C.G. Sudarshan and N. Mukunda, 
{\it Classical Dynamics: A Modern Perspective},
 Wiley-Interscience, New York, 1974.
(\url{https://doi.org/10.1142/9751}).

\bibitem{Su-2013}
L. Susskind,
\emph{Classical Mechanics: The Theoretical Minimum},
Basic Books, New York (2013).

\bibitem{Su-95}
H.J. Sussmann, 
``Symmetries and integrals of motion in optimal control'',
 in {\sl Geometry in Nonlinear Control and
Differential Inclusions}, A. Fryszkowski, B. Jakubczyk, W. Respondek, T. Rzezuchowski Eds., Banach Center Pubs. {\bf 32}.
Math. Inst. Polish Acad. Sci., Warsaw, Poland (1995) 379--393.

\bibitem{TV-2008}
A. Tomassini and L. Vezzoni, 
``Contact Calabi--Yau manifolds and special Legendrian submanifolds'',
{\sl Osaka J. Math.} {\bf 45}(1) (2008) 127--147.

\bibitem{To-65}
P. Tondeur, {\it Introduction to Lie Groups and Transformation Groups}, {\sl Lect. Notes Math.} {\bf 7}, Springer, Berlin, 1965.
(\url{https://doi.org/10.1007/BFb0097829}).

\bibitem{Tu-74}
W.M. Tulczyjew, 
``Hamiltonian systems, Lagrangian systems and the Legendre transformation'',
{\sl Symp. Math.} {\bf 14}(1) (1974) 247--258. 

\bibitem{Tu-76a}
W.M. Tulczyjew, 
``Les sous-vari\'et\'es lagrangiennes et la dynamique hamiltonienne'',
{\sl C. R. Acad. Sci. Paris S\'er. A-B} {\bf 283}(1) (1976), A15--A18. 

\bibitem{Tu-76b}
W.M. Tulczyjew, 
``Les sous-vari\'et\'es lagrangiennes et la dynamique lagrangienne'',
{\sl C. R. Acad. Sci. Paris S\'er. A-B} {\bf 283}(8) (1976) A675-A678. 

\bibitem{Tu-85}
G.M. Tuynman,
``Geometric quantization'',
{\sl Proc. Sem. 1983-1985 Mathematical Structures in field theories}
{\bf 1}, CWI Syllabus, 1985.

\bibitem{Sh-87}
A.J. Van der Schaft, 
``Symmetries in optimal control'', 
{\sl SIAM J. Control and Optimization} {\bf 25}(2) (1987) 245-259.
(\url{https://doi.org/10.1137/0325}).

\bibitem{VS-2021}
A.J. Van der Schaft, 
``'Liouville geometry of classical thermodynamics'',
{\sl J. Geom. Phys.} {\bf 170} (2021) 104365.
(\url{https://doi.org/10.1016/j.geomphys.2021.104365}).

\bibitem{Va-2007}
J. Vankerschaver,
 ``Euler--Poincar\'e reduction for discrete field theories'',
{\sl J. Math. Phys.} {\bf 48}(3) (2007) 032902, 17 pp.
(\url{https://doi.org/10.1063/1.2712419}).

\bibitem{Ve} 
A.M. Vershik, 
{\it Classical and Non-classical Dynamics with Constraints}, 
{\sl Lecture Notes in Math.} {\bf 1108}, 278-301,
Springer-Verlag, Berlin, 1984.
(\url{https://doi.org/10.1007/BFb0099563}).

\bibitem{Vig-00}
S. Vignolo,
``A new presymplectic framework for time-dependent Lagrangian systems: the constraint algorithm and the second-order differential equation problem'',
{\sl J. Phys. A: Math. Gen.} {\bf 33}(28) (2000) 5117-5135.
(\url{https://doi.org/10.1088/0305-4470/33/28/314}).

\bibitem{VK-75}
A.M. Vinogradov and I. Krasil’shchik, 
``What is the Hamiltonian formalism?'',
{\sl Russ. Math. Surveys} {\bf 30}(1) (1975) 177--202.
(\url{https://doi.org/10.1070/RM1975V030N01ABEH001403}).

\bibitem{VK-77}
A.M. Vinogradov and B.A. Kupershrnidt,
 ``The structures of Hamiltonian mechanics'',
{\sl Russ. Math. Surveys} {\bf 32}(4) (1977) 177--243.
(\url{https://doi.org/10.1070/RM1977v032n04ABEH001642}).

\bibitem{Vitagliano10}
L. Vitagliano, 
``The Lagrangian-Hamiltonian formalism for higher order field theories'',
{\sl J. Geom. Phys.} \textbf{60}(6--8) (2010) 857--873.
(\url{https://doi.org/10.1016/j.geomphys.2010.02.003}).

\bibitem{Vi-10}
L. Vitagliano, ``The Hamilton--Jacobi formalism for higher-order field theory'', 
{\sl Int. J. Geom. Meth. Mod. Phys.} {\bf 07} (2010) 1413--1436.
\newblock (\url{https://doi.org/10.1142/S0219887810004889}).

\bibitem{Vi-11}
L. Vitagliano, ``Hamilton--Jacobi diffieties'',
 {\sl J. Geom. Phys.} {\bf 61}(10) (2011) 1932--1949.
(\url{https://doi.org/10.1016/j.geomphys.2011.05.003}).

\bibitem{art:Vitagliano12}
L.~{Vitagliano},
``Geometric {H}amilton--{J}acobi field theory'',
\textsl{Int. J. Geom. Methods Mod. Phys.} \textbf{9}(2) (2012) 1260008.
(\url{https://doi.org/10.1142/S0219887812600080}).

\bibitem{Vi-14}
L. Vitagliano,
``Characteristics, bicharacteristics, and geometric singularities of solutions of PDEs'',
{\sl Int. J. Geom. Meth. Mod. Phys.} {\bf 11}(9) (2014)  1460039.
Procs. {\sl XXII IFWGP}, \'Evora (Portugal), 2013.
(\url{https://doi.org/10.1142/S0219887814600391}).

\bibitem{Vi-2015}
L. Vitagliano,
``$L_\infty$-algebras from multicontact geometry'',
{\sl Diff. Geom. Appl.} {\bf 59} (2015) 147--165.
(\url{https://doi.org/10.1016/j.difgeo.2015.01.006}).

\bibitem{Vi-2018}
M. Visinescu,
``Contact Hamiltonian systems and complete integrability'',
{\sl AIP Conference Proceedings} {\bf 1916}(1) (2017) 020002.
(\url{https://doi.org/10.1063/1.5017422}).

\bibitem{Wang2}
H. Wang,
``Symmetric reduction and Hamilton--Jacobi Equation of rigid spacecraft with a rotor'',
{\sl J. Geom. Sym. Phys.} {\bf 32} (2013) 87--111.
(\url{https://doi.org/10.7546/jgsp-32-2013-87-111}).

\bibitem{Wang3}
H. Wang,
``Hamilton--Jacobi theorems for regular reducible Hamiltonian systems on a cotangent bundle'',
{\sl J. Geom. Phys.} {\bf 119} (2017) 82--102.
(\url{https://doi.org/10.1016/j.geomphys.2017.04.011}).

\bibitem{Wa-fsmlg}
F.W. Warner,
{\it Foundations on Differentiable Manifolds and Lie Groups},
Scott, Foresman and Co., Glenview, 1971.
(\url{https://doi.org/10.1007/978-1-4757-1799-0}).

\bibitem{We-77}
A. Weinstein, 
{\it Lectures on Symplectic Manifolds}.
{\sl Reg. Conf. Ser. Math.} {\bf 29}, American Math. Soc., Providence, R.I., 1977.
(ISBN: 978-0-8218-1679-0).

\bibitem{We-79}
 A. Weinstein, 
 ``On the hypotheses of Rabinowitz' periodic orbit theorems'',
{\sl J. Diff. Eqs.} {\bf 33}(3) (1979) 353--358. 
 (\url{https://doi.org/10.1016/0022-0396(79)90070-6}).
 
\bibitem{We-96}
A. Weinstein,
``Lagrangian mechanics and groupoids'', 
{\sl Fields Inst. Comm.} {\bf 7} (1996) 207--231.

\bibitem{Wh-37}
E.T. Whittaker, {\it A Treatise on the Analytical Dynamics of Particles and Rigid Bodies} ($4th$ ed.), 
Cambridge Univ. Press, Cambridge, 1937. 
(ISBN: 978-0521091329).

\bibitem{Wi-2002}
C. Willett,
``Contact reduction'',
{\sl Trans. Amer. Math. Soc.} {\bf 354}(10) (2002) 4245--4260.
(\url{https://doi.org/10.1090/S0002-9947-02-03045-3}).

\bibitem{Wo-80}
N.M.J. Woodhouse,
{\it Geometric Quantization}  ($2nd$ ed.), Clarendon Press, Oxford, 1992.
(ISBN: 9780198502708).

\bibitem{Wo-2009}
N.M.J. Woodhouse,
{\it Introduction to Analytical Dynamics}
($2nd$ ed.), 
{\sl Springer Undergrad. Math. Ser.}, 
Springer Science $\&$ Business Media, Oxford, 2009.
(\url{https://doi.org/10.1007/978-1-84882-816-2}).

\end{thebibliography}
\bibliographystyle{AMS_Mod}


\end{document}